\newtheorem{example}{Example}
\theoremstyle{definition}
\newtheorem{definition}{Definition}[section]
\newtheorem{notation}{Notation}
\newtheorem{remark}{Remark}
\newtheorem{theorem}{Theorem}
\newtheorem{lemma}[theorem]{Lemma}
\newtheorem{proposition}[theorem]{Proposition}
\newtheorem{corollary}{Corollary}
\newcommand{\valid}{valid}
\newcommand{\cc}[1]{\textsf{#1}}
\newcommand{\pc}[1]{\mathtt{#1}}
\newcommand{\lc}[1]{\textsf{\textbf{#1}}}
\newcommand{\hc}[1]{\mathsf{#1}}
\newcommand{\PPLc}{\lc{CPL}_{0}}
\newcommand{\PPL}{\lc{CPL}}
\newcommand{\IPPL}{\lc{mCPL}}
\newcommand{\GPPLc}{\mathbf{GsCPL_{0}}}
\newcommand{\GPPL}{\mathbf{GsCPL}}
\newcommand{\PL}{\lc{PL}}
\newcommand{\PTIME}{\cc{P}}
\newcommand{\NPTIME}{\cc{NP}}
\newcommand{\coNP}{\cc{coNP}}
\newcommand{\PP}{\cc{PP}}
\newcommand{\BPP}{\cc{BPP}}
\newcommand{\PSpace}{\cc{PSPACE}}
\newcommand{\PPSpace}{\cc{PPSPACE}}
\newcommand{\sharpP}{\sharp \PTIME}
\newcommand{\Psharp}{\cc{P}^{\SSAT}}
\newcommand{\MajSat}{\pc{MajSAT}}
\newcommand{\MMSat}{\pc{MajMajSAT}}
\newcommand{\TQBF}{\pc{TQBF}}
\newcommand{\SSAT}{\sharp\pc{SAT}}
\newcommand{\PH}{\hc{PH}}
\newcommand{\CH}{\hc{CH}}
\newcommand{\Nat}{\mathbb{N}}
\newcommand{\eval}{\theta}
\newcommand{\fone}{A}
\newcommand{\ftwo}{B}
\newcommand{\fthree}{C}
\newcommand{\ffour}{D}
\newcommand{\predone}{P}
\newcommand{\ssetO}{\Phi}
\newcommand{\ssetT}{\Psi}
\newcommand{\seqO}{\phi}
\newcommand{\lone}{L}
\newcommand{\bone}{\mathscr b}
\newcommand{\btwo}{\mathscr c}
\newcommand{\bthree}{\mathscr d}
\newcommand{\bfour}{\mathscr e}
\newcommand{\bvar}{\mathscr x}
\newcommand{\twoOm}{2^{\omega}}
\newcommand{\at}[1]{\mathbf{i}_{#1}}
\newcommand{\atm}[2]{\mathbf{#1}_{#2}}
\newcommand{\TTT}{\lambda\lc{CPL}}
\newcommand{\Lnu}{\Lambda_{\nu}}
\newcommand{\rred}{\rightarrowtriangle}
\newcommand{\pperm}{\rightarrow_{\mathsf p}}
\newcommand{\FLIP}{\mathsf{Flip}}
\newcommand{\bb}[1]{{\color{blue}#1}}
\newcommand{\supp}{\mathrm{supp}}
\newcommand{\To}{\Rightarrow}
\newcommand{\FF}[1]{\mathfrak{#1}}
\newcommand{\PROB}{\mathbf{P}}
\newcommand{\RED}{\mathrm{Red}}
\newcommand{\TB}{\mathsf{R}_{\bot}}
\newcommand{\TID}{\mathsf{R}_{\mathrm{id}}}
\newcommand{\TU}{\mathsf{R}_{\cup}}
\newcommand{\TLA}{\mathsf{R}_{\lambda}}
\newcommand{\TA}{\mathsf{R}_{@}}
\newcommand{\TL}{\mathsf{R}_{\oplus}^{\mathrm l}}
\newcommand{\TR}{\mathsf{R}_{\oplus}^{\mathrm r}}
\newcommand{\TN}{\mathsf{R}_{\nu}}
\newcommand{\TBx}{\mathsf{R}_{\bot}}
\newcommand{\TIDx}{{\mathbf{Ax}}}
\newcommand{\TUx}{\mathsf{R}_{\vee}}
\newcommand{\TLAx}{\mathsf{R}_{\To}^{\text E}}
\newcommand{\TAx}{\mathsf{R}_{\To}^{\text I}}
\newcommand{\TNx}{\mathsf{R}_{\BOX}}
\newcommand{\model}[1]{\llbracket #1\rrbracket}
\newcommand{\atom}[1]{\mathbf #1}
\newcommand{\midd}{\; \; \mbox{\Large{$\mid$}}\;\;}
\newcommand{\BOX}{\mathbf C}
\newcommand{\DIA}{\mathbf D}
\newcommand{\Cyl}[1]{\mathrm{Cyl}(#1)}
\newcommand{\VAL}{\mathtt{Val}}
\newcommand{\BOOL}{\mathtt{Bool}}
\newcommand{\FN}{\mathrm{FN}}
\newcommand{\Var}[1]{\mathrm{FN}(#1)}
\newcommand{\PROJ}{\Pi}
\newcommand{\EXT}[2]{ (#1)^{\Uparrow#2}}
\newcommand{\Pto}{\rightarrowtail}
\newcommand{\Pfrom}{\leftarrowtail}
\newcommand{\Era}{\Pto}
\newcommand{\Ela}{\Pfrom}
\providecommand*{\Dashv}{%
  \mathrel{%
    \mathpalette\@Dashv\vDash
  }%
}
\newcommand*{\@Dashv}[2]{%
  \reflectbox{$\m@th#1#2$}%
}
\newcommand{\Rmur}{\mathsf{R_{\mu}^{\Era}}}
\newcommand{\Rmul}{\mathsf{R_{\mu}^{\Ela}}}
\newcommand{\AxO}{\mathsf{Ax1}}
\newcommand{\AxT}{\mathsf{Ax2}}
\newcommand{\Rcup}{\mathsf R_\cup^{\Era}}
\newcommand{\Rcap}{\mathsf R_\cap^{\Ela}}
\newcommand{\Rnla}{\mathsf R_{\neg}^{\Ela}}
\newcommand{\Rnra}{\mathsf R_{\neg}^{\Era}}
\newcommand{\Rvla}{\mathsf R_{\vee}^{\Ela}}
\newcommand{\Rvra}{\mathsf R_{\vee}^{\Era}}
\newcommand{\Rwla}{\mathsf R_{\wedge}^{\Ela}}
\newcommand{\Rwra}{\mathsf R_{\wedge}^{\Era}}
\newcommand{\Rbl}{\mathsf R_{\BOX}^{\Ela}}
\newcommand{\Rbr}{\mathsf R_{\BOX}^{\Era}}
\newcommand{\Rdl}{\mathsf R_{\DIA}^{\Ela}}
\newcommand{\Rdr}{\mathsf R_{\DIA}^{\Era}}
\newcommand{\dec}{\leadsto_{0}}
\newcommand{\Dec}{\leadsto}
\newcommand{\cn}{\mathsf{cn}}
\newcommand{\ms}{\mathsf{ms}}
\newcommand{\prob}{\mathsf{prob}}
\newcounter{number}
\newcommand{\op}[1]{\triangle_{#1}}
\newcommand{\set}{S}
\newcommand{\True}{\mathbf{T}}
\newcommand{\False}{\mathbf{F}}
\newcommand{\GsCPLc}{\mathbf{GsCPL_{0}}}
\newcommand{\PNF}{\mathscr{PNF}}
\title{On Counting Propositional Logic\thanks{This work has been partially supported by the ErC CoG ``DIAPASoN'', GA 818616.}}
\author{Melissa Antonelli\footnote{University of Bologna \& INRIA Sophia Antipolis, \texttt{melissa.antonelli2@unibo.it}}
        \and Ugo Dal Lago\footnote{University of Bologna \& INRIA Sophia Antipolis, \texttt{ugo.dallago@unibo.it}}
        \and Paolo Pistone\footnote{University of Bologna \& INRIA Sophia Antipolis, \texttt{paolo.pistone2@unibo.it}}}
\date{}							
\begin{document}
\maketitle

\begin{abstract}
  We study counting propositional logic as an extension of
  propositional logic with counting quantifiers. We prove that the
  complexity of the underlying decision problem perfectly matches the
  appropriate level of Wagner's counting hierarchy, but also that the
  resulting logic admits a satisfactory proof theoretical
  treatment. From the latter, a type system for a probabilistic
  $\lambda$-calculus is derived in the spirit of the Curry-Howard
  correspondence, showing the potential of counting propositional
  logic as a useful tool in several fields of theoretical computer
  science.
  \end{abstract}

\section{Introduction}\label{introduction}

Among the many intriguing relationships existing between logic and
computer science, we can certainly mention the ones between classical
propositional logic, on the one hand, and computational complexity, the
theory of programming languages, and many other branches of computer science, on the other.
As it is well known, indeed, classical propositional logic provided the 
first example of an $\NPTIME$-complete problem~\cite{Cook}.
Moreover, formal systems for classical and
intuitionistic propositional logic correspond to type systems for
$\lambda$-calculi and related formalisms~\cite{Girard, SorensenUrzyczyn}.

These lines of research evolved in several directions, resulting in 
active sub-areas of computer science, in which variations of
propositional logic have been put in relation with complexity classes
other than $\PTIME$ and $\NPTIME$ or with type systems other than
simple types.
For example, the complexity of deciding \emph{quantified} 
propositional logic formulas was proved to correspond to the polynomial
hierarchy~\cite{MeyerStockmeyer72,MeyerStockmeyer73, Stockmeyer77, Wrathall,AllenderWagner,BB}.
As another example, proof systems for propositional \emph{linear} 
logic or \emph{bunched} logic have inspired resource-conscious type systems in which duplication and sharing are taken into account and appropriately dealt with through the type system~\cite{OHearn,Wadler}.

Nevertheless, some developments in both computational complexity 
and programming language theory have not found a precise 
counterpart in propositional logic, at least so far. 
One such development has to do with counting classes as introduced 
by Valiant~\cite{Valiant} and Wagner~\cite{Wagner84,Wagner,Wagner86}. 
The counting hierarchy is deeply related to randomized complexity 
classes, such as $\PP$, and has been treated logically by means of 
tools from descriptive complexity and finite model
theory~\cite{Kontinen}. 
Still, there is no counterpart for these classes in the sense of propositional logic, 
at least to the best of the authors' knowledge.
On the other hand, type systems for randomized $\lambda$-calculi 
and guaranteeing various forms of termination properties have
been introduced in the last years~\cite{DalLagoGrellois,BreuvartDalLago,ADLG}.
Again, although these systems are related to simple, intersection, 
and linear dependent types, 
no Curry-Howard correspondence is known for them.

This paper aims at bridging the gap by introducing a new quantified
propositional logic, 
called counting propositional logic ($\PPL$, for short). 
The quantifiers of $\PPL$ are inherently quantitative, 
as they are designed to \emph{count} the number of values 
of the bound propositional variables satisfying the argument formula. 
We study the proof theory of $\PPL$ together with its relations to 
computational complexity and programming language theory, 
obtaining two main results. 
The first achievement consists in showing that  
deciding prenex normal forms is complete for the counting
hierarchy's appropriate level, 
in the spirit of the well-known correspondence between 
quantified propositional logic and the polynomial hierarchy.
The second result is the introduction of a type system for a
probabilistic $\lambda$-calculus based on counting quantifiers, 
together with the proof that this system guarantees the
expected form of termination property.
Beyond offering a conceptual bridge with type systems 
from the literature, 
our system effectively suggests that the Curry-Howard 
correspondence can have a probabilistic counterpart.

Along the way, many side results are proved, and a somehow 
unusual route to our two main achievements is followed. 
After motivating the introduction of counting propositional logic 
in Section~\ref{section2}, two sections are devoted to its syntax,
semantics and proof theory. 
Specifically, Section~\ref{section3} deals with
univariate $\PPL$, namely the fragment in which counting
quantification is nameless as there is only one, 
global counting variable. 
In this case, the underlying model and proof theory turn out
to be simple, 
but the correspondence with computational complexity is limited 
to the class $\PTIME^{\sharpP}$. 
A sound and complete proof system in the form of a one-sided,
single-succedent sequent calculus on labelled formulas 
is the main result here. 
The multivariate logic, in which many variables are available, 
is treated in Section~\ref{section4} by generalizing the 
corresponding results from Section~\ref{section3}.
In Section~\ref{section5}, the complexity of the decision problem 
for multivariate $\PPL$ is proved to correspond to the whole 
counting hierarchy. 
The proof proceeds by a careful analysis of prenex normal forms 
for the logic, 
which by construction have precisely the shape one needs to 
match Wagner's complete problems~\cite{Wagner}.
Section~\ref{section6} is devoted to the introduction and discussion 
of a type system whose type derivations can be seen as proofs 
in $\PPL$, and in which typability ensures a probabilistic 
termination property. 
Finally, an extension of the syntax for formulas of $\PPL$
arithmetic is discussed in Section~\ref{section7}.

\section{On Counting Quantifiers, Computational 
Complexity, and Type Systems}\label{section2}

When considering the class $\PL$ of propositional logic formulas,
checking any such formula for satisfiability is the paradigmatic 
$\NPTIME$-complete problem, 
while the subclass of $\PL$ consisting of tautologies is, dually, 
$\coNP$-complete. 
How could we capture these two classes by means of one 
single logical concept, rather than having to refer to satisfiability 
and validity? 
The answer consists in switching to quantified propositional
logic~\cite{MeyerStockmeyer72,MeyerStockmeyer73,Stockmeyer77,Wrathall,BB}, 
in which existential and universal quantification over propositional 
variables are available \emph{from within} the logic. 

Satisfiability corresponds to the truth of closed, existentially 
quantified formulas in the form 
$\exists x_{1}\dots \exists x_{n} \fone$ 
(where $\fone$ is quantifier-free), 
while validity stands for the truth of closed, universally quantified
formulas in the form $\forall x_{1}\dots \forall x_{n} \fone$ (where
$\fone$ is quantifier-free). 
For example, the formula
$$
\exists x_{1}\exists x_{2}\exists x_{3} \big ((x_{1} \land \lnot x_{2})\lor ( x_{2}\land  \lnot x_{3})\lor (x_{3}\land \lnot x_{1} )\big)
$$
expresses the fact that the formula $(x_{1} \land \lnot
x_{2})\lor ( x_{2}\land \lnot x_{3})\lor (x_{3}\land \lnot x_{1} )$ is
satisfied by \emph{at least} one model.
As it is well known, all this can be generalized to the whole
polynomial hierarchy, $\PH$, where each level of the hierarchy 
is characterized by a quantified propositional formula
(in prenex normal form) with the corresponding
number of quantifier alternations. 
Here, universal and existential quantifications play the
role of the acceptance condition in the machines defining the
corresponding complexity class.

But what happens if other kinds of quantifications, 
still on propositional variables, 
replace the standard universal and existential quantifications?
Would it be possible to be more \emph{quantitative} in the
sense of, say, Wagner's counting quantifiers~\cite{Wagner}?
Apparently, this question does not make much sense,
as the variables one would like to quantify over 
only take \emph{two} possible values.
However, this problem can be overcome by permitting to
simultaneously quantify over \emph{all} the propositional variables
which are free in the argument formula. 
In other words, the interpretation of a formula $\fone$ is not the 
single truth-value that $\fone$ gets in one given valuation, 
but the set of \emph{all} valuations making $\fone$ true.
If we let $\model \fone$ denote such a set, then we can
interpret a quantified formula of the form $\BOX^{q} A$ as saying that
$\fone$ is true in \emph{at least a fraction $q\cdot 2^n$} of the possible
$2^n$ assignments to its $n$ variables.
As an example, one could assert that
a formula like
$$\BOX^{\frac{1}{2}}
\big ((x_{1} \land \lnot x_{2})\lor ( x_{2}\land  \lnot x_{3})\lor (x_{3}\land \lnot x_{1} )\big)
$$
is valid because $(x_{1} \land \lnot x_{2})\lor (
x_{2}\land \lnot x_{3})\lor (x_{3}\land \lnot x_{1} )$
is true in at least $4$ (actually, in $6$) of its $2^3=8$ models.

By the way, the aforementioned counting quantifier is very reminiscent
of the operators on classes of languages 
which Wagner introduced in his
seminal works on the counting hierarchy~\cite{Wagner84,Wagner}. 
Still, the logical status of
such counting operations has not been investigated further, at least
not in the realm of propositional logic (see Section~\ref{section8}
for more details).
This is precisely one part of the study developed in this paper: 
the syntax, semantics, and proof theory of the resulting logic, 
called counting propositional logic, 
are fully explored, together with
connections with computational complexity.
A glimpse of this is given in
Section~\ref{section2A} below, while Section~\ref{section2B} is
devoted to the applications of our counting logic to 
programming language theory,
which will be further studied in Section~\ref{section7}.
The development of a Curry-Howard-style correspondence in 
framework of randomized computation is the main 
result of this work.

\subsection{From Counting Quantifiers to the Counting Hierarchy, and Back}\label{section2A}
In computational complexity theory, the problem of checking if a
formula of $\PL$ is true in at least half of its models is
called $\MajSat$.
This problem is known to be complete for $\PP$, 
the class of decision problems which can be solved in
polytime with an error probability strictly less than $\frac{1}{2}$.
$\PP$ is also related to $\Psharp$, the class of counting 
problems associated with the decision problems in $\NPTIME$,
since $\PTIME^{\PP}=\Psharp$ 
(and both these classes contain the whole
polynomial hierarchy, by Toda's Theorem~\cite{Toda89,Toda91}).
In Section~\ref{section3}, we introduce the \emph{univariate} counting
propositional logic, $\PPLc$, a logic containing a counting quantifier
$\BOX^{q}$ (where $q$ is a rational number between $0$ and $1$),
such that, as previously explained, $\BOX^{q}\fone$ 
intuitively means that ``$\fone$ is
true in at least $q\cdot n$ of its $n$ models''. 
We prove that, in addition to providing a purely logical description 
of $\PP$-complete problems, such as $\MajSat$, 
$\PPLc$-validity can be decided in $\PTIME^{\PP}=\Psharp$.

However, our study of counting quantifiers does not stop here.  
As it is known, a full hierarchy of complexity classes 
comprising counting problems, 
the so-called \emph{counting hierarchy}, 
was (independently) introduced in~\cite{Wagner84,Wagner} 
and~\cite{AllenderWagner} in analogy with the polynomial
hierarchy, by letting
$$
\CH_{0}=\PTIME\qquad \qquad \qquad \qquad \qquad\CH_{n+1}=\PP^{\CH_{n}}.
$$ 
A typical problem belonging to this hierarchy (in fact one which is
complete for $\CH_{2}=\PP^{\PP}$) is the problem $\MMSat$
of, given a formula of $\PL$, call it $\fone$, 
containing two disjoint sets $\mathbf x$ and $\mathbf y$ of variables,
determining whether for the majority of the valuations of $\mathbf x$, 
the majority of the valuations of $\mathbf y$ makes $\fone$ true.
In Section~\ref{section4}, we introduce the \emph{multivariate}
counting propositional logic, $\PPL$, a logic in which propositional
variables are grouped into disjoint sets using \emph{names},
$a,b,c,\dots$, and where one can form \emph{named} counting
quantified formulas, such as $\BOX^{q}_{a}\fone$, 
with the intuitive meaning that ``$\fone$ is
true in at least $q\cdot n$ of the $n$ valuations of the variables of name
$a$''. 
A problem like $\MMSat$ is thus captured by
formulas of $\PPL$ in the form $\BOX^{q}_{a}\BOX^{r}_{b}\fone$.
Our main result, in Section~\ref{section5}, is a
characterization of the full counting hierarchy by $\PPL$.
This amounts at 
(1) proving that any formula of $\PPL$ is equivalent to one in 
\emph{prenex form} 
$\BOX^{q_{1}}_{a_{1}}\dots \BOX^{q_{k}}_{a_{k}}\fone$, 
where $\fone$ is quantifier-free, and 
(2) showing that the prenex formulas with $k$ nested counting
quantifiers characterize the level $k$ of the counting hierarchy.

\subsection{A Logical Foundation of Type Systems for Randomized Calculi}\label{section2B}

Propositional formulas are not only instances of problems.
Indeed, they can also be seen as abstractions specifying 
the intended behavior of higher-order programs. 
As an example, the formula 
$(\fone\rightarrow\ftwo)\rightarrow\fthree$ can be seen 
as the type of a program $t$ taking in input a function of type
$(\fone\rightarrow\ftwo)$ and producing in output an object of type
$\fthree$. 
This analogy between formulas and types can be generalized
to programs and proofs, becoming a formal correspondence, 
known as the Curry-Howard correspondence.
If the underlying program has probabilistic features,
it can well happen that, for example, 
the program $t$ above produces in output an element of type
$\fthree$ \emph{only half of the times}, 
or that $t$ requires that the input function takes in input an object 
of type $\fone$ only \emph{with probability at most equal to} 
$\frac{3}{4}$. 
Indeed, some type systems specifically designed for
so-called randomized $\lambda$-calculi exist in which types are
enriched with quantitative information about the probability of
behaving according to the underlying type, 
this way somehow lifting the probabilistic monad up to the 
level of types~\cite{DalLagoGrellois,BreuvartDalLago,ADLG}.

Can all this be justified from a logical viewpoint, in the spirit of
the aforementioned Curry-Howard correspondence? 
The literature does not offer much in this direction.
Thanks to $\PPL$, however, the answer to the question above 
becomes positive, as we will show in Section~\ref{section6}. 
Just to give you a glimpse of what waits for you there, 
if we ``enrich'' the formula 
$(\fone\rightarrow\ftwo)\rightarrow\fthree$,
with counting quantifiers, for example, 
in the following way,
\begin{equation}\label{equ:exampletype}
(\BOX^{\frac{3}{4}}_a\fone\rightarrow\ftwo\big)\rightarrow\BOX^{\frac{1}{2}}_b\fthree
\end{equation}
we obtain a type which can be used to capture probabilistic behaviors.
In fact, the subscripts $a$ and $b$ very nicely correspond 
to probabilistic events, in the sense of the recent work by 
Dal Lago et al. on decomposing probabilistic 
$\lambda$-calculi~\cite{DLGH}. 
Remarkably, the resulting type system not only enjoys 
subject reduction, 
but can also be proved to normalize in the expected way, 
for example any term typable as~(\ref{equ:exampletype}) 
(when applied to an argument of the appropriate type) 
normalizes to a value having type $\fthree$ 
with probability at least equal to $\frac{1}{2}$.

\section{Univariate Counting Propositional Logic}
\label{section3}

In this section, we introduce propositional counting logic in a
simplistic version, called $\PPLc$.  
We present a sound and complete proof system for $\PPLc$ 
and show that derivability in this logic is in the class $\Psharp$. 
In the next Section~\ref{section4}, we will introduce the whole
propositional counting logic, $\PPL$, 
as a generalization of $\PPLc$ which additionally 
provides a characterization of each layer of the 
counting hierarchy, Section~\ref{section5}.

\subsection{From Valuations to Measurable Sets}
In standard propositional logic, the interpretation of a formula
containing propositional atoms, $p_{1},\dots, p_{n}$, call it $\fone$,
is a truth-value, $\model{\fone}_{\eval}$, depending on a valuation
$\eval: \{p_{1},\dots, p_{n}\}\to 2$ (where $2=\{0,1\}$). 
Yet, what if one defined the semantics of $\fone$ as consisting 
of \emph{all} valuations making $\fone$ true? 
Since propositional formulas can have an arbitrary number 
of propositional values, 
it makes sense to consider maps $f: \omega \to 2$
, instead of a finite valuation.
Hence, the interpretation of a formula, call it $\fone$, 
should be the set $\model \fone\subseteq 2^{\omega}$ 
made of all $f\in 2^{\omega}$ ``making $\fone$ true''.
In fact, it is not difficult to see that such sets
are \emph{measurable sets} of the standard Borel algebra
$\mathcal{B}(2^{\omega})$. First, an atomic proposition $p_{i}$ is
interpreted as the \emph{cylinder set},~\cite{Billingsley}
\begin{equation}
\Cyl i= \{f\in 2^{\omega}\mid f(i)=1\}
\end{equation}
of all $f\in 2^{\omega}$, such that $f(i)=1$.
Then, the interpretation of
non-atomic propositions is provided by the standard $\sigma$-algebra
operations of complementation, finite intersection and finite union.

In this setting, we define two counting operators, 
by adapting the notion of counting quantifier presented
in~\cite{Wagner84,Wagner}. 
In doing so, given a formula $\fone$, it becomes possible to
express that $\fone$ is satisfied by, 
for example \emph{at least half} or 
by \emph{strictly less than half} of its models.
Semantically, this amounts at checking that 
$\mu(\model \fone)\geq\frac{1}{2}$ or $\mu(\model \fone)< \frac{1}{2}$,
where $\mu$ denotes the standard Borel measure
on $\mathcal B(2^{\omega})$.
Generally speaking, for all rational
numbers $q\in \mathbb{Q}_{[0,1]}$, 
we can define two new formulas 
$\BOX^{q}\fone$ and $\DIA^{q}\fone$ 
(depending on no propositional atom) with the
intuitive meaning that $\BOX^{q}\fone$ is \valid \ when
$\mu(\model{\fone})\geq q$ and 
that $\DIA^{q}\fone$ is \valid \ when $\mu(\model{\fone})<q$.
This leads to the following definition:
\begin{definition}[Semantics of $\PPLc$]\label{semantics}
The \emph{formulas} of $\PPLc$ are defined by the grammar below:
$$
\fone, \ftwo := \atom{i} \midd  \lnot \fone \midd \fone\land \ftwo\midd \fone\lor \ftwo \midd \BOX^{q}\fone\midd \DIA^{q}\fone
$$
where $i$ is an arbitrary natural number and $q$ is a rational number
from the interval $[0,1]$. 
For each formula $\fone$ of $\PPLc$, its \emph{interpretation} 
$\model{\fone}\in \mathcal B(2^{\omega})$ 
is a measurable set inductively defined as follows:
\begin{align*}
\model{\atom{i}}& = \Cyl i \\
\model{\lnot \fone}&= 2^{\omega}-\model{\fone} \\
\model{\fone\land \ftwo}& = \model{\fone}\cap \model{\ftwo} \\
\model{\fone \lor \ftwo}& = \model{\fone}\cup \model{\ftwo} \\
\model{\BOX^{q}\fone}&= 
	\begin{cases}
	\twoOm & \text{ if }\mu(\model \fone)\geq q \\
	\emptyset & \text{ otherwise }
	\end{cases}\\
\model{\DIA^{q}\fone}&= 
	\begin{cases}
	\twoOm & \text{ if }\mu(\model \fone)< q \\
	\emptyset & \text{ otherwise }
	\end{cases}
\end{align*}
\end{definition}
\noindent
The notion of logical equivalence is defined in a standard way: 
two formulas, call them $\fone$ and $\ftwo$, 
are \emph{logically equivalent}, $\fone \equiv \ftwo$, 
when $\model{\fone}=\model{\ftwo}$.
A formula, call it $\fone$, is \emph{\valid}
when $\model \fone=\twoOm$, and \emph{invalid}
when $\model{\fone}=\emptyset$.
For all classical tautologies $\fone$, 
each formulas of the form $\BOX^{q}\fone$ is \valid, whereas
for any formula of $\PPLc$, call it $\ftwo$, formulas of the form
$\BOX^{0}\ftwo$ are \valid, see Lemma~\ref{lemma:box0}
in Appendix~\ref{appendix3.1}.

Counting quantifiers are somehow non-standard.
Indeed, they are inter-definable but not dual, as:\footnote{For the detailed proof, see Lemma~\ref{lemma:interdef} in Appendix~\ref{appendix3.1}.}
\begin{equation}\label{equ:interdefinability}
\BOX^{q}\fone \equiv \lnot \DIA^{q}\fone \qquad \qquad \DIA^{q}\fone \equiv \lnot\BOX^{q}\fone.
\end{equation}
Remarkably, using the these quantifiers, 
it is also possible to express that a formula is satisfied 
by \emph{strictly more than $q$} or by \emph{no more than $q$},
for example, given a formula $\fone$, this can formalized as 
(respectively) $\DIA^{1-q} \lnot A $ and $\BOX^{1-q}\lnot A$.
In order to further clarity the semantics of $\PPLc$, 
let us consider a few examples.
\begin{example}
Let $\fone=\ftwo \lor \fthree$, where $\ftwo,\fthree$ are respectively
the formulas $\atom 0\land \lnot \atom 1$ and $\lnot \atom 0 \land\atom  1$.
As it can easily
be checked, the two measurable sets $\model{\ftwo}$ and
$\model{\fthree}$ are disjoint and have both measure $\frac{1}{4}$.
Hence
$\mu(\model{\fone})=\mu(\model{\ftwo})+\mu(\model{\fthree})=\frac{1}{2}$,
which means that the formula $\BOX^{\frac{1}{2}}\fone$ is \valid.
\end{example}
\begin{example}
Let $\fone=\ftwo\lor\fthree$, where $\ftwo,\fthree$ are respectively
the formulas $( \atom 0\land\lnot  \atom 1)\vee \atom 2$ and $( \lnot \atom 0 \land\atom  1)\lor \atom 2$.  
In this case the two
sets $\model{\ftwo}$ and $\model{\fthree}$ have  
both measure $\frac{5}{8}$ 
(in fact, $5$ of their $8$ models satisfy them) but they are not disjoint, and
$\model{\ftwo}\cap \model{\fthree}=\Cyl 2$, 
which has measure $\frac{1}{2}$. 
Hence,
$\mu(\model{\fone})=\mu(\model{\ftwo})+\mu(\model{\fthree})-\mu\big(\Cyl
2\big)=\frac{3}{4}$, which means that the formula $\DIA^{1}\fone$ is \valid.
\end{example}

\subsection{The Proof Theory of $\PPLc$}
In order to obtain a sound and complete proof system for $\PPLc$, 
a one-sided, single-succedent, and
labelled sequent calculus is introduced.
Notice that, although the quantitative semantics for our counting logic 
appears as very natural, 
the definition of a complete deductive system is not so trivial.
The language of our calculus is defined by
sequents made of \emph{labelled formulas}, 
$\bone\Pto \fone$, $\bone \Pfrom \fone$, 
where $\fone$ is a formula of $\PPLc$ and $\bone$ is Boolean formula.
Intuitively, a labelled formula $\bone\Pto \fone$ is true when 
the (measurable) set of valuations making $\bone$ true 
is included in $\model \fone$, and similarly, 
a labelled formula $\bone\Pfrom \fone$ is true when 
$\model \fone$ is included in the set of valuations making
$\bone$ true.

In order to formally define the language of this calculus, 
we will start by introducing Boolean formulas.
\begin{definition}[Boolean Formulas]\label{Bool}
\emph{Boolean formulas}, starting from a countable set of atoms,
$\bvar_{0},\bvar_{1},\bvar_{2},\dots$, are defined by the following grammar:
$$
\bone, \btwo:= \bvar_{i} \midd \top\midd \bot\midd  \lnot \bone \midd \bone \land \btwo \midd \bone \lor \btwo
$$
The
interpretation $\model \bone\in \mathcal B(2^{\omega})$ of a Boolean
formula $\bone$ is defined inductively as follows:
\begin{align*}
\model{\bvar_{i}} &= \Cyl{i} &&&
\model{\neg \bone} &= \twoOm \ – \ \model{\bone} \\
\model{\top} &= 2^{\omega} &&&
\model{\bone \land \btwo} &= \model \bone \cap \model \btwo \\
\model{\bot} &= \emptyset &&&
\model{\bone \lor \btwo} &= \model \bone \cup \model \btwo
\end{align*}
\end{definition}
\noindent Let $ \bone \vDash \btwo$ and $ \bone \Dashv \vDash \btwo$
be shorthands for, respectively, 
$\model \bone\subseteq \model \btwo$ and 
$\model \bone =\model \btwo$.  
Then, \emph{external hypotheses} and \emph{labelled formulas}
can be introduced.
\begin{definition}[External Hypotheses]
An \emph{external hypothesis} is an expression of one of the following forms: $\bone \vDash \btwo$, $\bone \Dashv \vDash \btwo$ or $\mu(\model{\bone}) \triangleright q$, where $\triangleright \in \{\ge, \leq, <, >, =\}$, $\bone, \btwo$ are Boolean formulas, and $q \in \mathbb{Q}_{[0,1]}$.
\end{definition}
\begin{definition}[Labelled Formulas]
A \emph{labelled formula}, call it \lone, 
is an expression of the form
$\bone \Pto \fone$ or $\bone \Pfrom \fone$, 
where $\bone$ is a Boolean formula and $\fone$ is a formula of $\PPLc$. 
\end{definition}
\noindent
The measure of the interpretation of a Boolean formula, $\mu(\model \bone)$, can be related to the number $\SSAT(\bone)$ of the valuations making $\bone$ true as follows:
\begin{lemma}\label{lemma:counti}
For each Boolean formula, call it $\bone$, with propositional variables,
$\bvar_{0},\dots, \bvar_{n-1}$,
$$
\mu(\model \bone)={\SSAT(\bone)}\cdot{2^{-n}}.
$$
\end{lemma}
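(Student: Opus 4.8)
The plan is to prove the identity by a straightforward induction on the structure of the Boolean formula $\bone$, but with the crucial preliminary observation that the measure and the count must be taken relative to a \emph{fixed} finite set of variables $\{\bvar_0,\dots,\bvar_{n-1}\}$ containing all the atoms of $\bone$. First I would make precise what $\SSAT(\bone)$ means: it is the number of valuations $v\colon\{\bvar_0,\dots,\bvar_{n-1}\}\to 2$ under which $\bone$ evaluates to true. The key geometric fact, which I would isolate as the base of the argument, is that each such valuation $v$ determines a \emph{basic cylinder}
\[
C_v = \{ f\in 2^\omega \mid f(i) = v(\bvar_i) \text{ for all } i < n \},
\]
that these $2^n$ cylinders partition $2^\omega$, that each has measure $2^{-n}$ (by the product definition of the Borel measure on $2^\omega$, so $\mu(C_v)=\prod_{i<n}\mu(\{f : f(i)=v(\bvar_i)\}) = (1/2)^n$), and that $\model{\bone}$ is exactly the union of those $C_v$ for which $v\models\bone$. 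Granting this, the lemma is immediate: $\mu(\model\bone) = \sum_{v\models\bone}\mu(C_v) = \SSAT(\bone)\cdot 2^{-n}$, using finite additivity of $\mu$.

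There are two routes to the key fact, and I would take whichever fits the paper's prior development best. The clean route is the decomposition-of-$\model{\bone}$ claim above, proved by induction on $\bone$: for an atom $\bvar_i$ with $i<n$, $\model{\bvar_i}=\Cyl i$ is the union of the $C_v$ with $v(\bvar_i)=1$; the cases $\top,\bot$ are trivial; and since complementation, intersection and union on the subsets of $2^\omega$ that are unions of the fixed partition $\{C_v\}$ correspond exactly to the Boolean operations on the index sets $\{v : v\models\bone\}$, the inductive steps for $\lnot,\land,\lor$ go through mechanically (here one uses that $2^\omega - \bigcup_{v\in S}C_v = \bigcup_{v\notin S}C_v$ precisely because the $C_v$ tile $2^\omega$). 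The alternative route is to induct directly on the numerical identity, but that requires re-deriving essentially the same partition bookkeeping — e.g. for $\lnot\bone$ one needs $\SSAT(\lnot\bone) = 2^n - \SSAT(\bone)$ and $\mu(2^\omega-\model\bone) = 1-\mu(\model\bone)$, and for $\bone\lor\btwo$ an inclusion–exclusion step — so the cylinder-partition formulation is the economical one.

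The main obstacle, and the only genuinely delicate point, is the treatment of the variable set. The statement fixes $n$ so that the variables of $\bone$ are exactly $\bvar_0,\dots,\bvar_{n-1}$, but for the induction to close one must handle subformulas whose variables form a \emph{proper} subset, and one must know the claimed identity is insensitive to adding dummy variables: if $\bone$ has variables among $\bvar_0,\dots,\bvar_{m-1}$ with $m<n$, then $\SSAT$ computed over $n$ variables equals $2^{n-m}$ times $\SSAT$ computed over $m$ variables, while $\mu(\model\bone)$ is unchanged, so the product $\SSAT(\bone)\cdot 2^{-n}$ is stable. I would therefore either (i) phrase the induction hypothesis for an arbitrary fixed finite variable set $V\supseteq\mathrm{Var}(\bone)$ and count valuations of $V$, which makes the $\land$ and $\lor$ steps (where the two subformulas may involve different atoms) seamless, or (ii) prove the dummy-variable invariance as a one-line sublemma first and then induct. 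Apart from this, everything reduces to finite additivity of $\mu$ and the elementary fact that the basic cylinders on the first $n$ coordinates each have measure $2^{-n}$, both of which are standard facts about $(\mathcal B(2^\omega),\mu)$ that the paper already invokes.
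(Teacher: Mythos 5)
Your proof is correct and takes essentially the same route as the paper's: the paper associates to each valuation $\eval$ the basic cylinder $X(\eval)=\bigcap_{i<n}\Cyl{i}^{\eval(\bvar_i)}$, shows by induction that $\model{\bone}=\bigcup_{\eval\vDash\bone}X(\eval)$ with the $X(\eval)$ pairwise disjoint of measure $2^{-n}$, and concludes by additivity. Your additional care about dummy variables is a reasonable refinement of a point the paper leaves implicit, but it does not change the argument.
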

\begin{proof}
To any valuation $\eval: \{\bvar_{0},\dots, \bvar_{n-1}\}\to 2$ we can
associate a measurable set $X(\eval)\in \mathcal B(2^{\omega})$ by
letting
$$
X(\eval)= \{f\mid \forall i<n, f(i)=\eval(\bvar_{i})\}= \bigcap_{i=0}^{n-1}\Cyl{i}^{\eval(\bvar_{i})}
$$
where $\Cyl{i}^{\eval(\bvar_{i})}$ is $\Cyl{i}$ if $\eval(\bvar_{i})=1$ and
$\overline{\Cyl{i}}$ otherwise. 
One can check by induction that for
all Boolean formula $\bone$, $\model \bone= \bigcup_{\eval\vDash
\bone}X(\eval)$.  
Then, since for all distinct $\eval,$ and $\eval'$,
$X(\eval)\cap X(\eval')=\emptyset$, we deduce
$\SSAT(\bone)\cdot 2^{-n}=\sum_{\eval \vDash \bone}2^{-n}=
\sum_{\eval\vDash \bone}\mu(X(\eval))=
\mu(\bigcup_{\eval\vDash \bone}X(\eval))=\mu( \model \bone)$.
\end{proof}

A \emph{labelled sequent} of $\PPLc$ is a sequent of the form $\vdash \lone$.  
The proof system for $\PPLc$ is defined by the rules displayed 
in Fig.~\ref{fig:pplcrules}. 
We let $\vdash_{\PPLc}\lone$ denote the fact that
$\vdash \lone$ is derivable by those rules.  
Notice that some of the rules have semantic statements among their
premisses, something which can be seen as unsatisfactory. 
However, as it will become clear in Subsection~\ref{Subsection3C},
 such premisses make sense from a computational viewpoint: 
 they correspond to the idea that, when searching for a proof, 
 one might need to call for an \emph{oracle}, 
 a central concept in the definition of counting complexity classes.
 The presence of such external hypothesis does not make our proof
 theory trivial and, in fact, proof search is sometimes convoluted.
 Derivations are defined in the standard way.
We illustrate in Fig.~\ref{fig:examplepplc} a 
simple example of derivation in $\PPLc$. 
\begin{figure}
\begin{center}
\fbox{
\begin{minipage}{.97\textwidth}
\small
\begin{center}
Initial Sequents
\end{center}
\normalsize
\begin{minipage}{\linewidth}
\adjustbox{scale=1,center}{
$
\AxiomC{$\bone  \vDash x_{n}$}
\RightLabel{$\AxO$}
\UnaryInfC{$\vdash \bone  \Pto \atom{n}$}
\DP
\qquad \qquad \qquad
\AxiomC{$x_{n}\vDash \bone$}
\RightLabel{$\AxT$}
\UnaryInfC{$\vdash \bone  \Pfrom \atom{n}$}
\DP
$}
\end{minipage}

\small
\vskip2mm
\begin{center}
Union and Intersection Rules
\end{center}
\normalsize

\adjustbox{scale=1,center}{
$
\AxiomC{$\vdash \btwo \Pto \fone$}
\AxiomC{$\vdash \bthree \Pto \fone$}
\AxiomC{$ \bone  \vDash \btwo\lor \bthree $}
\RightLabel{$\Rcup$}
\TrinaryInfC{$\vdash \bone  \Pto \fone$}
\DP
$}

\vskip4mm
\adjustbox{scale=1,center}{
$
\AxiomC{$\vdash \btwo \Pfrom \fone$}
\AxiomC{$\vdash \bthree  \Pfrom \fone$}
\AxiomC{$\btwo\land \bthree\vDash   \bone $}
\RightLabel{$\Rcap$}
\TrinaryInfC{$\vdash \bone \Pfrom \fone$}
\DP
$}

\vskip2mm
\small
\begin{center}
Logical  Rules
\end{center}
\normalsize

\begin{minipage}[t]{\linewidth}
\begin{minipage}[t]{0.4\linewidth}
\begin{prooftree}
\AxiomC{$\vdash \btwo \Pfrom \fone$}
\AxiomC{$\bone \vDash   \lnot \btwo$}
\RightLabel{$\Rnra$}
\BinaryInfC{$\vdash \bone  \Pto \neg \fone $}
\end{prooftree}
\end{minipage}
\hfill
\begin{minipage}[t]{0.5\linewidth}
\begin{prooftree}
\AxiomC{$\vdash \btwo \Pto \fone$}
\AxiomC{$ \lnot{\btwo} \vDash \bone $}
\RightLabel{$\Rnla$}
\BinaryInfC{$\vdash \bone  \Pfrom \neg \fone$}
\end{prooftree}
\end{minipage}
\end{minipage}

\bigskip
\begin{minipage}{\linewidth}
\begin{minipage}[t]{0.4\linewidth}
\begin{prooftree}
\AxiomC{$\vdash \bone \Pto \fone$}
\RightLabel{$\mathsf{R1}_{\vee}^{\Era}$}
\UnaryInfC{$\vdash \bone  \Pto \fone \vee \ftwo$  }
\end{prooftree}
\end{minipage}
\hfill
\begin{minipage}[t]{0.5\linewidth}
\begin{prooftree}
\AxiomC{$\vdash  \bone  \Era \ftwo$}
\RightLabel{$\mathsf{R2}^{\Era}_{\vee}$}
\UnaryInfC{$\vdash \bone  \Pfrom \fone \vee \ftwo$  }
\end{prooftree}
\end{minipage}
\end{minipage}

\begin{prooftree}
\AxiomC{$\vdash\bone\Ela\fone$}
\AxiomC{$\vdash\bone\Ela\ftwo$}
\RightLabel{$\Rvla$}
\BinaryInfC{$\vdash\bone\Ela\fone\vee\ftwo$}
\end{prooftree}

\begin{prooftree}
\AxiomC{$\vdash\bone\Era\fone$}
\AxiomC{$\vdash\bone\Era\ftwo$}
\RightLabel{$\Rwra$}
\BinaryInfC{$\vdash\bone\Era\fone\wedge\ftwo$}
\end{prooftree}

\begin{minipage}[t]{\linewidth}
\begin{minipage}[t]{0.4\linewidth}
\begin{prooftree}
\AxiomC{$\vdash \bone \Ela \fone$}
\RightLabel{$\textsf{R1}_{\wedge}^{\Ela}$}
\UnaryInfC{$\vdash \bone  \Ela \fone \wedge \ftwo$ }
\end{prooftree}
\end{minipage}
\hfill
\begin{minipage}[t]{0.5\linewidth}
\begin{prooftree}
\AxiomC{$\vdash \bone  \Pfrom \ftwo$ }
\RightLabel{$\mathsf{R2}_{\wedge}^{\Ela}$}
\UnaryInfC{$\vdash \bone \Pfrom \fone \wedge \ftwo$}
\end{prooftree}
\end{minipage}
\end{minipage}

\small
\begin{center}
Counting Rules
\end{center}
\normalsize

\begin{minipage}{\linewidth}
\begin{minipage}[t]{0.4\linewidth}
\begin{prooftree}
\AxiomC{$\mu(\model \bone) = 0$}
\RightLabel{$\Rmur$}
\UnaryInfC{$\vdash \bone  \Pto \fone$}
\end{prooftree}
\end{minipage}
\hfill
\begin{minipage}[t]{0.5\linewidth}
\begin{prooftree}
\AxiomC{$\mu(\model \bone) = 1$}
\RightLabel{$\Rmul$}
\UnaryInfC{$\vdash \bone  \Pfrom \fone$}
\end{prooftree}
\end{minipage}
\end{minipage}

\bigskip
\begin{minipage}{\linewidth}
\begin{minipage}[t]{0.4\linewidth}
\begin{prooftree}
\AxiomC{$\vdash \btwo \Pto \fone$}
\AxiomC{$\mu(\model \btwo) \ge q$}
\RightLabel{$\Rbr$}
\BinaryInfC{$\vdash \bone \Pto \BOX^{q}\fone$}
\end{prooftree}
\end{minipage}
\hfill
\begin{minipage}[t]{0.5\linewidth}
\begin{prooftree}
\AxiomC{$\vdash \btwo \Pfrom \fone$}
\AxiomC{$\mu(\model \btwo) < q$}
\RightLabel{$\Rbl$}
\BinaryInfC{$\vdash  \bone \Pfrom$ $\BOX^{q}\fone$}
\end{prooftree}
\end{minipage}
\end{minipage}

\bigskip
\begin{minipage}{\linewidth}
\begin{minipage}[t]{0.4\linewidth}
\begin{prooftree}
\AxiomC{$\vdash \btwo \Pfrom \fone$}
\AxiomC{$\mu(\model \btwo) < q$}
\RightLabel{$\Rdr$}
\BinaryInfC{$\vdash \bone \Pto \DIA^{q}$\fone}
\end{prooftree}
\end{minipage}
\hfill
\begin{minipage}[t]{0.5\linewidth}
\begin{prooftree}
\AxiomC{$\vdash \btwo \Pto \fone$}
\AxiomC{$\mu(\model \btwo) \ge q$}
\RightLabel{$\Rdl$}
\BinaryInfC{$\vdash \bone \Pfrom$ $\DIA^{q}$\fone}
\end{prooftree}
\end{minipage}
\end{minipage}

\end{minipage}}
\caption{\small{Proof system for $\PPLc$.}}
\label{fig:pplcrules}
\end{center}
\end{figure}

\begin{figure}
\begin{center}
\fbox{
\begin{minipage}{.97\textwidth}
\adjustbox{scale=1,center}
{$
\AXC{$x_{1}\vDash x_{1}$}
\RightLabel{$\AxO$}
\UIC{$\vdash x_{1} \Pto \atom{1}$}
\RightLabel{$\mathsf{R1}_{\vee}^{\Era}$}
\UIC{$\vdash x_{1} \Pto \atom{1}\vee \atom{2}$}
\AXC{$x_{2}\vDash x_{2}$}
\RightLabel{$\AxO$}
\UIC{$x_{2} \Pto \atom{2}$}
\RightLabel{$\mathsf{R2}^{\Era}_{\vee}$}
\UIC{$\vdash x_{2} \Pto \atom{1}\vee \atom{2}$}
\RightLabel{$\Rcup$}
\BIC{$\vdash x_{1} \lor x_{2} \Pto \atom{1}\vee \atom{2}$}
\AXC{$\mu(\model{x_{1}\lor x_{2}})\geq \frac{3}{4}$}
\RightLabel{$\Rbr$}
\BIC{$\vdash \top\Pto \BOX^{\frac{3}{4}}(\atom{1}\vee \atom{2})$}
\DP
$}
\end{minipage}}
\caption{\small{Derivation $\vdash \top\Pto \BOX^{\frac{3}{4}}(\atom{1}\lor\atom{2})$.}}
\label{fig:examplepplc}
\end{center}
\end{figure}

As anticipated, the given proof system is sound and complete with respects of the semantics of $\PPLc$. 
Validity of labelled formulas and sequents is defined as
follows:
\begin{definition}[Validity]\label{valB}
A labelled formula $\bone\Pto \fone$ $($resp. $\bone\Pfrom \fone)$
is \emph{valid}, denoted $\vDash \bone\Pto \fone$
(resp. $\vDash \bone\Pfrom \fone$), when
$\model \bone \subseteq \model \fone$
(resp. $\model \fone \subseteq \model \bone$).  
A sequent
$\vdash\lone$ is \emph{valid}, denoted $\vDash \lone$, 
when it contains a valid labelled formula.
\end{definition} 
\begin{remark}
For any formula of $\PPLc$, call it $\fone$, 
one can construct a Boolean formula $\bone_{\fone}$ 
such that $\model \fone= \model{\bone_{\fone}}$. The
only non-trivial cases are $\fone=\BOX^{q}\ftwo$ and
$\fone=\DIA^{q}\ftwo$. In the first case we let $\bone_{\fone}=\top$
if $\mu(\model \ftwo)\geq q$ and $\bone_{\fone}=\bot$ if
$\mu(\model \ftwo)<q$, and similarly for the second case.
\end{remark}

The soundness of the proof system can be proved by a standard
induction on derivation height.\footnote{The complete proof is offered in the corresponding Proposition~\ref{soundnessAP} in Appendix~\ref{appendix3.2}.}
\begin{restatable}[Soundness]{proposition}{primeSoundness}\label{soundnessAP}
If $\vdash_{\PPLc} \lone$ holds, then $\vDash \lone$ holds.
\end{restatable}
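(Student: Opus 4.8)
The plan is to prove soundness by induction on the height of the derivation of $\vdash_{\PPLc}\lone$, showing that in every case the displayed labelled formula in the conclusion is valid whenever the premisses (both the derivable sequents and the external hypotheses) hold. Recall from Definition~\ref{valB} that $\vDash\bone\Pto\fone$ means $\model\bone\subseteq\model\fone$ and $\vDash\bone\Pfrom\fone$ means $\model\fone\subseteq\model\bone$, so each case reduces to a small piece of Boolean algebra and measure theory on $\mathcal B(2^\omega)$. For the base cases, $\AxO$ uses $\bone\vDash x_n$, i.e. $\model\bone\subseteq\model{x_n}=\Cyl n=\model{\atom n}$, which is exactly validity of $\bone\Pto\atom n$; $\AxT$ is symmetric; the premisses $\mu(\model\bone)=0$ of $\Rmur$ and $\mu(\model\bone)=1$ of $\Rmul$ give, respectively, $\model\bone\subseteq\model\fone$ and $\model\fone\subseteq\model\bone$ up to a $\mu$-null set — here one needs the easy observation that a measurable set of measure $0$ is contained in every set and a set of measure $1$ contains every set, but more precisely, since all interpretations are finite Boolean combinations of cylinders, "$\mu(\model\bone)=0$" forces $\model\bone=\emptyset$ and "$\mu(\model\bone)=1$" forces $\model\bone=2^\omega$, so the inclusions are genuine. (I would state this as a small remark or fold it into Lemma~\ref{lemma:counti}.)

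For the inductive step I would go rule by rule. The union rule $\Rcup$: from $\model\btwo\subseteq\model\fone$, $\model\bthree\subseteq\model\fone$ and $\model\bone\subseteq\model\btwo\cup\model\bthree$ we get $\model\bone\subseteq\model\fone$; $\Rcap$ is the dual, using $\model\btwo\cap\model\bthree\subseteq\model\bone$. The negation rules use $\model{\lnot\fone}=2^\omega-\model\fone$ together with the contrapositive of an inclusion: e.g. for $\Rnra$, from $\model\fone\subseteq\model\btwo$ and $\model\bone\subseteq\model{\lnot\btwo}=2^\omega-\model\btwo$ we get $\model\bone\subseteq 2^\omega-\model\fone=\model{\lnot\fone}$. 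The $\vee$- and $\wedge$-rules are immediate from $\model{\fone\vee\ftwo}=\model\fone\cup\model\ftwo$ and $\model{\fone\wedge\ftwo}=\model\fone\cap\model\ftwo$ (monotonicity of $\cup$, $\cap$, and the fact that an intersection is contained in each conjunct and a union contains each disjunct).

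The genuinely counting-specific cases are $\Rbr$, $\Rbl$, $\Rdr$, $\Rdl$, and these are where one must use the semantics of $\BOX^q$ and $\DIA^q$. For $\Rbr$: the premiss $\mu(\model\btwo)\ge q$ together with $\model\btwo\subseteq\model\fone$ (the other premiss, by induction) gives $\mu(\model\fone)\ge\mu(\model\btwo)\ge q$ by monotonicity of $\mu$, hence $\model{\BOX^q\fone}=2^\omega$ by Definition~\ref{semantics}, and then trivially $\model\bone\subseteq 2^\omega=\model{\BOX^q\fone}$, i.e. $\vDash\bone\Pto\BOX^q\fone$. For $\Rbl$: from $\mu(\model\btwo)<q$ and $\model\fone\subseteq\model\btwo$ we get $\mu(\model\fone)\le\mu(\model\btwo)<q$, so $\model{\BOX^q\fone}=\emptyset\subseteq\model\bone$. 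The two $\DIA$-rules are handled the same way, swapping the roles of $\ge$ and $<$ in the case split defining $\model{\DIA^q\fone}$. I do not expect any real obstacle here: the only points requiring a moment's care are (i) the direction of the inclusion flips between $\Pto$- and $\Pfrom$-labelled formulas, so one must keep track of which Boolean bound ($\btwo$ upper-bounds or lower-bounds $\model\fone$) is available, and (ii) the null/co-null observation needed for $\Rmur$ and $\Rmul$; everything else is routine monotonicity. Since in each rule the conclusion sequent's distinguished labelled formula is shown valid, the conclusion sequent is valid by Definition~\ref{valB}, completing the induction.
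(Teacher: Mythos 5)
Your proposal is correct and follows essentially the same route as the paper: induction on derivation height with a rule-by-rule case analysis, using monotonicity of $\mu$ for the counting rules and elementary set inclusions elsewhere. Your explicit justification that $\mu(\model\bone)=0$ forces $\model\bone=\emptyset$ (because labels are finite Boolean combinations of cylinders) is in fact slightly more careful than the paper, which simply appeals to ``basic measure theory'' at that step.
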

\noindent
The completeness proof is more complex.
It relies on the introduction of a \emph{decomposition relation}, 
$\Dec$, between finite sets of sequents. 
Reduction $\Dec$ is in its turn defined basing on a
relation $\dec$ between sequents and 
finite sets of sequents,
which is generated by clauses like the following ones:
\begin{align*}
\text{if }\bone \vDash \btwo\lor \bthree, \quad  \vdash \bone \Pto \fone\lor \ftwo \ & \leadsto_{0} \ \{\vdash \btwo\Pto \fone, \vdash \bthree\Pto \ftwo\} \\
 \quad  \vdash \bone \Pto \fone\land \ftwo \ &\leadsto_{0} \ \{\vdash \bone\Pto \fone, \vdash \bone\Pto \ftwo\} \\
\text{if } \mu(\model{\btwo})\geq q, \quad \vdash \bone\Pto \BOX^{q}\fone \ & \leadsto_{0} \ \{\vdash \btwo\Pto \fone\}\\
\text{if }\mu(\model{\btwo})< q, \quad  \vdash \bone\Pto \DIA^{q}\fone \ & \leadsto_{0} \ \{\vdash \btwo\Ela \fone\}
\end{align*}
\noindent
and letting $\ssetO\Dec \ssetT$ whenever, $\ssetO=\ssetO'\cup\{ \vdash \lone\}$,
$\vdash \lone \dec \ssetT'$ and $\Psi=\ssetO'\cup \ssetT'$.\footnote{The 
complete definition of $\dec$ and $\Dec$ is presented in the Appendix~\ref{appendix3} and specifically, see respectively, Definition~\ref{decomposition} and Definition~\ref{Decomposition}.}
\begin{restatable}{lemma}{primeDec}\label{sNorm}
The reduction $\Dec$ is strongly normalizing.
\end{restatable}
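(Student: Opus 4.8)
The plan is to define a well-founded measure on finite sets of sequents that strictly decreases along every step of $\Dec$. Since a single $\dec$-step replaces one labelled sequent $\vdash \lone$ in $\ssetO$ by a finite set $\ssetT'$ of labelled sequents, and $\ssetO \Dec \ssetT$ leaves the rest of the set untouched, it suffices to find a measure $m$ on labelled formulas (equivalently, on labelled sequents $\vdash \lone$) taking values in a well-ordered set, such that for every clause $\vdash \lone \dec \ssetT'$ and every $\vdash \lone' \in \ssetT'$ we have $m(\lone') < m(\lone)$; then the multiset extension of $m$ over finite sets of sequents decreases at each $\Dec$-step, and Dershowitz--Manna gives strong normalization.

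The natural candidate is essentially the syntactic size (or structural complexity) of the $\PPLc$-formula occurring in the labelled formula, ignoring the Boolean label entirely. First I would inspect each defining clause of $\dec$: in every case the right-hand side sequents carry formulas that are strict subformulas of the formula on the left — $\fone \lor \ftwo$ is decomposed into $\fone$ and $\ftwo$; $\fone \land \ftwo$ likewise; $\BOX^{q}\fone$ and $\DIA^{q}\fone$ into $\fone$; the negation clauses turn $\lnot\fone$ into $\fone$ (with a possible switch of the arrow direction $\Pto/\Pfrom$, which does not affect the formula-size measure); and the axiom-like clauses produce the empty set of premises, which is vacuously smaller. So taking $m(\bone \Pto \fone) = m(\bone \Pfrom \fone)$ to be the number of connectives/quantifiers in $\fone$ (with atoms and $\BOX^q$/$\DIA^q$ counted appropriately), every clause strictly decreases $m$. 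One must be slightly careful that the full definitions of $\dec$ and $\Dec$ in Appendix~\ref{appendix3} contain no clause that reintroduces a larger formula on the right — e.g. that the external-hypothesis side conditions $\bone \vDash \btwo \lor \bthree$, $\mu(\model{\btwo}) \ge q$, etc., are genuinely side conditions and not premises that get recursively decomposed. Assuming, as the excerpt indicates, that the Boolean side conditions are discharged semantically and do not spawn new sequents, the argument goes through verbatim.

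Concretely, the steps are: (1) fix $m$ as above and verify, clause by clause from Definition~\ref{decomposition}, that $\vdash \lone \dec \ssetT'$ implies $m(\lone') < m(\lone)$ for all $\vdash \lone' \in \ssetT'$; (2) observe that $\ssetO \Dec \ssetT$ with $\ssetO = \ssetO' \cup \{\vdash \lone\}$, $\vdash \lone \dec \ssetT'$, $\ssetT = \ssetO' \cup \ssetT'$ replaces the multiset element $\lone$ by the finite multiset $\{\lone' : \vdash\lone' \in \ssetT'\}$, each strictly $m$-smaller; (3) conclude by the well-foundedness of the multiset order $(\mathbb{N}^{\mathrm{fin}}, <_{\mathrm{mul}})$ over the well-order $(\mathbb{N}, <)$ that there is no infinite $\Dec$-chain. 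The only genuine obstacle I anticipate is bookkeeping: making sure that the measure chosen is insensitive to the $\Pto/\Pfrom$ polarity flips and to the replacement of the Boolean label, and that no clause in the complete appendix definition (perhaps a structural or bottom rule dealing with $\mu(\model{\bone}) = 0$ or $= 1$) violates the strict-decrease property — but since such clauses produce the empty premise set, they are harmless. If one worried about a clause that keeps the same principal formula while only changing the label, one would refine $m$ lexicographically with a secondary component counting, say, the structure of the label; I do not expect this to be necessary.
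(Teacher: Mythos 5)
Your proposal is correct and follows essentially the same route as the paper: the paper also ignores the Boolean labels and the $\Pto/\Pfrom$ polarity, measures a labelled sequent by the connective count $\cn$ of its logical formula, and lifts this to finite sets via $\ms(\ssetO)=\sum_i 3^{\cn(\seqO_i)}$, which is just a concrete numeric realization of your multiset-order argument. The only clauses you slightly mischaracterize are the two producing $\{\vdash\bot\}$ rather than $\emptyset$, but these are equally harmless since that sequent has strictly smaller measure.
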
 
\noindent
Lemma~\ref{sNorm} is proved in a standard way by
defining a measure $\ms(\ssetO)$ over finite sets of sequents and showing
that whenever $\ssetO\Dec \ssetT$, $\ms(\ssetT)<\ms(\ssetO)$.
For a sequent $\seqO$ = $\vdash \lone$, 
let $\ms(\seqO)$ be the size of $\lone$, which is $\cn(\lone)$.
Then, $\ms(\ssetO)$ is defined as $\sum_{i}^{k}3^{\ms(\seqO_{i})}$,
where $\ssetO=\{\seqO_{1},\dots, \seqO_{k}\}$.\footnote{For further details on the proof, see Appendix~\ref{appendix3.2}. Specifically, the precise notions of 
number of connectives and of set measure can be found as 
Definition~\ref{def:cn} and Definition~\ref{def:ms},
while the proof of $\Dec$ strong normalization is presented as
Lemma~\ref{sNorm} in~\ref{appendix3.2}.}
For a finite set of sequents $\ssetO=\{\seqO_{1},\dots, \seqO_{k}\}$, let
$\vDash \ssetO$ 
(resp., $\vdash_{\PPLc}\ssetO$) indicate that 
$\vDash \seqO_{i}$
(reps., $\vdash_{\PPLc}\seqO_{i}$) holds for all $i=1,\dots,k$. 
The fundamental ingredients of the completeness proof 
are expressed by the following three properties:
\begin{lemma}\label{lemma:decomposition2}
Concerning $\PPLc$, the following properties hold:
\begin{itemize}
\itemsep0em
\item[i] if $\ssetO$ is $\Dec$-normal, then $\vDash \ssetO$ if and only if
 $\vdash_{\PPLc}\ssetO$
\item[ii.] if $\vDash \ssetO$, there is a $\ssetT$ such that $\ssetO \Dec \ssetT$ and $\vDash \ssetT$
\item[iii.] if $\vdash_{\PPLc}\ssetT$ and $\ssetO \Dec \ssetT$, then $\vdash_{\PPLc}\ssetO$.
\end{itemize}
\end{lemma}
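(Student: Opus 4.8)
The plan is to prove the three items largely independently and then combine them with the strong normalization of $\Dec$ (Lemma~\ref{sNorm}) to obtain completeness: starting from a valid sequent, repeated applications of~(ii) drive its singleton along a $\Dec$-reduction to some $\Dec$-normal, still valid set $\ssetO^{\ast}$; then~(i) makes $\ssetO^{\ast}$ derivable; and iterating~(iii) back up the reduction makes the original sequent derivable. Of the three items, (i) and (iii) will be routine case analyses, and I expect (ii) to be the genuinely delicate one. The key device throughout is the Remark above, which attaches to every formula $\fone$ of $\PPLc$ a Boolean formula $\bone_{\fone}$ with $\model{\fone}=\model{\bone_{\fone}}$.

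For~(iii) I would unfold $\Dec$ as $\ssetO=\ssetO'\cup\{\vdash\lone\}$ with $\vdash\lone\dec\ssetT'$ and $\ssetT=\ssetO'\cup\ssetT'$; since $\vdash_{\PPLc}\ssetT$ already gives $\vdash_{\PPLc}\seqO$ for every $\seqO\in\ssetO'$ and $\vdash_{\PPLc}\ssetT'$, it remains only to derive $\vdash\lone$ from $\ssetT'$. I would do this by cases on the $\dec$-clause applied, checking that each clause is exactly a rule of Fig.~\ref{fig:pplcrules} read from premises to conclusion, or a short composition of such rules: the $\lor$-clause on the right corresponds to $\mathsf{R1}_{\vee}^{\Era}$ and $\mathsf{R2}_{\vee}^{\Era}$ followed by $\Rcup$; the $\land$-clause on the right to $\Rwra$; the $\BOX^{q}$ and $\DIA^{q}$ clauses to $\Rbr$ and $\Rdr$; the clauses with a side condition $\mu(\model{\bone})=0$ or $\mu(\model{\bone})=1$ to $\Rmur$ and $\Rmul$; and dually on the $\Pfrom$-side. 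In every case the side conditions the rules demand --- on Boolean inclusions and on measures --- are exactly those carried by the clause, so $\vdash_{\PPLc}\lone$, hence $\vdash_{\PPLc}\ssetO$, follows.

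For~(i), the direction $\vdash_{\PPLc}\ssetO\Rightarrow\vDash\ssetO$ is Proposition~\ref{soundnessAP}. For the converse I would first note that $\ssetO$ is $\Dec$-normal exactly when each of its sequents is $\dec$-normal, and that both $\vDash\ssetO$ and $\vdash_{\PPLc}\ssetO$ are ``for all sequents in $\ssetO$'' statements, so it suffices to prove that a valid $\dec$-normal sequent is derivable. Scanning the clauses, a sequent fails to be $\dec$-normal as soon as it carries a compound $\PPLc$-formula or a $\BOX^{q}$/$\DIA^{q}$ on the relevant side, or one of the arithmetical side conditions $\mu(\model{\bone})\in\{0,1\}$ is met; what is left are the axiom shapes $\vdash\bone\Pto\atom{n}$, $\vdash\bone\Pfrom\atom{n}$ (with the degenerate conditions falsified) and a handful of counting shapes in which no clause can fire, and one checks that the latter are never valid (for instance a $\dec$-normal $\vdash\bone\Pfrom\BOX^{0}\fone$ necessarily has $\mu(\model{\bone})<1$, and then it is invalid). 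For the surviving valid sequents, $\vDash\bone\Pto\atom{n}$ means $\bone\vDash\bvar_{n}$, so $\AxO$ applies, and symmetrically $\AxT$ handles $\vDash\bone\Pfrom\atom{n}$; the empty set is trivial. Hence $\vdash_{\PPLc}\ssetO$.

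The crux is~(ii). Given $\vDash\ssetO$ with $\ssetO$ not $\Dec$-normal (if it is, no reduct exists, and completeness already follows from~(i)), I would pick a non-$\dec$-normal $\vdash\lone\in\ssetO$, write $\ssetO=\ssetO'\cup\{\vdash\lone\}$, and produce a $\dec$-reduct $\ssetT'$ of $\vdash\lone$ that is still valid, so that $\ssetT:=\ssetO'\cup\ssetT'$ works. Since $\dec$ is non-deterministic in the Boolean labels it introduces, the whole difficulty is to choose those labels so as to transport validity. For $\lone=\bone\Pto\fone\lor\ftwo$, validity gives $\model{\bone}\subseteq\model{\fone}\cup\model{\ftwo}$, and $\btwo=\bone\land\bone_{\fone}$, $\bthree=\bone\land\bone_{\ftwo}$ satisfy $\bone\vDash\btwo\lor\bthree$ while $\model{\btwo}\subseteq\model{\fone}$, $\model{\bthree}\subseteq\model{\ftwo}$, so the reduct $\{\vdash\btwo\Pto\fone,\ \vdash\bthree\Pto\ftwo\}$ is valid; for $\lone=\bone\Pto\fone\land\ftwo$ the reduct $\{\vdash\bone\Pto\fone,\ \vdash\bone\Pto\ftwo\}$ is already valid; for $\lone=\bone\Pto\BOX^{q}\fone$ with $\model{\bone}\neq\emptyset$, validity forces $\mu(\model{\fone})\geq q$, so the label $\bone_{\fone}$ gives a valid reduct, while if $\mu(\model{\bone})=0$ one decomposes instead via the clause matching $\Rmur$ to the vacuously valid empty set; the case $\lone=\bone\Pto\DIA^{q}\fone$ is symmetric to $\BOX^{q}$ (here $\model{\bone}\neq\emptyset$ forces $\mu(\model{\fone})<q$). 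The negation cases and the $\Pfrom$-side cases go the same way, the labels being suitable Boolean combinations of $\bone$, $\bone_{\fone}$, $\bone_{\ftwo}$ dictated by the validity hypothesis. I expect the main obstacle to be precisely this bookkeeping: exhibiting, clause by clause, an admissible choice of Boolean labels that preserves validity --- which is exactly where Boolean-definability of $\PPLc$-formulas is indispensable --- and correctly routing the degenerate cases (a label of measure $0$ or $1$) to $\Rmur$/$\Rmul$ rather than to a counting clause.
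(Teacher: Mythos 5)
Your proposal is correct and follows essentially the same route as the paper: item~(iii) by reading each $\dec$-clause as the corresponding rule of Fig.~\ref{fig:pplcrules}, item~(i) by reducing $\Dec$-normal sequents to the axiom shapes $\bone\Pto\atom{n}$ / $\bone\Pfrom\atom{n}$, and item~(ii) by choosing Boolean witnesses via $\bone_{\fone}$ clause by clause (the paper's Lemma~\ref{label}/Remark), with the degenerate measure-$0$/$1$ cases routed to the $\Rmur$/$\Rmul$ clauses exactly as in the paper's Lemma~\ref{ExVal}. Your only deviations are cosmetic --- e.g.\ using $\bone\land\bone_{\fone}$ rather than $\bone_{\fone}$ as the disjunction labels, and your sensible explicit handling of the case where $\ssetO$ is already $\Dec$-normal in~(ii), which the paper phrases instead as ``every valid sequent has a valid $\Dec$-normal form.''
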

\noindent
Using the properties above, one can check that if $\vDash \lone$
holds, then by ii. $\vDash \ssetO$ holds for some $\leadsto$-normal form
$\ssetO$ of $\vdash \lone$; then by i., $\vdash_{\PPLc}\ssetO$ holds and by iii. we
conclude that $\vdash_{\PPLc}\lone$.\footnote{The full proof of completeness is presented in Appendix~\ref{appendix3.2}. Specifically,
i. is established as Lemma~\ref{NormValDer},
 ii. is proved as Lemma~\ref{ExVal},
 and iii. corresponds to Lemma~\ref{DerPres}.}
\begin{restatable}[Completeness]{proposition}{primeCompleteness}\label{completeness}
If $\vDash \lone$ holds, then $\vdash_{\PPLc}\lone$ holds.
\end{restatable}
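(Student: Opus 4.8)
The plan is to follow exactly the route sketched right before the statement: reduce completeness to the three properties in Lemma~\ref{lemma:decomposition2} together with the strong normalization of $\Dec$ (Lemma~\ref{sNorm}). Concretely, suppose $\vDash \lone$. Then the singleton set of sequents $\ssetO_0 = \{\vdash \lone\}$ satisfies $\vDash \ssetO_0$. By repeatedly applying property~ii of Lemma~\ref{lemma:decomposition2}, one builds a $\Dec$-reduction sequence $\ssetO_0 \Dec \ssetO_1 \Dec \ssetO_2 \Dec \cdots$ such that $\vDash \ssetO_j$ holds at every stage; by Lemma~\ref{sNorm} this sequence must terminate, so it reaches a $\Dec$-normal $\ssetO$ with $\ssetO_0 \Dec^{*} \ssetO$ and $\vDash \ssetO$. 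By property~i, $\vdash_{\PPLc} \ssetO$. Then, reasoning backwards along the reduction $\ssetO_0 \Dec^{*} \ssetO$ and applying property~iii at each step, we get $\vdash_{\PPLc} \ssetO_0$, i.e.\ $\vdash_{\PPLc} \lone$, which is the claim.

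The first subtlety is that property~ii only guarantees \emph{one} reduction step preserving validity, not that \emph{every} reduction step does; so the argument must be phrased as: if $\ssetO_j$ is not $\Dec$-normal, pick (by ii) a validity-preserving successor, and iterate. Since $\Dec$ is strongly normalizing, every such chain is finite, hence ends in a normal form. One should note that we do not need confluence: any normal form reached this way will do, because property~i is stated for \emph{all} $\Dec$-normal sets. The second subtlety is the backward pass: property~iii says derivability is preserved by $\Dec$ \emph{from conclusion to premisses reversed} — i.e.\ if the reduct is derivable, so is the redex — so a straightforward induction on the length of the reduction sequence $\ssetO_0 \Dec^{*} \ssetO$ propagates $\vdash_{\PPLc}$ back from $\ssetO$ to $\ssetO_0$.

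I expect the genuinely hard work to be hidden inside Lemma~\ref{lemma:decomposition2}, which the excerpt defers to the appendix and which we are entitled to assume here; relative to that lemma, the proof of Proposition~\ref{completeness} itself is just the short orchestration above. Within that lemma, the main obstacle is property~i for the counting rules: showing that on a $\Dec$-normal set, semantic validity of a sequent involving $\BOX^{q}$ or $\DIA^{q}$ already forces the existence of a derivation. This is where the semantic side conditions $\mu(\model\btwo) \ge q$ (resp.\ $< q$) in the rules $\Rbr, \Rbl, \Rdr, \Rdl$ must be met by an explicit witness Boolean formula $\btwo$ — one can use the canonical $\bone_{\fone}$ from the Remark (namely $\top$ or $\bot$ according to whether $\mu(\model{\fone}) \ge q$), together with Lemma~\ref{lemma:counti} to turn measures into counts — and where the interplay between $\Pto$ and $\Pfrom$ labelled formulas and the external hypotheses is most delicate. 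The normalization measure $\ms$ and the clause analysis for $\dec$ handle the bookkeeping, but the conceptual core is that $\Dec$-normality leaves a sequent in a shape where the axioms $\AxO, \AxT$ (or the degenerate $\Rmur, \Rmul$ rules) directly apply.
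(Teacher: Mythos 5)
Your proposal is correct and follows essentially the same route as the paper: reduce $\vdash\lone$ via validity-preserving $\Dec$-steps (property ii plus strong normalization) to a valid $\Dec$-normal form, apply property i to get derivability there, and propagate derivability back along the reduction by property iii. Your remarks on the subtleties (only existential preservation of validity per step, no need for confluence, backward induction on the reduction length) match exactly how the paper handles them in its appendix lemmas.
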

\begin{remark}\label{rem:cutrule}
A remarkable consequence of the completeness theorem is that the following cut-rule comes immediately out to be derivable in $\PPLc$:
$$
\AXC{$\vdash \btwo\Pto \lnot A\vee B$}
\AXC{$\vdash \bthree\Pto A$}
\AXC{$\bone\vDash \btwo\land \bthree$}
\RL{$\mathsf{Cut}$}
\TIC{$\vdash \bone \Pto B$}
\DP
$$
\end{remark}

\subsection{$\PPLc$-validity is in $\Psharp$}\label{Subsection3C}

As the right-hand premisses of the counting rules 
$\Rbr/\Rbl$ and $\Rdr/ \Rdl$ suggest, 
the inference of formulas such as $\BOX^{q}\fone$ or
$\DIA^{q}\fone$ can be seen as obtained by invoking
an \emph{oracle}, providing a suitable measurement. 
Actually, the needed forms of measurement are expressed 
as  $\mu(\model \bone)$, 
where $\bone$ is a Boolean formula.
As shown by Lemma~\ref{lemma:counti}, 
these measurements correspond to actually \emph{counting} 
the number of valuations satisfying such formulas.
We can make this intuition precise by showing that validity
in $\PPLc$ can be decided by a polytime algorithm having access to an oracle for the problem $\SSAT$ of counting the models of a Boolean formula.

Let us define a formula of $\PPLc$, call it $\fone$, 
\emph{closed} if it is either of the form
$\BOX^{q}\ftwo$ or $\DIA^{q}\ftwo$ or is a negation, 
conjunction or disjunction of closed formulas. 
It can be easily checked by induction on the structure
of formulas that for all closed formula, its interpretation
is either $2^{\omega}$ or $\emptyset$, see Lemma~\ref{app3.3} in Appendix~\ref{appendix3.3}.
We define by mutual recursion two polytime algorithms 
$\BOOL$ and $\VAL$ such that, for each formula of $\PPLc$, 
call it $\fone$,  $\BOOL(\fone)$ is the Boolean formula 
$\bone_{\fone}$, and, for all closed formula $\fone$, 
$\VAL(\fone)=1$ if and only if $\model \fone=2^{\omega}$ 
and $\VAL(\fone)=0$ if and only if $\model \fone=\emptyset$.
The algorithm $\VAL(\fone)$ will make use of a $\SSAT$ oracle.
The two algorithms are defined as follows:

\medskip
\noindent
\adjustbox{scale=0.95}{
\begin{minipage}{\linewidth}
\begin{align*}
\BOOL(n) & = x_{n} \\
\BOOL(\fone_{1}\land \fone_{2}) & = \BOOL(\fone_{1}) \land \BOOL(\fone_{2}) \\
\BOOL(\fone_{1}\lor \fone_{2}) & = \BOOL(\fone_{1}) \lor \BOOL(\fone_{2}) \\
\BOOL(\lnot \fone_{1}) & = \lnot \BOOL(\fone_{1})  \\
\BOOL(\BOX^{q}\fone_{1}) & = \VAL(\BOX^{q}\fone_{1})\\
\BOOL(\DIA^{q}\fone_{1}) & = \VAL(\DIA^{q}\fone_{1})
\end{align*}
\begin{align*}
\VAL(A_{1}\land \fone_{2}) & = \VAL(\fone_{1}) \texttt{ AND } \VAL(\fone_{2}) \\
\VAL(\fone_{1}\lor \fone_{2}) & = \VAL(\fone_{1}) \texttt{ OR } \VAL(\fone_{2}) \\
\VAL(\lnot \fone_{1}) & = \texttt{NOT }  \VAL(\fone_{1})  \\
\begin{matrix}
\VAL(\BOX^{q}\fone_{1}) \\ \ \\ \ 
\end{matrix}  & \  
\begin{matrix}
=& \texttt{let } \bone =\BOOL(\fone_{1})\ \texttt{in}\\
& \texttt{let } n= \sharp \texttt{Var}(\bone) \ \texttt{in } \\
&  \frac{\SSAT(\bone)}{2^{n}} \geq q 
 \end{matrix}\\
 \begin{matrix}
\VAL(\DIA^{q}\fone_{1}) \\ \ \\ \ 
\end{matrix}  & \  
\begin{matrix}
=& \texttt{let } \bone =\BOOL(\fone_{1})\ \texttt{in}\\
& \texttt{let } n= \sharp \texttt{Var}(\bone) \ \texttt{in } \\
&  \frac{\SSAT(\bone)}{2^{n}} < q 
 \end{matrix}\end{align*}
 \end{minipage}
 }
 \medskip

\noindent
where $\sharp\texttt{Var}(\bone)$ is the number of propositional variables in $\bone$.
We then immediately deduce:
\begin{proposition}
$\PPLc$-validity is in $\Psharp$.
\end{proposition}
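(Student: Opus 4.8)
The plan is to verify that the mutually-recursive algorithms $\BOOL$ and $\VAL$ defined above run in polynomial time when $\VAL$ is given access to a $\SSAT$ oracle, and that together they correctly decide $\PPLc$-validity. First I would observe that deciding whether $\vDash \lone$ for a labelled sequent $\vdash \lone$ reduces to computing interpretations of closed formulas: unfolding Definition~\ref{valB}, checking $\model\bone\subseteq\model\fone$ or $\model\fone\subseteq\model\bone$ amounts to a Boolean containment between $\bone_{\fone}$ (obtained via $\BOOL$) and $\bone$, which can be checked by a further $\SSAT$ oracle call (e.g.\ $\model\bone\subseteq\model\ftwo$ iff $\SSAT(\bone\land\lnot\ftwo)=0$). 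So the whole decision problem is polynomial-time Turing-reducible to the correctness and efficiency of $\BOOL$ and $\VAL$.

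The key steps, in order, are: (1) \emph{Correctness.} Prove by mutual induction on the structure of a formula $\fone$ of $\PPLc$ that $\model\fone=\model{\BOOL(\fone)}$, and, for $\fone$ closed, that $\VAL(\fone)=1\iff\model\fone=2^{\omega}$ and $\VAL(\fone)=0\iff\model\fone=\emptyset$. The propositional cases are immediate from Definition~\ref{semantics}; the cases $\fone=\BOX^{q}\ftwo$ and $\fone=\DIA^{q}\ftwo$ use Lemma~\ref{lemma:counti}, which gives $\mu(\model\ftwo)=\mu(\model{\bone_{\ftwo}})=\SSAT(\bone_{\ftwo})\cdot 2^{-n}$, so that the oracle comparison $\SSAT(\bone)/2^{n}\geq q$ (resp.\ $<q$) correctly computes whether $\model{\BOX^{q}\ftwo}=2^{\omega}$ (resp.\ $\model{\DIA^{q}\ftwo}=2^{\omega}$); one also invokes Lemma~\ref{app3.3} to know that a closed formula's interpretation is indeed one of the two extreme sets. (2) \emph{Complexity.} Argue that each recursive call of $\BOOL$ or $\VAL$ performs only polynomial work apart from the recursion and a single oracle query: $\BOOL$ essentially copies the formula with atoms renamed and delegates quantifier cases to $\VAL$; $\VAL$ on a quantifier node computes $\bone=\BOOL(\ftwo)$, counts its variables, and makes one $\SSAT$ query. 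Since the recursion tree is bounded by the (polynomial) size of the input formula, and the intermediate Boolean formulas produced by $\BOOL$ have size linear in the corresponding subformula, the total running time is polynomial and the number of oracle calls is polynomial; hence $\BOOL$ and $\VAL$ witness a $\PTIME^{\SSAT}=\Psharp$ procedure. (3) Conclude that $\PPLc$-validity, being reducible as in the first paragraph, lies in $\Psharp$.

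The main obstacle I anticipate is the complexity bookkeeping in step (2): because $\BOOL$ and $\VAL$ call each other, one must be careful that the size of the Boolean formula fed to the $\SSAT$ oracle does not blow up across nested counting quantifiers. The point to nail down is that $\BOOL(\fone)$ replaces each maximal \emph{closed} subformula (which is what sits under a counting quantifier after one level of unfolding) by the single constant $\VAL(\cdot)\in\{0,1\}=\{\bot,\top\}$ rather than by an expanded formula, so $|\BOOL(\fone)|=O(|\fone|)$ uniformly; with that invariant in hand the polynomial bound on total work and on oracle calls follows by a straightforward induction on formula size, and no genuine difficulty remains.
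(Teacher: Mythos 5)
Your proposal follows exactly the route the paper takes: the paper's "proof" consists precisely of the mutually recursive algorithms $\BOOL$ and $\VAL$ with a $\SSAT$ oracle, from which the proposition is declared to follow immediately, and your steps (correctness by mutual induction via Lemma~\ref{lemma:counti} and Lemma~\ref{app3.3}, plus the observation that closed subformulas collapse to $\top$/$\bot$ so no size blow-up occurs) are the details the paper leaves implicit. Your argument is correct; the only cosmetic slip is the stray $\ftwo$ in the containment test $\SSAT(\bone\land\lnot\ftwo)=0$, which should refer to $\BOOL(\fone)$.
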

\noindent
Can we say anything about the \emph{completeness} of $\PPLc$-validity?
Under Turing reductions, the latter is undoubtedly a complete problem
for $\Psharp$, since an oracle for $\PPLc$-validity can, of course,
solve any instance of $\SSAT$. Unfortunately, the same cannot be
said about many-one reductions.

\section{The Multivariate Case}\label{section4}

After having introduced the basic counting logic $\PPLc$ 
and discussed its connections with the class $\Psharp$, 
let us now take into account a more expressive logic $\PPL$, 
that will be shown in Section~\ref{section5} to yield
a characterization of the full counting hierarchy.

\subsection{A Logic for Relative Counting Problems}

As discussed in Section~\ref{section2}, the counting problems 
coming from $\PL$ are not restricted to those in the class $\Psharp$. 
Indeed, behind counting the models of a formula, one can consider
multivariate counting problems, such as $\MMSat$, which
involves the relation between valuations of \emph{different} groups of
variables.
In order to address this kind of problems in our logic, we need counting
quantifiers to relate to groups of variables. 
To
this end we introduce a countable alphabet of \emph{names},
$a,b,c,\dots$, so that any logical atom, such as $\atom{i}$, 
becomes named, $\at{a}$, and, equally, 
counting quantifiers turn into named ones, 
$\BOX^{q}_{a}\fone$ or $\DIA^{q}_{a}\fone$.
\begin{definition}[Grammar of $\PPL$]
The \emph{formulas} of $\PPL$ are defined by the grammar below:
$$
A,B:= \ \at{a}\midd \lnot \fone\midd \fone\land \ftwo\midd \fone\lor \ftwo\midd \BOX^{q}_{a}\fone\midd \DIA^{q}_{a}\fone
$$
where $\at{}$ is a natural number, $a$ is a name, 
and $q$ is a rational
number included in the interval $[0,1]$.
\end{definition}
\noindent
The intuitive meaning of \emph{named quantifiers} is that 
they count models \emph{relative} to the corresponding 
bounded variables.
The name $a$ is considered bound in the formulas 
$\BOX^{q}_{a}A$ and $\DIA^{q}_{a}\fone$.
Given a formula of $\PPL$, call it $\fone$, we denote 
the set of names which occur \emph{free} in it, $\FN(\fone)$.

The semantics of these named formulas is slightly subtler 
than that of $\PPLc$.
Since we must take names into account, the interpretation of
a formula $\fone$ of $\PPL$ will no longer be defined as a 
measurable set $\model\fone\in \mathcal{\ftwo}(2^{\omega})$. 
Instead, for all finite set of names $X\supseteq \FN(\fone)$, 
we will define a measurable set $\model A_{X}$
belonging to the Borel algebra $\mathcal{B}((2^{\omega})^{X})$. 
To describe the semantics of $\PPL$ we first need to introduce a
fundamental operation:
\begin{definition}[$f$-projection]
Let $X,Y$ be two disjoint finite sets of names and $f\in (2^{\omega})^{X}$. For all $\mathcal X\subseteq (2^{\omega})^{X\cup Y}$, 
the \emph{$f$-projection of $\mathcal X$} is the set:
$$
\PROJ_{f}(\mathcal X)=\{ g\in (2^{\omega})^{Y}\mid f+g\in \mathcal X\}\subseteq (2^{\omega})^{Y}
$$ 
where $(f+g)(\alpha)$ is $f(\alpha)$, if $\alpha\in X$ and $g(\alpha)$ if $\alpha\in Y$.
\end{definition}
\noindent
Suppose $X$ and $Y$ are disjoint sets of names, 
with $\FN(\fone)\subseteq X\cup Y$.  
Then, if we fix a valuation $f\in (2^{\omega})^{X}$ of the
variables of $A$ with names in $X$, the set 
$\PROJ_{f}(\model \fone_{X\cup Y})$ 
describes the set of valuations of the variables of $\fone$ 
with names in $Y$ which extend $f$.
We can now define the set $\model \fone_{X}$ formally:
\begin{definition}[Semantics of $\PPL$]\label{PPLsem}
For all formula $\fone$ of $\PPL$ and finite set of names
$X\supseteq \FN(A)$, the set $\model \fone_{X}$ is inductively defined as
follows:
\begin{align*}
\model{\at{a}}_{X}& = \{ f\mid f(a)(i)=1\} \\
\model{\lnot \fone}_{X}& = (\twoOm)^{X} \ - \ \model \fone \\
\model{\fone\land \ftwo}_{X}&= \model{\fone}_{X}\cap \model{\ftwo}_{X} \\
\model{\fone\lor \ftwo}_{X}&= \model{\fone}_{X}\cup \model{\ftwo}_{X} \\
\model{ \BOX^{q}_{a}\fone}_{X}&= \{ f \mid  \mu ( \PROJ_{f}(\model \fone_{X\cup\{a\}}))\geq q\} \\
\model{ \DIA^{q}_{a}\fone}_{X}&= \{ f \mid  \mu ( \PROJ_{f}(\model \fone_{X\cup\{a\}}))< q\} 
\end{align*}
\end{definition}
\noindent
We will show below that $\model \fone_{X} \in \mathcal
B((2^{\omega})^{X})$, which is that the interpretation of a formula of
$\PPL$ is always a measurable set.
To have a grasp of the semantics of named quantifiers, consider the following example:

\begin{example}
Let $\fone$ be the following formula of $\PPL$:
$$
\fone = (\atom{0}_{a} \wedge (\neg \atom{0}_{b} \wedge 
\atom{1}_{b})) \vee (\neg\atom{0}_{a}
\wedge \atom{0}_{b} \wedge \neg \atom{1}_{b})
\vee ((\neg \atom{0}_{a}\wedge \atom{1}_{a})
\wedge \atom{1}_{b}).
$$
The valuations $f \in (\twoOm)^{\{b\}}$
belonging to $\model{\BOX^{\frac{1}{2}}_{a}\fone}_{\{b\}}$
are those which can be extended to models 
of $\fone$ in at least $\frac{1}{2}$ of the cases.
In such a simple situation, we can list all possible
cases:
\begin{enumerate}
\itemsep0em
\item if $f(b)(0)=f(b)(1)=1$, then $\fone$ has $\frac{1}{4}$
chances of being true, since both $\neg \atom{0}_{a}$ and
$\atom{1}_{a}$ must be true,

\item if $f(b)(0)=1, f(b)(1)=0$, then $\fone$ has $\frac{1}{2}$ chances of being true, since $\neg\atom{0}_{a}$ must be true
\item if $f(b)(0)=0, f(b)(1)=1$, then $\fone$
has $\frac{3}{4}$ chances of being true, since either $\atom{0}_{a}$ or both $\neg \atom{0}_{a}$ and $\atom{1}_{a}$ must be true
\item if $f(b)(0)=f(b)(1)=0$, then $\fone$ has 0 chances of being true.
\end{enumerate}
Thus, $\model{\BOX^{\frac{1}{2}}_{a}\fone}_{\{b\}}$
only contains the valuations which agree
with cases 2. and 3. 
Therefore, $\model{\BOX_{b}^{\frac{1}{2}}\BOX^{\frac{1}{2}}_{a}\fone}_{\emptyset}=\twoOm$,
which is 
$\BOX^{\frac{1}{2}}_{b}\BOX^{\frac{1}{2}}_{a}\fone$
is true, since half of the valuations
of $b$ has at least $\frac{1}{2}$ chances of being
extended to a model of $\fone$.
\par
Let us now consider a slightly different case, being $\fone$ the following formula:
$$
\fone = \big(\atom{0}_{a} \wedge (\neg \atom{0}_{b} \wedge \atom{1}_{b})\big) \vee \big(\neg \atom{0}_{a} \wedge \atom{0}_{b}
\wedge \neg \atom{1}_{b}\big)
\vee 
\big((\neg\atom{0}_{a}\vee \atom{1}_{a}) \wedge \atom{1}_{b}\big).
$$
Cases 2. and 4. are equivalent, but 1. and 3. are not. 
Indeed,
\begin{enumerate}
\itemsep0em
\item if $f(b)(0)=f(b)(1)=1$, then $\fone$ has $\frac{3}{4}$
chances of being true, since either $\neg\atom{0}_{a}$ or
$\atom{1}_{a}$ must be true.
\item if $f(b)(0)=1,f(b)(1)=0$, then $\fone$ has $\frac{1}{2}$ chances of being true, since, as before, $\neg\atom{0}_{a}$ must be true.
\item if $f(b)(0)=0, f(b)(1)=1$, then $\fone$ has 1 chances of being true, since either $\atom{0}_{a}$ or either $\neg \atom{0}_{a}$ or $\atom{1}_{a}$ must be true.
\item if $f(b)(0)=f(b)(1)=0$, then $\fone$ has 0 chances of
being true.
\end{enumerate}
In this case $\model{\BOX^{\frac{1}{2}}_{a}\fone}_{\{b\}}$
contains the valuations which agree with cases 1., 2., and 3. Thus, again, $\model{\BOX^{\frac{1}{2}}_{b}\BOX^{\frac{1}{2}}_{a}\fone}_{\emptyset}=\twoOm$, but
also $\model{\BOX^{\frac{3}{4}}_{b}\BOX^{\frac{1}{2}}_{a}\fone}_{\emptyset}=\twoOm$.
\end{example}

\subsection{Characterizing the Semantics of $\PPL$ using Boolean Formulas}

The definition of the sets $\model{\BOX^{q}_{a}\fone}_{X}$ and
$\model{\DIA^{q}_{a}A}_{X}$ is not very intuitive at first glance. 
We now provide an alternative characterization of these sets 
by means of \emph{named} Boolean formulas, 
and show that they are measurable.
\begin{definition}[Named Boolean Formulas]
\emph{Named Boolean formulas} are defined by the following grammar:
$$
\bone,\btwo:= \bvar_{i}^{a}\midd \top\midd \bot\midd \lnot \bone\midd \bone\land \btwo\midd \bone\vee \btwo
$$
The interpretation $\model \bone_{X}$ of a Boolean formula $\bone$
 with $\FN(\bone)\subseteq X$ is defined in a straightforward way,
 following Definition~\ref{PPLsem}.\footnote{ 
 For further details, see Definition~\ref{booleanFormulas} in Appendix~\ref{appendix4.1}.} 
\end{definition}
\begin{definition}[$a$-decomposition]
Let $\bone$ be a named Boolean formula with free names in
$X\cup \{a\}$. 
An \emph{$a$-decomposition of $\bone$} is a Boolean
formula $\btwo= \bigvee_{i=0}^{k-1}\bthree_{i}\land \mathscr e_{i}$
such that:
\begin{itemize}
\itemsep0em
\item $\model{\btwo}_{X\cup\{a\}}=\model{\bone}_{X\cup\{a\}}$
\item $\FN(\bthree_{i})\subseteq  \{a\}$ and $\FN(\bfour_{i})\subseteq X$
\item if $i\neq j$, then $\model{\bfour_{i}}_{X}\cap\model{\bfour_{j}}_{X}=\emptyset$.
\end{itemize}
\end{definition}
\noindent
The following lemma can be proved by induction.
\begin{lemma}
Any named Boolean formula $\bone$ with $\FN(\bone)\subseteq
X\cup \{a\}$ admits an $a$-decomposition in $X$.
\end{lemma}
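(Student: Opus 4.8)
The plan is to proceed by structural induction on $\bone$, building in each case an $a$-decomposition out of those supplied by the induction hypothesis. The crucial observation, which I would isolate first, is that whenever $\bigvee_{i=0}^{k-1}\bthree_i\land\bfour_i$ is an $a$-decomposition of a formula $\btwo$, one may \emph{complete} it: setting $\bfour_k:=\lnot\bigvee_{i<k}\bfour_i$ and $\bthree_k:=\bot$ yields an $a$-decomposition $\bigvee_{i=0}^{k}\bthree_i\land\bfour_i$ of the same formula $\btwo$ whose $X$-parts $\bfour_0,\dots,\bfour_k$ now \emph{partition} $(\twoOm)^X$. The point of this reformulation is that, by the support conditions on $\bthree_i,\bfour_i$ together with pairwise disjointness, for every $f\in(\twoOm)^X$ there is a unique cell $\bfour_i$ with $f\in\model{\bfour_i}_X$, and over that cell $\btwo$ reduces to the pure-$a$ formula $\bthree_i$, in the sense that $\PROJ_f(\model{\btwo}_{X\cup\{a\}})=\model{\bthree_i}_{\{a\}}$. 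Every inductive case will then amount to reading off, cell by cell, the value of the formula as a pure-$a$ formula.

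For the base cases I would take: if $\bone=\bvar_i^a$, $k=1$, $\bthree_0=\bvar_i^a$, $\bfour_0=\top$; if $\bone=\bvar_i^b$ with $b\neq a$ (so $b\in X$), $k=1$, $\bthree_0=\top$, $\bfour_0=\bvar_i^b$; if $\bone=\top$, $k=1$, $\bthree_0=\bfour_0=\top$; if $\bone=\bot$, $k=1$, $\bthree_0=\bot$, $\bfour_0=\top$. In each case the disjointness clause is vacuous and the semantic identity is immediate.

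For negation, $\bone=\lnot\btwo$: apply the induction hypothesis to $\btwo$, complete the resulting decomposition to $\bigvee_{i\le k}\bthree_i\land\bfour_i$ with the $\bfour_i$ partitioning $(\twoOm)^X$, and negate each $a$-part, obtaining $\bigvee_{i\le k}\lnot\bthree_i\land\bfour_i$. The $X$-parts are untouched, hence still pairwise disjoint, and since $\PROJ_f$ commutes with complementation, over each cell $\bfour_i$ one has $\PROJ_f(\model{\lnot\btwo})=\model{\lnot\bthree_i}$, which gives the required identity pointwise. For the binary connectives, $\bone=\bone_1\star\bone_2$ with $\star\in\{\land,\lor\}$: apply the induction hypothesis to $\bone_1$ and $\bone_2$, complete both decompositions to partitions $\{\bfour_i\}_{i\le k}$ and $\{\bfour'_j\}_{j\le l}$, and pass to the common refinement, whose cells are the $\bfour_i\land\bfour'_j$. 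Since $\PROJ_f$ commutes with finite unions and intersections, over the cell $\bfour_i\land\bfour'_j$ the formula $\bone_1\star\bone_2$ reduces to $\bthree_i\star\bthree'_j$, so
\[
\bigvee_{i\le k}\bigvee_{j\le l}\big((\bthree_i\star\bthree'_j)\land(\bfour_i\land\bfour'_j)\big)
\]
is the desired $a$-decomposition, after dropping the cells on which $\bthree_i\star\bthree'_j$ is $\bot$ (in particular all cells meeting $\bfour_k$ or $\bfour'_l$ when $\star=\land$). Its $X$-parts are distinct cells of a partition, hence pairwise disjoint, and the semantic identity is again checked pointwise.

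I expect the main obstacle to be the negation case: a bare $a$-decomposition need not have $X$-parts covering $(\twoOm)^X$, so negating "cell by cell" is not directly available, and the completion device is precisely what repairs this. Once that is in place, all remaining cases are routine pointwise verifications of the semantic identity, using only the disjointness clause and the fact that $f$-projection commutes with complementation, union, and intersection.
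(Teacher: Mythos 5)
Your proof is correct and follows essentially the same strategy as the paper's: structural induction, with the binary-connective cases handled by passing to the common refinement of the two partitions and combining the pure-$a$ parts via the corresponding connective. The one real difference is how the covering invariant is maintained. The paper strengthens the induction hypothesis outright, proving that every formula admits an $a$-decomposition whose $X$-parts satisfy $\model{\bigvee_i\bfour_i}_X=\model{\top}_X$, whereas you keep the statement as given and restore covering on demand with the completion cell $\bfour_k=\lnot\bigvee_{i<k}\bfour_i$, $\bthree_k=\bot$; these are interchangeable devices. Where your argument actually goes further is negation: the paper only treats negated atoms $\lnot\bvar_i^a$ as a base case and never gives an inductive case for $\lnot$ applied to a compound formula (nor for $\top$, $\bot$, or $\lnot\bvar_i^b$), while your completion-then-negate step $\bigvee_i\lnot\bthree_i\land\bfour_i$ handles general negation correctly, using exactly the fact that over each cell of a partition the $f$-projection of $\model{\lnot\btwo}$ is the complement of $\model{\bthree_i}_{\{a\}}$. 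So your write-up is, if anything, the more complete of the two.
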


It is worth observing that while an $a$-decomposition of $\bone$
always exists, it needs not be feasible to find it, since this formula
can be of exponential length with respect to $\bone$.
Yet, $a$-decompositions can be used to show that the
interpretation of a quantified formula is a finite union of 
measurable sets:
\begin{restatable}[Fundamental Lemma]{lemma}{primeFundamental}\label{lemma:fundamental}
Let $\bone$ be a named Boolean formula with $\FN(\bone)\subseteq
X\cup \{a\}$ and let
$\btwo= \bigvee_{i=0}^{k-1}\bthree_{i}\land \bfour_{i}$ be an
$a$-decomposition of $b$. Then, for all $q\in [0,1]$,
\begin{align*}
\{f \in (2^{\omega})^{X}\mid \mu(\PROJ_{f}(\model \bone_{X\cup\{a\}}))\geq q\} &=
\bigcup \{ \model{\bfour_{i}}_{X}\mid \mu(\model{\bthree_{i}}_{\{a\}})\geq q\} \\
\{f \in (2^{\omega})^{X}\mid \mu(\PROJ_{f}(\model \bone_{X\cup\{a\}}))< q\} &=
\bigcup \{ \model{\bfour_{i}}_{X}\mid \mu(\model{\bthree_{i}}_{\{a\}})< q\}. 
\end{align*}
\end{restatable}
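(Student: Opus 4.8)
The plan is to argue pointwise: I fix an arbitrary $f\in(2^{\omega})^{X}$, compute the set $\PROJ_{f}(\model{\bone}_{X\cup\{a\}})$ explicitly from the given $a$-decomposition, and then read off membership of $f$ in the two sides of each identity. Before starting, I would observe that one may assume the decomposition $\btwo=\bigvee_{i=0}^{k-1}\bthree_{i}\wedge\bfour_{i}$ to be \emph{exhaustive}, i.e.\ $\bigcup_{i}\model{\bfour_{i}}_{X}=(2^{\omega})^{X}$: if it is not, append the summand $\bot\wedge\neg\bigvee_{i<k}\bfour_{i}$, which leaves $\model{\btwo}_{X\cup\{a\}}$ unchanged (the new disjunct is interpreted by $\emptyset$), respects the $\FN$-constraints, and is pairwise disjoint with the previous $\bfour_{i}$. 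This avoids the degenerate situation in which $f$ lies outside every $\model{\bfour_{i}}_{X}$, which would otherwise break the claimed equalities.

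The core computation is as follows. Since $\PROJ_{f}(\mathcal X)=\{g\mid f+g\in\mathcal X\}$ is a preimage, it commutes with complement and with finite unions and intersections; hence, using $\model{\bone}_{X\cup\{a\}}=\model{\btwo}_{X\cup\{a\}}$,
\[
\PROJ_{f}\big(\model{\bone}_{X\cup\{a\}}\big)=\bigcup_{i=0}^{k-1}\Big(\PROJ_{f}\big(\model{\bthree_{i}}_{X\cup\{a\}}\big)\cap\PROJ_{f}\big(\model{\bfour_{i}}_{X\cup\{a\}}\big)\Big).
\]
Next I would invoke the weakening property of the interpretation built into Definition~\ref{PPLsem}: for a named Boolean formula with free names in $Z\subseteq Z'$, its interpretation over $Z'$ is the set of extensions of its interpretation over $Z$. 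Applied to $\bthree_{i}$ (free names $\subseteq\{a\}$) this yields $\PROJ_{f}(\model{\bthree_{i}}_{X\cup\{a\}})=\model{\bthree_{i}}_{\{a\}}$, and applied to $\bfour_{i}$ (free names $\subseteq X$) it yields $\PROJ_{f}(\model{\bfour_{i}}_{X\cup\{a\}})=(2^{\omega})^{\{a\}}$ if $f\in\model{\bfour_{i}}_{X}$ and $\emptyset$ otherwise. Intersecting, the $i$-th summand equals $\model{\bthree_{i}}_{\{a\}}$ when $f\in\model{\bfour_{i}}_{X}$ and $\emptyset$ otherwise.

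By exhaustiveness and the disjointness clause of an $a$-decomposition there is a unique index $i_{0}=i_{0}(f)$ with $f\in\model{\bfour_{i_{0}}}_{X}$, so the union collapses to $\model{\bthree_{i_{0}}}_{\{a\}}$ and $\mu(\PROJ_{f}(\model{\bone}_{X\cup\{a\}}))=\mu(\model{\bthree_{i_{0}}}_{\{a\}})$. Hence $f$ belongs to the left-hand side of the first identity iff $\mu(\model{\bthree_{i_{0}}}_{\{a\}})\geq q$; and since, by disjointness, $f\in\model{\bfour_{j}}_{X}$ forces $j=i_{0}$, this is equivalent to $f\in\bigcup\{\model{\bfour_{i}}_{X}\mid\mu(\model{\bthree_{i}}_{\{a\}})\geq q\}$. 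The second identity, with $<$ in place of $\geq$, is obtained by the same argument verbatim; and since $f$ was arbitrary both set equalities follow.

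I expect the only delicate point to be the bookkeeping around enlarging the name set — making precise that $\model{\bthree_{i}}$ and $\model{\bfour_{i}}$ computed over $X\cup\{a\}$ depend only on the $\{a\}$-component, resp.\ the $X$-component, which is exactly the weakening property used above — together with the routine remark that $\PROJ_{f}$ sends Borel sets to Borel sets (being a preimage under the continuous map $g\mapsto f+g$), so that writing $\mu(\PROJ_{f}(\cdots))$ is legitimate. Everything else is distributing a preimage over Boolean operations plus the one disjointness step.
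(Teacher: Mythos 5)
Your argument is correct and, at its core, travels the same road as the paper's: everything reduces to identifying $\PROJ_{f}(\model{\bone}_{X\cup\{a\}})$ with $\model{\bthree_{i_{0}}}_{\{a\}}$ for the unique $i_{0}$ with $f\in\model{\bfour_{i_{0}}}_{X}$, after which both identities are read off pointwise. You reach that identification by distributing the preimage $\PROJ_{f}$ over the Boolean structure of the decomposition and invoking the weakening property of the semantics; the paper reaches it by a double inclusion (from $f\in\model{\bfour_{i}}_{X}$ one gets $\model{\bthree_{i}}_{\{a\}}\subseteq\PROJ_{f}(\model{\bone}_{X\cup\{a\}})$, and pairwise disjointness of the $\model{\bfour_{j}}_{X}$ gives the converse inclusion), after first splitting off the case $q=0$. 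Your version handles all $q$ uniformly and is, if anything, more explicit about where each hypothesis of the definition of $a$-decomposition is used.

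The one point that needs an extra sentence is your padding step. You are right that exhaustiveness of the family $\{\model{\bfour_{i}}_{X}\}_{i}$ is genuinely needed: without it, any $f$ lying outside every $\model{\bfour_{i}}_{X}$ has $\PROJ_{f}(\model{\bone}_{X\cup\{a\}})=\emptyset$, so it belongs to the left side of the second identity for every $q>0$ but to neither right-hand union, and the first identity likewise fails at $q=0$. (The paper deals with this only implicitly: its constructed decompositions satisfy $\bigvee_{i}\bfour_{i}\equiv\top$ by the strengthened statement of Lemma~\ref{BoolDec}, and its remark that both sides equal $(2^{\omega})^{X}$ when $q=0$ silently uses that fact.) However, appending the summand $\bot\wedge\neg\bigvee_{i<k}\bfour_{i}$ is not a harmless normalization with respect to the identities \emph{as written}: the new pair enters the second union for every $q>0$ (since $\mu(\model{\bot}_{\{a\}})=0<q$) and the first union at $q=0$, so strictly speaking you prove the lemma for the padded decomposition rather than the given one. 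This is the right reading, because the unpadded statement is simply false for non-exhaustive decompositions, but you should say explicitly either that you are proving the (correct) exhaustive version, or that for $q>0$ the padded and unpadded right-hand sides of the \emph{first} identity coincide while the second identity genuinely requires exhaustiveness of the given decomposition.
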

\begin{proof}
We only prove the first equality, the second one being proved in a
similar way.
First note that if $r=0$, then both sets are equal to 
$(2^{\omega})^{X}$, so we can suppose $r>0$.
\begin{description}
\item[($\subseteq$)]
Suppose $\mu(\Pi_{f}(\llbracket \bone\rrbracket_{X\cup \{a\}}))\geq r$. Then $\Pi_{f}(\llbracket \bone\rrbracket_{X\cup \{a\}})$ is non-empty and from $\bone\equiv\bigvee_{i}^{k}\bthree_{i}\land \bfour_{i}$
we deduce that there exists $i\leq k$ such that 
$f\in \llbracket \bfour_{i}\rrbracket_{X}$  and 
for all $g\in \llbracket \bthree_{i}\rrbracket_{\{a\}}$, $f+g\in 
\llbracket \bthree_{i}\land \bfour_{i}\rrbracket_{X\cup \{a\}}$.
This implies then that $\llbracket \bthree_{i}\rrbracket_{\{a\}}\subseteq \Pi_{f}(\llbracket \bone\rrbracket_{X\cup \{a\}})$.
Moreover, since the sets $\model{\bfour_{i}}_{X}$ are pairwise disjoint, for all $j\neq i$, $f\notin \llbracket \bfour_{j}\rrbracket_{X}$, which implies that 
$\Pi_{f}(\llbracket \bone\rrbracket_{X\cup \{a\}})\subseteq \llbracket \bthree_{i}\rrbracket_{\{a\}}$.
Hence
$\Pi_{f}(\llbracket \bone\rrbracket_{X\cup \{a\}})=  \llbracket \bthree_{i}\rrbracket_{\{a\}}$, which implies
$\mu( \llbracket \bthree_{i}\rrbracket_{\{a\}})\geq r$.

\item[($\supseteq$)]
If $f\in \llbracket \bfour_{i}\rrbracket_{X}$, where 
$\mu(\llbracket \bthree_{i}\rrbracket_{\{a\}})\geq r$, then since $\bthree_{i}\land \bfour_{i}\vDash^{X\cup\{a\}} \bone$,
$\mu\big(\Pi_{f}(\llbracket \bone\rrbracket_{X\cup \{a\}})\big)$ $\geq$ 
$\mu\big(\Pi_{f}(\llbracket \bthree_{i}\land \bfour_{i}\rrbracket_{X\cup \{a\}})\big)
= \mu(\llbracket \bthree_{i}\rrbracket_{\{a\}})
\geq r$.
\end{description}
\end{proof}
\noindent
The Lemma~\ref{lemma:fundamental} above can be used 
to associate with any formula $\fone$ 
(with $\FN(\fone)\subseteq X$) a Boolean formula 
$\bone_{\fone}$, such that 
$\model\fone_{X}=\model{\bone_{\fone}}_{X}$. 
The crucial steps of
$\model{\BOX^{q}_{a}\fone}_{X}$ and 
$\model{\DIA^{q}_{a}\fone}_{X}$ are
handled using the fact that these sets are finite unions of measurable
sets of the form $\model{\bfour_{i}}_{X} $, where
$\bigvee_{i=0}^{k-1}\bthree_{i}\land \bfour_{i}$ is an
$a$-decomposition of $\bone_{A}$.
This shows in particular that
all sets $\model{A}_{X}$ are measurable.

\subsection{The Proof Theory of $\PPL$}

Using the Fundamental Lemma, it is not difficult to devise a
complete proof system for $\PPL$, 
in analogy with the one defined for $\PPLc$. 
A \emph{named sequent} is an expression of the form
$\vdash^{X}\lone$, where $\lone$ is a labelled formula,
$\bone\Pto \fone, \bone\Pfrom \fone$, such that  
$\FN(\bone) \cup \FN(\fone)\subseteq X$. 
Most of the rules of the proof system for $\PPL$ 
are straightforward extensions of those of $\PPLc$. 
We illustrate in
Fig.~\ref{fig:cpl} the initial sequents and the counting rules.
These are the only rules which are substantially different 
with respect to the corresponding $\PPLc$ ones.\footnote{For the complete proof system for $\PPL$, see Definition~\ref{CPLsystem} 
in Appendix~\ref{appendix4.2}.}  
\begin{figure}[ht!]
\framebox{
\parbox[t][6.5cm]{15cm}{
\begin{minipage}{\textwidth}
\small
\begin{center}
Initial Sequents
\end{center}
\normalsize
\begin{minipage}{\linewidth}
\adjustbox{scale=0.8,center}{
$
\AxiomC{$\vDash a \in X$}
\AxiomC{$\bone \vDash^{X} \bvar_{i}^{a}$}
\RightLabel{$\AxO$}
\BinaryInfC{$\vdash^{X} \bone \Era \atm{i}{a}$}
\DP
\qquad \qquad \qquad
\AxiomC{$\vDash a \in X$}
\AxiomC{$\bvar_{i}^{a} \vDash^{X} \bone$}
\RightLabel{$\AxT$}
\BinaryInfC{$\vdash^{X} \bone  \Ela \atm{i}{a}$}
\DP
$}
\end{minipage}

\small
\begin{center}
Counting Rules
\end{center}
\normalsize

\begin{minipage}{\linewidth}
\adjustbox{scale=0.8,center}{
$\AxiomC{$\mu(\model{\bone}_{X}) = 0$}
\RightLabel{$\Rmur$}
\UnaryInfC{$\vdash^{X} \bone  \Pto \fone$}
\DP
\qquad \qquad \qquad \qquad
\AxiomC{$\mu(\model{\bone}_{X}) = 1$}
\RightLabel{$\Rmul$}
\UnaryInfC{$\vdash^{X} \bone  \Pfrom \fone$}
\DP
$}
\end{minipage}

\bigskip
\begin{minipage}{\linewidth}
\adjustbox{scale=0.8,center}{
$
\AxiomC{$\vdash^{X\cup\{a\}} \btwo \Pto \fone$}
\AxiomC{$\bone \vDash^{X} \bigvee_{i}\{\bfour_{i} \ | \ \mu(\model{\bthree_{i}}_{\{a\}}) \ge q\}$}
\RightLabel{$\Rbr$}
\BinaryInfC{$\vdash^{X} \bone \Pto \BOX^{q}_{a}\fone$}
\DP
\quad
\AxiomC{$\vdash^{X\cup\{a\}} \btwo \Ela \fone$}
\AxiomC{$\bigvee_{i}\{\bfour_{i} \ | \ \mu(\model{\bthree_{i}}_{\{a\}}) \ge q\} \vDash^{X} \bone$}
\RightLabel{$\Rbl$}
\BinaryInfC{$\vdash^{X} \bone \Ela \BOX^{q}_{a}\fone$}
\DP
$}
\end{minipage}

\bigskip

\begin{minipage}{\linewidth}
\adjustbox{scale=0.8,center}{
$
\AxiomC{$\vdash^{X\cup\{a\}} \btwo \Ela \fone$}
\AxiomC{$\bone \vDash^{X} \bigvee_{i}\{\bfour_{i} \ | \ \mu(\model{\bthree_{i}}_{\{a\}}) < q\}$}
\RightLabel{$\Rdr$}
\BinaryInfC{$\vdash^{X} \bone \Era \DIA^{q}_{a}\fone$}
\DP
\quad
\AxiomC{$\vdash^{X\cup\{a\}} \btwo \Era \fone$}
\AxiomC{$\bigvee_{i}\{\bfour_{i} \ | \ \mu(\model{\bthree_{i}}_{\{a\}}) < q\} \vDash^{X} \bone$}
\RightLabel{$\Rdl$}
\BinaryInfC{$\vdash^{X} \bone \Ela \DIA^{q}_{a}\fone$}
\DP
$}
\end{minipage}
\small
\begin{center}
\footnotesize{(where $\bigvee_{i}\bfour_{i}\wedge \bthree_{i}$ is an $a$-decomposition of $\btwo$)}
\end{center}
\end{minipage}
}}
\caption{\small{$\PPL$ Initial Sequents and Counting Rules.}}
\label{fig:cpl}
\end{figure}

Similarly to $\PPLc$, given a named Boolean formula $\bone$, 
a formula of $\PPL$ $\fone$, and a set $X$, 
such that $\FN(\bone)\cup\FN(\fone) \subseteq X$, 
the labelled formula $\bone\Pto \fone$
(resp.~$\bone\Pfrom \fone$) is \emph{$X$-valid} when
$\model \bone_{X}\subseteq \model \fone_{X}$ (resp.~
$\model \fone_{X} \subseteq \model \bone_{X}$).
A named sequent $\vdash^{X}\lone$ is \emph{valid} 
(denoted $\vDash^{X}\lone$) when $\lone$ is $X$-valid. 
We write $\fone\equiv_X\ftwo$
when $\model{\fone}_{X}=\model{\ftwo}_{X}$.
The soundness of the proof system can be again
proved by standard induction. 
Completeness is established with an argument very similar to that for $\PPLc$.\footnote{For further details, see Appendix~\ref{appendix4.2}. Specifically,
soundness is proved as Theorem~\ref{theorem:soundness}
and completeness as Theorem~\ref{CPLcompleteness}.}
\begin{theorem}[Soundness and Completeness]
$\vdash_{\PPL}^{X}\lone$ holds if and only if $\vDash^{X}\lone$ holds.
\end{theorem}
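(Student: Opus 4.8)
The plan is to mirror, at the multivariate level, the proof architecture already deployed for $\PPLc$, so that the only genuinely new ingredient is the counting machinery handled via the Fundamental Lemma~\ref{lemma:fundamental}. First I would establish soundness by induction on the height of a derivation of $\vdash^{X}\lone$: for each rule one checks that if the premisses are $X$-valid (or, for the semantic side-conditions such as $\mu(\model{\bone}_X)=0$, simply true), then so is the conclusion. The propositional rules are verbatim reworkings of the $\PPLc$ cases, now with everything relativised to $(2^{\omega})^{X}$; the initial sequents are immediate from the definition of $X$-validity together with $a\in X$; and the counting rules $\Rbr/\Rbl/\Rdr/\Rdl$ are exactly where Lemma~\ref{lemma:fundamental} is invoked: given an $a$-decomposition $\bigvee_i \bfour_i\wedge\bthree_i$ of $\btwo$, the set $\bigcup\{\model{\bfour_i}_X \mid \mu(\model{\bthree_i}_{\{a\}})\ge q\}$ is shown by that lemma to coincide with $\model{\BOX^q_a\fone}_X$ whenever $\model{\btwo}_{X\cup\{a\}}\subseteq\model{\fone}_{X\cup\{a\}}$, from which the required inclusion with $\bone$ follows by the side-condition.

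For completeness I would transport the decomposition-relation technique of Section~\ref{section3}. Concretely, I define a relation $\dec$ on named sequents and sets thereof, driven by the connective or quantifier exposed in the labelled formula, exactly paralleling the clauses displayed for $\PPLc$ but with each $\BOX^q/\DIA^q$ step now producing, via an $a$-decomposition, a finite \emph{family} of sequents (one per disjunct $\bfour_i\wedge\bthree_i$, with the appropriate constraint on $\mu(\model{\bthree_i}_{\{a\}})$). Lift $\dec$ to $\Dec$ on finite sets of sequents as before. Then I re-prove the three workhorse facts: (i) $\Dec$ is strongly normalising, via a measure $\ms$ built from the number of connectives/quantifiers in labelled formulas — the $a$-decomposition step strictly decreases the quantifier count in each resulting sequent even though the Boolean labels may blow up in size, so one must be careful to have $\ms$ count only $\PPL$-connectives, not Boolean ones; (ii) a $\Dec$-normal set is valid iff it is $\PPL$-derivable, because a normal sequent contains only labelled \emph{atoms} $\bone\Pto\at{a}$ or $\bone\Pfrom\at{a}$ (plus trivially-measured ones), whose $X$-validity is $\model{\bone}_X\subseteq\model{\bvar_i^a}_X$ or the converse, matched precisely by $\AxO/\AxT$ and the $\Rmur/\Rmul$ rules; (iii) validity is preserved forward along some $\Dec$-step, and derivability is preserved backward along any $\Dec$-step. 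Chaining (ii)-(iii)-(i) as in the $\PPLc$ argument yields completeness.

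The main obstacle I anticipate is property (ii) together with the preservation lemmas at the counting rules: one must verify that the $a$-decomposition really lets a valid quantified sequent be decomposed into a valid family, and conversely that derivability of that family reconstitutes a derivation of the quantified sequent — this is where the three defining clauses of an $a$-decomposition (agreement of interpretations, separation of the name $a$ from $X$ in the disjuncts, and pairwise disjointness of the $\model{\bfour_i}_X$) all get used, essentially replaying the $(\subseteq)$ and $(\supseteq)$ halves of the proof of Lemma~\ref{lemma:fundamental}. A secondary nuisance is bookkeeping the ambient name set $X$: every $\dec$ clause for a quantifier must expand $X$ to $X\cup\{a\}$ on the premiss side, and one has to check that freshness of $a$ (or $\alpha$-renaming) causes no clash, but this is routine. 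I would also note in passing that, just as in Remark~\ref{rem:cutrule}, completeness immediately yields admissibility of a cut rule for $\PPL$, which is convenient for the applications in Sections~\ref{section5} and~\ref{section6}.
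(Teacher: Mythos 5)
Your architecture is exactly the paper's: soundness by induction on derivation height with the counting rules discharged through Lemma~\ref{lemma:fundamental}, and completeness by transporting the $\dec/\Dec$ decomposition machinery of Section~\ref{section3}, re-proving strong normalization (with the measure counting only $\PPL$-connectives, as you correctly insist), the equivalence of validity and derivability on normal (hence regular) sequents, existential preservation of validity, and backward preservation of derivability. The one place where your plan deviates is the $\dec$ clause for the quantifiers: the paper does \emph{not} split a quantified sequent into a family of sequents, one per disjunct of the $a$-decomposition, but produces a \emph{single} premiss $\vdash^{X\cup\{a\}}\btwo\Era\fone$ (resp.\ $\Ela$) whose label $\btwo$ is the entire decomposition $\bigvee_i\bfour_i\wedge\bthree_i$, with the measure conditions $\mu(\model{\bthree_i}_{\{a\}})\ge q$ appearing only in the side hypothesis relating $\bone$ to $\bigvee\{\bfour_i\mid\dots\}$. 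Your per-disjunct splitting is harmless for the $\Era$ direction (each $\bfour_i\wedge\bthree_i\Era\fone$ is valid and the pieces can be reassembled with $\Rcup$), but it breaks existential preservation of validity for the $\Ela$ direction: validity of $\bfour_i\wedge\bthree_i\Ela\fone$ demands $\model{\fone}_{X\cup\{a\}}\subseteq\model{\bfour_i\wedge\bthree_i}_{X\cup\{a\}}$ for \emph{each} $i$, which fails precisely because the $\model{\bfour_i}_X$ are pairwise disjoint; an appeal to $\Rcap$ cannot rescue this, since that rule needs each premiss to be valid on its own. Reverting to the single-sequent clause (as in the paper) removes the problem and the rest of your argument goes through unchanged.
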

\noindent
From completeness one can deduce the admissibility of a cut-rule defined as in Remark~\ref{rem:cutrule}.

\section{Relating to the Counting Hierarchy}\label{section5}

In his seminal works~\cite{Wagner84,Wagner,Wagner86}, 
Wagner observed that one can define canonical complete 
problems for the each level of counting hierarchy, 
in the spirit of similar results for the
polynomial hierarchy~\cite{MeyerStockmeyer72, MeyerStockmeyer73, Stockmeyer77, Wrathall, BB}. 
Nevertheless, there are some crucial differences. 
First of all, satisfiability should be checked
against \emph{bunches} of variables, 
rather than on \emph{one} variable. 
Moreover, the verifier is allowed to require the
number of models to be higher than a certain threshold, 
rather than merely being minimal or maximal. 
The resulting operators on classes of languages are 
called \emph{counting quantifiers}.

In some sense, $\PPL$ can be seen as a way to internalize 
Wagner's construction inside a proper logical system. 
Counting quantifiers, however, can be used deep inside formulas, 
rather than just at top-level. 
Does all this make us lose the connection to the counting hierarchy? 
The answer to the question is negative and is the topic of
this section. 
More specifically, formulas can be efficiently put in prenex form, 
as we prove in Section~\ref{sect:prenexform} below. 
Moreover, there is a way of getting rid of the $\DIA$ quantifier, 
which has no counterpart in Wagner's problems, 
as shown in Section~\ref{sect:simpleprenexform}.

\subsection{Prenex Normal Forms}\label{sect:prenexform}

The prenex normal form for formulas of $\PPL$ is defined in 
a standard way:
\begin{definition}[PNF]
  A formula of $\PPL$ is an 
  \emph{$n$-ary prenex normal form} 
  (or simply a \emph{prenex normal form}, PNF for short) 
  if it can be written as $\op{1}\ldots\op{n}\fone$ where, 
  for every $i \in \{1,\dots,n\}$, 
  the operator $\op{i}$ is either in the form $\BOX^{q}_{a}$ or
  $\DIA^{q}_{a}$ (for arbitrary $a,$ and $q$), 
  and $\fone$ does not contain any counting operator. 
  In this case, the formula $\fone$ is said
  to be the \emph{matrix} of the PNF.
\end{definition}
\noindent
To convert a formula of $\PPL$ into an equivalent
formula in \emph{in prenex normal form}, 
some preliminary lemmas are needed. 
First, it is worth noticing that for every formula of $\PPL$ $\fone$,
name $a$, and finite set $X$, 
with $\Var{\fone} \subseteq X$ and $a \not \in X$, 
if $q$ = 0, then 
$\model{\BOX^{q}_{a}\fone}_{X}$ = ($\twoOm$)$^{X}$ 
and $\model{\DIA^{q}_{a}\fone}_{X}$ = $\emptyset^{X}$.\footnote{For the proof, see Lemma~\ref{Lq0} in Appendix~\ref{appendix5.1}.}

The following lemma states that counting quantifiers occurring
inside any conjunction or disjunction can somehow be extruded from it.\footnote{For further details, see the corresponding Lemma~\ref{lemma:commutations}, in Appendix~\ref{appendix5.1}.} 
\begin{restatable}{lemma}{primeCommutations}\label{lemma:commutations}
  Let $a \not\in \FN(\fone)$ and $q>0$. 
  Then, for every $X$ such that 
  $\Var{\fone} \cup \Var{\ftwo} \subseteq X$, and $a \not \in X$,
  the followings hold:
\begin{align*}
  \fone \wedge \BOX^{q}_{a}\ftwo &\equiv_{X}\BOX^{q}_{a}(\fone \wedge \ftwo) &&&
\fone \vee \BOX^{q}_{a}\ftwo &\equiv_{X}\BOX^{q}_{a}(\fone \vee \ftwo) \\
\fone \wedge \DIA^{q}_{a}\ftwo &\equiv_{X} \DIA^{q}_{a}(\neg \fone \vee \ftwo) &&&
\fone \vee \DIA^{q}_{a}\ftwo &\equiv_{X} \DIA_{a}^{q}(\neg\fone \wedge \ftwo).
\end{align*}
\end{restatable}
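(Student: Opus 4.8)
The plan is to prove each of the four equivalences by unfolding the semantics from Definition~\ref{PPLsem} and reducing everything to a computation about $f$-projections, using crucially that $a \notin \FN(\fone)$. The key observation, which I would isolate first as a small sublemma, is that when $a \notin \FN(\fone)$ and $X' = X \cup \{a\}$, the set $\model{\fone}_{X'}$ is ``cylindrical in the $a$-coordinate'': for every $f \in (\twoOm)^X$ we have $\PROJ_f(\model{\fone}_{X'}) = (\twoOm)^{\{a\}}$ if $f \in \model{\fone}_X$, and $\PROJ_f(\model{\fone}_{X'}) = \emptyset$ otherwise. This follows by a routine induction on $\fone$ (the atomic case uses $a \notin \FN(\fone)$, and the quantifier cases do not arise inside the induction in the problematic way since any bound name can be taken fresh). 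Granting this, $\mu(\PROJ_f(\model{\fone}_{X'}))$ is $1$ when $f \in \model{\fone}_X$ and $0$ otherwise.

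\textbf{The $\BOX$ cases.} For $\fone \wedge \BOX^q_a \ftwo \equiv_X \BOX^q_a(\fone \wedge \ftwo)$, I would fix $f \in (\twoOm)^X$ and compute $\PROJ_f(\model{\fone \wedge \ftwo}_{X \cup \{a\}}) = \PROJ_f(\model{\fone}_{X\cup\{a\}} \cap \model{\ftwo}_{X\cup\{a\}}) = \PROJ_f(\model{\fone}_{X\cup\{a\}}) \cap \PROJ_f(\model{\ftwo}_{X\cup\{a\}})$, which by the sublemma equals $\PROJ_f(\model{\ftwo}_{X\cup\{a\}})$ if $f \in \model{\fone}_X$ and $\emptyset$ otherwise. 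Hence $\mu(\PROJ_f(\model{\fone\wedge\ftwo}_{X\cup\{a\}})) \geq q$ iff ($f \in \model{\fone}_X$ and $\mu(\PROJ_f(\model{\ftwo}_{X\cup\{a\}})) \geq q$), using here that $q > 0$ so that the $\emptyset$ case genuinely fails the threshold; this says exactly $f \in \model{\fone}_X \cap \model{\BOX^q_a\ftwo}_X$. The disjunction case is symmetric, replacing $\cap$ by $\cup$: $\PROJ_f(\model{\fone\vee\ftwo}_{X\cup\{a\}})$ is $(\twoOm)^{\{a\}}$ (measure $1 \geq q$) when $f \in \model{\fone}_X$, and is $\PROJ_f(\model{\ftwo}_{X\cup\{a\}})$ when $f \notin \model{\fone}_X$, yielding membership in $\model{\BOX^q_a(\fone\vee\ftwo)}_X$ iff $f \in \model{\fone}_X$ or $f \in \model{\BOX^q_a\ftwo}_X$.

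\textbf{The $\DIA$ cases.} For $\fone \vee \DIA^q_a \ftwo \equiv_X \DIA^q_a(\neg\fone \wedge \ftwo)$, the point is that $\neg\fone$ flips the cylinder: $\PROJ_f(\model{\neg\fone \wedge \ftwo}_{X\cup\{a\}})$ equals $\PROJ_f(\model{\ftwo}_{X\cup\{a\}})$ when $f \notin \model{\fone}_X$ and $\emptyset$ when $f \in \model{\fone}_X$. So $\mu(\PROJ_f(\ldots)) < q$ iff $f \in \model{\fone}_X$ (since then the measure is $0 < q$, using $q > 0$) or ($f \notin \model{\fone}_X$ and $\mu(\PROJ_f(\model{\ftwo}_{X\cup\{a\}})) < q$) — precisely $f \in \model{\fone \vee \DIA^q_a \ftwo}_X$. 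The conjunction case $\fone \wedge \DIA^q_a\ftwo \equiv_X \DIA^q_a(\neg\fone\vee\ftwo)$ is analogous: $\PROJ_f(\model{\neg\fone\vee\ftwo}_{X\cup\{a\}})$ is $(\twoOm)^{\{a\}}$ (measure $1$, not $< q$ since $q \leq 1$, and in fact we only need it is not $< q$... here one must be slightly careful and note $q \le 1$ forces measure $1 \not< q$) when $f \notin \model{\fone}_X$, and is $\PROJ_f(\model{\ftwo}_{X\cup\{a\}})$ when $f \in \model{\fone}_X$, so the threshold $< q$ holds iff $f \in \model{\fone}_X$ and $\mu(\PROJ_f(\model{\ftwo}_{X\cup\{a\}})) < q$, i.e. $f \in \model{\fone \wedge \DIA^q_a\ftwo}_X$.

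\textbf{Main obstacle.} The only real content is the cylindricity sublemma, and the delicate points are the two boundary conditions: one needs $q > 0$ to distinguish measure $0$ from ``$\geq q$'' in the $\BOX$ disjunction and $\DIA$ cases, and one uses $q \leq 1$ to ensure measure $1$ is never $< q$ in the $\DIA$ conjunction case. I would make sure to flag exactly where each hypothesis is used. Everything else is bookkeeping with the projection operator and the $\sigma$-algebra operations.
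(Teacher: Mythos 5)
Your proposal is correct and follows essentially the same route as the paper: the paper's Appendix~\ref{appendix5.1} first proves exactly your cylindricity sublemma (its Lemma~\ref{L3}: $\PROJ_f(\model{\fone}_{X\cup\{a\}})$ is $(\twoOm)^{\{a\}}$ or $\emptyset$ according to whether $f\in\model{\fone}_X$), packages the four resulting measure equivalences into Corollary~\ref{C1}, and then unfolds the semantics case by case just as you do. One tiny slip in your closing summary: the hypothesis $q>0$ is what makes the $\emptyset$ projection fail the threshold in the $\BOX$-\emph{conjunction} and $\DIA$-disjunction cases (as your detailed case analysis correctly shows), not in the $\BOX$-disjunction case.
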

\noindent
Remarkably, a similar lemma does \emph{not} hold for $\PPLc$.

What about negation? Equation~\eqref{equ:interdefinability} 
can be generalized to $\PPL$, in order to get rid of negations, 
which lie between any occurrence of a counting quantifier 
and the formula's root.\footnote{For the proof, see the corresponding Lemma~\ref{lemma:commutations}, in Appendix~\ref{appendix5.1}.}
\begin{restatable}{lemma}{primeDuality}\label{lemma:pseudoduality}
For every $q \in \mathbb{Q}_{[0,1]}$, name $a$, and $X$ 
such that $\Var{\fone} \subseteq X\cup\{a\}$ and $a \not \in X$, then:
$$
  \lnot \DIA^{q}_a\fone \equiv_{X} \BOX^{q}_a\fone \qquad \qquad \lnot\BOX^{q}_a\fone \equiv_{X} \DIA^{q}_a\fone.
  $$
\end{restatable}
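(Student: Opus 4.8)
The plan is to prove both equivalences by unfolding the semantics from Definition~\ref{PPLsem} and observing that the two cases $\mu(\PROJ_f(\model{\fone}_{X\cup\{a\}})) \geq q$ and $\mu(\PROJ_f(\model{\fone}_{X\cup\{a\}})) < q$ are mutually exclusive and exhaustive for each fixed $f \in (\twoOm)^{X}$. Concretely, I would fix an arbitrary finite set $X$ of names with $\Var{\fone} \subseteq X \cup \{a\}$ and $a \notin X$, and argue that $\model{\lnot\DIA^{q}_a\fone}_{X}$ and $\model{\BOX^{q}_a\fone}_{X}$ are equal as subsets of $(\twoOm)^{X}$.

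The key computation is the following chain of set equalities, valid for any $f \in (\twoOm)^{X}$:
\begin{align*}
f \in \model{\lnot\DIA^{q}_a\fone}_{X}
&\iff f \in (\twoOm)^{X} - \model{\DIA^{q}_a\fone}_{X} \\
&\iff f \notin \{ g \mid \mu(\PROJ_{g}(\model{\fone}_{X\cup\{a\}})) < q \} \\
&\iff \mu(\PROJ_{f}(\model{\fone}_{X\cup\{a\}})) \geq q \\
&\iff f \in \{ g \mid \mu(\PROJ_{g}(\model{\fone}_{X\cup\{a\}})) \geq q \} \\
&\iff f \in \model{\BOX^{q}_a\fone}_{X}.
\end{align*}
The only nontrivial step is the negation of the inequality $\mu(\PROJ_{f}(\model{\fone}_{X\cup\{a\}})) < q$, which is simply $\mu(\PROJ_{f}(\model{\fone}_{X\cup\{a\}})) \geq q$ since the measure is a (well-defined) real number---and it is well-defined because $\model{\fone}_{X\cup\{a\}}$ is a measurable set, a fact guaranteed by the Fundamental Lemma (Lemma~\ref{lemma:fundamental}) and the subsequent remark that all sets $\model{\fone}_{X}$ are measurable. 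The second equivalence, $\lnot\BOX^{q}_a\fone \equiv_{X} \DIA^{q}_a\fone$, follows by the symmetric argument, negating $\mu(\PROJ_{f}(\model{\fone}_{X\cup\{a\}})) \geq q$ to obtain the strict inequality $< q$.

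Since there is no real obstacle here---the statement is essentially the observation that $\BOX^{q}_a$ and $\DIA^{q}_a$ partition the valuations according to a dichotomy on a real-valued measure---the only point requiring care is making sure the measurability hypothesis is in place so that $\mu(\PROJ_{f}(\model{\fone}_{X\cup\{a\}}))$ is genuinely a number to which trichotomy applies; this is exactly what the Fundamental Lemma provides, and it explains the side condition $\Var{\fone} \subseteq X \cup \{a\}$ (ensuring $X \cup \{a\}$ is an admissible index set for the interpretation). I would close by noting that this is precisely the generalization of Equation~\eqref{equ:interdefinability} from $\PPLc$ to $\PPL$, with names tracked appropriately, and that, unlike the commutation lemma, this pseudo-duality transfers without any new difficulty.
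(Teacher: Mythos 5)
Your proposal is correct and follows essentially the same route as the paper's proof in Appendix~\ref{appendix5.1}: unfold Definition~\ref{PPLsem}, take the set-theoretic complement, and negate the inequality on the real number $\mu(\PROJ_{f}(\model{\fone}_{X\cup\{a\}}))$. The only (cosmetic) difference is that the paper separates the case $q=0$, handling it via Lemma~\ref{Lq0}, whereas your uniform argument subsumes that case; your remark that measurability of $\PROJ_{f}(\model{\fone}_{X\cup\{a\}})$ is what licenses treating the measure as a genuine real number is a sound observation the paper leaves implicit.
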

\noindent
Please notice that $\BOX$ and $\DIA$, contrarily to standard
$\forall$ and $\exists$, are \emph{not} dual operators. 

As the equivalences from Lemma~\ref{lemma:commutations} 
and Lemma~\ref{lemma:pseudoduality}
can be oriented, see Appendix~\ref{appendix5.1}, 
we conclude the following:
\begin{restatable}{proposition}{primeProposition}\label{PNF}
  For every formula of $\PPL$, $\fone$, there is a PNF $\ftwo$
  such that, for every $X$, with $\Var{\fone}$ $\cup$ $\Var{\ftwo} \subseteq X$:
  $$
  \fone\equiv_{X}\ftwo.
  $$
   Moreover, $\ftwo$ can be computed
  in polynomial time from $\fone$.
\end{restatable}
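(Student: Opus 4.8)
The plan is to prove Proposition~\ref{PNF} by induction on the structure of the formula $\fone$, using the equivalences of Lemma~\ref{lemma:commutations} and Lemma~\ref{lemma:pseudoduality} as rewriting steps that push counting quantifiers outward. The base case ($\fone = \atm{i}{a}$) is trivial: an atom is already a $0$-ary PNF. For the inductive step, I would first apply the induction hypothesis to obtain PNFs for the immediate subformulas, so that, say, $\fone = \op{1}\cdots\op{m}\fone'$ and $\ftwo = \triangledown_{1}\cdots\triangledown_{k}\ftwo'$ with $\fone',\ftwo'$ quantifier-free. Then I would handle the three connective cases separately.

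For negation, $\lnot\fone = \lnot\op{1}\cdots\op{m}\fone'$, I would repeatedly apply Lemma~\ref{lemma:pseudoduality} to move the negation past each counting operator in turn, flipping $\BOX^{q}_{a}\leftrightarrow\DIA^{q}_{a}$, until the negation reaches the quantifier-free matrix $\fone'$, yielding $\op{1}'\cdots\op{m}'\lnot\fone'$, which is a PNF. For conjunction and disjunction, $\fone\wedge\ftwo$ (resp.\ $\fone\vee\ftwo$), the first concern is variable capture: before extruding, I would $\alpha$-rename the bound names of $\fone$ and $\ftwo$ so that the bound names of one do not occur free in the other (this uses that bound names can be renamed without changing the semantics, which follows from Definition~\ref{PPLsem}), and also eliminate any $\op{i}$ of the form $\BOX^{0}_{a}$ or $\DIA^{0}_{a}$ using the observation recalled just before Lemma~\ref{lemma:commutations} (so that the side condition $q>0$ of Lemma~\ref{lemma:commutations} applies). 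Then I would apply Lemma~\ref{lemma:commutations} repeatedly to pull each $\op{i}$ and each $\triangledown_{j}$ out of the conjunction/disjunction one at a time; note that for $\DIA$-operators the matrix of the surrounding connective changes (an extra negation and a swap of $\wedge/\vee$ appear), but this is harmless since the extrusion is applied to the outermost quantifier at each stage and the remaining matrix stays quantifier-free after finitely many steps.

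To make this a clean argument rather than an informal "repeat until done", I would orient all the equivalences of the two lemmas (and the $\BOX^{0}/\DIA^{0}$ simplifications) as left-to-right rewrite rules on formulas, observe that each rule strictly decreases the number of connectives appearing strictly below a counting quantifier (a suitable positional measure, analogous to the one used for $\ms$ in Lemma~\ref{sNorm}), hence the rewriting terminates, and a normal form is necessarily a PNF. The semantic equivalence $\fone\equiv_{X}\ftwo$ then follows because each rewrite step preserves $\equiv_{X'}$ for every appropriate $X'$ (by the cited lemmas) and $\equiv_{X}$ is preserved under the connectives by Definition~\ref{PPLsem}. Finally, for the polynomial-time bound, I would argue that each of the equivalences at most doubles (for the $\DIA$-commutations, which duplicate $\fone$ into $\lnot\fone$) or leaves unchanged the relevant subformula, so a naive implementation could blow up; the fix is to process the formula bottom-up, at each node performing the extrusions for that node only, and to share the single negation introduced by the $\DIA$-cases via a linear pass — giving an output of size polynomial (in fact quadratic) in $|\fone|$ computable in polynomial time. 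The main obstacle I expect is precisely this complexity bookkeeping: ensuring that the cascade of $\DIA$-extrusions, each of which textually introduces $\lnot\fone$, does not cause exponential growth, which requires being careful that the duplicated component is always quantifier-free and of size bounded independently of the number of quantifiers being extruded, or else restructuring the induction so negations are normalized once at the end rather than reintroduced at every step.
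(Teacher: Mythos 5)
Your overall strategy --- orienting the equivalences of Lemma~\ref{lemma:commutations} and Lemma~\ref{lemma:pseudoduality} as rewrite rules, discarding the $q=0$ quantifiers so the side condition $q>0$ applies, $\alpha$-renaming bound names, and arguing termination via a measure --- is essentially the route the paper takes (the paper merely organizes it as two sequential normalizations: first a negation normal form via Lemma~\ref{lemma:pseudoduality}, then quantifier extrusion via Lemma~\ref{lemma:commutations}). However, your negation case contains a genuine semantic error. You propose to rewrite $\lnot\op{1}\cdots\op{m}\fone'$ by ``moving the negation past each counting operator in turn, flipping $\BOX^{q}_{a}\leftrightarrow\DIA^{q}_{a}$,'' ending with $\op{1}'\cdots\op{m}'\lnot\fone'$. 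That is the classical duality move, and it is exactly what Lemma~\ref{lemma:pseudoduality} does \emph{not} license: the lemma states $\lnot\BOX^{q}_{a}\fone\equiv_{X}\DIA^{q}_{a}\fone$ (and symmetrically), i.e.\ the negation is \emph{absorbed} at the outermost quantifier and does not reappear on the argument. The paper stresses precisely this point: $\BOX$ and $\DIA$ are interdefinable but not dual. Concretely, $\lnot\BOX^{1/2}_{a}\BOX^{1/2}_{b}\fone'$ is equivalent to $\DIA^{1/2}_{a}\BOX^{1/2}_{b}\fone'$, not to $\DIA^{1/2}_{a}\DIA^{1/2}_{b}\lnot\fone'$, so your recipe outputs a formula that is in general not $\equiv_{X}$ to the input. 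The fix is immediate --- a single application of the lemma flips only the outermost operator and deletes the negation --- but as written the step fails, and the same mistake would propagate into your $\DIA$-extrusion cases, where the freshly introduced $\lnot\fone$ in $\DIA^{q}_{a}(\lnot\fone\wedge\ftwo)$ must again be normalized when $\fone$ still carries quantifiers.

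A smaller point on the complexity discussion: the $\DIA$-commutations do not ``duplicate $\fone$ into $\lnot\fone$''; they add exactly one negation symbol in front of a single occurrence of $\fone$. There is therefore no threat of exponential blow-up to begin with: each quantifier is extruded past at most linearly many connectives, each rewrite step increases the size by at most a constant, and the output has quadratic size. The sharing machinery you invoke is unnecessary, though harmless.
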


\subsection{Positive Prenex Normal Form}\label{sect:simpleprenexform}
Prenex normal forms for $\PPL$ are already very close 
to what we need,
but there is one last step to make, 
namely getting rid of the $\DIA$ operator, which does not 
have any counterpart in Wagner's construction. 
In other words, we need prenex normal forms \emph{of a
special kind}.
\begin{definition}[PPNF]
  A formula of $\PPL$ is said to be in \emph{positive prenex normal form} (PPNF, for short) if it is in PNF and no $\DIA$-quantifier 
  occurs in it.
\end{definition}
How could we get rid of $\DIA$? The idea is to turn any instance of
$\DIA$ into one of $\BOX$ by applying the second equality from Lemma~\ref{lemma:pseudoduality},
and then to use the fact that $\BOX$ enjoys a weak form of 
self duality, as shown by the following lemma:
\begin{restatable}[Epsilon Lemma]{lemma}{primeEpsilon}\label{lemma:epsilon}
  For every formula of \PPL, call it $\fone$, 
  and for every $q \in \mathbb{Q}_{[0,1]}$, 
  there is a $p \in \mathbb{Q}_{[0,1]}$ such that, 
  for every $X$, with $\Var{\fone} \subseteq X$ and 
  $a \not \in X$, the following holds:
  $$
  \neg \BOX^{q}_{a}\fone \equiv_{X} \BOX^{p}_{a}\neg \fone.
  $$
  Moreover, $p$ can be computed from $q$ in polynomial
  time.
\end{restatable}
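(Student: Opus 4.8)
The plan is to compute the measure explicitly and see which threshold $p$ works. Fix $X$ with $\Var{\fone}\subseteq X$ and $a\notin X$, and let $\bone_{\fone}$ be the Boolean formula associated with $\fone$ by the Fundamental Lemma (Lemma~\ref{lemma:fundamental}), so that $\model{\fone}_{X\cup\{a\}}=\model{\bone_{\fone}}_{X\cup\{a\}}$. Write $m = \sharp\Var{\bone_{\fone}}$, counting only the variables with name $a$ (the finitely many atoms $\bvar_i^a$ that actually occur), so that for any $f\in(2^\omega)^X$ the projected set $\PROJ_f(\model{\bone_{\fone}}_{X\cup\{a\}})$ has a measure of the form $k/2^m$ for some integer $0\le k\le 2^m$, by Lemma~\ref{lemma:counti} applied componentwise. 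The point is that the measure of an $a$-fiber is always a \emph{dyadic rational with denominator $2^m$}, so the strict inequality and its negation are governed by a finite grid.

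Concretely, $f\in\model{\neg\BOX^q_a\fone}_X$ iff $f\notin\model{\BOX^q_a\fone}_X$ iff $\mu(\PROJ_f(\model{\fone}_{X\cup\{a\}})) < q$, i.e.\ $\mu(\PROJ_f(\model{\neg\fone}_{X\cup\{a\}})) > 1-q$. On the other hand $f\in\model{\BOX^p_a\neg\fone}_X$ iff $\mu(\PROJ_f(\model{\neg\fone}_{X\cup\{a\}}))\ge p$. So I need $p$ with the property that for every integer $k$ with $0\le k\le 2^m$,
$$
\frac{k}{2^m} > 1-q \iff \frac{k}{2^m}\ge p.
$$
If $1-q$ is \emph{not} of the form $k/2^m$, any $p$ strictly between $1-q$ and the next dyadic up works (e.g.\ $p=\lceil 2^m(1-q)\rceil/2^m$); if $1-q = k_0/2^m$ exactly, then ``$>1-q$'' means ``$\ge (k_0+1)/2^m$'', so I take $p=(k_0+1)/2^m = 1-q+2^{-m}$. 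In both cases $p = \lceil 2^m(1-q) + \varepsilon\rceil/2^m$ for a suitably small $\varepsilon$; cleanly, setting $p=\big(\lfloor 2^m(1-q)\rfloor + 1\big)/2^m$ works uniformly, and this explains the name ``Epsilon Lemma'': $p$ is $1-q$ nudged up by at most $2^{-m}$. One checks $p\le 1$ since $q>0$ forces $2^m(1-q)<2^m$, hence $\lfloor 2^m(1-q)\rfloor+1\le 2^m$ (the edge case $q=0$ being trivial, as $\neg\BOX^0_a\fone$ is invalid and $\BOX^{p}_a\neg\fone$ with $p>1$ likewise). This establishes the equivalence $\model{\neg\BOX^q_a\fone}_X=\model{\BOX^p_a\neg\fone}_X$ for all admissible $X$.

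For the polynomial-time claim, note $m$ is bounded by the number of distinct $a$-named atoms occurring in $\fone$ (read off syntactically), and $q$ is given as a fraction $s/t$; then $p = (\lfloor 2^m(t-s)/t\rfloor + 1)/2^m$ is computed by one integer division on numbers of bit-length polynomial in the size of $\fone$ together with $2^m$, which is itself polynomial in $|\fone|$. The one subtlety to be careful about is that $\bone_{\fone}$ from Lemma~\ref{lemma:fundamental} may be exponentially large, so I must \emph{not} route the computation of $m$ through it; instead I count $a$-atoms directly in $\fone$, observing that any $a$-atom introduced by the $a$-decomposition machinery already appears among the atoms of $\fone$ (the decomposition only rearranges, it does not invent new variable names). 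That bookkeeping — confirming $m$ is syntactically visible in $\fone$ and that the fiber measures genuinely lie on the $1/2^m$ grid — is the main thing to get right; the inequality juggling itself is elementary.
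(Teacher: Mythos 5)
Your argument follows essentially the same route as the paper's: both proofs rest on the observation that the fiber measures $\mu(\PROJ_{f}(\model{\fone}_{X\cup\{a\}}))$ all lie on the dyadic grid of multiples of $2^{-m}$ determined by the $a$-atoms occurring in $\fone$, then pass to the complement via $\mu(S)<q \Leftrightarrow \mu(\overline{S})>1-q$ together with the fact that projection commutes with complementation, and finally nudge the threshold $1-q$ so that the strict inequality over grid values becomes a non-strict one. Your explicit choice $p=(\lfloor 2^{m}(1-q)\rfloor+1)/2^{m}$ is a clean, uniformly stated variant of the paper's $p=1-(q+\epsilon)$ with $\epsilon\in\{0,\pm 2^{-(k+1)}\}$, and your remark that $m$ must be read off syntactically from $\fone$ rather than from the (possibly exponentially large) associated Boolean formula $\bone_{\fone}$ is a genuinely useful precision that the paper's polynomial-time claim leaves implicit.

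The one place your write-up does not actually establish the statement is $q=0$. You dismiss it by appealing to "$\BOX^{p}_{a}\neg\fone$ with $p>1$", but the lemma requires $p\in\mathbb{Q}_{[0,1]}$, and within that range no $p$ can work in general: take $\fone=\atom{0}_{a}\land\neg\atom{0}_{a}$, so that $\neg\BOX^{0}_{a}\fone$ denotes $\emptyset$ while every fiber of $\neg\fone$ has measure $1$, whence $\BOX^{p}_{a}\neg\fone$ denotes $(\twoOm)^{X}$ for every $p\le 1$. This is a defect of the lemma at the boundary value rather than of your method — the paper's own proof sets $\epsilon=2^{-(k+1)}$ there and fares no better — but since you explicitly singled out the case, you should either exclude $q=0$ from the statement you prove or note that the equivalence genuinely fails there.
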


\begin{proof}[Proof Sketch of Lemma~\ref{lemma:epsilon}]
To establish Lemma~\ref{lemma:epsilon} we need a few preliminary lemmas which we will here assumed to be already established.\footnote{The detailed proof of Lemma~\ref{lemma:epsilon} and of all the auxiliary lemmas can be found in Appendix~\ref{appendix5.2}.}
Let $\bone_{A}$ be a Boolean formula satisfying $\llbracket \bone\rrbracket_{X\cup\{a\}}=\llbracket \bone_{A}\rrbracket_{X\cup\{a\}}$. 
Let $\bone_{A}$ be $a$-decomposable as $\bigvee_{i}^{n}\bthree_{i}\land \bfour_{i}$ and let $k$ be maximum such that $\bvar_{k}^{a}$ occurs in $\bone_{A}$.
Let $[0,1]_{k}$ be the set of those rational numbers $r\in[0,1]$ which can be written as a finite sum of the form $\sum_{i=0}^{k}b_{i}\cdot 2^{–i}$.
For all $i=0,\dots,n$, $\mu(\llbracket \bthree_{i}\rrbracket_{\{a\}})\in[0,1]_{k}$  where $b_{i}\in \{0,1\}$ (see Lemma~\ref{lemma:k} in Appendix~\ref{appendix5.2}).
Hence, for all $f:X\to 2^{\omega}$, also $\mu\big(\Pi_{f}(\llbracket A\rrbracket_{X\cup \{a\}})\big)\in[0,1]_{k}$, since $\Pi_{f}\big(\llbracket A\rrbracket_{X\cup \{a\}})\big)$ coincides with the unique $\llbracket \bthree_{i}\rrbracket_{\{a\}}$ such that $f\in \llbracket \bfour_{i}\rrbracket_{X}$, by~Lemma~\ref{lemma:fundamental}.
Let now $\epsilon$ be $2^{–(k+1)}$ if $q\in[0,1]_{k}$ and $q\neq 1$, let $\epsilon$ be $–2^{–(k+1)}$ if $q=1$ and let $\epsilon=0$ if $q\notin [0,1]_{k}$. In all cases $q+\epsilon\notin [0,1]_{k} $ so, by means of some auxiliary results (specifically by Lemma~\ref{lemma:leq} and Lemma~\ref{lemma:pif}),
it is possible to conclude: 
$$
\model{\neg\BOX^{q}_{a}\fone}_{X}=\model{\BOX^{1–(q+\epsilon)}_{a}\neg\fone}_{X}.
$$
\end{proof}
\noindent
Actually, the value of $p$ is very close to $1-q$, the difference between
the two being easily computable from the formula $\fone$. 
Hence, any negation occurring in the counting prefix of a 
PNF can be pushed back into
the formula. 
Summing up:
\begin{restatable}{proposition}{primePPNF}\label{PNF}
  For every formula of \PPL, call it $\fone$, there is a PPNF $\ftwo$
  such that for every $X$, with $\Var{\fone} \cup \Var{\ftwo} \subseteq X$,
  $$
  \fone\equiv_{X}\ftwo.
  $$
  Moreover, $\ftwo$ can be computed
  from $\fone$ in polynomial time.
\end{restatable}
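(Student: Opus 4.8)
The plan is to combine the two previous results---the prenex normal form construction (Proposition~\ref{PNF}, stated before the PPNF section) and the Epsilon Lemma (Lemma~\ref{lemma:epsilon})---with the interdefinability equivalences of Lemma~\ref{lemma:pseudoduality}, processing the quantifier prefix from the inside out. First I would apply the earlier Proposition to obtain, from $\fone$, a PNF $\op{1}\ldots\op{n}\seqT$ equivalent to $\fone$ over any suitable $X$, with $\seqT$ quantifier-free; this step is already available and runs in polynomial time. The remaining task is to eliminate every $\DIA$-operator occurring among $\op{1},\ldots,\op{n}$ while staying in PNF.

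The key step is an induction on the number of $\DIA$-quantifiers in the prefix, eliminating the innermost one first. Suppose $\op{j} = \DIA^{q}_{a}$ is the innermost occurrence of a $\DIA$-quantifier, so that $\op{j+1},\ldots,\op{n}$ are all $\BOX$-quantifiers. By Lemma~\ref{lemma:pseudoduality}, $\DIA^{q}_{a}\fthree \equiv_{X} \lnot\BOX^{q}_{a}\fthree$ (applied to the appropriate set of names). Substituting this equivalence under the outer prefix $\op{1}\ldots\op{j-1}$ introduces a negation sitting immediately above $\BOX^{q}_{a}$; I then use the Epsilon Lemma to rewrite $\lnot\BOX^{q}_{a}\fthree \equiv_{X} \BOX^{p}_{a}\lnot\fthree$, where $p$ is computable from $q$ in polynomial time, pushing the negation \emph{inside} past the quantifier. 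The newly introduced negation now stands directly above $\op{j+1}=\BOX^{r}_{b}(\cdots)$; I iterate the Epsilon Lemma down the chain $\op{j+1},\ldots,\op{n}$, each time converting $\lnot\BOX$ into $\BOX\lnot$ with a freshly recomputed rational, until the negation reaches the quantifier-free matrix $\seqT$, where it is simply absorbed into $\seqT$ (the matrix class is closed under negation). The result is again a PNF, now with one fewer $\DIA$; after at most $n$ rounds no $\DIA$ remains, so the formula is a PPNF. Since each invocation of Lemma~\ref{lemma:pseudoduality} and of the Epsilon Lemma is polynomial-time, and there are at most $n^2$ invocations overall, the whole transformation is polynomial in the size of $\fone$; one must also check that the PPNF's free names are contained in $X$, which is immediate since none of the rewrites introduces new free names.

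I expect the main obstacle to be bookkeeping rather than conceptual: one must verify that when the Epsilon Lemma is applied to an inner quantifier $\BOX^{r}_{b}$, the side condition ``$b \not\in X'$'' (where $X'$ is the ambient name set at that depth) is genuinely met---that is, that the bound name $b$ is fresh with respect to the names still free at that point in the prefix. In a PNF coming from Proposition~\ref{PNF} one can assume, by $\alpha$-renaming, that all bound names are distinct and distinct from $\FN(\fone)$, so this holds; but it should be stated explicitly. A second, minor point is confirming that the rational parameters stay in $\mathbb{Q}_{[0,1]}$ throughout the cascade of Epsilon-Lemma applications---this follows because each application returns a value in $[0,1]$ by the statement of Lemma~\ref{lemma:epsilon}---and that the composite computation of all these rationals is still polynomial-time, which is clear since each is polynomial and the depth is bounded by $n$. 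With these checks in place, the Proposition follows.
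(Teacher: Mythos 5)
Your proposal is correct and follows essentially the same route as the paper: convert to PNF, turn each $\DIA^{q}_{a}$ into $\lnot\BOX^{q}_{a}$ via Lemma~\ref{lemma:pseudoduality}, and then cascade the Epsilon Lemma down the remaining $\BOX$-prefix until the negation is absorbed into the quantifier-free matrix. Your explicit attention to the freshness side condition on bound names and to the polynomial bound on the number of Epsilon-Lemma invocations only makes the argument more careful than the paper's own sketch.
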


\subsection{$\PPL$ and the Counting Hierarchy}

In~\cite{Wagner}, Wagner not only introduced counting quantifiers 
(as operators on classes of languages) and hierarchy, 
but he also presented complete problems for each level of $\CH$. 
Below, we state a slightly weaker version of Wagner's Theorem, 
which however perfectly fits our needs. 
For the sake of keeping our exposition self-contained,
we also introduce preliminary definitions, 
despite their being standard.

Assume $\mathcal{L}$ is a subset of $\set^n$, where $\set$ is a set.
Now, suppose that $1\leq m<n$, and $b\in\Nat$. 
We thus define $\BOX_m^b\mathcal{L}$ as the following subset
of $\set^{n-m}$:
$$
\left\{(a_n,\ldots,a_{m+1})\mid \#(\{(a_m,\ldots,a_1)\mid(a_n,\ldots,a_1)\in\mathcal{L}\})\geq b\right\}.
$$
Now, take $\PL$ as a set comprising the language of propositional logic
formulas, including the propositional constants $\True$ and $\False$.
For any natural number $n\in\Nat$, let $\mathcal{TF}^n$ be the
subset of $\PL^{n+1}$ containing all tuples
in the form $(\fone,t_1,\ldots,t_n)$, where $\fone$ is a propositional
formula in CNF with at most $n$ free variables, and $t_1,\ldots,t_n\in\{\True,\False\}$, which
render $\fone$ true. Finally, for every $k\in\Nat$, we indicate as $\mathit{W}^k$
the language consisting of all (binary encodings of) tuples
in the form $(\fone,m_1,\ldots,m_k,b_1,\ldots,b_k)$ and such that
$\fone\in\BOX_{m_1}^{b_1}\cdots\BOX_{m_k}^{b_k}\mathcal{TF}^{\sum m_i}$.
 \begin{theorem}[Wagner, Th. 7~\cite{Wagner}]
  For every $k$, the language $W^k$ is complete for $\CH_k$.
 \end{theorem}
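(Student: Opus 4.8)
The plan is to prove both halves of the statement --- $W^k\in\CH_k$ (membership) and $\CH_k$-hardness of $W^k$ under $\leq_m^p$ --- simultaneously, by induction on $k$. The base case $k=0$ is immediate: $W^0$ is the set of true closed CNF formulas, which is decidable in polynomial time, and any $L\in\PTIME=\CH_0$ reduces to it by mapping $x$ to $\True$ when $x\in L$ and to $\False$ otherwise.

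For membership in the inductive step, I would peel off the outermost quantifier. On input $(\fone,m_1,\ldots,m_k,b_1,\ldots,b_k)$, membership in $W^k$ asks whether at least $b_1$ of the $2^{m_1}$ assignments $v$ to the first block of variables satisfy $(\fone[v],m_2,\ldots,m_k,b_2,\ldots,b_k)\in W^{k-1}$, where $\fone[v]$ is the CNF obtained (after trivial simplification) by substituting $v$ into $\fone$. A polynomial-time probabilistic machine that guesses $v$, queries an oracle for $W^{k-1}$, accepts accordingly, and pads its computation tree with dummy accepting and rejecting paths so that the bound ``$\ge b_1$ out of $2^{m_1}$'' becomes a strict majority, shows $W^k\in\PP^{W^{k-1}}$; by the inductive hypothesis $W^{k-1}\in\CH_{k-1}$, so, since oracle access relativizes, $W^k\in\PP^{\CH_{k-1}}=\CH_k$.

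For hardness, take $L\in\CH_k=\PP^{\CH_{k-1}}$, say $L=L(M^A)$ with $M$ a polynomial-time probabilistic oracle machine and $A\in\CH_{k-1}$; by the inductive hypothesis $A\leq_m^p W^{k-1}$, so we may assume the oracle is $W^{k-1}$. The first step is to replace the polynomially many adaptive queries of a run on random bits $r$ by a single one: guess the (unique) correct answer string $\mathbf a$ and consider the predicate ``$\mathbf a$ is the correct string of answers for $(x,r)$ and $M$ accepts when fed $\mathbf a$''. Since $\CH_{k-1}$ is closed under complement and under polynomial-time truth-table reductions (relativizing the corresponding closure properties of $\PP$), this predicate lies in $\CH_{k-1}$; and because the correct $\mathbf a$ is unique, the number of pairs $(r,\mathbf a)$ satisfying it equals the number of accepting random strings of $M^{W^{k-1}}(x,\cdot)$. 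Hence $x\in L$ iff a single threshold-count over $(r,\mathbf a)$ of a $\CH_{k-1}$ predicate holds. Applying the inductive hypothesis once more reduces that predicate to $W^{k-1}$, and a Cook--Levin encoding using deterministic, unique-witness tableaux packages the whole statement as membership of one CNF in $\BOX_{m_0}^{b_0}\BOX_{m_1}^{b_1}\cdots\BOX_{m_{k-1}}^{b_{k-1}}\mathcal{TF}^{\sum m_i}$, i.e.\ as an instance of $W^k$; a final padding step makes the inner parameters $m_i,b_i$ the same for every choice of the outer block.

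I expect this last part of the hardness direction to be the main obstacle. The delicate point is to verify the oracle answers without increasing the counting depth: a syntactic conjunction of depth-$(k-1)$ counting statements stays at depth $k-1$ only under care, so rather than manipulating quantifier prefixes one must route the argument through the \emph{class-level} closure of $\CH_{k-1}$ under polynomial conjunctions. Assembling the resulting ``$\PP$-over-$W^{k-1}$'' statement into a single $W^k$ instance then requires the Cook--Levin tableau to have unique satisfying extensions on the deterministic parts of the computation, together with careful bookkeeping of how the thresholds $b_i$ and block sizes $m_i$ transform under substitution and padding so that they become independent of the outer variables. The membership direction, the base case, and the substitution-into-CNF bookkeeping are, by contrast, routine.
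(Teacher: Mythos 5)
First, a point of calibration: the paper does not prove this statement at all --- it is imported verbatim as Wagner's Theorem~7, and the only thing the paper itself establishes is the corollary that follows it. Your proposal can therefore only be measured against the literature, not against an in-paper argument.

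On its own terms, your reconstruction is structurally sound: the simultaneous induction, the base case, the membership direction (peel off the outer block, substitute into the CNF, pad the threshold to a strict majority), and the unique-answer-string trick for collapsing adaptive oracle queries are all correct. Two caveats. First, the step you dismiss parenthetically --- ``$\CH_{k-1}$ is closed under complement and under polynomial-time truth-table reductions (relativizing the corresponding closure properties of $\PP$)'' --- carries far more weight than the phrasing suggests: closure of $\PP$ under intersection, let alone under truth-table reductions, is the Beigel--Reingold--Spielman / Fortnow--Reingold theorem, a substantial result. It does relativize, so your argument is not wrong, but it is not a routine relativization, and it was not available when Wagner proved the theorem. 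The classical route avoids it entirely: one proves $W^k$-hardness against the \emph{counting-quantifier} characterization of $\CH_k$ (nested counting operators over a $\PTIME$ kernel), for which hardness is a direct Cook--Levin encoding with unique-witness tableaux, and one invokes separately the equivalence of the quantifier and oracle characterizations of $\CH$ (Tor\'an's theorem, which this paper cites in its related-work section); that is where the oracle-simulation difficulties you wrestle with are quarantined. Second, the part you yourself identify as the main obstacle --- assembling the reduced inner instances into a single CNF whose block sizes $m_i$ and thresholds $b_i$ are independent of the outer assignment --- is only sketched. The ingredients you name (unique satisfying extensions on the deterministic part of the tableau, the fact that the thresholds are absolute counts so padding a block leaves $b_i$ unchanged, length-uniform parameters in the inductive reduction) are the right ones, but this is precisely the place where the proof has to be written out in full before it can be accepted.
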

\noindent
Please notice that the elements of $W^k$ can be seen as
alternative representations for $\PPL$'s PPNFs, 
once any $m_i$ is replaced by $\min\{1,\frac{m_i}{2^{b_i}}\}$. 
As a consequence:
 \begin{corollary}
 The closed and valid $k$-ary PPNFs, whose matrix is in CNF, 
 define a complete set for $\CH_k$.
 \end{corollary}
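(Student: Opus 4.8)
The plan is to show both directions of the required completeness: membership of the set of closed, valid $k$-ary PPNFs with CNF matrix in $\CH_k$, and hardness via a many-one reduction from $W^k$. The key observation, already hinted at in the paragraph preceding the corollary, is that an element of $W^k$ — a tuple $(\fone, m_1,\dots,m_k,b_1,\dots,b_k)$ — can be rewritten as a PPNF by replacing each pair $(m_i,b_i)$ with the rational threshold $q_i = \min\{1,\tfrac{m_i}{2^{b_i}}\}$ and interpreting the Wagner counting prefix $\BOX_{m_1}^{b_1}\cdots\BOX_{m_k}^{b_k}$ as a nested stack of $\PPL$ quantifiers $\BOX^{q_1}_{a_1}\cdots\BOX^{q_k}_{a_k}$ acting on disjoint blocks of named variables (the $i$-th block having $b_i$ variables of name $a_i$). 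One must be slightly careful about the semantic reconciliation: Wagner's $\mathcal{TF}^n$ bundles the formula together with an explicit truth-assignment tuple, whereas a $\PPL$ matrix is a bare propositional formula; the translation amounts to observing that counting the assignments $(t_1,\dots,t_n)$ that render $\fone$ true in Wagner's sense is exactly $\SSAT$ of $\fone$ read as a $\PPL$ formula with atoms renamed according to the block structure, and that by Lemma~\ref{lemma:counti} (and its multivariate analogue) the measure-based semantics of $\BOX^{q}_{a}$ coincides with the ``$\geq q\cdot 2^{b_i}$ many models'' condition. This gives a polynomial-time computable bijection-like correspondence, hence mutual many-one reductions, establishing $\CH_k$-hardness of the PPNF set.

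For membership in $\CH_k$, I would argue by induction on $k$, exploiting the inductive definition $\CH_{n+1}=\PP^{\CH_n}$ together with the structure of a $k$-ary PPNF $\BOX^{q_1}_{a_1}\cdots\BOX^{q_k}_{a_k}\fone$ with $\fone$ in CNF. A PPNF of this shape is valid iff the majority-style condition at the outermost quantifier holds, which by the Fundamental Lemma (Lemma~\ref{lemma:fundamental}) and its use in characterizing $\model{\BOX^{q}_{a}\fone}_X$ reduces to a $\PP$-style query about the measure of $\PROJ_f(\model{\fone})$, relativized to an oracle that decides validity of $(k-1)$-ary PPNFs — which by the induction hypothesis sits in $\CH_{k-1}$. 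The base case $k=0$ is just checking a closed quantifier-free $\PPL$ formula, i.e. essentially evaluating a Boolean combination, which is in $\PTIME = \CH_0$. Care is needed to ensure that passing from ``$\BOX^{q_1}_{a_1}$ applied to a subformula that is itself a PPNF'' to the oracle query preserves the CNF-matrix restriction; since the matrix $\fone$ is fixed throughout and only the prefix is peeled, the inner instances remain PPNFs with the same CNF matrix (with one name block of variables now treated as free/instantiated), so the induction goes through cleanly.

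The main obstacle I expect is the bookkeeping in the hardness direction: making the translation between Wagner's representation and $\PPL$'s PPNFs genuinely a polynomial-time many-one reduction in both directions, and in particular handling the mismatch between $m_i$ (an arbitrary natural number threshold) and $q_i \in \mathbb{Q}_{[0,1]}$ (a rational bounded by $1$), including the degenerate cases $m_i \geq 2^{b_i}$ (which forces $q_i = 1$, i.e. ``all'' valuations) and $m_i = 0$ (which by the remark preceding Lemma~\ref{lemma:commutations} makes $\BOX^{0}_{a_i}$ trivially valid). One also needs to confirm that the variable-renaming that segregates $\fone$'s atoms into named blocks of the correct sizes $b_1,\dots,b_k$ is computable in polynomial time and that it does not disturb CNF-ness — which it does not, as renaming atoms is a purely syntactic substitution. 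Once these routine-but-fiddly points are dispatched, the corollary follows by combining the hardness reduction, the membership argument, and Wagner's Theorem (Th.~7 of~\cite{Wagner}).
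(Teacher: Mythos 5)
Your proposal is correct, and for the hardness half it coincides with the paper's (very terse) argument: the paper's entire justification for this corollary is the observation that tuples in $W^k$ and closed $k$-ary PPNFs with CNF matrix are polynomial-time interconvertible representations of the same objects (segregate the matrix's atoms into named blocks and convert Wagner's integer count threshold over a block into a rational $q_i\in\mathbb{Q}_{[0,1]}$), after which Wagner's Theorem transfers $\CH_k$-completeness in both directions at once. Where you diverge is on membership: the paper gets it for free by reading the same correspondence the other way (the PPNF validity problem many-one reduces to $W^k\in\CH_k$), whereas you give a separate induction on $k$, peeling the outermost $\BOX^{q_1}_{a_1}$ as a probabilistic query over the $a_1$-block relative to a $\CH_{k-1}$ oracle. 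That induction is sound --- modulo the standard padding needed to turn an arbitrary rational threshold into a $\PP$ majority condition, and the degenerate $k=0$ base case where a closed quantifier-free formula has no atoms at all --- but it is redundant once you have the bidirectional poly-time correspondence, which you explicitly claim. One inherited wrinkle worth noting: the paper's displayed conversion $\min\{1,\frac{m_i}{2^{b_i}}\}$ swaps the roles of $m_i$ (the number of quantified coordinates in $\BOX_{m}^{b}$) and $b_i$ (the count threshold) relative to the paper's own definition of $\BOX_m^b$; your bookkeeping follows the displayed formula, so your block sizes and thresholds are consistently swapped as well, which does not affect the substance of the argument.
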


\section{Type Systems from Counting Propositional Logic}\label{section6}

\subsection{Counting Proofs as Programs}
As is well known, the so-called Curry-Howard correspondence relates proofs with typable programs.
When this correspondence holds, any provable sequent 
$\vdash A$
can be replaced with a type judgment
$\vdash t:A$, 
where $t$ is a program encoding the computational content of some proof of $\vdash A$.
Now, what is the computational content of the proofs of the system $\PPL$? 
A natural way to interpret such proofs is as programs depending on some \emph{probabilistic event}: if $\vdash \bone\Pto A$ is provable in, say, $\PPLc$, then a program $\vdash t: \bone\Pto A$ should be one depending on some event $f\in 2^{\omega}$ and such that, whenever $f$ satisfies $\bone$, then $t$ yields a proof of $A$.

 In the literature on higher-order probabilistic type systems, one usually considers the extension of the standard $\lambda$-calculus with a probabilistic choice operator $t\oplus u$, which reduces to either $t$ or $u$ with probability $\frac{1}{2}$. 
 However, this extension comes with the undesired consequence the confluence property is lost.
 This is due to the basic observation that probabilistic choice does not commute with duplication: should two copies of $t\oplus u$ follow the \emph{same} probabilistic choice or behave as \emph{independent} choices? 
A solution to this problem, proposed in \cite{DLGH}, is to decompose the operator $\oplus$ into a \emph{generator} $\nu a.t$ and a  \emph{choice operator} $\oplus_{a}$, so that
$$t\oplus u= \nu a. t\oplus_{a}u.$$
In this way one obtains a confluent system, in which two copies of $t\oplus_{a}u$ behave following the same choice, while two copies of
$\nu a.t\oplus_{a}u$ behave following independent choices.

This decomposition of the probabilistic choice operator yields a surprising bridge with counting propositional logic.
In fact, suppose $\bone$ is true in at least half of its models (i.e. $\mu(\model \bone)\geq \frac{1}{2}$). As we said, a proof of $\vdash \bone \Pto A$ should correspond to a program depending on the choice of a valuation $f$ for $\bone$, that we can think of as $t\oplus_{a}u$.
Then, the proof of $\vdash \btwo \Pto \BOX^{1/2}A$, that we obtain using the rules for the counting quantifier corresponds to a program which \emph{no more depends} on $f$, that we can think of as $\nu a.t\oplus_{a}u$. In other words, it makes perfect sense to consider the following typing rule:
$$
\AXC{$\vdash t: \bone \Pto A$}
\AXC{$\mu(\model \bone)\geq q$}
\BIC{$\vdash \nu a.t: \btwo \Pto \BOX^{q}A$}
\DP
$$
This idea leads then to look for a suitable type system for a probabilistic $\lambda$-calculus in the style of \cite{DLGH}, which somehow reflects the structure of counting propositional logic. 
This is what we are going to present below.

\subsection{The Type System $\TTT$}

We introduce a probabilistic type system $\TTT$ with counting quantifiers for a $\lambda$-calculus enriched with generators and choice operators as in \cite{DLGH}.
The set $\Lnu$ of $\lambda$-terms with \emph{generators} and \emph{choice}  operators is defined by the grammar:
$$
t,u:=   x\midd \lambda x.t\midd tu \midd t\oplus^{i}_{a}u \midd \nu a.t.
$$ 
Note that the choice operator $t\oplus_{a}^{i}u$ is indexed by both a name and an index $i\in \mathbb N$, since, as for $\PPL$, we take the Borel algebra $\mathcal B(2^{\omega})$ as the underlying event space. 
We let $\FN(t)$ indicate the set of free names of $t$.
A term $t$ is \emph{name-closed} when $\FN(t)=\emptyset$. 
If $\FN(t)\subseteq X$, then for all $f\in (2^{\omega})^{X}$, we let the \emph{application of $f$ to $t$} be the name-closed term $\pi^{f}(t)$ defined by induction as follows:

\adjustbox{scale=1}{
\begin{minipage}{\linewidth}
\begin{align*}
  \pi^{f}(x) & = x \\
\pi^{f}(\lambda x.t)&= \lambda x.\pi^{f}(t)\\
 \pi^{f}(tu)& =\pi^{f}(t)\pi^{f}(u)  \\
\pi^{f}(t_{0}\oplus^{i}_{a} t_{1}) & = \begin{cases} 
  \pi^{f}(t_{f(i)}) & \text{ if } a\in X \\
  \pi^{f}(t_{0})\oplus^{i}_{a} \pi^{f}(t_{1}) & \text{ if }a \notin X
  \end{cases} \\
  \pi^{f}(\nu a.t) & = \nu a.\pi^{f}(t).
\end{align*}
\end{minipage}
}
\medskip

$\lambda$-terms are equipped with a reduction $\rred$ generated by usual $\beta$-reduction, $\to_{\beta}$, and \emph{permutative reduction}, $\to_{\mathsf{p}}$, illustrated in Fig.~\ref{fig:permutations}, where for a term $t$, a name $a$, and an index $i$, $(a,i)<(b,j)$ holds when either $a\neq b$ and $b$ is bound within the scope of $a$, or $a=b$ and $i<j$.
Let $\mathcal{HN}\subseteq \mathcal{V}$ be the set of pseudo-values which are in \emph{head normal form}, that is, which are of the form $\lambda x_{1}\dots\lambda x_{n}.yu_{1}\dots u_{n}$.
\begin{figure}[t]
\fbox{
\begin{minipage}{0.97\textwidth}
\begin{align*}
 t\oplus_{a}^{i}t & \to_{\mathsf{p}} t  \\
 (t\oplus_{a}^{i}u)\oplus_{a}^{i} v & \to_{\mathsf{p}} t\oplus_{a}^{i}v \\
 t\oplus_{a}^{i}(u\oplus_{a}^{i}v) & \to_{\mathsf{p}} t\oplus_{a}^{i}v \\
 \lambda x.(t\oplus_{a}^{i}u) & \to_{\mathsf{p}} (\lambda x.t)\oplus_{a}^{i} (\lambda x.u) \\
 (t\oplus_{a}^{i}u)v & \to_{\mathsf{p}} (tu)\oplus_{a}^{i}(uv) \\
 t(u\oplus_{a}^{i}v) & \to_{\mathsf{p}} (tu)\oplus_{a}^{i}(tv) \\
 (t\oplus_{a}^{i}u)\oplus_{b}^{j} v & \to_{\mathsf{p}} (t\oplus_{b}^{j}v)\oplus_{a}^{i}(u\oplus_{b}^{j}v) & ((a,i)< (b,j)) \\
 t\oplus_{b}^{j}(u\oplus_{a}^{i} v) & \to_{\mathsf{p}} (t\oplus_{b}^{j}u)
\oplus_{a}^{i}(t\oplus_{b}^{j} v) & ((a,i)<(b,j)) \\
\nu b. (t\oplus_{a}^{i} u) & \to_{\mathsf{p}} (\nu b.t)\oplus_{a}^{i}(\nu b.u) &  (a\neq b) \\
\nu a.t & \to_{\mathsf{p}} t &  (a\notin \FN(t)) \\
\lambda x.\nu a.t & \to_{\mathsf{p}} \nu a. \lambda x.t \\
(\nu a.t)u  & \to_{\mathsf{p}}  \nu a.(tu)
 \end{align*}
\end{minipage}
}
\caption{Permutative reductions of $\Lnu$-terms.}
\label{fig:permutations}
\end{figure}
The following facts can be proved as in \cite{DLGH}.
\begin{proposition}\label{prop:confluence}
The reduction relations $\to_{\beta}, \to_{\mathsf p},$ and $\rred$ are all confluent.
Moreover, $\to_{\mathsf p}$ is strongly normalizing.
\end{proposition}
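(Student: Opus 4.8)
The plan is to establish the four assertions separately, in the order: strong normalization of $\to_{\mathsf p}$, then confluence of $\to_{\mathsf p}$, then confluence of $\to_{\beta}$, then confluence of $\rred$, following the blueprint of \cite{DLGH} and adapting it to the indexed choice operators $\oplus_{a}^{i}$.

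For strong normalization of $\to_{\mathsf p}$, observe that every permutative rule either moves a choice operator $\oplus_{a}^{i}$ toward the root, or moves a generator $\nu a$ inward past $\lambda$ and $@$, or reorders a $\nu$ relative to a choice, or erases a vacuous $\nu$. The obstacle here is that several rules (commuting a choice past an abstraction, or past either argument of an application) \emph{duplicate} subterms, so plain term size is not a decreasing measure. I would instead define by structural recursion a weight $W(t)\in\Nat$ in which each occurrence of $\oplus$ or $\nu$ contributes a quantity that strictly dominates the cost of all the constructors it still has to permute past — concretely, the contribution of an operator node depends on the size of the subterm it roots together with a penalty for each $\lambda$, $@$, or (for $\nu$) $\oplus$ lying above it — chosen so that each rule strictly decreases $W$ even when it copies material, since the copied material lies \emph{below} the operator that moved and is therefore already counted with strictly smaller weight. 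One then checks rule by rule that $t\to_{\mathsf p}u$ implies $W(u)<W(t)$. Equivalently, the polynomial-interpretation argument of \cite{DLGH} transfers essentially verbatim, the indexing of $\oplus_{a}^{i}$ by $i\in\Nat$ rather than a single bit being immaterial.

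Given strong normalization, confluence of $\to_{\mathsf p}$ follows from Newman's Lemma once local confluence is checked, i.e.\ once all critical pairs between overlapping left-hand sides of the permutative rules are closed. There are only finitely many such overlaps — two nested choices on distinct names, a $\nu$ over a choice that is itself a redex, the idempotence rule $t\oplus_{a}^{i}t\to_{\mathsf p}t$ overlapping an inner redex in $t$, and so on — and each is joined in a bounded number of steps, so this is a routine finite verification. Confluence of $\to_{\beta}$ is entirely standard: $\beta$-reduction only fires on applications $(\lambda x.t)u$, and the constructors $\oplus_{a}^{i}$, $\nu a$ are inert under it (neither created nor destroyed, merely carried along in substitutions), so the Tait--Martin-L\"of technique applies — define parallel $\beta$-reduction on $\Lnu$, prove the diamond property via complete developments, and conclude.

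Finally, confluence of $\rred={\to_{\beta}}\cup{\to_{\mathsf p}}$ follows from the Hindley--Rosen Lemma: both relations are confluent (just shown) and it remains to prove that they commute. I would establish commutation by first proving a local (strong) commutation statement, namely that a $\to_{\mathsf p}$-step followed by a $\to_{\beta}$-step applied to a common term can be reordered up to parallel $\beta$ — the only subtleties being that a permutative step may duplicate a $\beta$-redex and that a $\beta$-step may push a permutative redex under a substitution, both absorbed by allowing several steps on the closing side — and then lifting this to full commutation using the already-established strong normalization of $\to_{\mathsf p}$. The main difficulty of the whole proposition is the strong-normalization measure for $\to_{\mathsf p}$ in the presence of the duplicating permutations; the remaining parts are either classical ($\beta$) or finite, mechanical critical-pair and commutation checks, and since \cite{DLGH} treat an essentially identical calculus, the cleanest route is to note that their constructions carry over with only cosmetic changes.
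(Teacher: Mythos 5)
Your proposal is correct and takes essentially the same route as the paper, which offers no independent argument: it simply notes that, up to a bijective translation absorbing the extra indices on $\oplus_{a}^{i}$, the calculus is that of \cite{DLGH}, whose strong-normalization measure for the permutations, critical-pair analysis, Tait--Martin-L\"of argument for $\beta$, and commutation/Hindley--Rosen argument for $\rred$ all carry over --- exactly the transfer you propose as the ``cleanest route''. One small slip in your informal description: the rules $\lambda x.\nu a.t\to_{\mathsf p}\nu a.\lambda x.t$ and $(\nu a.t)u\to_{\mathsf p}\nu a.(tu)$ move the generator \emph{outward} past $\lambda$ and application (it moves inward only past choices on other names), so the penalties attached to $\nu$-nodes in your sketched weight must be oriented accordingly, but this does not affect the soundness of the strategy.
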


\begin{remark}
The $\lambda$-calculus in \cite{DLGH} slightly differs from the present one, since choice operators do not depend on indexes $i \in \mathbb N$. However, one can define a bijective translation between the two languages, so that the permutative reductions in \cite{DLGH} translate into those in Fig.~\ref{fig:permutations}.
%
%
%
%

\end{remark}

A consequence of Proposition \ref{prop:confluence} is that any term admits a (unique) \emph{permutative normal form} ($\PNF$, for short). We let $\mathcal T$ indicate the set of $\PNF$.  As shown in detail in Appendix \ref{appendix6}, a name-closed term $t\in \mathcal T$ can be of two types: 
\begin{itemize}
\item a \emph{pseudo-value}, that is a variable, a $\lambda$ or an application (we let $\mathcal V\subseteq \mathcal T$ indicate the set of pseudo-values)
\item a term $\nu a.t$, where $t$ is a tree whose leaves form a finite subset of $\mathcal T$ (the \emph{support} of $t$, $\supp(t)$) and formed by concatenating sums of the form $t\oplus_{a}^{i}u$ (see Def.~\ref{def:satree}). 
\end{itemize}
Using this decomposition, any $\PNF$ $t\in \mathcal T$ can be associated with a \emph{distribution} of pseudo-values $\mathcal D_{t}: \mathcal V\to[0,1]$ by letting $\mathcal D_{t}(v)=\delta_{t}$, when $t\in \mathcal V$, and otherwise
\begin{align*}
\mathcal D_{t}(v) & = 
\sum_{t'\in \supp(t)} \mathcal D_{t'}(v)\cdot \mu \big (\{ f\in 2^{\omega}\mid \pi^{f}(t)= t'\}\big ).
\end{align*} 
The real number $\mathcal D_{t}(v)\in[0,1]$ can be interpreted as the probability that the term $v$ is found by applying to $t$ a chain of random choices.

We now introduce \emph{simple types with counting} $\sigma,\tau$ by the following grammar:
\begin{align*}
\sigma,\tau:= o \mid  \FF s \To \sigma  \qquad \qquad  \FF s, \FF t  := \BOX^{q}\sigma.
\end{align*}
A \emph{type context} $\Gamma$ is a finite set of declarations of the form $x:\FF s$, with the $x$ pairwise distinct. 
A \emph{labelled type} is an expression of the form $\bone \Pto \sigma$, where $\bone$ is a Boolean formula.
A \emph{type judgement} is of the form
$
\Gamma \vdash^{X,q} t: \bone \Pto \sigma
$
where the free names in both $t$ and $\bone$ are included in $X$, and $q$ is a rational number in $[0,1]$.
The rules of the system $\TTT$ are illustrated in Fig.~\ref{fig:typingrules}, where for a Boolean formula $\bone $ with $\FN(\bone)\subseteq X\cup\{a\}$ (with $a\notin X$), a \emph{weak $a$-decomposition} of $\bone$ is any Boolean formula $\bigvee_{i}^{k}\btwo_{i}\land \bthree_{i}$ equivalent to $\bone$, such that $\FN(\btwo_{i})\subseteq\{a\}$, $ \FN(\bthree_{i})\subseteq X$, and the $\btwo_{i}$ are satisfiable.

\begin{figure}[htp!]
\framebox{
\parbox[t][11cm]{15cm}{
\begin{minipage}{0.95\linewidth}
\begin{center}

\begin{center}
Initial Rules
\end{center}

\begin{minipage}{\linewidth}
\begin{minipage}[t]{0.4\linewidth}
\begin{prooftree}
\AxiomC{$\mathrm{FV}(t)\subseteq \Gamma,\FN(t)\subseteq X$}
\RightLabel{$\TB$}
\UnaryInfC{$\Gamma\vdash^{X,r}t: \bot  \Pto \sigma$}
\end{prooftree}
\end{minipage}
\hfill
\begin{minipage}[t]{0.45\linewidth}
\begin{prooftree}
\AxiomC{$\FN(\bone)\subseteq X$}
\RightLabel{$\TID$}
\UnaryInfC{$\Gamma, x:\BOX^{q}\sigma\vdash^{X,q\cdot s} x: \bone  \Pto \sigma$}
\end{prooftree}
\end{minipage}
\end{minipage}

\vskip4mm

\begin{center}
Union Rule
\end{center}
$
\AXC{$\Gamma \vdash^{X,r}t: \btwo \Pto \sigma$}
\AXC{$\Gamma \vdash^{X,r}t: \bthree \Pto \sigma$}
\AXC{$\bone\vDash^{X} \btwo \vee \bthree$}
\RL{$\TU$}
\TIC{$\Gamma \vdash^{X,r} t:\bone\Pto \sigma$}
\DP
$

\vskip4mm
\begin{center}
Abstraction/Application Rules
\end{center}
$
\AXC{$\Gamma, x:\FF s \vdash^{X,q} t: \bone\Pto\sigma$}
\RL{$\TLA$}
\UIC{$\Gamma \vdash^{X,q} \lambda x.t:\bone\Pto \FF s\To \sigma$}
\DP
$
\vskip4mm
$
\AXC{$\Gamma \vdash^{X,q} t: \btwo\Pto \BOX^{r}\sigma \To \tau$}
\AXC{$\Gamma \vdash^{X,r} u: \bthree \Pto \sigma$}
\AXC{$\bone\vDash \btwo\land \bthree$}
\RL{$\TA$}
\TIC{$\Gamma\vdash^{X,q} tu: \bone \Pto \tau$}
\DP
$
\vskip4mm
\begin{center}
Choice Rules
\end{center}

\begin{minipage}{\linewidth}
\begin{minipage}[t]{0.5\linewidth}
$
\AXC{$\Gamma \vdash^{X\cup\{a\},q} t:\btwo\Pto \sigma$}
\AXC{$\bone\vDash\bvar_{i}^{a}\land \btwo$}
\RL{$\TL$}
\BIC{$\Gamma \vdash^{X\cup\{a\},q} t\oplus^{i}_{a}u:  \bone\Pto \sigma$}
\DP
$
\end{minipage}
\hfill
\begin{minipage}[t]{0.45\linewidth}
$
\AXC{$\Gamma \vdash^{X\cup\{a\},q} u:\btwo\Pto \sigma$}
\AXC{$\bone\vDash\lnot \bvar_{i}^{a}\land \btwo$}
\RL{$\TR$}
\BIC{$\Gamma \vdash^{X\cup\{a\},q} t\oplus^{i}_{a}u:\bone\Pto \sigma$}
\DP
$
\end{minipage}
\end{minipage}

\begin{center}
Generator Rule
\end{center}

%
%

$
\AXC{$\Gamma \vdash^{X\cup\{a\},q} t:\btwo\Pto \sigma$}
\AXC{$\left( \mu(\model{\btwo_{i}}_{\{a\}})\geq r\right)_{i\leq k}$}
\AXC{$\bone \vDash \bigvee_{i}^{k}\bthree_{i}$}
\RL{$\TN$}
\TIC{$\Gamma \vdash^{X,q\cdot r} \nu a.t: \bone\Pto  \sigma$}
\DP
$
\medskip

\footnotesize{
(where $\bigvee_{i}\btwo_{i}\land \bthree_{i}$ is a weak $a$-decomposition of $\btwo$)}

\end{center}
\end{minipage}
}
}
\caption{Typing rules of $\TTT$.}
\label{fig:typingrules}
\end{figure}


The following standard property is proved in detail in Appendix \ref{appendix6}.
\begin{restatable}{proposition}{subjectreduction}\label{prop:subject}
If $\Gamma\vdash^{X,r}t:\bone \Pto \sigma$ and $t\rred u$, then
$\Gamma\vdash^{X,r}u:\bone \Pto \sigma$.
\end{restatable}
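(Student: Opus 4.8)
The plan is to prove Proposition~\ref{prop:subject} by induction on the structure of the derivation of $\Gamma\vdash^{X,r}t:\bone \Pto \sigma$, with a case analysis on the last typing rule applied and, orthogonally, on the kind of reduction step $t\rred u$ (whether it is a $\beta$-step or one of the permutative steps in Fig.~\ref{fig:permutations}, and whether it takes place at the root of $t$ or in a proper subterm). As usual for subject reduction proofs, the two genuinely structural ingredients are a \emph{generation} (or inversion) lemma, which from a derivation of $\Gamma\vdash^{X,r}t:\bone\Pto\sigma$ reconstructs the shape of the subderivations according to the outermost constructor of $t$, and a \emph{substitution lemma}, stating that if $\Gamma,x:\BOX^{q}\sigma\vdash^{X,r}t:\bone\Pto\tau$ and $\Gamma\vdash^{X,q}u:\bthree\Pto\sigma$ then $\Gamma\vdash^{X,r}t\{u/x\}:\bone'\Pto\tau$ for a suitable $\bone'$ with $\bone'\vDash^{X}\bone\land\bthree$ (the bookkeeping of the Boolean labels and of the index $q$ tracking the $\BOX$-annotation on the variable is the point that requires care). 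Both lemmas are proved by straightforward inductions and I would state them explicitly before the main argument.

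First I would treat the contextual cases: if $t\rred u$ because a proper subterm of $t$ reduces, the claim follows by applying the induction hypothesis to the corresponding premise and then re-applying the same typing rule, using that the side conditions of the rule (the external hypotheses $\bone\vDash^{X}\btwo\vee\bthree$, $\bone\vDash\btwo\land\bthree$, $\bone\vDash\bvar_i^a\land\btwo$, the measure conditions in $\TN$, etc.) do not mention the reduced subterm. Then I would handle the root $\beta$-step $(\lambda x.t')u'\to_\beta t'\{u'/x\}$: by generation on $\TLA$ and $\TA$ one extracts derivations of $\Gamma,x:\BOX^{r}\sigma\vdash^{X,q}t':\btwo\Pto\tau$ and $\Gamma\vdash^{X,r}u':\bthree\Pto\sigma$ with $\bone\vDash\btwo\land\bthree$, and the substitution lemma plus an application of the union rule $\TU$ (to adjust the label down to $\bone$) closes the case.

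The remaining cases are the permutative reductions, which are the real content. For each rule in Fig.~\ref{fig:permutations} one does generation to expose the derivation of the redex, then rebuilds a derivation of the reduct; the recurring subtlety is checking that the Boolean side conditions still hold after the rearrangement — e.g. for $\lambda x.(t\oplus_a^i u)\to_{\mathsf p}(\lambda x.t)\oplus_a^i(\lambda x.u)$ one pushes $\TLA$ underneath $\TL/\TR$ and the constraints $\bone\vDash\bvar_i^a\land\btwo$ are literally preserved; for the genuinely algebraic steps $(t\oplus_a^i u)\oplus_b^j v\to_{\mathsf p}(t\oplus_b^j v)\oplus_a^i(u\oplus_b^j v)$ and its mirror one reshuffles nested $\TL/\TR$ instances and must verify, using Boolean reasoning, that a label forcing $\bvar_i^a\wedge\bvar_j^b$ (resp.\ negations) can be recovered; and for $\nu b.(t\oplus_a^i u)\to_{\mathsf p}(\nu b.t)\oplus_a^i(\nu b.u)$ (with $a\neq b$) and $\nu a.t\to_{\mathsf p}t$ (with $a\notin\FN(t)$) one has to commute the generator rule $\TN$ past a choice rule, which is where the weak $a$-decomposition machinery and the measure side conditions $\mu(\model{\btwo_i}_{\{a\}})\ge r$ and the product index $q\cdot r$ must be re-derived. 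The main obstacle, accordingly, is the $\TN$-against-permutation interaction: one needs that a weak $b$-decomposition of the label of $t\oplus_a^i u$ can be massaged into weak $b$-decompositions of the labels of $t$ and of $u$ (and conversely), so that the measure conditions transfer, and that the generator's propagation of $q\mapsto q\cdot r$ is compatible with the choice rules leaving $q$ untouched — this is exactly the kind of verification carried out in full in Appendix~\ref{appendix6}, and I would reduce the proof body here to invoking the generation and substitution lemmas, doing the $\beta$- and easy permutative cases explicitly, and deferring the delicate $\nu$/$\oplus$ commutations to the appendix.
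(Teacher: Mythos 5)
Your plan matches the paper's proof in all essentials: the same substitution lemma (the paper's Lemma~\ref{lemma:betasub}, whose conclusion label is exactly $\bone\land\btwo$), label-weakening along entailment (Lemma~\ref{lemma:weak}, which your trick of applying $\TU$ with two identical premises also derives), a root-redex analysis for $\beta$, and a rule-by-rule check of the permutative reductions in which the $\nu$/$\oplus$ interactions --- transferring weak $a$-decompositions, the measure side conditions, and the product index $q\cdot r$ --- are indeed exactly where the work lies. One small correction to your taxonomy: the case $\nu a.t\to_{\mathsf p}t$ with $a\notin\FN(t)$ is not a commutation past a choice rule but a name-elimination, which the paper isolates as a separate lemma (Lemma~\ref{lemma:nota}) showing that a term with no free occurrence of $a$, typable over $X\cup\{a\}$ at index $r$ with label $\bone$, is typable over $X$ at index $r\cdot s$ for every $s$ against the $X$-part $\bigvee_i\bthree_i$ of a weak $a$-decomposition of $\bone$.
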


\begin{example}
The term $t= \nu a.(I\oplus_{a}^{0}\Omega)(I\oplus_{a}^{0}\Omega)$ is typable ``with probability $\frac{1}{2}$'', where $I=\lambda x.x$, reduces to a normal form with $\PROB \geq \frac{1}{2}$.
In fact, for all types $\sigma$ and $q\in[0,1]$, we can prove $\vdash^{\{a\},q} I\oplus^{0}_{a}\Omega: \bvar_0^a\Pto (\BOX^{q}\sigma)\To \sigma$ (Fig.~\ref{fig:example1}).
Then, we can use this
to deduce $\vdash^{\frac{1}{2}}t: \top\Pto \rho$ 
(Fig.~\ref{fig:example2}, where $\rho=(\BOX^{q}o)\To o$).
Instead, the term $t= \big(\nu a.(I\oplus_{a}^{0}\Omega)\big)\big(\nu a.(I\oplus_{a}^{0}\Omega)\big)$ is typable ``with probability $\frac{1}{4}$'' (Fig.~\ref{fig:example3}).

\end{example}

\begin{figure*}
\fbox{
\begin{minipage}{.97\textwidth}
\begin{subfigure}{0.95\textwidth}
\adjustbox{scale=0.8, center}{
$
\pi_{\sigma,a,q} \ = \  
\AXC{$\FN(\bot)\subseteq\{a\}$}
\RL{$\TID$}
\UIC{$x:\BOX^{q}\sigma\vdash^{\{a\},q}x: \top \Pto \sigma$}
\RL{$\TLA$}
\UIC{$\vdash^{\{a\},q} \lambda x.x: \top\Pto (\BOX^{q}\sigma)\To \sigma$}
\RL{$\TL$}
\UIC{$\vdash^{\{a\},q} (\lambda x.x)\oplus^{0}_{a}\Omega: \bvar_0^a\Pto (\BOX^{q}\sigma)\To \sigma$}
\DP
$
}
\caption{}
\label{fig:example1}
\end{subfigure}

\vskip4mm

\begin{subfigure}{0.95\textwidth}
\adjustbox{scale=0.8, center}{
$
\AXC{$\pi_{\rho,a,1}$}
\noLine
\UIC{$
\vdash^{\{a\},1} I\oplus^{0}_{a}\Omega: \bvar_0^a\Pto (\BOX^{1}\rho)\To \rho$}
\AXC{$\pi_{o ,a,1}$}
\noLine
\UIC{$\vdash^{\{a\},1} I\oplus^{0}_{a}\Omega: \bvar_0^a\Pto \rho$}
\RL{$\TA$}
\BIC{$\vdash^{\{a\},1} (I\oplus^{0}_{a}\Omega)(I\oplus^{0}_{a}\Omega):
\bvar_0^a\Pto \rho$}
\AXC{$\mu(\model{\bvar_0^a}_{\{a\}})\geq 1/2$}
\RL{$\TN$}
\BIC{$\vdash^{\emptyset,1/2} \nu a.(I\oplus^{0}_{a}\Omega)(I\oplus^{0}_{a}\Omega):
\top\Pto \rho$}
\DP
$
}
\caption{}
\label{fig:example2}
\end{subfigure}

\vskip4mm

\begin{subfigure}{0.95\textwidth}
\adjustbox{scale=0.7, center}{
$
\AXC{$\pi_{\rho,a,1/2}$}
\noLine
\UIC{$\vdash^{\{a\},1/2} I\oplus^{0}_{a}\Omega: \bvar_0^a\Pto (\BOX^{1/2}\rho)\To \rho$}
\AXC{$\mu(\model{\bvar_0^a}_{\{a\}})\geq 1/2$}
\RL{$\TN$}
\BIC{$\vdash^{\emptyset,1/4}\nu b. (I\oplus^{0}_{b}\Omega): \top\Pto (\BOX^{1/2}\rho)\To \rho$}
\AXC{$\pi_{o ,a,1}$}
\noLine
\UIC{$\vdash^{\{a\},1} I\oplus^{0}_{a}\Omega: \bvar_0^a\Pto \rho$}
\AXC{$\mu(\model{\bvar_0^a}_{\{a\}})\geq 1/2$}
\RL{$\TN$}
\BIC{$\vdash^{\emptyset,1/2}\nu a. (I\oplus^{0}_{a}\Omega):
\top\Pto \rho$}
\RL{$\TA$}
\BIC{$\vdash^{\emptyset,1/4} \nu a.(I\oplus^{0}_{a}\Omega)(\nu a.(I\oplus^{0}_{a}\Omega)):
\top\Pto \rho$}
\DP
$
}
\caption{}
\label{fig:example3}
\end{subfigure}
\end{minipage}
}
\caption{\small{Type derivations in $\TTT$ (for readability we omit some external hypotheses, as they can be guessed from the context.)}}
\label{fig:stlcexamples}
\end{figure*}

Observe that in $\TTT$ one can type terms which are \emph{not} normalizing. For instance, we can prove
$\vdash^{\emptyset,1} \Omega: \bot \Pto \sigma$ and 
$\vdash^{\emptyset, 1}\lambda x.x\Omega : \top \Pto \sigma$,
 for any type $\sigma$,  
where
$\Omega=(\lambda x.xx)\lambda x.xx$. 
This shows that our type system provides a rather different kind of information with respect to the one in \cite{DLGH}, which ensures strong normalization. 
In fact, a typing judgement
$\Gamma\vdash^{X,r} t:\bone\Pto \sigma$ should be interpreted as some quantitative statement about the \emph{probability} that $t$ reaches a normal form. But how can we define this probability, precisely?

In usual randomized $\lambda$-calculi, program execution is defined so as to be inherently probabilistic: for example a term $t\oplus u$ can reduce to either $t$ or $u$, with probability $\frac{1}{2}$. In this way, chains of reduction can be described as \emph{stochastic Markovian sequences} \cite{Puterman1994}, leading to formalize the idea of \emph{normalization with probability} $r\in[0,1]$ (see \cite{Bournez2002}).
%
Instead, in the language $\Lnu$ reduction has nothing probabilistic by itself (it is even confluent), so we have to look for another approach. 
Since a $\PNF$, call it $t$, can be interpreted, \emph{statically}, as a  distribution of pseudo-values, one can ask what is the probability that a pseudo-value randomly found from $t$ {is} in normal form. This leads to the following:

\begin{definition}
For any $t\in \mathcal T$ and $r\in[0,1]$, $t$ is \emph{in normal form with $\PROB\geq r$} when
$$
\sum_{v\in \mathcal{HN}} \mathcal D_{t}(v) \geq r.
$$
\end{definition}
\noindent
For example, the term $\nu a. \big(\Omega \oplus_{a}^{0}( \lambda x.x\Omega)\big)\oplus_{a}^{1}(\lambda x.x)$ is in normal form with probability $\frac{3}{4}$: three of the four possible choices for $(a,0)$ and $(a,1)$ lead to a head normal form. Instead, the term $\nu a. \Big(\Omega\oplus_{a}^{0}\big( \nu b. \Omega \oplus_{b}^{0} (\lambda x.x)\big)\Big)$ is in normal form with probability $\frac{1}{4}$: only one possible choice for $(a,0)$ and one possible choice for $(b,0)$ lead to a head normal form.

Our main result is that typing in $\TTT$ captures reduction to such probabilistic normal forms.

%
%
%
\begin{restatable}{theorem}{normalization}\label{thm:normalization}
If $\Gamma \vdash^{X,r}t:\bone\Pto \sigma$ holds, then for all $f\in \model\bone_{X}$, 
$\pi^{f}(t)$ reduces to a term which is in normal form with $\PROB\geq r$.
%
\end{restatable}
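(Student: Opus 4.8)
The plan is to run a reducibility (logical-relations) argument in the style of Tait and Girard, the twist being that reducibility carries a lower bound on a \emph{success probability} read off from the static distribution $\mathcal D_{(-)}$ attached to permutative normal forms. For each type $\sigma$ and each $r\in\mathbb{Q}_{[0,1]}$ I would define, by induction on $\sigma$, a set $\RED^{r}_{\sigma}$ of name-closed terms: $t\in\RED^{r}_{o}$ iff $t\rred t'$ for some $\PNF$ $t'$ with $\sum_{v\in\mathcal{HN}}\mathcal D_{t'}(v)\geq r$; and $t\in\RED^{r}_{\BOX^{q}\sigma\To\tau}$ iff $tu\in\RED^{r}_{\tau}$ for every name-closed $u\in\RED^{q}_{\sigma}$. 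Then I would establish the usual ``candidate'' properties, now indexed by $r$: each $\RED^{r}_{\sigma}$ is antitone in $r$, and $\RED^{0}_{\sigma}$ contains every name-closed term (each term is $\to_{\mathsf p}$-normalizing, by Proposition~\ref{prop:confluence}); each $\RED^{r}_{\sigma}$ is closed under $\rred$ and, in the situations that matter, under $\rred$-expansion, using confluence and the observation that $\rred$ never lowers the $\mathcal{HN}$-mass reachable from a $\PNF$ (indeed $\to_{\mathsf p}$ preserves $\mathcal D$, and $\to_{\beta}$ only moves pseudo-values towards head normal form); every neutral term (a variable applied to reducible arguments) belongs to $\RED^{r}_{\sigma}$ for all $r$; and, crucially, $t\in\RED^{r}_{\sigma}$ already implies $t\rred t'$ for some $\PNF$ $t'$ with $\sum_{v\in\mathcal{HN}}\mathcal D_{t'}(v)\geq r$ (immediate at $o$; for arrows, feed $t$ a fresh variable and use that application does not lower $\mathcal{HN}$-mass). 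This last property is exactly the conclusion of Theorem~\ref{thm:normalization} once reducibility of all typable terms is in hand.

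The core is a \emph{generator lemma}, proved by induction on $\sigma$: if $M$ is a term whose only free name is $a$, if $E\subseteq 2^{\omega}$ is measurable with $\mu(E)\geq r$, and if the term obtained from $M$ by resolving $a$ along $h$ lies in $\RED^{q}_{\sigma}$ for every $h\in E$, then $\nu a.M\in\RED^{q\cdot r}_{\sigma}$. For $\sigma=o$ one first $\to_{\mathsf p}$-normalizes $\nu a.M$ to $\nu a.T$, with $T$ a choice tree over $a$ whose leaves are name-closed $\PNF$s; the leaf coming from any $h\in E$ is the $\PNF$ of $M$ resolved along $h$, hence lies in $\RED^{q}_{o}$ and so reduces to a $\PNF$ of $\mathcal{HN}$-mass $\geq q$; replacing those leaves and unfolding the definition of $\mathcal D$ yields $\mathcal D_{\nu a.T'}(\mathcal{HN})=\int_{2^{\omega}}\mathcal D_{\ell(h)}(\mathcal{HN})\,d\mu(h)\geq\int_{E}q\,d\mu(h)=q\cdot r$, where $\ell(h)$ is the leaf selected by routing $h$. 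For $\sigma=\BOX^{q'}\sigma'\To\tau$ one uses $(\nu a.M)u\rred\nu a.(Mu)$ (renaming so $a\notin\FN(u)$), notes that resolving $a$ in $Mu$ along $h\in E$ gives a term of the form $(\ldots)u\in\RED^{q}_{\tau}$ for every $u\in\RED^{q'}_{\sigma'}$, and invokes the inductive hypothesis at $\tau$. I expect this generator lemma to be the main obstacle: it requires the precise structure theorem for permutative normal forms with a single free name, the compatibility of taking permutative normal forms and of $\mathcal D$ with routing over $a$, and a Fubini-style measure computation — and it is exactly here that the premise $\mu(\model{\btwo_{i}}_{\{a\}})\geq r$ of rule $\TN$ supplies the extra factor $r$.

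With these ingredients, the main lemma is proved by induction on the typing derivation: if $\Gamma\vdash^{X,r}t:\bone\Pto\sigma$, then for every substitution $\gamma$ sending each declaration $x:\BOX^{q_{x}}\sigma_{x}$ of $\Gamma$ to a name-closed term $\gamma(x)\in\RED^{q_{x}}_{\sigma_{x}}$, and for every $f\in\model{\bone}_{X}$, the term $\pi^{f}(t\gamma)$ lies in $\RED^{r}_{\sigma}$ (note that $\pi^{f}$ commutes with $\gamma$, the $\gamma(x)$ being name-closed). Rules $\TB$, $\TID$ and the union rule $\TU$ follow from the candidate properties and the semantics of $\vDash$; $\TLA$ and $\TA$ are the standard abstraction and application cases, using $\rred$-expansion closure and extending $\gamma$, with the probability superscript on the argument matching the $\BOX^{q}$ annotation inside the arrow type; for the choice rules $\TL$ and $\TR$, the hypothesis relating $\bone$ to $\bvar_{i}^{a}$ forces $\pi^{f}$ (for $f\in\model{\bone}_{X\cup\{a\}}$) to project $t\oplus^{i}_{a}u$ onto precisely the component typed by the premise, while $f$ itself lies in that premise's event $\model{\btwo}_{X\cup\{a\}}$, so the inductive hypothesis applies; and the generator rule $\TN$ is handled by picking $i$ with $f\in\model{\bthree_{i}}_{X}$ (possible since $f\in\model{\bigvee_{i}\bthree_{i}}_{X}$), setting $E=\model{\btwo_{i}}_{\{a\}}$ (so $\mu(E)\geq r$ by the rule's premise, and $f+h\in\model{\btwo}_{X\cup\{a\}}$ for $h\in E$ by the $a$-decomposition), and applying the generator lemma to $M=\pi^{f}(t\gamma)$. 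Finally, Theorem~\ref{thm:normalization} is the instance $\gamma=\mathrm{id}$ of the main lemma (each variable, being neutral, lies in $\RED^{q_{x}}_{\sigma_{x}}$), combined with the candidate property $\RED^{r}_{\sigma}\subseteq\{u:u\rred u'\text{ for some }\PNF\ u'\text{ with }\sum_{v\in\mathcal{HN}}\mathcal D_{u'}(v)\geq r\}$.
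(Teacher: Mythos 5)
Your proposal is correct and follows essentially the same route as the paper's proof: probability-indexed reducibility candidates with closure under expansion and neutrality, a generator lemma for $\nu a.$ that converts the measure bound $\mu(\model{\btwo_{i}}_{\{a\}})\geq r$ on the weak $a$-decomposition into the extra factor $r$ via the $(\mathcal T,a)$-tree structure of permutative normal forms, and then induction on the typing derivation. The only difference is presentational: you define reducibility on name-closed terms and apply $\pi^{f}$ before entering the candidate, whereas the paper indexes $\RED^{X,r}_{\sigma}(S)$ by a measurable set $S$ of valuations — and the paper's own Lemma~\ref{lemma:efs} shows the two formulations are interdefinable.
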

\noindent
Theorem~\ref{thm:normalization} is proved by adapting the standard method of \emph{reducibility predicates}. Just to give the reader an idea of how this method can be adapted to counting simple types, we illustrate the definition, for $\sigma$, $X$, $r\in[0,1]$, and $S\in \mathcal B((2^{\omega})^{X}) $ of the set 
$\RED_{\sigma}^{X,r}(S) \subseteq \Lnu$, given as follows:  
\begin{align*}
\RED_{o}^{X,r}(S)&= \{ t \mid   \forall f\in S   \ \pi^{f}(t)\text{ reduces to a normal form with }\PROB\geq r\}\\ 
\RED_{(\BOX^{q}\sigma)\To \tau}^{X,r}(S)& =\{t \mid   
\forall S'\subseteq S,  \forall u\in \RED_{\sigma}^{X,q}(S'), \  tu\in \RED_{\tau}^{X,r}(S')
\}
\end{align*}
One can then establish that (1) whenever $t\in \RED_{\sigma}^{X,r}(S)$, for all $f\in S$, $\pi^{f}(t)$ reduces to a term in normal formal with $\PROB\geq r$, and (2) that whenever $\Gamma\vdash^{X,r}t:\bone\Pto \sigma$ is derivable, then $t\in \RED_{\sigma}^{X,r}(\model{\bone}_{X})$. All details can be found in the Appendix~\ref{appendix6}.

\subsection{The Curry-Howard Correspondence for $\TTT$}

In the usual correspondence between the simply typed $\lambda$-calculus and propositional logic, the typing derivations can be seen as the result of \emph{decorating} logical proofs with $\lambda$-terms. 
In a similar way, the type derivations of $\TTT$ can be seen as the result of decorating (a certain class of) $\PPL$-proofs with terms from $\Lnu$. 
Actually, we establish the correspondence by considering proofs
in a \emph{multi-succedent} system for $\PPL$.
This sequent calculus is equivalent to the one presented in Section~\ref{section4} and is provably sound and complete with respect to the counting semantics.
Specifically, it is obtained as a straightforward multi-succedent
extension of the given, single-succedent version.\footnote{For example,
the single-succedent version of rule $\AxO$ 
turns into the following multi-succedent rule:
\begin{prooftree}
\AxiomC{$\vDash a\in X$}
\AxiomC{$\bone\vDash^{X}\bvar_{i}^{a}$}
\RightLabel{$\AxO$}
\BinaryInfC{$\vdash^{X} \Delta, \bone\Era\atom{i}_{a}$}
\end{prooftree}
Remarkably, it is equally possible to establish the correspondence between $\lambda\PPL$ and $\PPL$, 
by considering the single-succedent rule system.
Nevertheless, the embedding for the abstraction rule becomes somehow trivial.
}

To give the reader a glimpse of how a translation from $\TTT$ to $\PPL$ can be defined, we first associate simple types $\sigma,\tau$ with formulas $\sigma^{*},\tau^{*}$ as follows:
\begin{align*}
o^{*}& = \FLIP  \\
 \qquad  (\FF s\To \sigma)^{*} &= \lnot \FF s^{*} \vee \sigma^{*} \\
(\BOX^{q}\sigma)^{*} & = \BOX^{q}_{a}\sigma^{*}
\end{align*}
where $\FLIP$ is the (classically provable) closed formula $\BOX^{1}_{a}(\mathbf 0_{a}\lor \lnot \mathbf 0_{a})$, and where $a$ is some fixed name. 
The idea now is that a type derivation of $\Gamma \vdash^{X,r} t: \bone \Pto \sigma$ yields a proof in $\PPL$ of the labelled sequent 
$\vdash^{X}\bone \Pto \BOX^{r}_{a}\sigma^{*}, \bot\Pfrom \Gamma^{*}$, 
where $\bot\Pfrom \Gamma^{*}$ is 
$\bot\Pfrom (\FF{s}_{1})^{*},\dots, \bot\Pfrom (\FF{s}_{n})^{*}$, 
for $\Gamma=\{\FF{s}_{1}:\sigma_{1},\dots, \FF{s}_{n}:\sigma_{n}\}$, and $a$ is some fresh name. 

One can check then that the abstraction rule translates into one of the disjunction rules of $\PPL$ (together with an application of Lemma \ref{lemma:commutations}), and that the application rule translates into the cut-rule. 
Furthermore, the generator rule of $\TTT$ essentially translates into the rule $\Rbr$ of $\PPL$ (see the Appendix~\ref{appendix6} for all details).
%
%
%
%
%
More generally, all rules of $\TTT $ translate into the application of one or more rules of $\PPL$, leading to the following: 

\begin{proposition}
If $\Gamma\vdash^{X,r}t: \bone \Pto \sigma$ is derivable in $\TTT$, then 
$\vdash^{X}_{\PPL}\bone \Pto \BOX^{r}_{a}\sigma^{*}, \bot\Pfrom \Gamma^{*}$.

\end{proposition}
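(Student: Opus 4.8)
The plan is to prove the proposition by induction on the derivation of $\Gamma \vdash^{X,r} t : \bone \Pto \sigma$ in $\TTT$, showing that each typing rule can be simulated by a bounded number of rules in the multi-succedent version of $\PPL$ (using the admissible cut-rule from Remark~\ref{rem:cutrule} and the commutation lemma, Lemma~\ref{lemma:commutations}, where needed). The translation $(-)^{*}$ on types is already fixed in the excerpt; the induction hypothesis is exactly the statement of the proposition for immediate sub-derivations, and one checks that the target sequent of each rule is derivable from the targets of its premisses.

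First I would dispatch the base cases. For the $\TB$ rule, the conclusion $\vdash^{X} \bot \Pto \BOX^{r}_{a}\sigma^{*}, \bot\Pfrom \Gamma^{*}$ follows immediately from the counting rule $\Rmur$ of $\PPL$ (since $\mu(\model{\bot}_{X})=0$), which already yields the labelled formula $\bot\Pto \BOX^{r}_{a}\sigma^{*}$ in the succedent; the remaining formulas are added by weakening. For $\TID$, the conclusion involves the context declaration $x:\BOX^{q}\sigma$ and the exponent $q\cdot s$; here one uses the axioms together with a short derivation relating $\bone\Pto \BOX^{q\cdot s}_{a}\sigma^{*}$ and $\bot\Pfrom \BOX^{q}_{a}\sigma^{*}$ — essentially the fact that $\model{\BOX^{q}_{a}\sigma^{*}}\subseteq \model{\BOX^{q\cdot s}_{a}\sigma^{*}}$ when $q\cdot s \le q$, which is provable in $\PPL$ by completeness. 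For the inductive cases: the abstraction rule $\TLA$ translates, as the excerpt states, into a disjunction rule of $\PPL$ applied to $(\FF s\To\sigma)^{*}=\lnot\FF s^{*}\vee\sigma^{*}$, combined with one use of Lemma~\ref{lemma:commutations} to move the $\BOX^{r}_{a}$ prefix past the disjunction; the application rule $\TA$ translates into the cut-rule of Remark~\ref{rem:cutrule}, matching the two premisses' succedents against $\btwo\Pto\BOX^{q}_{a}(\lnot\BOX^{r}\sigma^{*}\vee\tau^{*})$ and $\bthree\Pto\BOX^{r}_{a}\sigma^{*}$; the union rule $\TU$ translates into $\Rcup$; and the choice rules $\TL,\TR$ translate into the named axioms $\AxO$ together with propositional manipulation of the Boolean label $\bvar_i^a\wedge\btwo$. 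Finally, the generator rule $\TN$ translates into $\Rbr$ of $\PPL$: a weak $a$-decomposition of $\btwo$ together with the side conditions $\mu(\model{\btwo_i}_{\{a\}})\ge r$ is precisely what $\Rbr$ consumes, and the exponent $q\cdot r$ on the conclusion reflects the $\BOX^{r}_{a}$ that $\Rbr$ introduces on top of the inductively obtained $\BOX^{q}_{a}$, followed by a collapsing of the two nested counting quantifiers.

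The main obstacle will be the bookkeeping around the exponent parameter $q$ in the judgements $\Gamma\vdash^{X,q}t:\bone\Pto\sigma$ and the way it interacts with the $\BOX^{r}_{a}$ in the target formula: one must verify that the nesting $\BOX^{q}_{a}\BOX^{r}_{a}$ arising in the cut for $\TA$ and in $\TN$ can always be soundly replaced by a single $\BOX^{q\cdot r}_{a}$ (this is the content of a composition law for iterated counting quantifiers with the same name, which follows from the semantics and completeness, but should be isolated as a lemma), and that the weak $a$-decompositions used in $\TTT$ (where the $\btwo_i$ need only be satisfiable, not pairwise exclusive) can be refined into genuine $a$-decompositions as required by $\Rbr$. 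I would also be careful with the context formulas $\bot\Pfrom \Gamma^{*}$: these must be threaded unchanged through every rule, so weakening on the succedent (admissible in the multi-succedent calculus) is used silently throughout. Once these auxiliary facts are in place, the induction is a routine, rule-by-rule verification, and the details — deferred to Appendix~\ref{appendix6} in the paper — consist of writing out each translated derivation explicitly.
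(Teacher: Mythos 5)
Your proposal follows essentially the same route as the paper's proof in Appendix~\ref{appendix6}: a rule-by-rule induction translating $\TLA$ into a disjunction rule plus Lemma~\ref{lemma:commutations}, $\TA$ into the cut-rule, $\TU$ and the choice rules into $\Rcup$-style label weakening, and $\TN$ into (an admissible form of) $\Rbr$ followed by a collapse of the two nested counting quantifiers; the two auxiliary facts you flag are precisely the paper's Lemma~\ref{lemma:count} (refining weak $a$-decompositions for $\Rbr$) and Lemma~\ref{lemma:qs} (the composition law, which in the paper concerns two \emph{distinct} names with the inner formula independent of the outer one, not two boxes with the same name). The only other slip is cosmetic: for $\TID$ the paper instead uses that $\sigma^{*}$ is always valid (since $o^{*}=\FLIP$ is), isolated as Lemma~\ref{lemma:true}, though your case split on whether $\model{\BOX^{q}_{a}\sigma^{*}}$ is empty also yields a valid sequent and hence derivability by completeness.
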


Is it possible to go in the other direction, that is, from proofs to programs? 
Recall that simply typed systems for the $\lambda$-calculus usually correspond to proof systems for \emph{minimal logic}. As $\PPL$ was defined as a \emph{classical} system, one can hardly hope to translate \emph{any} proof from $\PPL$ into a $\lambda$-term. However, it is possible to define a minimal fragment of $\PPL$ in such a way that proofs in $\IPPL$ translate into typing derivations of $\TTT$.
For the interested reader, we present this system, called $\IPPL$, in the Appendix~\ref{appendix6}.

\begin{remark}
The Curry-Howard correspondence usually includes also a correspondence between $\lambda$-term reduction and normalization procedures.
We already observed that the cut-rule is admissible in $\PPLc$ and $\PPL$. However, the investigation of cut-elimination procedures for these systems is among the many aspects of counting quantifiers that we still have to explore.
\end{remark}

\section{Arithmetic}\label{section7}

This paper introduced a new quantified propositional logic, 
capable of characterizing Wagner's counting hierarchy and 
of inspiring the design of type systems for probabilistic 
$\lambda$-calculi. 
But this is not the end of the story. 
As we will further discuss in the following paragraphs,
counting quantifiers also make sense in other settings, 
particularly in the context of first-order logic, 
as witnessed by investigations on the descriptive complexity 
of counting classes~\cite{Kontinen}.
In this section, we want to spell out, without going into the details,
how the ideas behind counting propositional logic make sense in
arithmetic too. 
Remarkably, the resulting logical system becomes much
more powerful and potentially very interesting from a computational
viewpoint.\footnote{For further details, see~\cite{ADLP}.}

Observe first although it is convenient and elegant to 
formulate the semantics of $\PPL$ by referring to the 
Cantor space, this might look unnecessary.
In fact, the nature of the underlying system makes it
\emph{impossible} to consider formulas with more than 
a constant number of indexes. 
But what if natural numbers become first-class citizens, 
like in arithmetic?
In fact, one may naturally view $\PPL$-quantification as a quantification \emph{over a first-order predicate}, 
$\BOX_{P}^{q}\fone$ such that occurrences of the atoms 
$\atom i_{a}$ in $\fone$ are replaced by the first-order 
atom $\predone(t)$, where $t$ is now an arithmetical term.
It would make perfect sense then 
to write sentences as the following one:
$$
\BOX^q_\predone \forall x.\predone(x).
$$
For obvious reasons, the formula is \valid \ only when $q$ is $0$, since the infinite sequences in $\twoOm$ in which
$0$ never occurs form a singleton, which has measure $0$. Dually,
the following formula turns out to be \valid \ for every value of $q$:
$$
\BOX^q_\predone \exists x.\predone(x).
$$

More interesting formulas can also be built. For example,
let $F(x,y)$ and $G(x,y)$ be arithmetical formulas expressing respectively that 
``$y$ is strictly smaller than the length of $x$'' (where $x$ is seen as a code for a finite binary sequence), and 
``the $y+1$-th bit of $x$ is 1''. Then, the following formula
\begin{equation*}\label{eq:monkey}
\forall x. \BOX^{1}_\predone \exists y.\forall z. F(x,z)\rightarrow \big(F(x,z)\land G(x,z) \leftrightarrow \predone(y+z)\big)
\end{equation*}
\noindent 
encodes the so-called \emph{infinite monkey theorem}, a classical
result from probability theory that states that a monkey randomly
typing on a keyboard has probability $1$ of ending up writing
 the \emph{Macbeth} (or any other fixed string) sooner or later: 
let $x$ be a binary encoding of the \emph{Macbeth}, 
and let the predicate $\predone(n)$ describe the random
value typed by the monkey at time $n$; 
then, with probability 1, there exists a time $y$ after which 
$\predone(n)$ will evolve exactly like $x$.

But could we give semantics to the resulting formulas? 
The answer is affirmative and surprising. 
Indeed, we can do it precisely along the same lines we 
followed in Section~\ref{section3}: 
instead of interpreting a predicate symbol, $\predone$,
as some function from natural numbers to truth-values, 
we may directly interpret $\predone$ as the set of 
\emph{all} functions yielding a measurable set 
$\model{\predone}\in \mathcal B(\twoOm)$, 
precisely as in $\PPLc$.
Since all other connectives of first-order arithmetic translate 
through the operations of a $\sigma$-algebra 
(with \emph{countable} unions and intersections playing a
fundamental role in interpreting the arithmetic quantifiers), 
any arithmetic formula $\fone$ can be interpreted
as a measurable set, $\model\fone\in \mathcal B(\twoOm)$. 
We are no longer in the finite world of propositional logic, 
so it makes no sense to \emph{count} such sets $\model\fone$; 
yet, it still makes perfect sense to \emph{measure} them. 
In particular, when $\fone$ is a formula with a unique 
predicate variable $\predone$, we can interpret
$\BOX^{q}_\predone \fone$ as being true precisely when 
$\mu(\model\fone)\geq q$.

In this way we can even \emph{prove} that formulas, 
such as the one formalizing the infinite monkey theorem, are \valid,
and their proofs will rely on some (non-trivial) facts from
measure theory. 
Moreover, along similar lines to what was shown in
Section~\ref{section6}, it should be possible to capture
some operational properties of probabilistic
programs, as for example~\emph{almost sure termination}, 
by means of arithmetical formulas. 
In sum, with counting propositional logic we just touched the 
tip of the iceberg of a seemingly new and exciting 
correspondence between logic and computation theory, 
that deserves to be explored much further.\footnote{These ideas are at the basis of the development of a first-order logic with measure quantifiers, called $\mathsf{MQPA}$, as presented in~\cite{ADLP}, where some first results are also studied.}

\section{Related Works}\label{section8}

The counting hierarchy was (independently) defined 
in the 1980s by Wagner~\cite{Wagner84,Wagner,Wagner86} and
by Pareberry and Schnitger~\cite{PareberrySchnitger}.
It was conceived as Meyer and Stockmeyer's polynomial 
hierarchy~\cite{MeyerStockmeyer72,MeyerStockmeyer73},
which allows to express the complexity of 
many natural problems in which counting is involved, 
and which are not ``represented'' by $\PH$.
There are two main, equivalent~\cite{Toran91}, ways
of characterizing the counting hierarchy: 
the original characterization, in terms of 
\emph{alternating quantifiers},~\cite{Wagner}, 
and the oracle characterization~\cite{Toran88,Toran91}.
To the best of our knowledge, $\PPL$ is novel in its being an
enrichment of propositional logic by means of counting quantifiers. 
However, strong similarities exist between our $\BOX$ and
Wagner's \emph{counting quantifier}~\cite{Wagner},
and these similarities have been exploited in Section~\ref{section5}.
Furthermore, there is a worth-noticing analogy between the role of 
the operator $\BOX$ and usual propositional quantifiers: 
remarkably, Meyer and Stockmeyer~\cite{MeyerStockmeyer73} 
not only showed that $\TQBF$, the quantitative analogue of $\PPL$, 
is $\PSpace$-complete, but also that $\Sigma^{P}_{k}$-complete
problems in the polynomial hierarchy correspond to 
the satisfiability problem for quantified Boolean languages with 
$k$–1 quantifiers alternation, namely $\TQBF_{k}$.

It is worth mentioning that in the same years some
other ``probabilistic'' quantifiers were introduced. 
For example, Papadimitriou~\cite{Papadimitriou} characterizes 
$\PPSpace$ by alternating standard quantifiers with a \emph{probabilistic} quantifier, 
which expresses the fact  that more than the half of the strings of a
certain length satisfy the underlying predicate. 
Zachos and Heller characterized $\BPP$ by means of a 
\emph{random} quantifier~\cite{ZachosHeller}, and Zachos also
considered the relationship between classical and
probabilistic classes introducing other quantifiers, such as
the \emph{overwhelming} the
and \emph{majority} ones~\cite{Zachos88}. 
Remarkably, all these works concern counting quantification on 
(classes of) languages, 
rather than \emph{stricto sensu} logical operators. 
An exception is represented by Kontinen's work~\cite{Kontinen}, 
in which counting quantifiers are second-order quantifiers 
in an otherwise first-order logic, 
much in the style of descriptive complexity. 
In other words, one formula stands for a complexity
bounded algorithm, while here formulas are taken as data.

\bigskip
Despite the extensive literature on logical systems enabling 
(in various ways and for different purposes) some forms of probabilistic
reasoning, there is not much about logics tied to computational aspects, 
as $\PPL$ is. 
Most of the recent logical formalisms have been
developed in the realm of modal logic, starting from the work by
Nilsson~\cite{Nilsson86,Nilsson93}.
In particular, in the 1990s, 
Bacchus~\cite{Bacchus, Bacchus90a, Bacchus90b}
defined \emph{probability terms} by means of a modal operator $\prob$,
which effectively computes the probability of certain events,
and \emph{probability formulas}, 
which are equalities between probability
terms and numbers, such as $\prob(\alpha) = \frac{1}{2}$ 
(this, by the way, is not too different from the formula of 
$\PPLc$ $\BOX^{\frac{1}{2}}\fone$;
noticeably, Bacchus' $\prob$ yields terms, whereas $\BOX$ yields
formulas). 
In the same years also Fagin, Halpern and Megiddo~\cite{FHM}
 introduced an interesting form of probability logic (which was later
further studied in~\cite{FH94,Halpern90,Halpern03}).
In this logic, probability spaces are the underlying model, 
and can be accessed through so-called \emph{weight terms}.

Another class of probabilistic modal logics have been designed to model Markov chains and similar structures. Some of these logics are probabilistic extensions of \textbf{CTL}, the standard logic for model-checking (see, for example,~\cite{KOZEN1981328,Hansson1994,LEHMANN1982165}).
A notable example is \emph{Riesz modal logic}~\cite{lmcs:6054}, which admits a sound and complete proof system.
Differently from $\PPL$, in these logics modal operators have a dynamical meaning, as they describe transitions in a MDP.

Concerning proof theory, axiomatic proof systems have been provided 
for both logics~\cite{Bacchus,FHM}, in connection with well-known
modal systems such as \textbf{KD45}~\cite{Bacchus90c,FH91}. 
With the sole exception of Riesz modal logic, we are
not aware of sequent calculi for probability logic. 
By the way, our one-sided, sequent calculi are inspired by labelled
calculi such as \textbf{G3K*} and \textbf{G3P*}, 
as presented for example in~\cite{NegrivonPlato,GNS}, 
but, differently from them, our rules are not invertible.

\bigskip
Probabilistic models and randomized computation are pervasive 
in many areas of computer science and programming. 
From the 1950s on, the interest for probabilistic algorithms and
models started 
spreading~\cite{LMSS,Davis61,Carlyle,Rabin,Santos68,Santos69b,Simon81}
and probabilistic computation has been intensely 
studied by the TCS research 
community~\cite{Rabin,Santos69,ShapiroMoore,Santos71}. 
Nowadays, well-defined computational models, 
such as randomized variations on 
probabilistic automata[56], 
(both Markovian and oracle) Turing machines~\cite{Santos69,Santos71,Gill74,Gill77}, 
and $\lambda$-calculi, are available.

Also probabilistic programming languages have been extensively
studied from the 1970s on, and are still under
scrutiny.
Specifically,  foundational work on probabilistic functional languages
is \cite{SahebDjaromi}, where Saheb-Djahromy's the typed, 
higher-order $\lambda$-calculus \textsf{LCF} is introduced. 
Other early introduced calculi are, for example, 
Ramsey and Pfeffer's stochastic $\lambda$-calculus, \cite{RamseyPfeffer} and Park's $\lambda_{\gamma}$, \cite{Park}, 
see also \cite{KMAP}, \cite{PlessLuger}, \cite{PPT}. 
Operational semantics and properties of higher-order 
probabilistic programs have been studied in \cite{DLZ}.
A particularly fruitful approach to the study of higher-order probabilistic
computation consists in enriching general $\lambda$-calculus with a
probabilistic choice operator $\oplus$~\cite{DPHW,DLZ,EPT}. 
For these calculi, type systems ensuring various form of termination
properties have recently been 
introduced~\cite{DalLagoGrellois,BreuvartDalLago,ADLG}.
Our calculus is particularly inspired by the one introduced
in~\cite{DLGH}. 
Moreover, our probabilistic definition of normal forms is 
somehow reminiscent of the operational 
semantics from~\cite{leventis_2019}. 
In all these cases, however, no logically inspired
type system or Curry-Howard style correspondence is currently known.

\section{Conclusion}\label{conclusion}

The aim of the present paper is not the introduction of counting
propositional logic \emph{per se}. 
After all, the idea of switching from qualitative to quantitative notions 
of quantification is not ours and has already been proved useful.
Indeed, the main contribution of this work is to define a logical
counterpart for \emph{randomized} computation and, so, to
generalizing some standard results developed in the deterministic framework. 
Indeed, in Section~\ref{section5}, we have shown that counting 
quantifiers can play nicely with propositional logic in characterizing 
the counting hierarchy, while, in Section~\ref{section6}, we have
designed type systems for the lambda-calculus in the
spirit of the Curry-Howard correspondence. 
In this way, logic can somehow catch up with some old 
and recent computational complexity and
programming language theory results.

Many problems and questions are still open, or could not have 
been described in this paper, due to space reasons.
For example, the proof theory of $\PPL$ has just been briefly delineated,
but the dynamics of the introduced formal systems 
certainly deserves to be investigated, despite not being
crucial for the results in this paper. 
The potential interest of injecting counting quantifiers 
(or, better, measure quantifiers) into the language of arithmetic has 
been briefly described in Section~\ref{section7}, 
and implicitly suggests several intriguing problems, 
which are still open. 
A first-order logic endowed with measure-quantifiers, 
called $\textbf{\textsf{MQPA}}$, has been presented in~\cite{ADLP}, 
where some first representation results concerning 
its connection with randomized computation has been established.  

\bibliographystyle{plain}
\bibliography{CPL}

\begin{thebibliography}{10}

\bibitem{AllenderWagner}
E.W. Allender and K.W. Wagner.
\newblock Counting hierarchies: Polynomial time and constant depth circuits.
\newblock In {\em Current Trends in Theoretical Computer Science}, pages
  469--483, 1993.

\bibitem{ADLP}
M.~Antonelli, U.~Dal~Lago, and P.~Pistone.
\newblock On measure quantifiers in first-order arithmetic (long version).
\newblock Available from:
  \url{https://www.researchgate.net/publication/351104888}.

\bibitem{ADLG}
M.~Avanzini, U.~Dal~Lago, and A.~Ghyselen.
\newblock Type-based complexity analysis of probabilistic functional programs.
\newblock In {\em Proceedings of the 34th Annual ACM/IEEE Symposium on Logic in
  Computer Science (LICS)}, pages 1--13, Vancouver, BC, Canada, Canada, 2019.
  IEEE.

\bibitem{Bacchus90a}
F.~Bacchus.
\newblock Lp, a logic for representing and reasoning with statistical
  knowledge.
\newblock {\em Computational Intelligence}, 6(4):209--231, 1990.

\bibitem{Bacchus90b}
F.~Bacchus.
\newblock On probability distributions over possible worlds.
\newblock {\em Machine Intelligence and Pattern Recognition}, 9:217--226, 1990.

\bibitem{Bacchus90c}
F.~Bacchus.
\newblock Probabilistic belief logics.
\newblock In {\em Proceedings of the 9th European Conference on Artificial
  Intelligence}, pages 59--64, 1990.

\bibitem{Bacchus}
F.~Bacchus.
\newblock {\em Representing and Reasoning with Probabilistic Knowledge}.
\newblock MIT Press, 1990.

\bibitem{Billingsley}
P.~Billinghsley.
\newblock {\em Probability and Measure}.
\newblock Wiley, 1995.

\bibitem{Bournez2002}
Olivier Bournez and Claude Kirchner.
\newblock Probabilistic rewrite strategies. applications to elan.
\newblock In Sophie Tison, editor, {\em Rewriting Techniques and Applications},
  pages 252--266, Berlin, Heidelberg, 2002. Springer Berlin Heidelberg.

\bibitem{BreuvartDalLago}
F.~Breuvart and U.~Dal~Lago.
\newblock On intersection types and probabilisitic lambda calculi.
\newblock In {\em PPDP '18: Proceedings of the 20th International Symposium on
  Principles and Practice of Declarative Programming}, number~8, pages 1--13,
  2018.

\bibitem{BB}
H.K. B\"uning and U.~Bubeck.
\newblock Theory of quantified boolean formulas.
\newblock In A.~Biere, M.~Heule, H.~van Maaren, and T.~Walsh, editors, {\em
  Handbook of Satisfiability}. IOS Press, 2009.

\bibitem{Carlyle}
J.M. Carlyle.
\newblock Reduced forms for stochastic sequential machines.
\newblock {\em J. Math. Anal. Appl.}, 7:167--174, 1963.

\bibitem{Cook}
S.A. Cook.
\newblock The complexity of theorem-proving procedures.
\newblock In {\em STOC '71: Proceedings of the Third Annual ACM Symposium on
  Theory of Computing}, pages 151--158, 1971.

\bibitem{DalLagoGrellois}
U.~Dal~Lago and U.~Grellois.
\newblock Probabilistic termination by monadic affine sized typing.
\newblock {\em ACM Trans. Program. Lang. Syst.}, 41(2):10--65, 2019.

\bibitem{DLGH}
U.~Dal~Lago, G.~Guerrieri, and W.~Heijltjes.
\newblock Decomposing probabilistic lambda-calculi.
\newblock In {\em FoSSaCS 2020}, pages 136--156, 2020.

\bibitem{DLZ}
U.~Dal~Lago and M.~Zorzi.
\newblock Probabilistic operational semantics for the lambda calculus.
\newblock {\em RAIRO}, 46(3):413--450, 2012.

\bibitem{Davis61}
A.S. Davis.
\newblock \text{M}arkov chains as random input automata.
\newblock {\em Am. Math. Mon.}, 68(3):264--267, 1961.

\bibitem{LMSS}
K.~de~Leeuw and et~al.
\newblock Computability by probabilistic machines.
\newblock In Princeton~University Press, editor, {\em Automata Studies},
  number~34, pages 183--212. Shannon, C.E. and McCarthy, J., 1956.

\bibitem{ShapiroMoore}
K.~de~Leeuw, E.F. Moore, C.E. Shannon, and N.~Shapiro.
\newblock Computability by probabilistic machines.
\newblock {\em Journal of Symbolic Logic}, 35(3):481--482, 1970.

\bibitem{DPHW}
A.~Di~Pierro and H.~Wiklicky.
\newblock Probabilistic lambda-caclulus and quantitative program analysis.
\newblock {\em Journal of Logic and Computation}, 15(2):159--179, 2005.

\bibitem{EPT}
T.~Ethard, M.~Pagani, and C.~Tasso.
\newblock The computational meaning of probabilistic coherence spaces.
\newblock In {\em Proceedings of the 26th Annual IEEE Symposium on Logic in
  Computer Science, LICS 2011}, pages 87--96, 2011.

\bibitem{FH91}
R.~Fagin and J.Y. Halpern.
\newblock Uncertainty, belief, and probability.
\newblock {\em Computational Intelligence}, 7(3):160--173, 1991.

\bibitem{FH94}
R.~Fagin and J.Y. Halpern.
\newblock Reasoning about knowledge and probability.
\newblock {\em Journal of ACM}, 41(2):340--367, 1994.

\bibitem{FHM}
R.~Fagin, J.Y. Halpern, and N.~Megiddo.
\newblock A logic for reasoning about probabilities.
\newblock {\em Information and Computation}, 87(1/2):78--128, 1990.

\bibitem{lmcs:6054}
Robert Furber, Radu Mardare, and Matteo Mio.
\newblock {Probabilistic logics based on Riesz spaces}.
\newblock {\em {Logical Methods in Computer Science}}, {Volume 16, Issue 1},
  January 2020.

\bibitem{Gill74}
J.T. Gill.
\newblock Computational complexity of probabilistic turing machines.
\newblock In {\em STOC '74: Proceedings of the sixth annual ACM symposium on
  Theory of Computing}, pages 91--95, 1974.

\bibitem{Gill77}
J.T. Gill.
\newblock Computational complexity of probabilistic \text{T}uring machines.
\newblock {\em SIAM Journal of Computing}, 6(4):675--695, 1977.

\bibitem{Girard}
Jean-Yves Girard.
\newblock {\em Proof and Types}.
\newblock Cambridge University Press, 1989.

\bibitem{GNS}
M.~Girlando, S.~Negri, and G.~Sbardolini.
\newblock Uniform labelled calculi for conditional and counterfactual logics.
\newblock In {\em WoLLIC 2019}, pages 248--263, 2019.

\bibitem{Halpern90}
J.Y. Halpern.
\newblock An analysis of first-order logics for probability.
\newblock {\em Artificial Intelligence}, 46(3):311--350, 1990.

\bibitem{Halpern03}
J.Y. Halpern.
\newblock {\em Reasoning About Uncertainty}.
\newblock MIT Press, 2003.

\bibitem{Hansson1994}
Hans Hansson and Bengt Jonsson.
\newblock A logic for reasoning about time and reliability.
\newblock {\em Formal Aspects of Computing}, 6(5):512--535, 1994.

\bibitem{KMAP}
D.~Koller, D.A. McAllester, and A.~Pfeffer.
\newblock Effective bayesian inference for stochastic programs.
\newblock In {\em Proceedings of the Fourtheenth National Conference on
  Artificial Intelligence and Ninth Innovative Applications of Artificial
  Intelligence Conference}, pages 740--747, 1997.

\bibitem{Kontinen}
J.~Kontinen.
\newblock A logical characterization of the counting hierarchy.
\newblock {\em ACM Transactions on Computational Logic}, 2009.

\bibitem{KOZEN1981328}
Dexter Kozen.
\newblock Semantics of probabilistic programs.
\newblock {\em Journal of Computer and System Sciences}, 22(3):328 -- 350,
  1981.

\bibitem{LEHMANN1982165}
Daniel Lehmann and Saharon Shelah.
\newblock Reasoning with time and chance.
\newblock {\em Information and Control}, 53(3):165 -- 198, 1982.

\bibitem{leventis_2019}
Thomas Leventis.
\newblock A deterministic rewrite system for the probabilistic ?-calculus.
\newblock {\em Mathematical Structures in Computer Science}, 29(10):1479?1512,
  2019.

\bibitem{MeyerStockmeyer72}
A.R. Meyer and Stockmeyer L.J.
\newblock The equivalence problem for regular expressions with squaring
  requires exponential space.
\newblock In {\em Proceedings of the Thirteenth Annual IEEE Symposium on
  Switching and Automata Theory}, pages 125--129, 1972.

\bibitem{MeyerStockmeyer73}
A.R. Meyer and L.J. Stockmeyer.
\newblock Word problems requiring exponential time (preliminary report).
\newblock In {\em STOC'73: Proceedings of the fifth annual ACM symposium on
  Theory of computing}, pages 1--9, 1973.

\bibitem{NegrivonPlato}
S.~Negri and J.~von Plato.
\newblock {\em Proof Analysis: A Contribution to \text{H}ilbert's Last
  Problem}.
\newblock Cambridge University Press, 2011.

\bibitem{Nilsson86}
N.J. Nilsson.
\newblock Probabilistic logic.
\newblock {\em Artificial Intelligence}, 28(1):71--87, 1986.

\bibitem{Nilsson93}
N.J. Nilsson.
\newblock Probabilistic logic revisited.
\newblock {\em Artificial Intelligence}, 59(1/2):39--42, 1993.

\bibitem{OHearn}
P.~O'hearn.
\newblock On bunched typing.
\newblock {\em Journal of Functional Programming}, 13(4):747--796, 2003.

\bibitem{Papadimitriou}
C.H. Papadimitriou.
\newblock Games against nature.
\newblock {\em Journal of Computer and System Science}, 31(2):288--301, 1985.

\bibitem{PareberrySchnitger}
I.~Paraberry and G.~Schnitger.
\newblock Parallel computation with threshold functions.
\newblock {\em Journal of Computer and System Science}, 36:278--302, 1988.

\bibitem{Park}
S.~Park.
\newblock A calculus for probabilistic languages.
\newblock In ACM Press, editor, {\em Proceedings of ACM SIGPLAN International
  Workshop on Types in Languages Design and Implementation}, pages 38--49,
  2003.

\bibitem{PPT}
S.~Park, F.~Pfenning, and S.~Thrun.
\newblock A probabilistic language based upon sampling functions.
\newblock In ACM Press, editor, {\em Conference Record of the 32nd Symposium on
  Principles of Programming Languages (POPL'05)}, pages 171--182, 2005.

\bibitem{PlessLuger}
D.~Pless and G.F. Luger.
\newblock Toward general analysis of recursive probability modesl.
\newblock In {\em UAI '01: Proceedings of the 17th Conference in Uncertainty in
  Artificial Intelligence}, pages 429--236, 2001.

\bibitem{Puterman1994}
Martin~L. Puterman.
\newblock {\em Markov Decision Processes: Discrete Stochastic Dynamic
  Programming}.
\newblock John Wiley \& Sons, Inc., USA, 1st edition, 1994.

\bibitem{Rabin}
M.~O. Rabin.
\newblock Probabilistic automata.
\newblock {\em Information and Computation}, 6(3):230--245, 1963.

\bibitem{RamseyPfeffer}
N.~Ramsey and A.~Pfeffer.
\newblock Stochastic lambda calculus and monads of probability distributions.
\newblock In ACM Press, editor, {\em Proceedings of ACM Symposium on Principles
  of Programming Languages}, pages 154--165, 2002.

\bibitem{SahebDjaromi}
N.~Saheb-Djaromi.
\newblock Probabilistic \text{LCF}.
\newblock In ACM Press, editor, {\em Proceedings of International Symposium on
  Mathematical Foundations of Computer Science}, pages 154--165, 2002.

\bibitem{Santos68}
E.S. Santos.
\newblock Maximin automata.
\newblock {\em Inf. Control}, 13:363--377, 1968.

\bibitem{Santos69b}
E.S. Santos.
\newblock Maximin sequential-like machines and chains.
\newblock {\em Math. Syst. Theory}, 3(4):300--309, 1969.

\bibitem{Santos69}
E.S. Santos.
\newblock Probabilistic \text{T}uring machines and computability.
\newblock {\em Proceedings of the American Mathematical Society},
  22(3):704--710, 1969.

\bibitem{Santos71}
E.S. Santos.
\newblock Computability by probabilistic \text{T}uring machines.
\newblock {\em Transactionsof the American Mathematical Society}, pages
  159--165, 1971.

\bibitem{Simon81}
J.~Simon.
\newblock On tape-bounded probabilistic \text{T}uring machine acceptors.
\newblock {\em TCS}, 16:75--91, 1981.

\bibitem{SorensenUrzyczyn}
M.H. Sorensen and P.~Urzyczyn.
\newblock {\em Lectures on the \text{C}urry-\text{H}oward Isomorphism}, volume
  149.
\newblock Elsevier, 2006.

\bibitem{Stockmeyer77}
L.J. Stockmeyer.
\newblock The polynomial-time hiearchy.
\newblock {\em Theoretical Computer Science}, 3:1--22, 1977.

\bibitem{Toda89}
S.~Toda.
\newblock On the computational power of pp and $\sharp$p.
\newblock In {\em 30th Annual Symposium on Foundations of Computer Science},
  pages 514--519, 1989.

\bibitem{Toda91}
S.~Toda.
\newblock \text{PP} is as hard as the polynomial-time hierarchy.
\newblock {\em SIAM Journal on Computing}, 20(5):865--877, 1991.

\bibitem{Toran88}
J.~Tor\'an.
\newblock An oracle characterization of the counting hierarchy.
\newblock In {\em Proceedings. Structure in Complexity Theory Third Annual
  Conference}, pages 213--223, 1988.

\bibitem{Toran91}
J.~Tor\'an.
\newblock Complexity classes defined by counting quantifiers.
\newblock {\em Journal of the ACM}, 38(3):753--774, 1991.

\bibitem{Valiant}
L.G. Valiant.
\newblock The complexity of computing the permanent.
\newblock {\em Theoretical Computer Science}, 8(2):189--201, 1979.

\bibitem{Wadler}
P.~Wadler.
\newblock Is there a use for linear logic?
\newblock In {\em PEPM '91: Proceedings of the 1991 ACM SIGPLAN Symposium on
  Partial evaluation and semantics-based program manipulation}, pages 255--273,
  1991.

\bibitem{Wagner84}
K.W. Wagner.
\newblock Compact descriptions and the counting polynomial-time hierarchy.
\newblock In {\em Frege Conference 1984: Proceedings of the International
  Conference held at Schwerin}, pages 383--392, 1984.

\bibitem{Wagner}
K.W. Wagner.
\newblock The complexity of combinatorial problems with succinct input
  representation.
\newblock {\em Acta Informatica}, 23:325--356, 1986.

\bibitem{Wagner86}
K.W. Wagner.
\newblock Some observations on the connection between counting and recursion.
\newblock {\em Theoretical Computer Science}, 47:131--147, 1986.

\bibitem{Wrathall}
C.~Wrathall.
\newblock Complete sets and the polynomial-time hierarchy.
\newblock {\em Theoretical Computer Science}, 3(1):23--33, 1976.

\bibitem{Zachos88}
S.~Zachos.
\newblock Probabilistic quantifiers and games.
\newblock {\em Journal of Computer and System Sciences}, 36(3):433--451, 1988.

\bibitem{ZachosHeller}
S.K. Zachos and H.~Heller.
\newblock A decisive characterization of \text{BPP}.
\newblock {\em Information and Control}, pages 125--135, 69.

\end{thebibliography}

\newpage
\appendix
\section*{Appendix}\label{appendix}

\section{Proofs from Section \ref{section3}}\label{appendix3}

\subsection{From Valuations to Measurable Sets}\label{appendix3.1}

\begin{lemma}\label{lemma:box0}
For every formula of $\PPLc$, call it $\fone$, 
the following holds:
$$
\model{\BOX^{0}\fone} = \twoOm \ \ \ \ \ \ \ \ \ \ \ \ \ \ \ \ \ \ \ \ \  \ \ \ 
\model{\DIA^{0}\fone} = \emptyset.
$$
\end{lemma}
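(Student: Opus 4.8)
The plan is to observe that this is an immediate consequence of the semantic clauses for $\BOX^{q}$ and $\DIA^{q}$ in Definition~\ref{semantics}, together with the elementary fact that the Borel measure $\mu$ on $\mathcal B(2^{\omega})$ is non-negative. No induction on the structure of $\fone$ is actually required: the claim holds uniformly for every formula $\fone$ once we know that $\model{\fone}$ is a well-defined measurable subset of $2^{\omega}$, which is exactly what Definition~\ref{semantics} guarantees.

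Concretely, first I would fix an arbitrary formula $\fone$ of $\PPLc$ and recall that $\model{\fone}\in\mathcal B(2^{\omega})$, so that $\mu(\model{\fone})$ is a well-defined real number in $[0,1]$. Since $\mu$ takes only non-negative values, the inequality $\mu(\model{\fone})\geq 0$ holds. Plugging $q=0$ into the clause $\model{\BOX^{q}\fone}= \twoOm$ if $\mu(\model{\fone})\geq q$ and $\emptyset$ otherwise, the first case always applies, whence $\model{\BOX^{0}\fone}=\twoOm$. Dually, the strict inequality $\mu(\model{\fone})< 0$ never holds, so in the clause $\model{\DIA^{q}\fone}= \twoOm$ if $\mu(\model{\fone})< q$ and $\emptyset$ otherwise, the second case always applies with $q=0$, whence $\model{\DIA^{0}\fone}=\emptyset$.

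There is essentially no obstacle here; the only thing to be mildly careful about is making explicit that $\model{\fone}$ is measurable (so that $\mu(\model{\fone})$ makes sense) before invoking non-negativity of $\mu$, but this is precisely the content of Definition~\ref{semantics}. I would therefore keep the proof to two or three lines, citing Definition~\ref{semantics} for measurability and the non-negativity of Borel measure for the two inequalities.
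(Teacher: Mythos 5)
Your proposal is correct and follows essentially the same route as the paper's proof: both simply note that $\mu(\model{\fone})\geq 0$ always holds, so the first case of the clause for $\BOX^{0}$ and the second case of the clause for $\DIA^{0}$ apply uniformly. The extra remark about measurability of $\model{\fone}$ is harmless but not needed beyond what Definition~\ref{semantics} already provides.
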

\begin{proof}
For every $\fone$, $\mu$($\model{\fone}$) $\ge 0$ holds.
Thus,

\begin{minipage}{\linewidth}
\begin{minipage}[t]{0.4\linewidth}
\begin{align*}
\model{\BOX^{0}\fone} &= \begin{cases} \twoOm \ \ \ &\text{if } \mu(\model{\fone}) \ge 0 \\ \emptyset \ \ \ &\text{otherwise} \end{cases} \\
&= \twoOm
\end{align*}
\end{minipage}
\hfill
\begin{minipage}[t]{0.6\linewidth}
\begin{align*}
\model{\DIA^{0}\fone} &= \begin{cases} \twoOm \ \ \ &\text{if } \mu(\model{\fone}) < 0 \\ \emptyset \ \ \ &\text{otherwise} \end{cases} \\
&= \emptyset.
\end{align*}
\end{minipage}
\end{minipage}
\end{proof}

\begin{lemma}\label{lemma:interdef}
For every formula of $\PPLc$, call it $\fone$, and 
$q \in \mathbb{Q}_{[0,1]}$, the following holds:
$$
\BOX^{q}\fone \equiv \lnot \DIA^{q}\fone \ \ \ \ \ \ \ \ \ \ \ \ \ \ \ \ \ \ \ \ \  \ \ \  \DIA^{q}\fone \equiv \lnot\BOX^{q}\fone.
$$
\end{lemma}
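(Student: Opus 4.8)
The plan is to prove both equivalences directly from the definition of logical equivalence: since $\fone \equiv \ftwo$ means $\model{\fone} = \model{\ftwo}$, it suffices to compute both interpretations and check they coincide. The argument is a simple case analysis on the truth value of the (single) numerical condition $\mu(\model{\fone}) \geq q$, which is exhaustive and mutually exclusive with $\mu(\model{\fone}) < q$.

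For the first equivalence, I would unfold $\model{\lnot \DIA^{q}\fone} = \twoOm - \model{\DIA^{q}\fone}$ using Definition~\ref{semantics}. If $\mu(\model{\fone}) \geq q$, then $\model{\DIA^{q}\fone} = \emptyset$, hence $\model{\lnot \DIA^{q}\fone} = \twoOm$, while also $\model{\BOX^{q}\fone} = \twoOm$; if instead $\mu(\model{\fone}) < q$, then $\model{\DIA^{q}\fone} = \twoOm$, hence $\model{\lnot \DIA^{q}\fone} = \emptyset$, while also $\model{\BOX^{q}\fone} = \emptyset$. In both cases $\model{\BOX^{q}\fone} = \model{\lnot \DIA^{q}\fone}$, giving $\BOX^{q}\fone \equiv \lnot \DIA^{q}\fone$.

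The second equivalence is handled symmetrically: $\model{\lnot \BOX^{q}\fone} = \twoOm - \model{\BOX^{q}\fone}$, and the same dichotomy on $\mu(\model{\fone}) \geq q$ versus $\mu(\model{\fone}) < q$ shows that this complement equals $\emptyset$ in the first case and $\twoOm$ in the second, matching $\model{\DIA^{q}\fone}$ in each case. Hence $\DIA^{q}\fone \equiv \lnot \BOX^{q}\fone$.

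There is no real obstacle here; the only point worth stressing is that, unlike the classical duality between $\forall$ and $\exists$, these equivalences express \emph{inter-definability} rather than duality, precisely because the two conditions $\mu(\model{\fone}) \geq q$ and $\mu(\model{\fone}) < q$ partition the possible values of $\mu(\model{\fone})$, so negating one counting quantifier lands on the \emph{other} counting quantifier with the \emph{same} threshold $q$, not on $\BOX^{1-q}$ or similar. I would conclude by remarking that this is exactly what makes Equation~\eqref{equ:interdefinability} hold, and contrast it explicitly with the failure of genuine duality.
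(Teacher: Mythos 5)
Your argument is correct and coincides with the paper's own proof: both unfold $\model{\lnot \DIA^{q}\fone}$ (resp.\ $\model{\lnot \BOX^{q}\fone}$) as a set complement via Definition~\ref{semantics} and split on the exhaustive, mutually exclusive cases $\mu(\model{\fone})\geq q$ versus $\mu(\model{\fone})<q$, matching the result against $\model{\BOX^{q}\fone}$ (resp.\ $\model{\DIA^{q}\fone}$) in each case. Your closing remark on inter-definability versus duality is a faithful echo of the paper's own commentary and adds nothing that would change the verification.
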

\begin{proof}
The proof is based on Definition~\ref{semantics}:

\begin{minipage}{\linewidth}
\begin{minipage}[t]{0.4\linewidth}
\begin{align*}
\model{\neg \DIA^{q}\fone} &= \twoOm \ – \ \model{\DIA^{q}\fone} \\
&= \twoOm \ – \begin{cases} \twoOm \ \ &\text{ if } \mu(\model{\fone}) < q \\ \emptyset \ \ &\text{ otherwise }
\end{cases} \\
&= \begin{cases} \emptyset \ \ &\text{ if } \mu(\model{\fone}) < q \\ \twoOm \ \ &\text{ otherwise } \end{cases} \\
&= \model{\BOX^{q}\fone}
\end{align*}
\end{minipage}
\hfill
\begin{minipage}[t]{0.6\linewidth}
\begin{align*}
\model{\neg \BOX^{q}\fone} &= \twoOm \ – \ \model{\BOX^{q}\fone} \\
&= \twoOm \ – \ \begin{cases} \twoOm \ \ &\text{ if } \mu(\model{\fone}) \ge q \\ \emptyset \ \ &\text{ otherwise} \end{cases} \\
&= \begin{cases} \emptyset \ \ &\text{ if } \mu(\model{\fone}) \ge q \\ \twoOm \ \ &\text{ otherwise} \end{cases} \\
&= \model{\DIA^{q}\fone}.
\end{align*}
\end{minipage}
\end{minipage}
\end{proof}

\subsection{The Proof Theory of $\PPLc$}\label{appendix3.2}

From now on, let us denote the one-sided,
single-succedent proof system for $\PPLc$ as $\GsCPLc$.
Let us also call $\mu$-rules the two rules $\Rmur$ and $\Rmul$.
The notion of derivation in $\GsCPLc$ is defined in the 
standard way.
\begin{definition}[Derivation in $\GsCPLc$]
A \emph{derivation in $\GsCPLc$} is either an initial
sequent $–$ $\AxO$ or $\AxT$ $–$ 
or an instance of a $\mu$-rule $–$ $\Rmur$ or $\Rmul$ $–$, 
or it is obtained by applying a rule of $\GsCPLc$ 
to derivations concluding its premisses.
\end{definition}
\noindent
The definition of \emph{derivation height} is completely
canonical as well.
\begin{definition}[Derivation Height]
The \emph{height of a derivation in $\GsCPLc$} is 
the greatest number of successive applications of rules in it, 
where initial sequents and $\mu$-rules have height 0.
\end{definition}
\noindent
As anticipated, the proof of soundness is standard.
\primeSoundness*
\begin{proof}
The proof is by induction on the height of the derivation of 
$\vdash \lone$, call it $n$.
\begin{itemize}
\item \emph{Base case.} If $n$ = 0, $\vdash \lone$ is either an initial sequent or is derived by a $\mu$-rule. 
Both cases are trivial:
\par $\AxO.$ 
For some $\bone$ and $n\in\Nat$, $\lone = \bone \Era \atom{n}$. 
Then, the derivation is in the following form:
\begin{prooftree}
\AxiomC{$\bone \vDash \bvar_{n}$}
\RightLabel{$\AxO$}
\UnaryInfC{$\vdash \bone \Era \atom{n}$}
\end{prooftree}
But, $\bone \vDash \bvar_{n}$ means that 
$\model{\bone} \subseteq \model{\bvar_{n}}$, 
which is $\model{\bone} \subseteq \Cyl{n}$.
Furthermore, since $\model{\atom{n}}$ = $\Cyl{n}$, 
$\model{\bone} \subseteq \model{\atom{n}}$ holds.
For Definition~\ref{valB}, this means that 
$\vDash \bone \Era \atom{n}$, as desired.
\par $\AxT$. 
For some $\bone$ and $n\in\Nat$,
$\lone=\bone\Ela\atom{n}$. 
Then, the derivation is in the following form:
\begin{prooftree}
\AxiomC{$\bvar_{n}\vDash\bone$}
\RightLabel{$\AxT$}
\UnaryInfC{$\vdash\bone\Ela\atom{n}$}
\end{prooftree}
But, $\bvar_{n}\vDash \bone$ means $\model{\bvar_{n}}\subseteq \model{\bone}$, which is $\Cyl{n}\subseteq\model{\bone}$.
Moreover, $\model{\atom{n}}=\Cyl{n}$, so 
$\model{\atom{n}}\subseteq\model{\bone}$ holds.
For Definition~\ref{valB}, this means that
$\vDash\bone\Ela\atom{n}$, as desired.
\par $\Rmur$. 
For some $\bone$ and $\fone$, $\lone = \bone \Era \fone$.
Then, the derivation is in the following form:
\begin{prooftree}
\AxiomC{$\mu(\model{\bone}) = 0$}
\RightLabel{$\Rmur$}
\UnaryInfC{$\vdash\bone\Era\fone$}
\end{prooftree}
Since $\mu$($\model{\bone}$) = 0, for basic measure theory $\model{\bone} = \emptyset$. 
Thus, in particular, $\model{\bone} \subseteq \model{\fone}$, 
and so $\vDash \bone \Era \fone$. 
\par $\Rmul$. 
For some $\bone$ and $\fone$, $\lone$ = $\bone \Era \fone$.
Then, the derivation is in the following form:
\begin{prooftree}
\AxiomC{$\mu(\model{\bone}) = 1$}
\RightLabel{$\Rmul$}
\UnaryInfC{$\vdash\bone\Ela\fone$}
\end{prooftree}
Since $\mu$($\model{\bone}$) = 1, for basic measure theory, $\model{\bone}$ = $\twoOm$. 
Thus, in particular, 
$\model{\fone} \subseteq \model{\bone}$, 
and so $\vDash \bone \Ela \fone$.
\item \emph{Inductive case.} 
Let us assume that soundness holds for derivations of 
height up to $n$, 
and show that it holds for derivations of height $n$+1.
\par $\Rcup.$ 
If the last rule applied is an instance of $\Rcup$, 
the derivation has the following form:
\begin{prooftree}
\AxiomC{$\vdots$}
\noLine
\UnaryInfC{$\vdash \btwo \Era \fone$}
\AxiomC{$\vdots$}
\noLine
\UnaryInfC{$\vdash \bthree \Era \fone$}
\AxiomC{$\bone \vDash \btwo \vee \bthree$}
\RightLabel{$\Rcup$}
\TrinaryInfC{$\vdash \bone \Era \fone$}
\end{prooftree}
By IH, $\vdash \btwo \Era \fone$ and 
$\vdash \bthree \Era \fone$ are valid.
Furthermore, since $\bone \vDash \btwo \vee \bthree$, 
$\model{\bone} \subseteq \model{\btwo \vee \bthree}$ holds, 
that is 
$\model{\bone} \subseteq \model{\btwo} \cup \model{\bthree}$.
For IH and Definition~\ref{valB}, 
$\vDash \btwo \Era \fone$ and $\vDash \bthree \Era \fone$, 
which is $\model{\btwo} \subseteq \model{\fone}$ and 
$\model{\bthree} \subseteq \model{\fone}$.
For basic set theory, $\model{\btwo} \cup \model{\bthree} \subseteq \model{\fone}$
and, since 
$\model{\bone} \subseteq \model{\btwo} \cup \model{\bthree}$, 
also $\model{\bone} \subseteq \model{\fone}$ holds.
Thus, by Definition~\ref{valB}, $\vDash\bone\Era\fone$.

\par $\Rcap.$ If the last rule applied is an instance of $\Rcap$, the derivation has the following form:
\begin{prooftree}
\AxiomC{$\vdots$}
\noLine
\UnaryInfC{$\vdash \btwo \Ela \fone$}
\AxiomC{$\vdots$}
\noLine
\UnaryInfC{$\vdash \bthree \Ela \fone$}
\AxiomC{$\btwo \wedge \bthree \vDash \bone$}
\RightLabel{$\Rcap$}
\TrinaryInfC{$\vdash \bone \Ela \fone$}
\end{prooftree}
By IH, $\vdash \btwo \Ela \fone$ and 
$\vdash \bthree \Ela \fone$ are valid. 
Since $\btwo \wedge \bthree \vDash \bone$, 
for Definition~\ref{valB} $\model{\btwo \wedge \bthree} \subseteq \model{\bone}$, 
which is $\model{\btwo} \cap \model{\bthree} \subseteq \model{\bone}$.
For IH and Definition~\ref{valB}, $\vDash \btwo \Ela \fone$ and $\vDash \bthree \Ela \fone$, which is $\model{\fone} \subseteq \model{\btwo}$ and $\model{\fone} \subseteq \model{\bthree}$.
For basic set theory, $\model{\fone} \subseteq \model{\btwo} \cap \model{\bthree}$, and
so, given $\model{\btwo} \cap \model{\bthree} \subseteq \model{\bone}$, it can be concluded that $\model{\fone} \subseteq \model{\bone}$ holds.
Therefore, by Definition~\ref{valB}, $\vDash \bone \Ela \fone$.

\par $\Rnra.$ 
If the last rule applied is an instance of $\Rnra$, 
the derivation has the following form:
\begin{prooftree}
\AxiomC{$\vdots$}
\noLine
\UnaryInfC{$\vdash\btwo\Ela\fone$}
\AxiomC{$\bone \vDash \neg \btwo$}
\RightLabel{$\Rnra$}
\BinaryInfC{$\vdash\bone \Era \neg \fone$}
\end{prooftree}
Given the external hypothesis $\bone \vDash \neg \btwo$, 
$\model{\bone} \subseteq \big(\twoOm \ – \ {\model \btwo}\big)$ 
holds.
For IH, $\vdash \btwo \Ela \fone$ so,
for Definition~\ref{valB}, 
$\model{\fone} \subseteq \model{\btwo}$.
By basic set theory 
$\big(\twoOm \ – \ \model{\btwo}\big) \subseteq 
\big(\twoOm \ – \ \model{\fone}\big)$, 
that is $\big(\twoOm \ – \ \model{\btwo}\big) \subseteq \model{\neg\fone}$ and,
since 
$\model{\bone} \subseteq \big(\twoOm \ – \ {\model{\btwo}}\big)$, 
it is possible to conclude that
$\model{\bone} \subseteq \model{\neg \fone}$ holds.
But then, for Definition~\ref{valB} $\vDash \bone \Era \neg \fone$.

$\Rnla.$ 
If the last rule applied is an instance of $\Rnla,$ the derivation has the following form:
\begin{prooftree}
\AxiomC{$\vdots$}
\noLine
\UnaryInfC{$\vdash\btwo\Era\fone$}
\AxiomC{$\neg\btwo\vDash\bone$}
\RightLabel{$\Rnla$}
\BinaryInfC{$\vdash\bone\Ela\neg\fone$}
\end{prooftree}
Given the external hypothesis $\neg\btwo \vDash \bone$, 
then $\big(\twoOm \ – \ \model{\btwo}\big)\subseteq \model{\bone}$,
and, 
for IH, $\vDash \btwo \Era \fone$,
which is 
$\model{\btwo} \subseteq \model{\fone}$.
For basic set theory,
$\big(\twoOm \ – \ \model{\fone}\big) 
\subseteq \big(\twoOm \ – \ \model{\btwo}\big)$,
which is 
$\model{\neg \fone} \subseteq \big(\twoOm \ – \ {\model{\btwo}}\big)$, 
and,
since $\big(\twoOm \ – \ \model{\btwo}\big) \subseteq \model{\bone}$, 
also $\model{\neg \fone} \subseteq \model{\bone}$.
So, for Definition~\ref{valB}, $\vDash \bone \Ela \neg \fone$.

\par $\mathsf{R1_{\vee}^{\Era}}.$ 
If the last rule applied is an instance of $\mathsf{R1_{\vee}^{\Era}}$, 
the derivation has the following form:
\begin{prooftree}
\AxiomC{$\vdots$}
\noLine
\UnaryInfC{$\vdash \bone \Era \fone$}
\RightLabel{$\mathsf{R1_{\vee}^{\Era}}$}
\UnaryInfC{$\vdash \bone \Era \fone \vee \ftwo$}
\end{prooftree}
By IH, $\vDash \bone \Era \fone$, which is 
$\model{\bone} \subseteq \model{\fone}$.
For basic set theory, also 
$\model{\bone} \subseteq \model{\fone} \cup \model{\ftwo}$, which is 
$\model{\bone} \subseteq \model{\fone \vee \ftwo}$.
Therefore, $\vDash \bone \Era \fone \vee \ftwo$ as desired.

\par $\mathsf{R2_{\vee}^{\Era}}.$ 
If the last rule applied is an instance of $\mathsf{R2_{\vee}^{\Era}}$, 
the derivation has the following form:
\begin{prooftree}
\AxiomC{$\vdots$}
\noLine
\UnaryInfC{$\vdash \bone \Era \ftwo$}
\RightLabel{$\mathsf{R2_{\vee}^{\Era}}$}
\UnaryInfC{$\vdash \bone \Era \fone \vee \ftwo$}
\end{prooftree}
By IH, $\vDash \bone \Era \ftwo$, which is 
$\model{\bone} \subseteq \model{\ftwo}$.
For basic set theory, also 
$\model{\bone} \subseteq \model{\fone} \cup \model{\ftwo}$, which is 
$\model{\bone} \subseteq \model{\fone \vee \ftwo}$.
Therefore, $\vDash \bone \Era \fone \vee \ftwo$ as desired.

\par $\Rvla.$ If the last rule applied is an instance of $\Rvla$, the derivation has the following form:
\begin{prooftree}
\AxiomC{$\vdots$}
\noLine
\UnaryInfC{$\vdash \bone \Ela \fone$}
\AxiomC{$\vdots$}
\noLine
\UnaryInfC{$\vdash  \bone \Ela \ftwo$}
\RightLabel{$\Rvla$}
\BinaryInfC{$\vdash \bone \Ela \fone \vee \ftwo$}
\end{prooftree}
By IH, $\vdash \bone \Ela \fone$ and $\vdash \bone \Ela \ftwo$ are valid.
So, for Definition~\ref{valB}, $\vDash\bone\Ela\fone$ and $\vDash\bone\Ela\ftwo$, which is, 
$\model{\fone}\subseteq\model{\bone}$ and 
$\model{\ftwo}\subseteq\model{\bone}$.
For basic set theory, 
$\model{\fone}\cup\model{\ftwo}\subseteq \model{\bone}$,
which is $\model{\fone\vee\ftwo} \subseteq \model{\bone}$,
and so $\vDash \bone \Ela \fone \vee \ftwo$.

\par $\Rwra.$ 
If the last rule applied is an instance of $\Rwra$, 
the derivation has the following form:
\begin{prooftree}
\AxiomC{$\vdots$}
\noLine
\UnaryInfC{$\vdash\bone\Era\fone$}
\AxiomC{$\vdots$}
\noLine
\UnaryInfC{$\vdash\bone\Era\ftwo$}
\RightLabel{$\Rwra$}
\BinaryInfC{$\vdash\bone\Era\fone\wedge\ftwo$}
\end{prooftree}
By IH, $\vDash\bone\Era\fone$ and $\vDash\bone\Era\ftwo$, 
which is $\model{\bone} \subseteq \model{\fone}$ and $\model{\bone} \subseteq \model{\ftwo}$.
For basic set theory, 
$\model{\bone} \subseteq \model{\fone} \cap \model{\ftwo}$, 
which is $\model{\bone} \subseteq \model{\fone \wedge \ftwo}$.
Therefore, by Definition~\ref{valB},
$\vDash \bone \Era \fone \wedge \ftwo$.

\par $\mathsf{R1_{\wedge}^{\Ela}}.$ 
If the last rule applied is an instance of $\mathsf{R1_{\wedge}^{\Ela}}$,
the derivation has the following form:
\begin{prooftree}
\AxiomC{$\vdots$}
\noLine
\UnaryInfC{$\vdash \bone \Ela \fone$}
\RightLabel{$\mathsf{R1_{\wedge}^{\Ela}}$}
\UnaryInfC{$\vdash\bone \Ela \fone \wedge \ftwo$}
\end{prooftree}
By IH, $\vDash \bone \Ela \fone$,
which is $\model{\fone} \subseteq \model{\bone}$.
For basic set theory, 
$\model{\fone} \cap \model{\ftwo} \subseteq \model{\bone}$, which is
$\model{\fone \wedge \ftwo} \subseteq \model{\bone}$.
Therefore, $\vDash \bone \Ela \fone \wedge \ftwo$.

\par $\mathsf{R2_{\wedge}^{\Ela}}.$ 
If the last rule applied is an instance of $\mathsf{R2_{\wedge}^{\Ela}}$,
the derivation has the following form:
\begin{prooftree}
\AxiomC{$\vdots$}
\noLine
\UnaryInfC{$\vdash \bone \Ela \ftwo$}
\RightLabel{$\mathsf{R2_{\wedge}^{\Ela}}$}
\UnaryInfC{$\vdash\bone \Ela \fone \wedge \ftwo$}
\end{prooftree}
By IH, $\vDash \bone \Ela \ftwo$,
which is $\model{\ftwo} \subseteq \model{\bone}$.
For basic set theory, 
$\model{\fone} \cap \model{\ftwo} \subseteq \model{\bone}$, which is
$\model{\fone \wedge \ftwo} \subseteq \model{\bone}$.
Therefore, $\vDash \bone \Ela \fone \wedge \ftwo$.

\par $\Rbr.$ 
If the last rule applied is an instance of $\Rbr$, 
the derivation has the following form:
\begin{prooftree}
\AxiomC{$\vdots$}
\noLine
\UnaryInfC{$\vdash\btwo \Era \fone$}
\AxiomC{$\mu(\model{\btwo}) \ge q$}
\RightLabel{$\Rbr$}
\BinaryInfC{$\vdash\bone \Era \BOX^{q}\fone$}
\end{prooftree}
By IH, $\vDash \btwo \Era \fone$, which is $\model{\btwo}\subseteq\model{\fone}$, and $\mu(\model{\btwo}) \ge q$ holds.
For basic measure theory, also $\mu$($\model{\fone}$) $\ge q$.
Then, $\model{\BOX^{q}\fone}$ = $\twoOm$ and, so, 
(for every $\bone$) 
$\model{\bone} \subseteq \model{\BOX^{q}\fone}$, 
which is $\vDash \bone \Era \BOX^{q}\fone$.

\par $\Rbl.$ 
If the last rule applied is an instance of $\Rbl$, the derivation has the following form:
\begin{prooftree}
\AxiomC{$\vdots$}
\noLine
\UnaryInfC{$\vdash\btwo \Ela \fone$}
\AxiomC{$\mu(\model{\btwo}) < q$}
\RightLabel{$\Rbl$}
\BinaryInfC{$\vdash\bone \Ela \BOX^{q}\fone$}
\end{prooftree}
By IH, $\vDash \btwo \Ela \fone$, 
which is $\model{\fone}\subseteq\model{\btwo}$, 
and $\mu(\model{\btwo}) < q$ holds.
For basic set theory, also $\mu(\model{\fone})<q$.
Therefore, $\model{\BOX^{q}\fone}$ = $\emptyset$, and so 
$\model{\BOX^{q}\fone} \subseteq \model{\bone}$, 
which is $\vDash \bone \Ela \BOX^{q}\fone$.

\par $\Rdr.$ 
If the last rule applied is an instance of $\Rdr$, the derivation has the following form:
\begin{prooftree}
\AxiomC{$\vdots$}
\noLine
\UnaryInfC{$\vdash \btwo \Ela \fone$}
\AxiomC{$\mu(\model{\btwo}) < q$}
\RightLabel{$\Rdr$}
\BinaryInfC{$\vdash\bone \Era \DIA^{q}\fone$}
\end{prooftree}
By IH, $\vDash \btwo \Ela \fone$, 
which is $\model{\fone}\subseteq \model{\btwo}$, and 
$\mu(\model{\btwo}) < q$ holds.
For basic set theory, $\mu(\model{\fone}) < q$.
But then, $\model{\DIA^{q}\fone}$ = $\twoOm$ and so (for every $\bone$) $\model{\bone} \subseteq \model{\DIA^{q}\fone}$, which is $\vDash \bone \Era \DIA^{q}\fone$.

\par $\Rdl.$ If the last rule applied is an instance of $\Rdl$, the derivation has the following form:
\begin{prooftree}
\AxiomC{$\vdots$}
\noLine
\UnaryInfC{$\vdash\btwo \Era \fone$}
\AxiomC{$\mu(\model{\btwo}) \ge q$}
\RightLabel{$\Rdl$}
\BinaryInfC{$\vdash\bone \Ela \DIA^{q}\fone$}
\end{prooftree}
By IH, $\vDash\btwo \Era \fone$, which is 
$\model{\btwo}\subseteq\model{\fone}$, and 
$\mu(\model{\btwo}) \ge q$ holds.
For basic set theory, $\mu(\model{\fone})\ge q$. 
Therefore, $\model{\DIA^{q}\fone}$ = $\emptyset$ and, so $\model{\DIA^{q}\fone} \subseteq \model{\bone}$, which is $\vDash \bone \Ela \DIA^{q}\fone$.
\end{itemize}
\end{proof}

In order to prove the completeness of 
$\GsCPLc$ with respect to the semantics of $\PPLc$, 
some preliminary notions and lemmas have to be introduced.
\begin{definition}[Basic Formulas]
A \emph{basic formula} of $\GsCPLc$ is a labelled formula, 
$\bone \Era \fone$ or $\bone \Ela \fone$, 
such that its logical part $\fone$ is atomic.
\end{definition}
\begin{definition}[Regular Sequent]
A \emph{regular sequent} of $\GsCPLc$ is a sequent of the form $\vdash \lone$, such that $\lone$ is a basic formula.
\end{definition}
\noindent
The notion of decomposition rewriting reduction is
defined by the following decomposition rules.
\begin{definition}[Decomposition Rewriting Reduction, $\dec$]\label{decomposition}
The \emph{decomposition rewriting reduction}, $\dec$, from a  sequent to a set of  sequents (both in the language of $\GsCPLc$), is defined by the following decomposition rules:
\begin{align*}
\text{if } \bone \vDash \neg \btwo, \ \vdash \bone \Era \neg \fone \ \ &\dec \ \ \{\vdash \btwo \Ela \fone\} \\
\text{if } \neg\btwo \vDash \bone, \  \vdash \bone \Ela 
\neg \fone \ \ &\dec \ \ \{\vdash \btwo \Era \fone\} \\
\text{if } \bone \vDash \btwo \vee \bthree, \ \vdash \bone 
\Era \fone \vee \ftwo \ \ &\dec \ \ \{\vdash \btwo \Era \fone, \vdash \bthree \Era \ftwo\} \\
\vdash \bone \Ela \fone \vee \ftwo \ \ &\dec \ \ \{\vdash \bone \Ela \fone, \vdash \bone \Ela \ftwo\} \\
\vdash \bone \Era \fone \wedge \ftwo \ \ &\dec \ \ \{\vdash \bone \Era \fone, \vdash \bone \Era \ftwo\} \\
\text{if } \btwo \wedge \bthree \vDash \bone, \ \vdash \bone \Ela \fone \wedge \ftwo \ \ &\dec \ \ \{\vdash \btwo \Ela \fone, \vdash  \bthree \Ela \ftwo\} \\
\text{if } \mu(\model{\btwo}) \ge q, \ \vdash \bone \Era \BOX^{q}\fone \ \ & \dec \ \ \{\vdash \btwo \Era \fone\} \\
\text{if } \mu(\model{\btwo}) < q, \ \vdash \bone \Ela \BOX^{q}\fone \ \ &\dec \ \ \{\vdash \btwo \Ela \fone\} \\
\text{if } \mu(\model{\btwo}) < q, \ \vdash \bone \Era \DIA^{q}\fone \ \ &\dec \ \  \{\vdash \btwo \Ela \fone\} \\
\text{if } \mu(\model{\btwo}) \ge q, \ \vdash \bone \Ela \DIA^{q}\fone \ \ &\dec \ \ \{\vdash \btwo \Era \fone\} \\
\text{if } \mu(\model{\bone}) = 0, \ \vdash \bone \Era \fone \ \ &\dec \ \ \{\} \\
\text{if } \mu(\model{\bone}) = 1, \ \vdash \bone \Ela \fone \ \ &\dec \ \ \{\} \\
\text{if } \mu(\model{\bone}) \neq 0, \ \vdash \bone \Era \DIA^{0}\fone \ \ &\dec \ \ \{\vdash \bot\} \\
\text{if } \mu(\model{\bone}) \neq 1, \ \vdash \bone \Ela \BOX^{0}\fone \ \ &\dec \ \ \{\vdash \bot\}
\end{align*}
\end{definition}
\normalsize
\noindent
Remarkably, rewriting rules are so defined that each application of a decomposition reduction on an arbitrary sequent, 
$\vdash \lone$, leads to a set of sequents, 
$\{\vdash \lone_{1}, \dots, \vdash \lone_{n}\}$, 
such that for every $i \in \{1, \dots, n\}$, 
the number of connectives of $\vdash \lone_{i}$ is 
(strictly) smaller than that of $\vdash \lone$. 
Basing on $\dec$, it is possible to define a set-decomposition reduction, 
$\Dec$, from a set of sequents to a set of sequents.
\begin{definition}[Set Decomposition, $\Dec$]\label{Decomposition}
The \emph{set-decomposition reduction}, $\Dec$, from a set of sequents to another set of sequents (in $\GsCPLc$), is defined as follows:
\begin{prooftree}
\AxiomC{$\vdash \lone_{i} \dec \{\vdash \lone_{i_{1}}, \dots, \vdash \lone_{i_{m}}\}$}
\UnaryInfC{$\{\vdash \lone_{1}, \dots, \vdash \lone_{i}, \dots, \vdash \lone_{n}\} \Dec \{\vdash \lone_{1}, \dots, \vdash \lone_{i_{1}}, \dots, \vdash \lone_{i_{m}}, \dots, \vdash \lone_{n}\}$}
\end{prooftree}
\end{definition}
\noindent
In other words, $\Dec$ is the natural lifting of $\dec$ to a relation on sets. 

Before introducing the crucial notions of
normal form for these two reductions it is
worth mentioning that each predicate concerning one sequent 
can be naturally generalized to sets of sequents 
by stipulating that a predicate holds for the set when 
it holds for every sequent in the set. 
In order to make our presentation straightforward, 
let us introduce the following auxiliary notion of corresponding set.
\begin{definition}[Corresponding Set]
Given a sequent, $\vdash \lone$, 
we call its \emph{corresponding set} the set including only 
this sequent as its element, $\{\vdash \lone\}$.
\end{definition}
\noindent
The notion of $\Dec$-normal form is defined
as predictable: a set of sequents is in $\Dec$-normal form
when no set-decomposition reduction, $\Dec$, can be applied on it.
Otherwise said, the set is such that there is no sequent in it 
on which $\dec$ can be applied. 
\begin{definition}[$\Dec$-Normal Form]\label{def:normalform}
A sequent is a \emph{$\Dec$-normal form} if no decomposition rewriting reduction rule, $\dec$, can be applied on it.
A set of sequents is in \emph{$\Dec$-normal form} if it cannot be reduced by any $\Dec$ set-rewriting rule.
\end{definition}

The proof of completeness crucially relies on 
the given rewriting reductions,
$\dec$ and $\Dec$, and it is based on some auxiliary
steps: (i) validity is \emph{existentially} preserved 
through $\Dec$-decomposition, which is given a valid
sequent, there has (at least) one valid $\Dec$-normal form
(ii) each $\Dec$-normal form is valid if and only if it is derivable,
(iii) derivability is somehow ``back-preserved'', through 
$\Dec$-decomposition, which is, given a (set of) sequent(s)
which is $\Dec$-decomposed into another set of sequents,
if the $\Dec$-decomposed set is derivable,
then it is possible to construct a derivation for the original
(set of) sequent(s).

\bigskip
\textbf{$\Dec$ is strongly normalizing.}
First it is proved that $\Dec$ is strongly normalizing,
and so that each decomposition process terminates.
\begin{definition}[Number of Connectives, $\cn$]\label{def:cn}
The \emph{number of connectives of a labelled formula} $\lone$, 
$\cn(\lone)$, is the number of connectives of its logical part 
(equally labelled $\cn(\fone)$) and is inductively defined as follows:
\begin{align*}
\cn(\bone \Era \atom{n}) = \cn(\bone \Ela \atom{n}) = \cn(\atom{n}) &= 1 \\
\cn(\bone \Era \neg \fone) = \cn(\bone \Ela \neg \fone) = \cn(\neg\fone) &= 1 + \cn(\fone) \\ 
\cn(\bone \Era \fone^{*}\ftwo) = \cn(\bone \Ela \fone^{*}\ftwo) = \cn(\fone^{*}\ftwo) &= 1 + \cn(\fone) + \cn(\ftwo)\\
\cn(\bone \Era *\fone) = \cn(\bone \Ela *\fone) = \cn(*\fone) &= 1 + \cn(\fone)
\end{align*}
\normalsize
with $^{*} \in \{\wedge, \vee\}$ and $* \in \{\BOX^{q}, \DIA^{q}\}$.
Given a sequent $\vdash \lone$, the \emph{number of connectives of the sequent}, $\cn(\vdash \lone)$, is the 
number of connectives of its labelled formulas:
$$
\cn(\vdash \lone) = \cn(\lone).
$$
\end{definition}
\begin{definition}[Set Measure, $\ms$]\label{def:ms}
Given a set of sequents 
$\{\vdash \lone_{1}, \dots, \vdash \lone_{n}\},$ its \emph{measure}, 
$\ms(\{\vdash \lone_{1}, \dots, \vdash \lone_{n}\}$), 
is defined as follows:
\begin{align*}
\ms(\{\}) &= 0 \\
\ms\big(\{\vdash \lone_{1}, \dots, \vdash \lone_{n}\}\big) &= 3^{\cn(\vdash \lone_{1})} + \dots + 3^{\cn(\vdash \lone_{n})}.
\end{align*}
\end{definition}
\primeDec*
\begin{proof}
That every set of sequents is $\Dec$-strongly normalizable is proven by showing that, if $\{\vdash\lone_{1}, \dots, \vdash \lone_{m}\} \Dec \{\vdash \lone_{1}', \dots, \vdash \lone_{m}'\}$, then $\ms(\{\vdash \Delta_{1}, \dots, \vdash \Delta_{m}\}) > \ms(\{\vdash \Delta_{1}', \dots, \vdash \Delta_{m}'\})$. 
The proof is based on the exhaustive analysis of all
possible forms of $\Dec$-reduction applicable to the
given set, which is by dealing with all possible forms of
$\dec$-reduction of one of the $\vdash \lone_{i}$, with
$i\in\{1,\dots,m\}$ on which $\Dec$ is based.
Thus, we will take an arbitrary $\vdash\lone_{i}$ to be
the  ``active'' sequent of $\Dec$.
Let us consider all the possible forms of $\dec$ on which 
$\Dec$ can be based
\par Let $\lone_{} = \bone \Era \neg \fone$.
Assume that $\vdash\lone_{i}$ is the active sequent of the
$\Dec$-decomposition and that $\Dec$ is based on the following $\dec$:
$$
\vdash\bone \Era\neg \fone \ \ \ \dec \ \ \ \ 
\{\vdash \btwo \Ela \fone\}
$$
where $\bone\vDash\neg\btwo$. Thus:
\begin{align*}
\ms\big(\{\vdash \lone_{1},\dots,\vdash\btwo\Ela\fone,\dots,\vdash\lone_{m}\}\big) 
&\stackrel{\ref{def:cn}}{=}
3^{\cn(\lone_{1})}+\dots+3^{\cn(\btwo\Ela\fone)}
+\dots+3^{\cn(\lone_{n})} \\
&\stackrel{\ref{def:ms}}{=} 
3^{\cn(\lone_{1})}+\dots+3^{\cn(\fone)}+\dots+3^{\cn(\lone_{n})} \\
&\stackrel{\ref{def:ms}}{=} 
3^{\cn(\lone_{1})}+\dots+
3^{\cn(\neg\fone)–1}+\dots+3^{\cn(\lone_{n})} \\
&< 
3^{\cn(\lone_{1})}+\dots+3^{\cn(\bone\Era\neg\fone)}+\dots+3^{\cn(\lone_{n})} \\
&\stackrel{\ref{def:ms}}{=} 
\ms\big(\{\vdash\lone_{1},\dots,\vdash\bone\Era\neg\fone,\dots,\vdash \lone_{m}\}\big).
\end{align*}

\par $\lone_{i}=\bone\Ela\neg\fone$.
Assume that $\vdash\lone_{i}$ is the active sequent of the
$\Dec$-decomposition and that $\Dec$ is based on the following $\dec$:
$$
\vdash \bone \Ela\neg\fone \ \ \ \dec \ \ \ \{\vdash \btwo \Era \fone\}
$$
where $\neg\btwo\vDash\bone$. Thus,
\begin{align*}
\ms\big(\{\vdash\lone_{1},\dots,\vdash\btwo\Era\fone,\dots,\vdash\lone_{m}\}\big) 
&\stackrel{\ref{def:ms}}{=} 
3^{\cn(\lone_{1})}+\dots+3^{\cn(\btwo\Era\fone)}+\dots+3^{\cn(\lone_{n})} \\
&\stackrel{\ref{def:cn}}{=} 
3^{\cn(\lone_{1})}+\dots+3^{\cn(\fone)}+\dots+3^{\cn(\lone_{n})} \\
&\stackrel{\ref{def:cn}}{=}
3^{\cn(\lone_{1})}+\dots+3^{\cn(\neg\fone)–1}+\dots+3^{\cn(\lone_{n})} \\
&< 3^{\cn(\lone_{1})}+\dots+3^{\cn(\bone\Ela\neg\fone)}+\dots+3^{\cn(\lone_{n})} \\
&\stackrel{\ref{def:ms}}{=} \ms\big(\{\vdash\lone_{1},\dots,\vdash\bone\Ela\neg\fone,\dots,\vdash\lone_{m}\}\big).
\end{align*}

$\lone_{i}=\bone \Era \fone\vee\ftwo$. Assume that 
$\vdash\lone_{i}$ is the active sequent of $\Dec$ and that
$\Dec$ is based on the following $\dec$:
$$
\vdash\bone\Era\fone\vee\ftwo \ \ \ \dec \ \ \ \{\vdash\btwo\Era\fone,\vdash\bthree\Era\ftwo\}
$$
where $\bone\vDash\btwo\vee\bthree$. Then,
\begin{align*}
\ms\big(\{\vdash\lone_{i},\dots,\vdash\btwo\Era\fone,\vdash\bthree\Era\ftwo,\dots,\vdash\lone_{m}\}\big) 
&\stackrel{\ref{def:ms}}{=} 
3^{\cn(\lone_{1})}+\dots+3^{\cn(\btwo\Era\fone)}+3^{\cn(\bthree\Era\ftwo)}+\dots+3^{\cn(\lone_{m})} \\
&\stackrel{\ref{def:cn}}{=} 3^{\cn(\lone_{1})} + \dots + 3^{\cn(\fone)}
+ 3^{\cn(\fthree)} + \dots + 3^{\cn(\lone_{m})} \\
&< 3^{\cn(\lone_{1})}+ \dots + 3^{\cn(\fone)+\cn(\ftwo)+1} + \dots+ 3^{\cn(\lone_{m})} \\
&\stackrel{\ref{def:cn}}{=} 3^{\cn(\lone_{1})} +\dots+3^{\cn(\bone\Era\fone\vee\ftwo)}+\dots+3^{\cn(\lone_{m})} \\
&\stackrel{\ref{def:ms}}{=} \ms\big(\{\vdash\lone_{1},\dots,\vdash\bone\Era\fone\vee\ftwo,\dots,\vdash\lone_{m}\}\big).
\end{align*}

$\lone_{i}=\bone\Ela\fone\vee\ftwo$. 
Assume that $\vdash\lone_{i}$
is the active sequent of $\Dec$ and that $\Dec$
is based on the following $\dec$:
$$
\vdash \bone\Ela\fone\vee\ftwo \ \ \ \dec \ \ \ \
\{\vdash\bone\Ela\fone,\vdash\bone\Ela\ftwo\}
$$
Then,
\begin{align*}
\ms(\{\vdash\lone_{1},\dots,\vdash\bone\Ela\fone,
\vdash\bone\Ela\ftwo,\dots,\vdash\lone_{m}\}) 
&\stackrel{\ref{def:ms}}{=} 3^{\cn(\lone_{1})}
+\dots+3^{\cn(\bone\Ela\fone)}+3^{\cn(\bone\Ela\ftwo)}+
\dots+3^{\cn(\lone_{m})} \\
&\stackrel{\ref{def:cn}}{=} 
3^{\cn(\lone_{1})} + \dots + 3^{\cn(\fone)} + 3^{\cn(\ftwo)} + \dots + 3^{\cn(\lone_{m})} \\
&< 3^{\cn(\lone_{1})} + \dots + 3^{\cn(\fone)+\cn(\ftwo)+1} + \dots + 3^{\cn(\lone_{m})} \\
&\stackrel{\ref{def:cn}}{=} 
3^{\cn(\lone_{1})}+\dots + 3^{\cn(\bone\Ela\fone\vee\ftwo)}+\dots+3^{\cn(\lone_{m})} \\
&\stackrel{\ref{def:ms}}{=} 
\ms\big(\{\vdash\lone_{1},\dots,
\vdash\bone\Ela\fone\vee\ftwo,\dots,\vdash\lone_{m}\}\big)
\end{align*}

$\lone_{i}=\bone\Era\fone\wedge\ftwo$. 
The proof is equivalent to the one for $\bone\Ela\fone\vee\ftwo$.

$\lone_{i}=\bone\Ela\fone\wedge\ftwo$.
The proof is equivalent to the one for $\bone\Era\fone\vee\ftwo$.

$\lone=\bone\Era\BOX^{q}\fone$. 
Assume that $\vdash \lone_{i}$ is the active sequent of 
$\Dec$ decomposition and
that $\Dec$ is based on the following $\dec$:
$$
\vdash \bone\Era\BOX^{q}\fone \ \ \ \dec \ \ \ \{\vdash\btwo\Era\fone\}
$$
where $\mu(\model{\btwo})\ge q$. Then,
\begin{align*}
\ms\big(\{\vdash\lone_{1},\dots,\vdash\btwo\Era\fone,\dots,\vdash\lone_{m}\}\big) 
&\stackrel{\ref{def:ms}}{=}
3^{\cn(\lone_{1})}+\dots+3^{\cn(\btwo\Era\fone)}+\dots+
3^{\cn(\lone_{m})} \\
&\stackrel{\ref{def:cn}}{=} 3^{\cn(\lone_{1})} +
\dots + 3^{\cn(\fone)} + \dots + 3^{\cn(\lone_{m})} \\
&< 3^{\cn(\lone_{1})}+\dots +3^{\cn(\fone)+1}+\dots+
3^{\cn(\lone_{m})} \\
&\stackrel{\ref{def:cn}}{=} 3^{\cn(\lone_{1})}+\dots+
3^{\cn(\bone\Era\BOX^{q}\fone)+1}+\dots+3^{\cn(\lone_{m})} \\
&\stackrel{\ref{def:ms}}{=} \ms\big(\{\vdash\lone_{1},\dots,
\vdash\bone\Era\BOX^{q}\fone,\dots,\vdash\lone_{m}\}\big).
\end{align*}

$\lone_{i}=\bone\Ela\BOX^{q}\fone$, $\lone_{i}=\bone\Era\DIA^{q}\fone$,
$\lone_{i}=\bone\Ela\DIA^{q}\fone$.
The proof is equivalent to the one for 
$\bone\Era\BOX^{q}\fone$.

$\lone_{i}=\bone\Era\fone$.
Assume that $\vdash\lone_{i}$ is the active sequent of $\Dec$
decomposition and that $\Dec$ is based on the following
$\dec$:
$$
\vdash \bone \Era \fone \ \ \ \dec \ \ \ \{\}
$$
where $\mu(\model{\bone})=0$. 
Since for Definition~\ref{def:ms} and~\ref{def:cn},
$\cn(\{\})=0$ and $\cn(\vdash\bone\Ela\fone)>0$,
clearly $\cn(\{\})<\cn(\vdash\bone\Ela\fone)$.

$\lone_{i}=\bone\Ela\fone$, 
$\lone_{i}=\bone\Era\DIA^{0}\fone$,
$\lone_{i}=\bone\Ela\BOX^{0}\fone$.
The proof is equivalent to the other for
$\bone\Era\fone$.
\end{proof}

\bigskip
\textbf{$\Dec$-Normal Sequents.}
It is possible to show that $\Dec$-normal sequents are
valid if and only if they are derivable in $\GsCPLc$.
First, it is shown that sequents which are $\Dec$-normal
are regular. Then, it is proved that regular sequents 
are derivable if and only if they are valid.
Putting these two results together it is shown that
$\Dec$-normal sequents are valid only when they are derivable.
\begin{lemma}\label{lemma:NormalRegular}
If a (non-empty) sequent is $\Dec$-normal, then it is regular.
\end{lemma}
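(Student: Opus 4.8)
The plan is a direct case analysis on the principal connective of the logical part of the labelled formula, showing that any non-atomic shape leaves a decomposition redex, which is impossible for a $\Dec$-normal sequent. Fix a $\Dec$-normal sequent $\vdash \lone$ and write $\lone$ as $\bone \Era \fone$ or $\bone \Ela \fone$; the goal is to prove that $\fone$ is atomic, so that $\lone$ is a basic formula and $\vdash\lone$ is regular.

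First I would record two degenerate-threshold facts needed to treat the rational $q = 0$. If $\lone = \bone \Era \fone$, then $\mu(\model{\bone}) \neq 0$, for otherwise the clause $\vdash \bone \Era \fone \dec \{\}$ of Definition~\ref{decomposition} applies, contradicting $\Dec$-normality; symmetrically, if $\lone = \bone \Ela \fone$, then $\mu(\model{\bone}) \neq 1$.

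Next I would go through the possible principal connectives of $\fone$ in the case that $\fone$ is not atomic, each time naming a Boolean formula witnessing the side condition of the matching clause of Definition~\ref{decomposition}. For $\lone = \bone \Era \fone$: if $\fone = \lnot\fthree$, the choice $\btwo = \lnot\bone$ yields $\bone \vDash \lnot\btwo$ (since $\model{\lnot\lnot\bone} = \model{\bone}$), so $\vdash\bone\Era\lnot\fthree \dec \{\vdash\lnot\bone\Ela\fthree\}$; if $\fone = \fthree\vee\ffour$, the choice $\btwo=\bthree=\bone$ yields $\bone \vDash \bone\vee\bone$; if $\fone = \fthree\wedge\ffour$, the corresponding clause carries no side condition; if $\fone = \BOX^{q}\fthree$, the choice $\btwo = \top$ works since $\mu(\model{\top}) = 1 \ge q$; and if $\fone = \DIA^{q}\fthree$, then for $q>0$ the choice $\btwo = \bot$ works since $\mu(\model{\bot}) = 0 < q$, while for $q = 0$ the preliminary fact $\mu(\model{\bone}) \neq 0$ makes the clause $\vdash\bone\Era\DIA^{0}\fthree \dec \{\vdash\bot\}$ applicable. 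The case $\lone = \bone \Ela \fone$ is handled dually, swapping the roles of $\top$ and $\bot$, of $\BOX$ and $\DIA$, and of $\ge$ and $<$ as dictated by Definition~\ref{decomposition}, and invoking $\mu(\model{\bone}) \neq 1$ for the $\BOX^{0}$ subcase. In each non-atomic case a $\dec$-redex has been exhibited, contradicting the assumption; hence $\fone$ is atomic, $\lone$ is basic, and $\vdash\lone$ is regular.

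I expect no genuine obstacle here: the argument is essentially bookkeeping over the clauses of Definition~\ref{decomposition}. The one spot requiring a little care is the threshold $q = 0$, where there is no Boolean formula of measure strictly below $0$ (respectively strictly above $1$), which is precisely why the two preliminary observations on $\mu(\model{\bone})$ are recorded first.
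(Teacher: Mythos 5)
Your proposal is correct and follows essentially the same route as the paper: a case analysis on the principal connective of the logical part, exhibiting a $\dec$-redex for every non-atomic shape (the paper phrases it as contraposition rather than contradiction, and picks $\top$/$\bot$ as witnesses where you sometimes pick $\lnot\bone$ or $\bone$ itself, but these are cosmetic differences). Your preliminary observation that $\Dec$-normality forces $\mu(\model{\bone})\neq 0$ (resp.\ $\neq 1$) via the $\mu$-clauses is a slightly tidier way of handling the $q=0$ corner than the paper's direct subcase split, but the substance is identical.
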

\begin{proof}
The proof is by contraposition. 
Given an arbitrary \emph{non-regular} sequent,
call it $\vdash\lone$, it is shown that it is not $\Dec$-normal.
Every possible form of lebelled formula $\lone$ is considered.

$\lone=\bone\Era\neg\fone$. 
Clearly, $\model{\bone}\subseteq\twoOm=\model{\neg\bot}$,
which is $\bone\vDash\neg\bot$.
Then, the following decomposition is well-defined:
$$
\vdash \bone\Era\neg\fone \ \ \ \dec \ \ \ \{\vdash\bot\Ela\fone\}.
$$

$\lone=\bone\Ela\neg\fone$.
Clearly, $\model{\neg\top}=\emptyset\subseteq\model{\bone}$,
which is $\neg\top \vdash\bone$.
Then, the following decomposition is well-defined:
$$
\vdash\bone\Ela\neg\fone \ \ \ \dec \ \ \ \{\vdash\top\Era \fone\}.
$$

$\lone=\bone\Era\fone\vee\ftwo$.
For every 
$\model{\bone}\subseteq\twoOm=\model{\top\vee\top}$,
which is $\bone\vDash(\top\vee\top)$.
Then, the following decomposition is well-defined:
$$
\vdash\bone\Era\fone\vee\ftwo \ \ \ \dec \ \ \ \{\vdash\top \Era\fone,\vdash\top\Era\ftwo\}.
$$

$\lone=\bone\Ela\fone\vee\ftwo$.
Then, the following decomposition is well-defined:
$$
\vdash\bone\Ela\fone\vee\ftwo \ \ \ \dec \ \ \ \{\vdash\bone\Ela\fone,\vdash\bone\Ela\ftwo\}.
$$

$\lone=\bone\Era\fone\wedge\ftwo$.
Then, the following decomposition is well-defined:
$$
\vdash\bone\Era\fone\wedge\ftwo \ \ \ \dec \ \ \ \{\vdash\bone\Era\fone,\vdash\bone\Era\ftwo\}.
$$

$\lone=\bone\Ela\fone\wedge\ftwo$.
For every 
$\bone,\model{\bot\wedge\bot}=\emptyset\subseteq\model{\bone}$,
which is $(\bot\wedge\bot)\vDash\bone$.
Then, the following decomposition is well-defined:
$$
\vdash \bone\Ela\fone\wedge\ftwo \ \ \ \dec \ \ \ \{\vdash\bot\Ela\fone,\vdash\bot\Ela\ftwo\}.
$$

$\lone=\bone\Era\BOX^{q}\fone$.
For every $q\in [0,1]$, 
$\mu(\model{\top})=\mu(\twoOm)\ge q$.
Then, the following decomposition is well-defined:
$$
\vdash\bone\Era\BOX^{q}\fone \ \ \ \dec \ \ \ \{\vdash\top\Era\fone\}.
$$

$\lone=\bone\Ela\BOX^{q}\fone$.
There are two possible cases, basing on the value of
$q\in[0,1]$:
\begin{itemize}
\item Let $q\neq0$. Then, 
$\mu(\model{\bot})=\mu(\emptyset)<q$ and the following
decomposition is well-defined:
$$
\vdash \bone\Ela\BOX^{q}\fone \ \ \ \dec \ \ \ \{\vdash\bot\Ela\fone\}.
$$
\item Let $q=0$. There are two possible sub-cases:
\begin{enumerate}
\item Let $\mu(\model{\bone})=1$. Then, the following
decomposition is well-defined:
$$
\vdash\bone\Ela\BOX^{0}\fone \ \ \ \dec \ \ \ \{\}.
$$
\item Let $\mu(\model{\bone})\neq 1$. Then, the following
decomposition is well-defined:
$$
\vdash\bone\Ela\BOX^{0}\fone \ \ \ \dec \ \ \ \{\vdash\bot\}.
$$
\end{enumerate}
\end{itemize}

$\lone=\bone\Era\DIA^{q}\fone$. There are two possible cases,
basing on the value of $q\in[0,1]$.
\begin{itemize}
\item Let $q\neq0$. Then $\mu(\model{\bot})=\mu(\emptyset)<q$
and the decomposition is well-defined:
$$
\vdash \bone\Era\DIA^{q}\fone \ \ \ \dec \ \ \ \{\vdash\bot\Ela\fone\}.
$$
\item Let $q=0$. There are two possible sub-cases:
\begin{enumerate}
\item Let $\mu(\model{\bone})=0$. Then, the following
decomposition is well-defined:
$$
\vdash\bone\Era\DIA^{0}\fone \ \ \ \dec \ \ \ \{\}.
$$
\item Let $\mu(\model{\bone})\neq0$. Then, the following
decomposition is well-defined:
$$
\vdash\bone\Era\DIA^{0}\fone \ \ \ \dec \ \ \ \{\vdash\bot\}.
$$
\end{enumerate}
\end{itemize}

$\lone=\bone\Ela\DIA^{q}\fone.$ For every $q\in[0,1]$,
$\mu(\model{\top\vee\top})=\mu(\twoOm)\ge q$.
Then, the following decomposition is well-defined:
$$
\vdash\bone\Ela\DIA^{q}\fone \ \ \ \dec \ \ \ \{\vdash\top\Era\fone\}.
$$
\end{proof}

\begin{lemma}\label{lemma:RegularValidDerivable}
A regular sequent is valid if and only if it is derivable in
$\GsCPLc$.
\end{lemma}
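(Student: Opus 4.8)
The plan is to use the defining restriction on regular sequents: since a regular sequent consists of a basic formula, its logical part is an atom $\atom{n}$, so $\vdash \lone$ is either $\vdash \bone \Era \atom{n}$ or $\vdash \bone \Ela \atom{n}$ for some Boolean formula $\bone$ and some $n \in \Nat$. I would then prove the two implications of the biconditional separately, the key observation being that in the atomic case the external hypotheses of the initial-sequent rules $\AxO$ and $\AxT$ are \emph{literally} the corresponding validity statements.

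For the direction ``derivable $\Rightarrow$ valid'' there is nothing new to do: I would simply invoke soundness, Proposition~\ref{soundnessAP}, which yields $\vDash \lone$ from $\vdash_{\GsCPLc}\lone$ for \emph{any} sequent, in particular for a regular one. For the converse, ``valid $\Rightarrow$ derivable'', I would argue by the two cases above. If $\lone = \bone \Era \atom{n}$ and $\vDash \lone$, then by Definition~\ref{valB} we have $\model \bone \subseteq \model{\atom{n}}$; since $\model{\atom{n}} = \Cyl{n} = \model{\bvar_n}$ by Definition~\ref{semantics} and Definition~\ref{Bool}, this is precisely the side condition $\bone \vDash \bvar_n$ of rule $\AxO$, so a single application of $\AxO$ produces $\vdash \bone \Era \atom{n}$. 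Symmetrically, if $\lone = \bone \Ela \atom{n}$ and $\vDash \lone$, then $\model{\atom{n}} \subseteq \model \bone$, i.e.\ $\model{\bvar_n} \subseteq \model \bone$, i.e.\ $\bvar_n \vDash \bone$, and rule $\AxT$ applies.

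The only point that requires a moment's care — and really the only ``obstacle'', such as it is — is the identification $\model{\atom{n}} = \model{\bvar_n}$ between the interpretation of the $\PPLc$ atom and that of the homonymous Boolean atom: both unfold to the cylinder $\Cyl{n}$, so this is immediate, but it is exactly what allows validity in the atomic case to be repackaged as the premiss of $\AxO$ or $\AxT$. Notably, neither the $\mu$-rules nor the decomposition relation $\Dec$ enter this argument at all; the genuine work of the completeness proof lives elsewhere, namely in Lemma~\ref{lemma:NormalRegular} (every nonempty $\Dec$-normal sequent is regular) and in Lemma~\ref{lemma:decomposition2}, where the present lemma is combined with that fact.
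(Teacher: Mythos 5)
Your proof is correct and matches the paper's argument exactly: both directions are handled the same way, with soundness (Proposition~\ref{soundnessAP}) giving ``derivable $\Rightarrow$ valid'' and the identification $\model{\atom{n}} = \Cyl{n} = \model{\bvar_n}$ turning validity of an atomic labelled formula into the side condition of $\AxO$ or $\AxT$. Nothing is missing.
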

\begin{proof}
Let $\vdash\lone$ be a regular sequent, which is let $\lone$ be basic.
\begin{itemize}
\item[$\Rightarrow$] Assume that $\vdash \lone$ is valid.
Since $\lone$ is a basic formula, there are two main cases.
\par
$\lone=\bone\Era\atom{n}$, for arbitrary $\bone$ and $n\in\Nat$.
Since for hypothesis $\vDash\bone\Era\atom{n}$,
which is $\model{\bone}\subseteq\model{\atom{n}}$,
and $(\model{\atom{n}}=)\Cyl{n}=\bvar_{n}$, also
$\model{\bone}\subseteq\model{\bvar_{x}}$,
which is $\bone\vDash\bvar_{n}$. 
Therefore, $\vdash\bone\Era\atom{n}$ 
is derivable by means of $\AxO$:
\begin{prooftree}
\AxiomC{$\bone\vDash\bvar_{n}$}
\RightLabel{$\AxO$}
\UnaryInfC{$\vdash\bone\Era\atom{n}$}
\end{prooftree}
\par
$\lone=\bone\Ela\atom{n}$, for arbitrary $\bone$ and
$n\in\Nat$. Since for hypothesis $\vDash\bone\Ela\atom{n}$,
which is $\model{\atom{n}}\subseteq \model{\bone}$, and
$\model{\bvar_{n}}=\Cyl{n}(=\atom{n})$,
also $\model{\bvar_{n}}\subseteq\model{\bone}$,
which is $\bvar_{n}\vDash\bone$.
Therefore, $\vdash\bone\Ela\atom{n}$ is derivable by means
of $\AxT$:
\begin{prooftree}
\AxiomC{$\bvar_{n}\vDash\bone$}
\RightLabel{$\AxT$}
\UnaryInfC{$\vdash\bone\Ela\atom{n}$}
\end{prooftree}
\end{itemize}
Therefore, in both cases, $\vdash\lone$ is derivable in 
$\GsCPLc$.
\item[$\Leftarrow$] Assume that $\vdash\lone$ is derivable
in $\GsCPLc$. Then, by Proposition~\ref{soundnessAP}, $\vDash \lone$.
\end{proof}

\begin{corollary}\label{cor:NormalValidDerivable}
If a sequent is $\Dec$-normal, then it is valid
if and only if it is derivable in $\GsCPLc$.
\end{corollary}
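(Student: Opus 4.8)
The plan is to obtain this as an immediate consequence of the two preceding lemmas, chained in the obvious way, with no new combinatorial work required. Let $\vdash \lone$ be a $\Dec$-normal sequent. Since every sequent of $\GsCPLc$ consists of a single labelled formula and is therefore non-empty, Lemma~\ref{lemma:NormalRegular} applies directly and tells us that $\vdash \lone$ is regular, i.e.\ that $\lone$ is a basic labelled formula, one whose logical part is an atom $\atom{n}$.

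Once regularity is in hand, I would simply invoke Lemma~\ref{lemma:RegularValidDerivable}, which asserts precisely that a regular sequent is valid if and only if it is derivable in $\GsCPLc$. Composing the two steps — $\Dec$-normal $\Rightarrow$ regular, and then regular $\Rightarrow$ (valid $\Leftrightarrow$ derivable) — yields the claimed biconditional for every $\Dec$-normal sequent, which is exactly the statement of the corollary.

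There is essentially no obstacle here: the corollary is a pure book-keeping step, recording the composite of Lemma~\ref{lemma:NormalRegular} and Lemma~\ref{lemma:RegularValidDerivable} in the form in which it will be used in the completeness argument (namely, as property~i of Lemma~\ref{lemma:decomposition2}, read at the level of a single sequent before being lifted to finite sets of sequents). The only point deserving a line of care is to verify that the hypothesis of Lemma~\ref{lemma:NormalRegular} is met — that the object under consideration is genuinely a sequent, hence non-empty — which is automatic in the single-succedent setting, where a sequent is by definition of the form $\vdash \lone$; the degenerate outputs of certain decomposition rules, such as the empty set $\{\}$, are sets of sequents rather than sequents and thus lie outside the scope of the statement.
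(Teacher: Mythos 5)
Your proof is correct and follows exactly the paper's route: the corollary is obtained by composing Lemma~\ref{lemma:NormalRegular} with Lemma~\ref{lemma:RegularValidDerivable}. Your additional remark that the non-emptiness hypothesis is automatic for single-succedent sequents is a sensible (if implicit in the paper) point of care and does not change the argument.
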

\begin{proof}
By putting Lemma~\ref{lemma:NormalRegular} 
and~\ref{lemma:RegularValidDerivable} together.
\end{proof}
\noindent
All the given results can be naturally extended from sequents
to sets, obtaining in particular that,  
if a set of sequents is $\Dec$-normal, 
which is each of its sequents is $\Dec$-normal, 
then it is valid if and only if it is derivable,
namely its sequents are valid if and only if they are 
derivable.

\bigskip
\textbf{Existential Preservation of Validity.} 
It is possible to prove that the validity is existentially preserved
through $\Dec$-decomposition.
It is first needed to establish the following auxiliary lemma.
\begin{lemma}\label{label}
For every formula of $\PPLc$, call it $\fone$, 
there is an expression $\bone_{\fone}$, 
such that $\model{\fone} = \model{\bone_{\fone}}$.
\end{lemma}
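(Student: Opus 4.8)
The plan is to argue by a straightforward induction on the structure of the $\PPLc$-formula $\fone$, exhibiting $\bone_{\fone}$ explicitly in each case. For $\fone = \atom i$ I would take $\bone_{\fone} = \bvar_{i}$, which works since $\model{\atom i} = \Cyl i = \model{\bvar_{i}}$. For $\fone = \lnot\ftwo$, $\fone = \ftwo\land\fthree$ and $\fone = \ftwo\lor\fthree$ I would set, respectively, $\bone_{\fone} = \lnot\bone_{\ftwo}$, $\bone_{\fone} = \bone_{\ftwo}\land\bone_{\fthree}$ and $\bone_{\fone} = \bone_{\ftwo}\lor\bone_{\fthree}$, where $\bone_{\ftwo}$ and $\bone_{\fthree}$ are the Boolean formulas provided by the induction hypothesis for the relevant subformulas; in each of these cases the claim is immediate from the induction hypothesis together with the fact that, by Definition~\ref{semantics} and Definition~\ref{Bool}, the Boolean connectives are interpreted exactly as the corresponding $\PPLc$-connectives, namely through complementation, finite intersection and finite union of measurable sets.

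The only cases that need a word are $\fone = \BOX^{q}\ftwo$ and $\fone = \DIA^{q}\ftwo$. Here I would first apply the induction hypothesis to $\ftwo$, obtaining $\bone_{\ftwo}$ with $\model\ftwo = \model{\bone_{\ftwo}}$, so that $\mu(\model\ftwo) = \mu(\model{\bone_{\ftwo}})$; by Lemma~\ref{lemma:counti} this common value is the rational number $\SSAT(\bone_{\ftwo})\cdot 2^{-n}$, where $n$ is the number of propositional variables occurring in $\bone_{\ftwo}$, and in particular it can be compared with $q$. Recalling the clause of Definition~\ref{semantics} for $\BOX^{q}$ — which yields $\model{\BOX^{q}\ftwo} = \twoOm$ when $\mu(\model\ftwo)\geq q$ and $\model{\BOX^{q}\ftwo} = \emptyset$ otherwise — I would set $\bone_{\fone} = \top$ in the first case and $\bone_{\fone} = \bot$ in the second, using that $\model\top = \twoOm$ and $\model\bot = \emptyset$. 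The case $\fone = \DIA^{q}\ftwo$ is symmetric, using the strict inequality $\mu(\model\ftwo) < q$ and the corresponding clause of Definition~\ref{semantics}.

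I do not expect any genuine obstacle: the statement asserts only the \emph{existence} of $\bone_{\fone}$, not that it can be computed efficiently, so the appeal to Lemma~\ref{lemma:counti} serves merely to confirm that the measure of the interpretation of a Boolean formula is a well-defined rational that can be compared with $q$ (in fact, since $\mu$ is a measure on $\mathcal B(2^{\omega})$ and $q\in\mathbb{Q}_{[0,1]}$, such a comparison always makes sense). The only point worth flagging is bookkeeping: the Boolean formula $\bone_{\fone}$ produced in the quantifier cases may have free variables unrelated to those of $\fone$ — indeed it is simply $\top$ or $\bot$ — which is harmless for the statement and is exactly what makes the argument close in those cases.
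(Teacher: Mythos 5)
Your proof is correct and follows essentially the same structural induction as the paper, including the same choices of $\bone_{\fone}$ in every case ($\bvar_i$ for atoms, compositional for the connectives, and $\top$ or $\bot$ for the quantified cases depending on whether the measure condition holds). The extra appeal to Lemma~\ref{lemma:counti} and the induction hypothesis in the quantifier cases is harmless but not needed: Definition~\ref{semantics} already forces $\model{\BOX^{q}\ftwo}$ and $\model{\DIA^{q}\ftwo}$ to be $\twoOm$ or $\emptyset$, which is all the paper uses.
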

\begin{proof}
The proof is by induction on the structure of $\fone$:
\begin{itemize}
\item \emph{Base case.} 
Let $\fone$ be atomic, which is $\fone \equiv \atom{n}$, 
for some $n\in\Nat$. 
By Definition~\ref{Bool}, $\model{\atom{n}}$ = $\Cyl{n}$, 
but also $\Cyl{n}$ = $\model{\bvar_{n}}$. 
Therefore, let us define $\bone_{\fone} = \bvar_{n}$, 
so that clearly $\model{\atom{n}}$ = $\model{\bvar_{n}}$.
\item \emph{Inductive case.} There are some possible cases.
\par $\fone = \neg\ftwo$. 
By IH, there is a $\bone_{\ftwo}$ such that $\model{\ftwo} = \model{\bone_{\ftwo}}$. 
So, let us define $\bone_{\fone}$ = $\neg \bone_{\ftwo}$, 
which is well-defined by Definition~\ref{Bool}. 
Indeed, $\model{\bone_{\fone}}$ = $\model{\neg \bone_{\ftwo}}$ = $\twoOm \ – \ \model{\bone_{\ftwo}}$ $\stackrel{\text{IH}}{=}$ $\twoOm \ – \ \model{\ftwo}$ = $\model{\neg\ftwo}$.
\par $\fone$ = $\ftwo \vee \fthree$. 
By IH, there are two $\bone_{\ftwo}$ and $\bone_{\fthree}$ such that $\model{\bone_{\ftwo}}$ = $\model{\ftwo}$ and $\model{\bone_{\fthree}}$ = $\model{\fthree}$. 
Thus, let us define $\bone_{\fone}$ = $\bone_{\ftwo}$ $\vee$ $\bone_{\fthree}$, which is well-defined by Definition~\ref{Bool}.
Indeed, as desired, $\model{\bone_{\fone}}$ = $\model{\bone_{\ftwo} \vee \bone_{\fthree}}$ = $\model{\bone_{\ftwo}} \cup \model{\bone_{\fthree}}$ $\stackrel{\text{IH}}{=}$ $\model{\ftwo} \cup \model{\fthree}$ = $\model{\ftwo \vee \fthree}$.
\par $\fone$ = $\ftwo \wedge \fthree$. By IH, there are two $\bone_{\ftwo}$ and $\bone_{\fthree}$ such that $\model{\bone_{\ftwo}}$ = $\model{\ftwo}$ and $\model{\bone_{\fthree}}$ = $\model{\fthree}$. Let us define $\bone_{\fone}$ = $\bone_{\ftwo}$ $\wedge$ $\bone_{\fthree}$, which is well-defined by Definition~\ref{Bool}. 
Indeed, $\model{\bone_{\fone}}$ = $\model{\bone_{\ftwo} \wedge \bone_{\fthree}}$ = $\model{\bone_{\ftwo}} \cap \model{\bone_{\fthree}}$ $\stackrel{\text{IH}}{=}$ $\model{\ftwo} \cap \model{\fthree}$ = $\model{\ftwo \wedge \fthree}$.
\par $\fone$ = $\BOX^{q}\ftwo$ or $\fone$ = $\DIA^{q}\ftwo$. In both cases, either $\model{\fone}$ = $\twoOm$ or $\model{\fone}$ = $\emptyset$. 
In the former case, let us define 
$\bone_{\fone}$ = $\top$, as $\model{\top}$ = $\twoOm$ = $\model{\fone}$. 
In the latter case, let us define $\bone_{\fone}$ = $\bot$, since $\model{\bot}$ = $\emptyset$ = $\model{\fone}$.
\end{itemize}
\end{proof}

\begin{lemma}[Existential Preservation of Validity]\label{ExVal}
Each valid sequent has a valid $\Dec$-normal form.
\end{lemma}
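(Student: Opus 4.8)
The plan is to reduce the statement to a \emph{single-step} form and then iterate it along $\Dec$-reductions, which terminate by Lemma~\ref{sNorm}. The key claim I would prove is: if $\vDash\ssetO$ and $\ssetO$ is not $\Dec$-normal, then $\ssetO\Dec\ssetT$ for some $\ssetT$ with $\vDash\ssetT$. Indeed, if $\ssetO$ is not $\Dec$-normal then, by Definition~\ref{Decomposition}, it contains a sequent $\vdash\lone$ that is not $\dec$-normal, and since $\vDash\ssetO$ this sequent is valid; it then suffices to exhibit \emph{one} decomposition $\vdash\lone\dec\ssetT'$ all of whose sequents are valid, for then $\ssetT=(\ssetO\setminus\{\vdash\lone\})\cup\ssetT'$ satisfies $\ssetO\Dec\ssetT$ and $\vDash\ssetT$. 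I would produce such a decomposition by a case analysis on the shape of the logical part $\fone$ of $\lone$, driven by the rules of Definition~\ref{decomposition}. For the rules whose side conditions are purely Boolean (those for $\bone\Pto\lnot\fone$, $\bone\Pfrom\lnot\fone$, $\bone\Pto\fone_1\vee\fone_2$, $\bone\Pfrom\fone_1\wedge\fone_2$, and the four counting rules), I instantiate the auxiliary Boolean formula by the $\bone_{\fone}$ furnished by Lemma~\ref{label} (so $\model{\bone_{\fone}}=\model{\fone}$), and by $\bone_{\fone_2}$ where a second one is needed; then the side condition and the validity of the resulting sequents both follow from the validity of $\lone$ by elementary set- and measure-theoretic manipulations. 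For a counting-quantified $\fone$ this choice of $\bone_\fone$ works precisely when $\model{\fone}$ is $\twoOm$ (resp.\ $\emptyset$) on the side where the threshold must be met; on the complementary side, validity of $\lone$ forces $\mu(\model{\bone})\in\{0,1\}$ — here Lemma~\ref{lemma:box0} is used to evaluate $\model{\BOX^{0}\fone}$ and $\model{\DIA^{0}\fone}$ — and one instead applies the rule collapsing $\vdash\bone\Pto\fone$ (resp.\ $\vdash\bone\Pfrom\fone$) to the empty set, which is vacuously valid. The remaining cases — $\bone\Pfrom\fone_1\vee\fone_2$ and $\bone\Pto\fone_1\wedge\fone_2$, which carry no side condition, and an atomic $\fone$, where only the $\mu$-collapse rules can fire — are immediate.

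With the single-step claim in hand, the lemma follows by well-founded induction on the set measure $\ms$ of Definition~\ref{def:ms}, which strictly decreases along $\Dec$ by the proof of Lemma~\ref{sNorm}: a $\Dec$-normal valid set is its own valid normal form, and otherwise the single-step claim yields a valid $\ssetT$ with $\ssetO\Dec\ssetT$ and $\ms(\ssetT)<\ms(\ssetO)$, so the induction hypothesis provides a valid $\Dec$-normal form of $\ssetT$, which is then also one of $\ssetO$. Applying this to the corresponding set $\{\vdash\lone\}$ of a valid sequent $\vdash\lone$ gives the statement; equivalently, any maximal $\Dec$-reduction out of $\{\vdash\lone\}$ that remains valid is finite (by strong normalization) and its last set is $\Dec$-normal and valid.

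The main obstacle is the case analysis inside the single-step claim, and specifically its counting cases: one has to read off from the relation between $\mu(\model{\fone})$ and the threshold $q$ which of the available decomposition rules is validity-preserving, and to check that in the ``off-threshold'' situation the validity hypothesis genuinely pins $\mu(\model{\bone})$ to $0$ or $1$ (this is exactly where Lemma~\ref{lemma:box0} enters). A secondary, purely bookkeeping point is that $\Dec$ operates on \emph{sets} of sequents, so the argument must explicitly isolate a $\dec$-reducible member of a non-normal set rather than reason about a sequent in isolation, and must keep the untouched sequents $\ssetO\setminus\{\vdash\lone\}$ along for the ride.
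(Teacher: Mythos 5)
Your proposal is correct and follows essentially the same route as the paper's proof: reduce to a single-step existential-preservation claim, handle it by a case analysis on the shape of the labelled formula using the Boolean witnesses $\bone_{\fone}$ of Lemma~\ref{label} (with the $\mu$-collapse rules and Lemma~\ref{lemma:box0} covering the degenerate counting cases), and then iterate via strong normalization of $\Dec$. The only differences are cosmetic — your explicit well-founded induction on $\ms$ and set-level bookkeeping, and your choice of $\bone_{\fone}$ rather than $\neg\bone$ as the witness in the negation cases, both of which work equally well.
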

\begin{proof}
Let $\vdash\lone$ be an arbitrary, \emph{valid} sequent.
It is shown that $\vdash\lone$ has a valid $\Dec$-normal form.
The proof is by exhaustive inspection. There are two main cases.
\begin{itemize}
\item Let $\vdash\lone$ be such that no
$\dec$-reduction can be applied on it. 
Then, for Lemma~\ref{lemma:NormalRegular}, 
the sequent is either empty or regular, 
which is $\lone$ is a basic formula.
In both cases, the sequent is trivially $\Dec$-normal and,
for hypothesis, valid.
\item Let $\vdash\lone$ be $\Dec$-reducible. 
By Lemma~\ref{theorem:normalizing}, there is no infinite
reduction sequence.
Thus, it is sufficient to prove that each \emph{reduction step}
existentially preserves validity, which is that for every possible
$\Dec$, based on every possible $\dec$ reduction, 
if $\vDash \lone$, then there is a $\dec$ such that 
$\vdash\lone\dec\ssetO$, and $\ssetO$ is valid.
The proof is based on the exhaustive consideration
of all possible $\dec$ form.
\par $\lone=\bone\Era\neg\fone$. 
Let us consider a Boolean formula
defined as $\btwo=\neg\bone$. 
Thus, $\model{\neg\btwo}=\twoOm$ $–$ 
$\model{\btwo}=\twoOm$ $–$ 
($\twoOm$ $–$ $\model{\bone}$) = 
$\model{\bone}$ and so, in
particular $\bone\vDash\neg\btwo$.
Let us consider the following well-defined
$\dec$-decomposition (given $\bone\vDash\neg\btwo$):
$$
\vdash \bone\Era\neg\fone \ \ \ \dec \ \ \ \{\vdash\btwo\Ela\fone\}.
$$
For hypothesis $\bone\Era\neg\fone$, which is
$\model{\bone}\subseteq\model{\neg\fone}$.
By basic set theory, 
also $\model{\fone}\subseteq\model{\neg\bone}$.
Since for construction $\btwo=\neg\bone$,
$\model{\fone}\subseteq\model{\btwo}$ holds, which is
$\vDash\btwo\Ela\fone$, as desired. 
Therefore, there is a $\dec$-decomposition 
which preserves validity.
\par $\lone=\bone\Ela\neg\fone$. 
Let us consider a Boolean formula defined as 
$\btwo=\neg\bone$.
Thus, $\model{\btwo}=\model{\neg\bone}$ and, in particular,
$\neg\btwo\vDash\bone$.
Let us consider the following $\dec$-decomposition,
which is well-defined (given $\neg\btwo \vDash\bone$):
$$
\vdash\bone\Ela\neg\fone \ \ \ \dec \ \ \ \{\vdash\btwo\Era\fone\}.
$$
For hypothesis $\bone\Ela\neg\fone$, which is
$\model{\neg\fone}\subseteq\model{\bone}$, so for basic set
theory $\model{\neg\bone}\subseteq\model{\fone}$.
Since for construction $\model{\btwo}=\model{\neg\bone}$,
also $\model{\btwo}\subseteq\model{\fone}$, which
is $\vDash\btwo\Era\fone$ and 
the described $\dec$-decomposition is as desired.

$\lone=\bone\Era\fone\vee\ftwo$. Let us
consider two $\bone_{\fone}$ and $\bone_{\ftwo}$ such
that $\model{\bone_{\fone}}=\model{\fone}$ and 
$\model{\bone_{\ftwo}}=\model{\ftwo}$ (which must exists
by Lemma~\ref{label}).
For hypothesis $\vDash \bone\Era\fone\vee\ftwo$, which
is $\model{\bone}\subseteq\model{\fone}\cup\model{\ftwo}$.
Thus, for construction, also 
$\model{\bone}\subseteq\model{\bone_{\fone}}\cup\model{\bone_{\ftwo}}$,
which is $\bone\vDash\bone_{\fone}\vee\bone_{\ftwo}$.
Let us consider the following derivation, which
is well-defined (given $\bone\vDash\bone_{\fone}\vee\bone_{\ftwo}$):
$$
\vdash \bone\Era\fone\vee\ftwo \ \ \ \dec 
\ \ \ \{\vdash\bone_{\fone}\Era\fone,\vdash\bone_{\ftwo}\Era\ftwo\}.
$$
For construction $\model{\bone_{\fone}}=\model{\fone}$
and $\model{\bone_{\ftwo}}=\model{\ftwo}$ so, in 
particular, $\model{\bone_{\fone}}\subseteq\model{\fone}$
and $\model{\bone_{\ftwo}}\subseteq\model{\ftwo}$.
Therefore, $\vDash \bone_{\fone}\Era\fone$ and
$\vDash\bone_{\ftwo}\Era\ftwo$ and the given 
$\dec$-decomposition is as desired.

\par $\lone=\bone\Ela\fone\vee\ftwo$. 
Let us consider the following, well-defined 
$\dec$-decomposition:
$$
\vdash\bone\Ela\fone \vee \ftwo \ \ \ \dec \ \ \ \{\vdash\bone\Ela\fone,\vdash\bone\Ela\ftwo\}.
$$
For hypothesis $\vDash\bone\Ela\fone\vee\ftwo$, which
is $\model{\fone}\cup\model{\ftwo}\subseteq\model{\bone}$.
Then, by basic set theory, both $\model{\fone}\subseteq\model{\bone}$
and $\model{\ftwo}\subseteq\model{\bone}$,
which is $\vDash\bone\Ela\fone$ and $\vDash\bone\Ela\ftwo$,
as desired.
\par $\lone=\bone\Era\fone\wedge\ftwo$. 
The proof is equivalent
to the one for $\bone\Ela\fone\vee\ftwo$.
\par $\lone=\bone\Ela\fone\wedge\ftwo$. 
The proof is equivalent
to the one for $\bone\Era\fone\vee\ftwo$.
\par $\lone=\bone\Era\BOX^{q}\fone$. 
There are two main cases:
\begin{enumerate}
\item Let $\mu(\model{\bone})=0$. Then, the sequent
can be decomposed by means of the following, well-defined
$\dec$-decomposition:
$$
\vdash\bone\Era\BOX^{q}\fone \ \ \ \dec \ \ \ \{\}.
$$
$\{\}$ is vacuously valid so the given decomposition
is as desired.
\item Let $\mu(\model{\bone})\neq0$
For hypothesis $\vDash\bone\Era\BOX^{q}\fone$,
which is 
$\model{\bone}\subseteq\model{\BOX^{q}\fone}$.
Since $\model{\bone}\neq\emptyset$, also
$\model{\BOX^{q}\fone}\neq\emptyset$,
which is $\model{\BOX^{q}\fone}=\twoOm$,
and so $\mu(\model{\fone})\ge q$.
Let us consider a Boolean formula $\bone_{\fone}$,
such that $\model{\bone_{\fone}}=\model{\fone}$
(it can be constructed due to Lemma~\ref{label}).
Thus, $\mu(\model{\bone_{\fone}})\ge q$ and the
following decomposition is well-defined:
$$
\vdash\bone\Era\BOX^{q}\fone \ \ \ \dec \ \ \ 
\{\bone_{\fone} \Era\fone\}.
$$
Since for construction 
$\model{\bone_{\fone}}=\model{\fone}$,
in particular also 
$\model{\bone_{\fone}}\subseteq\model{\fone}$,
which is $\vDash \bone_{\fone}\Era\fone$ as
desired. 
\end{enumerate}
\par $\lone=\bone\Ela\BOX^{q}\fone.$ 
There are two main cases:
\begin{enumerate}
\item Let $\mu(\model{\bone})=1$.
Then, the sequent can be decomposed by
means of the following
well-defined $\dec$-decomposition:
$$
\vdash\bone\Ela\BOX^{q}\fone \ \ \ \dec \ \ \ \{\}.
$$
$\{\}$ is vacuously valid,
so the given decomposition is as desired.
\item Let $\mu(\model{\bone})\neq1$. 
For
hypothesis $\vDash\bone\Ela\BOX^{q}\fone$,
which is 
$\model{\BOX^{q}\fone}\subseteq\model{\bone}$.
Since $\model{\bone}\neq\twoOm$, 
$\model{\BOX^{q}\fone}=\emptyset$
and so $\mu(\model{\fone})<q$.
Let us consider a Boolean formula
$\bone_{\fone}$ such that 
$\model{\bone_{\fone}}=\model{\fone}$ (it
can be constructed due to Lemma~\ref{label}).
Thus, $\mu(\model{\bone_{\fone}})<q$, 
and the following decomposition is well-defined:
$$
\vdash\bone\Ela\BOX^{q}\fone \ \ \ \dec \ \ \ 
\{\bone_{\fone} \Ela\fone\}.
$$
Since for construction 
$\model{\bone_{\fone}}=\model{\fone}$,
in particular also 
$\model{\fone}\subseteq\model{\bone}_{\fone}$,
which is $\vDash\bone_{\fone}\Ela\fone$
as desired.
\end{enumerate}
\par 
$\lone=\bone\Era\DIA^{q}\fone$
or $\lone=\bone\Ela\DIA^{q}\fone$.
The proof is equivalent to the ones for 
$\bone\Ela\BOX^{q}\fone$ and 
$\bone\Era\BOX^{q}\fone$.
\par 
$\lone=\bone\Era\fone$ and $\mu(\model{\bone})=0$.
Then, the following decomposition is well-defined:
$$
\vdash\bone\Era\fone \ \ \ \dec \ \ \ \{\}.
$$
$\{\}$ is vacuously valid so the given decomposition
is as desired.
\par 
$\lone=\bone\Era\fone$ and 
$\mu(\model{\bone})=1$.
Then, the following decomposition
is well-defined:
$$
\vdash\bone\Ela\fone \ \ \ \dec \ \ \ \{\}.
$$
$\{\}$ is vacuously valid so the given
decomposition is as desired.
\end{itemize}
\noindent
There is no other possible case. 
(Notice that if 
$\lone=\bone\Era\DIA^{0}\fone$, for hypothesis,
$\vDash\bone\Era\DIA^{0}\fone$,
which is 
$\model{\bone}\subseteq\model{\DIA^{0}\fone}$.
But clearly $\model{\DIA^{0}\fone}=\emptyset$,
so $\model{\bone}=\emptyset$ and
$\mu(\model{\bone})=0$; equally when 
$\lone=\bone\Ela\BOX^{0}\fone$, 
$\mu(\model{\bone}$)=1.)
\end{proof}

\bigskip
\textbf{Derivability Preservation.}
As anticipated, properties and proofs concerning sequents can be easily generalized to corresponding proofs about sets of sequents. Specifically, by considering together Lemma~\ref{lemma:NormalRegular} and 
Lemma~\ref{lemma:RegularValidDerivable}, 
it is established a result, corresponding
to Corollary~\ref{cor:NormalValidDerivable},
namely that a $\dec$-normal set of sequents is derivable 
if and only if it is valid is easily proved. 
\begin{definition}[Derivable Set]\label{derSet}
A set of sequents is \emph{derivable} if and only if all of its sequents 
are derivable in $\GPPLc$. 
\end{definition}
\begin{lemma}\label{DerPres}
Given two sets of sequent $\ssetO,\ssetO'$, 
if $\ssetO \Dec \ssetO'$ and $\ssetO'$ 
is derivable in $\GsCPLc$, 
then $\ssetO$ is derivable in $\GsCPLc$ as well. 
\end{lemma}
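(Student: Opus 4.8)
The plan is to reduce the statement to a purely ``local'' claim about the single-step relation $\dec$, and then to verify that claim by inspecting, one by one, the clauses of Definition~\ref{decomposition}. First I would unfold the definition of $\Dec$: if $\ssetO\Dec\ssetO'$, then there are a set $\ssetO_{0}$, a sequent $\vdash\lone$, and a set $\ssetT_{0}$ such that $\ssetO=\ssetO_{0}\cup\{\vdash\lone\}$, $\vdash\lone\dec\ssetT_{0}$, and $\ssetO'=\ssetO_{0}\cup\ssetT_{0}$. By Definition~\ref{derSet}, ``$\ssetO'$ derivable'' means that \emph{every} sequent of $\ssetO'$ is derivable in $\GsCPLc$; in particular every sequent of $\ssetO_{0}$ is derivable and every sequent of $\ssetT_{0}$ is derivable. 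Since the sequents of $\ssetO_{0}$ are then already taken care of, in order to conclude that $\ssetO$ is derivable it suffices to show that $\vdash\lone$ is derivable. So the whole lemma reduces to the following one-step statement: \emph{if $\vdash\lone\dec\ssetT_{0}$ and every sequent in $\ssetT_{0}$ is derivable in $\GsCPLc$, then $\vdash\lone$ is derivable in $\GsCPLc$.}

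To prove this one-step statement I would go through the clauses defining $\dec$. Each such clause is by design the ``reverse reading'' of a rule of the proof system of Fig.~\ref{fig:pplcrules}, carrying exactly the external hypothesis that the corresponding rule requires among its premisses; so one simply appends that rule below the given derivations of the sequents in $\ssetT_{0}$. Concretely: the two $\neg$-clauses are discharged by $\Rnra$ and $\Rnla$; the clause $\vdash\bone\Era\fone\vee\ftwo\dec\{\vdash\btwo\Era\fone,\vdash\bthree\Era\ftwo\}$ with side condition $\bone\vDash\btwo\vee\bthree$ is handled by first applying $\mathsf{R1}_{\vee}^{\Era}$ to the derivation of $\vdash\btwo\Era\fone$ and $\mathsf{R2}_{\vee}^{\Era}$ to the derivation of $\vdash\bthree\Era\ftwo$, and then $\Rcup$ with that same side condition; dually $\vdash\bone\Ela\fone\wedge\ftwo\dec\{\vdash\btwo\Ela\fone,\vdash\bthree\Ela\ftwo\}$ with $\btwo\wedge\bthree\vDash\bone$ is handled by $\mathsf{R1}_{\wedge}^{\Ela}$, $\mathsf{R2}_{\wedge}^{\Ela}$ and $\Rcap$; the clauses $\vdash\bone\Ela\fone\vee\ftwo\dec\{\vdash\bone\Ela\fone,\vdash\bone\Ela\ftwo\}$ and $\vdash\bone\Era\fone\wedge\ftwo\dec\{\vdash\bone\Era\fone,\vdash\bone\Era\ftwo\}$ directly by $\Rvla$ and $\Rwra$; the four counting clauses by $\Rbr$, $\Rbl$, $\Rdr$, $\Rdl$ respectively, the measure side condition $\mu(\model{\btwo})\ge q$ or $\mu(\model{\btwo})<q$ being precisely the second premise of those rules; and the two clauses with empty image, $\vdash\bone\Era\fone\dec\{\}$ when $\mu(\model{\bone})=0$ and $\vdash\bone\Ela\fone\dec\{\}$ when $\mu(\model{\bone})=1$, by the premiss-free rules $\Rmur$ and $\Rmul$.

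The only slightly delicate point — the one I would be careful about — is the pair of clauses whose image is the singleton $\{\vdash\bot\}$, namely $\vdash\bone\Era\DIA^{0}\fone\dec\{\vdash\bot\}$ when $\mu(\model{\bone})\neq 0$, and $\vdash\bone\Ela\BOX^{0}\fone\dec\{\vdash\bot\}$ when $\mu(\model{\bone})\neq 1$. Here the one-step claim holds vacuously: $\vdash\bot$ is not the conclusion of any rule of $\GsCPLc$ — equivalently it is not valid, so by soundness (Proposition~\ref{soundnessAP}) it is underivable — hence the hypothesis that every sequent of $\ssetT_{0}=\{\vdash\bot\}$ be derivable can never be satisfied (note that, accordingly, the assumption of the lemma already presupposes $\vdash\bot\notin\ssetO'$). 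Beyond this bookkeeping there is no genuine obstacle: the argument is a mechanical, clause-by-clause matching of $\dec$ against the rules of the calculus, with the external hypotheses transported unchanged, so the hardest part is really just making sure that in each case the side condition attached to the $\dec$-clause is exactly the one the mirroring rule needs.
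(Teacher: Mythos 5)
Your proposal is correct and follows essentially the same route as the paper: unfold $\Dec$ to the one-step claim about $\dec$, then match each $\dec$-clause against the mirroring rule of $\GsCPLc$ (including the two-stage $\mathsf{R1}_{\vee}^{\Era}/\mathsf{R2}_{\vee}^{\Era}$ then $\Rcup$ treatment of disjunction, and dually for conjunction). The only cosmetic difference is the $\{\vdash\bot\}$ edge case, which you dispatch by observing that $\vdash\bot$ is underivable so the hypothesis is vacuous, whereas the paper argues that a derivable sequent could not have been decomposed that way; both observations settle the case.
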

\begin{proof}
Assume $\ssetO\Dec\ssetO'$. 
Then, there is a $\vdash\lone\in\ssetO$ such
that it is the ``active'' sequent on which
$\Dec$ is based, which is 
$\Dec$ is based on the following $\dec$-decomposition:
$$
\vdash\lone \ \ \ \dec \ \ \
 \{\vdash\lone_{1},\dots,\vdash\lone_{n}\}.
$$
The proof is by inspection of all possible
forms of $\dec$.
\par 
Let $\lone=\bone\Era\neg\fone$ and
its $\dec$-decomposition be as follows:
$$
\vdash\bone\Era\neg\fone \ \ \ \dec 
\ \ \ \{\vdash\btwo\Ela\fone\}
$$
where $\bone\vDash\neg\btwo$.
Since $\vdash\btwo\Ela\fone \in \ssetO'$,
for hypothesis $\vdash\btwo\Ela\fone$
is derivable by a derivation, call
it $\mathcal{D}$.
Therefore, $\vdash\bone\Era\neg\fone$
can be shown derivable in $\GPPLc$ as follows:
\begin{prooftree}
\AxiomC{$\mathcal{D}$}
\noLine
\UnaryInfC{$\vdash\btwo\Ela\fone$}
\AxiomC{$\bone\vDash\neg\btwo$}
\RightLabel{$\Rnra$}
\BinaryInfC{$\vdash\bone\Era\neg\fone$}
\end{prooftree}

\par
Let $\lone=\bone\Ela\neg\fone$ and its $\dec$-decomposition
be as follows:
$$
\vdash\bone\Ela\neg\fone \ \ \ \dec \ \ \ 
\{\vdash\btwo \Era\fone\}
$$
where $\neg\btwo\vDash\bone$.
Since $\vdash\btwo\Era\fone\in\ssetO'$,
for hypothesis $\vdash\btwo\Era\fone$
is derivable by a derivation, call it
$\mathcal{D}$.
Therefore, $\vdash\bone\Ela\neg\fone$ can
be shown derivable in $\GPPLc$ as follows:
\begin{prooftree}
\AxiomC{$\mathcal{D}$}
\noLine
\UnaryInfC{$\vdash\btwo\Era\fone$}
\AxiomC{$\neg\btwo\vDash\bone$}
\RightLabel{$\Rnla$}
\BinaryInfC{$\vdash\bone\Ela\neg\fone$}
\end{prooftree}

\par
Let $\lone=\bone\Era\fone\vee\ftwo$ and
$\dec$-decomposition be as follows:
$$
\vdash \bone\Era\fone\vee\ftwo \ \ \ \dec \ \ \
\{\vdash \btwo \Era\fone, \vdash\bthree\Era\ftwo\}
$$
where $\bone\vDash\btwo\vee \bthree$.
Since 
$\vdash\btwo\Era\fone,\vdash\bthree\Era\ftwo
\in \ssetO'$, both
$\vdash\btwo\Era\fone$ and 
$\vdash\bthree\Era\ftwo$ are
derivable by two derivations,
call them $\mathcal{D}$ and $\mathcal{D}'$.
Then, $\bone\Era\fone\vee\ftwo$ can be
shown derivable in $\GPPLc$ as follows:
\begin{prooftree}
\AxiomC{$\mathcal{D}$}
\noLine
\UnaryInfC{$\vdash\btwo\Era\fone$}
\RightLabel{$\mathsf{R1}_{\vee}^{\Era}$}
\UnaryInfC{$\vdash\btwo\Era\fone\vee\ftwo$}

\AxiomC{$\mathcal{D}'$}
\noLine
\UnaryInfC{$\vdash\bthree\Era\ftwo$}
\RightLabel{$\mathsf{R2}_{\vee}^{\Era}$}
\UnaryInfC{$\vdash\bthree\Era\fone\vee\ftwo$}

\AxiomC{$\bone\vDash\btwo\vee\bthree$}
\RightLabel{$\Rcup$}
\TrinaryInfC{$\vdash\bone\Era\fone\vee\ftwo$}
\end{prooftree}

\par
Let $\lone=\bone\Ela\fone\vee\ftwo$ and
$\dec$-decomposition be as follows:
$$
\vdash\bone\Ela\fone\vee\ftwo \ \ \ \dec \ \ \
\{\vdash\bone\Ela\fone, \vdash\bone\Ela\ftwo\}.
$$
Since 
$\vdash\bone\Ela\fone,\vdash\bone\Ela\ftwo\in\ssetO'$,
both $\vdash\bone\Ela\fone$ and
$\vdash\bone\Ela\ftwo$
are derivable by two derivations,
call them $\mathcal{D}$ and $\mathcal{D}'$.
Then $\bone\Ela\fone\vee\ftwo$ can
be shown derivable in $\GPPLc$ as follows:
\begin{prooftree}
\AxiomC{$\mathcal{D}$}
\noLine
\UnaryInfC{$\vdash\bone\Ela\fone$}
\AxiomC{$\mathcal{D}'$}
\noLine
\UnaryInfC{$\vdash\bone\Ela\ftwo$}
\RightLabel{$\Rvla$}
\BinaryInfC{$\vdash\bone\Ela\fone\vee\ftwo$}
\end{prooftree}

\par 
Let $\lone=\bone\Era\fone\wedge\ftwo$ and
$\dec$-decomposition be as follows:
$$
\vdash\bone\Era\fone \wedge \ftwo \ \ \ \dec \ \ \ \{\vdash\bone\Era\fone,\vdash\bone\Era\ftwo\}.
$$
Since $\vdash\bone\Era\fone,\vdash\bone\Era\ftwo\in\ssetO'$,
both $\vdash\bone\Era\fone$ and $\vdash\bone\Era\ftwo$ are
derivable by two derivations,
call them $\mathcal{D}$ and $\mathcal{D}'$.
Then, $\bone\Era\fone\wedge\ftwo$ can 
be shown derivable in $\GPPLc$ as follows:
\begin{prooftree}
\AxiomC{$\mathcal{D}$}
\noLine
\UnaryInfC{$\vdash\bone\Era\fone$}
\AxiomC{$\mathcal{D}'$}
\noLine
\UnaryInfC{$\vdash\bone\Era\ftwo$}
\RightLabel{$\Rvla$}
\BinaryInfC{$\vdash\bone\Era\fone\wedge\ftwo$}
\end{prooftree}

\par
Let $\lone=\bone\Ela\fone\wedge\ftwo$ and $\dec$-decomposition
be as follows:
$$
\vdash\bone\Ela\fone\wedge\ftwo \ \ \ \dec \ \ \ \
\{\vdash\btwo\Ela\fone, \vdash
\bthree\Ela\ftwo\}
$$
where $\btwo\wedge\bthree\vDash\bone$.
Since $\vdash\btwo\Ela\fone,\vdash\bthree\Ela\ftwo \in
\ssetO'$,
both $\vdash\btwo\Ela\fone$
and $\vdash\bthree\Ela\ftwo$
are derivable by two derivations, call them $\mathcal{D}$
and $\mathcal{D}'$.
Then, $\bone \Ela\fone\wedge\ftwo$ can be shown
derivable in $\GPPLc$ as follows:
\begin{prooftree}
\AxiomC{$\mathcal{D}$}
\noLine
\UnaryInfC{$\vdash\btwo\Ela\fone$}
\RightLabel{$\mathsf{R1}_{\wedge}^{\Ela}$}
\UnaryInfC{$\vdash\btwo\Ela\fone\wedge\fthree$}
\AxiomC{$\mathcal{D}'$}
\noLine
\UnaryInfC{$\vdash\bthree\Ela\ftwo$}
\RightLabel{$\mathsf{R2}_{\wedge}^{\Ela}$}
\UnaryInfC{$\vdash\bthree\Ela\fone\wedge\ftwo$}
\AxiomC{$\btwo\wedge\bthree\vDash\bone$}
\RightLabel{$\Rcap$}
\TrinaryInfC{$\vdash\bone\Ela\fone\wedge\ftwo$}
\end{prooftree}

\par
Let $\lone=\bone\Era\BOX^{q}\fone$ and
$\dec$-decomposition be as follows:
$$
\vdash\bone\Era\BOX^{q}\fone \ \ \ \dec \ \ \ 
\{\vdash\btwo\Era\fone\}
$$
where $\mu(\model{\btwo})\ge q$. 
Since $\vdash\btwo\Era\fone \in \ssetO'$,
$\vdash\btwo\Era\fone$ is derivable by a derivation,
call it $\mathcal{D}$.
Then, $\vdash\bone\Era\BOX^{q}\fone$ can
be shown derivable in $\GPPLc$ as follows:
\begin{prooftree}
\AxiomC{$\mathcal{D}$}
\noLine
\UnaryInfC{$\vdash\btwo\Era\fone$}
\AxiomC{$\mu(\model{\btwo})\ge q$}
\RightLabel{$\Rbr$}
\BinaryInfC{$\vdash\bone\Era\BOX^{q}\fone$}
\end{prooftree}

Let $\lone=\bone\Ela\BOX^{q}\fone$ and $\dec$-decomposition
be as follows:
$$
\vdash\bone\Ela\BOX^{q}\fone \ \ \ \dec \ \ \ \{\vdash\btwo\Ela\fone\}
$$
where $\mu(\model{\btwo})<q$.
Since $\vdash\btwo \Ela\fone \in \ssetO'$,
$\vdash\btwo\Ela\fone$ is derivable by a derivation,
call it $\mathcal{D}$.
Then, $\vdash\bone\Ela\BOX^{q}\fone$ can be
shown derivable in $\GPPLc$ as follows:
\begin{prooftree}
\AxiomC{$\mathcal{D}$}
\noLine
\UnaryInfC{$\vdash\btwo\Ela\fone$}
\AxiomC{$\mu(\model{\btwo})<q$}
\RightLabel{$\Rbl$}
\BinaryInfC{$\vdash\bone\Ela\BOX^{q}\fone$}
\end{prooftree}

\par 
Let $\lone=\bone\Era\DIA^{q}\fone$ and
$\dec$-decomposition be as follows:
$$
\vdash\bone\Era\DIA^{q}\fone \ \ \ \dec \ \ \ 
\{\vdash\btwo\Ela\fone\}
$$
where $\mu(\model{\btwo})<q$.
Since $\vdash\btwo\Ela\fone\in\ssetO'$,
$\vdash\btwo\Ela\fone$ is derivable by a derivation,
call it $\mathcal{D}$.
Then, $\vdash\bone\Era\DIA^{q}\fone$
 can be shown derivable in $\GPPLc$ as follows:
 \begin{prooftree}
 \AxiomC{$\mathcal{D}$}
 \noLine
 \UnaryInfC{$\vdash\btwo\Ela\fone$}
 \AxiomC{$\mu(\model{\btwo})<q$}
 \RightLabel{$\Rdr$}
 \BinaryInfC{$\vdash\bone\Era\DIA^{q}\fone$}
 \end{prooftree}

Let $\lone=\bone\Ela\DIA^{q}\fone$ and 
$\dec$-decomposition
be as follows:
$$
\vdash\bone\Ela\DIA^{q}\fone \ \ \ \dec \ \ \ \{\vdash\btwo\Era\fone\}
$$
where $\mu(\model{\btwo})\ge q$.
Since $\vdash\btwo\Era\fone\in\ssetO'$,
$\vdash\btwo\Era\fone$ is derivable by a derivation,
call it $\mathcal{D}$.
Then, $\vdash\bone\Ela\DIA^{q}\fone$ can
be shown derivable in $\GPPLc$ as follows:
\begin{prooftree}
\AxiomC{$\mathcal{D}$}
\noLine
\UnaryInfC{$\vdash\btwo\Era\fone$}
\AxiomC{$\mu(\model{\btwo})\ge q$}
\RightLabel{$\Rdl$}
\BinaryInfC{$\vdash\bone\Ela\DIA^{q}\fone$}
\end{prooftree}

 \par 
 Let $\lone=\bone\Era\fone$ and $\dec$-decomposition
 be as follows
 $$
 \vdash\bone\Era\fone \ \ \ \dec \ \ \ \{\}
 $$
 where $\mu(\model{\btwo})=0$.
 Then, $\bone\Era\fone$ can be shown derivable
 in $\GPPLc$
 as follows:
 \begin{prooftree}
 \AxiomC{$\mu(\model{\bone})=0$}
 \RightLabel{$\Rmur$}
 \UnaryInfC{$\vdash\bone\Era\fone$}
 \end{prooftree}
 
 \par
 Let $\lone=\bone\Ela\fone$ and $\dec$-decomposition
 be as follows:
 $$
 \vdash\bone\Ela\fone \ \ \ \dec \ \ \ \{\}
 $$
where $\mu(\model{\btwo})=1$.
Then, $\bone\Ela\fone$ can be shown derivable in
$\GPPLc$ as follows:
\begin{prooftree}
\AxiomC{$\mu(\model{\bone})=1$}
\RightLabel{$\Rmul$}
\UnaryInfC{$\vdash\bone\Ela\fone$}
\end{prooftree}

Notice that if $\vdash\bone\Era\DIA^{0}\fone$ is
derivable, by Lemma~\ref{soundnessAP}
$\vDash\bone\Era\DIA^{0}\fone$, which is 
$\model{\bone}\subseteq \model{\DIA^{0}\fone}$.
But $\model{\DIA^{0}\fone}=0$,
so $\model{\bone}\subseteq \emptyset$,
which is $\model{\bone}=0$.
Thus, when $\vdash\bone\Era\DIA^{0}\fone$ is derivable,
it cannot be decomposed as follows:
$\vdash\bone\Era\DIA^{0}\fone\dec\{\vdash\bot\}$.
The same holds for $\bone\Ela\BOX^{0}\fone$.
Therefore, all the possible forms of $\dec$ have
been considered.
\end{proof}

\bigskip
\textbf{The Completeness of $\GsCPLc$.}
Putting these results together, it is possible to 
conclude that $\GsCPLc$ is complete with respect
to the semantics of $\PPLc$.
\primeCompleteness*
\begin{proof}
If the sequent is valid, by Lemma \ref{ExVal}, it has a \emph{valid} $\Dec$-normal form. 
For Lemma \ref{NormValDer}, a $\Dec$-normal form is valid if and only if it is derivable, so the given $\Dec$-normal form must be derivable as well. 
Therefore, by Lemma \ref{DerPres}, the given (valid) sequent must be derivable in $\GPPLc$. 
\end{proof}

\subsection{$\PPLc$-validity is in $\Psharp$}\label{appendix3.3}

\begin{lemma}\label{app3.3}
For every \emph{closed} formula of $\PPLc$, 
call it $\fone$, either $\model{\fone}$ = $\twoOm$ or $\model{\fone}$ = $\emptyset$.
\end{lemma}
\begin{proof}
The proof is by induction on the structure of $\fone$.
\begin{itemize}
\item \emph{Base case.} If $\fone$ = $\BOX^{q}\ftwo$ or $\fone$ = $\DIA^{q}\ftwo$, then for Definition~\ref{semantics} either $\model{\fone}$ = $\twoOm$ or $\model{\fone}$ = $\emptyset$. 
\item \emph{Inductive case}. There are four possible cases:
\par 
$\fone = \neg\ftwo$. 
Since $\ftwo$ is a closed formula, for IH, either $\model{\ftwo}$ = $\twoOm$ or $\model{\ftwo}$ = $\emptyset$. 
In the former case, 
$$
\model{\fone} = \big(\twoOm \ – \ \model{\ftwo}\big) 
= \big(\twoOm \ – \ \twoOm\big)
= \emptyset.
$$
In the latter case, 
$$
\model{\fone} = \big(\twoOm \ – \ \model{\ftwo}\big) = 
\big(\twoOm \ – \ \emptyset\big) = \twoOm.
$$

\par 
$\fone = \ftwo \vee \fthree$. 
Since $\ftwo$ and $\fthree$ are closed formulas,
 for IH, both $\model{\ftwo}$ and $\model{\fthree}$ 
 are either $\twoOm$ or $\emptyset$. 
If both $\model{\ftwo}$ and $\model{\fthree}$ are $\emptyset$,
then
$$ 
\model{\fone} = \model{\ftwo \vee \fthree} = 
\model{\ftwo} \cup \model{\fthree} = \emptyset. 
$$
Otherwise, 
$$ 
\model{\fone} = \model{\ftwo \vee \fthree} = 
\model{\ftwo} \cup \model{\fthree} = 
\twoOm.
$$

\par 
$\fone = \ftwo \wedge \fthree$. 
Since $\ftwo$ and $\fthree$ are closed formulas, 
for IH, both $\model{\ftwo}$ and $\model{\fthree}$ 
are either $\twoOm$ or $\emptyset$. 
Thus, if both $\model{\ftwo}$ and $\model{\fthree}$ are 
$\twoOm$, then:
$$
\model{\fone} = \model{\ftwo \wedge \fthree} =  
\model{\ftwo} \cap \model{\fthree} = \twoOm.
$$
Otherwise,
$$
\model{\fone} = \model{\ftwo \wedge \fthree} =  
\model{\ftwo} \cap \model{\fthree} = \emptyset.
$$

\par 
$\fone = \BOX^{q}\ftwo$ or $\fone = \DIA^{q}\ftwo$. 
Then the proof is trivial for Definition~\ref{semantics}
of $\BOX$- and $\DIA$-formulas.
\end{itemize}
\end{proof}

\section{Proofs from Section \ref{section4}}\label{appendix4}

\subsection{Characterizing the Semantics of $\PPL$ using Boolean Formulas}\label{appendix4.1}

\begin{definition}[Named Boolean Formulas]\label{booleanFormulas}
\emph{Named Boolean formulas} are defined by the following grammar:
$$
\bone,\btwo:= \bvar_{i}^{a}\midd \top\midd \bot\midd \lnot \bone\midd \bone\land \btwo\midd \bone\vee \btwo
$$
The interpretation $\model \bone_{X}$ of a Boolean formula with
$\FN(\bone)\subseteq X$ is defined in a straightforward way following Definition \ref{PPLsem}.

\begin{minipage}{\linewidth}
\begin{minipage}[t]{0.5\linewidth}
\begin{align*}
\model{\bvar_{i}^{a}}_{X} &:= \{f : X \rightarrow \twoOm \ | \ f(a)(\atom{i}) = 1\} \\
\model{\top}_{X} &:= (\twoOm)^{X} \\
\model{\bot}_{X} &:= \emptyset^{X} 
\end{align*}
\end{minipage}
\hfill
\begin{minipage}[t]{0.5\linewidth}
\begin{align*}
\model{\neg \bone}_{X} &:= (\twoOm)^{X} – \ \model{\bone}_{X} \\
\model{\bone \wedge \btwo}_{X} &:= \model{\bone}_{X} \cap \model{\btwo}_{X} \\
\model{\bone \vee \btwo}_{X} &:= \model{\bone}_{X} \cup \model{\btwo}_{X}.
\end{align*}
\end{minipage}
\end{minipage}
\end{definition}

\begin{lemma}\label{BoolDec}
Any named Boolean formula $\bone$ with $\FN(\bone)\subseteq
X\cup \{a\}$ admits an $a$-decomposition in $X$.
\end{lemma}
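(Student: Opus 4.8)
The plan is to prove the existence of an $a$-decomposition by induction on the structure of the named Boolean formula $\bone$, strengthening the statement so that the induction goes through: I will show that $\bone$ admits an $a$-decomposition $\bigvee_{i<k}\bthree_i\land\bfour_i$ in which the $X$-parts are not merely pairwise disjoint but form a \emph{partition}, i.e. additionally $\model{\bfour_0\lor\cdots\lor\bfour_{k-1}}_X=(2^\omega)^X$. Strengthening to a partition is harmless: from any $a$-decomposition one recovers a partition one by adjoining the disjunct $\bot\land\lnot(\bfour_0\lor\cdots\lor\bfour_{k-1})$, which affects neither the interpretation nor the disjointness of the $\bfour_i$ but makes them exhaustive.

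The base cases are immediate: $\top$ gives $\top\land\top$; $\bot$ gives $\bot\land\top$; an atom $\bvar_i^a$ of name $a$ gives $\bvar_i^a\land\top$; and an atom $\bvar_i^b$ with $b\in X\setminus\{a\}$ gives the partition decomposition $(\top\land\bvar_i^b)\lor(\bot\land\lnot\bvar_i^b)$. For a conjunction $\bone_1\land\bone_2$, if the induction hypothesis supplies partition decompositions $\bigvee_i\bthree_i\land\bfour_i$ of $\bone_1$ and $\bigvee_j\bthree'_j\land\bfour'_j$ of $\bone_2$, then $\bigvee_{i,j}(\bthree_i\land\bthree'_j)\land(\bfour_i\land\bfour'_j)$ works for $\bone_1\land\bone_2$: equality of interpretations follows by distributivity, the free-name conditions are clear, and any two distinct parts $\bfour_i\land\bfour'_j$ and $\bfour_{i'}\land\bfour'_{j'}$ are disjoint over $X$ since either $i\neq i'$ or $j\neq j'$, and the corresponding $\bfour$'s (resp.\ $\bfour'$'s) are disjoint.

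The crucial case — and the only point where a naïve induction stalls — is negation. If $\bone=\lnot\bone_1$ and $\bone_1\equiv\bigvee_{i<k}\bthree_i\land\bfour_i$ is a partition decomposition obtained from the induction hypothesis, then each $g\in(2^\omega)^{X\cup\{a\}}$ belongs to $\model{\bfour_i}_{X\cup\{a\}}$ for exactly one index $i$, and for that $i$ one has $g\in\model{\bone_1}_{X\cup\{a\}}$ iff $g\in\model{\bthree_i}_{X\cup\{a\}}$; hence $g\in\model{\lnot\bone_1}_{X\cup\{a\}}$ iff $g\in\model{\lnot\bthree_i\land\bfour_i}_{X\cup\{a\}}$. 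It follows that $\lnot\bone_1\equiv\bigvee_{i<k}\lnot\bthree_i\land\bfour_i$, which is again a partition decomposition, since negation keeps $\FN(\lnot\bthree_i)\subseteq\{a\}$ and leaves the $\bfour_i$ unchanged. Disjunction is then obtained for free by De Morgan, $\bone_1\lor\bone_2\equiv\lnot(\lnot\bone_1\land\lnot\bone_2)$, reducing it to the cases already handled.

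I expect no genuine obstacle; the one thing to get right is the decision to carry the partition invariant through the induction, which is precisely what makes the negation step possible — without it, the negation of a formula already in the $\bigvee_i\bthree_i\land\bfour_i$ shape need not itself be of that shape. As an alternative bypassing induction entirely, one can list the finitely many atoms $\bvar_{j_1}^{b_1},\dots,\bvar_{j_m}^{b_m}$ of $\bone$ whose names lie in $X$, let $\bfour_\tau$ range over the $2^m$ complete conjunctions of these atoms and their negations, and set $\bthree_\tau$ to be $\bone$ with each such atom replaced by its truth value under $\tau$; this visibly yields an $a$-decomposition, at the cost of the exponential blow-up already noted in the main text.
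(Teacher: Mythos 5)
Your proof is correct, and it follows the same basic strategy as the paper's: an induction that carries the strengthened invariant that the $X$-parts $\bfour_i$ form a partition (the paper phrases this as $\model{\bigvee_i\bfour_i}_X=\model{\top}_X$), with the same product construction $\bigvee_{i,j}(\bthree_i\land\bthree'_j)\land(\bfour_i\land\bfour'_j)$ for conjunction. The differences are worth noting, though. The paper treats disjunction directly, via $\bigvee_{i,j}(\bthree_i\lor\bthree'_j)\land(\bfour_i\land\bfour'_j)$, and handles negation only at the level of atoms ($\lnot\bvar_i^a$ as a base case), so as written it silently presupposes negation normal form and does not cover the case $\bone=\lnot\bone_1$ for compound $\bone_1$, which the grammar of named Boolean formulas allows. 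Your explicit negation step — observing that the partition invariant lets you pass from $\bigvee_i\bthree_i\land\bfour_i$ to $\bigvee_i\lnot\bthree_i\land\bfour_i$ — closes exactly this gap, and your De Morgan reduction of disjunction to negation and conjunction is then a legitimate (non-circular) shortcut, since it applies the induction hypothesis only to the immediate subformulas and then composes constructions on decompositions. Your closing remark about the non-inductive decomposition over complete conjunctions of the $X$-named atoms is also sound and matches the paper's comment that $a$-decompositions may be exponentially large.
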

\begin{proof}
We will actually prove a stronger statement: any named Boolean formula $\bone$ admits an $a$-decomposition $\bigvee_{i=1}^{k}\bthree_{i}\land \bfour_{i}$ where $\model{\bigvee_{i=1}^{j}\bfour_{i}}_{X}=\model{\top}_{X}$.
We argue by induction on $\bone$:
\begin{itemize}
\itemsep0em
\item if $\bone=\bvar_{i}^{a}$ or $\bone=\lnot \bvar_{i}^{a}$, then $k=1$, $\bthree_{i}=\bone$ and $\bfour_{i}=\top$
\item if $\bone=\bvar_{i}^{b}$, where $b\neq a$, then $k=2$, $\bthree_{0}=\top, \bthree_{1}=\bot$ and $\bfour_{0}=\bone, \bfour_{1}=\lnot \bone$

\item 
if $\bone=\bone_{1}\vee \bone_{2}$ thenm by IH,
$\bone_{1}=\bigvee_{i=0}^{k_{1}-1} \bthree^{1}_{i}\land \bfour^{1}_{i}$ and 
$\bone_{2}=\bigvee_{j=0}^{k_{2}-1} \bthree^{2}_{j}\land \bfour^2_{j}$.  Then,

\medskip
\adjustbox{scale=0.8}{
\begin{minipage}{\linewidth}
\begin{align*}
\bone& \equiv \left (\bigvee_{i=0}^{k_{1}-1} \bthree^{1}_{i}\land \bfour^{1}_{i}\right) \vee \left(\bigvee_{j=0}^{k_{2}-1} \bthree^{2}_{j}\land \bfour^2_{j}\right)
\\
&\equiv
 \left (\bigvee_{i=0}^{k_{1}-1} \bthree^{1}_{i}\land \bfour^{1}_{i}\land\top\right) \vee \left(\bigvee_{j=0}^{k_{2}-1} \bthree^{2}_{j}\land \bfour^{2}_{j}\land\top\right)
\\
& \equiv
\left (\bigvee_{i=0}^{k_{1}-1} \bthree^{1}_{i}\land \bfour^{1}_{i} \land \bigvee_{j}\bfour^2_{j}\right) 
\vee \left(\bigvee_{j=0}^{k_{2}-1} \bthree^{2}_{j}\land \bfour^{2}_{j} \land \bigvee_{i}\bfour^{1}_{i}\right)
\\
&\equiv
\left (\bigvee_{i=0,j=0}^{k_{1}-1, k_{2}-1} ( \bthree^{1}_{i}\land \bfour^{1}_{i}\land \bfour^2_{j})\right) 
\vee \left(\bigvee_{j=0,i=0 }^{k_{2}-1, k_{1}-1} \bthree^{2}_{j}\land \bfour^{2}_{j} \land \bfour^{1}_{i}\right)
\\
&\equiv
\bigvee_{i=0,j=0}^{k_{1}-1, k_{2}-1} ( \bthree^{1}_{i}\vee \bthree^{2}_{j})\land (\bfour^{1}_{i}\land \bfour^2_{j}).
\end{align*}
\end{minipage}
}
\medskip

Let $k= k_{1}\cdot k_{2}$. We can identify any $l\leq k-1$ with a pair $(i,j)$, $i<k_{1}$ and $j<k_{2}$. 
Let $\bthree_{i,j}=\bthree^{1}_{i}\lor\bthree^{2}_{j}$ and 
$\bfour_{i,j}= \bfour^{1}_{i}\lor \bfour^{2}_{j}$. We have then that 
$$
\bone= \bigvee_{i=0,j=0}^{k_{1}-1,k_{2}-1} 
\bthree_{i,j}\land \bfour_{i,j}.
$$
Observe that for $(i,j)\neq (i',j')$, $\bfour_{i,j}\land \bfour_{i',j'}\equiv\bot$. Moreover, 
$\bigvee_{i,j}\bfour_{i,j}\equiv \bigvee_{i,j}\bfour^{1}_{i}\lor \bfour^{2}_{j}\equiv
(\bigvee_{i}\bfour^{1}_{i})\lor (\bigvee_{j}\bfour^{2}_{j})\equiv \top\lor \top\equiv\top$.

\item 
if $\bone=\bone_{1}\land \bone_{2}$, then, by IH,
$\bone_{1}\equiv\bigvee_{i=0}^{k_{1}-1} \bthree^{1}_{i}\land \bfour^{1}_{i}$ and 
$\bone_{2}\equiv\bigvee_{j=0}^{k_{2}-1} \bthree^{2}_{j}\land \bfour^{2}_{j}$.  
Then,
\begin{align*}
\bone & \equiv \left (\bigvee_{i=0}^{k_{1}-1} \bthree^{1}_{i}\land \bfour^{1}_{i}\right) \land \left(\bigvee_{j=0}^{k_{2}-1} \bthree^{2}_{j}\land \bfour^{2}_{j}\right)
\\
&
\equiv
 \bigvee_{i=0,j=0}^{k_{1}-1,k_{2}-1} \bthree^{1}_{i}\land \bfour^{1}_{i}\land \bthree^{2}_{j}\land \bfour^{2}_{j}\\
& \equiv\bigvee_{i=0,j=0}^{k_{1}-1,k_{2}-1} (\bthree^{1}_{i}\land \bthree^{2}_{j})\land (\bfour^{1}_{i}\land \bfour^{2}_{j}).
\end{align*}
As in the case above let $ k_{1}\cdot k_{2}$. We can identify any $l\leq k-1$ with a pair $(i,j)$, $i<k_{1}$ and $j<k_{2}$. 
Let $\bthree_{i,j}=\bthree^{1}_{i}\land \bthree^{2}_{j}$ and 
$\bfour_{i,j}= \bfour^{1}_{i}\land \bfour^{2}_{j}$. We have then that 
$$
\bone\equiv \bigvee_{i=0,j=0}^{k_{1}-1,k_{2}-1} \bthree_{i,j}\land \bfour_{i,j}.
$$
As in the previous case we have that for $(i,j)\neq (i',j')$, $\bfour_{i,j}\land \bfour_{i',j'}\equiv \bot$ and  
$\bigvee_{i,j}\bfour_{i,j}\equiv \bigvee_{i,j}\bfour^{1}_{i}\land \bfour^{2}_{j}\equiv
(\bigvee_{i}\bfour^{1}_{i})\land (\bigvee_{j}\bfour^{2}_{j})\equiv\top\land \top \equiv\top$.
\end{itemize}
\end{proof}
\primeFundamental*

\subsection{The Proof Theory of $\PPL$}\label{appendix4.2}
\begin{definition}[Named External Hypothesis]
A \emph{named external hypothesis} is an expression of one of the following forms:
\begin{itemize}
\item $\vDash \{a\} \in X$
\item $\mu\big(\model{\bone}_{X}\big) = 0$ or $\mu\big(\model{\bone}_{X}\big) = 1$
\item $\bone \vDash^{X} \btwo$
\item $\bone \vDash^{X} \bigvee \{\btwo_{i} \ | \ \mu\big(\model{\bthree_{i}}_{Y}\big)$ $\triangleright$ $q\}$, $\bigvee\{\btwo_{i}$ $|$ $\mu\big(\model{\bthree_{i}}_{Y}\big) \triangleright q\} \vDash^{X} \bone$, with $\triangleright \in \{\ge, \leq, >, <, =\}$
\end{itemize}
where $i \in \Nat,$ $X$ and $Y$ are name sets, $\bone, \btwo_{i}, \bthree_{i}$ are named Boolean formulas, and $q \in \mathbb{Q}_{[0,1]}$.
\end{definition}
\begin{definition}[Named Labelled Formulas]
A \emph{named labelled formula} is a formula of the form $\bone \Era \fone$ or $\bone \Ela \fone$, such that $\fone$ is a formula of $\PPL$ and $\bone$ is a named Boolean formula.
\end{definition}
\noindent
For the sake of simplicity, given a labelled formula $\lone$ = $\bone \Era \fone$ or $\lone$ = $\bone \Ela \fone$, let us introduce the notion of $\Var{\lone}$ = $\Var{\bone} \cup \Var{\fone}$.
\begin{definition}[Named Sequent]
A \emph{named sequent} of $\GPPL$ is an expression of the form $\vdash^{X} \lone$, where $\lone$ is a named labelled formula, and $\Var{\lone} \subseteq X$.
\end{definition}

Let us now define a one-sided, single-succedent, and 
labelled sequent calculus, call it $\GPPL$.
\begin{definition}[The Proof System $\GPPL$]\label{CPLsystem}
$\GPPL$ is defined by the set of rules displayed in Fig.~\ref{fig:PPL}.

\begin{figure}[h!]
\framebox{
\parbox[t][12cm]{15cm}{
\begin{minipage}{\textwidth}
\small
\begin{center}
Initial Sequents
\end{center}
\normalsize
\begin{minipage}{\linewidth}
\adjustbox{scale=0.8,center}{
$
\AxiomC{$\vDash \{a\} \in X$}
\AxiomC{$\bone \vDash^{X} \bvar_{i}^{a}$}
\RightLabel{$\AxO$}
\BinaryInfC{$\vdash^{X} \bone \Era \atm{i}{a}$}
\DP
\qquad \qquad \qquad
\AxiomC{$\vDash \{a\} \in X$}
\AxiomC{$\bvar_{i}^{a} \vDash^{X} \bone$}
\RightLabel{$\AxT$}
\BinaryInfC{$\vdash^{X} \Delta, \bone  \Ela \atm{i}{a}$}
\DP
$}
\end{minipage}
\small
\vskip2mm
\begin{center}
Union and Intersection Rules
\end{center}
\normalsize
\begin{minipage}{\linewidth}
\adjustbox{scale=0.8,center}{
$
\AxiomC{$\vdash^{X} \btwo \Pto \fone$}
\AxiomC{$\vdash^{X} \bthree \Pto \fone$}
\AxiomC{$ \bone  \vDash^{X} \btwo\lor \bthree $}
\RightLabel{$\Rcup$}
\TrinaryInfC{$\vdash^{X} \bone  \Pto \fone$}
\DP
\qquad
\AxiomC{$\vdash^{X} \btwo \Pfrom \fone$}
\AxiomC{$\vdash^{X} \bthree  \Pfrom \fone$}
\AxiomC{$\btwo\land \bthree\vDash^{X}   \bone $}
\RightLabel{$\Rcap$}
\TrinaryInfC{$\vdash^{X} \bone \Pfrom \fone$}
\DP
$}
\end{minipage}
\vskip2mm
\small
\begin{center}
Logical Rules
\end{center}
\normalsize
\begin{minipage}{\linewidth}
\adjustbox{scale=0.8,center}{
$
\AxiomC{$\vdash^{X} \btwo \Pfrom \fone$}
\AxiomC{$\bone \vDash^{X}   \lnot \btwo$}
\RightLabel{$\Rnra$}
\BinaryInfC{$\vdash^{X} \bone  \Pto \neg \fone $}
\DP
\qquad \qquad \qquad
\AxiomC{$\vdash^{X} \btwo \Pto \fone$}
\AxiomC{$ \lnot{\btwo} \vDash^{X} \bone $}
\RightLabel{$\Rnla$}
\BinaryInfC{$\vdash^{X} \bone  \Pfrom \neg \fone$}
\DP
$
}
\end{minipage}

\bigskip
\scriptsize{
\begin{minipage}{\linewidth}
\begin{minipage}[t]{0.3\linewidth}
\begin{prooftree}
\AxiomC{$\vdash^{X}\bone\Era\fone$}
\RightLabel{$\mathsf{R1}_{\vee}^{\Era}$}
\UnaryInfC{$\vdash^{X}\bone\Era\fone\vee\ftwo$}
\end{prooftree}
\end{minipage}
\hfill
\begin{minipage}[t]{0.2\linewidth}
\begin{prooftree}
\AxiomC{$\vdash^{X}\bone\Era\ftwo$}
\RightLabel{$\mathsf{R2}_{\vee}^{\Era}$}
\UnaryInfC{$\vdash^{X}\bone\Era\fone\vee\ftwo$}
\end{prooftree}
\end{minipage}
\hfill
\begin{minipage}[t]{0.35\linewidth}
\begin{prooftree}
\AxiomC{$\vdash^{X}\bone\Ela\fone$}
\AxiomC{$\vdash^{X}\bone\Ela\ftwo$}
\RightLabel{$\Rvla$}
\BinaryInfC{$\vdash^{X}\bone\Ela\fone\vee\ftwo$}
\end{prooftree}
\end{minipage}
\end{minipage}
}

\bigskip
\begin{minipage}{\linewidth}
\begin{minipage}[t]{0.35\linewidth}
\begin{prooftree}
\AxiomC{$\vdash^{X}\bone\Era\fone$}
\AxiomC{$\vdash^{X}\bone\Era\ftwo$}
\RightLabel{$\Rwra$}
\BinaryInfC{$\vdash^{X}\bone\Era\fone\wedge\ftwo$}
\end{prooftree}
\end{minipage}
\hfill
\begin{minipage}[t]{0.3\linewidth}
\begin{prooftree}
\AxiomC{$\vdash^{X}\bone\Ela\fone$}
\RightLabel{$\mathsf{R1^{\Ela}_{\wedge}}$}
\UnaryInfC{$\vdash^{X}\bone\Ela\fone\wedge\ftwo$}
\end{prooftree}
\end{minipage}
\hfill
\begin{minipage}[t]{0.3\linewidth}
\begin{prooftree}
\AxiomC{$\vdash^{X}\bone\Ela\ftwo$}
\RightLabel{$\mathsf{R2^{\Ela}_{\wedge}}$}
\UnaryInfC{$\vdash_{X}\bone\Ela\fone\wedge\ftwo$}
\end{prooftree}
\end{minipage}
\end{minipage}

\small
\begin{center}
Counting Rules
\end{center}
\normalsize

\begin{minipage}{\linewidth}
\adjustbox{scale=0.8,center}{
$\AxiomC{$\mu(\model{\bone}_{X}) = 0$}
\RightLabel{$\Rmur$}
\UnaryInfC{$\vdash^{X} \bone  \Pto \fone$}
\DP
\qquad \qquad \qquad
\AxiomC{$\mu(\model{\bone}_{X}) = 1$}
\RightLabel{$\Rmul$}
\UnaryInfC{$\vdash^{X}\bone  \Pfrom \fone$}
\DP
$}
\end{minipage}

\bigskip
\begin{minipage}{\linewidth}
\adjustbox{scale=0.8,center}{
$
\AxiomC{$\vdash^{X\cup\{a\}}  \btwo \Pto \fone$}
\AxiomC{$\bone \vDash^{X} \bigvee_{i}\{\bfour_{i} \ | \ \mu(\model{\bthree_{i}}_{\{a\}}) \ge q\}$}
\RightLabel{$\Rbr$}
\BinaryInfC{$\vdash^{X}  \bone \Pto \BOX^{q}_{a}\fone$}
\DP
\quad
\AxiomC{$\vdash^{X\cup\{a\}}  \btwo \Ela \fone$}
\AxiomC{$\bigvee_{i}\{\bfour_{i} \ | \ \mu(\model{\bthree_{i}}_{\{a\}}) \ge q\} \vDash^{X} \bone$}
\RightLabel{$\Rbl$}
\BinaryInfC{$\vdash^{X}  \bone \Ela \BOX^{q}_{a}\fone$}
\DP
$}
\end{minipage}

\bigskip
\begin{minipage}{\linewidth}
\adjustbox{scale=0.8,center}{
$
\AxiomC{$\vdash^{X\cup\{a\}}  \btwo \Ela \fone$}
\AxiomC{$\bone \vDash^{X} \bigvee_{i}\{\bfour_{i} \ | \ \mu(\model{\bthree_{i}}_{\{a\}}) < q\}$}
\RightLabel{$\Rdr$}
\BinaryInfC{$\vdash^{X}  \bone \Era \DIA^{q}_{a}\fone$}
\DP
\quad
\AxiomC{$\vdash^{X\cup\{a\}}  \btwo \Era \fone$}
\AxiomC{$\bigvee_{i}\{\bfour_{i} \ | \ \mu(\model{\bthree_{i}}_{\{a\}}) < q\} \vDash^{X} \bone$}
\RightLabel{$\Rdl$}
\BinaryInfC{$\vdash^{X} \bone \Ela \DIA^{q}_{a}\fone$}
\DP
$}
\end{minipage}
\small
\begin{center}
where $\bigvee_{i}\bfour_{i}\wedge \bthree_{i}$ is an $a$-decomposition of $\btwo$.
\end{center}
\end{minipage}
}}
\caption{\small{Proof system for $\PPL$.}}
\label{fig:PPL}
\end{figure}
\end{definition}
\noindent
For the sake of simplicity, let $\Rmur$ and $\Rmul$ be called
$\mu$-rules. 
The notions of derivation and of derivation height for $\GPPL$ 
are defined in a standard way.
\begin{definition}[Derivations in $\GPPL$]
A \emph{derivation in $\GPPL$} is either an initial sequent or a $\mu$-rule, or it is obtained by applying one of the remaining rules to derivations concluding its premisses.
\end{definition}
\begin{definition}[Derivation Height in $\GPPL$]
The \emph{height of a derivation in $\GPPL$} is the greatest number of successive applications of rules in it, where initial sequents and $\mu$-rules have height 0. 
\end{definition}

$\GPPL$ si sound and complete with respect to
the semantics of $\PPL$. 
 Let us start by presenting the notions of validity for labelled formulas
and sequents of $\GPPL$.
 Similarly to $\PPLc$, given a named Boolean formula $\bone$,
  a formula of $\PPL$, call it $\fone$, 
  and an $X$, with $\Var{\bone} \cup \Var{\fone} \subseteq X$,  
  the labelled formula $\bone \Era \fone$ (resp. $\bone \Ela \fone$) 
  is \emph{X-valid}, when $\model{\bone}_{X} \subseteq \model{\fone}_{X}$ (resp. $\model{\fone}_{X}\subseteq \model{\bone}_{X}$). 
  Formally,
 \begin{definition}[Validity of Labelled Formulas]\label{labVal}
The validity of a labelled formula is defined as follows:
\begin{itemize}
\itemsep0em
\item $\bone \Era \fone$ is $X$-\emph{valid} iff $\model{\bone}_{X} \subseteq \model{\fone}_{X}$
\item $\bone \Ela \fone$ is $X$-\emph{valid} iff $\model{\fone}_{X} \subseteq \model{\bone}_{X}$.
\end{itemize}
\end{definition}
\noindent
An external hypothesis containing $\bone_{1}, \dots, \bone_{n}$ 
named Boolean variables is valid 
if and only if, 
for every $X$ such that 
$\bigcup_{i=1}^{m}\Var{\bone_{i}} \subseteq X$ the hypothesis is valid. 
The notion of $X$-validity for external hypothesis is as predictable. 
For example, $\bone \vDash^{X} \bvar_{\atom{i}}^{a}$ is valid if and only if $\Var{\bone} \cup \Var{\bvar_{\atom{i}}^{a}} \subseteq X$ and $\model{\bone}_{X} \subseteq \model{\bvar_{\atom{i}}^{a}}_{X}$ and $\bvar_{\atom{i}}^{a} \vDash^{X} \bone$ is valid if and only if $\Var{\bone} \cup \Var{\bvar_{\atom{i}}^{a}} \subseteq X$ and
$\model{\bvar_{\atom{i}}^{a}}_{X} \subseteq \model{\bone}_{X}$.
A named sequent $\vdash^{X} \lone$ is valid,
denoted $\vDash^{X} \lone$, 
if $\lone$ is $X$-valid. 
\begin{definition}[Validity of Sequents]
A sequent of $\GPPL$, $\vdash^{X} \lone$, is \emph{valid} if and only if $\vDash^{X} \lone$.
A set of such sequents, $\{ \vdash^{X_{1}} \lone_{1}, \dots, \vdash^{X_{m}} \lone_{m}\}$ is \emph{valid} if and only if all its sequents are valid. 
\end{definition}
\begin{notation}
We write $\fone \equiv_{X} \ftwo$ when $\model{\fone}_{X} = \model{\ftwo}_{X}$.
\end{notation}
\begin{theorem}[Soundness of $\GPPL$]\label{theorem:soundness}
If a sequent is derivable in $\GPPL$, then it is valid.
\end{theorem}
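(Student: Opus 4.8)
The plan is to prove soundness exactly as in the proof of the corresponding $\PPLc$ statement (Proposition~\ref{soundnessAP}), namely by induction on the height $n$ of the derivation of the named sequent $\vdash^{X}\lone$ in $\GPPL$, the only genuinely new work being the counting rules. If $n=0$, then $\vdash^{X}\lone$ is an initial sequent ($\AxO$ or $\AxT$) or comes from a $\mu$-rule ($\Rmur$ or $\Rmul$); these are dispatched pointwise inside $(\twoOm)^{X}$ just as in the univariate case (e.g. for $\AxO$, the premise $\bone\vDash^{X}\bvar_{i}^{a}$ is $\model{\bone}_{X}\subseteq\model{\bvar_{i}^{a}}_{X}=\model{\atm{i}{a}}_{X}$; for $\Rmur$, $\mu(\model{\bone}_{X})=0$ forces $\model{\bone}_{X}=\emptyset$). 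For the inductive step, all propositional cases (the union/intersection rules $\Rcup,\Rcap$ and the logical rules $\Rnra,\Rnla,\mathsf{R1}_{\vee}^{\Era},\mathsf{R2}_{\vee}^{\Era},\Rvla,\Rwra,\mathsf{R1}_{\wedge}^{\Ela},\mathsf{R2}_{\wedge}^{\Ela}$) are verbatim copies of the corresponding cases in the proof of Proposition~\ref{soundnessAP}, with every interpretation $\model{\cdot}$ replaced by its $X$-indexed version $\model{\cdot}_{X}$ and $\twoOm$ by $(\twoOm)^{X}$; the Boolean clauses of Definition~\ref{PPLsem} make the required set identities hold in exactly the same shape.

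The substantive step is the treatment of the counting rules, which rests on the Fundamental Lemma (Lemma~\ref{lemma:fundamental}). Consider $\Rbr$:
\begin{prooftree}
\AxiomC{$\vdash^{X\cup\{a\}} \btwo \Pto \fone$}
\AxiomC{$\bone \vDash^{X} \bigvee_{i}\{\bfour_{i} \mid \mu(\model{\bthree_{i}}_{\{a\}}) \ge q\}$}
\RightLabel{$\Rbr$}
\BinaryInfC{$\vdash^{X} \bone \Pto \BOX^{q}_{a}\fone$}
\end{prooftree}
where $\bigvee_{i}\bthree_{i}\wedge\bfour_{i}$ is an $a$-decomposition of $\btwo$. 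By the induction hypothesis applied to the left premise, $\model{\btwo}_{X\cup\{a\}}\subseteq\model{\fone}_{X\cup\{a\}}$; since $\PROJ_{f}$ and $\mu$ are monotone, $\mu(\PROJ_{f}(\model{\fone}_{X\cup\{a\}}))\ge\mu(\PROJ_{f}(\model{\btwo}_{X\cup\{a\}}))$ for every $f$. Applying Lemma~\ref{lemma:fundamental} to $\btwo$ and its $a$-decomposition,
\[
\{f\mid\mu(\PROJ_{f}(\model{\btwo}_{X\cup\{a\}}))\ge q\}=\bigcup\{\model{\bfour_{i}}_{X}\mid\mu(\model{\bthree_{i}}_{\{a\}})\ge q\},
\]
and the right premise is exactly $\model{\bone}_{X}\subseteq\bigcup\{\model{\bfour_{i}}_{X}\mid\mu(\model{\bthree_{i}}_{\{a\}})\ge q\}$. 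Hence every $f\in\model{\bone}_{X}$ satisfies $\mu(\PROJ_{f}(\model{\fone}_{X\cup\{a\}}))\ge q$, i.e. $f\in\model{\BOX^{q}_{a}\fone}_{X}$ by Definition~\ref{PPLsem}; so $\model{\bone}_{X}\subseteq\model{\BOX^{q}_{a}\fone}_{X}$, which is the $X$-validity required by Definition~\ref{labVal}.

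The rule $\Rbl$ is symmetric: from the premise $\model{\fone}_{X\cup\{a\}}\subseteq\model{\btwo}_{X\cup\{a\}}$ one gets, for every $f\in\model{\BOX^{q}_{a}\fone}_{X}$, that $\mu(\PROJ_{f}(\model{\btwo}_{X\cup\{a\}}))\ge q$, hence $f$ lies in $\bigcup\{\model{\bfour_{i}}_{X}\mid\mu(\model{\bthree_{i}}_{\{a\}})\ge q\}\subseteq\model{\bone}_{X}$ by the second premise; and the rules $\Rdr,\Rdl$ are identical except that they invoke the second equality of Lemma~\ref{lemma:fundamental}, the one with strict inequality $<q$, together with the definition of $\model{\DIA^{q}_{a}\fone}_{X}$. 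There is no deep obstacle here: once the Fundamental Lemma is in hand, everything reduces to monotonicity of $\PROJ_{f}$ and of $\mu$. The only points needing care are keeping the index sets $X$ and $X\cup\{a\}$ straight when invoking the induction hypothesis and Definition~\ref{PPLsem}, and noticing that the side condition of the rules — that $\bigvee_{i}\bthree_{i}\wedge\bfour_{i}$ be a genuine $a$-decomposition of $\btwo$, in particular that the $\bfour_{i}$ have pairwise disjoint interpretations — is precisely what licenses the application of Lemma~\ref{lemma:fundamental}, the disjointness being consumed inside that lemma rather than reused here.
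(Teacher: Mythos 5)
Your proof is correct and follows essentially the same route as the paper's: induction on derivation height, with the base case and propositional rules carried over verbatim from the $\PPLc$ soundness proof using the $X$-indexed interpretations, and the counting rules handled by combining the induction hypothesis, monotonicity of $\PROJ_{f}$ and $\mu$, and the Fundamental Lemma applied to the $a$-decomposition of $\btwo$. No gaps.
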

\begin{proof}
The proof is by induction on the height of the derivation
of $\vdash^{X} \lone$, call it $n$.
\begin{itemize}
\itemsep0em
\item \emph{Base case.} 
The sequent is either an initial sequent or it is derived by a $\mu$-rule. Let us consider the case of $\AxO$ only. The other cases are analogously proved.
\begin{prooftree}
\AxiomC{$\vDash \{a\} \in X$}
\AxiomC{$\bone \vDash^{X} \bvar_{\atom i}^{a}$}
\RightLabel{$\AxO$}
\BinaryInfC{$\vdash^{X} \bone \Era \atom{i}_{a}$}
\end{prooftree}
As seen, if $\bone \vDash^{X} \bvar_{\atom i}^{a}$, then $\Var{\bone} \cup \Var{\bvar_{\atom i}^{a}} \subseteq X$, and $\model{\bone}_{X} \subseteq \model{\bvar_{\atom i}^{a}}_{X}$, which is $\model{\bone}_{X} \subseteq \{f \in (\twoOm)^{X}$ $|$ $f$($a$)($i$) = 1$\}$. 
But then, 
for Definition~\ref{PPLsem}, 
$\model{\atm{i}{a}}_{X}$ = $\{f \in (\twoOm)^{X}$ $|$ $f$($a$)($i$) = 1$\}$, so $\model{\bone}_{X} \subseteq \model{\atm{i}{a}}_{X}$.
By Definition~\ref{labVal}, 
$\bone \Era \atm{i}{a}$ is $X$-valid, so $\vdash^{X} \bone \Era \atm{i}{a}$ must be valid as well.
\par 
The proof for $\AxT$ proof is equivalent to the one for $\AxO$, 
whereas $\mu$-cases are treated similarly to the corresponding proofs in $\GPPLc$.
\item \emph{Inductive case.} Let us assume soundness to hold for derivations of height up to $n$ and show it holds for derivations of height $n$+1. 
The proof is equivalent to that for $\GPPLc$, so let us just consider a few examples. 
All the other cases are similar. 
\par $\Rvla$. 
The derivation has the following form:
\begin{prooftree}
\AxiomC{$\vdots$}
\noLine
\UnaryInfC{$\vdash^{X} \bone \Ela \fone$}
\AxiomC{$\vdots$}
\noLine
\UnaryInfC{$\vdash^{X} \bone \Ela \ftwo$}
\RightLabel{$\Rvla$}
\BinaryInfC{$\vdash^{X} \bone \Ela \fone \vee \ftwo$}
\end{prooftree}
For IH, the premisses are valid, so $\vDash^{X} \bone \Ela \fone$ and $\vDash^{X} \bone \Ela \ftwo$, which is $\model{\fone}_{X} \subseteq \model{\bone}_{X}$ and $\model{\ftwo}_{X} \subseteq \model{\bone}_{X}$. But then also $\model{\fone}_{X} \cup \model{\ftwo}_{X} \subseteq \model{\bone}_{X}$, which is $\model{\fone \vee \ftwo}_{X} \subseteq \model{\bone}_{X}$ and, thus, $\vdash^{X} \bone \Ela \fone \vee \ftwo$ is valid.

\par $\Rbr.$ 
Let $\bigvee_{i}\bfour_{i} \wedge \bthree_{i}$ be an $a$-decomposition of $\btwo$.
Then, the derivation has the following form:
\begin{prooftree}
\AxiomC{$\vdash^{X\cup\{a\}} \btwo \Era \fone$}
\AxiomC{$\bone \vDash^{X} \bigvee \{\bfour_{i} \ | \ \mu(\model{\bthree_{i}}_{\{a\}}) \ge q\}$}
\RightLabel{$\Rbr$}
\BinaryInfC{$\vdash^{X} \bone \Era \BOX^{q}_{a}\fone$}
\end{prooftree}
For IH, $\btwo \Era \fone$ is $X\cup\{a\}$-valid.
Furthermore, since $\bone \vDash^{X} \bigvee \{\bfour_{i} \ | \ \mu(\model{\bthree_{i}}_{\{a\}}) \ge q\}$, $\model{\bone}_{X} \subseteq \bigcup_{i}\{\model{\bfour_{i}}_{X}$ $|$ $\mu$($\model{\bthree_{i}}_{\{a\}}) \ge q\}$,
which is by Lemma~\ref{lemma:fundamental} 
and given that $\bigvee_{i}\bfour_{i} \wedge \bthree_{i}$ be an $a$-decomposition of $\btwo$, $\model{\bone}_{X} \subseteq \{f \in (\twoOm)^{X} \ | \ \mu(\PROJ_{f}(\model{\btwo}_{X \cup\{a\}}$)) $\ge q\}$. 
But then, since for IH, $\model{\btwo}_{X\cup\{a\}} \subseteq \model{\fone}_{X\cup\{a\}}$, also $\model{\bone}_{X} \subseteq \{f \in (\twoOm)^{X} \ | \ \mu(\PROJ_{f}(\model{\fone}_{X\cup\{a\}})) \ge q\}$ (monotonicity).
Thus, $\{f \in (\twoOm)^{X} \ | \ \mu(\PROJ_{f}(\model{\fone}_{X \cup\{a\}}$))$ \ge q\}$ = $\model{\BOX^{q}_{a}\fone}_{X}$.
 Therefore, $\model{\bone}_{X} \subseteq \model{\BOX^{q}_{a}\fone}_{X}$, which is $\vdash^{X} \bone \Era \BOX^{q}_{a}\fone$ is valid.
  
 \par 
 $\Rbl.$ Let $\bigvee_{i}\bfour_{i} \wedge \bthree_{i}$ be an $a$-decomposition of $\btwo$. Then, the derivation has the following form:
 \begin{prooftree}
 \AxiomC{$\vdots$}
 \noLine
 \UnaryInfC{$\vdash^{X\cup\{a\}} \btwo \Ela \fone$}
 \AxiomC{$\bigvee\{\bfour_{i} \ | \ \mu(\model{\bthree_{i}}_{\{a\}}) \ge q\} \vDash^{X} \bone$}
 \RightLabel{$\Rbl$}
 \BinaryInfC{$\vdash^{X} \bone \Ela \BOX^{q}_{a}\fone$}
 \end{prooftree}
For IH, $\btwo \Ela \fone$ is $X\cup\{a\}$-valid. Furthermore, $\bigcup_{i}\{\model{\bfour_{i}}_{X} \ | \ \mu(\model{\bthree_{i}}_{\{a\}}) \ge q\} \subseteq \model{\bone}_{X}$, that, given $a$-decomposition for $\btwo$ and Lemma~\ref{lemma:fundamental}, 
 $\{f \in (\twoOm)^{X} \ | \ \mu(\PROJ_{f}(\model{\btwo}_{X\cup\{a\}})) \ge q\} \subseteq \model{\bone}_{X}$.
 For IH, $\model{\fone}_{X}$ $\subseteq$ $\model{\btwo}_{X}$, so $\{f \in (\twoOm)^{X}$ $|$ $\mu(\PROJ_{f}(\model{\fone}_{X\cup\{a\}})) \ge q\} \subseteq \model{\bone}_{X}.$ Thus,
 $\model{\BOX^{q}_{a}\fone}_{X} \subseteq \model{\bone}_{X}$, which is $\bone \Ela \BOX^{q}_{a}\fone$ is $X$-valid and, thus, $\vdash^{X} \bone \Ela \BOX^{q}_{a}\fone$ is valid. 
 
 \par 
 $\Rdr.$ Let $\bigvee_{i}\bfour_{i} \wedge \bthree_{i}$ be an $a$-decomposition of $\btwo$. Then, the derivation has the following form:
\begin{prooftree}
\AxiomC{$\vdots$}
\noLine
\UnaryInfC{$\vdash^{X\cup\{a\}} \btwo \Ela \fone$}
\AxiomC{$\vdots$}
\noLine
\UnaryInfC{$\bone \vDash^{X} \bigvee\{\bfour_{i} \ | \ \mu(\model{\bthree_{i}}_{\{a\}}) < q\}$}
\RightLabel{$\Rdr$}
\BinaryInfC{$\vdash^{X}\bone \Era \DIA^{q}_{a}\fone$}
\end{prooftree}
So, $\bone \Ela \fone$ is $X\cup\{a\}$-valid. 
Furthermore, $\model{\bone}_{X} \subseteq \bigcup_{i}\{\model{\bfour_{i}}_{X} \ |$ $\mu(\model{\bthree_{i}}_{\{a\}})$ $< q\}$, which, given $a$-decomposition of $\btwo$ and Lemma~\ref{lemma:fundamental}, is $\model{\bone}_{X} \subseteq \{f \in (\twoOm)^{X} \ |$ $\mu(\PROJ_{f}(\model{\btwo}_{X\cup\{a\}})) < q\}$.
For IH, $\model{\fone}_{X\cup\{a\}} \subseteq \model{\btwo}_{X\cup\{a\}}$, so also $\model{\bone}_{X} \subseteq \{f \in (\twoOm)^{X} \ | \ \mu(\PROJ_{f}(\model{\fone}_{X\cup\{a\}})) < q\}$. Thus,
$\model{\bone}_{X} \subseteq \model{\DIA^{q}_{a}\fone}_{X}$. Therefore, $\bone \Era \DIA^{q}_{a}\fone$ is $X$-valid, which is 
$\vdash^{X} \bone \Era \DIA^{q}_{a}\fone$ is valid. 

\par $\Rdl.$ 
Let $\bigvee_{i}\bfour_{i}\wedge \bthree_{i}$ be an $a$-decomposition of $\btwo$. The derivation has the following form:
\begin{prooftree}
\AxiomC{$\vdots$}
\noLine
\UnaryInfC{$\vdash^{X\cup\{a\}} \btwo \Era \fone$}
\AxiomC{$\vdots$}
\noLine
\UnaryInfC{$\bigvee\{\bfour_{i} \ | \ \mu(\model{\bthree_{i}}_{\{a\}}) < q\} \vDash^{X} \bone$}
\RightLabel{$\Rdl$}
\BinaryInfC{$\vdash^{X} \bone \Ela \DIA^{q}_{a}\fone$}
\end{prooftree}
So, $\btwo \Era \fone$ is $X\cup\{a\}$-valid.
Furthermore,
$\bigcup_{i}\{\model{\bfour_{i}}_{X} \ | \ \mu(\model{\bthree_{i}}_{\{a\}} < q\} \subseteq \model{\bone}_{X}$
Given the $a$-decomposition of $\btwo$ and 
Lemma~\ref{lemma:fundamental}, $\{f \in (\twoOm)^{X}$ $|$ $\mu(\PROJ_{f}(\model{\btwo}_{X\cup\{a\}})) < q\} \subseteq \model{\bone}_{X}$.
Since, for IH, $\model{\btwo}_{X\cup\{a\}} \subseteq \model{\fone}_{X\cup\{a\}}$, also $\{f \in (\twoOm)^{X}$ $|$ $\mu(\PROJ_{f}(\model{\fone}_{X\cup\{a\}}) < q\} \subseteq \model{\bone}_{X}$. Therefore, 
$\model{\DIA^{q}_{a}\fone}_{X} \subseteq \model{\bone}_{X}$, 
which is $\vdash^{X} \bone \Ela \DIA^{q}_{a}\fone$ is valid.
\end{itemize}
\end{proof}

The proof of the completeness of $\GPPL$ is obtained similarly to the one of $\GPPLc$. 
Let us start by defining the decomposition relation $\Dec$ between sets of sequents, which is based on $\dec$. 
\begin{definition}[Decomposition Rewriting Reduction, $\dec$]
The decomposition rewriting reduction, $\dec$, from a sequent to a set of sequents
(both in the language of $\GPPL$), is defined by the following decomposition rewriting rules:
\begin{align*}
\text{if  } \bone \vDash^{X} \neg \btwo, \ \vdash^{X} \bone \Era \neg \fone \ \ &\dec \ \ \{\vdash^{X} \btwo \Ela \fone\} \\
\text{if  } \neg \btwo \vDash^{X} \bone, \ \vdash^{X}  \bone \Ela \neg \fone \ \ &\dec \ \ \{\vdash^{X} \btwo \Era \fone\} \\ 
\text{if  } \bone \vDash^{X} \btwo \vee \bthree, \ \vdash^{X}  \bone \Era \fone \vee \ftwo \ \ &\dec \ \ \{\vdash^{X}  \btwo \Era \fone, \ \vdash^{X}  \bthree \Era \ftwo\} \\
\vdash^{X}  \bone \Ela \fone \vee \ftwo \ \ &\dec \ \ \{\vdash^{X}  \bone \Ela \fone, \ \vdash^{X}  \bone \Ela \ftwo\} \\
\vdash^{X}  \bone \Era \fone \wedge \ftwo \ \ &\dec \ \ \{\vdash^{X}  \bone \Era \fone, \ \vdash^{X} \bone \Era \ftwo\} \\
\text{if  } \btwo \wedge \bthree \vDash^{X} \bone, \ \vdash^{X} \bone \Ela \fone \wedge \ftwo \ \ &\dec \ \ \{\vdash^{X} \btwo \Ela \fone, \ \vdash^{X} \bthree \Ela \ftwo\} \\
\text{if } \btwo = \bigvee\bfour_{i}\wedge \bthree_{i} \ a\text{-decomposition of } \btwo \\
\text{and } \bone \vDash^{X} \bigvee \{\bfour_{i} \ | \ \mu(\model{\bthree_{i}}_{\{a\}}) \ge q\}, \ \vdash^{X} \bone \Era \BOX^{q}_{a}\fone \ \ &\dec \{\vdash^{X\cup\{a\}}  \btwo \Era \fone\} \\
\text{if } \btwo = \bfour_{i}\wedge \bthree_{i} \ a\text{-decomposition of } \btwo \\
\text{and } \bigvee \{\bfour_{i} \ | \ \mu(\model{\bthree_{i}}_{\{a\}}) \ge q\} \vDash^{X} \bone, \ \vdash^{X} \bone \Ela \BOX^{q}_{a}\fone \ \ &\dec \{\vdash^{X\cup\{a\}} \btwo \Ela \fone\} \\
\text{if } \btwo = \bigvee\bfour_{i}\wedge \bthree_{i} \ a\text{-decomposition of } \btwo \\
\text{and } \bone \vDash^{X} \bigvee \{\bfour_{i} \ | \ \mu(\model{\bthree_{i}}_{\{a\}}) < q\}, \ \vdash^{X} \bone \Era \DIA^{q}_{a}\fone \ \ &\dec \{\vdash^{X\cup\{a\}} \btwo \Ela \fone\} \\
\text{if } \btwo = \bigvee\bfour_{i}\wedge \bthree_{i} \ a\text{-decomposition of } \btwo \\
\text{and } \bigvee \{\bfour{i} \ | \ \mu(\model{\bthree_{i}}_{\{a\}}) < q\} \vDash^{X} \bone, \ \vdash^{X}  \bone \Ela \DIA^{q}_{a}\fone \ \ &\dec \{\vdash^{X\cup\{a\}} \btwo \Era \fone\} 
\end{align*}
\end{definition}
\normalsize
\noindent
As anticipated, we define set-decomposition reduction, $\Dec$, from a set of sequents to a set of sequents, basing on $\dec$.
  \begin{definition}[Set Decomposition, $\Dec$]
  The \emph{set-decomposition reduction}, $\Dec$, from a set of
  sequents to another set of sequents, is defined as follows:
\begin{prooftree}
\AxiomC{$\vdash^{X} \lone_{i} \dec \{\vdash^{X'_{1}} \lone_{i_{1}}, \dots, \vdash^{X'_{m}} \lone_{i_{m}}\}$}
\UnaryInfC{$\{\vdash^{X_{1}} \lone_{1}, \dots, \vdash^{X_{i}} \lone_{i}, \dots, \vdash^{X_{n}} \lone_{n}\} \Dec \{\vdash^{X_{1}} \lone_{1}, \dots, \vdash^{X'_{1}} \lone_{i_{1}}, \dots, \vdash^{X'_{m}} \lone_{i_{m}}, \dots, \vdash^{X_{n}} \lone_{n}\}$}
\end{prooftree}
   \end{definition}
 \noindent
As for $\PPLc$, $\Dec$ is the natural lifting of $\dec$ to a relation between sets of sequents. Again, predicate about sequents can be generalized to ones on sets. 
Also the definitions of corresponding sets, $\Dec$-normal form, and normalizations are the equivalent to the ones given for $\PPLc$. Also the notions of basic and regular sequents are very close to the ones
defined for $\GPPLc$:
\begin{definition}[Regular Sequent]
A \emph{basic formula} is a named labelled formula $\bone \Era \fone, \bone \Ela \fone,$ where $\fone$ is atomic.
A \emph{regular sequent} of $\GPPL$ is a sequent of the form $\vdash^{X} \lone$, such that, $\lone$ is a (named) basic formula.
\end{definition}

All the following lemmas and their respective proofs are analogous 
to those for proving the completeness of
$\GPPLc$. 
\begin{lemma}\label{CPLRegularValidDerivable}
A regular sequent is valid if and only if it is derivable in $\GPPL$.
\end{lemma}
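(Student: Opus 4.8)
The plan is to mirror, almost verbatim, the corresponding argument for $\GsCPLc$ (Lemma~\ref{lemma:RegularValidDerivable}), since the only novelties in $\GPPL$ are the presence of names and the more elaborate counting rules, neither of which interferes with the \emph{atomic} case that regular sequents are confined to. First I would fix a regular sequent $\vdash^{X}\lone$, so $\lone$ is a named basic formula, i.e.\ either $\bone\Era\atm{i}{a}$ or $\bone\Ela\atm{i}{a}$ for some named Boolean $\bone$, name $a$, and index $i$; by the definition of named sequent we automatically have $a\in X$ (since $\FN(\atm{i}{a})\subseteq X$) and $\FN(\bone)\subseteq X$.

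For the forward direction, suppose $\vdash^{X}\lone$ is valid. In the case $\lone=\bone\Era\atm{i}{a}$, $X$-validity (Definition~\ref{labVal}) gives $\model{\bone}_{X}\subseteq\model{\atm{i}{a}}_{X}$; by Definition~\ref{PPLsem}, $\model{\atm{i}{a}}_{X}=\{f\mid f(a)(i)=1\}=\model{\bvar_{i}^{a}}_{X}$ (using Definition~\ref{booleanFormulas}), so $\model{\bone}_{X}\subseteq\model{\bvar_{i}^{a}}_{X}$, which is exactly the external hypothesis $\bone\vDash^{X}\bvar_{i}^{a}$. Together with $\vDash a\in X$, the rule $\AxO$ yields a derivation of $\vdash^{X}\bone\Era\atm{i}{a}$. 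The case $\lone=\bone\Ela\atm{i}{a}$ is symmetric: validity gives $\model{\atm{i}{a}}_{X}\subseteq\model{\bone}_{X}$, hence $\model{\bvar_{i}^{a}}_{X}\subseteq\model{\bone}_{X}$, i.e.\ $\bvar_{i}^{a}\vDash^{X}\bone$, and $\AxT$ applies.

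For the backward direction, assume $\vdash^{X}\lone$ is derivable in $\GPPL$; then by Theorem~\ref{theorem:soundness} (soundness of $\GPPL$) it is valid. This closes the equivalence.

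There is essentially no obstacle here: the lemma is a routine transcription, and the only point requiring a moment's care is the bookkeeping of names — ensuring that the side condition $\vDash a\in X$ of the initial rules is met, which is guaranteed by the very definition of a well-formed named sequent. All the measure-theoretic and $a$-decomposition machinery of $\GPPL$ plays no role, precisely because regular sequents never involve counting quantifiers or compound connectives.
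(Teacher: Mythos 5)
Your proof is correct and follows essentially the same route as the paper's: the forward direction unfolds $X$-validity of the atomic labelled formula into the external hypothesis $\bone\vDash^{X}\bvar_{i}^{a}$ (resp.\ $\bvar_{i}^{a}\vDash^{X}\bone$) and closes with $\AxO$ (resp.\ $\AxT$), and the backward direction is just soundness of $\GPPL$. Nothing is missing.
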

\begin{proof} Let $\vdash^{X}\lone$ be an arbitrary, \emph{regular}
sequent, which is let $\lone$ be basic.
\begin{itemize}
\item[$\Rightarrow$] Assume that $\lone$ is an $X$-valid formula. There are two possible cases.

\par Let $\lone=\bone\Era\atom{i}_{a}$ for some $i\in\Nat$.
Then, $\model{\bone}_{X}\subseteq\model{\atom{i}_{a}}_{X}$,
which is $\model{\bone}_{X}\subseteq\{f\in(\twoOm)^{X}$ $|$ 
$f(a)(i)=1\} = \model{\bvar_{i}^{a}}_{X}$, with
$\FN(\atom{i}_{a})\subseteq X$.
Thus, $\vdash^{X}\bone\Era\atom{i}_{a}$ can be
derived by means of $\AxO$ as follows:
\begin{prooftree}
\AxiomC{$a\in X$}
\AxiomC{$\bone\vDash^{X}\bvar_{i}^{a}$}
\RightLabel{$\AxO$}
\BinaryInfC{$\vdash^{X}\bone\Era\atom{i}_{a}$}
\end{prooftree}

\par Let $\lone=\bone\Ela\atom{i}_{a}$ for some $i\in\Nat$.
Then, $\model{\atom{i}_{a}}_{X}\subseteq
\model{\bone}_{X}$,
which is $\model{\bvar_{i}^{a}}_{X}=\{f \in (\twoOm)^{X}$
$|$ $f(a)(i)=1\}$ $\subseteq$ $\model{\bone}_{X}$,
with $\FN(\atom{i}_{a}\subseteq X$.
Thus, $\vdash^{X}\bone\Ela\atom{i}_{a}$ is derivable by
means of $\AxT$ as follows:
\begin{prooftree}
\AxiomC{$a\in X$}
\AxiomC{$\bvar_{i}^{a}\vDash^{X}\bone$}
\RightLabel{$\AxO$}
\BinaryInfC{$\vdash^{X}\bone\Ela\atom{i}_{a}$}
\end{prooftree}

\item[$\Leftarrow$] By Theorem~\ref{theorem:soundness}.
\end{itemize}
\end{proof}

The following auxiliary result is established by exhaustive inspection. Since the proof idea is substantially the same as the
one used for $\PPLc$, just a few examples will be explicitly taken into account.
\begin{lemma}\label{CPLNormalRegular}
If a (non-empty) sequent is $\dec$-normal, then it is regular.
\end{lemma}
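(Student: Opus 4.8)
The plan is to mirror the proof of Lemma~\ref{lemma:NormalRegular} for $\GPPLc$, arguing by contraposition: I take an arbitrary non-regular named sequent $\vdash^X \lone$ and exhibit a $\dec$-reduction that applies to it, thereby showing it is not $\dec$-normal. Non-regularity means the logical part $\fone$ of $\lone$ is \emph{not} atomic, so $\fone$ is either a negation, a conjunction, a disjunction, or a counting-quantified formula $\BOX^q_a\ftwo$ or $\DIA^q_a\ftwo$. I will go through each of these cases for both polarities $\Era$ and $\Ela$, exactly as in the $\GPPLc$ proof, but now being careful about the named-formula bookkeeping: the free-name set $X$, and in particular the fact that for the counting cases the reduction introduces a fresh name $a$ and moves to $X \cup \{a\}$.

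The propositional cases are essentially verbatim copies of the $\GPPLc$ argument. For $\lone = \bone \Era \neg\fone$ one uses $\bone \vDash^X \neg\bot$ (since $\model{\neg\bot}_X = (\twoOm)^X \supseteq \model{\bone}_X$) to fire $\vdash^X \bone \Era \neg\fone \dec \{\vdash^X \bot \Ela \fone\}$; for $\lone = \bone \Ela \neg\fone$ one uses $\neg\top \vDash^X \bone$; for the $\vee$ and $\wedge$ cases one uses the trivial semantic hypotheses $\bone \vDash^X \top\vee\top$, $\bot\wedge\bot \vDash^X \bone$, or fires the hypothesis-free rules directly. The only genuinely new work is the counting cases. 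For $\lone = \bone \Era \BOX^q_a\fone$ I pick an $a$-decomposition $\bigvee_i \bthree_i \wedge \bfour_i$ of the Boolean formula $\top$ (or of any named Boolean formula equivalent to $\top$ over $X \cup \{a\}$), observe that $\mu(\model{\top}_{\{a\}}) = 1 \geq q$ for every $i$, hence $\bigvee\{\bfour_i \mid \mu(\model{\bthree_i}_{\{a\}}) \geq q\} \equiv_X \top$, so $\bone \vDash^X \bigvee\{\bfour_i \mid \dots\}$ holds and the reduction $\vdash^X \bone \Era \BOX^q_a\fone \dec \{\vdash^{X\cup\{a\}} \top \Era \fone\}$ applies; the cases $\bone \Ela \BOX^q_a\fone$ (splitting on $q = 0$ versus $q \neq 0$, as in the $\GPPLc$ proof where one uses $\mu(\model{\bot}_{\{a\}}) = 0 < q$), $\bone \Era \DIA^q_a\fone$, and $\bone \Ela \DIA^q_a\fone$ are handled symmetrically.

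The main obstacle will be checking that the $a$-decomposition machinery cooperates cleanly with the reduction clauses — i.e., that for the trivial choices of $\btwo$ ($\top$ or $\bot$) one can always name an explicit $a$-decomposition whose ``selected'' disjunction $\bigvee\{\bfour_i \mid \mu(\model{\bthree_i}_{\{a\}}) \triangleright q\}$ is semantically $\top$ (for the right-introduction reductions) or $\bot$ (for the left-introduction ones). This is where Lemma~\ref{BoolDec} and its strengthened form — every named Boolean formula admits an $a$-decomposition with $\model{\bigvee_i \bfour_i}_X = \model{\top}_X$ — do the work, and I would cite it directly rather than reconstructing decompositions by hand. Apart from that, the argument is a routine case analysis; no new idea beyond the $\GPPLc$ template is needed, and since the statement is "analogous to those for proving the completeness of $\GPPLc$" I would keep the exposition terse, writing out one propositional case and one counting case in full and remarking that the others are treated identically.
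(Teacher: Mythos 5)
Your proposal matches the paper's own (sketched) proof: contraposition, exhibiting for each non-basic labelled formula a $\dec$-reduction built from the trivial Boolean formulas $\top$ and $\bot$, with the $a$-decomposition $\top=\top\wedge\top$ handling the $\Era\BOX^{q}_{a}$ case exactly as the paper does. One small caveat: unlike the $\GPPLc$ reduction, the multivariate $\dec$ has no special clauses for $q=0$, so for $\bone\Ela\BOX^{q}_{a}\fone$ you should not split on $q$ as in the univariate template but instead decompose $\bot$ as $\bot\wedge\bot$, which makes the side condition $\bot\vDash^{X}\bone$ hold for every $q$, including $q=0$.
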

\begin{proof}[Proof Sketch]
The lemma is proved by contraposition. 
Assume that $\vdash^{X} \lone$ is an arbitrary, non-regular
sequent, which is that $\lone$ is non-basic.
Let us prove that the sequent is not $\Dec$-normal.
The proof is by inspection. Let us consider just a few cases.

\par 
$\lone=\bone\Era\neg\fone$. 
Then since for every $\bone$,
$\bot\vDash^{X}\bone$, the following $\dec$-decomposition
is well-defined:
$$
\vdash^{X}\bone\Era\neg\fone \ \ \ \dec \ \ \ \{\vdash^{X}\top\Ela\fone\}.
$$
 
\par
$\lone=\bone\Era\fone\vee\ftwo$. 
Then, since for every
$\bone$, $\bone\vDash^{X}\top\vee\top$, the following
$\dec$-decomposition is well-defined:
$$
\vdash^{X}\bone\Era\fone\vee\ftwo \ \ \dec \ \ \{\vdash^{X}\top\Era\fone, \ \vdash^{X} \top\Era\ftwo\}.
$$

\par 
$\lone=\bone\Ela\fone\vee\ftwo$. 
Then, the following
$\dec$-decomposition is well-defined:
$$
\vdash^{X}\bone\Ela\fone\vee\ftwo \ \ \dec \ \ \{\vdash^{X}\bone\Ela\fone, \vdash^{X}\bone\Ela\ftwo\}.
$$

\par 
$\lone=\bone\Era\BOX^{q}_{a}\fone$.
Let $\top=\bigvee\top\wedge\top$ be the $a$-decomposition of
$\top$.
Then, for every $\bone$ and $q\in[0,1]$, 
$\bone\vDash^{X}\bigvee\{\top$ $|$ $\mu(\model{\top})\ge q\}$.
Thus, the following $\dec$-decomposition is well-defined:
$$
\vdash^{X} \bone\Era\BOX^{q}_{a}\fone \ \ \dec \ \ \{\vdash^{X\cup\{a\}} \top\Era\fone\}.
$$
\end{proof}

\begin{corollary}\label{cor:NormalValidDerivable}
If a sequent is $\dec$-normal, then it is valid if and only if it is derivable in $\GPPL$.
\end{corollary}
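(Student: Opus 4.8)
The plan is to derive this statement in exactly the same way as its univariate counterpart (the corollary in Subsection~\ref{appendix3.2}): by composing the two lemmas that immediately precede it. Let $\vdash^{X}\lone$ be a $\dec$-normal named sequent. Since every sequent of $\GPPL$ carries precisely one labelled formula, it is in particular non-empty, so Lemma~\ref{CPLNormalRegular} applies and gives that $\vdash^{X}\lone$ is \emph{regular}, i.e.\ the logical part of $\lone$ is an atom $\atm{i}{a}$. Lemma~\ref{CPLRegularValidDerivable} then says that a regular sequent is valid if and only if it is derivable in $\GPPL$, and putting the two together yields the claim.

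First I would record the two directions separately to make the dependencies transparent. The implication ``derivable $\Rightarrow$ valid'' needs no regularity at all: it is just soundness, Theorem~\ref{theorem:soundness}, which holds for arbitrary sequents. The converse ``valid $\Rightarrow$ derivable'' is where regularity is used: if $\vdash^{X}\bone\Era\atm{i}{a}$ is valid then $\model{\bone}_{X}\subseteq\model{\bvar_{i}^{a}}_{X}$ and $a\in X$, so the sequent is literally an instance of the initial rule $\AxO$; symmetrically $\vdash^{X}\bone\Ela\atm{i}{a}$ is an instance of $\AxT$. This is precisely the content of Lemma~\ref{CPLRegularValidDerivable}, so no further argument is needed.

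I do not anticipate any genuine obstacle here, since all the real work has been front-loaded into Lemma~\ref{CPLNormalRegular}: the delicate point was the exhaustive check that whenever the top-level shape of $\lone$ is \emph{not} atomic — in particular for the counting connectives $\BOX^{q}_{a}$ and $\DIA^{q}_{a}$, including the boundary value $q=0$ — some $\dec$-rule necessarily fires (one uses the trivial validities $\bot\vDash^{X}\bone$, $\bone\vDash^{X}\top\vee\top$, and for the quantifier cases the $a$-decomposition $\top=\bigvee\top\wedge\top$). Granting that lemma, the corollary is a one-line composition. Finally, I would note the routine lifting of the statement from single sequents to finite sets: a $\dec$-normal set of sequents is, by definition, one all of whose members are $\dec$-normal, hence each is valid iff derivable, and a set is valid (resp.\ derivable) iff all of its sequents are — so the set-level version follows at once, and this is the form actually used in the completeness proof for $\GPPL$.
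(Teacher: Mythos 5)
Your proof is correct and follows exactly the paper's route: the corollary is obtained by composing Lemma~\ref{CPLNormalRegular} ($\dec$-normal $\Rightarrow$ regular) with Lemma~\ref{CPLRegularValidDerivable} (regular: valid $\Leftrightarrow$ derivable), which is precisely what the paper's one-line proof does. Your additional remarks on soundness handling one direction and on the lifting to sets are accurate but not needed beyond the composition itself.
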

\begin{proof}
It is a straightforward consequence of Lemma~\ref{CPLRegularValidDerivable} and Lemma~\ref{CPLNormalRegular}.
\end{proof}

\begin{lemma}\label{theorem:normalizing}
Reduction $\Dec$ is strongly normalizing.
\end{lemma}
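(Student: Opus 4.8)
The plan is to mirror the proof of Lemma~\ref{sNorm} for $\PPLc$: exhibit a $\Nat$-valued measure on finite sets of named sequents that strictly decreases along every $\Dec$-step. Define the \emph{number of connectives} $\cn(\lone)$ of a named labelled formula $\lone=\bone\Era\fone$ or $\lone=\bone\Ela\fone$ to be the number of connectives of its logical part $\fone$, exactly as in Definition~\ref{def:cn}, the only change being the two clauses $\cn(\BOX^{q}_{a}\fone)=\cn(\DIA^{q}_{a}\fone)=1+\cn(\fone)$ for the named counting operators (the name $a$ is irrelevant). Put $\cn(\vdash^{X}\lone)=\cn(\lone)$, and for a finite set $\ssetO=\{\seqO_{1},\dots,\seqO_{k}\}$ of named sequents let $\ms(\ssetO)=\sum_{i=1}^{k}3^{\cn(\seqO_{i})}$ with $\ms(\{\})=0$, as in Definition~\ref{def:ms}. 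Since $\ms$ takes values in $\Nat$, it suffices to show $\ms(\ssetT)<\ms(\ssetO)$ whenever $\ssetO\Dec\ssetT$.

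As $\Dec$ is the lifting of $\dec$ — it replaces a single active sequent $\vdash^{X}\lone$ of $\ssetO$ by the finite set $\ssetT'$ with $\vdash^{X}\lone\dec\ssetT'$, leaving the other sequents untouched — it is enough to check that $\ms(\ssetT')<3^{\cn(\lone)}$ for every $\dec$-step $\vdash^{X}\lone\dec\ssetT'$, and this I would do by inspecting the clauses defining $\dec$ for $\GPPL$. For the Boolean/logical clauses the computation is verbatim that of Lemma~\ref{sNorm}: for a binary connective $*\in\{\wedge,\vee\}$ the resulting set has two sequents with logical parts $\fone$ and $\ftwo$, and $3^{\cn(\fone)}+3^{\cn(\ftwo)}\le 2\cdot 3^{\max(\cn(\fone),\cn(\ftwo))}<3^{1+\cn(\fone)+\cn(\ftwo)}=3^{\cn(\lone)}$; in the unary cases the single resulting sequent has a strictly smaller connective count, so $3^{\cn(\cdot)}<3^{\cn(\lone)}$; and any clause producing $\{\}$ satisfies the bound trivially since $\ms(\{\})=0$.

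The only genuinely new cases are the four counting-quantifier clauses, e.g.\ $\vdash^{X}\bone\Era\BOX^{q}_{a}\fone\dec\{\vdash^{X\cup\{a\}}\btwo\Era\fone\}$ and its $\BOX$-left, $\DIA$-right, $\DIA$-left counterparts. Here two things differ from $\PPLc$: the index set jumps from $X$ to $X\cup\{a\}$, and the Boolean label changes from $\bone$ to the $a$-decomposition component $\btwo$, which by the remark after the Fundamental Lemma may be \emph{exponentially} larger. Neither affects the argument, and recognising this is the one conceptual point: $\cn$ measures only the \emph{logical} part of a labelled formula and is insensitive both to the index set and to the Boolean label. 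The logical part drops from $\BOX^{q}_{a}\fone$ to $\fone$, so $\ms(\{\vdash^{X\cup\{a\}}\btwo\Era\fone\})=3^{\cn(\fone)}<3^{1+\cn(\fone)}=3^{\cn(\bone\Era\BOX^{q}_{a}\fone)}$, and symmetrically for the other three; everything else is the same base-$3$ bookkeeping already used for $\GsCPLc$. Assembling the cases yields $\ms(\ssetT)<\ms(\ssetO)$ for every $\ssetO\Dec\ssetT$, whence there is no infinite $\Dec$-chain, i.e.\ $\Dec$ is strongly normalizing.
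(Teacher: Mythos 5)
Your proof is correct and follows essentially the same route as the paper's: the same measure $\ms(\ssetO)=\sum_i 3^{\cn(\seqO_i)}$ with $\cn$ counting only the connectives of the logical part, and the same case analysis showing a strict decrease at each $\Dec$-step. Your explicit observation that $\cn$ is insensitive to the growth of the Boolean label under $a$-decomposition and to the enlargement of the index set is exactly the point the paper relies on (implicitly) in its treatment of the counting-quantifier cases.
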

\begin{proof}[Proof Sketch]
The proof is again based on the notion of sequent measure. 
It is shown that if $\{\vdash^{X_{1}}\lone_{1},\dots,\vdash^{X_{m}}\lone_{m}\}$ $\Dec$ $\{\vdash^{X_{1}'}\vdash\lone_{1}',\dots,\vdash^{X_{m}'}\lone_{m}'\}$,
then $\ms\big(\{\vdash^{X_{1}}\lone_{1},\dots,\vdash^{X_{m}}\lone_{m}\}\big) > \ms\big(\{\vdash^{X_{1}'}\lone_{1}',\dots,\vdash^{X_{m}'}\lone_{m}'\}\big)$.
This property is established by the exhaustive analysis of all 
possible forms of $\dec$-reduction applicable
to the set,
which is by dealing with all possible forms of $\dec$-reduction
of $\vdash^{X_{i}}\lone_{i}$, with $i\in\{1,\dots,m\}$, where
$\vdash^{X_{i}}\lone_{i}$ is the active sequent on which the
$\Dec$-reduction is based.
These cases, are very closed to the corresponding ones
in $\PPLc$ proof, so just a few examples are considered.

$\lone_{i}=\bone\Ela\neg\fone$. 
Assume that $\vdash^{X}\bone\Ela\neg\fone$
is the sequent on which the considered $\Dec$-reduction
of $\{\vdash^{X_{1}}\lone_{1},\dots,\vdash^{X_{m}}\lone_{m}\}$
is based, and specifically that the sequent is
$\dec$-reduced as follows:
$$
\vdash^{X}\bone\Ela\neg\fone \ \ \dec \ \ \{\vdash^{X}\btwo\Era\fone\}
$$
for some $\btwo$ such that $\neg\btwo\vDash_{X}\bone$.
By Definition~\ref{def:cn}, $\cn(\bone\Ela\neg\fone)=\cn(\btwo\Era\fone)+1$.
Thus, since the considered $\dec$-step reduces the active
sequent $\vdash^{X}\bone\Ela\neg\fone$ only,
for Definition~\ref{def:ms}
$$
\ms\big(\{\vdash^{X_{1}}\lone_{1},\dots,\vdash^{X}\bone\Ela\neg\fone,\dots, \vdash^{X_{m}}\lone_{m}\}\big)
>
\ms\big(\{\vdash^{X_{1}}\lone_{1},\dots,\vdash^{X}\btwo\Era\fone,
\dots,\vdash^{X_{m}}\lone_{m}\}\big).
$$

$\lone_{i}=\bone\Era\fone\vee\ftwo$. 
Assume that $\vdash^{X}\fone\vee\ftwo$ is the sequent
on which the considered $\dec$-reduction of
$\{\vdash^{X_{1}}\lone_{1},\dots,\vdash^{X_{m}}\lone_{m}\}$
is based, and specifically that the sequent is
$\dec$-reduced as follows:
$$
\vdash^{X}\bone\Era\fone\vee\ftwo \ \ \dec \ \ \{\btwo\Era\fone,\vdash^{X}\bthree\Era\ftwo\}
$$
for some $\btwo$ and $\bthree$ such that $\bone\vDash^{X}\btwo\vee\bthree$.
Thus, by Definition~\ref{def:cn} and~\ref{def:ms},
\small{
\begin{align*}
\ms\big(\{\vdash^{X_{1}}\lone_{1},\dots,\vdash^{X}\btwo\Era\fone,
\vdash^{X}\bthree\Era\ftwo,\dots,\vdash^{X_{m}}\lone_{m}\}\big)
&= 3^{\cn(\lone_{1})}+\dots+3^{\cn(\btwo\Era\fone)}+3^{\cn(\bthree\Era\ftwo)}+\dots+3^{\cn(\lone_{m})} \\
&= 3^{\cn(\lone_{1})}+\dots+3^{\cn(\fone)}+3^{\cn(\ftwo)}+\dots+3^{\cn(\lone_{m})} \\
&< 3^{\cn(\lone_{1})}+\dots+3^{\cn(\fone)+\cn(\ftwo)+1}+\dots+3^{\cn(\lone_{m})} \\
&= 3^{\cn(\lone_{1})}+\dots+3^{\cn(\bone\Era\fone\vee\ftwo)}+\dots+3^{\cn(\lone_{m})} \\
&= \ms\big(\{\vdash^{X_{1}}\lone_{1},\dots,\vdash^{X}\bone\Era\fone\vee\ftwo,\dots,\vdash^{X_{m}}\lone_{m}\}\big).
\end{align*}
}

\normalsize
\par
$\lone_{i}=\bone\Ela\BOX^{q}_{a}\fone$.
Assume that $\vdash^{X}\bone\Ela\BOX^{q}_{a}\fone$ is the
sequent on which the considered $\dec$-reduction of
$\{\vdash^{X_{1}}\lone_{1},\dots,\vdash^{X_{m}}\lone_{m}\}$
is based, and specifically that the sequent is $\dec$-reduced
as follows:
$$
\vdash^{X}\bone\Ela\BOX^{q}_{a}\fone \ \ \dec \ \ \{\vdash^{X\cup\{a\}}\btwo\Ela\fone\}
$$
for some $\btwo$ such that 
$\btwo=\bigvee\bfour_{i}\wedge\bthree_{d}$
is the $a$-decomposition of $\btwo$ and 
$\bigvee\{\bfour_{i}$ $|$ $\mu(\model{\bthree_{i}}_{\{a\}})\ge q\}$
$\vDash^{X}$ $\bone$.
By Definition~\ref{def:cn}, $\cn(\bone\Ela\BOX^{q}_{a}\fone)=\cn(\btwo\Ela\fone)+1$.
Thus, since the considered $\dec$-step reduces the active
sequent $\vdash\bone\Ela\BOX^{q}_{a}\fone$ only, for
Definition~\ref{def:ms},
$$
\ms\big(\{\vdash^{X_{1}}\lone_{1},\dots,\vdash^{X}\bone\Ela\neg\fone,\dots,\vdash^{X_{m}}\lone_{m}\}\big)
>
\ms\big(\{\vdash^{X_{1}}\lone_{1},\dots,\vdash^{X}\btwo\Era\fone,
\dots,\vdash^{X_{m}}\lone_{m}\}\big).
$$
\end{proof}

In order to prove that validity is existentially preserved through
$\Dec$-decomposition, it is required to introduce the following
auxiliary lemma.
\begin{lemma}\label{lemma:bone}
For every formula of $\PPL$, call it $\fone$, there is a named Boolean formula $\bone_{\fone}$, such that, for every $X$, with $\Var{\fone} \cup \Var{\ftwo} \subseteq X$:
$$
\model{\fone}_{X} = \model{\bone_{\fone}}_{X}.
$$
\end{lemma}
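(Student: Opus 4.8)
The plan is to proceed by induction on the structure of the formula $\fone$, exactly mirroring the argument for Lemma~\ref{label} in the univariate setting but invoking the Fundamental Lemma (Lemma~\ref{lemma:fundamental}) in place of the trivial case analysis used there. (The condition displayed in the statement should really read $\FN(\fone)\cup\FN(\bone_{\fone})\subseteq X$; since the construction will always yield $\FN(\bone_{\fone})\subseteq\FN(\fone)$, it suffices to require $\FN(\fone)\subseteq X$, and the resulting identity will hold \emph{uniformly} for every such $X$.)

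For the base case $\fone=\atm{i}{a}$ we set $\bone_{\fone}=\bvar_{i}^{a}$, so that $\model{\fone}_{X}=\{f\mid f(a)(i)=1\}=\model{\bvar_{i}^{a}}_{X}$ by Definition~\ref{PPLsem} and Definition~\ref{booleanFormulas}. For the propositional connectives the translation is homomorphic: if $\fone=\lnot\ftwo$ we put $\bone_{\fone}=\lnot\bone_{\ftwo}$, if $\fone=\ftwo\land\fthree$ we put $\bone_{\fone}=\bone_{\ftwo}\land\bone_{\fthree}$, and if $\fone=\ftwo\lor\fthree$ we put $\bone_{\fone}=\bone_{\ftwo}\lor\bone_{\fthree}$; in each case the required equality follows from the induction hypotheses together with the clause-by-clause agreement between the semantic operations for $\PPL$-formulas and for named Boolean formulas (complement, intersection, union in $\mathcal B((2^{\omega})^{X})$), so that e.g.\ $\model{\lnot\ftwo}_{X}=(\twoOm)^{X}-\model{\ftwo}_{X}=(\twoOm)^{X}-\model{\bone_{\ftwo}}_{X}=\model{\lnot\bone_{\ftwo}}_{X}$.

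The only substantial case is $\fone=\BOX^{q}_{a}\ftwo$ (the case $\fone=\DIA^{q}_{a}\ftwo$ being entirely symmetric). By the induction hypothesis there is a named Boolean formula $\bone_{\ftwo}$ with $\model{\ftwo}_{Y}=\model{\bone_{\ftwo}}_{Y}$ for every $Y\supseteq\FN(\ftwo)$; in particular this holds for $Y=X\cup\{a\}$. By Lemma~\ref{BoolDec}, $\bone_{\ftwo}$ admits an $a$-decomposition $\bigvee_{i=0}^{k-1}\bthree_{i}\land\bfour_{i}$, with $\FN(\bthree_{i})\subseteq\{a\}$, $\FN(\bfour_{i})\subseteq X$, and the $\model{\bfour_{i}}_{X}$ pairwise disjoint; the construction in Lemma~\ref{BoolDec} is purely syntactic and makes the $\bfour_{i}$ pairwise contradictory, hence this decomposition works simultaneously for every $X$. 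Applying the first equality of Lemma~\ref{lemma:fundamental} to $\bone_{\ftwo}$, and using $\model{\ftwo}_{X\cup\{a\}}=\model{\bone_{\ftwo}}_{X\cup\{a\}}$, we get
$$
\model{\BOX^{q}_{a}\ftwo}_{X}=\{f\mid\mu(\PROJ_{f}(\model{\bone_{\ftwo}}_{X\cup\{a\}}))\geq q\}=\bigcup\{\model{\bfour_{i}}_{X}\mid\mu(\model{\bthree_{i}}_{\{a\}})\geq q\},
$$
where each $\mu(\model{\bthree_{i}}_{\{a\}})$ is a well-defined rational since $\bthree_{i}$ has finitely many variables, all of name $a$. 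It therefore suffices to set
$$
\bone_{\fone}=\bigvee\{\bfour_{i}\mid\mu(\model{\bthree_{i}}_{\{a\}})\geq q\}
$$
(the empty disjunction being read as $\bot$, matching the case where the right-hand union is empty); then $\FN(\bone_{\fone})\subseteq X$, the name $a$ having been consumed, and $\model{\bone_{\fone}}_{X}=\model{\BOX^{q}_{a}\ftwo}_{X}$. The $\DIA$ case uses the second equality of Lemma~\ref{lemma:fundamental} with the condition $\mu(\model{\bthree_{i}}_{\{a\}})<q$.

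The point requiring attention—rather than a genuine obstacle—is the uniformity in $X$: one must check that a single $\bone_{\fone}$ works for all admissible $X$, which is why it matters that the $a$-decompositions of Lemma~\ref{BoolDec} and the selection of disjuncts above are both defined without reference to $X$, and that the maps $\model{-}_{X}$ for varying $X$ are compatible under the projections $\PROJ_{f}$ appearing in Lemma~\ref{lemma:fundamental}. Everything else is a routine unwinding of Definition~\ref{PPLsem}.
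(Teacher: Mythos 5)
Your proof is correct and follows essentially the same route as the paper's: structural induction with a homomorphic translation on the propositional connectives, and the counting cases dispatched via an $a$-decomposition of $\bone_{\ftwo}$ together with the Fundamental Lemma (the paper's proof of this case is literally ``By Lemma~\ref{lemma:fundamental}'', which your explicit construction $\bone_{\fone}=\bigvee\{\bfour_{i}\mid\mu(\model{\bthree_{i}}_{\{a\}})\geq q\}$ merely spells out). Your remarks on the typo in the statement and on uniformity in $X$ are sound but do not change the argument.
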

\begin{proof}
The proof is by induction on the structure of $\bone$.
\begin{itemize}
\item \emph{Base case.} $\fone = \atm{i}{a}$. Let $\Var{\fone} \subseteq X$. For Definition~\ref{PPLsem},
$\model{\atm{i}{a}}_{X} = \{f \in (\twoOm)^{X} \ | \ f(a)(i) = 1\} = \model{\bvar_{\atom i}^{a}}_{X}$. Thus, $\bone_{\fone}$ = $x^{a}_{i}$.
\item \emph{Inductive case.} There are four possible cases.

\par 
$\fone = \neg \ftwo.$ 
By IH, there is a $\bone_{\ftwo}$ such that, for every $X$ ($\Var{\bone} \cup \Var{\bone_{\ftwo}} \subseteq X$), $\model{\fone}_{X} = \model{\bone_{\fone}}$. Thus, let $\bone_{\fone}$ = $\neg \bone_{\ftwo}$. Then,
$$
\model{\neg \bone_{\ftwo}}_{X} = 
(\twoOm)^{X} \ – \ \model{\bone_{\ftwo}}_{X} 
\stackrel{IH}{=} (\twoOm)^{X} \ – \ \model{\ftwo}_{X} 
= \model{\neg \ftwo}_{X} = 
\model{\fone}_{X}.
$$

\par $\fone = \ftwo \vee \fthree$. 
By IH, there are $\bone_{\ftwo}$ and $\bone_{\fthree}$ such that, for every $X$ ($\Var{\ftwo} \cup \Var{\fthree} \subseteq X$), $\model{\bone_{\ftwo}}_{X} = \model{\ftwo}_{X}$ and $\model{\bone_{\fthree}}_{X} = \model{\fthree}_{X}$. 
Thus, let $\bone_{\fone}$ = $\bone_{\ftwo} \vee \bone_{\fthree}$. Then,
$$
\model{\bone_{\ftwo} \vee \bone_{\fthree}}_{X} = 
\model{\bone_{\ftwo}}_{X} \cup \model{\bone_{\fthree}}_{X} 
\stackrel{IH}{=}
\model{\ftwo}_{X} \cup \model{\fthree}_{X} = 
\model{\ftwo \vee \fthree}_{X} = 
\model{\fone}_{X}.
$$

\par $\fone = \ftwo \wedge \fthree$. Equivalent to the previous case.

\par $\fone = \BOX^{q}_{a}\ftwo$ and $\fone = \DIA^{q}_{a}\ftwo$. By Lemma~\ref{lemma:fundamental}.
\end{itemize}
\end{proof}

\begin{lemma}\label{existentialPreservation}
For each sequent of $\GPPL$, if it is valid, then it has a valid
$\Dec$-normal form.
\end{lemma}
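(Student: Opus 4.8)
The plan is to mirror the proof of the $\PPLc$ counterpart, Lemma~\ref{ExVal}, adapting it to named sequents and to the more elaborate counting rules of $\GPPL$. I would fix a valid named sequent $\vdash^{X}\lone$ and distinguish two cases according to whether any $\dec$-reduction applies to it. If no $\dec$-step applies, then Lemma~\ref{CPLNormalRegular} gives that the sequent is empty or regular, hence already $\Dec$-normal, and being valid by hypothesis it is its own valid $\Dec$-normal form. If instead $\vdash^{X}\lone$ is $\Dec$-reducible, then, since $\Dec$ is strongly normalizing by Lemma~\ref{theorem:normalizing}, it suffices to show that a single reduction step existentially preserves validity: whenever $\vDash^{X}\lone$ there is a $\dec$-step $\vdash^{X}\lone\dec\ssetO$ with every sequent of $\ssetO$ valid; iterating this and invoking strong normalization then produces a valid $\Dec$-normal form. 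This reduces the statement to a case analysis on the shape of the labelled formula $\lone$.

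For the propositional connective cases — $\bone\Era\neg\fone$, $\bone\Ela\neg\fone$, $\bone\Era\fone\vee\ftwo$, $\bone\Ela\fone\vee\ftwo$, $\bone\Era\fone\wedge\ftwo$, $\bone\Ela\fone\wedge\ftwo$ — I would copy, mutatis mutandis, the argument from the proof of Lemma~\ref{ExVal}: replace $\bone$ by a named Boolean formula realizing the interpretation of the relevant immediate subformula (obtained from Lemma~\ref{lemma:bone}), check that the corresponding external hypothesis ($\bone\vDash^{X}\neg\btwo$, $\bone\vDash^{X}\btwo\vee\bthree$, and so on) holds, and verify $X$-validity of the resulting components by elementary set theory. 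The two $\mu$-clauses ($\vdash^{X}\bone\Era\fone\dec\{\}$ when $\mu(\model{\bone}_{X})=0$, and dually for $1$) are immediate, since the empty set of sequents is vacuously valid; and I would note that a valid sequent $\vdash^{X}\bone\Era\DIA^{0}_{a}\fone$ forces $\mu(\model{\bone}_{X})=0$, so it does fall into a $\mu$-clause (and similarly for $\bone\Ela\BOX^{0}_{a}\fone$).

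The real work is in the counting-quantifier cases, which have no analogue in $\PPLc$. Consider $\lone=\bone\Era\BOX^{q}_{a}\fone$ with $\vDash^{X}\bone\Era\BOX^{q}_{a}\fone$, i.e.\ $\model{\bone}_{X}\subseteq\model{\BOX^{q}_{a}\fone}_{X}$. Using Lemma~\ref{lemma:bone} I would choose a named Boolean formula $\bone_{\fone}$ with $\model{\fone}_{X\cup\{a\}}=\model{\bone_{\fone}}_{X\cup\{a\}}$, and fix an $a$-decomposition $\bone_{\fone}\equiv\bigvee_{i}\bthree_{i}\wedge\bfour_{i}$, which exists by Lemma~\ref{BoolDec} with the required pairwise disjointness of the $\model{\bfour_{i}}_{X}$. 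The Fundamental Lemma~\ref{lemma:fundamental} then yields $\{f\mid\mu(\PROJ_{f}(\model{\bone_{\fone}}_{X\cup\{a\}}))\geq q\}=\model{\bigvee\{\bfour_{i}\mid\mu(\model{\bthree_{i}}_{\{a\}})\geq q\}}_{X}$, and by Definition~\ref{PPLsem} the left-hand set is precisely $\model{\BOX^{q}_{a}\fone}_{X}$. Hence the hypothesis $\model{\bone}_{X}\subseteq\model{\BOX^{q}_{a}\fone}_{X}$ is exactly the side condition $\bone\vDash^{X}\bigvee\{\bfour_{i}\mid\mu(\model{\bthree_{i}}_{\{a\}})\geq q\}$ needed to fire the $\dec$-clause for $\BOX$, producing $\vdash^{X}\bone\Era\BOX^{q}_{a}\fone\dec\{\vdash^{X\cup\{a\}}\bone_{\fone}\Era\fone\}$, and $\model{\bone_{\fone}}_{X\cup\{a\}}=\model{\fone}_{X\cup\{a\}}$ makes this component $X\cup\{a\}$-valid. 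The cases $\bone\Ela\BOX^{q}_{a}\fone$, $\bone\Era\DIA^{q}_{a}\fone$, $\bone\Ela\DIA^{q}_{a}\fone$ I would handle symmetrically, using the second equality of Lemma~\ref{lemma:fundamental} for the $\DIA$ cases.

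The hard part will be the bookkeeping in these counting cases: keeping the ambient name set $X$ carefully distinguished from $X\cup\{a\}$, verifying that the chosen $a$-decomposition really satisfies the disjointness clause, and applying the Fundamental Lemma with the correct threshold $q$. Once this setup is in place, each step amounts to a direct translation of a semantic inclusion into a syntactic side condition, exactly as in the $\PPLc$ case, and the lemma then follows from the strong normalization of $\Dec$.
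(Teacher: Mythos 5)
Your proposal is correct and follows essentially the same route as the paper's proof: the same two-case split (already $\Dec$-normal via Lemma~\ref{CPLNormalRegular} versus reducible plus strong normalization), the same single-step preservation argument copied from the $\PPLc$ case for the propositional connectives, and the same use of Lemma~\ref{lemma:bone} together with an $a$-decomposition and the Fundamental Lemma to turn the semantic inclusion $\model{\bone}_{X}\subseteq\model{\BOX^{q}_{a}\fone}_{X}$ into the side condition of the counting $\dec$-clause. The only cosmetic difference is your mention of $\mu$-clauses, which the paper's $\dec$ for $\GPPL$ does not include (and does not need, since the $q=0$ cases are absorbed by the empty disjunction in the counting clauses).
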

\begin{proof}
The proof is by analytic inspection of all possible cases.
Given an arbitrary single-succedent sequent
$\seqO = \ \vdash^{X}\lone$, there are two main cases
to be taken into account:
\begin{itemize}
\item Let $\seqO$ be such that no $\dec$-decomposition 
reduction can be applied on it. Thus, since the sequent is $\dec$-normal, it is either empty or regular, by Lemma~\ref{CPLNormalRegular}. 
Thus, it is already a $\Dec$-normal form, and for hypothesis it is valid.

\item Let $\seqO$ be $\dec$-reducible. 
By Theorem~\ref{theorem:normalizing}, $\Dec$
is strongly normalizing, 
so no infinite reduction sequence is possible.
Consequently, it is enough to prove that each reduction
step existentially preserves validity, which is that
for each possible $\Dec$ based on $\dec$-reduction,
if $\seqO$ is valid, then there is a set of sequents
$\{\seqO_{1},\dots,\seqO_{m}\}$, such that 
$\seqO \dec \{\seqO_{1},\dots,\seqO_{m}\}$
and $\{\seqO_{1},\dots,\seqO_{m}\}$ is valid.
The proof consists in taking into account all possible forms
of $\dec$, on which the reduction step can be based.
Since these cases are very similar to the corresponding
one considered for $\PPLc$, just a few examples will be
taken into account.

\par
$\lone=\bone\Ela \neg\fone$. Assume
$\seqO= \vdash^{X}\bone\Ela\neg\fone$.
Let us show that there is a well-defined $\dec$-reduction
of $\seqO$ of the following form such that the reduced set
is valid.
For hypothesis $\vdash^{X}\bone\Ela\neg\fone$ is valid,
which is $\bone\Ela\neg\fone$ is $X$-valid.
Let us consider a Boolean expression $\btwo=\neg\bone$.
Then, 
$$
\model{\neg\btwo} = (\twoOm)^{X} \ – \ \model{\btwo}_{X}
= (\twoOm)^{X} \ – \ \big((\twoOm)^{X}–\model{\bone}_{X}\big) 
= \model{\bone}_{X}
$$ 
and in particular
$\model{\neg\btwo}_{X}\subseteq\model{\bone}_{X}$,
which is $\neg\btwo\vDash^{X}\bone$.
Consequently, the following $\dec$-reduction is well-defined,
 $$
 \vdash^{X}\bone\Ela\neg\fone \ \ \dec \ \ 
 \{\vdash^{X}\btwo\Era\fone\}.
 $$
Since $\bone\Ela \neg\fone$ is $X$-valid, also 
$\model{\neg\fone}_{X}\subseteq\model{\bone}_{X}$.
Thus, for basic set theory, 
$\model{\neg\bone}_{X}\subseteq\model{\fone}_{X}$.
But for construction $\btwo=\neg\bone$, so also
$\model{\btwo}_{X}\subseteq\model{\fone}_{X}$,
which is $\btwo\Era\fone$ is $X$-valid.
Consequently, $\vdash^{X}\bone\Era\fone$ is 
valid as well.

\par
$\lone=\bone\Era\BOX^{q}_{a}\fone$.
Assume 
$\seqO = \ \vdash^{X}\bone\Era\BOX^{q}_{a}\fone$.
Let us show that there is a well-defined $\dec$-reduction
of $\seqO$ of the following form such that
the reduced set is valid.
For hypothesis,
$\vdash^{X}\bone\Era\BOX^{q}_{a}\fone$ is valid,
which is $\bone\Era\BOX^{q}_{a}\fone$ is
$X$-valid. 
For Lemma~\ref{lemma:bone},
$\model{\fone}_{X\cup\{a\}}=\bone_{\fone}$ and,
by Lemma~\ref{lemma:fundamental},
there is an $a$-decomposition of 
$\bone_{\fone}:\btwo = 
\bigvee\bthree_{i}\wedge\bfour_{i}$.
Furthermore,
$\model{\BOX^{q}_{a}\fone}_{X}$ = 
$\{f\in(\twoOm){X}$ $|$
$\mu\big(\Pi_{f}(\model{\fone}_{X\cup\{a\}})\big)\ge q\}$
and $\{f \in (\twoOm)^{X}$ $|$
$\mu\big(\Pi_{f}(\model{\bone_{\fone}}_{X\cup\{a\}})\big)\ge q\}=\bigcup_{i}\{\model{\bfour_{i}}_{X}$ $|$ 
$\mu(\model{\bthree}_{\{a\}})\ge q\}$.
Therefore, $\model{\BOX^{q}_{a}\fone}_{X}$ = 
$\bigcup_{i}\{\model{\bfour_{i}}_{X}$ $|$ $\mu(\model{\bthree_{i}}_{\{a\}})\ge q\}$,
the following $\dec$-decomposition 
is well-defined:
$$
\vdash^{X}\bone\Era\BOX^{q}_{a}\fone \ \ \dec \ \ \{\vdash^{X\cup\{a\}}\btwo\Era\fone\}.
$$
For construction $\btwo$ is an $a$-decomposition
of $\bone_{\fone}$ so $\model{\bone_{\fone}}_{X\cup\{a\}}$
= $\model{\btwo}_{X\cup\{a\}}$.
In particular, $\model{\btwo}_{X\cup\{a\}}\subseteq\model{\fone}_{X\cup\{a\}}$,
which is $\btwo\Era\fone$ is $X\cup\{a\}$-valid.
Then, $\{\vdash^{X\cup\{a\}}\btwo\Era\fone\}$ is valid
as well.
\end{itemize}
\end{proof}

Finally, it is proved that derivability in $\GPPL$ is
preserved in the following sense.
\begin{lemma}\label{lemma:derivability}
Given the set of sequents $\ssetO,\ssetO'$, 
if $\ssetO \Dec \ssetO'$ and $\ssetO'$ is derivable in $\GPPL$, 
then $\ssetO$ is derivable in $\GPPL$ as well. 
\end{lemma}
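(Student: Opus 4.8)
The plan is to mirror, almost verbatim, the proof of Lemma~\ref{DerPres} for $\GsCPLc$, now carried out for $\GPPL$. Suppose $\ssetO \Dec \ssetO'$. By the definition of set-decomposition there is a distinguished ``active'' sequent $\vdash^{X}\lone \in \ssetO$ together with a one-step decomposition $\vdash^{X}\lone \dec \{\vdash^{X_{1}}\lone_{1},\dots,\vdash^{X_{m}}\lone_{m}\}$ such that $\ssetO' = (\ssetO\setminus\{\vdash^{X}\lone\})\cup\{\vdash^{X_{1}}\lone_{1},\dots,\vdash^{X_{m}}\lone_{m}\}$. Since by hypothesis every sequent of $\ssetO'$ is derivable in $\GPPL$, in particular all the $\vdash^{X_{i}}\lone_{i}$ are derivable, say by derivations $\mathcal{D}_{1},\dots,\mathcal{D}_{m}$, and all sequents of $\ssetO$ other than the active one are derivable too. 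Hence it suffices to exhibit a derivation of the active sequent $\vdash^{X}\lone$ from $\mathcal{D}_{1},\dots,\mathcal{D}_{m}$, and then reassemble $\ssetO$.

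This is done by a case analysis on which clause of $\dec$ was used, the point being that each clause is precisely the ``read upwards'' version of one of the rules of $\GPPL$ in Fig.~\ref{fig:PPL}, with matching external hypotheses. Concretely: a $\dec$-step on $\bone\Era\neg\fone$ (resp.\ $\bone\Ela\neg\fone$) is undone by $\Rnra$ (resp.\ $\Rnla$), re-using the side condition $\bone\vDash^{X}\neg\btwo$ (resp.\ $\neg\btwo\vDash^{X}\bone$) stored by the clause; a $\dec$-step on a disjunction or conjunction is undone by the matching rule among $\Rcup$, $\mathsf{R1}^{\Era}_{\vee}$/$\mathsf{R2}^{\Era}_{\vee}$, $\Rvla$, $\Rwra$, $\mathsf{R1}^{\Ela}_{\wedge}$/$\mathsf{R2}^{\Ela}_{\wedge}$, $\Rcap$, again re-using the inclusion hypothesis of the clause; and a $\dec$-step on $\bone\Era\BOX^{q}_{a}\fone$, $\bone\Ela\BOX^{q}_{a}\fone$, $\bone\Era\DIA^{q}_{a}\fone$, $\bone\Ela\DIA^{q}_{a}\fone$ is undone by $\Rbr$, $\Rbl$, $\Rdr$, $\Rdl$ respectively, passing from the name set $X\cup\{a\}$ of the premiss $\vdash^{X\cup\{a\}}\btwo\Era\fone$ back to $X$, where the required side condition $\bone\vDash^{X}\bigvee_{i}\{\bfour_{i}\mid\mu(\model{\bthree_{i}}_{\{a\}})\triangleright q\}$ is exactly the one stored by the clause, with $\bigvee_{i}\bfour_{i}\wedge\bthree_{i}$ the chosen $a$-decomposition of $\btwo$. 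If the $\mu$-clauses $\vdash^{X}\bone\Era\fone\dec\{\}$ (when $\mu(\model{\bone}_{X})=0$) and $\vdash^{X}\bone\Ela\fone\dec\{\}$ (when $\mu(\model{\bone}_{X})=1$) are included as well, these need no input derivation and are closed directly by $\Rmur$ and $\Rmul$.

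The routine part is just writing out, for each clause, the one- or two-premiss $\GPPL$-derivation obtained by stacking the relevant $\mathcal{D}_{i}$'s over the appropriate rule instance; the propositional cases are entirely mechanical, the external hypotheses transferring verbatim. The only mildly delicate point is the bookkeeping in the counting cases: one must check that the $a$-decomposition fixed by the $\dec$-clause is an admissible choice for the $\Rbr/\Rbl/\Rdr/\Rdl$ side condition (it is, by the Fundamental Lemma, Lemma~\ref{lemma:fundamental}), that the free-name conditions $\FN(\bone),\FN(\fone)\subseteq X$ and $\FN(\btwo)\subseteq X\cup\{a\}$ are respected by the decomposition, and — exactly as noted for $\PPLc$ — that the degenerate $\DIA^{0}/\BOX^{0}$ situations that would decompose to $\{\vdash\bot\}$ cannot arise on a sequent assumed derivable, so no extra case appears. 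Once every clause has been treated, $\vdash^{X}\lone$ is derivable in $\GPPL$, hence so is $\ssetO$; combined with Lemma~\ref{existentialPreservation} and Corollary~\ref{cor:NormalValidDerivable} this delivers completeness of $\GPPL$ just as in the univariate case.
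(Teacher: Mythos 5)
Your proposal is correct and follows essentially the same route as the paper's own proof: identify the active sequent of the $\Dec$-step, case-split on the $\dec$-clause used, and rebuild a $\GPPL$-derivation of that sequent by stacking the given derivations of the decomposed sequents under the matching rule ($\Rnra/\Rnla$, the disjunction/conjunction rules combined with $\Rcup/\Rcap$, and $\Rbr/\Rbl/\Rdr/\Rdl$ with the $a$-decomposition side condition carried over verbatim). The only cosmetic difference is that you hedge about $\mu$-clauses and degenerate $\DIA^{0}/\BOX^{0}$ cases, which in fact do not appear in the multivariate $\dec$ definition, so no extra cases arise.
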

\begin{proof}
For hypothesis $\ssetO \Dec \ssetO'$, which is, for some
$\seqO\in\ssetO$, there is a $\dec$-decomposition
$$
\seqO \ \ \dec \ \ \{\seqO_{1},\dots,\seqO_{m}\}
$$
on which the considered $\dec$ is based, so that
$\seqO_{1},\dots,\seqO_{m}\in \ssetO'$.
Since the number of possible $\dec$ is finite, the proof
is obtained by considering all forms of reduction
applicable to the set.

$\lone=\bone\Era\neg\fone$. 
Assume $\seqO=\vdash^{X}\bone\Era\neg\fone$
and that the considered $\dec$ is based on the following
$\dec$-decomposition
$$
\vdash^{X}\bone\Era\neg\fone \ \ \dec \ \ \{\vdash^{X}\btwo\Ela\fone\}
$$
for some $\btwo$ such that $\bone\vDash^{X}\neg\btwo$.
For hypothesis $\ssetO'$ is derivable, so each
of its sequents is.
Thus, in this case, $\vdash^{X}\btwo\Ela\fone$ ($\in\ssetO'$)
is derivable.
Given that $\bone\vDash^{X}\neg\btwo$,
it is easily possible to derive $\vdash^{X}\bone\Era\neg\fone$
by applying $\Rnra$ as follows:
\begin{prooftree}
\AxiomC{$\vdots$}
\noLine
\UnaryInfC{$\vdash^{X}\btwo\Ela\fone$}
\AxiomC{$\bone\vDash^{X}\neg\btwo$}
\RightLabel{$\Rnra$}
\BinaryInfC{$\vdash^{X}\bone\Era\neg\fone$}
\end{prooftree}

$\lone=\bone\Ela\neg\fone$. Assume $\seqO=\vdash^{X}\bone\Ela\neg\fone$
and that the considered $\Dec$ is based on the following
$\dec$-decomposition
$$
\vdash^{X}\bone\Ela\neg\fone \ \ \dec \ \ \{\vdash^{X}\btwo\Era\fone\}
$$
for some $\btwo$ such that $\neg\btwo\vDash^{X}\bone$.
For hypothesis $\ssetO'$ is derivable, so each of its
sequents is.
Thus, $\vdash^{X}\btwo\Era\fone$ is derivable and,
since $\neg\btwo\vDash^{X}\bone$, it is easily
possible to derive $\vdash^{X}\bone\Era\neg\fone$
by applying $\Rnla$ as follows:
 \begin{prooftree}
 \AxiomC{$\vdots$}
 \noLine
 \UnaryInfC{$\vdash^{X}\btwo\Era\fone$}
 \AxiomC{$\neg\btwo\vDash^{X}\bone$}
 \RightLabel{$\Rnla$}
 \BinaryInfC{$\vdash^{X}\bone\Ela\neg\fone$}
 \end{prooftree}

 $\lone=\bone\Era\fone\vee\ftwo$.
 Assume $\seqO=\vdash^{X}\bone\Era\fone\vee\ftwo$
 and that the considered $\Dec$ is based on the following
 $\dec$-decomposition
 $$
 \vdash^{X}\bone\Era\fone\vee\ftwo \ \ \dec \ \ \{\vdash^{X}\btwo\Era\fone, \vdash^{X}\bthree\Era\ftwo\}
 $$
 where $\btwo$ and $\bthree$ are two Boolean formulas
 such that $\bone\vDash^{X}\btwo\vee\bthree$.
 For hypothesis $\ssetO'$ is derivable, so each of its 
 sequents is.
 Therefore, both $\vdash^{X}\btwo\Era\fone$ ($\in\ssetO'$)
 and $\vdash^{X}\bthree\Era\fone$ ($\in\ssetO'$)
 are derivable and, since $\bone\vDash^{X}\btwo\vee\bthree$,
 then also $\vdash^{X}\bone\Era\fone\vee\ftwo$ is
 derivable in $\GPPL$ in the following way:
 \begin{prooftree}
 \AxiomC{$\vdots$}
 \noLine
 \UnaryInfC{$\vdash^{X}\btwo\Era\fone$}
 \RightLabel{$\mathsf{R1}_{\vee}^{\Era}$}
 \UnaryInfC{$\vdash^{X}\btwo\Era\fone\vee\ftwo$}
 \AxiomC{$\vdots$}
 \noLine
 \UnaryInfC{$\vdash^{X}\bthree\Era\ftwo$}
 \RightLabel{$\mathsf{R2}_{\vee}^{\Era}$}
 \UnaryInfC{$\vdash^{X}\bthree\Era\fone\vee\ftwo$}
 \AxiomC{$\bone\vDash^{X}\btwo\vee\bthree$}
 \RightLabel{$\Rcup$}
 \TrinaryInfC{$\vdash^{X}\btwo\Era\fone\vee\ftwo$}
 \end{prooftree}

$\lone=\bone\Ela\fone\vee\ftwo$.
Assume $\seqO=\vdash^{X}\bone\Ela\fone\vee\ftwo$
and that the given $\Dec$ step is based on the following
$\dec$-decomposition:
$$
\vdash^{X}\bone\Ela\fone\vee\ftwo \ \ \dec \ \ \{\vdash^{X}\bone\Ela\fone,\vdash^{X}\bone\Ela\ftwo\}.
$$
For hypothesis, $\ssetO'$ is derivable, so each of its sequent is.
Therefore, both $\vdash^{X}\bone\Ela\fone$ ($\in \ssetO'$) 
and $\vdash^{X}\bone\Ela\ftwo$ ($\in \ssetO'$)
are derivable as well.
Then, $\vdash^{X}\bone\Ela\fone\vee\ftwo$ is easily shown to
be derivable by applying $\Rvla$ as follows:
\begin{prooftree}
\AxiomC{$\vdots$}
\noLine
\UnaryInfC{$\vdash^{X}\bone\Ela\fone$}
\AxiomC{$\vdots$}
\noLine
\UnaryInfC{$\vdash^{X}\bone\Ela\ftwo$}
\RightLabel{$\Rvla$}
\BinaryInfC{$\vdash^{X}\bone\Ela\fone\vee\ftwo$}
\end{prooftree}

$\lone = \bone\Era\fone \wedge \ftwo$.
Assume $\seqO=\vdash^{X}\bone\Era\fone\wedge\ftwo$
and that the considered $\Dec$ is based on the following
$\dec$-decomposition:
$$
\vdash^{X}\bone\Era\fone\wedge \ftwo \ \ \dec \ \ \{\vdash^{X}\bone\Era\fone, \vdash^{X}\bone\Era\ftwo\}.
$$
For hypothesis, $\ssetO'$ is derivable, so each of its sequents is.
Therefore, both $\vdash^{X}\bone\Era\fone$ ($\in\ssetO'$)
and $\vdash^{X}\bone\Era\ftwo$ ($\in\ssetO'$) are derivable as well.
Then, $\vdash^{X}\bone\Era\fone\wedge\ftwo$ is
easily shown to be derivable by applying $\Rwra$ as follows:
\begin{prooftree}
\AxiomC{$\vdots$}
\noLine
\UnaryInfC{$\vdash^{X}\bone\Era\fone$}
\AxiomC{$\vdots$}
\noLine
\UnaryInfC{$\vdash^{X}\bone\Era\ftwo$}
\RightLabel{$\Rwra$}
\BinaryInfC{$\vdash^{X}\bone\Era\fone\wedge\ftwo$}
\end{prooftree}

$\lone=\bone\Ela\fone\wedge\ftwo$. 
Assume $\seqO=\vdash^{X}\bone\Ela\fone\wedge\ftwo$
and that the given $\Dec$ is based on the following
$\dec$-decomposition
$$
\vdash^{X}\bone\Ela\fone\wedge\ftwo \ \ \dec \ \ \{\vdash^{X}\btwo\Ela\fone, \vdash^{X}\bthree\Ela\ftwo\}
$$
where $\btwo$ and $\bthree$ and two Boolean expressions such
that $\btwo\wedge\bthree \vDash^{X}\bone$.
For hypothesis $\ssetO'$ is derivable, so each of its sequents is.
Therefore, both $\vdash^{X}\btwo\Ela\fone$ ($\in\ssetO'$)
and $\vdash^{X}\bthree\Ela\fone$ ($\in\ssetO'$) are derivable
and, since $\bone\vDash^{X}\btwo\wedge\bthree$, then
also $\vdash^{X}\bone\Ela\fone\wedge\ftwo$ is derivable
in $\GPPL$ in the following way:
\begin{prooftree}
\AxiomC{$\vdots$}
\noLine
\UnaryInfC{$\vdash^{X}\btwo\Ela\fone$}
\RightLabel{$\mathsf{R1}_{\wedge}^{\Ela}$}
\UnaryInfC{$\vdash^{X}\btwo\Ela\fone\wedge\ftwo$}
\AxiomC{$\vdots$}
\noLine
\UnaryInfC{$\vdash^{X}\bthree\Ela\ftwo$}
\RightLabel{$\mathsf{R2}_{\wedge}^{\Ela}$}
\UnaryInfC{$\vdash^{X}\bthree\Ela\fone\wedge\ftwo$}
\AxiomC{$\btwo\wedge\bthree\vDash^{X}\bone$}
\RightLabel{$\Rcap$}
\TrinaryInfC{$\vdash^{X}\bone\Ela \fone\wedge\ftwo$}
\end{prooftree}

$\lone=\bone\Era\BOX^{q}_{a}\fone$.
Assume $\seqO=\vdash^{X}\bone\Era\BOX^{q}_{a}\fone$
and that the considered $\Dec$ is based on the following
$\dec$-decomposition
$$
\vdash^{X}\bone\Era\BOX^{q}_{a}\fone \ \ \dec \ \ \{\vdash^{X\cup\{a\}}\btwo\Era\fone\}
$$
for some $\btwo$ such that 
$\bigvee \bfour_{i}\wedge\bthree_{i}$
is an $a$-decomposition of $\btwo$ and that
$\btwo\vDash^{X}\bigvee\{\bfour_{i}$ $|$ 
$\mu(\model{\bthree_{i}})\ge q\}$.
For hypothesis $\ssetO'$ is derivable, so also
$\vdash^{X\cup\{a\}} \btwo\Era\fone$ ($\in\ssetO'$)
is and, for the given definition of
$\btwo$, $\vdash^{X}\btwo\Era\BOX^{q}_{a}\fone$
is derived by $\Rbr$ as follows:
\begin{prooftree}
\AxiomC{$\vdots$}
\noLine
\UnaryInfC{$\vdash^{X\cup\{a\}}\btwo\Era\fone$}
\AxiomC{$\bone\vDash^{X}\bigvee\{\bfour_{i} \ | \ \mu\big(\model{\bthree_{i}}_{\{a\}}\big)\ge q\}$}
\RightLabel{$\Rbr$}
\BinaryInfC{$\vdash^{X}\bone\Era\BOX^{q}_{a}\fone$}
\end{prooftree}

$\lone=\bone\Ela\BOX^{q}_{a}\fone$.
Assume $\seqO=\vdash^{X}\bone\Ela\BOX^{q}_{a}\fone$
and that the considered $\Dec$ is based on the following
$\dec$-decomposition
$$
\vdash^{X}\bone\Ela\BOX^{q}_{a}\fone \ \ \dec \ \ \{\vdash^{X\cup\{a\}}\btwo\Ela\fone\}
$$
for some $\btwo$ such that $\bigvee\bfour_{i}\wedge\bthree_{i}$
is an $a$-decomposition of $\btwo$ and
that $\bigvee\{\bfour_{i}$ $|$ $\mu(\model{\bthree_{i}})\} \vDash^{X}\bone$. 
For hypothesis, $\ssetO'$ is derivable, so also
$\vdash^{X\cup\{a\}}\btwo\Ela\fone$ ($\in\ssetO'$)
is and, for the given definition of $\btwo$,
$\vdash^{X}\btwo\Ela\BOX^{q}_{a}\fone$
is derived by $\Rbl$ as follows:
\begin{prooftree}
\AxiomC{$\vdots$}
\noLine
\UnaryInfC{$\vdash^{X\cup\{a\}}\btwo\Ela\fone$}
\AxiomC{$\bigvee\{\bfour_{i} \ | \ \mu\big(\model{\bthree_{i}}_{\{a\}}\big)\vDash^{X}\bone$}
\RightLabel{$\Rbr$}
\BinaryInfC{$\vdash^{X}\bone\Ela\BOX^{q}_{a}\fone$}
\end{prooftree}

$\lone=\bone\Era\DIA^{q}_{a}\fone$.
Assume $\seqO=\vdash^{X}\bone\Era\DIA^{q}_{a}\fone$
and that the considered $\Dec$ is based on the following
$\dec$-decomposition
$$
\vdash^{X}\bone\Era\DIA^{q}_{a}\fone \ \ \dec \ \ \{\vdash^{X\cup\{a\}} \btwo \Ela\fone\}
$$
for some $\btwo$ such that $\bigvee\bfour_{i}\wedge\bthree_{i}$
is an $a$-decomposition of $\btwo$ and that
$\btwo\vDash^{X}\bigvee\{\bfour_{i} \ | \ \mu(\model{\bthree_{i}})<q\}$.
For hypothesis, $\ssetO'$ is derivable, so
also $\vdash^{X\cup\{a\}}\btwo\Ela\fone$ ($\in\ssetO'$)
is and, for the given definition of
$\btwo$, $\vdash^{X}\btwo\Era\DIA^{q}_{a}\fone$ is derivable
by $\Rdr$ as follows:
\begin{prooftree}
\AxiomC{$\vdots$}
\noLine
\UnaryInfC{$\vdash^{X\cup\{a\}}\btwo\Ela\fone$}
\AxiomC{$\bone\vDash^{X}\bigvee\{\bfour_{i} \ | \ \mu\big(\model{\bthree_{i}}_{\{a\}}\big)<q$}
\RightLabel{$\Rdr$}
\BinaryInfC{$\vdash^{X}\bone\Era\DIA^{q}_{a}\fone$}
\end{prooftree}

$\lone=\bone\Ela\DIA^{q}_{a}\fone$.
Assume $\seqO=\vdash^{X}\bone\Ela\DIA^{q}_{a}\fone$
and that the considered $\Dec$ is based on the following
$\dec$-decomposition
$$
\vdash^{X}\bone\Ela\DIA^{q}_{a}\fone \ \ \dec \ \ \{\vdash^{X\cup\{a\}}\btwo\Era\fone\}
$$
for some $\btwo$ such that $\bigvee\bfour_{i}\wedge\bthree_{i}$
is an $a$-decomposition of $\btwo$
and that $\bigvee\{\bfour_{i} \ | \ \mu(\model{\bthree_{i}})<q\}\vDash^{X}\bone$.
For hypothesis, $\ssetO'$ is derivable, so also
$\vdash^{X\cup\{a\}}\btwo\Era\fone$ ($\in\ssetO'$)
is and, for the given definition of $\btwo$,
$\vdash^{X}\btwo\Ela\DIA^{q}_{a}\fone$
is derived by $\Rdl$ as follows:
\begin{prooftree}
\AxiomC{$\vdots$}
\noLine
\UnaryInfC{$\vdash^{X\cup\{a\}}\btwo\Era\fone$}
\AxiomC{$\bigvee\{\bfour_{i} \ | \ \mu\big(\model{\bthree_{i}}_{\{a\}}\big) \vDash^{X}\bone$}
\RightLabel{$\Rdl$}
\BinaryInfC{$\vdash^{X}\bone\Ela\DIA^{q}_{a}\fone$}
\end{prooftree}
\end{proof}

\begin{theorem}[Completeness of $\GPPL$]\label{CPLcompleteness}
If a sequent of $\GPPL$ is valid, then it is derivable
in $\GPPL$.
\end{theorem}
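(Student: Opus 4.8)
The plan is to assemble the completeness of $\GPPL$ from the three structural facts already established for the decomposition machinery, exactly mirroring the completeness argument for $\GsCPLc$ (Proposition~\ref{completeness}). Concretely, given a valid named sequent $\vdash^{X}\lone$, I would pass to its corresponding singleton set $\ssetO_{0}=\{\vdash^{X}\lone\}$, which is valid by definition, then reduce it via the set-decomposition relation $\Dec$ to a normal form, show the normal form is derivable, and pull derivability back along the reduction.

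First, by Lemma~\ref{theorem:normalizing} the reduction $\Dec$ is strongly normalizing, so every $\Dec$-reduction sequence starting from $\ssetO_{0}$ is finite; combined with Lemma~\ref{existentialPreservation} (existential preservation of validity), there is a finite reduction $\ssetO_{0}\Dec\ssetO_{1}\Dec\cdots\Dec\ssetO_{n}$ with $\ssetO_{n}$ in $\Dec$-normal form and still valid. Here the only delicate point is that all the relevant predicates — validity, derivability, being $\Dec$-normal — are lifted componentwise to finite sets of sequents (as noted after Corollary~\ref{cor:NormalValidDerivable}), and that existential preservation lifts accordingly: at each $\Dec$-step one picks the active sequent, applies the single-sequent statement of Lemma~\ref{existentialPreservation} to it, and leaves the other sequents of the set untouched and hence still valid.

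Second, since $\ssetO_{n}$ is $\Dec$-normal, each of its sequents is $\dec$-normal, hence regular by Lemma~\ref{CPLNormalRegular}; being regular and valid, each such sequent is derivable in $\GPPL$ by Lemma~\ref{CPLRegularValidDerivable} (which itself uses soundness, Theorem~\ref{theorem:soundness}, only in the converse direction, and the Fundamental Lemma~\ref{lemma:fundamental} for the counting cases). Thus $\ssetO_{n}$ is derivable. Finally, I would argue by a backward induction on $n$ along $\ssetO_{0}\Dec\ssetO_{1}\Dec\cdots\Dec\ssetO_{n}$: Lemma~\ref{lemma:derivability} guarantees that if $\ssetO\Dec\ssetO'$ and $\ssetO'$ is derivable then $\ssetO$ is derivable; iterating this $n$ times yields that $\ssetO_{0}=\{\vdash^{X}\lone\}$ is derivable, i.e.\ $\vdash^{X}\lone$ is derivable in $\GPPL$, which is the claim.

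The proof is essentially bookkeeping, since the substantive work is already contained in Lemma~\ref{lemma:fundamental}, Theorem~\ref{theorem:soundness}, and the three preparatory lemmas on $\Dec$. I expect the only genuine subtlety worth stating explicitly to be the lift from single sequents to finite sets of sequents in the existential-preservation step: one must check that reducing the active sequent of a \emph{valid} set again produces a \emph{valid} set. This holds because the non-active sequents are copied verbatim while the active one is handled by Lemma~\ref{existentialPreservation}, but it deserves an explicit remark rather than being left implicit, and it is the step I would present with the most care.
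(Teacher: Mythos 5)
Your proposal is correct and follows essentially the same route as the paper: the paper's proof of Theorem~\ref{CPLcompleteness} is precisely the combination of Lemma~\ref{existentialPreservation}, Corollary~\ref{cor:NormalValidDerivable} (i.e.\ Lemmas~\ref{CPLNormalRegular} and~\ref{CPLRegularValidDerivable}), and Lemma~\ref{lemma:derivability}, iterated along a $\Dec$-normalization sequence exactly as you describe. Your explicit remark on lifting existential preservation of validity from single sequents to finite sets is a reasonable elaboration of a point the paper leaves implicit, but it introduces nothing beyond the paper's argument.
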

\begin{proof}
By combining the given Lemma~\ref{existentialPreservation}, 
Corollary~\ref{cor:NormalValidDerivable}, and
Lemma~\ref{lemma:derivability}.
\end{proof}

\section{Proofs from Section \ref{section5}}\label{appendix5}

\subsection{Prenex Normal Forms}\label{appendix5.1}

\begin{lemma}\label{Lq0}
For every $X$ such that $\Var{\fone} \subseteq X$ and $a \in X$,
it holds that:
$$
\model{\BOX^{0}_{a}\fone}_{X} = (\twoOm)^{X} \qquad \qquad \qquad \qquad \model{\DIA^{0}_{a}\fone}_{X} = \emptyset^{X}.
$$ 
\end{lemma}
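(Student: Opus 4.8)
The plan is to unfold the definition of the semantics for $\BOX^0_a$ and $\DIA^0_a$ directly from Definition~\ref{PPLsem}, and then invoke the fact that the Borel measure is nonnegative. Recall that $\model{\BOX^q_a\fone}_X = \{f \mid \mu(\PROJ_f(\model{\fone}_{X\cup\{a\}})) \geq q\}$ and $\model{\DIA^q_a\fone}_X = \{f \mid \mu(\PROJ_f(\model{\fone}_{X\cup\{a\}})) < q\}$. Setting $q = 0$, the condition $\mu(\PROJ_f(\model{\fone}_{X\cup\{a\}})) \geq 0$ holds for every $f \in (\twoOm)^X$ since measures are always nonnegative, while the condition $\mu(\PROJ_f(\model{\fone}_{X\cup\{a\}})) < 0$ holds for no $f$ at all.

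First I would note that $\model{\fone}_{X\cup\{a\}}$ is measurable (this follows from the discussion after Lemma~\ref{lemma:fundamental}, which establishes that all sets $\model{A}_X$ are measurable), so $\PROJ_f(\model{\fone}_{X\cup\{a\}})$ is a well-defined measurable subset of $(\twoOm)^{\{a\}}$ and its measure $\mu(\PROJ_f(\model{\fone}_{X\cup\{a\}}))$ is a well-defined element of $[0,1]$. Then for the first equality: $\model{\BOX^0_a\fone}_X = \{f \in (\twoOm)^X \mid \mu(\PROJ_f(\model{\fone}_{X\cup\{a\}})) \geq 0\} = (\twoOm)^X$, because the defining predicate is vacuously satisfied. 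For the second: $\model{\DIA^0_a\fone}_X = \{f \in (\twoOm)^X \mid \mu(\PROJ_f(\model{\fone}_{X\cup\{a\}})) < 0\} = \emptyset^X$, because the defining predicate is never satisfied. This mirrors exactly the argument of Lemma~\ref{lemma:box0} for the univariate case $\PPLc$.

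There is essentially no obstacle here — the statement is a one-line unfolding of the definitions, and the only ``nontrivial'' ingredient is the elementary measure-theoretic fact $\mu(S) \geq 0$ for any measurable $S$, together with the well-definedness of the interpretation, which is already granted by the earlier part of the paper. If one wants to be scrupulous, the single point worth spelling out is that the hypothesis $\Var{\fone} \subseteq X$ and $a \notin X$ (stated as $a \in X$ in the lemma, presumably a typo for $a \notin X$, matching the phrasing of Lemma~\ref{lemma:commutations} and Lemma~\ref{lemma:pseudoduality}) ensures $\Var{\fone} \subseteq X \cup \{a\}$, so that $\model{\fone}_{X\cup\{a\}}$ is defined and the recursive clause for $\BOX^0_a$ and $\DIA^0_a$ applies. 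I would therefore present the proof as: unfold Definition~\ref{PPLsem}, observe $\mu \geq 0$, conclude. The whole argument fits in a few lines and needs no case analysis on the structure of $\fone$.
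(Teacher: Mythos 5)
Your proof is correct and follows exactly the paper's own argument: unfold Definition~\ref{PPLsem} with $q=0$ and observe that $\mu(\PROJ_f(\model{\fone}_{X\cup\{a\}})) \geq 0$ always holds while $< 0$ never does. Your side remark that the hypothesis should read $a \notin X$ (matching the surrounding text) rather than $a \in X$ is also a fair catch.
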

\begin{proof}
Let us consider $\model{\BOX^{0}_{a}\fone}_{X} = (\twoOm)^{X}.$ 
Since for every $\PROJ_{f}(\model{\fone}_{X\cup\{a\}})$, also $\mu\big(\PROJ_{f}(\model{\fone}_{X\cup\{a\}})\big) \ge0$ holds. 
So, trivially,
\begin{align*}
\model{\BOX^{0}_{a}\fone}_{X} &= \{f \in (\twoOm)^{X} \ | \ \mu\big(\PROJ_{f}(\model{\fone}_{X\cup\{a\}})\big) \ge 0\} \\
&= (\twoOm)^{X}.
\end{align*}
Let us now consider
$\model{\DIA^{0}_{a}\fone}_{X} = \emptyset^{X}$. 
Since for not $\PROJ_{f}(\model{\fone}_{X\cup\{a\}})$, 
also $\mu\big(\PROJ_{f}(\model{\fone}_{X\cup\{a\}})\big)<0$ holds. 
So, trivially,
\begin{align*}
\model{\DIA^{0}_{a}\fone}_{X} &= \{f \in (\twoOm)^{X} \ | \ \mu\big(\PROJ_{f}(\model{\fone}_{X\cup\{a\}})\big) < 0\} \\
&= \emptyset^{X}.
\end{align*}
\end{proof}
\noindent
Without loss of generality, let us assume $a \not \in \Var{\fone}$.
\begin{corollary}
For every $X$ such that $\Var{\fone} \cup \Var{\ftwo} \subseteq X$ and $a \in X$, it holds that:
\begin{align*}
\fone \wedge \BOX^{0}_{a}\ftwo &\equiv_{X} \fone &&& \fone \vee \BOX^{0}_{a}\ftwo &\equiv_{X} (\twoOm)^{X} \\
\fone \wedge \DIA^{0}_{a}\ftwo &\equiv_{X} \emptyset^{X} &&& \fone \vee \DIA^{0}_{a}\ftwo &\equiv_{X} \fone.
\end{align*}
\end{corollary}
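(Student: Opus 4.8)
The statement to prove is the Corollary following Lemma~\ref{Lq0}: for every $X$ with $\Var{\fone} \cup \Var{\ftwo} \subseteq X$ and $a \in X$ (with $a \notin \Var{\fone}$ by the standing convention), the four equivalences
\begin{align*}
\fone \wedge \BOX^{0}_{a}\ftwo &\equiv_{X} \fone &&& \fone \vee \BOX^{0}_{a}\ftwo &\equiv_{X} (\twoOm)^{X} \\
\fone \wedge \DIA^{0}_{a}\ftwo &\equiv_{X} \emptyset^{X} &&& \fone \vee \DIA^{0}_{a}\ftwo &\equiv_{X} \fone
\end{align*}
hold.

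The plan is to reduce everything directly to Lemma~\ref{Lq0} together with the interpretation clauses for $\wedge$ and $\vee$ in Definition~\ref{PPLsem}. First I would recall that by Lemma~\ref{Lq0} we have $\model{\BOX^{0}_{a}\ftwo}_{X} = (\twoOm)^{X}$ and $\model{\DIA^{0}_{a}\ftwo}_{X} = \emptyset^{X}$; here the hypothesis $a \in X$ (equivalently $a \notin X \setminus \{a\}$, matching the ``$a \not\in X$'' side-condition of the lemma as applied to the set $X \setminus \{a\}$) is exactly what is needed, and $\Var{\ftwo} \subseteq X$ ensures the interpretations are well-defined. Then for the first equivalence, $\model{\fone \wedge \BOX^{0}_{a}\ftwo}_{X} = \model{\fone}_{X} \cap \model{\BOX^{0}_{a}\ftwo}_{X} = \model{\fone}_{X} \cap (\twoOm)^{X} = \model{\fone}_{X}$, using that $\model{\fone}_{X} \subseteq (\twoOm)^{X}$. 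For the second, $\model{\fone \vee \BOX^{0}_{a}\ftwo}_{X} = \model{\fone}_{X} \cup (\twoOm)^{X} = (\twoOm)^{X}$. For the third, $\model{\fone \wedge \DIA^{0}_{a}\ftwo}_{X} = \model{\fone}_{X} \cap \emptyset^{X} = \emptyset^{X}$. For the fourth, $\model{\fone \vee \DIA^{0}_{a}\ftwo}_{X} = \model{\fone}_{X} \cup \emptyset^{X} = \model{\fone}_{X}$. Each line then yields the claimed $\equiv_{X}$ by the definition of $\equiv_{X}$ (equality of interpretations at $X$).

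There is essentially no obstacle here: this is a routine unfolding of definitions, and the only subtlety is the bookkeeping of side-conditions — making sure Lemma~\ref{Lq0} is invoked with the correct name set and that the condition $a \in X$ stated in the Corollary is compatible with the ``$a \notin X$'' hypothesis in Lemma~\ref{Lq0} (the latter being stated relative to the smaller set over which one then adjoins $a$). I would state this compatibility explicitly in one sentence so the reader is not confused by the apparent clash of side-conditions. The proof is short enough that I would simply present all four computations inline, grouped by the $\BOX$ cases and the $\DIA$ cases, each a two- or three-step equality chain, and close with the remark that each equality of sets is precisely the assertion $\equiv_{X}$.
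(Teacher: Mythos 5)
Your proof is correct and follows essentially the same route as the paper's: invoke Lemma~\ref{Lq0} to evaluate $\model{\BOX^{0}_{a}\ftwo}_{X}$ and $\model{\DIA^{0}_{a}\ftwo}_{X}$, then unfold the $\cap$/$\cup$ clauses of Definition~\ref{PPLsem} in four short equality chains. Your explicit remark about reconciling the ``$a\in X$'' versus ``$a\notin X$'' side-conditions is a sensible addition, since the paper itself is not fully consistent on this point between the main-text and appendix statements of Lemma~\ref{Lq0}.
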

\begin{proof}
Let us consider
$\fone \wedge \BOX^{0}_{a}\ftwo \equiv_{X} \fone$. 
Thus,
$$
\model{\fone \wedge \BOX^{0}_{a}\ftwo}_{X} = \model{\fone}_{X}.
$$
By Definition~\ref{PPLsem},
\begin{align*}
\model{\fone \wedge \BOX^{0}_{a}\ftwo}_{X} = \ \ \ &\model{\fone}_{X} \cap \model{\BOX^{0}_{a}\ftwo}_{X} \\
\stackrel{\text{L}~\ref{Lq0}}{=} &\model{\fone}_{X} \cap (\twoOm)^{X} \\
= \ \ \ &\model{\fone}_{X}.
\end{align*}
Let us now consider $\fone \vee \BOX^{0}_{a}\ftwo \equiv_{X} (\twoOm)^{X}$. 
Thus,
$$
\model{\fone \vee \BOX^{0}_{a}\ftwo}_{X} = (\twoOm)^{X}.
$$
Again for Definition~\ref{PPLsem},\footnote{Well-definition is ensured by the fact that $a \not \in X$.}
\begin{align*}
\model{\fone \vee \BOX^{0}_{a}\ftwo}_{X} = \ \ \ &\model{\fone}_{X} \cup \model{\BOX^{0}_{a}\ftwo}_{X} \\
\stackrel{\text{L}~\ref{Lq0}}{=} &\model{\fone}_{X} \cup (\twoOm)^{X} \\
= \ \ \ &(\twoOm)^{X}.
\end{align*}
Let us consider 
$\fone \wedge \DIA^{0}_{a}\ftwo \equiv_{X} \emptyset^{X}$. 
Thus,
$$
\model{\fone \wedge \DIA^{0}_{a}\ftwo}_{X} = \emptyset^{X}.
$$
For Definition~\ref{PPLsem},
\begin{align*}
\model{\fone \wedge \DIA^{0}_{a}\ftwo}_{X} = \ \ \ &\model{\fone}_{X} \cap \model{\DIA^{0}_{a}\ftwo}_{X} \\
\stackrel{\text{L}~\ref{Lq0}}{=} &\model{\fone}_{X} \cap \emptyset^{X} \\
= \ \ \ &\emptyset^{X}.
\end{align*}
Let us finally consider
$\fone \vee \DIA^{0}_{a}\ftwo \equiv_{X} \fone$. 
Thus,
$$
\model{\fone \vee \DIA^{0}_{a}\ftwo}_{X} = \model{\fone}_{X}.
$$
For Definition~\ref{PPLsem},
\begin{align*}
\model{\fone \vee \DIA^{0}_{a}\ftwo}_{X} = \ \ \ &\model{\fone}_{X} \cup \model{\DIA^{0}_{a}\ftwo}_{X} \\
\stackrel{\text{L}~\ref{Lq0}}{=} &\model{\fone}_{X} \cup \emptyset^{X} \\
= \ \ \ &\model{\fone}_{X}.
\end{align*}
\end{proof}

In order to prove Lemma~\ref{lemma:commutations}, some auxiliary lemmas has to be preliminarily considered.

\begin{lemma}\label{L1}
Given $\mathcal{Z} \subseteq (\twoOm)^{X\cup Y}$ and $f \in \mathcal{X} \subseteq (\twoOm)^{X}$, then $\PROJ_{f}(\mathcal{Z}) = \PROJ_{f}(\mathcal{Z} \cap \EXT{X}{Y})$.
\end{lemma}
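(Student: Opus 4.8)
The statement to prove is Lemma~\ref{L1}: for $\mathcal Z \subseteq (\twoOm)^{X\cup Y}$, a subset $\mathcal X \subseteq (\twoOm)^{X}$, and $f \in \mathcal X$, we have $\PROJ_{f}(\mathcal Z) = \PROJ_{f}(\mathcal Z \cap \EXT{X}{Y})$, where $\EXT{\mathcal X}{Y}$ denotes the cylindrical extension of $\mathcal X$ along $Y$, i.e.\ $\{h \in (\twoOm)^{X\cup Y} \mid h{\restriction}_{X} \in \mathcal X\}$. The plan is to unfold both sides directly from the definition of $f$-projection and argue by mutual set inclusion, using only that $f \in \mathcal X$.

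First I would recall that, by the definition of $f$-projection, $\PROJ_{f}(\mathcal Z) = \{g \in (\twoOm)^{Y} \mid f + g \in \mathcal Z\}$, and likewise $\PROJ_{f}(\mathcal Z \cap \EXT{X}{Y}) = \{g \in (\twoOm)^{Y} \mid f + g \in \mathcal Z \text{ and } f+g \in \EXT{X}{Y}\}$. The key elementary observation is that $f + g \in \EXT{X}{Y}$ is \emph{automatic} here: by construction $(f+g){\restriction}_{X} = f$, and $f \in \mathcal X$ by hypothesis, so $f + g$ always lies in the cylindrical extension of $\mathcal X$. Hence the second condition in the right-hand set is vacuous given the standing hypothesis $f \in \mathcal X$.

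The proof then splits into the two inclusions, both trivial: for $(\supseteq)$, if $f+g \in \mathcal Z \cap \EXT{X}{Y}$ then in particular $f + g \in \mathcal Z$, so $g \in \PROJ_{f}(\mathcal Z)$; for $(\subseteq)$, if $g \in \PROJ_{f}(\mathcal Z)$, i.e.\ $f + g \in \mathcal Z$, then since also $f + g \in \EXT{X}{Y}$ (because $(f+g){\restriction}_X = f \in \mathcal X$), we get $f + g \in \mathcal Z \cap \EXT{X}{Y}$, hence $g \in \PROJ_{f}(\mathcal Z \cap \EXT{X}{Y})$.

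There is no real obstacle here; the only thing requiring a moment's care is making explicit that $\EXT{X}{Y}$ stands for the cylinder over $\mathcal X$ and that $(f+g){\restriction}_{X} = f$ holds by the very definition of $f + g$ given before Definition~\ref{PPLsem}. I would write the argument in one short displayed chain of equalities of sets, or equivalently as a two-line mutual-inclusion argument, citing the definition of $f$-projection and the hypothesis $f \in \mathcal X$. This lemma is purely a bookkeeping step, presumably used to reduce projections of arbitrary sets to projections of sets that already ``factor through'' the $X$-coordinates, which is what the commutation lemmas in this subsection need.
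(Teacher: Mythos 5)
Your proof is correct and follows essentially the same route as the paper's: both directions are established by the observation that $f+g$ automatically lies in the cylindrical extension $\EXT{X}{Y}$ because $(f+g){\restriction}_{X}=f\in\mathcal X$, with the reverse inclusion following from monotonicity of $\PROJ_{f}$. No gaps.
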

\begin{proof}
\begin{itemize}
\item[$\subseteq$] If $g \in \PROJ_{f}$($\mathcal{Z}$), then $f + g \in \mathcal{Z}$. Since $f \in \mathcal{X},$ by $\EXT{X}{Y}$ definition, $f + g \in \EXT{X}{Y}$. As a consequence, $f + g \in \mathcal{Z} \cap \EXT{X}{Y}$ and $g \in \PROJ_{f}$\big($\mathcal{Z} \cap \EXT{X}{Y}$\big).
\item[$\supseteq$] Trivial, since the projection operator is monotone.
\end{itemize}
\end{proof}

\begin{lemma}\label{L2}
Let $f \in (\twoOm)^{X}$, 
$\mathcal{Z} \subseteq (\twoOm)^{X\cup Y}$, and 
$\mathcal{X} \subseteq (\twoOm)^{X}$, 
with $X\cap Y = \emptyset$. 
If $\mu\big(\PROJ_{f}(\mathcal{Z} \cap \EXT{\mathcal{X}}{Y})\big) > 0$, then $f \in \mathcal{X}$.
\end{lemma}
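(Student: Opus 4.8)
\textbf{Proof plan for Lemma~\ref{L2}.}
The statement to prove is the following: given $f \in (\twoOm)^{X}$, a set $\mathcal{Z} \subseteq (\twoOm)^{X\cup Y}$, and $\mathcal{X} \subseteq (\twoOm)^{X}$ with $X\cap Y = \emptyset$, if $\mu\big(\PROJ_{f}(\mathcal{Z} \cap \EXT{\mathcal{X}}{Y})\big) > 0$, then $f \in \mathcal{X}$. The plan is to argue by contraposition: assume $f \notin \mathcal{X}$ and show that $\PROJ_{f}(\mathcal{Z}\cap \EXT{\mathcal{X}}{Y})$ is empty, hence has measure $0$.

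First I would unfold the relevant definitions. By the definition of $f$-projection, $\PROJ_{f}(\mathcal{Z}\cap \EXT{\mathcal{X}}{Y}) = \{ g \in (\twoOm)^{Y} \mid f + g \in \mathcal{Z}\cap \EXT{\mathcal{X}}{Y}\}$. Now recall that $\EXT{\mathcal{X}}{Y}$ is the cylinder $\{h \in (\twoOm)^{X\cup Y}\mid h\restriction_{X}\in \mathcal{X}\}$ (the ``extension of $\mathcal{X}$ along $Y$'', used implicitly in Lemma~\ref{L1}). For any $g\in (\twoOm)^{Y}$, the restriction of $f+g$ to $X$ is exactly $f$, by the definition of $f+g$ as the function agreeing with $f$ on $X$ and with $g$ on $Y$. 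Hence $f+g\in \EXT{\mathcal{X}}{Y}$ holds if and only if $f\in \mathcal{X}$.

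Therefore, if $f\notin\mathcal{X}$, then for every $g\in(\twoOm)^{Y}$ we have $f+g\notin\EXT{\mathcal{X}}{Y}$, so $f+g\notin\mathcal{Z}\cap\EXT{\mathcal{X}}{Y}$, which means $\PROJ_{f}(\mathcal{Z}\cap\EXT{\mathcal{X}}{Y})=\emptyset$. Consequently $\mu\big(\PROJ_{f}(\mathcal{Z}\cap\EXT{\mathcal{X}}{Y})\big)=\mu(\emptyset)=0$, contradicting the hypothesis that this measure is strictly positive. Hence $f\in\mathcal{X}$, as desired.

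This argument is almost entirely definitional, so I do not anticipate a genuine obstacle; the only point requiring a little care is to be explicit about the behaviour of the extension/cylinder operator $\EXT{\cdot}{Y}$ under the $+$ operation — namely that membership of $f+g$ in $\EXT{\mathcal{X}}{Y}$ depends only on the $X$-component $f$ — since this is the fact that makes the contrapositive collapse immediately. If one prefers a direct (non-contrapositive) phrasing, one can instead pick, using $\mu(\PROJ_{f}(\cdots))>0$, some $g$ in $\PROJ_{f}(\mathcal{Z}\cap\EXT{\mathcal{X}}{Y})$ (a set of positive measure is nonempty), note $f+g\in\EXT{\mathcal{X}}{Y}$, and read off $f=(f+g)\restriction_{X}\in\mathcal{X}$; this is the version I would actually write down.
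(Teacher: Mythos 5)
Your proof is correct and follows essentially the same route as the paper: argue by contraposition, observe that membership of $f+g$ in $\EXT{\mathcal{X}}{Y}$ depends only on the $X$-component $f$ (since $X\cap Y=\emptyset$), conclude the projection is empty and hence null. The paper's version merely adds a redundant case split on whether $\mathcal{Z}\cap\EXT{\mathcal{X}}{Y}$ is empty; your treatment (and your direct non-contrapositive variant) is, if anything, cleaner.
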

\begin{proof}
By contraposition. Assume $f \not \in \mathcal{X}$. There are two possible cases.
\begin{itemize}
\itemsep0em
\item If $\mathcal{Z} \cap \EXT{X}{Y}$ = $\emptyset$, then
\begin{align*}
\mu(\PROJ_{f}\big(\mathcal{Z}\cap \EXT{X}{Y})\big) &= \mu\big(\PROJ_{f}(\emptyset)\big) \\
&= \mu(\emptyset) \\
&= 0.
\end{align*}
\normalsize
\item Otherwise,
\begin{align*}
\mu\big(\PROJ_{f}(\mathcal{Z} \cap \EXT{X}{Y})\big) &\ = \mu\big(\{g \in (\twoOm)^{Y} \ | \ f + g \in (\mathcal{Z} \cap \EXT{X}{Y})\}\big) \\
&\stackrel{f \not \in \mathcal{X}}{=} \mu(\emptyset) \\
&\ = 0.
\end{align*}
\normalsize
\noindent
Indeed by definition, $\EXT{X}{Y}$ = $\{h + y \in (\twoOm)^{X\cup Y}$ $|$ $h \in \mathcal{X}\}$. So (since $X \cap Y$ = $\emptyset$), if $f \not \in \mathcal{X}$, $\{g \in$ ($\twoOm$)$^{Y}$ $|$ $f + g \in \EXT{X}{Y}\}$ = $\emptyset$. Trivially, also $\{g \in$ ($\twoOm$)$^{Y}$ $|$ $f + g \in \mathcal{Z} \cap \EXT{X}{Y}\}$ = $\emptyset$. 
\end{itemize}
\end{proof}
\noindent
It is possible to prove that, for $a \not \in X, \Var{\fone} \subseteq X,$ and $f \in$ ($\twoOm$)$^{X}$, if $f \in \model{\fone}_{X},$ then $\PROJ_{f}$($\model{\fone}_{X\cup\{a\}}$) = ($\twoOm$)$^{\{a\}}$. Intuitively, since $a \not \in \Var{\fone}, \model{\fone}_{X\cup\{a\}}$ is a set of $f \in$ ($\twoOm$)$^{X\cup\{a\}}$, the characterization of which \emph{does not} rely on $\{a\}$. So, for each $\alpha \in Y$, either $f$($\alpha$) $\in \model{\fone}_{X}$, and $f + g \in \model{\fone}_{X\cup\{a\}}$, or $f$($\alpha$) $\not \in \model{\fone}_{X}$, and $f + g \not \in \model{\fone}_{X\cup\{a\}}$. Therefore if $f \in \model{\fone}_{X}, \PROJ_{f}(\model{\fone}_{X\cup\{a\}}$) = ($\twoOm$)$^{\{a\}}$

\begin{lemma}\label{L3}
Let $a \not \in X, \Var{\fone} \subseteq X,$ and $f \in (\twoOm)^{X}$:
\begin{itemize}
\itemsep0em
\item if $f \in \model{\fone}_{X},$ then $\PROJ_{f}(\model{\fone}_{X\cup\{a\}}) = (\twoOm)^{\{a\}}$
\item if $f \not \in \model{\fone}_{X},$ then $\PROJ_{f}(\model{\fone}_{X\cup\{a\}}) = \emptyset^{\{a\}}$
\end{itemize}
\big(or equivalently, $\model{\fone}_{X\cup\{a\}} = \model{\fone}_{X}^{\Uparrow\{a\}}$\big).
\end{lemma}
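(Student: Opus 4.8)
The plan is to prove the parenthetical reformulation first, namely the cylinder identity
$$
\model{\fone}_{X\cup\{a\}} = \EXT{\model{\fone}_{X}}{\{a\}},
$$
by structural induction on the formula $\fone$ (using that $\Var{\fone}\subseteq X$ and $a\notin X$ force $a\notin\Var{\fone}$, and that $\model{\fone}_{Y}$ is defined for every finite $Y\supseteq\Var{\fone}$, in particular for $X$, $X\cup\{a\}$, and any further enlargement). Once this identity is established, the two bulleted claims follow immediately: for $f\in (\twoOm)^{X}$ one has, by the definition of the $f$-projection,
$\PROJ_{f}\big(\EXT{\model{\fone}_{X}}{\{a\}}\big)=\{g\in(\twoOm)^{\{a\}}\mid f+g\in \EXT{\model{\fone}_{X}}{\{a\}}\}$, which equals $(\twoOm)^{\{a\}}$ when $f\in\model{\fone}_{X}$ and $\emptyset^{\{a\}}$ otherwise, directly from the definition of $\EXT{(\cdot)}{\{a\}}$ (and using that $X\cap\{a\}=\emptyset$).

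\textbf{Base case and propositional connectives.}
If $\fone=\atm{i}{b}$ then $b\in X$ and $b\neq a$, so both $\model{\atm{i}{b}}_{X}$ and $\model{\atm{i}{b}}_{X\cup\{a\}}$ are cut out by the single condition $f(b)(i)=1$, which does not mention $a$; hence $\model{\atm{i}{b}}_{X\cup\{a\}}=\EXT{\model{\atm{i}{b}}_{X}}{\{a\}}$. For $\fone=\lnot\ftwo$, $\fone=\ftwo\land\fthree$, $\fone=\ftwo\lor\fthree$ the inductive step is routine once one observes that $(\cdot)^{\Uparrow\{a\}}$ commutes with relative complementation in $(\twoOm)^{X}$ (since $(\twoOm)^{X\cup\{a\}}=\EXT{(\twoOm)^{X}}{\{a\}}$), with finite intersection, and with finite union; applying the induction hypothesis to the immediate subformulas then closes these cases.

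\textbf{Counting quantifier case — the main obstacle.}
Consider $\fone=\BOX^{q}_{b}\ftwo$ (the $\DIA$ case is identical). By $\alpha$-renaming the bound name we may assume $b\neq a$ and $b\notin X\cup\{a\}$. Applying the induction hypothesis to $\ftwo$ with ambient set $X\cup\{b\}$ and fresh name $a$ gives $\model{\ftwo}_{X\cup\{a,b\}}=\EXT{\model{\ftwo}_{X\cup\{b\}}}{\{a\}}$. The key computation is that for $h=f+e\in(\twoOm)^{X\cup\{a\}}$ (with $f\in(\twoOm)^{X}$, $e\in(\twoOm)^{\{a\}}$) one has
$$
\PROJ_{h}\big(\model{\ftwo}_{X\cup\{a,b\}}\big)=\PROJ_{h}\big(\EXT{\model{\ftwo}_{X\cup\{b\}}}{\{a\}}\big)=\{g\in(\twoOm)^{\{b\}}\mid f+g\in\model{\ftwo}_{X\cup\{b\}}\}=\PROJ_{f}\big(\model{\ftwo}_{X\cup\{b\}}\big),
$$
i.e. the projection only ``sees'' the $X$-part of $h$ and is literally the same set regardless of $e$; in particular its measure is $\mu\big(\PROJ_{f}(\model{\ftwo}_{X\cup\{b\}})\big)$. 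Hence $h\in\model{\BOX^{q}_{b}\ftwo}_{X\cup\{a\}}$ iff $\mu\big(\PROJ_{f}(\model{\ftwo}_{X\cup\{b\}})\big)\geq q$ iff $f\in\model{\BOX^{q}_{b}\ftwo}_{X}$, which is exactly the statement $\model{\BOX^{q}_{b}\ftwo}_{X\cup\{a\}}=\EXT{\model{\BOX^{q}_{b}\ftwo}_{X}}{\{a\}}$. I expect the delicate points here to be the bookkeeping of which ambient name set the induction hypothesis is applied to (it must be $X\cup\{b\}$, not $X$, so that $b$ is available on both sides) and making the $\alpha$-renaming of $b$ away from $a$ explicit; everything else is a direct unfolding of Definition~\ref{PPLsem} and the $f$-projection. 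Lemmas~\ref{L1} and~\ref{L2} can be used as shortcuts in the projection manipulations if desired, but the direct argument above already suffices.
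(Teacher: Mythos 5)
Your proposal is correct, and it is in fact more rigorous than the proof the paper gives. The paper's own argument simply asserts the key independence fact --- ``since $a \not\in \Var{\fone}$, whether $f+g$ belongs to $\model{\fone}_{X\cup\{a\}}$ depends only on $f$'' --- as a two-case observation with no justification, and then unfolds the definition of $\PROJ_{f}$ exactly as you do in your last step. You instead isolate that independence fact as the cylinder identity $\model{\fone}_{X\cup\{a\}}=\EXT{\model{\fone}_{X}}{\{a\}}$ and prove it by structural induction, which is the honest way to do it: the atomic and Boolean cases are indeed routine, but the counting-quantifier case is not entirely obvious, since the two sides involve projections taken over different ambient name sets ($X\cup\{a,b\}$ versus $X\cup\{b\}$), and your computation $\PROJ_{f+e}(\model{\ftwo}_{X\cup\{a,b\}})=\PROJ_{f}(\model{\ftwo}_{X\cup\{b\}})$ is precisely the step the paper glosses over. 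Your bookkeeping (applying the induction hypothesis at ambient set $X\cup\{b\}$, $\alpha$-renaming $b$ away from $a$) is correct. What the paper's version buys is brevity; what yours buys is an actual proof of the claim the paper takes for granted, at the cost of a page of induction. No gaps.
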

\begin{proof}
Since $a \not \in \Var{\fone}$ for every $a \in Y$, there are two possible cases:
\begin{itemize}
\itemsep0em
\item if $f$($\alpha$) $\in \model{\fone}_{X}$, then $f + g \in \model{\fone}_{X\cup\{a\}}$
\item if $f$($\alpha$) $\not \in \model{\fone}_{X}$, then $f + g \not \in \model{\fone}_{X\cup\{a\}}$.
\end{itemize}
Thus, Let $f \in \model{\fone}_{X}$. Then,
\begin{align*}
\PROJ_{f}(\model{\fone}_{X\cup\{a\}}) &= \{g \in (\twoOm)^{\{a\}} \ | \ f + g \in \model{\fone}_{X\cup\{a\}}\} \\
&= \{g \in (\twoOm)^{\{a\}} \ | \ f \in \model{\fone}_{X}\} \\
&= (\twoOm)^{\{a\}}.
\end{align*}
\normalsize
Let $f \not \in \model{\fone}_{X}$. Then,
\begin{align*}
\PROJ_{f}(\model{\fone}_{X\cup\{a\}}) &= \{g \in (\twoOm)^{\{a\}} \ | \ f + g \in \model{\fone}_{X\cup\{a\}}\} \\
&= \{g \in (\twoOm)^{\{a\}} \ | \ f \in \model{\fone}_{X}\} \\
&= \emptyset^{\{a\}}.
\end{align*}
\normalsize
\end{proof}

\begin{corollary}\label{C1}
Let $a \not \in X, \Var{\fone} \subseteq X, f \in (\twoOm)^{X},$ and $q > 0$. Then:
\begin{align*}
\mu\big(\PROJ_{f}(\model{\fone}_{X\cup\{a\}})\big) \ge q &\text{ iff } f \in \model{\fone}_{X} \\
\mu\big(\PROJ_{f}(\model{\fone}_{X\cup\{a\}}) \cap \PROJ_{f}(\model{\ftwo}_{X\cup\{a\}})\big) \ge q \wedge \mu\big(\PROJ_{f}(\model{\ftwo}_{X\cup\{a\}})\big) \ge q &\text{ iff } f \in \model{\fone}_{X} \\
\mu\big(\PROJ_{f}(\model{\fone}_{X\cup\{a\}}) \cup \PROJ_{f}(\model{\ftwo}_{X\cup\{a\}})\big) \ge q \vee \mu\big(\PROJ_{f}(\model{\ftwo}_{X\cup\{a\}})\big) \ge q &\text{ iff } f \in \model{\fone}_{X} \\
\mu\big(\PROJ_{f}(\model{\fone}_{X\cup\{a\}}) \cup \PROJ_{f}(\model{\ftwo}_{X\cup\{a\}})\big) < q \wedge 
\mu\big(\PROJ_{f}(\model{\ftwo}_{X\cup\{a\}})\big) < q &\text{ iff } f \not \in \model{\fone}_{X}
\end{align*}
\end{corollary}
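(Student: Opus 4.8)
The plan is to derive all four biconditionals uniformly from Lemma~\ref{L3} together with the elementary fact that the $f$-projection commutes with finite Boolean operations. First I would record the two ingredients that make everything collapse. Since $a\notin X$ and $\Var{\fone}\subseteq X$, Lemma~\ref{L3} gives that $\PROJ_{f}(\model{\fone}_{X\cup\{a\}})$ equals the whole space $(\twoOm)^{\{a\}}$ when $f\in\model{\fone}_{X}$, and equals $\emptyset^{\{a\}}$ otherwise. And, directly from the definition of $f$-projection, for any $\mathcal X,\mathcal Y\subseteq(\twoOm)^{X\cup\{a\}}$ one has $\PROJ_{f}(\mathcal X\cap\mathcal Y)=\PROJ_{f}(\mathcal X)\cap\PROJ_{f}(\mathcal Y)$ and $\PROJ_{f}(\mathcal X\cup\mathcal Y)=\PROJ_{f}(\mathcal X)\cup\PROJ_{f}(\mathcal Y)$ (the equivalence $g\in\PROJ_{f}(\mathcal X\circ\mathcal Y)\iff f+g\in\mathcal X\circ\mathcal Y$ distributes over $\cap,\cup$ componentwise).

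For the first line I would just compute: by the dichotomy above, $\mu\big(\PROJ_{f}(\model{\fone}_{X\cup\{a\}})\big)$ is $\mu((\twoOm)^{\{a\}})=1$ if $f\in\model{\fone}_{X}$ and $\mu(\emptyset)=0$ otherwise; since $0<q\leq 1$ (as $q\in\mathbb{Q}_{[0,1]}$ and $q>0$), the inequality $\ge q$ holds precisely in the first case. For the remaining lines, which are the $\cap$- and $\cup$-analogues feeding into Lemma~\ref{lemma:commutations}, I would case-split on whether $f\in\model{\fone}_{X}$. If $f\in\model{\fone}_{X}$, then $\PROJ_{f}(\model{\fone}_{X\cup\{a\}})=(\twoOm)^{\{a\}}$, so by the distributivity the intersection with it equals $\PROJ_{f}(\model{\ftwo}_{X\cup\{a\}})$ and the union with it equals $(\twoOm)^{\{a\}}$; substituting and using $\mu((\twoOm)^{\{a\}})=1\ge q$ reduces each compound left-hand side to the intended statement. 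If $f\notin\model{\fone}_{X}$, then $\PROJ_{f}(\model{\fone}_{X\cup\{a\}})=\emptyset$, so the intersection is empty (measure $0\not\ge q$, which falsifies every conjunctive left-hand side in the $\ge q$ lines) while the union is just $\PROJ_{f}(\model{\ftwo}_{X\cup\{a\}})$; this both matches the right-hand side and explains why the auxiliary $\ftwo$-clause appearing in the strict-inequality line is redundant, since the only case in which the $\fone$-part is true already forces the union to coincide with $\PROJ_{f}(\model{\ftwo}_{X\cup\{a\}})$. The strict-inequality (fourth) line is then the mirror image, using $1\not<q$ and $0<q$, where the hypothesis $q>0$ is exactly what is needed.

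I do not expect a genuine obstacle: this is a bookkeeping corollary of Lemma~\ref{L3}. The only points requiring a little care are (i) invoking $q>0$ and $q\le 1$ to dispose of the degenerate thresholds $0$ and $1$ in the $\ge q$ versus $<q$ comparisons, and (ii) carrying the case analysis on ``$f\in\model{\fone}_{X}$'' through both directions of each biconditional so that the equality of sets provided by the distributivity identities is actually used, rather than merely an inclusion (as in Lemma~\ref{L1} and Lemma~\ref{L2}).
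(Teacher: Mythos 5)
Your proposal is correct and follows essentially the same route as the paper's own proof: both reduce everything to the dichotomy of Lemma~\ref{L3} (the projection is $(\twoOm)^{\{a\}}$ or $\emptyset$ according to whether $f\in\model{\fone}_{X}$), case-split on that membership, and use $0<q\le 1$ to settle the threshold comparisons. The only cosmetic difference is your appeal to the commutation of $\PROJ_{f}$ with $\cap$ and $\cup$, which is not actually needed here since the corollary's left-hand sides already have the projections separated.
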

\begin{proof} First, let us consider:
$$
\mu\big(\PROJ_{f}(\model{\fone}_{X\cup\{a\}})\big) \ge q \text{ iff } f \in \model{\fone}_{X}
$$
For every $\alpha' \in X$, there are two possible cases:
\begin{itemize}
\item $f$($\alpha'$) $\in \model{\fone}_{X}$. Then $\PROJ_{f}$($\model{\fone}_{X\cup\{a\}}$) = $\{g \in$ ($\twoOm$)$^{\{a\}}$ $|$ $f$($\alpha'$) $\in \model{\fone}_{X}\}$ = ($\twoOm$)$^{\{a\}}$. So, for every $q$, $\PROJ_{f}$($\model{\fone}_{X\cup\{a\}}$) $\ge q$
\item $f$($\alpha'$) $\not \in \model{\fone}_{X}$. Then $\PROJ_{f}$($\model{\fone}_{X\cup\{a\}}$) = $\{g \in$ ($\twoOm$)$^{\{a\}}$ $|$ $f$($\alpha'$) $\in \model{\fone}_{X}\}$ = $\emptyset^{\{a\}}$. So, for every $q$ ($q \neq$ 0), $\PROJ_{f}$($\model{\fone}_{X\cup\{a\}}$) $< q$.
\end{itemize}
Let us now consider:
$$
\mu\big(\PROJ_{f}(\model{\fone}_{X\cup\{a\}}) \cap \PROJ_{f}(\model{\ftwo}_{X\cup\{a\}})\big) \ge q \wedge 
\mu\big(\PROJ_{f}(\model{\ftwo}_{X\cup\{a\}})\big) \ge q \text{ iff } f \in \model{\fone}_{X}.
$$
There are two cases to be taken into account:
\begin{itemize}
\itemsep0em
\item $f \in \model{\fone}_{X}$.
For Lemma~\ref{L3} ($a \not \in X$), $\PROJ_{f}(\model{\fone}_{X\cup\{a\}})$
= $(\twoOm)^{\{a\}}$, so
\begin{align*}
\mu\big(\PROJ_{f}(\model{\fone}_{X\cup\{a\}}) \cap \PROJ_{f}(\model{\ftwo}_{X\cup\{a\}})\big) 
&= \mu\big((\twoOm)^{\{a\}} \cap \PROJ_{f}(\model{\ftwo}_{X\cup\{a\}})\big) \\
&= \mu\big(\PROJ_{f}(\model{\ftwo}_{X\cup\{a\}})\big).
\end{align*}
\normalsize
\item $f \not \in \model{\fone}_{X}$.
For Lemma~\ref{L3}, $\PROJ_{f}$($\model{\fone}_{X\cup\{a\}}$) = $\emptyset^{X}$, so
\begin{align*}
\mu\big(\PROJ_{f}(\model{\fone}_{X\cup\{a\}}) \cap \PROJ_{f}(\model{\ftwo}_{X\cup\{a\}})\big) 
&= \mu\big(\emptyset^{\{a\}} \cap \PROJ_{f}(\model{\ftwo}_{X\cup\{a\}})\big) \\
&= \mu\big(\emptyset^{\{a\}}\big) \\
&= 0.
\end{align*}
\normalsize
\end{itemize}
Therefore,
\begin{itemize}
\item[$\Leftarrow$] If $f \in \model{\fone}_{X}$ and $\mu\big(\PROJ_{f}(\model{\ftwo}_{X\cup\{a\}})\big)\ge q$, 
then for the first clause 
$\mu\big(\PROJ_{f}(\model{\fone}_{X\cup\{a\}}) \cap \PROJ_{f}(\model{\ftwo}_{X\cup\{a\}})\big)$ = 
$\mu\big(\PROJ_{f}(\model{\fone}_{X\cup\{a\}}))$. 
Therefore, $\mu\big(\PROJ_{f}(\model{\fone}_{X\cup\{a\}}) \cap \PROJ_{f}(\model{\ftwo}_{X\cup\{a\}})\big) \ge q$.
\item[$\Rightarrow$] If $f \not \in \model{\fone}_{X},$ then for the second clause 
$\mu\big(\PROJ_{f}(\model{\fone}_{X\cup\{a\}}) \cap
\PROJ_{f}(\model{\ftwo}_{X\cup\{a\}})\big) = 0$. 
Therefore, $\mu\big(\PROJ_{f}(\model{\fone}_{X\cup\{a\}}) \cap \PROJ_{f}(\model{\ftwo}_{X\cup\{a\}})\big) < q$.
\\ If $f \in \model{\fone}_{X}$ and 
$\mu\big(\PROJ_{f}(\model{\ftwo}_{X\cup\{a\}})\big) < q$, 
for the first clause $\mu\big(\PROJ_{f}(\model{\fone}_{X\cup\{a\}} \cap \PROJ_{f}(\model{\ftwo}_{X\cup\{a\}})\big)$ = 
$\mu\big(\PROJ_{f}(\model{\ftwo}_{X\cup\{a\}})\big)$. 
So, $\mu\big(\PROJ_{f}(\model{\fone}_{X\cup\{a\}}) \cap \PROJ_{f}(\model{\ftwo}_{X\cup\{a\}})\big) < q$. 
\end{itemize}
It is now possible to take into account:
$$
\mu\big(\PROJ_{f}(\model{\fone}_{X\cup\{a\}}) \cup \PROJ_{f}(\model{\ftwo}_{X\cup\{a\}})\big) \ge q \vee 
\mu\big(\PROJ_{f}(\model{\ftwo}_{X\cup\{a\}})\big) \ge q 
\text{ iff } f \in \model{\fone}_{X}.
$$
There are two cases to be considered:
\begin{itemize}

\item $f \in \model{\fone}_{X}$. For Lemma~\ref{L3}, ($a \not \in X$), $\PROJ_{f}$($\model{\fone}_{X\cup\{a\}}$) = ($\twoOm$)$^{\{a\}}$, so
\begin{align*}
\mu\big(\PROJ_{f}(\model{\fone}_{X\cup\{a\}}) \cup \PROJ_{f}(\model{\ftwo}_{X\cup\{a\}})\big) 
&= \mu\big((\twoOm)^{\{a\}} \cup \PROJ_{f}(\model{\ftwo}_{X\cup\{a\}})\big) \\
&= \mu\big((\twoOm)^{\{a\}}\big) \\
&= 1.
\normalsize
\end{align*}
\item $f \not \in \model{\fone}_{X}$. For Lemma~\ref{L3}, $\PROJ_{f}$($\model{\fone}_{X\cup\{a\}}$) = $\emptyset^{X}$, so
\begin{align*}
\mu\big(\PROJ_{f}(\model{\fone}_{X\cup\{a\}}) \cup \PROJ_{f}(\model{\ftwo}_{X\cup\{a\}})\big) &= 
\mu\big(\emptyset^{\{a\}} \cup \PROJ_{f}(\model{\ftwo}_{X\cup\{a\}})\big) \\
&= \mu\big(\PROJ_{f}(\model{\ftwo}_{X\cup\{a\}})\big).
\end{align*}
\end{itemize}
\normalsize
Therefore,
\begin{itemize}
\item[$\Leftarrow$] If $f \in \model{\fone}_{X},$ then, for the first clause, $\mu$($\PROJ$($\model{\fone}_{X\cup\{a\}}$) $\cup$ $\PROJ_{f}$($\model{\ftwo}_{X\cup\{a\}}$)) = 1. 
So, for every $q$, $\mu$($\PROJ_{f}$($\model{\fone}_{X\cup\{a\}}$) $\cup$ $\PROJ$($\model{\ftwo}_{X\cup\{a\}}$)) $\ge q$.
\\ If $\mu$($\PROJ_{f}$($\model{\ftwo}_{X\cup\{a\}}$)) $\ge q$, then also $\mu$($\PROJ_{f}$($\model{\fone}_{X\cup\{a\}}$) $\cup$ $\PROJ_{f}$($\model{\ftwo}_{X\cup\{a\}}$)) $\ge q$.
\item[$\Rightarrow$] If $f \not \in \model{\fone}_{X}$ and $\mu$($\PROJ_{f}$($\model{\ftwo}_{X\cup\{a\}}$)) $< q$, then, for the second clause, $\mu$($\PROJ_{f}$($\model{\fone}_{X\cup\{a\}}$) $\cup$ $\PROJ_{f}$($\model{\ftwo}_{X\cup\{a\}}$)) = $\mu$($\PROJ_{f}$($\model{\ftwo}_{X\cup\{a\}}$)) $< q$.
\end{itemize}
It is now possible to deal with:
$$
\mu\big(\PROJ_{f}(\model{\fone}_{X\cup\{a\}}) \cup \PROJ_{f}(\model{\ftwo}_{X\cup\{a\}})\big) < q 
\wedge \mu\big(\PROJ_{f}(\model{\ftwo}_{X\cup\{a\}})\big) < q 
\text{ iff } f \not \in \model{\fone}_{X}.
$$
Let us take into account two cases:
\begin{itemize}

\item $f \in \model{\fone}_{X}$.
By Lemma~\ref{L3} ($a \not \in X$), 
$\PROJ_{f}\big(\model{\fone}_{X\cup\{a\}}\big)$ = ($\twoOm$)$^{\{a\}}$, so
\begin{align*}
\mu\big(\PROJ_{f}(\model{\fone}_{X\cup\{a\}}) \cup \PROJ_{f}(\model{\ftwo}_{X\cup\{a\}})\big) 
&= \mu\big((\twoOm)^{\{a\}} \cup \PROJ_{f}(\model{\ftwo}_{X\cup\{a\}})\big) \\
&= \mu\big((\twoOm)^{\{a\}}\big) \\
&= 1
\end{align*}

\item $f \not \in \model{\fone}_{X}$. 
By Lemma~\ref{L3}, $\PROJ_{f}\big(\model{\fone}_{X\cup\{a\}}\big)$ = $\emptyset^{\{a\}}$, so
\begin{align*}
\mu\big(\PROJ_{f}(\model{\fone}_{X\cup\{a\}}) \cup \PROJ_{f}(\model{\ftwo}_{X\cup\{a\}})\big) 
&= \mu\big(\emptyset^{\{a\}} \cup \PROJ_{f}(\model{\ftwo}_{X\cup\{a\}})\big) \\
& = \mu\big(\PROJ_{f}(\model{\ftwo}_{X\cup\{a\}})\big).
\end{align*}
\normalsize
\end{itemize}
Therefore,
\begin{itemize}
\item[$\Leftarrow$] Let $f \not \in \model{\fone}_{X}$ and 
$\mu\big(\PROJ_{f}(\model{\fone}_{X\cup\{a\}})\big) < q$. 
For the second clause, 
$\mu\big(\PROJ_{f}(\model{\fone}_{X\cup\{a\}}) \cup \PROJ_{f}(\model{\ftwo}_{X\cup\{a\}})\big)$ 
= $\mu\big(\PROJ_{f}(\model{\ftwo}_{X\cup\{a\}})\big)$. 
Since for hypothesis $\mu\big(\PROJ_{f}(\model{\ftwo}_{X\cup\{a\}})\big) < q$, also
$\mu\big(\PROJ_{f}(\model{\fone}_{X\cup\{a\}}) \cup \PROJ_{f}(\model{\ftwo}_{X\cup\{a\}})\big) < q$.
\item[$\Rightarrow$] Let $f \in \model{\fone}_{X}$. 
For the first clause 
$\mu\big(\PROJ_{f}(\model{\fone}_{X\cup\{a\}}) \cup \PROJ_{f}(\model{\ftwo}_{X\cup\{a\}})\big) = 1$. 
Thus, for every $q$, 
$\mu\big(\PROJ_{f}(\model{\fone}_{X\cup\{a\}})
\cup \PROJ_{f}(\model{\ftwo}_{X\cup\{a\}})\big) > q$.
\\ Let $f \not \in \model{\fone}_{X}$ and $\mu\big(\PROJ_{f}(\model{\ftwo}_{X\cup \{a\}})\big) \ge q$. 
For the second clause, $\mu\big(\PROJ_{f}(\model{\fone}_{X\cup\{a\}}) \cup \PROJ_{f}(\model{\ftwo}_{X\cup\{a\}})\big)$ 
= $\mu\big(\PROJ_{f}(\model{\ftwo}_{X\cup\{a\}})\big)$. 
For hypothesis, $\mu\big(\PROJ_{f}(\model{\fone}_{X\cup\{a\}})\big) \ge q$, 
thus also $\mu\big(\PROJ_{f}(\model{\fone}_{X\cup\{a\}}) \cup \PROJ_{f}(\model{\ftwo}_{X\cup\{a\}})\big) \ge q$.
\end{itemize}
\end{proof}

\primeCommutations*
\begin{proof}
Let us show:
 $$
 \model{\fone \wedge \BOX^{q}_{a}\ftwo}_{X} = \model{\BOX^{q}_{a}(\fone \wedge \ftwo)}_{X}.
 $$
 For the Definition~\ref{PPLsem},\footnote{That $\BOX^q_{a}$ is well-defined is guaranteed by the fact that $X$ = $X_{\fone} \cup (X_{\ftwo}/\{a\})$ so, for hypothesis, $a \not \in X$.}
$$
\model{\fone \wedge \BOX^{q}_{a}\ftwo}_{X} = \model{\fone}_{X} \cap \model{\BOX^{q}_{a}\ftwo}_{X}.
$$
Let $f \in$ ($\twoOm$)$^{X}$. There are two cases:
\begin{itemize}

\item $\mu\big(\PROJ_{f}(\model{\ftwo}_{X\cup\{a\}})\big) \ge q$.
Since $\{f \in (\twoOm)^{X} \ |$ $\mu\big(\PROJ_{f}(\model{\ftwo})_{X\cup\{a\}})\big) \ge q\}$ = ($\twoOm$)$^{X}$,
\begin{align*}
\model{\fone \wedge \BOX^{q}_{a}\ftwo}_{X} &= \model{\fone}_{X} \cap \model{\BOX^{q}_{a}\ftwo}_{X} \\
&= \model{\fone}_{X} \cap (\twoOm)^{X} \\
&= \model{\fone}_{X}
\end{align*}

\item $\mu\big(\PROJ_{f}(\model{\fone}_{X\cup\{a\}})\big) < q$. Since $\{f \in$ ($\twoOm$)$^{X}$ $|$ $\mu\big(\PROJ_{f}(\model{\fone}_{X})\big) \ge q\}$ = $\emptyset^{X}$,
\begin{align*}
\model{\fone \wedge \BOX^{q}_{a}\ftwo}_{X} &= \model{\fone}_{X} \cap \model{\BOX^{q}_{a}\ftwo}_{X} \\
&= \model{\fone}_{X} \cap \emptyset^{X} \\
&= \emptyset^{X}.
\end{align*}
\end{itemize}
So,
$$
\model{\fone \wedge \BOX^{q}_{a}\ftwo}_{X} = \begin{cases} \model{\fone}_{X} \ \ \ &\text{ if } \mu(\PROJ_{f}(\model{\ftwo}_{X\cup\{a\}}) \ge q \\ \emptyset^{X} \ \ \ &\text{ otherwise} \end{cases}
$$
On the other hand,
\begin{align*}
\model{\BOX^{q}_{a}(\fone \wedge \ftwo)}_{X} &= \{f \in (\twoOm)^{X} \ | \ \mu\big(\PROJ_{f}(\model{\fone \wedge \ftwo}_{X\cup\{a\}})\big) \ge q\} \\
&= \{f \in (\twoOm)^{X} \ | \ \mu\big(\PROJ(\model{\fone}_{X\cup\{a\}} \cap \model{\ftwo}_{X\cup\{a\}})\big) \ge q \} \\
&= \{f \in (\twoOm)^{X} \ | \ \mu\big(\PROJ_{f}(\model{\fone}_{X\cup\{a\}}) \cap \PROJ_{f}(\model{\ftwo}_{X\cup\{a\}})\big) \ge q\}
\end{align*}
\normalsize
There are two possible cases:
\begin{itemize}
\itemsep0em
\item Let $\mu\big(\PROJ_{f}(\model{\ftwo}_{X\cup\{a\}})\big) \ge q$. Then,
\begin{align*}
\model{\BOX^{q}_{a}(\fone \wedge \ftwo)}_{X} = \ \ &\{f \in (\twoOm)^{X} \ | \ \mu(\PROJ_{f}(\model{\fone}_{X\cup\{a\}})  \cap \PROJ_{f}(\model{\ftwo}_{X\cup\{a\}})) \ge q\} \\
\stackrel{\text{C}~\ref{C1}}{=} &\{f \in (\twoOm)^{X} \ |  \ f \in \model{\fone}_{X}  \wedge \mu(\PROJ_{f}(\model{\ftwo}_{X\cup\{a\}})) \ge q\} \\
= \ \ &\{f \in (\twoOm)^{X} \ | \ f \in \model{\fone}_{X}\} \\
= \ \ &\model{\fone}_{X}
 \end{align*}
\item Let $\mu\big(\PROJ_{f}(\model{\ftwo}_{X\cup\{a\}})\big)< q$. 
Since $\mu\big(\PROJ_{f}(\model{\fone}_{X\cup\{a\}}) \cap (\PROJ_{f}(\model{\ftwo}_{X\cup\{a\}})\big)$ $\leq$ 
$\mu\big(\PROJ_{f}(\model{\ftwo}_{X\cup\{a\}})\big)$, 
then 
$\mu\big(\PROJ_{f}(\model{\fone}_{X\cup\{a\}}) \cap \PROJ_{f}(\model{\ftwo}_{X\cup\{a\}})\big) < q$. 
So, $\{f \in (\twoOm$)$^{X}$ $|$ 
$\mu\big(\PROJ_{f}(\model{\fone}_{X\cup\{a\}}) \cap \PROJ_{f}(\model{\ftwo}_{X\cup\{a\}})\big) \ge q\} = \emptyset$:
\begin{align*}
\model{\BOX^{q}_{a}(\fone \wedge \ftwo)}_{X} &= \{f \in (\twoOm)^{X} \ | \ \mu\big(\PROJ_{f}(\model{\fone}_{X\cup\{a\}}) \cap \PROJ_{f}(\model{\ftwo}_{X\cup\{a\}})\big) \ge q\} \\
&= \emptyset^{X}.
\end{align*}
\end{itemize}
\normalsize
Therefore,
\begin{align*}
\model{\BOX^{q}_{a}(\fone \wedge \ftwo)}_{X} &= \begin{cases} \model{\fone}_{X} \ \ \ &\text{ if } \mu(\PROJ_{f}(\model{\ftwo}_{X\cup \{a\}})) \ge q \\ \emptyset^{X} \ \ \ &\text{ otherwise} 
\end{cases} \\
&= \model{\fone \wedge \BOX^{q}_{a}\ftwo}_{X}
\end{align*}
\normalsize
Let us now show that:
$$
\model{\fone \vee \BOX^{q}_{a}\ftwo}_{X} = \model{\BOX^{q}_{a}(\fone \vee \ftwo)}_{X}.
$$
By Definition~\ref{PPLsem},
$$
\model{\fone \vee \BOX^{q}_{a}\ftwo}_{X} = \model{\fone}_{X} \cup \model{\BOX_{a}^{q}\ftwo}_{X}
$$
Let $f \in$ ($\twoOm$)$^{X}$. There are two possible cases:
\begin{itemize}

\item $\mu\big(\PROJ_{f}(\model{\ftwo}_{X\cup\{a\}})\big) \ge q$. 
Since $\{f \in$ ($\twoOm$)$^{X}$ $|$ $\mu\big(\PROJ_{f}(\model{\ftwo}_{X\cup\{a\}})\big) \ge q\}$ = ($\twoOm$)$^{X}$,
\begin{align*}
\model{\fone \vee \BOX^{q}_{a}\ftwo}_{X} &= \model{\fone}_{X} \cup \model{\BOX^{q}_{a}\ftwo}_{X} \\
&= \model{\fone}_{X} \cup (\twoOm)^{X} \\
&= (\twoOm)^{X}.
\end{align*}

\item $\mu\big(\PROJ_{f}(\model{\ftwo}_{X\cup\{a\}})\big) < q$. 
Since $\{f \in (\twoOm)^{X} \ | \ \mu\big(\PROJ_{f}(\model{\ftwo}_{X\cup\{a\}})\big) \ge q\}$ = $\emptyset^{X}$,
\begin{align*}
\model{\fone \vee \BOX^{q}_{a}\ftwo}_{X} &= \model{\fone}_{X} \cup \model{\BOX^{q}_{a}\ftwo}_{X} \\
&= \model{\fone}_{X} \cup \emptyset^{X} \\
&= \model{\fone}_{X}.
\end{align*}
\normalsize
\end{itemize}
Thus,
$$
\model{\fone \vee \BOX^{q}_{a}\ftwo}_{X} = \begin{cases}
(\twoOm)^{X} \ \ &\text{ if } \mu(\PROJ_{f}(\model{\ftwo}_{X\cup \{a\}})) \ge q \\ \model{\fone}_{X} \ \ &\text{ otherwise} 
\end{cases}
$$
On the other hand,
\normalsize
\begin{align*}
\model{\BOX^{q}_{a}(\fone \vee \ftwo)}_{X} &= \{f \in (\twoOm)^{X} \ | \ \mu\big(\PROJ_{f}(\model{\fone \vee \ftwo}_{X\cup\{a\}})\big) \ge q\} \\
&= \{f \in (\twoOm)^{X} \ | \ \mu\big(\PROJ_{f}(\model{\fone}_{X\cup\{a\}} \cup \model{\ftwo}_{X\cup\{a\}})\big) \ge q\} \\
&= \{f \in (\twoOm)^{X} \ | \ \mu\big(\PROJ_{f}(\model{\fone}_{X\cup\{a\}} \cup \PROJ_{f}(\model{\ftwo}_{X\cup\{a\}})\big) \ge q\}.
\end{align*}
\normalsize
There are two possible cases:
\begin{itemize}

\item Let $\mu\big(\PROJ_{f}(\model{\ftwo}_{X\cup\{a\}})\big) \ge q$. 
Since trivially $\mu\big(\PROJ_{f}(\model{\fone}_{X\cup\{a\}}) \cup
\PROJ_{f}(\model{\ftwo}_{X\cup\{a\}})\big)$ $\ge$ 
$\mu\big(\PROJ_{f}(\model{\ftwo}_{X\cup\{a\}})\big)$, 
$\mu\big(\PROJ_{f}(\model{\fone}_{X\cup\{a\}}) \cup \PROJ_{f}(\model{\ftwo}_{X\cup\{a\}})\big) \ge q$. 
So, $\{f \in$ ($\twoOm$)$^{X}$ $|$ $\mu\big(\PROJ_{f}(\model{\fone}_{X\cup\{a\}}) \cup \PROJ_{f}(\model{\ftwo}_{X\cup\{a\}})\big)$ = ($\twoOm$)$^{X}$.

\item Let $\mu\big(\PROJ_{f}(\model{\ftwo}_{X\cup\{a\}})\big)< q$.
Then,
\begin{align*}
\model{\BOX^{q}_{a}(\fone \vee \ftwo)}_{X} = \ &\{f \in (\twoOm)^{X} \ | \ \mu\big(\PROJ_{f}(\model{\fone}_{X\cup\{a\}}) \cup \PROJ_{f}(\model{\ftwo}_{X\cup\{a\}})\big) \ge q\} \\
\stackrel{\text{C}~\ref{C1}}{=} &\{f \in (\twoOm)^{X} \ | \ f \in \model{\fone}_{X} \vee \mu\big(\PROJ_{f}(\model{\fone}_{X\cup\{a\}})\big) \ge q\} \\
= \ &\{f \in (\twoOm)^{X} \ | \ f \in \model{\fone}_{X}\} \\
= \ &\model{\fone}_{X}.
\end{align*}
\end{itemize}
Therefore,
\begin{align*}
\model{\BOX^{q}_{a}(\fone \vee \ftwo)}_{X} &= \begin{cases} (\twoOm)^{X} \ \ &\text{ if } \mu(\PROJ_{f}(\model{\ftwo}_{X\cup\{a\}})) \ge q \\ \model{\fone}_{X} \ \ &\text{ otherwise} 
\end{cases} \\
&= \model{\fone \vee \BOX^{q}_{a}\ftwo}_{X}
\end{align*}
Let us now prove:
$$
\model{\fone \wedge \DIA^{q}_{a}\ftwo}_{X} = \model{\DIA^{q}_{a}(\neg \fone \vee \ftwo)}_{X}
$$
For Definition~\ref{PPLsem},\footnote{Well-definition is guaranteed by the fact that, for hypothesis, $a \not \in X$.}
$$
\model{\fone \wedge \DIA^{q}_{a}\ftwo}_{X} = \model{\fone}_{X} \cap \model{\DIA^{q}_{a}\ftwo}_{X}
$$
Let $f \in$ ($\twoOm$)$^{X}$. There are two cases:
\begin{itemize}

\item Let $\mu\big(\PROJ_{f}(\model{\ftwo}_{X\cup\{a\}})\big) \ge q$. 
Since $\{f \in (\twoOm)^{X} \ | \ \mu\big(\PROJ_{f}(\model{\ftwo}_{X\cup\{a\}})\big) < q\}$ = $\emptyset^{X}$,
\begin{align*}
\model{\fone \wedge \DIA^{q}_{a}\ftwo}_{X} &= \model{\fone}_{X} \cap \model{\DIA^{q}_{a}\ftwo}_{X} \\
&= \model{\fone}_{X} \cap \emptyset^{X} \\
&= \emptyset^{X}.
\end{align*}

\item Let $\mu\big(\PROJ_{f}(\model{\ftwo}_{X\cup\{a\}})\big) < q$.
Since $\{f \in (\twoOm)^{X} \ | \ \mu\big(\PROJ_{f}(\model{\ftwo}_{X})\big) < q\}$ = ($\twoOm$)$^{X}$,
\begin{align*}
\model{\fone \wedge \BOX^{q}_{a}\ftwo}_{X} &= \model{\fone}_{X} \cap \model{\DIA^{q}_{a}\ftwo}_{X} \\
&= \model{\fone}_{X} \cap (\twoOm)^{X} \\
&= \model{\fone}_{X}.
\end{align*}
\normalsize
\end{itemize}
Therefore
$$
\model{\fone \wedge \DIA^{q}_{a}\ftwo}_{X} = \begin{cases} \emptyset^{X} \ \ &\text{ if } \mu(\PROJ(\model{\ftwo}_{X\cup\{a\}})) > q \\
\model{\fone}_{X} \ \ &\text{ otherwise}
\end{cases}
$$
On the other hand,
\begin{align*}
\model{\DIA^{q}_{a}(\neg \fone \vee \ftwo)}_{X} &= \{f \in (\twoOm)^{X} \ | \ \mu\big(\PROJ_{f}(\model{\neg \fone \vee \ftwo}_{X\cup\{a\}})\big) < q\} \\
&= \{f \in (\twoOm)^{X} \ | \ \mu\big(\PROJ_{f}(\model{\neg \fone}_{X\cup\{a\}} \cup \model{\ftwo}_{X\cup\{a\}})\big) < q\} \\
&= \{f \in (\twoOm)^{X} \ | \ \mu\big(\PROJ_{f}(\model{\neg\fone}_{X\cup\{a\}}) \cup \PROJ_{f}(\model{\ftwo}_{X\cup\{a\}})\big) < q\}
\end{align*}
\normalsize
There are two cases:
\begin{itemize}

\item Let $\mu\big(\PROJ_{f}(\model{\ftwo}_{X\cup\{a\}})\big) \ge q$. Since $\mu\big(\PROJ_{f}(\model{\neg\fone}_{X\cup\{a\}}) \cup \PROJ_{f}(\model{\ftwo}_{X\cup\{a\}})\big)$ $\ge$ 
$\mu\big(\PROJ_{f}(\model{\ftwo}_{X\cup\{a\}})\big)$, 
$\mu\big(\PROJ_{f}(\model{\neg\fone}_{X\cup\{a\}}) \cup \PROJ_{f}(\model{\ftwo}_{X\cup\{a\}})\big) \ge q$. 
Consequently, $\{f \in$ ($\twoOm$)$^{X}$ $|$ 
$\mu\big(\PROJ_{f}(\model{\neg\fone}_{X\cup\{a\}}) \cup \PROJ_{f}(\model{\ftwo}_{X\cup\{a\}})\big) < q\}$ = $\emptyset^{X}$.

\item Let $\mu\big(\PROJ_{f}(\model{\ftwo}_{X\cup\{a\}})\big) < q$. 
\begin{align*}
\model{\DIA^{q}_{a}(\neg \fone \vee \ftwo)}_{X} = \ &\{f \in (\twoOm)^{X} \ | \ \mu(\PROJ_{f}(\model{\neg \fone}_{X\cup\{a\}}) \\
& \ \ \ \ \cup \PROJ_{f}(\model{\ftwo}_{X\cup\{a\}})) < q\} \\
\stackrel{\text{C}~\ref{C1}}{=} &\{f \in (\twoOm)^{X} \ | \ f \not \in \model{\neg \fone}_{X} \\
& \ \ \ \ \wedge \mu(\PROJ_{f}(\model{\ftwo}_{X\cup\{a\}})) < q\} \\
= \ &\{f \in (\twoOm)^{X} \ | \ f \not \in \model{\neg \fone}_{X}\} \\
= \ &\model{\fone}_{X}
\end{align*}
\end{itemize}
Therefore,
\begin{align*}
\model{\DIA^{q}_{a}(\neg \fone \vee \ftwo)}_{X} &= \begin{cases} \emptyset^{X} \ \ &\text{ if } \mu\big(\PROJ_{f}(\model{\ftwo}_{X\cup\{a\}})\big) \ge q \\
\model{\ftwo}_{X} \ \ &\text{ otherwise}
\end{cases} \\
&= \model{\fone \wedge \DIA^{q}_{a}\ftwo}_{X}.
\end{align*}
Let finally prove,
$$
\model{\fone \vee \DIA^{q}_{a}\ftwo}_{X} = \model{\DIA^{q}_{a}(\neg \fone \wedge \ftwo)}_{X}.
$$
For Definition~\ref{PPLsem},
$$
\model{\fone \vee \DIA^{q}_{a}\ftwo}_{X} = \model{\fone}_{X} \cup \model{\DIA^{q}_{a}\ftwo}_{X}.
$$ 
Let $f \in$ ($\twoOm$)$^{X}$. There are two possible cases,
\begin{itemize}

\item $\mu\big(\PROJ_{f}(\model{\ftwo}_{X\cup\{a\}})\big) \ge q$. 
Since $\{f \in$ ($\twoOm$)$^{X}$ $|$ 
$\mu\big(\PROJ_{f}(\model{\ftwo}_{X\cup\{a\}})\big)\}$ 
= $\emptyset^{X}$,
\begin{align*}
\model{\fone \vee \DIA^{q}_{a}\ftwo}_{X} &= \model{\fone}_{X} \cup \model{\DIA^{q}_{a}\ftwo}_{X} \\
&= \model{\fone}_{X} \cup \emptyset^{X} \\
&= \model{\fone}_{X}
\end{align*}
\item $\mu\big(\PROJ_{f}(\model{\ftwo}_{X\cup\{a\}})\big) < q$. 
Since $\{f \in (\twoOm)^{X} \ | \ \mu\big(\PROJ_{f}(\model{\ftwo}_{X})\big) < q\}$ = $(\twoOm)^{X}$,
\begin{align*}
\model{\fone \vee \DIA^{q}_{a}\ftwo}_{X} &= \model{\fone}_{X} \cup \model{\DIA^{q}_{a}\ftwo}_{X} \\
&= \model{\fone}_{X} \cup (\twoOm)^{X} \\
&= (\twoOm)^{X}.
\end{align*}
\end{itemize}
Therefore, 
$$
\model{\fone \vee \DIA^{q}_{a}\ftwo}_{X} = \begin{cases} \model{\fone}_{X} \ \ &\text{ if } \mu(\PROJ_{f}(\model{\ftwo}_{X\cup\{a\}})) \ge q \\ (\twoOm)^{X} \ \ &\text{ otherwise.}
\end{cases}
$$
On the other hand,
\begin{align*}
\model{\DIA^{q}_{a}(\neg \fone \wedge \ftwo)}_{X} &= \{f \in (\twoOm)^{X} \ | \ \mu\big(\PROJ_{f}(\model{\neg \fone \wedge \ftwo}_{X\cup\{a\}})\big) < q\} \\
&= \{f \in (\twoOm)^{X} \ | \ \mu(\PROJ_{f}\big(\model{\neg \fone}_{X\cup\{a\}} \cap \model{\ftwo}_{X\cup\{a\}})\big) < q\} \\
&= \{f \in (\twoOm)^{X} \ | \ \mu\big(\PROJ_{f}(\model{\neg \fone}_{X\cup\{a\}}) \cap \PROJ_{f}(\model{\ftwo}_{X\cup\{a\}})\big) < q\}
\end{align*}
There are two cases,
\begin{itemize}

\item Let $\mu\big(\PROJ_{f}(\model{\neg \ftwo}_{X\cup\{a\}})\big) \ge q$. Then,
\begin{align*}
\model{\DIA^{q}_{a}(\neg\fone \wedge \ftwo)}_{X} = \ &\{f \in (\twoOm)^{X} \ | \ \mu\big(\PROJ_{f}(\model{\neg\fone}_{X\cup\{a\}}) \cap \PROJ_{f}(\model{\ftwo}_{X\cup\{a\}})\big) < q\} \\
\stackrel{\text{C}~\ref{C1}}{=} &\{f \in (\twoOm)^{X} \ | \ f \not \in \model{\neg \fone}_{X} \vee \mu\big(\PROJ_{f}(\model{\ftwo}_{X\cup\{a\}})\big) < q\} \\
= \ &\{f \in (\twoOm)^{X} \ | \ f \not \in \model{\neg \fone}_{X}\} \\
= \ &\{f \in (\twoOm)^{X} \ | \ f \in \model{\fone}_{X}\} \\
= \ &\model{\fone}_{X}.
\end{align*}

\item Let $\mu\big(\PROJ_{f}(\model{\ftwo}_{X\cup\{a\}})\big) < q$.
Since $\mu\big(\PROJ_{f}(\model{\neg \fone}_{X\cup\{a\}}) \cap \PROJ_{f}(\model{\ftwo}_{X\cup\{a\}})\big)$ $\leq$ 
$\mu\big(\PROJ_{f}(\model{\ftwo}_{X\cup\{a\}})\big)$,
$\mu\big(\PROJ_{f}(\model{\neg \fone}_{X\cup\{a\}}) \cap \PROJ_{f}(\model{\ftwo}_{X\cup\{a\}})\big) < q$.
Consequently, $\{f \in (\twoOm)^{X} \ | \ \mu\big(\PROJ_{f}(\model{\neg\fone}_{X\cup\{a\}}) \cap \PROJ_{f}(\model{\ftwo}_{X\cup \{a\}})\big) < q\}$ 
= ($\twoOm$)$^{X}$.
\end{itemize}
Therefore,
\begin{align*}
\model{\DIA^{q}_{a}(\neg\fone \wedge \ftwo)}_{X} &= \begin{cases} \model{\fone}_{X} \ \ &\text{ if } \mu(\PROJ_{f}(\model{\neg \ftwo}_{X\cup\{a\}})) \ge q \\ (\twoOm)^{X} \ \ &\text{ otherwise} 
\end{cases} \\
&= \model{\fone \vee \DIA^{q}_{a}\ftwo}_{X}.
\end{align*}
\end{proof}

\primeDuality*
\begin{proof}
Let us first deal with:
$$
\model{\neg \DIA^{q}_{a}\fone}_{X} = \model{\BOX^{q}_{a}\fone}_{X}.
$$
There are two cases:
\begin{itemize}
\itemsep0em
\item If $q$ = 0,
\begin{align*}
\model{\neg\DIA^{0}_{a}\fone}_{X} & \ = (\twoOm)^{X} \ – \ \model{\DIA^{0}_{a}\fone}_{X} \\
&\stackrel{\text{L}~\ref{Lq0}}{=} (\twoOm)^{X} \ – \ \emptyset^{X} \\
& \ = (\twoOm)^{X} \\
&\stackrel{\text{L}~\ref{Lq0}}{=} \model{\BOX^{0}_{a}\fone}_{X}
\end{align*}
\item Otherwise, $q >$ 0,
\begin{align*}
\model{\neg\DIA^{q}_{a}\fone}_{X} &= (\twoOm)^{X} – \model{\DIA^{q}_{a}\fone}_{X} \\
&= (\twoOm)^{X} – \{f : X \rightarrow \twoOm \ | \ \mu(\PROJ_{f}(\model{\fone}_{X\cup\{a\}})) < q\} \\
&= \{f : X \rightarrow \twoOm \ | \ \mu(\PROJ_{f}(\model{\fone}_{X\cup\{a\}})) \ge q\} \\
&= \model{\BOX^{q}_{a}\fone}_{X}.
\end{align*}
\end{itemize}

Let us now consider:
$$
\model{\neg \BOX^{q}_{a}\fone}_{X} = \model{\DIA^{q}_{a}\fone}_{X}.
$$
Again, there are two possible cases:
\begin{itemize}

\item Let $q$ = 0. Then,
\begin{align*}
\model{\neg\BOX^{0}_{a}\fone}_{X} = \ &(\twoOm)^{X} \ – \ \model{\BOX^{0}_{a}\fone}_{X} \\
\stackrel{\text{L}~\ref{Lq0}}{=} &(\twoOm)^{X} \ – \ (\twoOm)^{X} \\
= \ &\emptyset^{X} \\
\stackrel{\text{L}~\ref{Lq0}}{=} &\model{\DIA^{0}_{a}\fone}_{X}.
\end{align*}

\item Let $q >$ 0. Then,
\begin{align*}
\model{\neg\BOX^{q}_{a}\fone}_{X} &= (\twoOm)^{X} – \ \model{\BOX^{q}_{a}\fone}_{X} \\
&= (\twoOm)^{X} – \ \{f : X \rightarrow \twoOm \ | \ \mu\big(\PROJ_{f}(\model{\fone}_{X\cup\{a\}})\big) \ge q\} \\
&= \{f : X \rightarrow \twoOm \ | \ \mu\big(\PROJ_{f}(\model{\fone}_{X\cup\{a\}})\big) < q\} \\
&= \model{\DIA^{q}_{a}\fone}_{X}. 
\end{align*}
\end{itemize}
\end{proof}

\primeProposition*
\noindent
In order to prove the given Proposition~\ref{PNF}, 
some auxiliary definitions and lemmas have to be introduced.
\begin{definition}[Negation Simple Normal Form]
A formula of $\PPL$ is a \emph{negation simple normal form} (NSNF) 
if and only if negations only appear in front of atoms.
\end{definition}

\begin{lemma}\label{NSNF}
Every formula of $\PPL$ can be converted into an equivalent formula in NSNF.
\end{lemma}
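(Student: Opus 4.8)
The plan is to prove Lemma~\ref{NSNF} by structural induction on $\fone$, strengthening the statement so that the induction hypothesis simultaneously supplies NSNF equivalents for a formula and for its negation. Concretely, I would establish that for every formula $\fone$ of $\PPL$ there are NSNF formulas $\fone^{+}$ and $\fone^{-}$ with $\FN(\fone^{+})=\FN(\fone^{-})=\FN(\fone)$ such that $\fone\equiv_{X}\fone^{+}$ and $\lnot\fone\equiv_{X}\fone^{-}$ for every finite set of names $X\supseteq\FN(\fone)$; the lemma is then the $\fone^{+}$ half. The reason a naive ``push negations inward'' does not work directly is that $\BOX$ and $\DIA$ are not dual, so the classical rule $\lnot\BOX\fone\equiv\DIA\lnot\fone$ is unavailable; instead the argument hinges on Lemma~\ref{lemma:pseudoduality}, which eliminates a negation in front of a counting quantifier by swapping $\BOX$ with $\DIA$ \emph{without} letting the negation descend into the quantifier's scope.

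Before running the induction I would record the elementary semantic equivalences it uses, all immediate from Definition~\ref{PPLsem} and basic set theory: double negation $\lnot\lnot\fone\equiv_{X}\fone$, the De~Morgan laws $\lnot(\fone\land\ftwo)\equiv_{X}\lnot\fone\lor\lnot\ftwo$ and $\lnot(\fone\lor\ftwo)\equiv_{X}\lnot\fone\land\lnot\ftwo$, and the fact that $\equiv_{X}$ is a congruence, i.e.\ replacing a subformula by an $\equiv_{X}$-equivalent one preserves the interpretation, which holds because $\model{\cdot}_{X}$ is defined compositionally. I would also observe that, since interpretations are invariant under renaming of bound names, we may assume the bound names occurring in $\fone$ are pairwise distinct and disjoint from $\FN(\fone)$, so that the side condition $a\notin X$ of Lemma~\ref{lemma:pseudoduality} is satisfied whenever it is invoked on a subformula $\BOX^{q}_{a}\ftwo$ or $\DIA^{q}_{a}\ftwo$. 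The induction itself is then a short case analysis: for $\fone=\at{a}$ both $\fone$ and $\lnot\at{a}$ are already in NSNF; for $\fone=\lnot\ftwo$ set $\fone^{+}=\ftwo^{-}$ and $\fone^{-}=\ftwo^{+}$; for $\fone=\ftwo\land\fthree$ set $\fone^{+}=\ftwo^{+}\land\fthree^{+}$ and $\fone^{-}=\ftwo^{-}\lor\fthree^{-}$ (and symmetrically for $\lor$); for $\fone=\BOX^{q}_{a}\ftwo$ set $\fone^{+}=\BOX^{q}_{a}\ftwo^{+}$ and, by Lemma~\ref{lemma:pseudoduality}, $\fone^{-}=\DIA^{q}_{a}\ftwo^{+}$, and dually $\fone^{+}=\DIA^{q}_{a}\ftwo^{+}$, $\fone^{-}=\BOX^{q}_{a}\ftwo^{+}$ when $\fone=\DIA^{q}_{a}\ftwo$. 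Congruence of $\equiv_{X}$ together with the equivalences above closes each case, every constructed formula is manifestly in NSNF, and the free names are preserved.

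The only point I expect to require genuine care is the counting-quantifier case, and it is delicate precisely because of the failure of duality that the paper repeatedly stresses: one must use the one-sided equivalences of Lemma~\ref{lemma:pseudoduality} rather than any classical quantifier duality, which is exactly why the auxiliary ``$\lnot\fone$'' clause has to be carried through the whole induction, and it also entails the small amount of bound-name bookkeeping described above. Everything else is entirely routine.
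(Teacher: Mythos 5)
Your proof is correct and follows essentially the same route as the paper's: both rest on double negation, De Morgan, and Lemma~\ref{lemma:pseudoduality} for the counting quantifiers, the crucial point in each being that the negation is absorbed by swapping $\BOX$ and $\DIA$ rather than descending into the quantifier's scope. The only difference is bookkeeping — you strengthen the induction hypothesis to carry an NSNF form of $\lnot\fone$ alongside that of $\fone$, whereas the paper inducts on the weight of the formula and, in the case $\fone=\lnot\fthree$, performs a secondary case analysis on the shape of $\fthree$.
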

\begin{proof}
For every formula of $\PPL$, call it $\fone$, there is a NSNF,
call it $\ftwo$, such that, for every $X$ with $\Var{\fone} \cup \Var{\ftwo} \subseteq X$, $\fone \equiv_{X} \ftwo$. 
The proof is by induction on the weight of $\fone$.
\begin{itemize}
\item \emph{Base case.} $\fone = \at{n}$, for some $n \in \Nat$. Let $\ftwo$ = $\fone$, as $\fone$ is already in NSNF.
\item \emph{Inductive case.} 
Assume that the procedure holds for formulas of weight up to $n$, 
and show it to exist for formulas of weight $n$+1. 
If $\fone \neq \neg \fthree$, then the proof is trivial: 
for IH there is a procedure for $\fthree$, so the same holds for $\fone$. Let $\fone = \neg \fthree$. There is a limited number of cases:

 \par $\fthree = \neg \fthree'$. 
For IH, there is a NSNF formula $\ffour$ such that $\fthree' \equiv_{X} \ffour$. Let $\ftwo = \ffour$. Since, $\fone = \neg \fthree = \neg \neg \fthree' \stackrel{IH}{\equiv}_{X} \neg \neg \ffour \equiv_{X} \ffour$, and $\ffour$ is in NSNF for IH. So $\fone$ $\equiv_{X}$ ($\ffour$ =) $\ftwo$ and $\ftwo$ is in NSNF.

 \par $\fthree = \fthree' \vee \fthree''.$ So, $\fone = \neg(\fthree' \vee \fthree'') \equiv_{X} \neg \fthree' \wedge \neg \fthree''$. For IH, there are two NSNF formulas $\ffour'$ and $\ffour''$ such that $\neg \fthree' \equiv_{X} \ffour'$ and $\neg \fthree'' \equiv_{X} \ffour''$. Let $\ftwo = \ffour' \wedge \ffour''$. As $\fone = \neg(\fthree' \vee \fthree'') \equiv_{X} \neg \fthree' \wedge \neg \fthree'' \stackrel{IH}{\equiv}_{X} \ffour' \wedge \ffour''$ = $\ftwo$, $\fone \equiv_{X} \ftwo$ and, since $\ffour'$ and $\ffour''$ are in NSNF (IH) $\ftwo$ is in NSNF as well.
 
 \par $\fthree = \fthree' \wedge \fthree''$. Equivalent to the previous case.
  
 \par $\fthree = \BOX^{q}_{a}\fthree'$. So, $\fone = \neg(\BOX^{q}_{a}\fthree') \stackrel{\text{Lem}~\ref{lemma:pseudoduality}}{\equiv} \DIA^{q}_{a}\fthree'$. 
For IH, there is a NSNF $\ffour$, such that $\fthree' \equiv_{X} \ffour$. Let $\ftwo = \DIA^{q}_{a}\ffour$. As $\fone = \neg(\BOX^{q}_{a}\fthree') 
\stackrel{\text{Lem}~\ref{lemma:pseudoduality}}{\equiv}_{X} \DIA^{q}_{a}(\fthree') \stackrel{IH}{\equiv}_{X} \DIA^{q}_{a}\ffour = \ftwo$, $\fone \equiv_{X} \ftwo$ and, since $\ffour$ is in NSNF, also $\ftwo$ = $\DIA^{q}_{a}\ffour$ is in NSNF.

 \par $\fthree = \DIA^{q}_{a}\fthree'.$ Equivalent to the previous case. 
\end{itemize}
\end{proof}

\begin{lemma}\label{NSNFPNF}
Every formula of $\PPL$ in NSNF can be converted into an equivalent formula in PNF (of the same or smaller weight).
\end{lemma}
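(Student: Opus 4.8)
The plan is to argue by induction on the weight of the NSNF formula $\fone$, producing for each $X$ with $\Var{\fone}\subseteq X$ a PNF $\ftwo$ such that $\fone\equiv_X\ftwo$ and $\ftwo$ has weight at most that of $\fone$. The base case is when $\fone$ is a literal ($\at{a}$ or $\lnot\at{a}$): such a formula contains no counting operator and is therefore already in PNF, so we take $\ftwo=\fone$. In the inductive step, the case $\fone=\BOX^q_a\fone_0$ (and dually $\fone=\DIA^q_a\fone_0$) is immediate: $\fone_0$ is an NSNF formula of smaller weight, so by the induction hypothesis it has an equivalent PNF $\op{1}\cdots\op{n}M$ with $M$ quantifier-free, and since $\equiv_X$ is a congruence we obtain $\fone\equiv_X\BOX^q_a\op{1}\cdots\op{n}M$, which is again a PNF. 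All the real work is thus in the case $\fone=\fone_1\star\fone_2$ with $\star\in\{\wedge,\vee\}$.

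In that case, by the induction hypothesis $\fone_1\equiv_X\fone_1'$ and $\fone_2\equiv_X\fone_2'$ with $\fone_1',\fone_2'$ in PNF and of no greater weight, so $\fone\equiv_X\fone_1'\star\fone_2'$. First I would dispose of the rational indices equal to $0$: by Lemma~\ref{Lq0}, a subformula $\BOX^0_b\psi$ is semantically equal to a quantifier-free tautology (say $\atom{0}_b\vee\lnot\atom{0}_b$) and $\DIA^0_b\psi$ to a quantifier-free contradiction, and replacing such a subformula yields an equivalent formula with strictly fewer counting operators, to which the induction hypothesis applies; hence I may assume that every rational index occurring in the prefixes of $\fone_1',\fone_2'$ is strictly positive, which is exactly what Lemma~\ref{lemma:commutations} requires. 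Next I would $\alpha$-rename the bound names in the prefixes of $\fone_1'$ and $\fone_2'$ so that none of them lies in $X$ or occurs free in the other operand.

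The prefixes are then extruded one counting quantifier at a time. Writing $\fone_1'=\op{1}\chi$ with $\op{1}$ its outermost counting quantifier, one commutes $\star$ so that the quantified formula sits on the right and applies the relevant clause of Lemma~\ref{lemma:commutations}: in a $\BOX$-clause the quantifier simply migrates outward and the rest is unchanged, whereas in a $\DIA$-clause the quantifier migrates outward, $\wedge$ and $\vee$ are swapped, and the other operand is negated. When that negation lands in front of a formula still carrying a prefix, one pushes it inward using Lemma~\ref{lemma:pseudoduality} (which turns $\BOX^q_a$ into $\DIA^q_a$ and back) together with De Morgan, and re-normalizes the quantifier-free part to NSNF; since negating a literal is again a literal and De Morgan preserves the number of connectives, this step is arranged to preserve the weight, and likewise the commutation steps preserve it. Iterating until both prefixes are exhausted produces a formula $\op{1}\cdots\op{k}M$ with $M$ quantifier-free and in NSNF, i.e. a PNF, equivalent to $\fone$ and of weight at most that of $\fone$.

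The main obstacle I expect is precisely the $\DIA$-clauses of the extrusion: unlike the $\BOX$-clauses they are not ``clean'', since they force a negation onto the opposite operand, which in general is itself a PNF with a nontrivial prefix, so the extrusion must be interleaved with pushing these negations back inside via pseudo-duality and De Morgan, all while (i) preventing name capture, which is why the $\alpha$-renaming precedes the extrusion, and (ii) checking that the body really ends up quantifier-free and the weight does not grow. Getting the bookkeeping of these interleaved rewrites right, and confirming that each primitive step leaves the weight unchanged (with the single exception of the zero-index cleanup, which only decreases it), is the delicate part of the argument.
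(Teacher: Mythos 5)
Your proof is correct and follows essentially the same route as the paper's: induction on weight, extruding counting quantifiers from conjunctions and disjunctions via Lemma~\ref{lemma:commutations}, and absorbing the negations produced by the $\DIA$-clauses via Lemma~\ref{lemma:pseudoduality} together with re-normalization to NSNF. Your explicit treatment of the zero indices via Lemma~\ref{Lq0} (needed because Lemma~\ref{lemma:commutations} assumes $q>0$) is in fact more careful than the paper's own proof, which does not address that side condition.
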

\begin{proof}
For every formula of $\PPL$ in NSNF, call it $\fone$, there is a PNF $\ftwo$ such that, for every $X$ with $\Var{\fone} \cup \Var{\ftwo} \subseteq X$, $\fone \equiv_{X} \ftwo$.
Assume $\fone$ is in NSNF.
First, all bound variable are renamed with fresh ones. 
Then, the lemma is proved by induction on the weight of $\fone$.
\begin{itemize}
\item \emph{Base case.} $\fone = \at{n}$. $\fone$ is already in PNF.
\item \emph{Inductive case.} Assume that the procedure holds for formulas of weight up to $n$, and show it to exist for formulas of weight $n$+1.

\par $\fone = \neg \fthree.$ Since $\fone$ is in NSNF, $\fthree$ must be atomic, so $\fone$ = $\neg \at{n}$ (for some $n \in \Nat$) is already in PNF.
\par $\fone = \fthree \vee \ffour.$ For IH, there are two PNF formulas $\fthree'$ and $\ffour'$ such that $\fthree \equiv_{X} \fthree'$ and $\ffour \equiv_{X} \ffour'$. There are some possible cases:
\begin{itemize}
\itemsep0em
\item Both $\fthree'$ and $\ffour'$ do not contain $\triangle$. Let $\ftwo$ = $\fthree' \vee \ffour'$. $\ftwo$ is in PNF and $\fone \equiv_{X} \ftwo$ (as $\fone$ = $\fthree \vee \ffour$ $\stackrel{IH}{\equiv}_{X}$ $\fthree' \vee \ffour'$ = $\ftwo$).
\item $\fthree'$ = $\triangle \fthree''$. If $\triangle$ is of the form $\BOX^{q}_{a}$, for Lemma~\ref{lemma:commutations}, 
$$
(\BOX^{q}_{a}\fthree'') \vee \ffour' \stackrel{\text{L}~\ref{lemma:commutations}}{\equiv} \BOX^{q}_{a}(\fthree'' \vee \ffour')
$$ 
of the same weight. For IH, $\fthree'' \vee \ffour'$ can be converted into a PNF formula, $\ftwo'$. Let $\ftwo$ = $\BOX^{q}_{a}\ftwo'$. Since $\ftwo'$ is in PNF, clearly also $\ftwo$ must be. Indeed, 
$$
\fone = \fthree \vee \ffour \stackrel{IH}{\equiv}_{X} (\BOX^{q}_{a}\fthree'') \vee \ffour' \stackrel{\text{L}~\ref{lemma:commutations}}{\equiv}_{X} \BOX^{q}_{a}(\fthree'' \vee \ffour') \stackrel{IH}{\equiv}_{X} \BOX^{q}_{a}\ftwo' = \ftwo.
$$ 
If $\Delta$ is of the form $\DIA^{q}_{a},$ the proof is the same by applying both Lemma~\ref{lemma:commutations} (with, $\DIA^{q}_{a}\fthree \vee \ffour \equiv_{X} \DIA^{q}_{a}(\neg\fthree \vee \ffour)$ and Lemma~\ref{NSNF}).
\end{itemize}
\par $\fone = \fthree \wedge \ffour$. The proof is similar to the previous one. 

\par $\fone = \BOX^{q}_{a}\ffour.$ For IH, there is a $\ffour'$ such that $\ffour'$ is in PNF and $\ffour \equiv_{X} \ffour'$. Let $\ftwo = \BOX^{q}_{a}\ffour'$. Since $\ffour'$ is in PNF, clearly $\ftwo$ is in PNF as well. Furthermore, since $\fone=\BOX^{q}_{a}$:
$$
\fone = \BOX^{q}_{a}\ffour \equiv_{X} \BOX^{q}_{a}\ffour' = \ftwo.
$$

\par $\fone = \DIA^{q}_{a}\fthree.$ The proof is equivalent to the previous one. 

\end{itemize}
\end{proof}
\noindent
Thus, every formula can be converted into an equivalent formula in PNF.
\begin{proof}[Proof of Proposition~\ref{PNF}]
By combining Lemma~\ref{NSNF} and Lemma~\ref{NSNFPNF}.
\end{proof}

\subsection{Positive Prenex Normal Form}\label{appendix5.2}


\primeEpsilon*
\noindent
To establish Lemma~\ref{lemma:epsilon} we need a few preliminary lemmas. First of all,
for all $k\in \mathbb N$, let $[0,1]_{k}$ indidate the set of rationals of the form
$q= \sum_{i=0}^{k} b_{i}\cdot 2^{-i}$, where $b_{i}\in \{0,1\}$.

\begin{lemma}\label{lemma:base2}
For all $p\leq 2^{k}$, $\frac{p}{2^{k}}\in [0,1]_{k}$.
\end{lemma}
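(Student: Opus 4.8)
The claim is that for all $p \leq 2^k$, the rational $\frac{p}{2^k}$ lies in $[0,1]_k$, i.e.\ can be written as $\sum_{i=0}^k b_i \cdot 2^{-i}$ with each $b_i \in \{0,1\}$. The plan is a straightforward induction on $k$ (or, alternatively, a direct appeal to binary expansion).

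For the base case $k=0$: if $p \leq 2^0 = 1$, then $p \in \{0,1\}$, so $\frac{p}{2^0} = p = b_0$ with $b_0 = p \in \{0,1\}$, which is of the required form $\sum_{i=0}^0 b_i \cdot 2^{-i}$.

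For the inductive step, assume the statement holds for $k$, and let $p \leq 2^{k+1}$. I would split on the parity of $p$. If $p$ is even, write $p = 2p'$ with $p' \leq 2^k$; then $\frac{p}{2^{k+1}} = \frac{p'}{2^k}$, and by the inductive hypothesis $\frac{p'}{2^k} = \sum_{i=0}^k b_i \cdot 2^{-i}$, which trivially extends to $\sum_{i=0}^{k+1} b_i \cdot 2^{-i}$ by setting $b_{k+1} = 0$. If $p$ is odd, write $p = 2p' + 1$ with $p' \leq 2^k - 1 < 2^k$ (the inequality $2p'+1 \leq 2^{k+1}$ forces $p' \leq 2^k - 1$ since $p'$ is an integer); then $\frac{p}{2^{k+1}} = \frac{p'}{2^k} + 2^{-(k+1)}$, and by the inductive hypothesis $\frac{p'}{2^k} = \sum_{i=0}^k b_i \cdot 2^{-i}$, so $\frac{p}{2^{k+1}} = \sum_{i=0}^k b_i \cdot 2^{-i} + 1 \cdot 2^{-(k+1)} = \sum_{i=0}^{k+1} b_i \cdot 2^{-i}$ with $b_{k+1} = 1$. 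In both cases $\frac{p}{2^{k+1}} \in [0,1]_{k+1}$, completing the induction.

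This lemma is entirely elementary; there is no real obstacle. The only point requiring a moment's care is bookkeeping the bound on $p'$ in the odd case so that the inductive hypothesis applies, and observing that $\frac{p}{2^k} \leq 1$ guarantees we stay in $[0,1]$ rather than overshooting. One could equally phrase the whole argument as: write $p$ in binary as $p = \sum_{j=0}^k c_j 2^j$ with $c_j \in \{0,1\}$ (possible since $p \leq 2^k$ means $p$ has at most $k+1$ binary digits, with the leading digit only nonzero when $p = 2^k$), and then $\frac{p}{2^k} = \sum_{j=0}^k c_j 2^{j-k} = \sum_{i=0}^k c_{k-i} 2^{-i}$, giving the desired representation with $b_i = c_{k-i}$.
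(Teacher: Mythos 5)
Your proposal is correct. Your primary argument — induction on $k$ with a split on the parity of $p$ — is a genuinely different (if only slightly longer) route from the paper's, which simply takes the base-$2$ expansion $p=\sum_{i=0}^{k}b_i\cdot 2^{i}$ and divides by $2^{k}$, reindexing to get $\frac{p}{2^{k}}=\sum_{i=0}^{k}b_{k-i}\cdot 2^{-i}$ in one line. Your closing remark reproduces exactly that direct argument, so in effect you give both proofs. The inductive version makes the bookkeeping (the bound $p'\leq 2^{k}-1$ in the odd case, and the handling of the boundary value $p=2^{k}$) fully explicit, which is a mild advantage; the paper's version is shorter but quietly relies on the observation you spell out, namely that $p\leq 2^{k}$ fits in $k+1$ binary digits so the sum $\sum_{i=0}^{k}$ has exactly the right number of terms. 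Either argument is acceptable here.
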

\begin{proof}
Let $b_{0}\dots b_{k}=(p)_{2}$ the base 2 writing of $p$ (with possibly all 0s at the end so that the length is precisely $k$), then  $p= \sum_{i=0}^{k}b_{i}\cdot 2^{i}$, so:
\begin{align*} 
\frac{p}{2^{k}} &=\dfrac{\sum_{i=0}^{k}b_{i}\cdot 2^{i}}{2^{k}} \\
&= \sum_{i=0}^{k}b_{i}\cdot 2^{-k+i} \\
&= \sum_{i=0}^{k}b_{k-i}\cdot 2^{-i}.
\end{align*}
\end{proof}

\begin{lemma}\label{lemma:k}
For all Boolean formula $\bone$ with $\FN(\bone)\subseteq \{a\}$, $\mu(\llbracket \bone\rrbracket_{\{a\}})\in [0,1]_{k}$, where $k$ is the maximum natural number such that $\bvar_{k}^{a}$ occurs in $\bone$.

\end{lemma}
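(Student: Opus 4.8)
The plan is to observe that whether $f\in\model{\bone}_{\{a\}}$ depends only on the bits $f(a)(i)$ for which the atom $\bvar_i^a$ actually occurs in $\bone$ --- and all such indices are $\le k$ --- so that $\model{\bone}_{\{a\}}$ lies in the finite Boolean sub-algebra of $\mathcal B((2^{\omega})^{\{a\}})$ generated by the cylinders $\Cyl{i}$ with $\bvar_i^a$ occurring in $\bone$. This forces its measure to be an integer multiple of $2^{-k}$, and Lemma~\ref{lemma:base2} then converts that fact into the desired membership in $[0,1]_k$.

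To make this precise I would argue by induction on the structure of $\bone$, proving the strengthened statement that $\model{\bone}_{\{a\}}$ is a finite union of pairwise disjoint \emph{basic cells}, a basic cell being a set of the form $\bigcap_{i}\Cyl{i}^{\epsilon_i}$ with $i$ ranging over the (at most $k$) atom-indices available and $\epsilon_i\in\{0,1\}$, where $\Cyl{i}^1=\Cyl{i}$ and $\Cyl{i}^0=\overline{\Cyl{i}}$, exactly as in the proof of Lemma~\ref{lemma:counti}; each basic cell has measure $2^{-k}$. The base cases are immediate: $\model{\top}_{\{a\}}$ and $\model{\bot}_{\{a\}}$ are the whole space and the empty set, both finite unions of basic cells, and $\model{\bvar_i^a}_{\{a\}}=\Cyl{i}$ is the union of all basic cells with $\epsilon_i=1$. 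For the inductive step one uses only that the basic cells partition $(2^{\omega})^{\{a\}}$, so the family of finite unions of basic cells is closed under complement, intersection and union, hence is preserved by $\lnot$, $\land$ and $\lor$. Consequently $\mu(\model{\bone}_{\{a\}})=p\cdot 2^{-k}$, where $p\le 2^{k}$ is the number of basic cells in the union --- this is precisely $\SSAT(\bone)\cdot 2^{-k}$, in accordance with Lemma~\ref{lemma:counti} --- and Lemma~\ref{lemma:base2} gives $p\cdot 2^{-k}\in[0,1]_k$.

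An alternative, slicker route is to skip the induction and invoke the named-formula analogue of Lemma~\ref{lemma:counti} directly, reading $\bone$ as a Boolean formula over the fixed list of all atoms of index $\le k$: this yields $\mu(\model{\bone}_{\{a\}})=\SSAT(\bone)\cdot 2^{-k}$ with $\SSAT(\bone)\le 2^{k}$, and Lemma~\ref{lemma:base2} finishes. I do not expect a genuine obstacle here; once the counting lemma is in place the argument is essentially bookkeeping. The only two points that deserve attention are: (i) atoms of index below $k$ that happen not to occur in $\bone$ must still be counted, so that the common denominator is genuinely $2^{k}$ --- this is automatic in the cell formulation, since the basic cells are defined using all available indices irrespective of which atoms appear; and (ii) the degenerate base cases, whose measures $0$ and $1$ lie in $[0,1]_k$ for every $k$. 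It is worth recording that the point of this lemma downstream is exactly to guarantee that the projected measures $\mu(\PROJ_f(\model{A}_{X\cup\{a\}}))$ take values in the finite, uniformly $2^{-k}$-granular set $[0,1]_k$, which is what licenses the $\epsilon$-shift used in the Epsilon Lemma.
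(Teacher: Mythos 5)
Your proposal is correct and matches the paper's argument: the paper's proof of this lemma is exactly your ``slicker route,'' namely reading $\bone$ as a Boolean formula over the atoms $\bvar_{0}^{a},\dots,\bvar_{k}^{a}$, applying Lemma~\ref{lemma:counti} to get $\mu(\model{\bone}_{\{a\}})=\SSAT(\bone)\cdot 2^{-k}$, and concluding with Lemma~\ref{lemma:base2}. Your inductive cell decomposition is just the proof of Lemma~\ref{lemma:counti} unrolled, so nothing genuinely different is at stake.
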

\begin{proof}
By Lemma~\ref{lemma:counti} and Lemma~\ref{lemma:base2}:
$$
\mu(\llbracket \bone\rrbracket_{\{a\}})=\sharp \bone= \dfrac{\sharp\{m:\{\bvar_{0}^{a},\dots, \bvar_{k}^{a}\}\to \{0,1\}\mid m\vDash \bone\}}{2^{k}}\in [0,1]_{k}.
$$
\end{proof}

\begin{lemma}\label{lemma:pif}
For all $S\in \mathscr B\big( (2^{\omega})^{X\cup \{a\}}\big)$ and $f:X\to 2^{\omega}$, 
$\overline{\Pi_{f}(S)}= \Pi_{f}(\overline S)$.
\end{lemma}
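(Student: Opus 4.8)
The plan is to prove the equality by a direct double inclusion (equivalently, by a chain of biconditionals), unfolding the definition of the $f$-projection. Throughout, write $Y=\{a\}$, so that $\PROJ_{f}(S)=\{g\in(\twoOm)^{Y}\mid f+g\in S\}\subseteq(\twoOm)^{Y}$, and recall that the complement $\overline{\PROJ_{f}(S)}$ is taken inside $(\twoOm)^{Y}$, while $\overline{S}$ is taken inside $(\twoOm)^{X\cup Y}$. Since $X$ and $Y$ are disjoint, $f+g$ is a well-defined element of $(\twoOm)^{X\cup Y}$ for every $g\in(\twoOm)^{Y}$, which is all we shall use.

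First I would fix an arbitrary $g\in(\twoOm)^{Y}$ and record the following chain of equivalences: $g\in\overline{\PROJ_{f}(S)}$ iff $g\notin\PROJ_{f}(S)$ iff $f+g\notin S$ iff $f+g\in\overline{S}$ iff $g\in\PROJ_{f}(\overline{S})$. The first and last steps are merely the definitions of complement (in $(\twoOm)^{Y}$) and of $\PROJ_{f}$; the second step is the definition of $\PROJ_{f}(S)$; the third step is the definition of complement in $(\twoOm)^{X\cup Y}$. Since $g$ was arbitrary, this yields $\overline{\PROJ_{f}(S)}=\PROJ_{f}(\overline{S})$.

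I expect no genuine obstacle here: the statement is purely set-theoretic and in fact holds for every subset $S\subseteq(\twoOm)^{X\cup\{a\}}$, the Borel hypothesis being relevant only insofar as it guarantees that $\overline{S}$, and hence $\PROJ_{f}(\overline{S})$, stays within the measurable structure used in the surrounding development (in particular, $\PROJ_{f}(\overline S)$ is itself Borel, being a section of a Borel set). The sole point requiring a modicum of care is to keep the two ambient spaces straight, so that the two occurrences of the overline are not conflated.
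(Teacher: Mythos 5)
Your proof is correct and follows essentially the same route as the paper's: both arguments unfold the definition of $\Pi_{f}$ and pass the complement through the membership condition $f+g\in S$ in a short chain of equivalences. Your additional remark about keeping the two ambient spaces straight is accurate but does not change the substance.
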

\begin{proof}
Indeed:
\begin{align*}
\overline{\Pi_{f}(S)} &=\overline{\{g:\{a\}\to 2^{\omega}\mid f+g\in S\}} \\
&= \{g:\{a\}\to 2^{\omega}\mid f+g\notin S\} \\
&= \{g:\{a\}\to 2^{\omega}\mid f+g\in \overline S\} \\
&= \Pi_{f}(\overline S).
\end{align*}
\end{proof}

\begin{lemma}\label{lemma:leq}
For all $S\in \mathscr B( (2^{\omega})^{X})$ and $r\in [0,1]$, 
$\mu(S)\leq r$ iff $\mu(\overline S)\geq 1-r$.
\end{lemma}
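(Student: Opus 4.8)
\textbf{Proof plan for Lemma~\ref{lemma:leq}.} The plan is to reduce everything to the single fact that the Borel measure $\mu$ on $(2^{\omega})^{X}$ is a \emph{probability} measure, i.e.\ $\mu\big((2^{\omega})^{X}\big)=1$. This holds because $(2^{\omega})^{X}$ is a finite product of copies of $2^{\omega}$, each equipped with the standard (fair-coin) Borel measure of total mass $1$, and a finite product of probability measures is again a probability measure.

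Given this, I would argue as follows. Since $S$ is measurable, so is $\overline S=(2^{\omega})^{X}\setminus S$; moreover $S$ and $\overline S$ are disjoint and $S\cup\overline S=(2^{\omega})^{X}$. By finite additivity of $\mu$ and the normalization just recalled,
\begin{equation*}
\mu(S)+\mu(\overline S)=\mu\big((2^{\omega})^{X}\big)=1,
\end{equation*}
hence $\mu(S)=1-\mu(\overline S)$. Substituting into the inequality $\mu(S)\leq r$ gives $1-\mu(\overline S)\leq r$, which is equivalent to $\mu(\overline S)\geq 1-r$; reading the chain of equivalences in the other direction yields the converse implication. This is entirely routine, so there is no real obstacle: the only point worth making explicit is the normalization $\mu\big((2^{\omega})^{X}\big)=1$, everything else being a one-line rearrangement of an equation between reals.
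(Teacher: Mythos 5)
Your proof is correct and follows the same route as the paper: both rest on the identity $\mu(S)+\mu(\overline S)=\mu\big((2^{\omega})^{X}\big)=1$ and a one-line rearrangement. You merely spell out the normalization and additivity steps that the paper leaves implicit.
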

\begin{proof}
The claim follows from: 
$$
1=\mu((2^{\omega})^{X})=\mu(S\cup \overline S)=\mu(S)+\mu(\overline S).
$$
\end{proof}
\noindent
We have now all ingredients to prove Lemma~\ref{lemma:epsilon}.
%
\begin{proof}[Proof of Lemma~\ref{lemma:epsilon}]
Let $\bone_{A}$ be a Boolean formula satisfying $\llbracket \bone\rrbracket_{X\cup\{a\}}=\llbracket \bone_{A}\rrbracket_{X\cup\{a\}}$ (which exists by Lemma \ref{lemma:bone}).
Let $\bone_{A}$ be $a$-decomposable as $\bigvee_{i}^{n}\bthree_{i}\land \bfour_{i}$ and let $k$ be maximum such that $\bvar_{k}^{a}$ occurs in $\bone_{A}$.
By Lemma \ref{lemma:k} for all $i=0,\dots,n$, $\mu(\llbracket \bthree_{i}\rrbracket_{\{a\}})\in [0,1]_{k}$.
This implies in particular that for all $f:X\to 2^{\omega}$, $\mu(\Pi_{f}(\llbracket A\rrbracket_{X\cup \{a\}}))\in [0,1]_{k}$, since $\Pi_{f}(\llbracket A\rrbracket_{X\cup \{a\}}))$ coincides with the unique $\llbracket \bthree_{i}\rrbracket_{\{a\}}$ such that $f\in \llbracket \bfour_{i}\rrbracket_{X}$ (by~Lemma~\ref{lemma:fundamental}).
Now, if $r\notin [0,1]_{k}$, let $\epsilon=0$; and if $r\in[0,1]_{k}$, then 
if $r=1$, let $\epsilon=- 2^{-(k+1)}$, and if $r\neq 1$, let $\epsilon=2^{-(k+1)}$.
In all cases $r+\epsilon \notin [0,1]_{k}$, so we deduce
\noindent
\medskip
\adjustbox{scale=0.95}{
\begin{minipage}{\linewidth}
\begin{align*}
\llbracket \lnot \BOX^{r}_{a}A\rrbracket_{X} & \ =
\{f:X\to 2^{\omega}\mid \mu\big(\Pi_{f}(\llbracket A\rrbracket_{X\cup\{a\}})\big)< r\} \\
& \ =
\{f:X\to 2^{\omega}\mid \mu\big(\Pi_{f}(\llbracket A\rrbracket_{X\cup\{a\}})\big)\leq r+\epsilon\} 
\\ &
\stackrel{\text{L~\ref{lemma:leq}}}{=}
\{f:X\to 2^{\omega}\mid \mu\big(\overline{\Pi_{f}(\llbracket A\rrbracket_{X\cup\{a\}})}\big)\geq 1-(r+\epsilon)\}
\\ &
\stackrel{\text{L~\ref{lemma:pif}}}{=}
\{f:X\to 2^{\omega}\mid \mu\big(\Pi_{f}(\overline{\llbracket A\rrbracket_{X\cup\{a\}})}\big)\geq 1-(r+\epsilon)\}
\\ & \ 
=
\{f:X\to 2^{\omega}\mid \mu\big(\Pi_{f}({\llbracket \lnot A\rrbracket_{X\cup\{a\}})}\big)\geq 1-(r+\epsilon)\}
\\& \ =
\llbracket \BOX^{1-(r+\epsilon)}_{a}\lnot A\rrbracket_{X}.
\end{align*}
\end{minipage}
}

%
\end{proof}

\section{Proofs for Section \ref{section6}}\label{appendix6}

\subsection{The Type System $\TTT$}

\subsubsection{Permutative Normal Forms}

To characterize the $\PNF$ we introduce the following class of terms, called $(S,a)$-trees.

\begin{definition}\label{def:satree}
Let $S$ be a set of name-closed $\lambda$-terms and $a$ be a name.
For all $i\in \mathbb N\cup\{-1\}$, the set of \emph{$(S,a)$-trees of level $i$} is defined as follows:
\begin{itemize}
\item any $t\in S$ is a $(S,a)$-tree of level $-1$;
\item if $t,u$ are $(S,a)$-trees of level $j$ and $k$, respectively, and $j,k<i$, then $t\oplus_{a}^{i}u$ is a $(S,a)$-tree of level $i$.
\end{itemize}
%
The \emph{support} of a $(S,a)$-tree $t$, indicated as $\mathrm{Supp}(t)$, is the finite set of terms in $S$ which are leaves of $t$.
\end{definition}

%

\begin{definition}\label{def:pnf}
The sets $\mathcal T$ and $\mathcal V$ of name-closed terms are defined inductively as follows:
\begin{itemize}
\item all variable $x\in \mathcal V$;

\item if $t\in \mathcal V$, $\lambda x.t\in \mathcal V$;
\item if $t\in \mathcal V$ and $u\in \mathcal T$, then $tu\in \mathcal V$;

\item if $t\in \mathcal V$, then $t\in \mathcal T$;


\item if $ t$ is a $(\mathcal T, a)$-tree, then $\nu a. t\in \mathcal T$.

\end{itemize}
\end{definition}

\begin{lemma}\label{lemma:enne}
For all name-closed term $t\in \mathcal T$, $t\in \mathcal V$ iff it does not start with $\nu$.
\end{lemma}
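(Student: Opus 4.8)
The statement to prove is Lemma~\ref{lemma:enne}: for all name-closed $t \in \mathcal T$, we have $t \in \mathcal V$ if and only if $t$ does not start with $\nu$. The natural approach is a straightforward structural induction following the mutual inductive definition of $\mathcal T$ and $\mathcal V$ given in Definition~\ref{def:pnf}. The ``only if'' direction is essentially immediate by inspection of the clauses generating $\mathcal V$: a variable, a term $\lambda x.t$, and an application $tu$ are all syntactically distinct from any $\nu a.s$, so no element of $\mathcal V$ starts with $\nu$. Formally I would verify this by induction on the derivation that $t \in \mathcal V$, noting that each of the three production rules for $\mathcal V$ produces a term whose outermost constructor is not $\nu$.

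\textbf{The ``if'' direction.} Here I would argue by induction on the derivation witnessing $t \in \mathcal T$. By Definition~\ref{def:pnf}, there are exactly two ways a name-closed term enters $\mathcal T$: either $t \in \mathcal V$ (and then we are done, $t \in \mathcal V$ trivially), or $t = \nu a.s$ where $s$ is a $(\mathcal T,a)$-tree. In the second case $t$ visibly starts with $\nu$, so the hypothesis ``$t$ does not start with $\nu$'' is not met, and the implication holds vacuously. Thus the only subtlety is making sure the case analysis on membership in $\mathcal T$ is genuinely exhaustive — that every name-closed $t \in \mathcal T$ is obtained by one of these two clauses — which is exactly what Definition~\ref{def:pnf} asserts, so there is nothing further to check.

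\textbf{Main obstacle.} There is essentially no hard step: the lemma is a bookkeeping observation about the shape of the grammar, and its real purpose is to license the dichotomy (pseudo-value vs.\ $\nu$-term) used in the subsequent analysis of permutative normal forms. The only point requiring minor care is that $\mathcal T$ and $\mathcal V$ are defined by a \emph{simultaneous} induction, so when doing induction on the $\mathcal T$-derivation one should be careful to invoke the correct clause; but since the clause ``$t \in \mathcal V \Rightarrow t \in \mathcal T$'' is the only one that does not already exhibit a $\nu$ at the head, the argument collapses to the two-case split described above. I would therefore write the proof as: ($\Leftarrow$) if $t \in \mathcal T$ does not start with $\nu$, then since the clause $t = \nu a.s$ is excluded, $t$ must have entered $\mathcal T$ via $t \in \mathcal V$; ($\Rightarrow$) conversely, an easy induction on the $\mathcal V$-derivation shows every element of $\mathcal V$ has head constructor among variable, $\lambda$, or application, none of which is $\nu$.
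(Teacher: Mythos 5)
Your proof is correct. The only difference from the paper is in the converse direction: the paper argues by structural induction on the term $t$ itself, going through the syntactic cases $x$, $\lambda x.u$, $uv$, and $u\oplus_a^i v$ (the last dismissed because a name-closed term cannot have a bare choice operator at its head), and invoking the induction hypothesis in the $\lambda$-case. You instead invert the last clause of the derivation of $t\in\mathcal T$: only two clauses generate $\mathcal T$, one of them forces a leading $\nu$, so a term not starting with $\nu$ must have entered via $t\in\mathcal V$. Your route is if anything more direct — it needs no induction and never touches the $\oplus$ case — and it is a legitimate reading of the simultaneous inductive definition. Both arguments deliver the same dichotomy (pseudo-value versus $\nu$-term) that is then used in Lemma~\ref{lemma:PNF}.
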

\begin{proof}
First observe that if $t=\nu a.b$, then $t\notin \mathcal V$. For the converse direction, we argue by induction on $t\neq \nu a.b$:
	\begin{itemize}
	\item if $t=x$ then $t\in \mathcal V$;
	\item if $t= \lambda x.u$ then by IH $u\in \mathcal V$, so $t\in \mathcal V$;
	\item if $t=uv$, then the only possibility is that $t\in \mathcal V$;
	\item if $t=u\oplus_{a}^{i}v$, then $t$ is not name-closed, again the hypothesis.
	\end{itemize}
\end{proof}


For all term $t$ and $a\in \FN(t)$, we let $I_{a}(t)$ be the maximum index $i$ such that $\oplus_{a}^{i}$ occurs in $t$, and $I_{a}(t)=-1$ if $\oplus_{a}^{i}$ does not occur in $t$ for all index $i$.

We now show that Definition \ref{def:pnf} precisely captures $\PNF$.
\begin{lemma}\label{lemma:PNF}
A name-closed term $t$ is in $\PNF$ iff  $t\in \mathcal T$.

\end{lemma}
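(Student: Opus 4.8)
\textbf{Proof plan for Lemma~\ref{lemma:PNF}.}
The plan is to prove both directions by structural analysis, using Lemma~\ref{lemma:enne} to separate the ``pseudo-value'' case from the ``generator'' case. For the direction ($\Leftarrow$), namely that every $t\in\mathcal T$ is in $\PNF$, I would argue by induction on the definition of $\mathcal T$ and $\mathcal V$ given in Definition~\ref{def:pnf}. For each clause I must check that no permutative redex (from Fig.~\ref{fig:permutations}) occurs at the root, and, by the inductive hypothesis, none occurs in the immediate subterms. The variable case is immediate. For $\lambda x.t$ with $t\in\mathcal V$: by IH $t$ is a $\PNF$, and since $t\in\mathcal V$, by Lemma~\ref{lemma:enne} $t$ does not start with $\nu$ nor (being name-closed and in $\mathcal V$) with a top-level $\oplus$, so none of the redexes $\lambda x.(u\oplus_a^i v)$ or $\lambda x.\nu a.u$ applies at the root. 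For $tu$ with $t\in\mathcal V$, $u\in\mathcal T$: the only head-position redexes are $(u_1\oplus_a^i u_2)v$, $(\nu a.u_1)v$, and $\beta$; since $t\in\mathcal V$ does not start with $\oplus$ or $\nu$, and $t$ being a pseudo-value cannot be a $\lambda$-abstraction \emph{applied} creating a $\beta$-redex unless $t=\lambda x.t'$ — here I need to be slightly careful, since $\lambda x.t'\in\mathcal V$ and $(\lambda x.t')u$ would be a $\beta$-redex, so I must recall that $\PNF$ only rules out \emph{permutative} redexes, not $\beta$-redexes, consistent with the earlier remark that non-normalizing terms are typable. For $\nu a.t$ where $t$ is a $(\mathcal T,a)$-tree: I need to verify that $\nu a.t\to_{\mathsf p}t$ does not apply, i.e. $a\in\FN(t)$ — this requires that the $(\mathcal T,a)$-tree genuinely uses $\oplus_a^i$ somewhere, which holds for trees of level $\ge 0$; for level $-1$ the tree is a single $s\in\mathcal T$ and one would need $a\in\FN(s)$, so either the definition implicitly requires this or I treat $\nu a.s$ with $a\notin\FN(s)$ as reducing — I would check this boundary case against Definition~\ref{def:satree} and possibly note that such trees are excluded. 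The remaining tree redexes ($\oplus$-idempotence, associativity-style rules, the commutation rules between $\oplus_a^i$ and $\oplus_b^j$ when $(a,i)<(b,j)$, and $\nu b.(u\oplus_a^i v)$) are excluded by the level constraint $j,k<i$ in Definition~\ref{def:satree}, which enforces a strict ordering on the indices appearing top-down, and by the fact that the leaves of the tree lie in $\mathcal T$.

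For the direction ($\Rightarrow$), that every name-closed $\PNF$ $t$ lies in $\mathcal T$, I would again induct on the structure of $t$. By Lemma~\ref{lemma:enne} there are two cases. If $t$ does not start with $\nu$, I show $t\in\mathcal V$: if $t=x$, done; if $t=\lambda x.u$, then $u$ is a name-closed $\PNF$ (a subterm of a normal form is normal, and name-closedness is preserved), and moreover $u$ cannot start with $\nu$ or with a top-level $\oplus$ since otherwise $t$ would contain the redex $\lambda x.\nu a.u'$ or $\lambda x.(u_1\oplus_a^i u_2)$, hence by IH and Lemma~\ref{lemma:enne} $u\in\mathcal V$, so $t\in\mathcal V$; if $t=t_1 t_2$, then $t_1$ is a name-closed $\PNF$ not starting with $\nu$ or $\oplus$ (else $(\nu a.\,\cdot\,)v$ or $(\cdot\oplus_a^i\cdot)v$ redex), so $t_1\in\mathcal V$ by IH, and $t_2\in\mathcal T$ by IH, giving $t\in\mathcal V\subseteq\mathcal T$. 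If $t=\nu a.u$ starts with $\nu$: since $t$ is normal, $\nu a.u\not\to_{\mathsf p}u$, so $a\in\FN(u)$; and no redex of the form $\nu a.(u_1\oplus_b^j u_2)$, $\lambda x.\nu a$-permutation, or $\nu$-$\nu$ interaction can be fired. I then need the key structural claim: a name-closed normal $u$ with $a\in\FN(u)$, such that $\nu a.u$ is normal, must be a $(\mathcal T,a)$-tree. Here I would argue by a sub-induction on $u$: $u$ cannot be a variable, $\lambda$, or application (those are name-closed only without free $a$, or would make $\nu a.u$ non-name-closed-at-the-leaf-level incorrectly — more precisely, if $u\in\mathcal V$ then $\FN(u)$ could still contain $a$ if $u$ had free $\oplus_a^i$, but $u\in\mathcal V$ cannot have a \emph{top-level} $\oplus$; I must handle $u=\lambda x.u'$ with $a\in\FN(u')$, in which case $\nu a.\lambda x.u'\to_{\mathsf p}\lambda x.\nu a.u'$ is a redex, contradicting normality; similarly $u=u_1 u_2$ with $a$ free gives the redex $(\nu a.u_1')u_2'$ after permuting — I'd need the $\nu$ to commute down, so again contradiction). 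Hence the only possibility is $u=u_1\oplus_a^i u_2$ (the index must be $a$-labelled, since if it were $b\neq a$ we'd have the redex $\nu a.(u_1\oplus_b^i u_2)\to(\nu a.u_1)\oplus_b^i(\nu a.u_2)$). Then $u_1,u_2$ are normal; by a further analysis each is either a leaf in $\mathcal T$ or again headed by some $\oplus_a^j$, and the idempotence/associativity redexes being absent forces the strict index decrease $j<i$, yielding precisely the $(\mathcal T,a)$-tree structure.

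\textbf{Main obstacle.} The delicate part is the $\nu a.u$ case of the ($\Rightarrow$) direction: one must show that normality of $\nu a.u$ together with $a\in\FN(u)$ forces $u$ to have exactly the tree shape of Definition~\ref{def:satree}, including the strict level/index ordering and the fact that the generator name on every top-level $\oplus$ is $a$. This needs a careful enumeration of all permutative rules of Fig.~\ref{fig:permutations} that could fire once the $\nu a$ is pushed inward, and an argument that the absence of the ``commuting'' redexes $(t\oplus_a^i u)\oplus_b^j v\to\ldots$ for $(a,i)<(b,j)$, $(t\oplus_a^i u)\oplus_a^i v\to t\oplus_a^i v$, and $t\oplus_a^i t\to t$ exactly pins down the level constraint $j,k<i$. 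I expect the rest of the proof — the closure of normal forms under taking subterms, preservation of name-closedness, and the routine clause-by-clause checks in the ($\Leftarrow$) direction — to be straightforward, and I would relegate those to brief remarks. I would also double-check the boundary between Definition~\ref{def:satree} (level $-1$ trees) and the requirement $a\in\FN(\cdot)$, either by reading the intended convention off Definition~\ref{def:satree} or by explicitly noting the degenerate case $\nu a.s$ with $a\notin\FN(s)$ is not a $\PNF$ and not in $\mathcal T$.
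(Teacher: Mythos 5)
Your overall strategy coincides with the paper's: both directions by structural induction, with the crux being a sub-induction showing that the body of a normal name-closed $\nu a.u$ has the $(\mathcal T,a)$-tree shape, and you correctly isolate that sub-induction as the delicate point. You also rightly flag the level-$(-1)$ boundary (the paper handles it in the forward direction by observing that $I_{a}(u)=-1$ would make $\nu a.u\pperm u$ fire, and restricts to level $I_a(t)\geq 0$ in Corollary~\ref{cor:pnf}).

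However, there is a genuine gap precisely where you predicted trouble. In the sub-induction you exclude $u=\lambda x.u'$ and $u=u_{1}u_{2}$ (with $a$ free) by invoking the reductions $\nu a.\lambda x.u'\to_{\mathsf p}\lambda x.\nu a.u'$ and a ``commute the $\nu$ down'' step. Neither exists: the permutative rules of Fig.~\ref{fig:permutations} are $\lambda x.\nu a.t\to_{\mathsf p}\nu a.\lambda x.t$ and $(\nu a.t)u\to_{\mathsf p}\nu a.(tu)$, i.e.\ they only move $\nu$ \emph{outward}, so $\nu a.\lambda x.u'$ and $\nu a.(u_1 u_2)$ are not redexes of this form and your contradiction does not materialize. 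The paper repairs this by strengthening the sub-induction invariant: it proves that $u$ is a $(\mathcal T,a)$-tree \emph{of level $I_{a}(u)$}. Then, in the $\lambda$ case, $I_a(v)=I_a(u)\geq 0$, so the sub-IH forces $v=v_{1}\oplus_{a}^{I_{a}(u)}v_{2}$ and $u=\lambda x.(v_{1}\oplus_{a}^{I_a(u)}v_{2})$ contains the redex $\lambda x.(t\oplus_{a}^{i}u)\to_{\mathsf p}(\lambda x.t)\oplus_{a}^{i}(\lambda x.u)$; in the application case the subterm realizing $\max\{I_a(v),I_a(w)\}$ is likewise $\oplus_a$-headed by the sub-IH and triggers $(t\oplus_{a}^{i}u)v\to_{\mathsf p}\cdots$ or $t(u\oplus_{a}^{i}v)\to_{\mathsf p}\cdots$. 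Without tracking the level in the invariant you cannot locate the top-level $\oplus_a$ needed to fire these $\oplus$-rules, so your plan as written would stall at exactly this step. A second, smaller inaccuracy: in the ($\Leftarrow$) direction the idempotence redex $t\oplus_{a}^{i}t\to_{\mathsf p}t$ is \emph{not} excluded by the level constraint $j,k<i$ (nothing in Definition~\ref{def:satree} prevents the two subtrees from being equal), only the two absorption rules with repeated index $i$ are; the paper glosses over this direction entirely, but your claim that the level constraint handles all tree redexes is too strong.
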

\begin{proof}
\begin{description}
\item[($\To$)]
We argue by induction on $t$. 
If $t$ has no bound name, then $t$ is obviously in $\mathcal T$. Otherwise:
\begin{itemize}
\item if $t= \lambda x.u$, then $u$ is also name-closed. Hence, by IH $u\in \mathcal T$; observe that $u$ cannot start  with $\nu$ (as $t$ would not be normal) so by Lemma \ref{lemma:enne}, $u\in \mathcal V$, which implies $t\in \mathcal T$; 


\item if $t=uv$, then $u$ and $v$ are both name-closed, and so by induction $u,v\in \mathcal T$; if $u$ started with $\nu$, then $t$ would not be normal, hence, by Lemma \ref{lemma:enne}, $u\in \mathcal V$, and we conclude then that  $t\in \mathcal T$.

\item if $t=u\oplus_{a}^{i}v$, then it cannot be name-closed;

\item if $t= \nu a.u$, then we show, by a sub-induction on $u$, that $u$ is a $(\mathcal T, a)$-tree of level $I_{a}(u)$: first note that $I_{a}(u)$ cannot be $-1$, since otherwise $\nu a.u \pperm u$, so $u$ would not be normal. We now consider all possible cases for $u$:

		\begin{itemize}
		\item $u$ cannot be a variable, or $I_{a}(u)$ would be $-1$;
		\item If $u=\lambda x.v$, then $I_{a}(v)=I_{a}(u)$ and so by sub-IH $v$ is a $(\mathcal T, a)$-tree of level $I_{a}(u)$, which implies that $u= \lambda x. v_{1}\oplus_{a}^{I_{a}(u)} v_{2}$, which is not normal. Absurd.
		
		\item if $u= vw$, then let $J=I_{a}(u)= \max\{ I_{a}(v),I_{a}(w)\}$ cannot be $-1$, since otherwise $I_{a}(u)$ would be $-1$. Hence $J\geq 0$ is either $I_{a}(v)$ or $I_{a}(w)$. We consider the two cases separatedly:
			\begin{itemize}
			\item if $J=I_{a}(v)$, then by sub-IH, $v= v_{1}\oplus_{a}^{J}v_{2}$, and thus 
			$u= ( v_{1}\oplus_{a}^{J}v_{2})w$ is not normal;
			\item if $J=I_{a}(w)$, then by sub-IH, $w= w_{1}\oplus_{a}^{J}w_{2}$, and thus 
			$u= v( w_{1}\oplus_{a}^{J}w_{2})$ is not normal.
			
			\end{itemize}
		In any case we obtain an absurd conclusion.
		
		\item if $u= u_{1}\oplus_{a}^{i}u_{2}$, then if $u_{1},u_{2}$ are both in $\mathcal V$, we are done, since  $u$ is a $(\mathcal T, a)$-tree of level $i=I_{a}(u)$. Otherwise, if $i< I_{a}(u)$, then $J=I_{a}(u)= \max\{ I_{a}(v), I_{a}(w)\}$, so we consider 	two cases:
		\begin{itemize}
		\item if $J=I_{a}(v)>i$, then by sub-IH, $v= v_{1}\oplus_{a}^{J}v_{2}$, and thus 
		$u= ( v_{1}\oplus_{a}^{J}v_{2})\oplus_{a}^{i}u_{2}$ is not normal;
		\item if $J=I_{a}(w)>i$, then by sub-IH, $w= w_{1}\oplus_{a}^{J}w_{2}$, and thus 
		$u= v\oplus_{a}^{i}( w_{1}\oplus_{a}^{J}w_{2})$ is not normal.
			
		\end{itemize}
		We conclude then that $i=I_{a}(u)$;  we must then show that $I_{a}(v),I_{a}(w)<i$. Suppose first $I_{a}(v)\geq i$, then $( v_{1}\oplus_{a}^{J}v_{2})\oplus_{a}^{i}u_{2}$ is not normal. In a similar way we can show that 
		$I_{a}(w)<i$. 

		\item if $u=\nu b.v$, then $I_{a}(u)=I_{a}(v)$, so by sub-IH $v= v_{1}\oplus_{a}^{I_{a}(u)}v_{2}$, and we conclude that $u=\nu b.  v_{1}\oplus_{a}^{i}v_{2}$ is not normal.
		
		\end{itemize}
	
\end{itemize}

\item[($\Leftarrow$)] It suffices to check by induction on $t\in \mathcal T$ that it is in $\PNF$.

\end{description}
\end{proof}

\begin{corollary}\label{cor:pnf}
A name-closed term of the form $\nu a.t$ is in $\PNF$ iff $t$ is a $(\mathcal T, a)$-tree of level $I_{a}(t)\geq 0$.
\end{corollary}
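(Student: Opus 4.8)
\textbf{Proof plan for Corollary~\ref{cor:pnf}.} The statement is an immediate consequence of the characterization already obtained in Lemma~\ref{lemma:PNF}, combined with the syntactic analysis of terms starting with $\nu$. The plan is to simply peel off the outer $\nu$ and invoke the two ingredients we already have: membership in $\mathcal T$ captures being in $\PNF$, and Definition~\ref{def:pnf} tells us exactly which bodies $t$ make $\nu a.t$ land in $\mathcal T$.

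First I would argue the forward direction. Assume $\nu a.t$ is name-closed and in $\PNF$. By Lemma~\ref{lemma:PNF}, $\nu a.t\in \mathcal T$. Inspecting the clauses of Definition~\ref{def:pnf}, the only clause that can produce a term of the form $\nu a.t$ is the last one, which requires $t$ to be a $(\mathcal T,a)$-tree. (The clauses for $\mathcal V$ never yield a term starting with $\nu$, and the inclusion $\mathcal V\subseteq\mathcal T$ does not help since $\nu a.t\notin\mathcal V$ by Lemma~\ref{lemma:enne}.) It then remains to check that the level of this $(\mathcal T,a)$-tree is $I_a(t)$ and that $I_a(t)\geq 0$. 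For the level: a $(\mathcal T,a)$-tree built by the second clause of Definition~\ref{def:satree} at top level with index $i$ has level $i$, and one reads off from the definition of $I_a$ that the top index of the tree is exactly the maximum index $i$ such that $\oplus_a^i$ occurs, i.e. $I_a(t)$ — this is essentially the bookkeeping already carried out inside the proof of Lemma~\ref{lemma:PNF} in the case $t=\nu a.u$. For $I_a(t)\geq 0$: if $I_a(t)=-1$, then $\oplus_a^i$ does not occur in $t$ for any $i$, so in particular $a\notin\FN(t)$, and then the permutative rule $\nu a.t \to_{\mathsf p} t$ applies, contradicting $\nu a.t$ being normal.

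For the converse direction, assume $t$ is a $(\mathcal T,a)$-tree of level $I_a(t)\geq 0$. Then by the last clause of Definition~\ref{def:pnf}, $\nu a.t\in\mathcal T$, and since $\nu a.t$ is name-closed (its free names are those of $t$ minus $a$, and $t$ being a $(\mathcal T,a)$-tree means its leaves are name-closed, so the only free name that could survive is $a$), Lemma~\ref{lemma:PNF} gives that $\nu a.t$ is in $\PNF$.

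The only mildly delicate point — and the one I would write out with some care — is the matching between ``level of the $(\mathcal T,a)$-tree'' and the quantity $I_a(t)$: one has to observe that in a $(\mathcal T,a)$-tree every internal node has a strictly larger index than its children (second clause of Definition~\ref{def:satree}), so the root index both equals the level and equals the maximum of all indices appearing, which is the definition of $I_a(t)$. Everything else is a direct restriction of Lemma~\ref{lemma:PNF} to terms whose outermost constructor is $\nu$, so no real obstacle is expected.
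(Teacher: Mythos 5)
Your proposal is correct and follows exactly the route the paper intends: the corollary is extracted from Lemma~\ref{lemma:PNF}, whose proof in the case $t=\nu a.u$ already establishes that $u$ is a $(\mathcal T,a)$-tree of level $I_a(u)$ and rules out $I_a(u)=-1$ via the permutation $\nu a.u\to_{\mathsf p}u$. Your extra care about matching the tree level with $I_a(t)$ is the same bookkeeping the paper performs inside that case analysis, so there is nothing to add.
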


\subsubsection{The Subject Reduction Property}

We now establish the following standard property for $\TTT$:

\subjectreduction*

To establish the subject reduction property, we first need to establish a few auxiliary lemmas.

\begin{lemma}\label{lemma:weak2}
If $\Gamma \vdash^{X,r}t: \bone \Pto \sigma$ and $X\subseteq Y$, then 
$\Gamma \vdash^{Y,r}t: \bone\Pto \sigma$.
\end{lemma}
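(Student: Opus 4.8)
\textbf{Proof plan for Lemma~\ref{lemma:weak2} (weakening of the name context).}
The statement asserts that if $\Gamma \vdash^{X,r} t : \bone \Pto \sigma$ is derivable and $X \subseteq Y$, then $\Gamma \vdash^{Y,r} t : \bone \Pto \sigma$ is also derivable. The natural approach is a straightforward induction on the height of the type derivation of $\Gamma \vdash^{X,r} t : \bone \Pto \sigma$, checking for each rule of $\TTT$ (Fig.~\ref{fig:typingrules}) that it remains applicable after enlarging the name set from $X$ to $Y$. Since the free names of both $t$ and $\bone$ are included in $X$, they are a fortiori included in $Y$, so the side conditions of the form $\FN(\bone)\subseteq X$ or $\FN(t)\subseteq X$ in the initial rules $\TB$ and $\TID$ are preserved. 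The probability label $r$ (and the auxiliary $q$) play no role in the name context, so they carry over unchanged.

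The main point to verify carefully is that the semantic side conditions involving $\vDash^{X}$ and measures $\mu(\model{\cdot}_{X})$ behave well under extension of the name set. Concretely: for the union rule $\TU$ the hypothesis $\bone \vDash^{X} \btwo \vee \bthree$ must be upgraded to $\bone \vDash^{X'} \btwo \vee \bthree$ for the appropriate $X'$; for the application rule $\TA$ the hypothesis $\bone \vDash \btwo \wedge \bthree$; and for the choice rules $\TL, \TR$ and the generator rule $\TN$ the hypotheses $\bone \vDash \bvar_i^a \wedge \btwo$ (resp.\ $\bone \vDash \lnot\bvar_i^a \wedge \btwo$), $(\mu(\model{\btwo_i}_{\{a\}}) \geq r)_{i\leq k}$, and $\bone \vDash \bigvee_i^k \bthree_i$. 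Here one relies on the basic fact — which follows from the definition of $\model{\cdot}_X$ in Definition~\ref{PPLsem} together with the product structure of $(2^\omega)^X$ — that for Boolean formulas $\bone,\btwo$ with $\FN(\bone)\cup\FN(\btwo)\subseteq X\subseteq Y$, one has $\bone \vDash^{X} \btwo$ if and only if $\bone \vDash^{Y} \btwo$, and similarly $\mu(\model{\bone}_X) = \mu(\model{\bone}_Y)$ (since $\model{\bone}_Y = \model{\bone}_X^{\Uparrow(Y\setminus X)}$, which preserves measure, by an argument entirely analogous to Lemma~\ref{L3} and Lemma~\ref{L2}). Using this equivalence, every side condition of the derivation transports verbatim from $X$ to $Y$.

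The one case requiring a small amount of bookkeeping is the generator rule $\TN$ and, symmetrically, the choice rules $\TL,\TR$, because they involve a name context of the form $X\cup\{a\}$ in the premise and $X$ in the conclusion, with $a\notin X$. When weakening the conclusion to a set $Y \supseteq X$, one must choose the renaming of the bound name $a$ so that it stays fresh for $Y$ (i.e.\ $a\notin Y$); by $\alpha$-renaming of the bound name in $t$ and in the weak $a$-decomposition $\bigvee_i \btwo_i \wedge \bthree_i$ of $\btwo$ this is always possible, and the induction hypothesis is then applied to the premise with name set $(Y\cup\{a\})\supseteq(X\cup\{a\})$. I do not anticipate any genuine obstacle here: the lemma is a routine structural weakening, and the only thing to be disciplined about is this freshness condition on bound names and the invariance of $\vDash$ and of $\mu$ under name-set extension, both of which are immediate from the product-space semantics of $\PPL$.
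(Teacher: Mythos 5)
Your proposal is correct and matches the paper's proof, which is exactly the ``straightforward induction on a type derivation of $t$'' you describe (the paper states it in one line without spelling out the cases). The points you flag — invariance of $\vDash$ and of $\mu(\model{\cdot}_{X})$ under extension of the name set, and $\alpha$-renaming bound names in the generator and choice rules to keep them fresh for $Y$ — are precisely the details that make the induction go through.
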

\begin{proof}
Straightforward induction on a type derivation of $t$.
\end{proof}

\begin{lemma}\label{lemma:weak}
If $\Gamma \vdash^{X,r}t: \bone\Pto \sigma$ holds and $\btwo\vDash^{X} \bone$, then 
$\Gamma \vdash^{X,r}t: \btwo\Pto \sigma$ is derivable.
\end{lemma}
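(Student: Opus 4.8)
The plan is to obtain the statement in a single step from the union rule $\TU$ of $\TTT$, by simply ``duplicating'' the hypothesised derivation. The only ingredient needed is the trivial Boolean identity $\model{\bone\vee\bone}_{X}=\model{\bone}_{X}\cup\model{\bone}_{X}=\model{\bone}_{X}$, which shows that the external hypothesis $\btwo\vDash^{X}\bone\vee\bone$ holds precisely when $\btwo\vDash^{X}\bone$ does.

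Concretely, I would argue as follows. Let $\mathcal D$ be a type derivation of $\Gamma\vdash^{X,r}t:\bone\Pto\sigma$. From the hypothesis $\btwo\vDash^{X}\bone$ one first reads off that $\FN(\btwo)\subseteq X$, and $\FN(t)\subseteq X$ holds because $\mathcal D$ is well-formed, so the target judgement $\Gamma\vdash^{X,r}t:\btwo\Pto\sigma$ is well-formed. Next, by the identity above, $\btwo\vDash^{X}\bone$ yields $\btwo\vDash^{X}\bone\vee\bone$. The derivation to build is then
\begin{prooftree}
\AxiomC{$\mathcal D$}
\noLine
\UnaryInfC{$\Gamma\vdash^{X,r}t:\bone\Pto\sigma$}
\AxiomC{$\mathcal D$}
\noLine
\UnaryInfC{$\Gamma\vdash^{X,r}t:\bone\Pto\sigma$}
\AxiomC{$\btwo\vDash^{X}\bone\vee\bone$}
\RightLabel{$\TU$}
\TrinaryInfC{$\Gamma\vdash^{X,r}t:\btwo\Pto\sigma$}
\end{prooftree}
and it concludes exactly $\Gamma\vdash^{X,r}t:\btwo\Pto\sigma$, as required.

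There is no serious obstacle here; the only thing to check is that the side condition of $\TU$ collapses to the hypothesis, which it does. For completeness I would also note why one does \emph{not} proceed by induction on $\mathcal D$: in every rule of $\TTT$ \emph{except} the axiom $\TB$, the Boolean label of the conclusion occurs only inside a side hypothesis of the form $\bone\vDash^{X}\cdots$, which could be pre-composed with $\btwo\vDash^{X}\bone$ using transitivity of $\vDash^{X}$; but $\TB$ carries the fixed label $\bot$ in its conclusion, so even an inductive argument would have to fall back on the $\TU$-trick at that base case. Hence the direct argument above is the cleanest route, and it is the one I would write out.
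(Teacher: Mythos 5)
Your proof is correct, but it takes a genuinely different route from the paper's. The paper establishes this lemma by a full induction on the type derivation, running through every rule of $\TTT$ ($\TB$, $\TID$, $\TU$, $\TLA$, $\TA$, $\TL$, $\TR$, $\TN$) and, in each case, pre-composing the rule's own side condition of the form $\bone\vDash^{X}\cdots$ with $\btwo\vDash^{X}\bone$. Your one-step derivation via $\TU$ is a valid instance of that rule (both premisses instantiated with the given derivation, side condition $\btwo\vDash^{X}\bone\vee\bone$, which collapses to the hypothesis since $\model{\bone\vee\bone}_{X}=\model{\bone}_{X}$), and it is considerably shorter. What the paper's longer induction buys is independence from $\TU$: it shows label-weakening is already admissible from the remaining rules, and the same case-by-case bookkeeping is reused almost verbatim in the neighbouring substitution lemmas (Lemmas~\ref{lemma:betasub} and~\ref{lemma:nota}), so the authors get those cases essentially for free. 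One small remark on your closing observation about $\TB$: you are right for the version of $\TB$ in Fig.~\ref{fig:typingrules}, where the conclusion carries the literal label $\bot$ and an inductive argument would indeed have to fall back on the $\TU$ trick; however, the appendix proof works with a more liberal formulation of $\TB$ whose conclusion has an arbitrary label $\bone$ subject to $\bone\vDash^{X}\bot$, under which the induction closes at that base case by transitivity alone. Either way, your direct argument is sound.
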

\begin{proof}
By induction on a type derivation of $t$:
\begin{itemize}

\item if the last rule is 

\begin{prooftree}
\AXC{$\FN(t)\subseteq X$}
\AXC{$\bone \vDash^{X} \bot$}
\RL{$\TB$}
\BIC{$\Gamma \vdash^{X,q}t:\bone \Pto \sigma$}
\end{prooftree}

Then, since $\btwo\vDash^{X} \bone$ and $\bone \vDash^{X}\bot$ imply $\btwo\vDash^{X} \bot$, we conclude by an instance of the same rule.

\item if the last rule is

\begin{prooftree}
\AXC{}
\RL{$\TID$}
\UIC{$\Gamma \vdash^{X,q\cdot s}x: \bone \Pto \sigma$}
\end{prooftree}
where $r=q\cdot s$, the claim is immediately deduced by an instance of the same rule.

\item if the last rule is

\begin{prooftree}
\AXC{$\Gamma \vdash^{X,s}t: \bone_{1} \Pto \sigma$}
\AXC{$\Gamma \vdash^{X,s}t: \bone_{2} \Pto \sigma$}
\AXC{$\bone \vDash^{X}\bone_{1}\vee \bone_{2}$}
\RL{$\TU$}
\TIC{$\Gamma \vdash^{X,s}x: \bone \Pto \sigma$}
\end{prooftree}
Then, from $\btwo \vDash^{X} \bone$ we deduce $\btwo \vDash^{X} \bone_{1}\vee \bone_{2}$ so we conclude by applying an instance of the same rule.

\item if the last rule is 

\begin{prooftree}
\AXC{$\Gamma, y:\FF s,  \vdash^{X,r} t: \bone\Pto\rho$}
\RL{$\TLA$}
\UIC{$\Gamma \vdash^{X,r} \lambda y.t:\bone\Pto  \FF s\To  \rho$}
\end{prooftree}
Then, by IH, we deduce $\Gamma, y:\FF s \vdash^{X,r}t[u/x]: \btwo  \Pto \rho$ and we conclude by applying an instance of the same rule.

\item if the last rule is 

\begin{prooftree}
\AXC{$\Gamma\vdash^{X,q} t_{1}: \bone_{1}\Pto \BOX^{r}\sigma \To \tau$}
\AXC{$\Gamma \vdash^{X,r} t_{2}: \bone_{2} \Pto \sigma$}
\AXC{$\bone\vDash^{X} \bone_{1}\land \bone_{2}$}
\RL{$\TA$}
\TIC{$\Gamma\vdash^{X,q} t_{1}t_{2}: \bone \Pto \tau$}
\end{prooftree}

then from $\btwo\vDash^{X} \bone $ and $\bone\vDash^{X} \bone_{1}\land \bone_{2}$ we deduce
$\btwo\vDash^{X} \bone_{1}\land \bone_{2}$, so 
 we conclude by applying an instance of the same rule.

\item if the last rule is 

\begin{prooftree}
\AXC{$\Gamma\vdash^{X\cup\{a\},q} t_{1}:\bone'\Pto \sigma$}
\AXC{$\bone\vdash^{X\cup\{a\}}\bvar_{i}^{a}\land \bone'
$}
\RL{$\TL$}
\BIC{$\Gamma\vdash^{X\cup\{a\},q} t_{1}\oplus^{i}_{a}t_{2}:  \bone\Pto \sigma$}
\end{prooftree}
Then, from $\btwo\vDash^{X} \bone $, we deduce $\btwo \vDash^{X}\bvar_{i}^{a}\land \bone'$, so 
 we conclude by applying an instance of the same rule.

\item if the last rule is 

\begin{prooftree}
\AXC{$\Gamma \vdash^{X\cup\{a\},q} t_{2}:\bone'\Pto \sigma$}
\AXC{$\bone\vDash\lnot\bvar_{i}^{a}\land \bone'$}
\RL{$\TR$}
\BIC{$\Gamma \vdash^{X\cup\{a\},q} t_{1}\oplus^{i}_{a}t_{2}:  \bone\Pto \sigma$}
\end{prooftree}
we can argue similarly to the previous case.

\item if the last rule is

\begin{prooftree}
\AXC{$\Gamma \vdash^{X\cup\{a\},r} t:\bigvee_{i}\btwo_{i}\land \bthree_{i}\Pto \sigma$}
\AXC{$\mu(\model{\btwo_{i}}_{\{a\}})\geq s$}
\AXC{$\bone \vDash^{X} \bigvee_{i}\bthree_{i}$}
\RL{$\TN$}
\TIC{$\Gamma \vdash^{X,r\cdot s} \nu a.t: \bone\Pto  \sigma$}
\end{prooftree}

Then, from $\btwo\vDash^{X} \bone $ and $\bone \vDash^{X} \bigvee_{i}\bthree_{i}$ we deduce $\btwo \vDash^{X}  \bigvee_{i}\bthree_{i}$, so 
 we conclude by applying an instance of the same rule.

\end{itemize}

\end{proof}

\begin{lemma}\label{lemma:betasub}
If $\Gamma, x: \BOX^{s}\sigma \vdash^{X,r} t: \bone \Pto  \tau$ and 
$\Gamma \vdash^{X,s} u: \btwo \Pto \sigma$, then 
$\Gamma \vdash^{X,r} t[u/x]: \bone \land \btwo \Pto \tau$.

\end{lemma}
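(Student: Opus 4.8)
The plan is to proceed by induction on a derivation of $\Gamma, x:\BOX^{s}\sigma \vdash^{X,r} t: \bone \Pto \tau$, with a case distinction on the last rule applied. Throughout I would use the facts already established: Lemma~\ref{lemma:weak2} (a judgement remains derivable when the name set $X$ is enlarged) and Lemma~\ref{lemma:weak} (the Boolean label of a judgement may be strengthened along $\vDash^{X}$), together with the routine observation that adding an unused declaration to a type context preserves derivability, proved by a trivial induction. I also adopt the usual convention that the bound variables of $t$ avoid the free variables of $u$, so that $(\cdot)[u/x]$ commutes with all term constructors. Since the hypothesis $\Gamma \vdash^{X,s}u:\btwo\Pto\sigma$ entails $\FN(\btwo)\subseteq X$, every side condition of the form $\FN(-)\subseteq X$ occurring below remains true after $\bone$ is replaced by $\bone\land\btwo$.

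First I would treat the leaves and the purely propositional rules. If $t=x$, the last rule is $\TID$, so $\tau=\sigma$ and, by the shape of $\TID$, the index $r$ coincides with the index $s$ at which $u$ is typed; since $\bone\land\btwo\vDash^{X}\btwo$, Lemma~\ref{lemma:weak} turns $\Gamma\vdash^{X,s}u:\btwo\Pto\sigma$ into $\Gamma\vdash^{X,r}t[u/x]:\bone\land\btwo\Pto\sigma$. If $t$ is a variable distinct from $x$, or if the last rule is $\TB$, then $t[u/x]$ is obtained by deleting the declaration of $x$ from the derivation and re-applying the same rule with label $\bone\land\btwo$ (using $\bone\land\btwo\vDash^{X}\bot$ in the $\TB$ case). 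For $\TU$, $\TLA$ and $\TA$ one applies the induction hypothesis to the premises and re-applies the same rule: the only point to check is that conjoining $\btwo$ to every label preserves the external hypotheses, which follows from the distributivities $\bone\land\btwo\vDash^{X}(\bone_{1}\land\btwo)\vee(\bone_{2}\land\btwo)$ and $\bone\land\btwo\vDash^{X}(\bone_{1}\land\btwo)\land(\bone_{2}\land\btwo)$ whenever $\bone\vDash^{X}\bone_{1}\vee\bone_{2}$, resp.\ $\bone\vDash^{X}\bone_{1}\land\bone_{2}$; in the $\TLA$ case the induction hypothesis is invoked after weakening the context of $u$ by the fresh $\lambda$-bound variable.

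The choice rules $\TL,\TR$ and the generator rule $\TN$ require a preliminary use of Lemma~\ref{lemma:weak2}, since their premises live over $X\cup\{a\}$ while the hypothesis on $u$ lives over $X$; as $a\notin X$ this is harmless. For $\TL$ (and symmetrically $\TR$) the induction hypothesis gives a derivation with label $\bthree\land\btwo$, and $\bone\land\btwo\vDash^{X\cup\{a\}}\bvar_{i}^{a}\land(\bthree\land\btwo)$ follows from $\bone\vDash^{X\cup\{a\}}\bvar_{i}^{a}\land\bthree$, so $\TL$ re-applies. For $\TN$, whose premise carries a weak $a$-decomposition $\bigvee_{i}\bfour_{i}\land\bthree_{i}$ of the label, the key remark is that $\bigvee_{i}\bfour_{i}\land(\bthree_{i}\land\btwo)$ is again a weak $a$-decomposition, now of the conjunction of the original label with $\btwo$: the formulas $\bfour_{i}$ are unchanged (hence still satisfiable and with free names in $\{a\}$) and $\FN(\bthree_{i}\land\btwo)\subseteq X$. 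The measure side conditions on the $\bfour_{i}$ are therefore untouched, and the remaining side condition $\bone\land\btwo\vDash^{X}\bigvee_{i}(\bthree_{i}\land\btwo)$ is immediate from $\bone\vDash^{X}\bigvee_{i}\bthree_{i}$; applying $\TN$ to the derivation supplied by the induction hypothesis yields $\Gamma\vdash^{X,r}(\nu a.t')[u/x]:\bone\land\btwo\Pto\tau$, with the index $r$ unchanged.

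The computations involved are essentially bookkeeping; the only points that deserve care — and which I expect to be the (mild) main obstacle — are the matching of the index $r$ in the variable case, which hinges on the exact form of the $\TID$ rule, and the manipulation of weak $a$-decompositions in the $\TN$ case. Granting Lemma~\ref{lemma:betasub}, subject reduction under $\bbeta$ is then the special case $t=(\lambda x.t_{0})u$, and the permutative cases of Proposition~\ref{prop:subject} are dispatched by a separate, entirely analogous inspection of $\pperm$.
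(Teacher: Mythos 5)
Your proposal is correct and follows essentially the same route as the paper's proof: induction on the typing derivation of $t$, a case per rule, with the label $\btwo$ conjoined throughout, the same distributivity observations for $\TU$ and $\TA$, and the same key remark that conjoining $\btwo$ to the $\bthree_{i}$ components preserves weak $a$-decompositions in the $\TN$ case. If anything you are slightly more explicit than the paper about the auxiliary weakenings (Lemma~\ref{lemma:weak2} for the name set in the $\TL$/$\TR$/$\TN$ cases and context weakening in the $\TLA$ case), which the paper uses tacitly.
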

\begin{proof}
We argue by induction on the typing derivation of $t$:
\begin{itemize}

\item if the last rule is 

\begin{prooftree}
\AXC{$\FN(t)\subseteq X$}
\AXC{$\bone \vDash^{X} \bot$}
\RL{$\TB$}
\BIC{$\Gamma, x:\BOX^{s}\sigma \vdash^{X,q\cdot s}t:\bone \Pto \tau$}
\end{prooftree}
where $r=q\cdot s$, then, since $\FN({t[u/x]})\subseteq X$ and $\bone \land \btwo \vDash^{X} \bone \vDash^{X}\bot$ the claim is deduced using an instance of the same rule.

\item if the last rule is

\begin{prooftree}
\AXC{$\FN(\bone)\subseteq X$}
\RL{$\TID$}
\UIC{$\Gamma, x: \BOX^{s}\sigma \vdash^{X,s}x: \bone \Pto \sigma$}
\end{prooftree}
Then, $t[u/x]=u$ and $r=x$, so the claim is deduced from the hypothesis and Lemma \ref{lemma:weak}.

\item if the last rule is

\begin{prooftree}
\AXC{$\Gamma, x: \BOX^{s}\sigma \vdash^{X,s}t: \bone_{1} \Pto \sigma$}
\AXC{$\Gamma, x: \BOX^{s}\sigma \vdash^{X,s}t: \bone_{2} \Pto \sigma$}
\AXC{$\bone \vDash^{X}\bone_{1}\vee \bone_{2}$}
\RL{$\TU$}
\TIC{$\Gamma, x: \BOX^{s}\sigma \vdash^{X,s}x: \bone \Pto \sigma$}
\end{prooftree}

Then, by IH, we deduce $\Gamma, x: \BOX^{s}\sigma \vdash^{X,s}t[u/x]: \bone_{1}\land \btwo \Pto \sigma$
and $\Gamma, x: \BOX^{s}\sigma \vdash^{X,s}t[u/x]: \bone_{2}\land \btwo \Pto \sigma$ and since 
$\bone \land \btwo \vDash^{X} (\bone_{1}\land \btwo)\vee (\bone_{2}\land \btwo)$ we  conclude by applying an instance of the same rule.

\item if the last rule is 

\begin{prooftree}
\AXC{$\Gamma, y:\FF s, x: \BOX^{s}\sigma \vdash^{X,r} t: \bone\Pto\rho$}
\RL{$\TLA$}
\UIC{$\Gamma, x:\BOX^{s}\sigma \vdash^{X,r} \lambda y.t:\bone\Pto  \FF s\To  \rho$}
\end{prooftree}

Then, by IH, we deduce $\Gamma, y:\FF s \vdash^{X,r}t[u/x]: \bone \land \btwo \Pto \rho$ and since $(\lambda y.t)[u/x]=\lambda y.t[u/x]$ we conclude by applying an instance of the same rule.

\item if the last rule is 

\begin{prooftree}
\AXC{$\Gamma, x:\BOX^{s}\sigma \vdash^{X,q} t_{1}: \bone_{1}\Pto \BOX^{r}\sigma \To \tau$}
\AXC{$\Gamma,x:\BOX^{s}\sigma \vdash^{X,r} t_{2}: \bone_{2} \Pto \sigma$}
\AXC{$\bone\vDash^{X} \bone_{1}\land \bone_{2}$}
\RL{$\TA$}
\TIC{$\Gamma,x:\BOX^{s}\sigma\vdash^{X,q} t_{1}t_{2}: \bone \Pto \tau$}
\end{prooftree}

Then, by IH, we deduce $\Gamma \vdash^{X,q}t_{1}[u/x]: \bone_{1}\land \btwo \Pto \BOX^{r}\sigma \To \tau$ and 
$\Gamma \vdash^{X,r}t_{2}[u/x]: \bone_{2}\land \btwo\Pto \sigma$, and since $(t_{1}t_{2})[u/x]= (t_{1}[u/x])(t_{2}[u/x])$ and  
$\bone\land \btwo \vDash^{X} (\bone_{1}\land \btwo)\land (\bone_{2}\land \btwo)$, 
 we conclude by applying an instance of the same rule.

\item if the last rule is 

\begin{prooftree}
\AXC{$\Gamma, x:\BOX^{s} \sigma\vdash^{X\cup\{a\},q} t_{1}:\bone'\Pto \sigma$}
\AXC{$
\bone\vDash^{X\cup\{a\}}\bvar_{i}^{a}\land \bone'
$}
\RL{$\TL$}
\BIC{$\Gamma, x:\BOX^{s} \sigma\vdash^{X\cup\{a\},q} t_{1}\oplus^{i}_{a}t_{2}:  \bone\Pto \sigma$}
\end{prooftree}

Then, by IH, we deduce 
$\Gamma\vdash^{X\cup\{a\}, q} t_{1}[u/x]: \bone'\land \btwo$.
From $t[u/x]= (t_{1}[u/x])\oplus_{a}^{i}(t_{2}[u/x])$ and the fact that  
$
\bone\land \btwo\vdash^{X\cup\{a\}}\bvar_{i}^{a}\land (\bone'\land \btwo)
$, we deduce the claim by an instance of the same rule.

\item if the last rule is 

\begin{prooftree}
\AXC{$\Gamma, x:\BOX^{s}\sigma \vdash^{X\cup\{a\},q} t_{2}:\bone'\Pto \sigma$}
\AXC{$\bone\vDash^{X\cup\{a\}}\lnot\bvar_{i}^{a}\land \bone'$}
\RL{$\TR$}
\BIC{$\Gamma, x:\BOX^{s}\sigma \vdash^{X\cup\{a\},q} t_{1}\oplus^{i}_{a}t_{2}:  \bone\Pto \sigma$}
\end{prooftree}

we can argue similarly to the previous case.

\item if the last rule is

\begin{prooftree}
\AXC{$\Gamma, x:\BOX^{s}\sigma \vdash^{X\cup\{a\},q} t:\bigvee_{i}\btwo_{i}\land \bthree_{i}\Pto \sigma$}
\AXC{$\mu(\model{\btwo_{i}}_{\{a\}})\geq s$}
\AXC{$\bone \vDash^{X} \bigvee_{i}\bthree_{i}$}
\RL{$\TN$}
\TIC{$\Gamma, x:\BOX^{s}\sigma \vdash^{X,q\cdot r} \nu a.t: \bone\Pto  \sigma$}
\end{prooftree}

Then, by IH,
$\Gamma \vdash^{X\cup\{a\},q} t[u/x]:\left(\bigvee_{i}\btwo_{i}\land \bthree_{i}\right)\land \btwo\Pto \sigma$.
Now, from $\bigvee_{i}\btwo_{i}\land( \bthree_{i}\land \btwo) \vDash^{X\cup\{a\}}\left(\bigvee_{i}\btwo_{i}\land \bthree_{i}\right)\land \btwo$, using Lemma \ref{lemma:efs}, we deduce
$\Gamma \vdash^{X\cup\{a\},q} t[u/x]:\bigvee_{i}\btwo_{i}\land (\bthree_{i}\land \btwo)\Pto \sigma$.
Hence, since
$\bone \land \btwo \vDash^{X}\bigvee_{i}(\bthree_{i}\land \btwo)$, and since 
 $(\nu a.t)[u/x]=\nu a.t[u/x]$, we can deduce the clam by applying an instance of the same rule.

%
%
 
\end{itemize}

\end{proof}

\begin{lemma}\label{lemma:disj}
Let $\bone, \bone_{1},\dots, \bone_{n}$ and $\btwo,\btwo_{1},\dots, \btwo_{n}$ be such that $\FN({\bone}), \FN({\bone_{i}})\subseteq{X}$ and $\FN(\btwo), \FN({\btwo_{i}})\subseteq\{a\}$, where $a\notin X$.
If $\btwo$ is satisfiable, then if $\bone\land \btwo \vDash^{X\cup \{a\}}\bigvee_{i}^{n} \bone_{i}\land \btwo_{i}$ holds, 
also $\bone\vDash^{X}\bigvee_{i}^{n}\bone_{i}$ holds.

\end{lemma}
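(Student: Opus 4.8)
The plan is to prove the contrapositive-free statement directly: assuming $\bone\land\btwo\vDash^{X\cup\{a\}}\bigvee_{i}^{n}\bone_{i}\land\btwo_{i}$ together with the satisfiability of $\btwo$, I will show $\bone\vDash^{X}\bigvee_{i}^{n}\bone_{i}$ by taking an arbitrary $f\in\model{\bone}_{X}$ and producing a witness in $\model{\bigvee_{i}\bone_{i}}_{X}$. Because $\btwo$ is satisfiable and $\FN(\btwo)\subseteq\{a\}$, there is some $g\in(2^{\omega})^{\{a\}}$ with $g\in\model{\btwo}_{\{a\}}$. The key move is then to consider the combined valuation $f+g\in(2^{\omega})^{X\cup\{a\}}$: since the free names of $\bone$ lie in $X$ and those of $\btwo$ lie in $\{a\}$, one has $f+g\in\model{\bone}_{X\cup\{a\}}$ (a restriction/extension lemma for Boolean interpretations, of the flavour of Lemma~\ref{L3}) and $f+g\in\model{\btwo}_{X\cup\{a\}}$, hence $f+g\in\model{\bone\land\btwo}_{X\cup\{a\}}$.

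Applying the hypothesis $\bone\land\btwo\vDash^{X\cup\{a\}}\bigvee_{i}\bone_{i}\land\btwo_{i}$ to this point, there is an index $i\le n$ with $f+g\in\model{\bone_{i}\land\btwo_{i}}_{X\cup\{a\}}$, so in particular $f+g\in\model{\bone_{i}}_{X\cup\{a\}}$. Now I invoke the same extension lemma in the other direction: since $\FN(\bone_{i})\subseteq X$ and $a\notin X$, membership of $f+g$ in $\model{\bone_{i}}_{X\cup\{a\}}$ depends only on the restriction of $f+g$ to $X$, which is $f$; hence $f\in\model{\bone_{i}}_{X}$, and therefore $f\in\model{\bigvee_{i}\bone_{i}}_{X}$. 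Since $f\in\model{\bone}_{X}$ was arbitrary, this gives $\model{\bone}_{X}\subseteq\model{\bigvee_{i}\bone_{i}}_{X}$, i.e. $\bone\vDash^{X}\bigvee_{i}\bone_{i}$, as required.

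The only mildly delicate point — and where I would spend the proof's care — is the "name-locality" lemma: for a named Boolean formula $\bthree$ with $\FN(\bthree)\subseteq Y$ and any $Z$ disjoint from the complement, $f\in\model{\bthree}_{Y\cup Z}$ iff $f{\restriction}_{Y}\in\model{\bthree}_{Y}$. This is an easy structural induction on $\bthree$ (atoms $\bvar_{i}^{b}$ only test $f(b)(i)$ for $b\in Y$; the Boolean connectives commute with the projection/extension), and it is exactly the Boolean-formula analogue of Lemma~\ref{L3}, so it may already be available implicitly; if not, I would state it as a one-line auxiliary lemma. I expect no other obstacle: the essential content is just that adjoining a fresh name $a$ (populated by some model of $\btwo$, which exists precisely because $\btwo$ is satisfiable) does not change the truth of $X$-formulas, so the $\btwo_{i}$ components can be discarded after selecting the disjunct. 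Note the hypothesis that $\btwo$ be satisfiable is genuinely used: if $\btwo$ were unsatisfiable, $\bone\land\btwo$ would be $\emptyset^{X\cup\{a\}}$ and the implication could fail for a nonempty $\bone$.
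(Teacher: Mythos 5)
Your proof is correct and follows essentially the same route as the paper's: extend an arbitrary model $f$ of $\bone$ by a satisfying valuation $g$ of $\btwo$ (using satisfiability), apply the hypothesis to $f+g$ to pick a disjunct $\bone_{i}\land\btwo_{i}$, and project back to conclude $f\in\model{\bone_{i}}_{X}$. The only difference is that you make explicit the name-locality lemma the paper leaves implicit, which is a reasonable bit of extra care.
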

\begin{proof}
Let $v\in 2^{X}$ be a model of $\bone$. Since $\btwo$ is satisfiable, $v$ can be extended to a model $v'\in 2^{X\cup\{a\}}$ of $\bone \land \btwo$. By hypothesis, then $v'$ satisfies $\bigvee_{i}^{n} \bone_{i}\land \btwo_{i}$, so for some $i_{0}\leq n$, it satisfies $\bone_{i_{0}}\land \btwo_{i_{0}}$. We deduce then that $v$ satisfies $\bone_{i_{0}}$, and thus $v$ satisfies $\bigvee_{i}^{n}\bone_{i}$.
\end{proof}

\begin{lemma}\label{lemma:nota}
If $ \Gamma \vdash^{ X\cup \{a\},r}t: \bone \Pto \rho$ and $\FN(t)\subseteq X$, then for all weak $a$-decomposition $\bone\equiv_{X\cup\{a\}} \bigvee_{i} \btwo_{i}\land \bthree_{i}$ and $s\in[0,1]$,
$\Gamma \vdash^{X, r\cdot s} t: \bigvee_{i}\bthree_{i}\Pto \rho$.
\end{lemma}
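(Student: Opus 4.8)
The plan is to argue by induction on the typing derivation of $t$, using the hypothesis $\FN(t)\subseteq X$ to restrict the possible last rule. The leaf cases are immediate. If the last rule is $\TID$, then $t$ is a variable $x:\BOX^{q}\rho$ and the label $\bone$ satisfies $\FN(\bone)\subseteq X$; since any weak $a$-decomposition of $\bone$ has all its conditioned parts $\bthree_{i}$ free of $a$, one simply re-applies $\TID$ at level $X$ with label $\bigvee_{i}\bthree_{i}$ and index $r\cdot s$ (possible because the original instance of $\TID$ forced $r\le q$, so $r\cdot s$ is of the form $q\cdot s'$ for some $s'\in[0,1]$). If the last rule is $\TB$, then $\bone=\bot$; since each $\btwo_{i}$ is satisfiable and $\FN(\btwo_{i})\subseteq\{a\}$, $\FN(\bthree_{i})\subseteq X$ range over disjoint name sets, every $\bthree_{i}$ must be unsatisfiable, hence $\bigvee_{i}\bthree_{i}\equiv\bot$; one then re-applies $\TB$ at level $X$ with index $r\cdot s$ and weakens the label by Lemma~\ref{lemma:weak}.

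For the inductive rules, the hypothesis $\FN(t)\subseteq X$ implies that $\TL$, $\TR$, and $\TN$ can only introduce a name $b\neq a$ (a choice operator on $a$ would place $a$ in $\FN(t)$, and a generator introducing $a$ yields a conclusion whose name set does not contain $a$). In each case I apply the induction hypothesis to the immediate subderivation(s): with threshold $s$ on the principal subterm, and with threshold $1$ — which leaves its index unchanged — on the argument subterm in the $\TA$ case; each such application uses an arbitrary weak $a$-decomposition of the relevant sub-label, which exists by writing that formula in disjunctive normal form over its $a$-atoms and discarding unsatisfiable $a$-parts. I then re-apply the same rule at the smaller name set $X$. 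The probability bookkeeping closes up cleanly: $\TU$, $\TLA$, $\TL$, $\TR$ preserve the index and carry the extra factor $s$ through; $\TA$ leaves the index equal to that of the function subterm, on which the hypothesis is applied with threshold $s$; and in the $\TN$ case, whose original conclusion has index $q\cdot r^{\ast}$ for some threshold $r^{\ast}$, applying the hypothesis with threshold $s$ gives index $q\cdot s$ on the subterm and re-applying $\TN$ with the same $r^{\ast}$ restores index $q\cdot s\cdot r^{\ast}=r\cdot s$.

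The Boolean glue in each inductive case is Lemma~\ref{lemma:disj}: whenever the rule's external hypothesis yields $\bone\vDash\bigvee_{m}\alpha_{m}\wedge\beta_{m}$ with $\FN(\alpha_{m})\subseteq\{a\}$ and $\FN(\beta_{m})\subseteq X$, then from each disjunct $\bthree_{i}\wedge\btwo_{i}\vDash^{X\cup\{a\}}\bigvee_{m}\alpha_{m}\wedge\beta_{m}$ together with the satisfiability of $\btwo_{i}$, Lemma~\ref{lemma:disj} gives $\bthree_{i}\vDash^{X}\bigvee_{m}\beta_{m}$, hence $\bigvee_{i}\bthree_{i}\vDash^{X}\bigvee_{m}\beta_{m}$, which is precisely what is needed to fire the re-applied rule with output label $\bigvee_{i}\bthree_{i}$ (combining the per-subterm disjunctions of conditioned parts in $\TU$ and $\TA$, and conjoining the choice atom $\bvar_{i}^{b}$ in $\TL$, $\TR$). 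I expect the $\TN$ case to be the main obstacle: there $t=\nu b.t'$ and one must manipulate two decompositions at once — the given weak $a$-decomposition of $\bone$ and a freshly chosen weak $b$-decomposition for the re-applied generator rule — while tracking the four name sets $X$, $X\cup\{a\}$, $X\cup\{b\}$, $X\cup\{a,b\}$ through the composition of the induction hypothesis (which removes $a$) with the re-applied $\TN$ (which removes $b$). The fact that the target label $\bigvee_{i}\bthree_{i}$ is free of both $a$ and $b$, combined with the freedom in choosing the $b$-decomposition, is what makes this case go through, again by Lemma~\ref{lemma:disj}.
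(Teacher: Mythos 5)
Your proposal is correct and follows essentially the same route as the paper's own proof: induction on the typing derivation, with Lemma~\ref{lemma:disj} as the Boolean glue in each inductive case, the argument subterm of $\TA$ handled by the induction hypothesis at threshold $1$, and the $\TN$ case resolved by composing the given weak $a$-decomposition with a $b$-decomposition of the $\bfour_{u}$ so that the index works out as $q\cdot s\cdot r^{\ast}=r\cdot s$. The only cosmetic differences (invoking Lemma~\ref{lemma:weak} in the $\TB$ case, and making the freshness observation $b\neq a$ explicit) do not change the argument.
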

\begin{proof}
By induction on a type derivation of $t$:
\begin{itemize}

\item if the last rule is 

\begin{prooftree}
\AXC{$\FN{(t)}\subseteq X\cup\{a\}$}
\AXC{$\bone \vDash^{X\cup\{a\}} \bot$}
\RL{$R_{\bot}$}
\BIC{$\Gamma \vdash^{X\cup\{a\},q}t:\bone \Pto \sigma$}
\end{prooftree}
From $ \bigvee_{i} \btwo_{i}\land \bthree_{i}\vDash^{X\cup\{a\}}\bot$ we deduce
$\btwo_{i}\land \bthree_{i}\vDash^{X\cup\{a\}}\bot$ for all $i$. Since $\btwo_{i}$ and $\bthree_{i}$ have no propositional variable in common, we also deduce that either $\btwo_{i}\vDash^{\{a\}} \bot$ or $\bthree_{i}\vDash^{X} \bot$.
By hypothesis, we deduce then that $\bthree_{i}\vDash^{X}\bot$ holds for all $i$ and thus that 
$\bigvee_{i}\bthree_{i}\vdash^{X}\bot$. Since moreover $\FN{(t)}\subseteq X$, the claim can be deduced by an instance of the same rule.

\item if the last rule is

\begin{prooftree}
\AXC{$\FN(\bone)\subseteq X\cup\{a\}$}
\RL{$\TID$}
\UIC{$\Gamma , x:\BOX^{r'}\sigma\vdash^{X\cup\{a\},r}x: \bone \Pto \sigma$}
\end{prooftree}

where $r=r'\cdot s'$, then the claim can be deduced by an instance of the same rule.

\item if the last rule is
\begin{prooftree}
\AXC{$\Gamma \vdash^{X\cup\{a\},r}t: \bone_{1} \Pto \sigma$}
\AXC{$\Gamma \vdash^{X\cup\{a\},r}t: \bone_{2} \Pto \sigma$}
\AXC{$\bone \vDash^{X\cup\{a\}}\bone_{1}\vee \bone_{2}$}
\RL{$\TU$}
\TIC{$\Gamma \vdash^{X\cup\{a\},r}x: \bone \Pto \sigma$}
\end{prooftree}
Let $\bigvee_{j}\btwo'_{j}\land \bthree'_{j}$ and $\bigvee_{k}\btwo''_{j}\land \bthree''_{k}$ be weak $a$-decompositions of $\bone_{1}$ and $\bone_{2}$.
By IH we deduce $\Gamma\vdash^{X,r\cdot s}t: \bigvee_{j}\bthree'_{j}\Pto \sigma$ and
 $\Gamma\vdash^{X,r\cdot s}t: \bigvee_{k}\bthree''_{k}\Pto \sigma$.

From $\bigvee_{i}\btwo_{i}\land \bthree_{i}\vDash^{X\cup\{a\}} \bone_{1}\vee \bone_{2}$ we deduce 
that for all $i$, 
$\btwo_{i}\land \bthree_{i} \vDash^{X\cup\{a\}} \left (\bigvee_{j}\btwo'_{j}\land \bthree'_{j}\right) \lor \left (\bigvee_{k}\btwo''_{j}\land \bthree''_{k}\right)$, so by Lemma \ref{lemma:disj} we deduce that 
$\bthree_{i} \vDash^{X}\left (\bigvee_{j} \bthree'_{j}\right) \lor \left (\bigvee_{k} \bthree''_{k}\right)$, and finally that 
$\bigvee_{i}\bthree_{i} \vDash^{X}\left (\bigvee_{j} \bthree'_{j}\right) \vee \left ( \bigvee_{k} \bthree''_{k}\right)$
Hence the claim can be deduced by applying an instance of the same rule.

\item if the last rule is 

\begin{prooftree}
\AXC{$\Gamma, y:\FF s \vdash^{X\cup\{a\},r} t: \bone\Pto\rho$}
\RL{$\TLA$}
\UIC{$\Gamma \vdash^{X\cup\{a\},r} \lambda y.t:\bone\Pto  \FF s\To  \rho$}
\end{prooftree}
Then, the claim follows from the IH by applying an instance of the same rule.

\item if the last rule is 

\begin{prooftree}
\AXC{$\Gamma \vdash^{X\cup\{a\},r} t_{1}: \bone_{1}\Pto (\BOX^{s'}\sigma) \To \tau$}
\AXC{$\Gamma \vdash^{X\cup\{a\},s'} t_{2}: \bone_{2} \Pto \sigma$}
\AXC{$\bone\vDash^{X\cup\{a\}} \bone_{1}\land \bone_{2}$}
\RL{$\TA$}
\TIC{$\Gamma\vdash^{X\cup\{a\},r} t_{1}t_{2}: \bone \Pto \tau$}
\end{prooftree}
Then,
let $\bigvee_{j}\btwo'_{j}\land \bthree'_{j}$ and $\bigvee_{k}\btwo''_{j}\land \bthree''_{k}$ be weak $a$-decompositions of $\bone_{1}$ and $\bone_{2}$.
By IH we deduce $\Gamma\vdash^{X,r\cdot s}t: \bigvee_{j}\bthree'_{j}\Pto (\BOX^{s'}\sigma)\To \tau$ and
 $\Gamma\vdash^{X,s'}t: \bigvee_{k}\bthree''_{k}\Pto \sigma$.

From $\bigvee_{i}\btwo_{i}\land \bthree_{i} \vDash^{X\cup\{a\}} \left (\bigvee_{j}\btwo'_{j}\land \bthree'_{j}\right) \land \left(\bigvee_{k}\btwo''_{j}\land \bthree''_{k}\right)$ we deduce that for all $i$,
\begin{align*}
\btwo_{i}\land \bthree_{i}&  \vDash^{X\cup \{a\}}\bigvee_{j}\btwo'_{j}\land \bthree'_{j}  \\ 
\btwo_{i}\land \bthree_{i}& \vDash^{X\cup \{a\}}\bigvee_{k}\btwo''_{k}\land \bthree''_{k}
\end{align*}
Using Lemma \ref{lemma:disj} we deduce then that 
$\bigvee_{i} \bthree_{i} \vDash^{X}\bigvee_{j} \bthree'_{j}$ 
and $\bigvee_{i} \bthree_{i} \vDash^{X}\bigvee_{k} \bthree''_{k}$ hold, that is, that $\bigvee_{i} \bthree_{i} \vDash^{X}\left (\bigvee_{j} \bthree'_{j} \right )\land \left (\bigvee_{k} \bthree''_{k}\right)$ holds.
We can thus conclude by applying an instance of the same rule.

\item if the last rule is

\begin{prooftree}
\AXC{$\Gamma\vdash^{X\cup\{a\},r} t_{1}:\bone'\Pto \sigma$}
\AXC{$\bone\vDash^{X\cup\{a\}}\bvar_{i}^{b}\land \bone'
$}
\RL{$\TL$}
\BIC{$\Gamma\vdash^{X\cup\{a\},q} t_{1}\oplus^{i}_{b}t_{2}:  \bone\Pto \sigma$}
\end{prooftree}
Then, let $\bigvee_{j}\btwo'_{j}\land \bthree'_{j}$ be an $a$-decomposition of $\bone'$ and observe that 
$(\bvar_{0}^{a}\land \bvar_{i}^{b}) \lor (\lnot \bvar_{0}^{a}\land \bvar_{i}^{b})$ is an $a$-decomposition of $\bvar_{i}^{b}$.
By IH we deduce that 
$\Gamma\vdash^{X,r\cdot s} t_{1} : \bigvee_{j}\bthree'_{j}\Pto \sigma$. 

By reasoning as in the previous case, from $\bone \vDash^{X\cup\{a\}} \bvar_{i}^{b}\land \bone'$ we deduce, using Lemma \ref{lemma:disj}, that $\bigvee_{i}\bthree_{i} \vDash^{X} \bvar_{i}^{b} \land \bigvee_{j}\bthree'_{j}$. We can thus conclude  by applying an instance of the same rule.

\item if the last rule is 

\begin{prooftree}
\AXC{$\Gamma \vdash^{X\cup\{a\},r} t_{2}:\bone'\Pto \sigma$}
\AXC{$\bone\vDash\lnot\bvar_{i}^{b}\land \bone'$}
\RL{$\TR$}
\BIC{$\Gamma \vdash^{X\cup\{a\},r} t_{1}\oplus^{i}_{b}t_{2}:  \bone\Pto \sigma$}
\end{prooftree}

Then, we can argue similarly to the previous case. 

\item if the last rule is

\begin{prooftree}
\AXC{$\Gamma \vdash^{X\cup\{a\}\cup\{b\},q} t:\bigvee_{u}\mathscr f_{u}\land \bfour_{u}\Pto \sigma$}
\AXC{$\mu(\model{\mathscr f_{u}}_{\{b\}})\geq s'$}
\AXC{$\bone \vDash^{X\cup\{a\}} \bigvee_{u}\bfour_{u}$}
\RL{$\TN$}
\TIC{$\Gamma \vdash^{X\cup\{a\},q\cdot s'} \nu b.t: \bone\Pto  \sigma$}
\end{prooftree}

Then, let $ \bigvee_{i}\btwo_{i}\land \bthree_{i}$, be a weak $a$- decomposition of $\bone$. 
For all $u$, let $\bigvee_{j} \btwo_{u,j}\land \bfour_{u,j}$ be an $a$-decomposition of $\bfour_{u}$. 
 Then observe that we have
\begin{align*}
\bigvee_{u}\mathscr f_{u}\land \bfour_{u} & \equiv^{X\cup\{a\}\cup\{b\}} \bigvee_{u} \mathscr f_{u} \land \left ( \bigvee_{j} \btwo_{u,j}\land \bfour_{u,j}\right) \\
&\equiv^{X\cup\{a\}\cup\{b\}}
\bigvee_{u} \bigvee_{j}\mathscr f_{u} \land ( \btwo_{u,j}\land \bfour_{u,j}) \\
&\equiv^{X\cup\{a\}\cup\{b\}}
\bigvee_{u,j}  \btwo_{u,j} \land (\mathscr f_{u} \land \bfour_{u,j}) 
\end{align*}
So, by IH, we deduce that $\Gamma \vdash^{X\cup\{b\}, q\cdot s} t: \bigvee_{u,j} \mathscr f_{u}\land \bfour_{u,j}\Pto \sigma$.
Now, from $\bone \vDash^{X\cup\{a\}} \bigvee_{u} \bfour_{u}$ we deduce
$\btwo_{i}\land \bthree_{i} \vDash^{X\cup\{a\}} \bigvee_{u} \bigvee_{j}\btwo_{u,j}\land \bfour_{u,j}$, and using Lemma \ref{lemma:disj} we conclude
$\bigvee_{i}\bthree_{i}\vDash^{X\cup\{a\}} \bigvee_{u} \bigvee_{j} \bfour_{u,j}$. We can thus conclude by applying an instance of the same rule.

%
%

\end{itemize}

\end{proof}

We now have all ingredients to establish the subject reduction property of $\TTT$.

%
\begin{proof}[Proof of Proposition \ref{prop:subject}]
For the case of $\beta$-reduction it suffices to check the claim when $t$ is a redex $(\lambda x.t_{1})t_{2}$ and $u$ is $t_{1}[t_{2}/x]$. From 
$\Gamma\vdash^{X,r}t:\bone \Pto \sigma$ we deduce then that 
$\Gamma, x: \BOX^{s}\tau \vdash^{X,r} t_{1}: \bone_{1}\Pto \sigma$ and 
$\Gamma \vdash^{X,s} t_{2}: \bone_{2}\Pto \tau$ hold,
where $\bone \vDash^{X} \bone_{1}\land \bone_{2}$.  
From Lemma \ref{lemma:betasub} we deduce then
$\Gamma \vdash^{X,s} u: \bone_{1}\land \bone_{2}\Pto \sigma$ and from Lemma \ref{lemma:weak} we deduce
$\Gamma \vdash^{X,s} u: \bone\Pto \sigma$.

%
%
For permutative reduction we consider all rules one by one:

\begin{description}

\item[($t\oplus_{a}^{i}t\pperm t$)]  
The last rule of $t$ is either
\begin{prooftree}
\AXC{$ \Gamma \vdash^{ X,r} t: \bone'\Pto \rho$}
\AXC{$ \bone \vDash^{X} \bvar_{i}^{a}\land \bone'$}
\RL{$\TL$}
\BIC{$ \Gamma \vdash^{X,r} t\oplus^{a}_{i}t:  \bone\Pto \rho$}
\end{prooftree}

or

\begin{prooftree}
\AXC{$ \Gamma \vdash^{ X,r} t: \bone'\Pto \rho$}
\AXC{$ \bone \vDash^{X}\lnot \bvar_{i}^{a}\land \bone'$}
\RL{$\TR$}
\BIC{$ \Gamma \vdash^{X,r} t\oplus^{a}_{i}t:  \bone\Pto \rho$}
\end{prooftree}
Then, in either case, from $ \Gamma \vdash^{ X,r} t: \bone'\Pto \rho$, $\bone \vDash^{X} \bone'$, using Lemma \ref{lemma:weak} we deduce $ \Gamma \vdash^{ X,r} t: \bone\Pto \rho$.

\bigskip

\item[($ (t\oplus_{a}^{i}u)\oplus_{a}^{i}v \pperm t\oplus_{a}^{i}v$)] 
There are three possible sub-cases:
	\begin{enumerate}
	\item the type derivation is as follows:
	
\begin{prooftree}
	\AXC{$\Gamma\vdash^{X,r} t: \bone''\Pto \rho$}
	\AXC{$\bone'\vDash^{X} \bvar_{i}^{a}\land \bone''$}
	\RL{$\TL$}
	\BIC{$\Gamma \vdash^{X,r} t\oplus_{a}^{i} u :\bone'\Pto \rho$}
	\AXC{$\bone\vDash^{X} \bvar_{i}^{a}\land \bone'$}
	\RL{$\TL$}
	\BIC{$\Gamma \vdash^{X,r} (t\oplus_{a}^{i} u)\oplus_{a}^{i} :\bone\Pto \rho$}
\end{prooftree}

	Then from $\Gamma\vdash^{X,r} t: \bone''\Pto \rho$ and since we have $\bone\vdash^{X} \bvar_{i}^{a}\land \bone''$ we deduce $\Gamma \vdash^{X,r}t\oplus_{a}^{i}v : \bone \Pto \rho$.
	
	\item the type derivation is as follows:
\begin{prooftree}
	\AXC{$\Gamma\vdash^{X,r} u: \bone''\Pto \rho$}
	\AXC{$\bone'\vDash^{X} \lnot\bvar_{i}^{a}\land \bone''$}
	\RL{$\TR$}
	\BIC{$\Gamma \vdash^{X,r} t\oplus_{a}^{i} u :\bone'\Pto \rho$}
	\AXC{$\bone\vDash^{X} \bvar_{i}^{a}\land \bone'$}
	\RL{$\TL$}
	\BIC{$\Gamma \vdash^{X,r} (t\oplus_{a}^{i} u)\oplus_{a}^{i}v :\bone\Pto \rho$}
\end{prooftree}

	Then, from $\bone\vDash^{X} \bvar_{i}^{a}\land \bone'$ and $\bone'\vDash^{X} \lnot\bvar_{i}^{a}\land \bone''$ we deduce $\bone \vDash^{X}\bot$, so we conclude $\Gamma \vdash^{X,r} (t\oplus_{a}^{i}v): \bone \Pto \rho$ using one of the initial rules.
	
	\item the type derivation is as follows:
	
\begin{prooftree}
	\AXC{$\Gamma\vdash^{X,r} v: \bone''\Pto \rho$}
	\AXC{$\bone\vDash^{X} \lnot \bvar_{i}^{a}\land \bone''$}
	\RL{$\TR$}
	\BIC{$\Gamma \vdash^{X,r} (t\oplus_{a}^{i} u)\oplus_{a}^{i}v :\bone\Pto \rho$}
\end{prooftree}

	Then from $\Gamma\vdash^{X,r} u: \bone'\Pto \rho$ and  $\bone\vDash^{X} \lnot \bvar_{i}^{a}\land \bone''$ we deduce $\Gamma \vdash^{X,r}t\oplus_{a}^{i}v : \bone \Pto \rho$.
	\end{enumerate}

	\bigskip

\item[($t\oplus_{a}^{i}(u\oplus_{a}^{i}v)\pperm t\oplus_{a}^{i}v$)] Similar to the case above.

\bigskip

\item[($\lambda x.(t\oplus_{a}^{i}u)\pperm (\lambda x.t)\oplus_{a}^{i}(\lambda x.u)$)]
There are two possible sub-cases:
	\begin{enumerate}
	\item 
	
\begin{prooftree}
	\AXC{$\Gamma, x:\FF s \vdash^{X,r} t: \bone'\Pto \rho$}
	\AXC{$\bone\vDash^{X} \bvar_{i}^{a}\land \bone'$}
	\RL{$\TL$}
	\BIC{$\Gamma, x:\FF s \vdash^{X,r} t\oplus_{a}^{i}u: \bone\Pto \rho$}
	\RL{$\TLA$}
	\UIC{$\Gamma \vdash^{X,r} \lambda x.(t\oplus_{a}^{i}u): \bone \Pto (\FF s\To \rho)$}
\end{prooftree}

	Then, we deduce 
	
\begin{prooftree}
	\AXC{$\Gamma, x:\FF s \vdash^{X,r} t: \bone'\Pto \rho$}
	\RL{$\TLA$}
	\UIC{$\Gamma \vdash^{X,r} \lambda x.t: \bone' \Pto (\FF s\To \rho)$}
	\AXC{$\bone\vDash^{X} \bvar_{i}^{a}\land \bone'$}
	\RL{$\TL$}
	\BIC{$\Gamma \vdash^{X,r} (\lambda x.t)\oplus_{a}^{i}(\lambda x.u): \bone \Pto (\FF s\To \rho)$}
\end{prooftree}

	\item 
	
\begin{prooftree}
	\AXC{$\Gamma, x:\FF s \vdash^{X,r}u: \bone'\Pto \rho$}
	\AXC{$\bone\vDash^{X} \lnot\bvar_{i}^{a}\land \bone'$}
	\RL{$\TR$}
	\BIC{$\Gamma, x:\FF s \vdash^{X,r} t\oplus_{a}^{i}u: \bone\Pto \rho$}
	\RL{$\TLA$}
	\UIC{$\Gamma \vdash^{X,r} \lambda x.(t\oplus_{a}^{i}u): \bone \Pto (\FF s\To \rho)$}
\end{prooftree}

	Then, we can argue similarly to the previous case.
	
	\end{enumerate}

	\item[($(t\oplus_{a}^{i}u)v \pperm (tv)\oplus_{a}^{i}(uv)$)] 
	There are two possible sub-cases:
	\begin{enumerate}
	\item 
	\footnotesize{
\begin{prooftree}
	\AXC{$\Gamma \vdash^{X,r} t: \bone''\Pto (\BOX^{s}\sigma)\To \rho$}
	\AXC{$\bone'\vDash^{X} \bvar_{i}^{a}\land \bone''$}
	\RL{$\TL$}
	\BIC{$\Gamma\vdash^{X,r} t\oplus_{a}^{i} u: \bone' \Pto (\BOX^{s}\sigma)\To \rho$}
	\AXC{$\Gamma \vdash^{X,s}v: \btwo \Pto \sigma$}
	\AXC{$\bone \vDash^{X} \bone'\land \btwo$}
	\RL{$\TA$}
	\TIC{$\Gamma\vdash^{X,r}(t\oplus_{a}^{i}u)v: \bone \Pto \rho$} 
\end{prooftree}
}
\normalsize
	Then, we deduce 
	\footnotesize{
\begin{prooftree}
	\AXC{$\Gamma \vdash^{X,r} t: \bone''\Pto (\BOX^{s}\sigma)\To \rho$}
	\AXC{$\Gamma \vdash^{X,s}v: \btwo \Pto \sigma$}
	\AXC{$\bone'\land \btwo \vDash^{X} \bone''\land \btwo$}
	\RL{$\TA$}
	\TIC{$\Gamma \vdash^{X,r} tv : \bone'\land \btwo \Pto \rho$}
	\AXC{$\bone \vDash^{X} \bvar_{i}^{a}\land \bone'\land \btwo $}
	\RL{$\TL$}
	\BIC{$\Gamma \vdash^{X,r} (tv)\oplus_{a}^{i} (uv): \bone \Pto \rho$}
\end{prooftree}
}
\normalsize

	\item 
	\footnotesize
\begin{prooftree}
	\AXC{$\Gamma \vdash^{X,r} u: \bone''\Pto (\BOX^{s}\sigma)\To \rho$}
	\AXC{$\bone'\vDash^{X} \lnot\bvar_{i}^{a}\land \bone''$}
	\RL{$\TR$}
	\BIC{$\Gamma\vdash^{X,r} t\oplus_{a}^{i} u: \bone' \Pto (\BOX^{s}\sigma)\To \rho$}
	\AXC{$\Gamma \vdash^{X,s}v: \btwo \Pto \sigma$}
	\AXC{$\bone \vDash^{X} \bone'\land \btwo$}
	\RL{$\TA$}
	\TIC{$\Gamma\vdash^{X,r}(t\oplus_{a}^{i}u)v: \bone \Pto \rho$} 
\end{prooftree}
\normalsize

	Then, we can argue similarly to the previous case.
	\end{enumerate}

\bigskip 
\item[{($t(u\oplus_{a}^{i}v)\pperm (tu)\oplus_{a}^{i}(tv)$)}]
	As before, there are two sub-cases:
	\begin{enumerate}
	\item 
	
	\footnotesize{
\begin{prooftree}
	\AXC{$\Gamma \vdash^{X,r} t: \bone'\Pto (\BOX^{s}\sigma)\To \rho$}
	\AXC{$\Gamma \vdash^{X,s}u: \btwo' \Pto \sigma$}
	\AXC{$\btwo\vDash^{X} \bvar_{i}^{a}\land \btwo'$}
	\RL{$\TL$}
	\BIC{$\Gamma \vdash^{X,s} u\oplus_{a}^{i} v: \btwo \Pto \sigma$}
	\AXC{$\bone \vDash^{X} \bone'\land \btwo$}
	\RL{$\TA$}
	\TIC{$\Gamma\vdash^{X,r}t(u\oplus_{a}^{i}v): \bone \Pto \rho$} 
\end{prooftree}
}
\normalsize
	Then, we deduce that 
\footnotesize{	
\begin{prooftree}
	\AXC{$\Gamma \vdash^{X,r} t: \bone'\Pto (\BOX^{s}\sigma)\To \rho$}
	\AXC{$\Gamma \vdash^{X,s}u: \btwo' \Pto \sigma$}
	\AXC{$\bone \vDash^{X} \bone'\land \btwo'$}
	\RL{$\TA$}
	\TIC{$\Gamma\vdash^{X,r}tu: \bone \Pto \rho$} 
	\AXC{$\bone\vDash^{X} \bvar_{i}^{a}\land \bone$}
	\RL{$\TR$}
	\BIC{$\Gamma \vdash^{X,s} (tu)\oplus_{a}^{i}(tv): \bone \Pto \sigma$}
\end{prooftree}
}
\normalsize
	\item 
\footnotesize{	
\begin{prooftree}
	\AXC{$\Gamma \vdash^{X,r} t: \bone'\Pto (\BOX^{s}\sigma)\To \rho$}
	\AXC{$\Gamma \vdash^{X,s}v: \btwo' \Pto \sigma$}
	\AXC{$\btwo\vDash^{X}\lnot \bvar_{i}^{a}\land \btwo'$}
	\RL{$\TR$}
	\BIC{$\Gamma \vdash^{X,s} u\oplus_{a}^{i} v: \btwo \Pto \sigma$}
	\AXC{$\bone \vDash^{X} \bone'\land \btwo$}
	\RL{$\TA$}
	\TIC{$\Gamma\vdash^{X,r}t(u\oplus_{a}^{i}v): \bone \Pto \rho$} 
\end{prooftree}
}
\normalsize
	Then, we can argue similarly to the previous case.
	\end{enumerate}

\bigskip

\item[($(t\oplus_{a}^{i}u)\oplus_{b}^{j}v \pperm (t\oplus_{b}^{j}v)\oplus_{a}^{i}(u\oplus_{b}^{j}v)$)]  
We suppose here $a\neq b$ or $i< j$. There are three sub-cases:
		\begin{enumerate}
		\item 
		
\begin{prooftree}

		\AXC{$\Gamma \vdash^{X,r} t: \bone''\Pto \rho$}
		\AXC{$\bone'\vDash^{X} \bvar_{a}^{i} \land \bone''$}
		\RL{$\TL$}
		\BIC{$\Gamma \vdash^{X,r}t\oplus_{a}^{i}u: \bone'\Pto \rho$}
		\AXC{$\bone\vDash^{X} \bvar_{b}^{j} \land \bone'$}
		\RL{$\TL$}
		\BIC{$\Gamma \vdash^{X,r}(t\oplus_{a}^{i}u)\oplus_{b}^{j}v: \bone'\Pto \rho$}
\end{prooftree}

		Then, we deduce 		
\begin{prooftree}
		\AXC{$\Gamma \vdash^{X,r} t: \bone''\Pto \rho$}
		\AXC{$\bone\vDash^{X} \bvar_{b}^{j} \land \bone''$}
		\RL{$\TL$}
		\BIC{$\Gamma \vdash^{X,r}t\oplus_{b}^{j}v: \bone'\Pto \rho$}
		\AXC{$\bone\vDash^{X} \bvar_{a}^{i} \land \bone$}
		\RL{$\TL$}
		\BIC{$\Gamma \vdash^{X,r}(t\oplus_{b}^{j}u)\oplus_{a}^{i}(u\oplus_{b}^{j}v): \bone'\Pto \rho$}
\end{prooftree}

		\item 
		
\begin{prooftree}
		\AXC{$\Gamma \vdash^{X,r} u: \bone''\Pto \rho$}
		\AXC{$\bone'\vDash^{X} \lnot\bvar_{a}^{i} \land \bone''$}
		\RL{$\TR$}
		\BIC{$\Gamma \vdash^{X,r}t\oplus_{a}^{i}u: \bone'\Pto \rho$}
		\AXC{$\bone\vDash^{X} \bvar_{b}^{j} \land \bone'$}
		\RL{$\TL$}
		\BIC{$\Gamma \vdash^{X,r}(t\oplus_{a}^{i}u)\oplus_{b}^{j}v: \bone'\Pto \rho$}
\end{prooftree}

		Then, we can argue similarly to the previous case.
		\item 
		
\begin{prooftree}
		\AXC{$\Gamma \vdash^{X,r} v: \bone'\Pto \rho$}
		\AXC{$\bone\vDash^{X} \bvar_{b}^{j} \land \bone'$}
		\RL{$\TR$}
		\BIC{$\Gamma \vdash^{X,r}(t\oplus_{a}^{i}u)\oplus_{b}^{j}v: \bone\Pto \rho$}
\end{prooftree}
		Then, we deduce (using the fact that $\bone\equiv_{X} (\bvar_{i}^{a}\land \bone)\vee(\lnot \bvar_{i}^{a}\land \bone)$)

		\medskip
	
		\begin{center}
		\resizebox{0.88\textwidth}{!}{
		$
		\AXC{$\Gamma \vdash^{X,r} v: \bone'\Pto \rho$}
		\AXC{$\bone\vDash^{X} \bvar_{b}^{j} \land \bone'$}
		\RL{$\TR$}
		\BIC{$\Gamma \vdash^{X,r}t\oplus_{b}^{j}v: \bone\Pto \rho$}
		\AXC{$\bvar_{i}^{a}\land \bone \vDash^{X} \bvar_{i}^{a}\land \bone$}
		\RL{$\TR$}
		\BIC{$\Gamma \vdash^{X,r} (t\oplus_{b}^{j}v)\oplus_{a}^{i}(u\oplus_{b}^{j}v): \bvar_{i}^{a}\land \bone\Pto \rho$}
		\AXC{$\Gamma \vdash^{X,r} v: \bone'\Pto \rho$}
		\AXC{$\bone\vDash^{X} \bvar_{b}^{j} \land \bone'$}
		\RL{$\TR$}
		\BIC{$\Gamma \vdash^{X,r}u\oplus_{b}^{j}v: \bone\Pto \rho$}
		\AXC{$\lnot\bvar_{i}^{a}\land \bone \vDash^{X} \bvar_{i}^{a}\land \bone$}
		\RL{$\TR$}
		\BIC{$\Gamma \vdash^{X,r} (t\oplus_{b}^{j}v)\oplus_{a}^{i}(u\oplus_{b}^{j}v): \lnot\bvar_{i}^{a}\land \bone\Pto\rho$}
		\AXC{$\bone\vDash^{X} (\bvar_{i}^{a} \land \bone)\lor (\lnot \bvar_{i}^{a}\land \bone)$}
		\RL{$\TU$}
		\TIC{$\Gamma \vdash^{X,r} (t\oplus_{b}^{j}v)\oplus_{a}^{i}(u\oplus_{b}^{j}v): \bone\Pto \rho$}
		\DP
		$}
		\end{center}
		\end{enumerate}

\item[($t\oplus_{b}^{j}(u\oplus_{a}^{i}v) \pperm (t\oplus_{b}^{j}u)\oplus_{a}^{i}(t\oplus_{b}^{j}v)$)]  

Similar to the case above.

\bigskip

\item[{($\nu b. (t\oplus_{a}^{i}u)\pperm (\nu b.t)\oplus_{a}^{i}(\nu b.u)$)}] 
We suppose $a\neq b$.
There are two sub-cases:
	\begin{enumerate}
	\item
	
\medskip
\begin{center}
\adjustbox{scale=0.7}{
\begin{minipage}{\linewidth}
$\AXC{$ \Gamma\vdash^{X\cup\{a\}\cup\{b\}, r} t: \bone''\Pto \sigma$}
	\RL{$\TL$}
	\AXC{$\bigvee_{i} \bfour_{i}\land \bthree_{i}\vDash^{X\cup\{a\}\cup\{b\}} \bvar_{i}^{a}\land \bone''$}
	\BIC{$\Gamma\vdash^{X\cup\{a\}\cup\{b\},r} t\oplus_{a}^{i}u: \bigvee_{i} \bfour_{i}\land \bthree_{i}\Pto \sigma$}
	\AXC{$\left ( \mu(\model{\bfour_{i}}_{\{b\}})\geq s\right)_{i\leq k}$}
	\AXC{$ \bone\vDash^{X\cup\{a\}} \bigvee_{i}\bthree_{i}$}
	\RL{$\TN$}
	\TIC{$\Gamma \vdash^{X,r\cdot s} \nu b. (t\oplus_{a}^{i}u): \bone \Pto \sigma$}
	\DP
	$
\end{minipage}
}
\end{center}
\medskip

	then from $\bigvee_{i} \bfour_{i}\land \bthree_{i}\vDash^{X\cup\{b\}}\bvar_{i}^{a}\land \bone''$,	we deduce $\bigvee_{i} \bfour_{i}\land \bthree_{i}\vdash^{X\cup\{a\}\cup\{b\}}\bigvee_{i}\bfour_{i}\land \bvar_{i}^{a} \land \bthree_{i}$, 
	and thus $\bfour_{i}\land \bthree_{i}\vdash^{X\cup\{a\}\cup\{b\}}\bigvee_{i}\bfour_{i}\land \bvar_{i}^{a} \land \bthree_{i}$, and since $a\neq b$,  $\bthree_{i}\vdash^{X\cup\{a\}\cup\{b\}}(\bigvee_{i}\bfour_{i} \land \bthree_{i})\land \bvar_{i}^{a}$.
Hence from $ \bone\vDash^{X\cup\{a\}}  \bigvee_{i}\bthree_{i}$ we deduce
$ \bone\vDash^{X\cup\{a\}}  \bigvee_{i}\bthree_{i}\land \bvar_{i}^{a} $.
	Moreover, from $\bigvee_{i} \bfour_{i}\land \bthree_{i}\vDash^{X\cup\{b\}}\bone''$, using Lemma \ref{lemma:efs} we deduce $ \Gamma\vdash^{X\cup\{a\}\cup\{b\}, r} t: \bigvee_{i} \bfour_{i}\land \bthree_{i}\Pto \sigma$. 
	We finally deduce
	
	\tiny{
\begin{prooftree}
	\AXC{$ \Gamma\vdash^{X\cup\{a\}\cup\{b\}, r} t: \bigvee_{i} \bfour_{i}\land \bthree_{i}\Pto \sigma$}
	\AXC{$\left ( \mu(\model{\bfour_{i}}_{\{b\}})\geq s\right)_{i\leq k}$}
	\AXC{$ \bone\vDash^{X\cup\{a\}} \bigvee_{i}\bthree_{i}$}
	\RL{$\TN$}
	\TIC{$\Gamma \vdash^{X\cup\{a\},r\cdot s} \nu b. t: \bone \Pto \sigma$}
	\AXC{$\bone\vDash^{X\cup\{a\}} \bvar_{i}^{a}\land \bone$}
	\RL{$\TL$}
	\BIC{$\Gamma \vdash^{X\cup\{a\},r\cdot s}( \nu b. t)\oplus_{a}^{i}(\nu b.u): \bone \Pto \sigma$}
\end{prooftree}
}

	\item
	
\begin{prooftree}
	\AXC{$ \Gamma\vdash^{X\cup\{a\}\cup\{b\}, r} u: \bone''\Pto \sigma$}
	\AXC{$\bigvee_{i} \bfour_{i}\land \bthree_{i}\vDash^{X\cup\{a\}\cup\{b\}} \lnot\bvar_{i}^{a}\land \bone''$}
	\RL{$\TR$}
	\BIC{$\Gamma\vdash^{X\cup\{a\}\cup\{b\},r} t\oplus_{a}^{i}u: \bigvee_{i} \bfour_{i}\land \bthree_{i}\Pto \sigma$}
	\AXC{$\left ( \mu(\model{\bfour_{i}}_{\{b\}})\geq s\right)_{i\leq k}$}
	\AXC{$ \bone\vDash^{X\cup\{a\}} \bigvee_{i}\bthree_{i}$}
	\RL{$\TN$}
	\TIC{$\Gamma \vdash^{X,r\cdot s} \nu b. (t\oplus_{a}^{i}u): \bone \Pto \sigma$}
\end{prooftree}

\normalsize
	The we can argue similarly to the previous case.
	\end{enumerate}

\bigskip

\item[($\nu a.t\pperm t$)] 
We have 
\begin{prooftree}
\AXC{$\Gamma \vdash^{X\cup\{a\},r}t: \bigvee_{i}\btwo_{i}\land \bthree_{i}\Pto \sigma$}
\AXC{$\mu(\model{\btwo_{i}}_{\{a\}})\geq s$}
\AXC{$\bone \vDash^{X} \bigvee_{i}\bthree_{i}$}
\RL{$\TN$}
\TIC{$\Gamma\vdash^{X,r\cdot s}\nu a.t:\bone \Pto \sigma$}
\end{prooftree}
then by Lemma \ref{lemma:nota} we deduce $\Gamma \vdash^{X,r\cdot s}t:\bigvee_{i}\bthree_{i} \Pto \sigma$, and from 
$\bone \vDash^{X} \bigvee_{i}\bthree_{i}$ the claim can be deduced using Lemma \ref{lemma:efs}.

\bigskip 
\item[($\lambda x.\nu a.t\pperm \nu a.\lambda x.t$ )]  
We have

\begin{prooftree}
\AXC{$\Gamma, x: \FF s \vdash^{X\cup\{a\}, r} t: \bigvee_{i}\btwo_{i}\land \bthree_{i} \Pto \sigma$}
\AXC{$\mu(\model{\btwo_{i}}_{\{a\}})\geq s$}
\AXC{$\bone \vDash^{X} \bigvee_{i}\bthree_{i}$}
\RL{$\TN$}
\TIC{$\Gamma, x:\FF s\vdash^{X,r\cdot s}\nu a.t : \bone \Pto \sigma$}
\RL{$\TLA$}
\UIC{$\Gamma \vdash^{X,r\cdot s}\lambda x.\nu a.t: \bone \Pto \FF s\To \sigma$}
\end{prooftree}
from which we deduce:

\begin{prooftree}
\AXC{$\Gamma, x: \FF s \vdash^{X\cup\{a\}, r} t: \bigvee_{i}\btwo_{i}\land \bthree_{i} \Pto \sigma$}
\RL{$\TLA$}
\UIC{$\Gamma \vdash^{X\cup\{a\},r}\lambda x.t: \bigvee_{i}\btwo_{i}\land \bthree_{i}  \Pto \FF s\To \sigma$}
\AXC{$\mu(\model{\btwo_{i}}_{\{a\}})\geq s$}
\AXC{$\bone \vDash^{X} \bigvee_{i}\bthree_{i}$}
\RL{$\TN$}
\TIC{$\Gamma, x:\FF s\vdash^{X,r\cdot s}\nu a.\lambda x.t : \bone \Pto \FF s\To\sigma$}
\end{prooftree}

\item[($ (\nu a.t)u \pperm \nu a.(tu)$)] 
We have
\tiny{
\begin{prooftree}
\AXC{$\Gamma\vdash^{X\cup\{a\}, r} t: \bigvee_{i}\btwo_{i}\land \bthree_{i}\Pto (\BOX^{s'}\tau)\To \sigma $}
\AXC{$\mu(\model{\btwo_{i}}_{\{a\}})\geq s$}
\AXC{$\bone' \vDash^{X} \bigvee_{i}\bthree_{i}$}
\RL{$\TN$}
\TIC{$\Gamma \vdash^{X,r\cdot s}\nu a.t : \bone' \Pto (\BOX^{s'}\tau)\To \sigma $}
\AXC{$\Gamma \vdash^{X,s'}u: \bfour\Pto \tau$}
\AXC{$\bone\vDash^{X} \bone'\land \bfour$}
\RL{$\TA$}
\TIC{$\Gamma \vdash^{X,r\cdot s} (\nu a.t)u: \bone\Pto \sigma$}
\end{prooftree}
}
\normalsize
From Lemma \ref{lemma:weak2} we deduce then $\Gamma\vdash^{X\cup\{a\},s'}u: \btwo \Pto \tau$; moreover, 
 we have
 $\bigvee_{i}\btwo_{i}\land (\bthree_{i}\land \bfour) \vDash^{X\cup\{a\}}\left( \bigvee_{i}\btwo_{i}\land \bthree_{i}\right)\land \bfour$ and 
 $\bone \vDash^{X}\bigvee_{i}(\bthree_{i}\land \bfour)$, 
 so we deduce
 
 \medskip

\begin{center}
\resizebox{0.93\textwidth}{!}{
$
\AXC{$\Gamma\vdash^{X\cup\{a\}, r} t: \bigvee_{i}\btwo_{i}\land \bthree_{i}\Pto (\BOX^{s'}\tau)\To \sigma $}
\AXC{$\Gamma \vdash^{X\cup\{a\},s'}u: \bfour\Pto \tau$}
\AXC{$\bigvee_{i}\btwo_{i}\land (\bthree_{i}\land \bfour) \vDash^{X\cup\{a\}}\left( \bigvee_{i}\btwo_{i}\land \bthree_{i}\right)\land \bfour$}
\RL{$\TA$}
\TIC{$\Gamma\vdash^{X\cup\{a\},r} tu: \bigvee_{i}\btwo_{i}\land (\bthree_{i}\land \bfour) \Pto \sigma $}
\AXC{$\mu(\model{\btwo_{i}}_{\{a\}})\geq s$}
\AXC{$\bone \vDash^{X} \bigvee_{i}(\bthree_{i}\land \bfour)$}
\RL{$\TN$}
\TIC{$\Gamma \vdash^{X,r\cdot s}\nu a.(tu) : \bone \Pto (\BOX^{s'}\tau)\To \sigma $}
\DP
$
}
\end{center}

\end{description}
\end{proof}


\subsubsection{The Normalization Theorem}

We now establish our main result about $\TTT$:

\normalization*

Before getting to the actual proof of Theorem \ref{thm:normalization} we need to establish some preliminary lemmas.
For all $r\in[0,1]$, let $\mathrm{Norm}^{r}$ indicate the set of name-closed terms which reduce to a $\PNF$ in normal form with $\PROB\geq r$

\begin{lemma}\label{lemma:x}
If $tx\in \mathrm{Norm}^{r}$, then $t\in \mathrm{Norm}^{r}$.
\end{lemma}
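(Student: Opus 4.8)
The plan is to reduce $t$ to a permutative normal form and then argue that post‑composing with the argument $x$ can only destroy head‑normal behaviour, never create it.

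I would first take a permutative normal form $t'$ of $t$, so $t\rred t'$ and, by Lemma~\ref{lemma:PNF} together with Lemma~\ref{lemma:enne} and Corollary~\ref{cor:pnf}, $t'$ is either a pseudo‑value in $\mathcal V$, or of the form $\nu a.w$ with $w$ a $(\mathcal T,a)$‑tree of level $I_{a}(w)\ge 0$. Since $\rred$ is confluent (Proposition~\ref{prop:confluence}) and $tx\rred t'x$, it suffices to prove the claim with $t'$ in place of $t$. Using the permutative rules $(\nu a.u)v\pperm \nu a.(uv)$, $(u\oplus^{i}_{a}v)w\pperm (uw)\oplus^{i}_{a}(vw)$ and $\nu a.u\pperm u$ (for $a\notin\FN(u)$), the term $t'x$ reduces to a permutative normal form $s$ obtained from $t'$ by distributing the application $(-)x$ down to the leaves of the tree structure; by an induction on this structure and the defining recursion for $\mathcal D$, one gets
\[
\sum_{u'\in\mathcal{HN}}\mathcal D_{s}(u')\;=\;\sum_{v}\mathcal D_{t'}(v)\cdot\big[\,vx\ \text{reduces to a head normal form}\,\big],
\]
the sum ranging over the pseudo‑value leaves $v$ reachable from $t'$.

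The key observation is that, for a pseudo‑value $v$, the head redex of $vx$ is exactly the head redex of $v$ until $v$ becomes head‑normal, at which point $vx$ is itself head‑normal; hence $vx$ reduces to a head normal form if and only if $v$ does, that is, if and only if $v\in\mathrm{Norm}^{1}$. Combining this with the analogous statement for $\nu$‑headed leaves (obtained by the same induction), every branch of $tx$ ending in $\mathcal{HN}$ comes from a branch of $t'$ that reduces to a leaf in $\mathcal{HN}$, so $\sum_{u'\in\mathcal{HN}}\mathcal D_{s}(u')\le\sum_{v\in\mathcal{HN}}\mathcal D_{t''}(v)$ for a suitable $\rred$‑reduct $t''$ of $t'$ with the leaves put in head normal form. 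Thus $tx\in\mathrm{Norm}^{r}$ forces $r\le\sum_{v\in\mathcal{HN}}\mathcal D_{t''}(v)$, i.e. $t''\in\mathrm{Norm}^{r}$, whence $t\in\mathrm{Norm}^{r}$.

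I expect the main obstacle to be exactly this equivalence "$vx$ reaches a head normal form $\iff$ $v$ reaches one", and more generally the passage between the $\PNF$‑based definition of being "in normal form with $\PROB\ge r$" and the $\rred$‑closed set $\mathrm{Norm}^{r}$: one has to commute the $\beta$‑reductions taking place inside the branch leaves with the permutative tree structure, which will require confluence of $\rred$ and strong normalisation of $\pperm$ (Proposition~\ref{prop:confluence}) together with a short standardisation argument for head reduction.
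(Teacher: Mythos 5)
Your route is genuinely different from the paper's, and considerably heavier. The paper's proof is a two-case analysis on the shape of the PNF $u$ that $tx$ reduces to: either the outer application survives, so $u=u'x$ with $t\rred u'$ a PNF of the same head-normal mass as $u'x$; or the application is consumed by a head $\beta$-step, i.e.\ $tx\rred (\lambda y.t')x$ and $t'[y\mapsto x]\rred u$, in which case $t$ reduces to the renamed PNF obtained from $u$ by $x\mapsto y$ (using only that renaming a free variable preserves ``normal with probability $\ge r$''). No distributions, no tree decomposition, no standardisation. Your strategy --- normalise $t$, push the application down to the leaves by permutative reduction, and compare distributions leaf by leaf --- is essentially the machinery the paper reserves for the harder Lemma~\ref{lemma:nu1}; here it buys little, except arguably an explicit treatment of the case where $t$ reduces to a $\nu$- or $\oplus$-headed term, which the paper's case split passes over silently.

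As written, however, your argument contains gaps you have named but not closed, and one step that is simply false. The claim that once $v$ reaches a head normal form ``$vx$ is itself head-normal'' fails whenever that head normal form begins with a $\lambda$: then $vx$ is a $\beta$-redex and a further head step is needed (fortunately only the converse direction, $vx$ head-normalisable implies $v$ head-normalisable, is used, and that one is the standard fact). More seriously, your displayed identity equates $\sum_{u'\in\mathcal{HN}}\mathcal D_{s}(u')$ --- a sum over leaves of $s$ that \emph{are} head normal forms --- with a sum weighted by whether $vx$ \emph{reduces to} one; these do not coincide. To repair it you must first pass from $s$ to a further reduct obtained by head-normalising the normalisable leaves (the manoeuvre of Lemma~\ref{lemma:nu1}), and then show that the head-normal mass of the \emph{given} witness PNF for $tx\in\mathrm{Norm}^{r}$ is bounded by that of this constructed reduct. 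That last step --- that joining two reducts via confluence cannot decrease head-normal mass --- is the real content of your proof and is precisely what you defer as ``the main obstacle''; until it is supplied the proof is incomplete. The paper's direct case analysis on the reduct of $tx$ avoids all of this.
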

\begin{proof}
It is clear that for all variables $x,y$, $u$ is a $\PNF$ normal with probability $\geq r$ iff $u[x\mapsto y]$ is a $\PNF$ normal with probability $\geq r$. 
Suppose $tx$ reduces to a $\PNF$ $u$ which is normal with probability $\geq r$. Then two cases arise:
\begin{itemize}
\item $u=u'x$ where $t$ reduces to $u'$. Then $u'$ is also a $\PNF$ in normal form with same probability as $u'x$, and thus $t\in \mathrm{Norm}^{r}$;
\item $tx$ reduces to $(\lambda y.t')x$, and $t'[y\mapsto x]$ reduces to $u$. Then 
$t$ reduces to the $\PNF$ $u[x\mapsto y]$.
\end{itemize}

\end{proof}

\begin{definition}
A name-closed term $t$ is said \emph{neutral} when it is of the form $xt_{1}\dots t_{n}$, where $t_{1},\dots, t_{n}\in \mathcal T$. 
\end{definition}

Observe that a neutral term $t$ is in head normal form, and thus $\sum_{v\in \mathcal{HN}}\mathcal D_{t}(v)=1$.

\begin{lemma}\label{lemma:uno}
If $t\in\RED_{\rho}^{X,r}(S)$ then for all $f\in S$, $\pi^{f}(t)\in  \mathrm{Norm}^{r}$.
\end{lemma}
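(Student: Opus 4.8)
Looking at Lemma~\ref{lemma:uno}, I need to show that membership in a reducibility predicate $\RED_\rho^{X,r}(S)$ implies that, after any application $\pi^f$ with $f \in S$, the resulting name-closed term reduces to a $\PNF$ normal with $\PROB \geq r$. The statement is the first half of the standard pair of facts (adequacy + neutrality) that always accompanies a reducibility argument, and it is proved together with its companion (that neutral terms with reducible arguments are reducible) by a single induction on the type $\rho$.

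The plan is to argue by induction on the structure of the simple type with counting $\rho$, simultaneously with the auxiliary statement that every neutral term $t$ (in the sense just defined) whose arguments lie in the appropriate reducibility sets belongs to $\RED_\rho^{X,r}(S)$ for every $S$ and $r$. The base case $\rho = o$ is immediate: by the very definition of $\RED_o^{X,r}(S)$, $t \in \RED_o^{X,r}(S)$ means exactly that for all $f \in S$, $\pi^f(t)$ reduces to a normal form with $\PROB \geq r$, i.e. $\pi^f(t) \in \mathrm{Norm}^r$. For the inductive step, suppose $\rho = (\BOX^q\sigma) \To \tau$ and $t \in \RED_{(\BOX^q\sigma)\To\tau}^{X,r}(S)$. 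I would like to feed $t$ a reducible argument so as to invoke the induction hypothesis at the smaller type $\tau$. The natural candidate is a fresh variable $x$, viewed as a neutral term: by the companion statement (which I am proving simultaneously, and which at type $\sigma$ is available since $\sigma$ is a subformula), $x \in \RED_\sigma^{X,q}(S)$. Then by definition of $\RED_{(\BOX^q\sigma)\To\tau}^{X,r}(S)$, taking $S' = S$ and $u = x$, we get $tx \in \RED_\tau^{X,r}(S)$. By the induction hypothesis applied at $\tau$, for all $f \in S$, $\pi^f(tx) \in \mathrm{Norm}^r$. Now $\pi^f(tx) = \pi^f(t)\pi^f(x) = \pi^f(t) x$, so $\pi^f(t)x \in \mathrm{Norm}^r$, and Lemma~\ref{lemma:x} gives $\pi^f(t) \in \mathrm{Norm}^r$, which is what we wanted.

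The companion statement for the simultaneous induction — that any neutral $t = y t_1 \dots t_n$ with $t_i$ in the right reducibility sets is reducible at every type — needs its own base and inductive cases. At type $o$: a neutral term is in head normal form, hence (as noted after the definition of neutral term) $\sum_{v \in \mathcal{HN}} \mathcal{D}_t(v) = 1 \geq r$; moreover $\pi^f(t)$ is again neutral for every $f$ (applying $\pi^f$ to $y t_1 \dots t_n$ just propagates into the $t_i$), so $\pi^f(t)$ is already a head-normal $\PNF$-reachable term with $\PROB = 1$, giving $\pi^f(t) \in \mathrm{Norm}^r$; thus $t \in \RED_o^{X,r}(S)$. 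At type $(\BOX^q\sigma)\To\tau$: given $S' \subseteq S$ and $u \in \RED_\sigma^{X,q}(S')$, the term $tu = y t_1 \dots t_n u$ is again neutral, so by the induction hypothesis for the companion statement at type $\tau$, $tu \in \RED_\tau^{X,r}(S')$; hence $t \in \RED_{(\BOX^q\sigma)\To\tau}^{X,r}(S)$.

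The main obstacle I anticipate is bookkeeping rather than conceptual depth: making the simultaneous induction airtight requires being careful that the companion "neutral terms are reducible" statement is formulated with the right quantifier structure (for all $S$, all $r$, all neutral $t$ with suitably reducible arguments) so that it can legitimately be invoked at the subtype $\sigma$ when handling an arrow type in the proof of Lemma~\ref{lemma:uno}. One must also double-check that $\pi^f$ applied to a neutral term yields a neutral term and that $\pi^f$ commutes with application exactly as in the definition of $\pi^f$ — both are direct from the inductive clauses for $\pi^f$. Finally, one should note that a fresh variable $x$ genuinely qualifies as a neutral term (the case $n=0$) and that its reducibility at type $\sigma$ does not depend on $f$ at all, since $\pi^f(x) = x$ is trivially in normal form with $\PROB = 1$; strictly, for the base type this is even more immediate than the general neutrality argument. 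These are the kinds of details that, if glossed, would leave a gap, but none of them is hard.
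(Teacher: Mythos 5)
Your proposal is correct and follows essentially the same route as the paper's own proof: a simultaneous induction on $\rho$ proving both the adequacy claim and the companion statement that neutral terms are reducible, feeding a fresh variable to $t$ at arrow types and invoking Lemma~\ref{lemma:x} to strip it off. The only (shared) imprecision is in asserting that $tu$ is again neutral without checking that $u$ is a $\PNF$, a detail the paper's proof glosses over in the same way.
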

\begin{proof}
We show by induction on $\rho$ the following facts:
\begin{enumerate}
\item for all
$t\in 
\RED_{\rho}^{X,r}(S)$ and $f\in S$, $\pi^{f}(S)\in \mathrm{Norm}^{r}$;
\item for all neutral term $t$, $t\in \RED_{\rho}^{X,r}(S)$.
\end{enumerate}
\begin{itemize}
\item if $\rho=o$, then by definition $t\in \RED_{\rho}^{X,r}(S)$ iff for all $f\in S$, $\pi^{f}(t)\in \mathrm{Norm}^{r}$. Moreover if $t$ is neutral then for all $f\in S$, $\pi^{f}(t)=t\in \mathcal{HN}$, so $t\in \RED_{\rho}^{X,r}(S)$.

\item if $\rho=(\BOX^{s}\rho_{1})\To \rho_{2}$ and $t\in \RED_{\rho}^{X, r}(S)$, then since by the IH $x\in \RED_{\rho_{1}}^{X,s}(S)$, $tx\in \RED_{\rho_{2}}^{X,r}(S)$, so by the induction hypothesis for all $f\in S$, 
$\pi^{f}(tx)=\pi^{f}(t)x\in \mathrm{Norm}^{r}$, whence $\pi^{f}(t)\in \mathrm{Norm}^{r}$ by Lemma \ref{lemma:x}.

Moreover, for all
neutral term $xt_{1}\dots t_{n}$, $S'\subseteq S$, $f\in S'$, and $u\in\RED_{\rho_{1}}^{X,s}(S')$,
 $xt_{1}\dots t_{n}\pi^{f}(u)$ is also neutral; by the induction hypothesis we have then 
$\pi^{f}(xt_{1}\dots t_{n}u)=xt_{1}\dots t_{n}\pi^{f}(u)\in \RED_{\rho_{2}}^{X,r}(S')$, and we conclude that $xt_{1}\dots t_{n}\in \RED_{\rho}^{X,r}(S)$.
\end{itemize}
\end{proof}

\begin{lemma}\label{lemma:zero}
$\RED_{\rho}^{X,r}(\emptyset)=\RED_{\rho}^{X,0}(S)=\{t\in \Lambda_{\nu}\mid \FN(t)\subseteq X\}$.
\end{lemma}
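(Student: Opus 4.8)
The plan is to prove both equalities simultaneously by induction on the type $\rho$, the content being essentially a bookkeeping of free names together with the observation that the universal quantifiers built into the reducibility predicates become vacuous precisely in the two ``degenerate'' cases $S=\emptyset$ and $r=0$. Throughout, I read the definition of $\RED_{\sigma}^{X,r}(S)$ as implicitly restricted to terms $t$ with $\FN(t)\subseteq X$ (this is in any case needed for $\pi^{f}(t)$ to be defined in the base clause, and it is the only reading consistent with the statement), and I write $N_{X}:=\{t\in\Lnu\mid \FN(t)\subseteq X\}$, so the goal is $\RED_{\rho}^{X,r}(\emptyset)=\RED_{\rho}^{X,0}(S)=N_{X}$.

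Two preliminary facts will be used. First, by a straightforward induction on $\sigma$ one checks that $\RED_{\sigma}^{X,r}(S)\subseteq N_{X}$ for all $\sigma,X,r,S$: in the base clause this is the standing convention, and in the arrow clause $\FN(tu)\subseteq X$ forces $\FN(t)\subseteq X$. Second, by Proposition~\ref{prop:confluence} every name-closed term $\to_{\mathsf p}$-reduces to a $\PNF$ $w$, and for any such $w$ one has trivially $\sum_{v\in\mathcal{HN}}\mathcal D_{w}(v)\geq 0$; hence every $t\in N_{X}$ satisfies ``$\pi^{f}(t)$ reduces to a term in normal form with $\PROB\geq 0$'' for every $f$, and the clause ``$\forall f\in\emptyset\,\ldots$'' holds vacuously for every $t\in N_{X}$.

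For the base case $\rho=o$: by definition $\RED_{o}^{X,r}(\emptyset)$ consists of the $t\in N_{X}$ for which $\pi^{f}(t)$ reduces to a normal form with $\PROB\geq r$ for all $f\in\emptyset$, a vacuous condition, so $\RED_{o}^{X,r}(\emptyset)=N_{X}$; and $\RED_{o}^{X,0}(S)$ consists of the $t\in N_{X}$ for which $\pi^{f}(t)$ reduces to a normal form with $\PROB\geq 0$ for all $f\in S$, which by the second preliminary fact holds for every $t\in N_{X}$, so $\RED_{o}^{X,0}(S)=N_{X}$. For the inductive case $\rho=(\BOX^{s}\rho_{1})\To\rho_{2}$: since the only $S'\subseteq\emptyset$ is $S'=\emptyset$, we have $t\in\RED_{\rho}^{X,r}(\emptyset)$ iff $t\in N_{X}$ and $tu\in\RED_{\rho_{2}}^{X,r}(\emptyset)$ for every $u\in\RED_{\rho_{1}}^{X,s}(\emptyset)$; by the induction hypothesis $\RED_{\rho_{1}}^{X,s}(\emptyset)=N_{X}$ and $\RED_{\rho_{2}}^{X,r}(\emptyset)=N_{X}$, so this amounts to requiring $\FN(tu)\subseteq X$ for all $u\in N_{X}$, which holds exactly when $t\in N_{X}$; hence $\RED_{\rho}^{X,r}(\emptyset)=N_{X}$. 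Likewise, $t\in\RED_{\rho}^{X,0}(S)$ iff $t\in N_{X}$ and $tu\in\RED_{\rho_{2}}^{X,0}(S')$ for all $S'\subseteq S$ and all $u\in\RED_{\rho_{1}}^{X,s}(S')$; the induction hypothesis gives $\RED_{\rho_{2}}^{X,0}(S')=N_{X}$, while the first preliminary fact gives $\FN(u)\subseteq X$ for every such $u$, so again the condition reduces to $t\in N_{X}$, i.e.\ $\RED_{\rho}^{X,0}(S)=N_{X}$. This closes the induction.

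I do not expect a genuine obstacle: the argument is routine once one spots that both $S=\emptyset$ and $r=0$ trivialize the relevant quantifier. The only point demanding care is the implicit side condition $\FN(t)\subseteq X$ carried by the reducibility predicates, which must be made explicit (via the two preliminary facts) in order to land on $N_{X}$ rather than on all of $\Lnu$; everything else is immediate from the definitions and the induction hypothesis.
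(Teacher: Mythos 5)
Your proposal is correct and follows essentially the same route as the paper's proof: an induction on $\rho$ exploiting that the quantifier over $f\in\emptyset$ is vacuous, that ``normal form with $\PROB\geq 0$'' holds for every term, and that in the arrow case the condition collapses to $\FN(t)\subseteq X$ via the induction hypothesis. The only difference is that you make explicit the side condition $\RED_{\sigma}^{X,r}(S)\subseteq\{t\mid\FN(t)\subseteq X\}$, which the paper leaves implicit in its ``the converse direction is immediate''; this is a reasonable clarification rather than a divergence.
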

\begin{proof}
Let $\Lambda_{\nu}^{X}=\{t\in \Lambda_{\nu}\mid \FN(t)\subseteq X\}$. 
By induction on $\rho$: if $\rho=o$, then trivially for all $f\in \emptyset$ and $t\in \Lambda_{\nu}^{X}$, $\pi^{f}(t)\in \mathrm{Norm}^{r}$, so $\Lambda_{\nu}^{X}\subseteq \RED_{\rho}^{X,r}(\emptyset)$. 
Moreover, since  for all $t\in\Lnu^{X}$ and $f\in S$, $\pi^{f}(t)\in \mathrm{Norm}^{0}$, also $\Lambda_{\nu}^{X}\subseteq \RED_{\rho}^{X,0}(S)$ holds.  The converse direction is immediate.
 
If $\rho=(\BOX^{s}\rho_{1})\To \rho_{2}$, then for all $t\in \Lambda_{\nu}^{X}$, and $u\in \RED_{\rho_{1}}^{X,s}(\emptyset)$, $tu\in \Lambda_{\nu}^{X}$, so by IH $tu\in \RED_{\rho_{2}}^{X,r}(\emptyset)$. 
This shows in particular that for all $S'\subseteq \emptyset$, $tu\in \RED_{\rho_{2}}^{X,r}(S')$, so we conclude that $t\in \RED_{\rho}^{X,r}(\emptyset)$. The converse direction is again immediate.
Moreover, for all $t\in \Lambda_{\nu}^{X}$, and $u\in \RED_{\rho_{1}}^{X,s}(S)$, $tu\in \RED_{\rho_{2}}^{X,0}(S)$, since by IH $\Lambda_{\nu}^{X}\subseteq \RED_{\rho_{2}}^{X,0}(S)$, and certainly $tu\in \Lambda_{\nu}^{X}$. 
This shows in particular that for all $S'\subseteq S$, $tu\in \RED_{\rho_{2}}^{X,0}(S')$, so we conclude that $t\in \RED_{\rho}^{X,0}(S)$. The converse direction is again immediate.

\end{proof}

\begin{lemma}\label{lemma:efs}
For all term $t$ such that $\FN(t)\subseteq X$, 
$t\in \RED_{\rho}^{X,r}(S)$ iff for all $f\in S$, $\pi^{f}(t)\in \RED_{\rho}^{\emptyset, r}$.
\end{lemma}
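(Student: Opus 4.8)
The statement to prove is Lemma~\ref{lemma:efs}: for all $t$ with $\FN(t)\subseteq X$, we have $t\in \RED_{\rho}^{X,r}(S)$ iff for all $f\in S$, $\pi^{f}(t)\in \RED_{\rho}^{\emptyset,r}$. Here $\RED_{\rho}^{\emptyset,r}$ abbreviates $\RED_{\rho}^{\emptyset,r}(\{\ast\})$ where $\{\ast\}$ is the singleton of the empty map $(2^\omega)^{\emptyset}$ (or equivalently $\RED_\rho^{\emptyset,r}((2^\omega)^\emptyset)$), so that the quantification $\forall f\in S$ collapses to checking a name-closed term. The plan is a routine induction on the structure of the type $\rho$, so the work is mostly bookkeeping about how $\pi^{f}$ interacts with application and with the reducibility clauses.

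\textbf{Proof plan.} The proof goes by induction on $\rho$. For the base case $\rho = o$: by definition $t\in\RED_o^{X,r}(S)$ iff for all $f\in S$, $\pi^f(t)$ reduces to a normal form with $\PROB\geq r$; and $\pi^f(t)\in\RED_o^{\emptyset,r}$ iff $\pi^g(\pi^f(t))$ reduces to such a normal form for the unique $g\in(2^\omega)^\emptyset$. Since $\FN(\pi^f(t))=\emptyset$, $\pi^g$ acts as the identity on $\pi^f(t)$, so the two conditions coincide; hence the equivalence is immediate. For the inductive step $\rho=(\BOX^{q}\sigma)\To\tau$: assume first $t\in\RED_\rho^{X,r}(S)$ and fix $f\in S$; I must show $\pi^f(t)\in\RED_\tau^{\emptyset,r}(\{g\})$, i.e. for every $S'\subseteq\{g\}$ and every $u\in\RED_\sigma^{\emptyset,q}(S')$, $\pi^f(t)u\in\RED_\tau^{\emptyset,r}(S')$. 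If $S'=\emptyset$ this holds by Lemma~\ref{lemma:zero}. If $S'=\{g\}$, then by the induction hypothesis applied to $u$ (which is name-closed, so $\FN(u)\subseteq\emptyset$) we get $u\in\RED_\sigma^{X,q}(\{f\})$ — here I use the IH in the direction ``$\pi^h(u)\in\RED_\sigma^{\emptyset,q}$ for all $h\in\{f\}$ implies $u\in\RED_\sigma^{X,q}(\{f\})$'', noting $\pi^f(u)=u$; more precisely I need $\{f\}\subseteq S$, which holds. Then $tu\in\RED_\tau^{X,r}(\{f\})$ by definition of $t\in\RED_\rho^{X,r}(S)$, and applying the IH in the forward direction gives $\pi^f(tu)=\pi^f(t)\pi^f(u)=\pi^f(t)u\in\RED_\tau^{\emptyset,r}(\{g\})$, as required.

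Conversely, assume $\pi^f(t)\in\RED_\rho^{\emptyset,r}$ for all $f\in S$; I must show $t\in\RED_\rho^{X,r}(S)$, i.e. for all $S'\subseteq S$ and all $u\in\RED_\sigma^{X,q}(S')$, $tu\in\RED_\tau^{X,r}(S')$. By the IH applied to $tu$ (valid since $\FN(tu)\subseteq X$), this reduces to showing $\pi^f(tu)=\pi^f(t)\pi^f(u)\in\RED_\tau^{\emptyset,r}(\{g_f\})$ for each $f\in S'$ (where $g_f$ is the empty map). Fix such an $f$. Since $u\in\RED_\sigma^{X,q}(S')$ and $f\in S'$, the IH gives $\pi^f(u)\in\RED_\sigma^{\emptyset,q}(\{g_f\})$; then from $\pi^f(t)\in\RED_\rho^{\emptyset,r}(\{g_f\})$ and the defining clause for the arrow type (instantiated at $S'':=\{g_f\}\subseteq\{g_f\}$) we obtain $\pi^f(t)\pi^f(u)\in\RED_\tau^{\emptyset,r}(\{g_f\})$, which is what we needed. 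This closes the induction.

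\textbf{The main obstacle.} There is no deep difficulty here; the one point requiring care is the precise reading of the notation $\RED_\rho^{\emptyset,r}$ as a reducibility set over the one-point space $(2^\omega)^\emptyset$, and making sure the set-inclusion side conditions ($S'\subseteq S$, $S''\subseteq\{g\}$) line up correctly when invoking the induction hypothesis on subterms — in particular that we always apply the IH to terms whose free names are contained in the relevant name set. The only genuinely substantive external input is Lemma~\ref{lemma:zero}, used to dispatch the $S'=\emptyset$ subcase; everything else is unwinding definitions together with the evident identity $\pi^f(tu)=\pi^f(t)\pi^f(u)$ and the fact that $\pi^g$ is the identity on name-closed terms.
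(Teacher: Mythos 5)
Your proof is correct and follows essentially the same route as the paper's: induction on $\rho$, the base case by definition, and the arrow case by applying the induction hypothesis to the argument $u$ and to the application $tu$ together with the identity $\pi^{f}(tu)=\pi^{f}(t)\pi^{f}(u)$. The only (harmless) differences are that you instantiate the forward direction at the singleton $\{f\}$ where the paper uses all of $S$, and that you explicitly dispatch the $S'=\emptyset$ subcase via Lemma~\ref{lemma:zero}, a point the paper leaves implicit.
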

\begin{proof}
If $\rho=o$, the claim follows from the definition.
If $\rho=(\BOX^{s}.\rho_{1})\To \rho_{2}$, suppose $t\in \RED_{\rho}^{X,r}(S)$, let $f\in S$ and $u\in \RED^{\emptyset,s}_{\rho_{1}}$. Then $u$ is name-closed hence for all $g\in S$, $\pi^{g}(u)=u\in \RED_{\rho}^{\emptyset, r}$, which by the IH, implies $u\in \RED_{\rho_{1}}^{X,s}(S)$; we deduce then that $tu\in \RED_{\rho_{2}}^{X,r}(S)$, so by IH $\pi^{f}(tu)=\pi^{f}(t)u\in \RED_{\rho_{2}}^{\emptyset, r}$. We can thus conclude that $\pi^{f}(t)\in \RED_{\rho}^{\emptyset,r}$.

Conversely, suppose that for all $f\in S$, $\pi^{f}(t)\in \RED_{\rho}^{\emptyset, r}$, let $S'\subseteq S$ and $u\in \RED_{\rho_{1}}^{X,s}(S')$; if $f\in S'$, then by IH $\pi^{f}(u)\in \RED_{\rho_{1}}^{\emptyset,s}$, so $\pi^{f}(tu)=\pi^{f}(t)\pi^{f}(u)\in \RED_{\rho_{2}}^{\emptyset, r}$; we have thus proved that for all $f\in S'$, $\pi^{f}(tu) \in \RED_{\rho_{2}}^{\emptyset, r}$, which by IH implies that $tu\in \RED_{\rho_{2}}^{X,r}(S')$. We conclude then that $t\in \RED_{\rho}^{X,r}(S)$.
\end{proof}

\begin{lemma}\label{lemma:perm}
If $t\in \RED_{\rho}^{X,r}(S)$ and $t'\pperm t$, then $t'\in \RED_{\rho}^{X,r}(S)$.
\end{lemma}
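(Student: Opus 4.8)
\textbf{Proof plan for Lemma~\ref{lemma:perm}.} The statement is the reducibility-predicate analogue of the subject-reduction property for permutative reduction: reducibility is preserved backwards along $\pperm$. The plan is to prove it by induction on the type $\rho$, reducing as usual to the case where $t'$ is a $\pperm$-redex and $t$ its contractum (the congruence cases follow immediately by the inductive hypothesis on subterms, since $\pperm$ is a congruence). For the base type $\rho=o$ the statement says: if $\pi^{f}(t)$ reduces to a $\PNF$ normal with $\PROB\geq r$ for all $f\in S$, then so does $\pi^{f}(t')$; this will follow from confluence of $\rred$ (Proposition~\ref{prop:confluence}) together with the fact that $\pi^{f}$ commutes with reduction, so that $\pi^{f}(t')\pperm\pi^{f}(t)$ (or $\pi^{f}(t')=\pi^{f}(t)$ when the redex is erased by the substitution of $f$), and $\pi^{f}(t)$ and $\pi^{f}(t')$ reduce to the same $\PNF$.

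For the inductive step, with $\rho=(\BOX^{s}\sigma_{1})\To\sigma_{2}$, I would unfold the definition of $\RED^{X,r}_{\rho}(S)$: given $S'\subseteq S$ and $u\in\RED^{X,s}_{\sigma_{1}}(S')$, I must show $t'u\in\RED^{X,r}_{\sigma_{2}}(S')$. Since $t\in\RED^{X,r}_{\rho}(S)$ gives $tu\in\RED^{X,r}_{\sigma_{2}}(S')$, and since $t'u\pperm tu$ (because $t'\pperm t$ implies $t'u\pperm tu$, reduction being a congruence), the inductive hypothesis at type $\sigma_{2}$ applied to the pair $t'u\pperm tu$ yields $t'u\in\RED^{X,r}_{\sigma_{2}}(S')$, as desired. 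This is essentially the standard argument: backward closure of reducibility under a congruence reduction lifts from the base type to arrow types purely formally.

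The one genuinely delicate point, and the place where care is needed, concerns the permutative rules that change the \emph{name structure} of the term, in particular $\nu a.t\pperm t$ (when $a\notin\FN(t)$), $\nu b.(t\oplus_{a}^{i}u)\pperm(\nu b.t)\oplus_{a}^{i}(\nu b.u)$, and $(\nu a.t)u\pperm\nu a.(tu)$. Here the subterms being permuted may carry free names that are quantified at the outer level, so one must check that $\pi^{f}$ still relates the redex and its contractum correctly. For $\nu a.t\pperm t$ with $a\notin\FN(t)$ one has $\pi^{f}(\nu a.t)=\nu a.\pi^{f}(t)\pperm\pi^{f}(t)$, which is exactly what is needed; for the commutations of $\nu$ with $\oplus$ and with application, $\pi^{f}$ simply distributes over both sides and the redex/contractum relation is preserved verbatim. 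So in fact $\pi^{f}(t')\pperm\pi^{f}(t)$ holds in all cases (or equality when a choice operator is resolved by $f$), and confluence plus the inductive hypothesis close the argument.

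I expect the main obstacle to be purely bookkeeping: verifying, rule by rule in Fig.~\ref{fig:permutations}, that $\pi^{f}$ commutes appropriately with each contraction — i.e.\ that whenever $t'\pperm t$ one has $\pi^{f}(t')\rred\pi^{f}(t)$ for every $f$ with $\FN(t')\subseteq X$ and every $X\supseteq\FN(t')$. Once this commutation lemma is in place (it is a routine induction on $t'$ using the definition of $\pi^{f}$, and indeed the case analysis is already implicit in the proof of Proposition~\ref{prop:subject}), the reducibility argument itself is the short formal induction on $\rho$ sketched above, with the base case handled by confluence and the arrow case handled by congruence of $\pperm$ under application.
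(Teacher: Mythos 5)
Your proposal is correct and follows the same route as the paper, which disposes of this lemma with a one-line "simple induction on $\rho$": the base case reduces to the observation that $\pi^{f}(t')$ rewrites to $\pi^{f}(t)$ (or equals it when $f$ resolves the choice), and the arrow case is the formal step via $t'u\pperm tu$ and the inductive hypothesis. The only quibble is that confluence is not actually needed in the base case — backward closure of $\mathrm{Norm}^{r}$ under reduction is just composition of reduction sequences — but this does not affect correctness.
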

\begin{proof}
Simple induction on $\rho$.
\end{proof}

We still need a few lemmas that will be essential for the inductive steps of the proof of Theorem \ref{thm:normalization}.

\begin{lemma}\label{lemma:nu1}
For all terms $t, u_{1},\dots, u_{n}$ with $\FN(t)\subseteq X\cup \{a\}$, $\FN(u_{1}),\dots,\FN(u_{n})\subseteq X$, and measurable sets $S\in \mathcal B((2^{\omega})^{X\cup\{a\}})$ and $S'\in \mathcal B((2^{\omega})^{X})$, if   
\begin{enumerate}
\item for all $f\in S$, $\pi^{f}(tu_{1}\dots u_{n})\in  \mathrm{Norm}^{r}$;
\item 
for all $f\in S'$,
 $\mu( \Pi^{f}(S))\geq s$;
\end{enumerate}
then for all $f\in S'$, $\pi^{f}((\nu a.t)u_{1}\dots u_{n}) \in \mathrm{Norm}^{ r\cdot s}$.

\end{lemma}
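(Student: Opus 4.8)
\textbf{Proof plan for Lemma~\ref{lemma:nu1}.} The plan is to compute the distribution $\mathcal D_{\pi^{f}((\nu a.t)u_{1}\dots u_{n})}$ for a fixed $f \in S'$ by decomposing over the random choices associated to the name $a$, and then to bound the mass it puts on head normal forms. First I would reduce to a cleaner situation: since $\FN(u_{1}),\dots,\FN(u_{n})\subseteq X$, the name $a$ does not occur in the $u_{i}$, so applying $f$ to $(\nu a.t)u_{1}\dots u_{n}$ yields $(\nu a.\pi^{f}(t))\pi^{f}(u_{1})\dots\pi^{f}(u_{n})$, where here $\pi^{f}$ acts on $t$ as a valuation on $X$ only, leaving $\oplus_{a}^{i}$-subterms untouched. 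Using the permutative reduction $(\nu a.w)v \to_{\mathsf p} \nu a.(wv)$ repeatedly (together with Proposition~\ref{prop:confluence}, which guarantees this does not affect normalization probabilities), this term reduces to $\nu a.\big(\pi^{f}(t)\pi^{f}(u_{1})\dots\pi^{f}(u_{n})\big)$, i.e. to $\nu a.\pi^{f}(tu_{1}\dots u_{n})$.

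Next I would analyze the distribution of this $\nu$-term. Writing $w_{f} = \pi^{f}(tu_{1}\dots u_{n})$, a term with $\FN(w_{f})\subseteq\{a\}$, the key observation is that the choices indexed by $a$ are resolved by drawing $g \in (2^{\omega})^{\{a\}}$, and $\pi^{g}(w_{f}) = \pi^{g+f}(tu_{1}\dots u_{n}) = \pi^{h}(tu_{1}\dots u_{n})$ where $h = f+g \in (2^{\omega})^{X\cup\{a\}}$. By hypothesis~(2), the measure of those $g$ with $f+g \in S$ is exactly $\mu(\Pi^{f}(S)) \geq s$; and by hypothesis~(1), for each such $h = f+g \in S$, the term $\pi^{h}(tu_{1}\dots u_{n})$ reduces to a $\PNF$ which is in normal form with $\PROB \geq r$. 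So with probability at least $s$ (over the draw of $g$), the resulting name-closed term is one that reaches a normal form with probability $\geq r$; conditioning gives that $\nu a.w_{f}$ reaches a normal form with $\PROB \geq r\cdot s$. Making this precise requires unfolding the definition of $\mathcal D_{\nu a.w_{f}}$ in terms of the $(S,a)$-tree structure of the $\PNF$ of $w_{f}$ and the measure $\mu(\{g \mid \pi^{g}(\cdot)=\cdot\})$, and using that $\sum_{v\in\mathcal{HN}}$ is monotone and additive over the support; I expect to package this as an auxiliary computation on distributions (essentially: $\mathcal D$ of a $\nu$-term is the $\mu$-average of the $\mathcal D$'s of its instances, so $\sum_{\mathcal{HN}}\mathcal D_{\nu a.w_{f}} \geq \int_{\{g : f+g\in S\}} (\sum_{\mathcal{HN}}\mathcal D_{\pi^{f+g}(tu_1\dots u_n)})\,d\mu \geq s\cdot r$).

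The main obstacle is the bookkeeping connecting the abstract ``normal form with $\PROB \geq r$'' statement in hypothesis~(1) — which is about reduction reaching \emph{some} $\PNF$ with a certain normalization mass — to the static distribution $\mathcal D$ of the single $\PNF$ obtained from $\nu a.w_{f}$. Because reduction is confluent (Proposition~\ref{prop:confluence}) and $\to_{\mathsf p}$ strongly normalizing, the $\PNF$ is unique, and one must check that the operation ``prefix a generator $\nu a$ and take the $\PNF$'' commutes appropriately with taking distributions: concretely, that $\mathcal D_{\nu a. w}(v) = \sum_{w' \in \supp(\mathrm{pnf}(w))} \mathcal D_{w'}(v)\cdot \mu(\{g \mid \pi^{g}(w)= w'\})$ and that this integral representation is exactly the average over $g$ of $\mathcal D_{\pi^{g}(w)}$. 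Once that identity is in hand, the inequality $r\cdot s$ drops out by monotonicity of measure and the trivial bound $\mu(\{g : f+g \notin S\}) \le 1 - s$ contributing nothing negative. I would also need Lemma~\ref{lemma:perm} (or directly Proposition~\ref{prop:subject}/confluence) to justify that the permutative rearrangements at the start preserve the relevant probabilities.
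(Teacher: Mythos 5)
Your proposal is correct and follows essentially the same route as the paper: reduce to the term $\nu a.\pi^{f}(tu_{1}\dots u_{n})$ via the permutation $(\nu a.w)v \pperm \nu a.(wv)$ (the paper does this at the end, for the $n>0$ case, rather than at the start), then unfold $\mathcal D$ of the resulting $\nu$-term over the support of its $(\mathcal T,a)$-tree, observe that the leaves reached by those $g$ with $f+g\in S$ carry total measure $\geq s$ and each normalizes with $\PROB\geq r$, and conclude $\sum_{v\in\mathcal{HN}}\mathcal D(v)\geq r\cdot s$. The "integral representation" you ask for is exactly the paper's finite-support computation, so the bookkeeping you flag is the intended (and only) real work.
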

\begin{proof}
We suppose $s>0$ since otherwise the claim is trivial. Let us first consider the case where $n=0$.
Let $f\in S'$. Since any term reduces to a (unique) $\PNF$, we can suppose w.l.o.g.~that the name-closed term $\pi^{f}(\nu a.t)=\nu a.t^{*}$ is in $\PNF$. By Corollary~\ref{cor:pnf} $t^{*}$ is a $(\mathcal T, a)$-tree of level $I_{a}(t^{*})$. 
Observe that for all $g\in (2^{\omega})^{\{a\}}$, 
$\pi^{g}(t^{*})=\pi^{g}(\pi^{f}(t))= \pi^{g+f}(t)$ .

If $N$ is the cardinality of $\mathrm{Supp}(t^{*})$, by 2. and the fact that $s>0$ we deduce that there exists a finite number $K$ of elements $w_{1},\dots, w_{K}$ of $\mathrm{Supp}(t^{*})$ which are in $\mathrm{Norm}^{r}$, so that  
$$
 \sum_{w\in \mathrm{Norm}^{r}}
\mu\{ g\mid \pi^{g}(t^{*}) =w\}= 
\sum_{{i}=1}^{K} 
\mu\{ g\mid \pi^{g}(t^{*}) =w_{i}\}\geq s$$

Now, by reducing each such terms $w_{i}$ in $\nu a.t$ to some $\PNF$ $w'_{i}$ which is in normal form with probability $\geq r$ (that is, such that $\sum_{u\in \mathcal{HN}}\mathcal D_{w'_{i}}(u)\geq r$), we obtain a new $\PNF$ $\nu a.t^{\sharp}$ and we can compute

\resizebox{0.95\linewidth}{!}{
\begin{minipage}{\linewidth}
\begin{align*}
\sum_{u\in \mathcal{HN}}\mathcal D_{\nu a.t^{\sharp}}(u) & =
\sum_{u\in \mathcal{HN}}\left (\sum_{t'\in \mathrm{Supp}(t^{\sharp})}\mathcal D_{t'}(u)\cdot 
\mu\{ g\mid \pi^{g}(t^{\sharp}) =t'\} \right) \\
&= 
\sum_{t'\in \mathrm{Supp}(t^{\sharp})}
\left (\sum_{u\in \mathcal{HN}} \mathcal D_{t'}(u)\cdot 
\mu\{ g\mid \pi^{g}(t^{\sharp}) =t'\} \right) \\
&= 
\sum_{t'\in \mathrm{Supp}(t^{\sharp})}
\left (\sum_{u\in \mathcal{HN}} \mathcal D_{t'}(u)\right)\cdot 
\mu\{ g\mid \pi^{g}(t^{\sharp}) =t'\}  \\
&\geq 
\sum_{{i}=1}^{K}
\left (\sum_{u\in \mathcal{HN}} \mathcal D_{w'_{i}}(u)\right)\cdot 
\mu\{ g\mid \pi^{g}(t^{\sharp}) =w'_{i}\}  \\
& \geq 
\sum_{{i}=1}^{K}
r\cdot 
\mu\{ g\mid \pi^{g}(t^{\sharp}) =w'_{i}\}  \\
& = 
r\cdot
\sum_{{i}=1}^{K} 
\mu\{ g\mid \pi^{g}(t^{\sharp}) =w'_{i}\}  \\
& =
r\cdot \sum_{{i}=1}^{K} 
\mu\{ g\mid \pi^{g}(t^{*}) =w_{i}\}  \\
& \geq r\cdot s
\end{align*}
\end{minipage}
}
\medskip

For the case in which $n>0$, we argue as follows:
from the hypotheses we deduce by the first point that 
$ \nu a.(tu_{1}u_{2}\dots u_{n})\in \mathrm{Norm}^{r\cdot s}$. We can then conclude by observing that 
$(\nu a.t)u_{1}u_{2}\dots u_{n}\pperm \nu a.(tu_{1}u_{2}\dots u_{n})$.

\end{proof}

%

\begin{lemma}\label{lemma:nu2}
For all terms $t,u_{1},\dots, u_{n}$ with $\FN(t)\subseteq X\cup \{a\}$, $\FN(u_{i})\subseteq X$,
and measurable sets  $S\in \mathscr B((2^{\omega})^{X\cup\{a\}})$ and $S'\subseteq \mathscr B((2^{\omega})^{X})$, if
\begin{enumerate}
\item $tu_{1}\dots u_{n}\in \RED_{\rho}^{X\cup\{a\}, r}(S)$;
\item for all $f\in S'$, $\mu( \Pi^{f}(S))\geq s$;
\end{enumerate}
then $(\nu a.t)u_{1}\dots u_{n}\in \RED_{\rho}^{X,r\cdot s}(S')$.

\end{lemma}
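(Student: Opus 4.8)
The plan is to prove Lemma~\ref{lemma:nu2} by induction on the type $\rho$, mirroring the structure of the reducibility predicates, and using Lemma~\ref{lemma:nu1} for the base case and the unfolding of $\RED_{\rho}^{X,r}(S)$ at arrow types for the inductive step.

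\textbf{Base case $\rho = o$.} By definition of $\RED_{o}^{X\cup\{a\},r}(S)$, hypothesis~(1) says precisely that for all $f \in S$, $\pi^{f}(tu_{1}\dots u_{n})$ reduces to a $\PNF$ in normal form with $\PROB \geq r$, i.e.\ $\pi^{f}(tu_{1}\dots u_{n}) \in \mathrm{Norm}^{r}$. Together with hypothesis~(2), these are exactly the two premises of Lemma~\ref{lemma:nu1}, which yields that for all $f \in S'$, $\pi^{f}((\nu a.t)u_{1}\dots u_{n}) \in \mathrm{Norm}^{r\cdot s}$. By the definition of $\RED_{o}^{X,r\cdot s}(S')$ this is exactly the conclusion.

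\textbf{Inductive step $\rho = (\BOX^{q}\sigma)\To \tau$.} Here I would unfold the definition of $\RED_{\rho}^{X,r\cdot s}(S')$: I must show that for every $S'' \subseteq S'$ and every $v \in \RED_{\sigma}^{X,q}(S'')$, the term $((\nu a.t)u_{1}\dots u_{n})v$ lies in $\RED_{\tau}^{X,r\cdot s}(S'')$. Fix such an $S''$ and $v$. First, using Lemma~\ref{lemma:weak2} (weakening the name context from $X$ to $X \cup \{a\}$), $v \in \RED_{\sigma}^{X,q}(S'')$ gives $v \in \RED_{\sigma}^{X\cup\{a\},q}(S''')$ for a suitable lift; more precisely, since $v$ is name-closed up to $X$, I would argue via Lemma~\ref{lemma:efs} that $v$ belongs to $\RED_{\sigma}^{X\cup\{a\},q}$ over the preimage of $S''$ under the projection $(2^{\omega})^{X\cup\{a\}} \to (2^{\omega})^{X}$, call it $\widehat{S''}$. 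Then from hypothesis~(1), $tu_{1}\dots u_{n} \in \RED_{\rho}^{X\cup\{a\},r}(S)$, and since $\widehat{S''} \cap S \subseteq S$, applying the definition of $\RED_{\rho}$ at $S \cap \widehat{S''}$ with the reducible argument $v$ gives $tu_{1}\dots u_{n}v \in \RED_{\tau}^{X\cup\{a\},r}(S \cap \widehat{S''})$. Now I would invoke the induction hypothesis of the lemma at type $\tau$, with the term $t' = tv$ (so that $t'u_{1}\dots u_{n}$ — modulo a permutative reduction $(\nu a.t)u_{1}\dots u_{n}v \pperm \nu a.(tu_{1}\dots u_{n}v)$, handled by Lemma~\ref{lemma:perm}) and with $S$ replaced by $S \cap \widehat{S''}$ and $S'$ by $S''$: one checks that for all $f \in S''$, $\mu(\Pi^{f}(S \cap \widehat{S''})) = \mu(\Pi^{f}(S)) \geq s$, because $\widehat{S''}$ contains the whole fiber over $f$ whenever $f \in S''$. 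This delivers $(\nu a.(tu_{1}\dots u_{n}v)) \in \RED_{\tau}^{X,r\cdot s}(S'')$, and by Lemma~\ref{lemma:perm} (permutative reduction preserves reducibility, reading the permutation backwards — so I would state the needed direction explicitly, or appeal to subject-expansion-style closure of $\RED$ under $\pperm$ as already used elsewhere) we conclude $((\nu a.t)u_{1}\dots u_{n})v \in \RED_{\tau}^{X,r\cdot s}(S'')$, as required.

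\textbf{Main obstacle.} The delicate point is the bookkeeping of name contexts and measurable sets when pushing the argument $v$ inside: one must carefully relate $\RED_{\sigma}^{X,q}(S'')$ over $(2^{\omega})^{X}$ with its counterpart over $(2^{\omega})^{X\cup\{a\}}$, and verify that the measure condition $\mu(\Pi^{f}(S)) \geq s$ survives intersection with the cylindrical lift $\widehat{S''}$ — this is where the $f$-projection being insensitive to the fiber over names not yet quantified (in the spirit of Lemma~\ref{L3} and Corollary~\ref{C1}) does the work. A secondary nuisance is that the permutative rewrite $(\nu a.t)u_{1}\dots u_{n}v \pperm \nu a.(tu_{1}\dots u_{n}v)$ goes in the direction needed to \emph{fold} the application under the generator, so I would either use the statement of Lemma~\ref{lemma:perm} applied to the reduct or note that $\RED$ is closed under anti-permutative-reduction as well (which follows by the same trivial induction on $\rho$).
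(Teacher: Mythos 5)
Your proposal is correct and takes essentially the same route as the paper: induction on $\rho$, with Lemma~\ref{lemma:nu1} discharging the base case and, at arrow types, lifting the reducible argument to the cylinder over $S''$ inside $(2^{\omega})^{X\cup\{a\}}$ via Lemma~\ref{lemma:efs}, checking that $\mu(\Pi^{f}(S\cap\widehat{S''}))=\mu(\Pi^{f}(S))\geq s$, and invoking the induction hypothesis. The one simplification you overlooked is that the lemma is stated for an arbitrary argument list $u_{1},\dots,u_{n}$ precisely so that the inductive hypothesis can be applied directly to $t$ with the extended list $u_{1},\dots,u_{n},v$, giving $(\nu a.t)u_{1}\dots u_{n}v\in \RED_{\tau}^{X,r\cdot s}(S'')$ outright; the permutative rewrite $(\nu a.t)u_{1}\dots u_{n}v\pperm \nu a.(tu_{1}\dots u_{n}v)$ is then needed only once, inside the base case (Lemma~\ref{lemma:nu1}), so your detour through $t'=tv$ and closure of $\RED$ under $\pperm$-expansion (Lemma~\ref{lemma:perm}) is sound but unnecessary.
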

\begin{proof}
If $s=0$ the claim follows from Lemma~\ref{lemma:zero}, so suppose $s>0$.
We argue by induction on $\rho$:
\begin{itemize}
\item if $\rho=o$, then the claim follows from Lemma~\ref{lemma:nu1}.
\item if $\rho=(\BOX^{s'}\rho_{1})\To \rho_{2}$, then let $S''\subseteq S'$ and $u\in \RED_{\rho_{1}}^{X,s'}(S'')$. 
Since $s>0$ for all $f\in S''\subseteq S'$ there exists at least one $g\in (2^{\omega})^{\{a\}}$, such that $g+f\in \model{\bone}_{X}$.
Let $T=\{ g+f\in S \mid f\in S''\}$. $T$ is measurable, since it is the counter-image of a measurable set through a measurable function (the projection function  from $(2^{\omega})^{X\cup\{a\}}$ to $ (2^{\omega})^{X}$).

We have then that $u\in \RED_{\rho_{1}}^{X,s'}(T)$: for all $g+f\in S''$, $\pi^{g+f}(u)=\pi^{f}(u)\in \RED_{\rho_{1}}^{\emptyset, s'}$, so by Lemma~\ref{lemma:efs}, $u\in \RED_{\rho_{1}}^{X,s'}(T)$.
Then using the hypothesis 1. we deduce that $tu_{1}\dots u_{n}u\in \RED_{\rho_{2}}^{X,s'}(T)$; moreover, for all $f\in S''$, $\mu(\Pi^{f}(T))=\mu(\Pi^{f}(S))\geq s$. So, by the induction hypothesis $(\nu a.t)u_{1}\dots u_{n}u\in \RED_{\rho_{2}}^{X,r\cdot s}(S'')$. 
We can thus conclude that $(\nu a.t)u_{1}\dots u_{n}\in \RED_{\rho}^{X,r\cdot s}(S')$.

\end{itemize}
\end{proof}

\begin{lemma}\label{lemma:tre}
If $t[u/x]u_{1}\dots u_{n}\in \RED_{\rho}^{X,r}(S)$, then $(\lambda x.t)uu_{1}\dots u_{n}\in \RED_{\rho}^{X,r}(S)$.
\end{lemma}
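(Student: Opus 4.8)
The plan is to argue by induction on the type $\rho$, using crucially that the statement is universally quantified over the spine length $n$ and over the arguments $u,u_{1},\dots,u_{n}$: this is exactly what makes the inductive hypothesis applicable to a longer spine in the arrow case. As usual in the reducibility argument, I would work under the standing assumption that all terms occurring have free names included in $X$, so that every $\pi^{f}(\cdot)$ below is well defined.

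For the base case $\rho=o$: by definition of $\RED_{o}^{X,r}$, the hypothesis says that for every $f\in S$ the name-closed term $\pi^{f}(t[u/x]u_{1}\dots u_{n})$ lies in $\mathrm{Norm}^{r}$, i.e.\ reduces to a $\PNF$ which is in normal form with $\PROB\geq r$. Since $\pi^{f}$ commutes with substitution of a term variable — $\pi^{f}(t[u/x])=\pi^{f}(t)[\pi^{f}(u)/x]$, by an immediate induction on $t$ using that $x$ is not a name — and distributes over application, one obtains
$$
\pi^{f}\big((\lambda x.t)u\,u_{1}\dots u_{n}\big)=(\lambda x.\pi^{f}(t))\,\pi^{f}(u)\,\pi^{f}(u_{1})\cdots\pi^{f}(u_{n})\ \to_{\beta}\ \pi^{f}(t[u/x]u_{1}\dots u_{n}).
$$
Now $\mathrm{Norm}^{r}$-membership is an existential statement about the reduction graph (``some reachable $\PNF$ is in normal form with $\PROB\geq r$''), and a $\beta$-expansion can only enlarge the set of reachable terms; hence $\pi^{f}\big((\lambda x.t)u\,u_{1}\dots u_{n}\big)\in\mathrm{Norm}^{r}$ as well. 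Since $f\in S$ was arbitrary, $(\lambda x.t)u\,u_{1}\dots u_{n}\in\RED_{o}^{X,r}(S)$.

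For the inductive step $\rho=(\BOX^{q}\sigma)\To\tau$: I would fix an arbitrary $S'\subseteq S$ and $v\in\RED_{\sigma}^{X,q}(S')$, and instantiate the hypothesis $t[u/x]u_{1}\dots u_{n}\in\RED_{(\BOX^{q}\sigma)\To\tau}^{X,r}(S)$ at $S'$ and $v$, obtaining $t[u/x]\,u_{1}\dots u_{n}v\in\RED_{\tau}^{X,r}(S')$. This last fact is precisely the hypothesis of the lemma with $\tau$ in place of $\rho$, the same $t$ and $u$, and the spine $u_{1},\dots,u_{n},v$ of length $n+1$; so the inductive hypothesis (available since $\tau$ is a proper subexpression of $\rho$) yields $(\lambda x.t)u\,u_{1}\dots u_{n}v\in\RED_{\tau}^{X,r}(S')$. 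As $S'$ and $v$ were arbitrary, this means $(\lambda x.t)u\,u_{1}\dots u_{n}\in\RED_{\rho}^{X,r}(S)$, completing the induction. I do not expect any genuine obstacle here: the two points requiring (minimal) care are the commutation of $\pi^{f}$ with term substitution and the bookkeeping of the spine so that the inductive hypothesis applies to the extended argument list; everything else is just unfolding the definition of $\RED$.
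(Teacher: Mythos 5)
Your proposal is correct and follows essentially the same route as the paper's proof: induction on $\rho$, with the base case handled by observing that the $\beta$-expanded term reduces (through the redex) to the same $r$-normal form, and the arrow case handled by extending the spine with the argument $v$ and invoking the inductive hypothesis at the smaller type. Your extra remarks — that $\pi^{f}$ commutes with term substitution and that $\mathrm{Norm}^{r}$ is closed under $\beta$-expansion because it is an existential property of the reduction graph — merely make explicit what the paper dismisses as ``clear''.
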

\begin{proof}
By induction on $\rho$.
 If $\rho=o$, then if $t[u/x]u_{1}\dots u_{n}\in \RED_{\rho}^{X,r}(S)$, then for all $f\in S$, 
 $\pi^{f}(t[u/x]u_{1}\dots u_{n})=\pi^{f}(t)[\pi^{f}(u)/x]\pi^{f}(u_{1})\dots \pi^{f}(u_{n})\in \mathrm{Norm}^{r}$, that is $t'$ normalizes to a $r$-head normal form. It is clear then that $\pi^{f}(t'')$, where 
 $t''=(\lambda x.t)uu_{1}\dots u_{n})$, normalizes to the same $r$-normal form, so $\pi^{f}(t'')\in \mathrm{Norm}^{r}$, and we conclude that 
 $t''\in \RED_{\rho}^{X,r}(S)$. 
 
 If $\rho=(\BOX^{s}.\rho_{1})\To \rho_{2}$, then let $S'\subseteq S$ and $v\in \RED_{\rho_{1}}^{X, s}(S')$; then $
t[u/x]u_{1}\dots u_{n}v \in \RED_{\rho_{2}}^{X,s}(S')$ so by  IH $(\lambda x.t)uu_{1}\dots u_{n}v\in \RED_{\rho_{2}}^{X,s}(S')$. This proves in particular that $(\lambda x.t)uu_{1}\dots u_{n}\in \RED_{\rho}^{X,s}(S)$.
\end{proof}

\begin{lemma}\label{lemma:quattro}
For all $t$ such that $\FN(t)\subseteq X$ and measurable sets $S, S'\in \mathscr B( (2^{\omega})^{X})$, if for all $S''\subseteq S'$ and 
$u\in \RED_{\rho_{1}}^{X,s}( S'')$, $t[u/x]\in \RED_{\rho_{2}}^{X, r}(S\cap S'')$, then $\lambda x.t\in \RED_{(\BOX^{s}.\rho_{1})\To \rho_{2}}^{X, r}( S\cap S')$.

\end{lemma}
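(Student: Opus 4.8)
The plan is to unfold the definition of the reducibility set at the arrow type $(\BOX^{s}\rho_{1})\To\rho_{2}$ and reduce the claim to the substitution hypothesis together with the $\beta$-expansion closure property established in Lemma~\ref{lemma:tre}. Recall that by definition
$$
\RED_{(\BOX^{s}\rho_{1})\To \rho_{2}}^{X,r}(S\cap S') = \{t \mid \forall T\subseteq S\cap S',\ \forall u\in \RED_{\rho_{1}}^{X,s}(T),\ tu\in \RED_{\rho_{2}}^{X,r}(T)\},
$$
so, since $\FN(\lambda x.t)=\FN(t)\subseteq X$, it suffices to fix an arbitrary measurable $T\subseteq S\cap S'$ and an arbitrary $u\in\RED_{\rho_{1}}^{X,s}(T)$ and to prove $(\lambda x.t)u\in\RED_{\rho_{2}}^{X,r}(T)$.

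First I would instantiate the hypothesis of the lemma with $S'':=T$. Since $T\subseteq S\cap S'\subseteq S'$ and $u\in\RED_{\rho_{1}}^{X,s}(T)$, the hypothesis gives $t[u/x]\in\RED_{\rho_{2}}^{X,r}(S\cap T)$; and because $T\subseteq S$ we have $S\cap T=T$, hence $t[u/x]\in\RED_{\rho_{2}}^{X,r}(T)$. Then I would apply Lemma~\ref{lemma:tre} in the case $n=0$, with ambient set $T$: from $t[u/x]\in\RED_{\rho_{2}}^{X,r}(T)$ it follows that $(\lambda x.t)u\in\RED_{\rho_{2}}^{X,r}(T)$. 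As $T$ and $u$ were arbitrary, this is precisely the membership condition for the arrow-type reducibility set, so $\lambda x.t\in\RED_{(\BOX^{s}\rho_{1})\To\rho_{2}}^{X,r}(S\cap S')$.

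There is no real obstacle here: the proof is a one-line combination of the assumed property of $t$ with $\beta$-expansion closure. The only points that require a little care are correctly matching the universally quantified set $S''$ appearing in the hypothesis with the set $T$ that the definition of the arrow-type reducibility set hands us, and the harmless set-theoretic simplification $S\cap T=T$ (valid because $T\subseteq S$). This lemma is one of the auxiliary closure properties feeding into step (2) of the proof sketch of Theorem~\ref{thm:normalization} — namely, it is exactly what is needed to treat the abstraction rule $\TLA$ when showing that every typable term is reducible.
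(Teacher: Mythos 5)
Your proof is correct and follows essentially the same route as the paper's: unfold the arrow-type reducibility set, instantiate the hypothesis with the given subset, and close under $\beta$-expansion via Lemma~\ref{lemma:tre}. If anything, your simplification $S\cap T = T$ (for $T\subseteq S$) is stated more carefully than in the paper, which writes the target set as $S\cap S'$ where the definition of the arrow-type predicate really calls for $T$.
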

\begin{proof}
Let $T\subseteq S\cap S'$ and  $u \in \RED_{\rho_{1}}^{X,s}( T)$. 
Then, by hypothesis, $t[u/x]\in \RED_{\rho_{2}}^{X,r}(S\cap T)= \RED_{\rho_{2}}^{X,r}(S\cap S')$ and we conclude by Lemma \ref{lemma:tre} that $(\lambda x.t)u\in \RED_{\rho_{2}}^{X,r}(S\cap S')$.
This proves that $\lambda x.t\in \RED_{(\BOX^{s}.\rho_{1})\To \rho_{2}}^{X,r}(S\cap S')$.

\end{proof}

We can now finally prove Theorem \ref{thm:normalization}, that will descend from the following:

\begin{proposition}
If $\Gamma \vdash^{X,r}t: \bone \Pto \rho$, where $\Gamma=\{x_{1}:\BOX^{s_{1}}\rho_{1},\dots, x_{n}:\BOX^{s_{n}}\rho_{n}\}$, then for all $S\subseteq \model{\bone}_{X}$ and $u_{1}\in \RED_{\rho_{1}}^{X,s_{1}}(S),\dots,
u_{n}\in \RED_{\rho_{n}}^{X,s_{n}}(S)$, $t[u_{1}/x_{1},\dots, u_{n}/x_{n}]\in \RED_{\rho}^{X,r}(S)$.

\end{proposition}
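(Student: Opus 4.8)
The plan is to prove the proposition by induction on the type derivation of $\Gamma \vdash^{X,r} t: \bone \Pto \rho$, using as intermediate structures the reducibility predicates $\RED_{\rho}^{X,r}(S)$ defined in the excerpt, together with the battery of lemmas (Lemmas \ref{lemma:x}--\ref{lemma:quattro}) that have been established precisely for this purpose. The substitution notation $t[u_{1}/x_{1},\dots,u_{n}/x_{n}]$ will be abbreviated $t^{*}$, and I will write $\sigma$ for the assignment $x_{i}\mapsto u_{i}$ throughout. Once the proposition is proved, Theorem \ref{thm:normalization} follows immediately: given $\Gamma \vdash^{X,r}t:\bone\Pto \rho$, apply the proposition with $S=\model{\bone}_{X}$ and each $u_{i}=x_{i}$; since variables are neutral, $x_{i}\in \RED_{\rho_{i}}^{X,s_{i}}(S)$ by (the proof of) Lemma \ref{lemma:uno}, hence $t\in \RED_{\rho}^{X,r}(\model{\bone}_{X})$, and then Lemma \ref{lemma:uno} gives that for all $f\in\model{\bone}_{X}$, $\pi^{f}(t)$ reduces to a normal form with $\PROB\geq r$.

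The inductive steps I would carry out, in order, are the following. For $\TB$: $\model{\bone}_{X}\subseteq\model{\bot}_{X}=\emptyset$, so $S=\emptyset$ and the claim is exactly Lemma \ref{lemma:zero}. For $\TID$, where $t=x_{j}$, $\rho=\rho_{j}$, $r=s_{j}\cdot s$ for some $s$: here $t^{*}=u_{j}\in \RED_{\rho_{j}}^{X,s_{j}}(S)$ by hypothesis, and one must pass from level $s_{j}$ to level $r=s_{j}\cdot s$; this uses monotonicity of $\RED$ in the rate parameter (which follows from the definitions, or can be extracted as a one-line sub-lemma). For $\TU$: by the two induction hypotheses $t^{*}\in\RED_{\rho}^{X,r}(S\cap\model{\btwo}_{X})$ and $t^{*}\in\RED_{\rho}^{X,r}(S\cap\model{\bthree}_{X})$, and since $S\subseteq\model{\bone}_{X}\subseteq\model{\btwo\vee\bthree}_{X}$ one concludes by additivity of $\RED$ over unions of the argument set (another small structural fact about $\RED$, provable by induction on $\rho$). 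For $\TLA$: apply the induction hypothesis with the extended context $\Gamma,x:\FF s$ and then invoke Lemma \ref{lemma:quattro} to conclude $\lambda x.t^{*}\in\RED_{\FF s\To\rho}^{X,r}(S)$. For $\TA$: by the induction hypotheses $(t_{1})^{*}\in\RED_{(\BOX^{r'}\sigma)\To\tau}^{X,q}(S)$ and $(t_{2})^{*}\in\RED_{\sigma}^{X,r'}(S)$; unfolding the definition of the arrow predicate at $S'=S$ gives $(t_{1})^{*}(t_{2})^{*}=(t_{1}t_{2})^{*}\in\RED_{\tau}^{X,q}(S)$. For $\TL$/$\TR$: $t=t_{1}\oplus_{a}^{i}t_{2}$ with $S\subseteq\model{\bone}_{X}\subseteq\model{\bvar_{i}^{a}\wedge\bone'}_{X}$ (resp.\ with $\lnot\bvar_{i}^{a}$); on $S$ the map $\pi^{f}$ selects the left (resp.\ right) branch, so $\pi^{f}(t^{*})=\pi^{f}((t_{1})^{*})$ for $f\in S$, and the claim reduces via Lemma \ref{lemma:efs} (and Lemma \ref{lemma:perm} if needed) to the induction hypothesis on $t_{1}$ (resp.\ $t_{2}$). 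For $\TN$: $t=\nu a.t'$, $r=q\cdot s$, with a weak $a$-decomposition $\bigvee_{i}\btwo_{i}\wedge\bthree_{i}$ of the label; the induction hypothesis gives $(t')^{*}\in\RED_{\rho}^{X\cup\{a\},q}(S_{0})$ for $S_{0}=\model{\bigvee_{i}\btwo_{i}\wedge\bthree_{i}}_{X\cup\{a\}}$, and from $\bone\vDash^{X}\bigvee_{i}\bthree_{i}$ together with $\mu(\model{\btwo_{i}}_{\{a\}})\geq s$ one checks that $\mu(\Pi^{f}(S_{0}))\geq s$ for all $f\in S\subseteq\model{\bone}_{X}$ (using Lemma \ref{lemma:fundamental} to identify the fibre $\Pi_{f}(S_{0})$ with the unique $\model{\btwo_{i}}_{\{a\}}$ whose companion $\bthree_{i}$ holds at $f$); then Lemma \ref{lemma:nu2} with $n=0$ yields $\nu a.(t')^{*}=t^{*}\in\RED_{\rho}^{X,q\cdot s}(S)$.

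The main obstacle I anticipate is the generator case $\TN$. The delicate point is matching the \emph{weak} $a$-decomposition appearing in the typing rule (whose $\btwo_{i}$ are only required satisfiable, not pairwise incompatible) with the measure-theoretic reasoning of Lemma \ref{lemma:nu2}, which wants a genuine lower bound $\mu(\Pi^{f}(S_{0}))\geq s$ on the fibres. One has to argue that at each point $f\in\model{\bone}_{X}$ the fibre $\Pi_{f}(S_{0})$ contains $\model{\btwo_{i}}_{\{a\}}$ for \emph{some} index $i$ with $\bthree_{i}$ true at $f$ — which is guaranteed by $\bone\vDash^{X}\bigvee_{i}\bthree_{i}$ — and that this single disjunct already supplies measure $\geq s$; the possible overlap of the $\btwo_{i}$ is harmless because one only needs a lower bound, not an exact value. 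A secondary nuisance is the bookkeeping of the rate parameter: several rules (notably $\TID$ and $\TN$, and implicitly $\TA$ through the premise rate $r'$) multiply or relabel the rate, so I would isolate at the outset the monotonicity statement ``if $t\in\RED_{\rho}^{X,r}(S)$ and $r'\leq r$ then $t\in\RED_{\rho}^{X,r'}(S)$'' and the union statement ``$\RED_{\rho}^{X,r}(S_{1})\cap\RED_{\rho}^{X,r}(S_{2})\subseteq\RED_{\rho}^{X,r}(S_{1}\cup S_{2})$'', both provable by a routine induction on $\rho$, so that the seven cases above go through cleanly.
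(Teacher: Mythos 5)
Your proposal is correct and follows essentially the same route as the paper's own proof: induction on the typing derivation, with Lemma~\ref{lemma:zero} for $\TB$, Lemma~\ref{lemma:quattro} for abstraction, the pointwise characterization of Lemma~\ref{lemma:efs} to handle the union and choice rules, and Lemma~\ref{lemma:nu2} for the generator, together with the (easily verified) monotonicity of $\RED$ in the rate needed for $\TID$. The one place your plan needs a literal adjustment is the $\TN$ case: the induction hypothesis cannot be invoked at $S_{0}=\model{\btwo}_{X\cup\{a\}}$ itself, because the substituted arguments $u_{j}$ are only known to be reducible over $S$ and not over all valuations satisfying $\btwo$; one must instead take the part of $\model{\btwo}_{X\cup\{a\}}$ lying above $S$, i.e.\ intersect with the cylinder $T=\{g+f\mid f\in S\}$ as the paper does, which leaves your fibre computation (and the lower-bound argument for the weak $a$-decomposition) unchanged.
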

\begin{proof}
The argument is by induction on a type derivation of $t$:
\begin{itemize}

\item If the last rule is:
\begin{prooftree}
\AXC{$\mathrm{FV}(t)\subseteq \Gamma,\FN(t)\subseteq X$}
\AXC{$\bone\vDash^{X}\bot$}
\RL{$\TB$}
\BIC{$
\Gamma\vdash^{X,r}t: \bone\Pto \rho$}
\end{prooftree}
then from $S\subseteq \model\bone_{X}\subseteq\emptyset$ we deduce $S=\emptyset$, so can conclude by Lemma~\ref{lemma:zero}.

\item if the last rule is  
\begin{prooftree}
\AXC{$\FN(\bone)\subseteq X$}
\RL{$\TID$}
\UIC{$\Gamma, x:(\BOX^{r}.\rho) \vdash^{X,r} x: \bone\Pto \rho$}
\end{prooftree}
Then for all $S\subseteq \model{\bone}_{X}$, and for all
 $u_{1}\in \RED^{X,s_{1}}_{\rho_{1}}(S),\dots
u_{n}\in \RED^{X,s_{n}}_{\rho_{n}}(S)$,  and 
$u\in \RED^{X,r}_{\rho}(S)$, 
$t[u_{1}/x_{1},\dots, u_{n}/x_{n}, u/x]= u \in \RED_{\rho}^{X, s}(\model{\bone}_{X}\cap S)= \RED_{\rho}^{X,s}(S)$.

\item if the last rule is  
\begin{prooftree}
\AXC{$\Gamma \vdash^{X,r} t: \bone_{1}\Pto \rho$}
\AXC{$\Gamma \vdash^{X,r} t: \bone_{2}\Pto \rho$}
\AXC{$\bone\vDash^{X} \bone_{1}\vee \bone_{2}$}
\RL{$\TU$}
\TIC{$\Gamma \vdash^{X,r} t: \bone\Pto \rho$}
\end{prooftree}
then by IH and Lemma \ref{lemma:uno} for all $S\subseteq \model{\bone_{1}}_{X}$, for all $u_{1}\in \RED_{\rho_{1}}^{X,s_{1}}(S),\dots, u_{n}\in \RED_{\rho_{n}}^{X,s_{n}}(S)$ and for all $f\in S$, 
$\pi^{f}( t[u_{1}/x_{1},\dots, u_{n}/x_{n}])\in \RED_{\rho}^{\emptyset, r}$; similarly,
for all $S\subseteq \model{\bone_{2}}_{X}$, for all $u_{1}\in \RED_{\rho_{1}}^{X,s_{1}}(S),\dots, u_{n}\in \RED_{\rho_{n}}^{X,s_{n}}(S)$ and for all $f\in S$, 
$\pi^{f}( t[u_{1}/x_{1},\dots, u_{n}/x_{n}])\in \RED_{\rho}^{\emptyset, r}$.

Let now $S\subseteq \model{\bone}_{X} \subseteq \model{\bone_{1}}_{X}\cup \model{\bone_{2}}_{X}$ and $f\in S$.
If $f\in  \model{\bone_{1}}_{X}$, then first observe that from $u_{i}\in \RED_{\rho_{i}}^{X,s_{i}}(S)$ it follows in particular that 
$u_{i}\in \RED_{\rho_{i}}^{X,s_{i}}(S\cap \model{\bone_{1}}_{X})$: for all $g\in S\cap  \model{\bone_{1}}_{X}$, since $g\in S$, $\pi^{g}(u_{i})\in \RED_{\rho_{i}}^{\emptyset,s_{i}}$, so by Lemma \ref{lemma:efs} we conclude that $u_{i}\in \RED_{\rho_{i}}^{X,s_{i}}(S\cap  \model{\bone_{1}}_{X})$. 
Hence we deduce that 
$\pi^{f}( t[u_{1}/x_{1},\dots, u_{n}/x_{n}])\in \RED_{\rho}^{\emptyset, r}$.
If $f\in  \model{\bone_{2}}_{X}$ by a similar argument we deduce $\pi^{f}( t[u_{1}/x_{1},\dots, u_{n}/x_{n}])\in \RED_{\rho}^{\emptyset, r}$.
Hence for all $f\in S$, $\pi^{f}( t[u_{1}/x_{1},\dots, u_{n}/x_{n}])\in \RED_{\rho}^{\emptyset, r}$, and we conclude that 
$ t[u_{1}/x_{1},\dots, u_{n}/x_{n}]\in \RED_{\rho}^{X,r}(S)$.

\item $t=\lambda x.u$ is obtained by
\begin{prooftree}
\AXC{$ \Gamma, x: \BOX^{s}.\rho \vdash^{X,r} u: \bone\Pto \rho'$}
\RL{$\TLA$}
\UIC{$\Gamma\vdash^{X,r} \lambda x.u: \bone\Pto (\BOX^{s}.\rho)\To \rho'$}
\end{prooftree}
By IH, for all $S\subseteq \model{\bone}_{X}$, for all 
$u_{1}\in \RED_{\rho_{1}}^{X,s_{1}}(S),\dots, u_{n}\in \RED_{\rho_{n}}^{X,s_{n}}(S)$, 
and $v\in \RED_{\rho}^{X,s}(S)$, 
$u[u_{1}/x_{1},\dots, u_{n}/x_{n},v/x]\in \RED_{\rho'}^{X,r}(\model{\bone}_{X}\cap S)$.
In particular, for all choice of $u_{1},\dots, u_{n}$ and variable $y$ not occurring free in $u_{1},\dots, u_{n}$, by letting $t'=u[u_{1}/x_{1},\dots, u_{n}/x_{n},y/x]$, we have that for all
 $v\in \RED_{\rho}^{X,s}(S)$, 
 $t'[v/y]=u[u_{1}/x_{1},\dots, u_{n}/x_{n},v/x] \in \RED_{\rho'}^{X,r}( \model{\bone}_{X}\cap S)$.
 By Lemma \ref{lemma:quattro}, then, we can conclude that $\lambda y.t'=(\lambda y.t[y/x])[u_{1}/x_{1},\dots, u_{n}/x_{n}]= (\lambda x.t)[u_{1}/x_{1},\dots, u_{n}/x_{n}]\in \RED_{(\BOX^{s}.\rho)\To \rho'}^{X,r}(\model{\bone}_{X}\cap S)$.

\item $t= uv$ is obtained by
\begin{prooftree}
\AXC{$ \Gamma \vdash^{X,r} u: \btwo\Pto (\BOX^{s}.\rho_{1})\To \rho_{2}$}
\AXC{$ \Gamma\vdash^{X,s}v: \bthree\Pto \rho_{1}$}
\AXC{$\bone\vDash^{X}\btwo\land \bthree$}
\RL{$\TA$}
\TIC{$\Gamma\vdash^{X,r} uv: \bone\Pto \rho_{2}$}
\end{prooftree}
By IH, for all
$S \subseteq \model{\btwo}_{X}$,
for all  $u_{1}\in \RED^{X,s_{1}}_{\rho_{1}}(S),\dots
u_{n}\in \RED^{X,s_{n}}_{\rho_{n}}(S)$ and $w\in \RED_{\rho_{1}}^{X,s}(S)$, 
$u[u_{1}/x_{1},\dots, u_{n}/x_{n}] w\in \RED_{ \rho_{2}}^{X, r}(S)$, and 
for all
$S \subseteq \model{\bthree}_{X}$,
for all  $u_{1}\in \RED^{\theta,s_{1}}_{\rho_{1}}(S),\dots
u_{n}\in \RED^{\theta,s_{n}}_{\rho_{n}}(S)$, 
$v[u_{1}/x_{1},\dots, u_{n}/x_{n}] \in \RED_{\rho_{1}}^{X, s}(S)$.
Let now $S\subseteq\model{\bone}_{X}\subseteq \model{\bone}_{X}\cap \model{\btwo}_{X}$ and  $u_{1}\in \RED^{\theta,s_{1}}_{\rho_{1}}(S),\dots
u_{n}\in \RED^{\theta,s_{n}}_{\rho_{n}}(S)$. Then we deduce 
$u[u_{1}/x_{1},\dots, u_{n}/x_{n}] v[u_{1}/x_{1},\dots, u_{n}/x_{n}]=
uv[u_{1}/x_{1},\dots, u_{n}/x_{n}]
\in \RED_{ \rho_{2}}^{X, r}(S)$.

\item if $t= t_{1}\oplus^{i}_{a}t_{2}$, and the last rule is 
\begin{prooftree}
\AXC{$ \Gamma \vdash^{X,r} t_{1}: \bone\Pto \rho$}
\AXC{$ \btwo\vDash  (\lnot\bvar_{i}^{a}\land \bone)$}
\RL{$\TL$}
\BIC{$ \Gamma \vdash^{X,r}t_{1}\oplus^{a}_{n} t_{2}:\btwo\Pto \rho$}
\end{prooftree}
By IH and Lemma~\ref{lemma:efs}, for all 
$S\subseteq \model{\bone}_{X}$, $f\in S$, 
$u_{1}\in \RED^{X,s_{1}}_{\rho_{1}}(S),\dots
u_{n}\in \RED^{X,s_{n}}_{\rho_{n}}(S)$, 
$\pi^{f}(t_{1}[u_{1}/x_{1},\dots, u_{n}/x_{n}]) \in \RED^{\emptyset, r}_{\rho}$.
Let now $S\subseteq \model{\btwo}_{X}\subseteq \model{\bvar_{i}^{a}}_{X}\land \model{\bone}_{X}$ and $u_{1}\in \RED^{X,s_{1}}_{\rho_{1}}(S),$ $\dots,$
$u_{n}\in \RED^{X,s_{n}}_{\rho_{n}}(S)$.
If $f\in S$, then $f(a)=0$ so in particular $\pi^{f}(t[u_{1}/x_{1},\dots, u_{n}/x_{n}])= \pi^{f}(t_{1}[u_{1}/x_{1},\dots, u_{n}/x_{n}]) \in \RED^{\emptyset, r}_{\rho}$. 
Hence, using Lemma~\ref{lemma:efs} we conclude that $t[u_{1}/x_{1},\dots,$ $u_{n}/x_{n}]$ $\in$ $\RED_{\rho}^{X,r}(\model{c}_{X}\cap S)$.

\item if $t= t_{1}\oplus^{i}_{a}t_{2}$, and the last rule is 
\begin{prooftree}
\AxiomC{$ \Gamma \vdash^{X,r} t_{2}: \bone\Pto \rho$}
\AxiomC{$ \btwo\vDash  (\bvar_{i}^{a}\land \bone)$}
\RightLabel{$\TR$}
\BinaryInfC{$ \Gamma \vdash^{X,r}t_{1}\oplus^{a}_{n} t_{2}:\btwo\Pto \rho$}
\end{prooftree}
then we can argue similarly to the previous case.

\item If $t=\nu a.u$ and the last rule is
\begin{prooftree}
\AxiomC{$\Gamma\vdash^{X\cup\{a\},r} u: \bigvee_{i}\btwo_{i}\land \bthree_{i}\Pto \rho$}
\AxiomC{$\vDash\mu(\model{c_{i}}_{X})\geq s$}
\AxiomC{$\bone \vDash \bigvee_{i}\bthree_{i}$}
\RightLabel{$\TN$}
\TrinaryInfC{$\Gamma\vdash^{X, r\cdot s} \nu a.u: \bone\Pto \rho$}
\end{prooftree}
where $a\notin FV(\Gamma)\cup FV(\bthree_{i})$ and $\btwo=\bigvee_{i}\btwo_{i}\land \bthree_{i}$, then let $S\subseteq \model{\bone}_{X}$ and $u_{1}\in \RED^{X\cup\{ a\},s_{1}}_{\rho_{1}}(S),\dots
u_{n}\in \RED^{X\cup\{a\} ,s_{n}}_{\rho_{n}}(S)$.
Let $T=\{g+f\mid f\in S\}$, which is measurable since counter-imaged of a measurable set through the projection function; then since $\FN(u_{i})\subseteq X$, we deduce using Lemma \ref{lemma:uno} that $u_{i}\in \RED_{\rho_{i}}^{X,s_{i}}(T\cap\model{\btwo}_{X\cup\{a\}})$. 
By IH and the hypothesis we deduce then that 
\begin{itemize}
\item $u[u_{1}/x_{1},\dots, u_{n}/x_{n}]\in \RED^{X\cup\{a\}, r}_{\rho}(T\cup\model{\btwo}_{X\cup\{a\}})$;
\item for all $f\in S$, $\mu(\Pi^{f}(T)\cap\model{\btwo}_{X\cup\{a\}})\geq s$;

\end{itemize}

Hence, by Lemma~\ref{lemma:nu2} we conclude that 
$\nu a.u[u_{1}/x_{1},\dots, u_{n}/x_{n}]= (\nu a.u)[u_{1}/x_{1},\dots, u_{n}/x_{n}] \in \RED_{\rho}^{X,r\cdot s}(S)$.

\end{itemize}
\end{proof}

\subsection{The Curry-Howard Correspondence for $\TTT$}

\begin{proposition}\label{prop:embed}
If $\Gamma\vdash^{X,r}t: \bone \Pto \sigma$ is derivable in $\TTT$, then 
$\vdash^{X}_{\PPL}\bone \Pto \BOX^{r}_{a}\sigma^{*}, \bot\Pfrom \Gamma^{*}$.
\end{proposition}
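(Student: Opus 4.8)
The statement is a translation result: every typing derivation of $\TTT$ should be mapped to a derivation of $\PPL$ in the multi-succedent system mentioned in Section~\ref{section6}. The natural approach is structural induction on the type derivation of $\Gamma\vdash^{X,r}t:\bone\Pto\sigma$, mapping each typing rule of Fig.~\ref{fig:typingrules} to a fixed block of $\PPL$-rules, and appealing to the admissibility of the cut rule (Remark~\ref{rem:cutrule}, extended to $\PPL$) and to Lemma~\ref{lemma:commutations} wherever a counting quantifier has to be moved past a disjunction. Before starting the induction I would record two bookkeeping facts: first, that $\BOX^{r}_{a}\sigma^{*}$ is well-defined because $(\BOX^{q}\sigma)^{*}=\BOX^{q}_{a}\sigma^{*}$ and $o^{*}=\FLIP$ are closed formulas up to the fixed/fresh name $a$; second, that the labelled formulas $\bot\Pfrom(\FF s_i)^{*}$ behave as passive context — any $\PPL$-rule applied to the active part can be carried over a multi-succedent context — so that the $\PPL$-proof we build really has the claimed conclusion $\vdash^{X}\bone\Pto\BOX^{r}_{a}\sigma^{*},\bot\Pfrom\Gamma^{*}$.

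The core of the argument is the case analysis. For the initial rule $\TID$ ($x:\BOX^{q}\sigma\vdash^{X,q\cdot s}x:\bone\Pto\sigma$) the target sequent is $\vdash^{X}\bone\Pto\BOX^{q\cdot s}_{a}\sigma^{*},\bot\Pfrom\BOX^{q}_a\sigma^{*}$, which holds because $\BOX^{q}_a\sigma^{*}\subseteq\BOX^{q\cdot s}_a\sigma^{*}$ semantically (so it follows from soundness+completeness, or directly from the $\Rbl/\Rbr$ rules); $\TB$ is handled by $\Rmur$ since $\bone\vDash\bot$ gives $\mu(\model\bone_X)=0$. For $\TLA$ one uses a disjunction-right rule of $\PPL$ together with Lemma~\ref{lemma:commutations} to turn $\BOX^{r}_{a}((\FF s\To\sigma)^{*})=\BOX^{r}_{a}(\lnot\FF s^{*}\vee\sigma^{*})$ into the shape produced by moving the new hypothesis $\bot\Pfrom\FF s^{*}$ into the succedent and distributing; here the key point (noted in the excerpt's footnote) is that in the multi-succedent system the abstraction rule is not trivial precisely because of this commutation. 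For $\TA$ one uses the $\PPL$ cut rule, applied to the translation of the function ($\bone_1\Pto\BOX^{r}_a((\BOX^{s}\sigma\To\tau)^{*})$) and the argument ($\bone_2\Pto\BOX^{s}_a\sigma^{*}$), after some routine massaging of the implication translation. The choice rules $\TL,\TR$ translate into the initial sequents $\AxO$ of $\PPL$ (matching $\bone\vDash\bvar_i^a\land\btwo$ against $\bone\vDash^X\bvar_i^a$) composed with a weakening/union step. Finally, and most importantly, the generator rule $\TN$ translates essentially into $\Rbr$: from the premiss $\Gamma\vdash^{X\cup\{a\},q}t:\btwo\Pto\sigma$, inductively a $\PPL$-proof of $\vdash^{X\cup\{a\}}\btwo\Pto\BOX^{q}_{b}\sigma^{*},\dots$, and the side conditions $\mu(\model{\btwo_i}_{\{a\}})\ge r$ and $\bone\vDash\bigvee_i\bthree_i$ are exactly the external hypotheses needed to apply $\Rbr$ and obtain $\vdash^{X}\bone\Pto\BOX^{r}_{a}\BOX^{q}_{b}\sigma^{*}$, which after an application of the Epsilon/commutation machinery (or just semantically, since $\BOX^{r}_a\BOX^{q}_b\sigma^{*}\equiv\BOX^{q\cdot r}_a\sigma^{*}$ — this is the content of the examples in Fig.~\ref{fig:stlcexamples}) is $\BOX^{q\cdot r}_{a}\sigma^{*}$ as required.

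The step I expect to be the main obstacle is the $\TN$ case, for two reasons. First, one must match the \emph{weak} $a$-decomposition $\bigvee_i\btwo_i\land\bthree_i$ of $\btwo$ used in $\TTT$ against the $a$-decomposition used in the $\PPL$ rule $\Rbr$ (Fig.~\ref{fig:cpl}); these differ (weak decompositions only require the $\btwo_i$ satisfiable, not pairwise-disjoint $\bthree_i$), so I would need an intermediate lemma showing that a weak $a$-decomposition can be refined to, or used in place of, a genuine $a$-decomposition for the purpose of the threshold condition $\bone\vDash^X\bigvee\{\bfour_i\mid\mu(\model{\bthree_i}_{\{a\}})\ge q\}$ — Lemma~\ref{lemma:fundamental} should do the job but the bookkeeping is delicate. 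Second, one has to check that the product of probabilities $q\cdot r$ genuinely arises, i.e.\ that $\BOX^{r}_{a}\BOX^{q}_{b}\sigma^{*}\equiv_X\BOX^{q\cdot r}_{a}\sigma^{*}$ (with $b$ fresh and then renamed to $a$); this is a small measure-theoretic computation via $f$-projection but it must be stated cleanly as an auxiliary lemma so the induction goes through with the single-name formula $\BOX^{r}_a\sigma^{*}$ in the conclusion rather than a nested prefix. Everything else — $\TU$, $\TLA$, $\TL$, $\TR$ — is routine once the passive-context lemma and the cut-admissibility for $\PPL$ are in place.
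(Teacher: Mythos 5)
Your plan follows essentially the same route as the paper's proof in Appendix~\ref{appendix6}: a rule-by-rule translation by induction on the typing derivation, using the cut rule for $\TA$, Lemma~\ref{lemma:commutations} for $\TLA$, a label-weakening (union) step for $\TL/\TR$, and, for $\TN$, exactly the two auxiliary facts the paper isolates as Lemma~\ref{lemma:count} (admissibility of the counting rule relative to \emph{weak} $a$-decompositions) and Lemma~\ref{lemma:qs}. Two small corrections: the paper disposes of $\TID$ by noting that every $\sigma^{*}$ is built from the valid formula $\FLIP$, so $\BOX^{q}_{a}\sigma^{*}$ is always derivable on the right (Lemma~\ref{lemma:true}), which is cleaner than your inclusion argument; and the collapse of the nested prefix is only the entailment $\BOX^{s}_{a}\BOX^{q}_{b}A\vDash\BOX^{q\cdot s}_{b}A$ (the outer quantifier is vacuous when $a\notin\FN(A)$, and $q\cdot s\leq q$), not the equivalence you assert --- but the entailment is all the translation needs.
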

\noindent
The embedding relies on some properties of $\PPL$ established in previous sections, as well as on the following few simple lemmas.
Preliminarily notice that, as anticipated, 
we will define an embedding between $\lambda \PPL$ and
$\PPL$, by considering a multi-succedent calculus, which is
sound and complete for the semantics $\PPL$, but is slightly different with respect to the one introduced in Section~\ref{section4}.
We defined its rules by simply adding a set of labelled formulas,
call it $\Delta$, to the succedent of each premiss and conclusion.
The only rules which change a bit are $\mathsf{R1_{\vee}^{\Era}}/\mathsf{R2_{\vee}^{\Era}}$ and $\mathsf{R1_{\wedge}^{\Ela}}/\mathsf{R2_{\wedge}^{\Ela}}$.
These are, respectively, substituted with the following multi-succedent rules:

\begin{minipage}{\linewidth}
\begin{minipage}[t]{0.4\linewidth}
\begin{prooftree}
\AxiomC{$\vdash^{X}\bone\Era\fone,\bone\Era\ftwo,\Delta$}
\RightLabel{$\Rvra$}
\UnaryInfC{$\vdash^{X}\bone\Era\fone\vee\ftwo,\Delta$}
\end{prooftree}
\end{minipage}
\hfill
\begin{minipage}[t]{0.5\linewidth}
\begin{prooftree}
\AxiomC{$\vdash^{X}\bone\Ela\fone,\bone\Ela\ftwo,\Delta$}
\RightLabel{$\Rwla$}
\UnaryInfC{$\vdash^{X}\bone\Ela\fone\wedge\ftwo,\Delta$}
\end{prooftree}
\end{minipage}
\end{minipage}


\begin{lemma}\label{lemma:true}
For all simple type $\sigma$, $q\in \mathbb{Q}_{[0,1]}$, name $a$, 
and $\bone$ with $\FN(\bone)\subseteq X$, 
$\vdash^{X}_{\PPL}\bone \Pto \BOX^{q}_{a}\sigma^{*},\Gamma$.
\end{lemma}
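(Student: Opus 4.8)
The plan is to reduce the claim to the \emph{validity} of the multi-succedent sequent $\vdash^{X}\bone\Pto\BOX^{q}_{a}\sigma^{*},\Gamma$ and then invoke completeness. The crux is the observation that, for every simple type $\sigma$, the formula $\sigma^{*}$ is valid, i.e. $\model{\sigma^{*}}_{Y}=(\twoOm)^{Y}$ for every finite set of names $Y\supseteq\FN(\sigma^{*})$. This I would prove by induction on $\sigma$: for $\sigma=o$, $o^{*}=\FLIP$ is $\BOX^{1}_{b}$ applied to a classical tautology, hence valid by the remark following Definition~\ref{semantics}; for $\sigma=\FF s\To\tau$, we have $\sigma^{*}=\lnot\FF s^{*}\vee\tau^{*}$, so by the induction hypothesis $\model{\tau^{*}}_{Y}=(\twoOm)^{Y}$, and thus $\model{\sigma^{*}}_{Y}=\model{\lnot\FF s^{*}}_{Y}\cup(\twoOm)^{Y}=(\twoOm)^{Y}$; for $\FF s=\BOX^{r}\tau$ (so $\FF s^{*}=\BOX^{r}_{b}\tau^{*}$), the induction hypothesis gives $\model{\tau^{*}}_{Y\cup\{b\}}=(\twoOm)^{Y\cup\{b\}}$, whence $\mu(\PROJ_{f}(\model{\tau^{*}}_{Y\cup\{b\}}))=\mu((\twoOm)^{\{b\}})=1\geq r$ for every $f$, and therefore $\model{\BOX^{r}_{b}\tau^{*}}_{Y}=(\twoOm)^{Y}$.

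With this in hand, the same computation yields $\model{\BOX^{q}_{a}\sigma^{*}}_{X}=(\twoOm)^{X}$: since $\sigma^{*}$ is valid, $\PROJ_{f}(\model{\sigma^{*}}_{X\cup\{a\}})=(\twoOm)^{\{a\}}$ has measure $1\geq q$ for every $f\in(\twoOm)^{X}$. Consequently $\model{\bone}_{X}\subseteq(\twoOm)^{X}=\model{\BOX^{q}_{a}\sigma^{*}}_{X}$, so the labelled formula $\bone\Pto\BOX^{q}_{a}\sigma^{*}$ is $X$-valid, and hence the multi-succedent sequent $\vdash^{X}\bone\Pto\BOX^{q}_{a}\sigma^{*},\Gamma$ contains a valid labelled formula and is therefore valid. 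By the completeness of the multi-succedent system (the straightforward multi-succedent extension of Theorem~\ref{CPLcompleteness}) it is derivable in $\PPL$, which is exactly the claim; this route also makes the presence of the extra succedent $\Gamma$ entirely harmless, since it only enlarges a sequent that is already valid.

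If one prefers a syntactic derivation avoiding completeness, I would instead apply the rule $\Rbr$ with the trivial $a$-decomposition $\top\land\top$ of $\top$ (for which $\mu(\model{\top}_{\{a\}})=1\geq q$ and $\bone\vDash^{X}\top$), reducing the goal to $\vdash^{X\cup\{a\}}\top\Pto\sigma^{*},\Gamma$, and then recurse on $\sigma$: the case $\sigma=\FF s\To\tau$ is handled by the disjunction rule $\Rvra$ followed by the recursive call on $\tau$, the case $\FF s=\BOX^{r}\tau$ by a further application of $\Rbr$, and the base case $\sigma=o$ by unfolding $\FLIP$ and deriving $\vdash^{X'}\top\Pto\atom{0}_{b}\vee\lnot\atom{0}_{b},\Gamma$ from the initial sequents together with $\Rvra$ and $\Rcup$ (splitting $\top$ along $\bvar_{0}^{b}\vee\lnot\bvar_{0}^{b}$). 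I expect the main obstacle to be purely bookkeeping: in the direct approach one must carefully discharge the Boolean side-conditions of $\Rbr$, $\Rcup$ and the $\AxO$/$\AxT$ initial rules in the $\FLIP$ base case; in the completeness approach one only needs to be certain that the multi-succedent calculus is indeed sound and complete, i.e. the extension asserted just before the Curry--Howard translation is in place.
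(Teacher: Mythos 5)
Your proposal is correct. The paper's own proof is literally the one-liner ``Simple check by induction on $\sigma$,'' which corresponds to your second, syntactic route: a direct derivation by induction on the type, peeling off the outer $\BOX^{q}_{a}$ with $\Rbr$ using the trivial $a$-decomposition $\top\land\top$ of $\top$, handling arrows with the multi-succedent disjunction rule, and closing the base case by unfolding $\FLIP$ via $\AxO$/$\AxT$, $\Rnra$, the disjunction rules and $\Rcup$ along the split $\top\vDash\bvar_{0}^{b}\vee\lnot\bvar_{0}^{b}$. Your primary route is genuinely different: you first establish the semantic fact that $\model{\sigma^{*}}_{Y}=(\twoOm)^{Y}$ for every simple type $\sigma$ (a clean induction, and a fact worth isolating), conclude that $\bone\Pto\BOX^{q}_{a}\sigma^{*}$ is $X$-valid, and then invoke completeness of the multi-succedent calculus. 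This buys a shorter, more conceptual argument and makes the extra succedent $\Gamma$ trivially harmless, at the price of leaning on the multi-succedent completeness theorem, which the paper only asserts (as a straightforward extension of Theorem~\ref{CPLcompleteness}) rather than proves in detail --- a dependency you correctly flag. Both arguments are sound; the syntactic one is self-contained and is what the authors intend, while the semantic one is the tidier proof if one accepts the completeness of the multi-succedent system.
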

\begin{proof}
Simple check by induction on $\sigma$.
\end{proof}

\begin{lemma}\label{lemma:count}
The following rule is admissible in $\GPPL$:
\begin{prooftree}
\AxiomC{$\vdash^{X\cup\{a\}} \bone'\Pto A, \Delta$}
\AxiomC{$\left ( \mu(\model{\btwo_{i}}_{\{a\}})\geq r\right)_{i\leq k}$}
\AxiomC{$\bone \vDash^{X} \bigvee_{i}^{k}\bthree_{i}$}
\TrinaryInfC{$\vdash^{X} \bone \Pto \BOX^{r}_{a}A, \Delta$}
\end{prooftree}
where $\bigvee_{i}^{k}\btwo_{i}\land \bthree_{i}$ is a weak $a$-decomposition of $\bone'$.
\end{lemma}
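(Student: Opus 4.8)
The plan is to give a direct derivation inside the multi-succedent version of $\GPPL$, exploiting the observation that each single disjunct $\btwo_{i}\wedge\bthree_{i}$ of a weak $a$-decomposition is, taken by itself, a genuine $a$-decomposition (the free-name conditions $\FN(\btwo_{i})\subseteq\{a\}$, $\FN(\bthree_{i})\subseteq X$ are built into the definition of weak $a$-decomposition, and the pairwise-disjointness requirement is vacuous for one disjunct), so that the ordinary counting rule $\Rbr$ applies to it with a trivial side condition. Throughout I use that the multi-succedent rules are obtained from their single-succedent counterparts by carrying a passive context $\Delta$ on the right, so that all the side conditions are literally those of Section~\ref{section4}.

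First I would record an auxiliary fact: \emph{antecedent strengthening} is admissible, i.e.\ from $\vdash^{Y}\bone\Pto\fone,\Delta$ and $\btwo\vDash^{Y}\bone$ one derives $\vdash^{Y}\btwo\Pto\fone,\Delta$, via a degenerate application of $\Rcup$ with both premisses equal to $\vdash^{Y}\bone\Pto\fone,\Delta$ and side condition $\btwo\vDash^{Y}\bone\vee\bone$. Since $\bone'\equiv_{X\cup\{a\}}\bigvee_{i}^{k}\btwo_{i}\wedge\bthree_{i}$, we have $\btwo_{i}\wedge\bthree_{i}\vDash^{X\cup\{a\}}\bone'$ for each $i\le k$, so from the first premiss of the rule under consideration I obtain $\vdash^{X\cup\{a\}}\btwo_{i}\wedge\bthree_{i}\Pto A,\Delta$ for every $i\le k$.

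Next, for each fixed $i$, I would apply the multi-succedent rule $\Rbr$ to $\vdash^{X\cup\{a\}}\btwo_{i}\wedge\bthree_{i}\Pto A,\Delta$, using the one-disjunct $a$-decomposition $\btwo_{i}\wedge\bthree_{i}$ of itself. Its second premiss then reads $\bthree_{i}\vDash^{X}\bigvee\{\bthree_{i}\mid\mu(\model{\btwo_{i}}_{\{a\}})\ge r\}$; since $\mu(\model{\btwo_{i}}_{\{a\}})\ge r$ by the second hypothesis of the rule, the disjunction on the right collapses to $\bthree_{i}$ and the entailment is reflexivity. This yields $\vdash^{X}\bthree_{i}\Pto\BOX^{r}_{a}A,\Delta$ for every $i\le k$.

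Finally I would glue these $k$ sequents together by iterated use of $\Rcup$: combining the first two gives $\vdash^{X}\bthree_{1}\vee\bthree_{2}\Pto\BOX^{r}_{a}A,\Delta$, and after $k-1$ applications the last one uses precisely the third hypothesis $\bone\vDash^{X}\bigvee_{i}^{k}\bthree_{i}$ to replace the antecedent by $\bone$, producing $\vdash^{X}\bone\Pto\BOX^{r}_{a}A,\Delta$; the degenerate case $k=0$ forces $\model{\bone}_{X}=\emptyset$ and the conclusion follows at once from $\Rmur$. The argument is essentially bookkeeping; the only points requiring a little care are the legitimacy of the one-disjunct $a$-decomposition in the middle step and the verification that the various Boolean side conditions of $\Rcup$ and $\Rbr$ are trivially met. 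An alternative, less informative but even shorter route would be to invoke soundness of the multi-succedent system to pass from derivability of the first premiss to its validity, check semantically that the conclusion sequent is then valid (using that $\PROJ_{f}(\model{\bone'}_{X\cup\{a\}})$ contains some $\model{\btwo_{i}}_{\{a\}}$ of measure $\ge r$ whenever $f\in\model{\bone}_{X}$, by the $\bthree_{i}$-entailment hypothesis), and conclude by completeness.
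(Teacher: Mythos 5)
Your derivation is correct, but it takes a genuinely different route from the paper's. You give a purely \emph{syntactic} derivation of the rule from its first premise: strengthen the antecedent to each disjunct $\btwo_i\wedge\bthree_i$ via a degenerate $\Rcup$, apply $\Rbr$ with the one-disjunct $a$-decomposition of $\btwo_i\wedge\bthree_i$ (legitimate, since the free-name conditions are inherited from the weak decomposition and pairwise disjointness is vacuous for a single disjunct, while the side condition collapses to $\bthree_i\vDash^X\bthree_i$ thanks to $\mu(\model{\btwo_i}_{\{a\}})\geq r$), and then reassemble the $k$ conclusions with iterated $\Rcup$, discharging the last step with the hypothesis $\bone\vDash^X\bigvee_i\bthree_i$. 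The paper argues \emph{semantically} instead: assuming the premise valid and no formula of $\Delta$ valid, it shows directly that $\model{\bone}_X\subseteq\model{\BOX^r_aA}_X$, then applies Lemma~\ref{lemma:fundamental} to an $a$-decomposition of a Boolean formula equivalent to $A$ in order to satisfy the side condition of a single instance of $\Rbr$ — a route that implicitly leans on soundness and completeness, and which is essentially the ``alternative'' you sketch in your last sentence. Your version buys a bit more: it exhibits the rule as derivable rather than merely admissible, never needs disjointness of the $\bthree_i$, and treats $r=0$ and $k=0$ uniformly instead of as separate cases; the price is a slightly longer bookkeeping argument and reliance on the multi-succedent rules carrying the passive context $\Delta$ unchanged, which both proofs assume anyway.
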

\begin{proof}
First observe that if $r=0$, $\model{\BOX^{r}_{a}A}_{X}=(2^{\omega})^{X}$, so the rule is trivially admissible. Suppose then $r>0$, let the premises of the rule be valid, and let all labelled formulas in $\Delta$ not be valid.
We first show that $\model \bone_{X}\subseteq \model{\BOX^{r}_{a}A}_{X}$:
if $f\in \model{\bone}_{X}$, then for some $i$, $f\in \model{\bthree_{i}}_{X}$, hence by hypothesis 
$\mu\big(\Pi_{f}(\model{\bigvee_{i}^{k}\btwo_{i}\land \bthree_{i}}_{\{a\}})\big)\geq r$. 
From $\model{\bigvee_{i}^{k}\btwo_{i}\land \bthree_{i}}_{X\cup\{a\}}\subseteq \model A_{X\cup\{a\}}$, 
we deduce then also 
$\mu\big(\Pi_{f}(\model A_{X\cup\{a\}})\big)\geq r$, 
so we conclude $f\in \model{\BOX^{r}_{a}A}_{X}$.
From this fact, using Lemma~\ref{lemma:fundamental}, we deduce that, if $\bigvee_{j}\btwo'_{j}\land \bthree'_{j}$ is an $a$-decomposition of $\model{A}_{X\cup\{a\}}$, then $\model{\bone}_{X}\subseteq \{ \model{\bthree'_{j}}_{X}\mid \mu(\model{\btwo'_{i}}_{\{a\}})\geq r\}$. 
So we can deduce the conclusion of the rule using an instance of the rule  $\Rbr$.
\end{proof}

\begin{lemma}\label{lemma:qs}
If $a\notin \FN(A)$, then $\BOX^{s}_{a}\BOX^{q}_{b}A \vDash \BOX^{q\cdot s}_{b}A$.
\end{lemma}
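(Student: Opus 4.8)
\textbf{Proof plan for Lemma~\ref{lemma:qs}.}
The plan is to unfold the semantics of both sides using Definition~\ref{PPLsem} and the hypothesis $a \notin \FN(A)$, and reduce everything to a computation of a measure of a projection. Take any finite set of names $X$ with $\FN(A) \subseteq X$ and $a, b \notin X$. We must show $\model{\BOX^{s}_{a}\BOX^{q}_{b}A}_{X} \subseteq \model{\BOX^{q\cdot s}_{b}A}_{X}$. Unfolding the left-hand side, $f \in \model{\BOX^{s}_{a}\BOX^{q}_{b}A}_{X}$ means $\mu\big(\PROJ_{f}(\model{\BOX^{q}_{b}A}_{X\cup\{a\}})\big) \geq s$, and $\model{\BOX^{q}_{b}A}_{X\cup\{a\}}$ is the set of $g \in (\twoOm)^{X\cup\{a\}}$ with $\mu\big(\PROJ_{g}(\model{A}_{X\cup\{a\}\cup\{b\}})\big) \geq q$. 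The key point is that since $a \notin \FN(A)$, Lemma~\ref{L3} (or its multivariate analogue) gives $\model{A}_{X\cup\{a\}\cup\{b\}} = (\model{A}_{X\cup\{b\}})^{\Uparrow\{a\}}$, so the inner measure $\mu\big(\PROJ_{g}(\model{A}_{X\cup\{a\}\cup\{b\}})\big)$ depends only on the $X\cup\{b\}$-part of $g$, not on $g(a)$.

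First I would make this dependence precise: writing $g = f' + h$ with $f' \in (\twoOm)^{X\cup\{b\}}$ wait --- more carefully, decompose $g \in (\twoOm)^{X\cup\{a\}}$ according to its restriction to $X$ and its value at $a$, and observe that whether $g \in \model{\BOX^{q}_{b}A}_{X\cup\{a\}}$ is entirely determined by $g$ restricted to $X$. Consequently, for fixed $f \in (\twoOm)^{X}$, the set $\PROJ_{f}(\model{\BOX^{q}_{b}A}_{X\cup\{a\}}) \subseteq (\twoOm)^{\{a\}}$ is either all of $(\twoOm)^{\{a\}}$ or empty, depending on whether $f$ (suitably extended) makes $\mu\big(\PROJ_{f}(\model{A}_{X\cup\{b\}})\big) \geq q$ hold. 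Hence $\mu\big(\PROJ_{f}(\model{\BOX^{q}_{b}A}_{X\cup\{a\}})\big) \geq s$ with $s > 0$ forces that measure to equal $1$, which in turn forces $\mu\big(\PROJ_{f}(\model{A}_{X\cup\{b\}})\big) \geq q$, i.e.\ $f \in \model{\BOX^{q}_{b}A}_{X}$. This already gives $\model{\BOX^{s}_{a}\BOX^{q}_{b}A}_{X} \subseteq \model{\BOX^{q}_{b}A}_{X}$, and since $q \cdot s \leq q$ and $\BOX$ is antitone in its threshold (larger threshold yields smaller set), $\model{\BOX^{q}_{b}A}_{X} \subseteq \model{\BOX^{q\cdot s}_{b}A}_{X}$, completing the chain. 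The case $s = 0$ is immediate since then $\model{\BOX^{s}_{a}\BOX^{q}_{b}A}_{X} = (\twoOm)^{X}$ by Lemma~\ref{Lq0} only if... actually when $s=0$ the left side is everything, but then $q \cdot s = 0$ and the right side is also everything, so the inclusion holds trivially.

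The main obstacle I anticipate is purely bookkeeping: correctly handling the name sets and the fact that the statement is phrased without an explicit ambient $X$, so one must quantify over all suitable $X$ and keep track that $a$ and $b$ are fresh and distinct. The measure-theoretic content is light --- it reduces to the ``all or nothing'' behaviour of $\model{\BOX^{q}_{b}A}$ in the name $a$, which is exactly the phenomenon already isolated in Lemma~\ref{L3} and Corollary~\ref{C1}. Once that observation is in place, the argument is a two-line chain of inclusions together with the monotonicity-in-threshold remark, so no genuinely hard step remains.
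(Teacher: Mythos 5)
Your proposal is correct and follows essentially the same route as the paper: the key step in both is that, since $a\notin\FN(\BOX^{q}_{b}A)$, the set $\PROJ_{f}(\model{\BOX^{q}_{b}A}_{X\cup\{a\}})$ is all-or-nothing (so for $s>0$ one gets $\BOX^{s}_{a}\BOX^{q}_{b}A\equiv_{X}\BOX^{q}_{b}A$), followed by the observation that $q\cdot s\leq q$ makes the threshold-monotonicity inclusion $\model{\BOX^{q}_{b}A}_{X}\subseteq\model{\BOX^{q\cdot s}_{b}A}_{X}$ immediate. Your explicit treatment of the degenerate case $s=0$ is in fact slightly more careful than the paper's, which silently assumes the threshold is positive when asserting the equivalence.
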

\begin{proof}
If $a\notin \FN(A)$, we have that $\BOX^{q}_{a}A\equiv^{X}A$.
In fact,
$\model{\BOX^{q}_{a}A}_{X}=\{ f\in (2^{\omega})^{X}\mid \mu\big(\Pi_{f}(\model{A}_{X\cup\{a\}})\big)\geq q\}=
\{f\in (2^{\omega})^{X}\mid f\in \model{A}_{X}\}= \model{A}_{X}$. 
Hence, we have
$\BOX^{s}_{a}\BOX^{q}_{b}A \equiv_{X} \BOX^{q}_{b}A$. Moreover, since $q\cdot s\leq q$, 
$\model{\BOX^{q}_{b}A}_{X}\subseteq \model{
 \BOX^{q\cdot s}_{b}A}_{X}$.
\end{proof}

\begin{proof}[Proof of Proposition \ref{prop:embed}]
We define the embedding by induction on the typing rules of $\TTT$:

\begin{description}

\item[Initial Sequents]

\begin{center}
\adjustbox{scale=0.9,center}{
$\begin{matrix}
\AXC{$\mathrm{FN}(t)\subseteq \Gamma,\FN(t)\subseteq X$}
\RL{$\TB$}
\UIC{$\Gamma \vdash^{X,r}t: \bot \Pto \sigma$}
\DP
& \  & 
\AXC{$\FN(\bone)\subseteq X$}
\RL{$\TID$}
\UIC{$\Gamma, x: \BOX^{r}\sigma \vdash^{X,r\cdot s}x:\bone \Pto \sigma$}
\DP
\\
\ & & 
\\
\Downarrow & & \Downarrow
\\
\ & & 
\\
\AXC{$\mu(\model\bot_{X})=0$}
\RL{$\Rmur$}
\UIC{$ \vdash^{X} \bot \Pto \BOX^{r}_{a}\sigma^{*}, \bot\Pfrom \Gamma^{*}$}
\DP
&\ & 
\AXC{$\FN(\bone)\subseteq X$}
\RL{\small Lemma \ref{lemma:true}}
\UIC{$\vdash^{X} \bone \Pto \BOX^{r\cdot s}_{a}\sigma^{*}, 
\bot\Pfrom \BOX^{r}_{a}\sigma^{*}, \bot\Pfrom \Gamma^{*}$}
\DP
\end{matrix}
$
}
\end{center}
\medskip

\item[Union Rule]

\medskip
\begin{center}
\adjustbox{scale=0.9,center}{
$\begin{matrix}
\AXC{$\vdots$}\noLine\UIC{$\Gamma \vdash^{X,r} t:\btwo\Pto \sigma$}
\AXC{$\vdots$}\noLine\UIC{$\Gamma \vdash^{X,r} t:\bthree\Pto \sigma$}
\AXC{$\bone\vDash^{X,r}\btwo \vee \bthree$}
\RL{$\TU$}
\TIC{$\Gamma \vdash^{X,r}t: \bone\Pto \sigma$}
\DP
\\
\  
\\
\Downarrow \\
\ 
\\
\AXC{$\vdots$}
\noLine
\UIC{$ \vdash^{X} \btwo \Pto \BOX^{r}_{a}\sigma^{*}, \bot\Pfrom \Gamma^{*}$}
\AXC{$\vdots$}
\noLine
\UIC{$ \vdash^{X} \bthree \Pto \BOX^{r}_{a}\sigma^{*}, \bot\Pfrom \Gamma^{*}$}
\AXC{$\bone\vDash^{X,r}\btwo \vee \bthree$}
\RL{$\Rcup$}
\TIC{$ \vdash^{X} \bone \Pto \BOX^{r}_{a}\sigma^{*}, \bot\Pfrom \Gamma^{*}$}
\DP
\end{matrix}
$
}
\end{center}
\medskip

\item[Abstraction Rule]

\medskip
\begin{center}
\adjustbox{scale=0.9,center}{
$\begin{matrix}
\AXC{$\vdots$}
\noLine
\UIC{$\Gamma ,x:\FF s\vdash^{X,r}t:\bone \Pto \sigma$}
\RL{$\TLA$}
\UIC{$\Gamma \vdash^{X,r}\lambda x.t:\bone \Pto  (\FF s\To \sigma)$}
\DP
\\
\ 
\\
\Downarrow \\
\ 
\\
\AXC{$\vdots$}
\noLine
\UIC{$\vdash^{X}\bone \Pto \BOX^{r}_{a}\sigma^{*}, \bot\Pfrom \FF s^{*}, \bot\Pfrom \Gamma^{*}$}
\AXC{$\top\vDash^{X}\lnot\bot$}
\RL{$\mathbf \Rnra$}
\BIC{$\vdash^{X}\bone \Pto \BOX^{r}_{a}\sigma^{*}, \top\Pto \lnot \FF s^{*}, \bot\Pfrom \Gamma^{*}$}
\AXC{$\bone\vDash^{X}\top$}
\RL{$\Rcup$}
\BIC{$\vdash^{X}\bone \Pto \BOX^{r}_{a}\sigma^{*}, \bone\Pto \lnot \FF s^{*}, \bot\Pfrom \Gamma^{*}$}
\RL{$ \Rnra$}
\UIC{$ \vdash^{X}\bone \Pto  \lnot {\FF s}^{*}\vee \BOX^{r}_{a}\sigma^{*}, \bot\Pfrom \Gamma^{*}$}
\RL{Lemma \ref{lemma:commutations}}
\UIC{$ \vdash^{X}\bone \Pto  \BOX^{r}_{a}(\lnot {\FF s}^{*}\vee \sigma^{*}), \bot\Pfrom \Gamma^{*}$}
\DP
\end{matrix}
$
}
\end{center}
\medskip

\item[Application Rule]

\medskip
\begin{center}
\adjustbox{scale=0.9,center}{
$\begin{matrix}

\AXC{$\vdots$}
\noLine
\UIC{$\Gamma\vdash^{X,r}t:\bone_{1} \Pto (\BOX^{s}\sigma\To \tau)$}
\AXC{$\vdots$}
\noLine
\UIC{$\Gamma\vdash^{X,s}u:\bone_{2} \Pto \sigma$}
\AXC{$\bone\vDash^{X}\bone_{1}\land \bone_{2}$}
\RL{$\TA$}
\TIC{$\Gamma \vdash^{X,r}tu: \bone \Pto  \tau$}
\DP
\\
\  
\\
\Downarrow 
\\
\ 
\\
\AXC{$\vdots$}
\noLine
\UIC{$ \vdash^{X} \bone_{1}\Pto (\BOX^{r}_{a}(\lnot (\BOX^{s}_{b}\sigma^{*})\vee \tau^{*}), \bot\Pfrom \Gamma^{*}$}
\RL{Lemma \ref{lemma:commutations}}
\UIC{$ \vdash^{X} \bone_{1}\Pto \lnot (\BOX^{s}_{b}\sigma^{*})\vee 
(\BOX^{r}_{a} \tau^{*}), \bot\Pfrom \Gamma^{*}$}
\AXC{$\vdots$}
\noLine
\UIC{$\vdash^{X}\bone_{2} \Pto \BOX^{s}_{b}\sigma^{*}, \bot\Pfrom\Gamma^{*}$}
\AXC{$\bone\vDash^{X}\bone_{1}\land \bone_{2}$}
\RL{cut}
\TIC{$\vdash^{X}\bone \Pto \BOX^{r}_{a}\tau^{*}, \bot\Pfrom\Gamma^{*}$}
\DP
\end{matrix}
$
}
\end{center}
\medskip

\item[Choice Rules]

\medskip
\begin{center}
\adjustbox{scale=0.9,center}{
$\begin{matrix}
\AXC{$\vdots$}\noLine\UIC{$\Gamma \vdash^{X\cup\{a\},r}t: \btwo\Pto \sigma$}
\AXC{$\bone\vDash^{X\cup\{a\}} \bvar_{i}^{a}\land \btwo$}
\RL{$\TL$}
\BIC{$\Gamma\vdash^{X\cup\{a\},r}t\oplus_{a}^{i} u: \bone \Pto \sigma$}
\DP
& 
\AXC{$\vdots$}\noLine\UIC{$\Gamma \vdash^{X\cup\{a\},r}u: \btwo\Pto \sigma$}
\AXC{$\bone\vDash^{X\cup\{a\}} \lnot\bvar_{i}^{a}\land \btwo$}
\RL{$\TR$}
\BIC{$\Gamma\vdash^{X\cup\{a\},r}t\oplus_{a}^{i} u: \bone \Pto \sigma$}
\DP
\\
\   & 
\\
\Downarrow & \Downarrow \\
\  & 
\\
\AXC{$\vdots$}
\noLine
\UIC{$ \vdash^{X\cup\{a\}} \btwo \Pto \BOX^{r}_{a}\sigma^{*}, \bot \Pfrom \Gamma^{*}$}
\AXC{$\bone\vDash^{X\cup\{a\}}  \btwo$}
\RL{$\Rcup$}
\BIC{$ \vdash^{X\cup\{a\}} \bone \Pto \BOX^{r}_{a}\sigma^{*}, \bot \Pfrom \Gamma^{*}$}
\DP
& \AXC{$\vdots$}
\noLine
\UIC{$ \vdash^{X\cup\{a\}} \btwo \Pto \BOX^{r}_{a}\sigma^{*}, \bot \Pfrom \Gamma^{*}$}
\AXC{$\bone\vDash^{X\cup\{a\}}  \btwo$}
\RL{$\Rcup$}
\BIC{$ \vdash^{X\cup\{a\}} \bone \Pto \BOX^{r}_{a}\sigma^{*}, \bot \Pfrom \Gamma^{*}$}
\DP
\end{matrix}
$
}
\end{center}
\medskip

\item[Counting Rule]

\medskip
\begin{center}
\adjustbox{scale=0.9,center}{
$\begin{matrix}
\AXC{$\vdots$}
\noLine
\UIC{$\Gamma\vdash^{X\cup\{a\},r}t:\bigvee_{i}\btwo_{i}\land \bthree_{i}\Pto \sigma$}
\AXC{$\mu(\model{\btwo_{i}}_{\{a\}})\geq s$}
\AXC{$\bone \vDash^{X} \bigvee_{i}\bthree_{i}$}
\RL{$\TN$}
\TIC{$\Gamma \vdash^{X,r\cdot s}\nu a.t:\bone \Pto  \sigma$}
\DP

\\
\ 
\\
\Downarrow 
\\
\ 
\\
\AXC{$\vdots$}
\noLine
\UIC{$\vdash^{X\cup\{a\}}\bigvee_{i}\btwo_{i}\land \bthree_{i}\Pto \BOX^{r}_{b}\sigma^{*}, \bot\Pfrom \Gamma^{*}$}
\AXC{$\mu(\model{\btwo_{i}}_{\{a\}})\geq s$}
\AXC{$\bone \vDash^{X} \bigvee_{i}\bthree_{i}$}
\RL{Lemma \ref{lemma:count}}
\TIC{$ \vdash^{X}\bone \Pto  \BOX^{s}_{a}\BOX^{r}_{b} \sigma^{*}, \bot\Pfrom \Gamma^{*}$}
\RL{Lemma \ref{lemma:qs}}
\UIC{$\vdash^{X}\bone \Pto  \BOX^{r\cdot s}_{a} \sigma^{*}, \bot\Pfrom \Gamma^{*}$}
\DP
\end{matrix}
$
}
\end{center}
\medskip

\end{description}
\end{proof}


%
%

\subsection{Minimal $\PPL$}

To obtain a minimal system we cannot take as atoms the ``classical'' atoms $i_{a}$ and $\lnot i_{a}$. For simplicity, we can directly take as atoms the constants
 $\FLIP_{i}$ from $\TTT$.
 
The formulas $A,B$ of $\IPPL$ are defined by the grammar below:
\begin{align*}
A_{0}, B_{0}  = \FLIP_{i} \mid A \To  B_{0}   \qquad
A,B  = \BOX^{q}A_{0}
\end{align*}
Observe that, as in $\TTT$, counting quantifiers are not named.

An \emph{intuitionistic labelled sequent} is of the form $\Delta\vdash^{X} \bone\Pto A$, where $\Delta$ is a finite multiset of \emph{non-labelled} formulas and $\FN(\bone)\subseteq X$. 
The rules of the system $\IPPL$ are shown in Fig.~\ref{fig:intuitionistic}.
Observe that, while the formulas of $\IPPL$ have no free-names, Boolean formulas \emph{do} have names, and rules do depend on them. Notably, the Flipcoin Rule intuitively tells that one can construct a new proof of $A$ from two proofs of the same formula by flipping a (named) coin. 

In Fig.~\ref{fig:decoratedrules} we illustrate how any derivation of $\IPPL$ can be decorated with terms from $\Lnu$. 
To any formula $A$ of $\IPPL$, we can straightforwardly associate a type $A^{\bullet}$, and to any multiset $\Gamma=A_{1},\dots, A_{n}$ of formulas of $\IPPL$ we associate the context $\Gamma^{\bullet}=\{x_{1}:A_{1}^{\bullet},\dots, x_{n}:A_{n}^{\bullet}\}$. Using the decorations in Fig.~\ref{fig:decoratedrules} we deduce then:

\begin{proposition}
To any $\IPPL$-derivation $\mathscr D$ of $\Gamma\vdash^{X}\bone \Pto \BOX^{r}A_{0}$ is associated a term $ \bb{ t}_{\mathscr D}$ such that $\Gamma^{\bullet}\vdash^{X,r}\bb{ t}_{\mathscr D}: \bone \Pto (A_{0})^{\bullet}$.
\end{proposition}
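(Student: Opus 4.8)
The plan is to argue by induction on the structure of the $\IPPL$-derivation $\mathscr{D}$, reading off the decorating term $\bb{t}_{\mathscr{D}}$ from the decorated rules in Fig.~\ref{fig:decoratedrules} and checking, rule by rule, that the associated $\TTT$-judgement is derivable. The formula translation $(\cdot)^{\bullet}$ is by design compatible with the logical structure of $\IPPL$: an atom $\FLIP_{i}$ is sent to the base type $o$, an implication $A \To B_{0}$ to the arrow type $A^{\bullet}\To B_{0}^{\bullet}$, and a counting formula $\BOX^{q}A_{0}$ to $\BOX^{q}(A_{0}^{\bullet})$; the context translation $\Gamma^{\bullet}$ merely attaches a fresh variable to each formula of $\Gamma$. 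Consequently, the induction hypothesis applied to the immediate sub-derivations of $\mathscr{D}$ yields $\TTT$-judgements whose types, contexts, labelling formulas and probability annotations are exactly those needed to fire the matching $\TTT$-rule.

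For the logical rules this is essentially immediate. An axiom of $\IPPL$ is decorated by a variable and maps to the identity rule $\TID$ (possibly composed with $\TB$ and with an application of the Boolean weakening Lemma~\ref{lemma:weak} to adjust the labelling formula); implication introduction is decorated by a $\lambda$-abstraction and maps to $\TLA$; implication elimination is decorated by an application and maps to $\TA$, the side condition of the $\IPPL$-rule being precisely the hypothesis $\bone \vDash^{X} \btwo \land \bthree$ of $\TA$. In each case the argument is a direct appeal to the induction hypothesis, one instance of the corresponding $\TTT$-rule, and at most one use of Lemma~\ref{lemma:weak}.

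The two genuinely interesting cases are the Flipcoin Rule and the counting rule. The Flipcoin Rule is decorated by a choice term $t\oplus_{a}^{i}u$: one applies the induction hypothesis to the two premisses, uses $\TL$ on the first judgement and $\TR$ on the second, and glues the results with the union rule $\TU$, checking that the Boolean side condition of the Flipcoin Rule entails the disjunction $\bone \vDash^{X}(\bvar_{i}^{a}\land \btwo)\lor(\lnot \bvar_{i}^{a}\land \bthree)$ that closes the $\TU$-instance. The counting rule is decorated by a generator $\nu a.t$ and must be matched by the generator rule $\TN$; this requires exhibiting a weak $a$-decomposition $\bigvee_{i}\btwo_{i}\land \bthree_{i}$ of the relevant Boolean formula with each $\btwo_{i}$ satisfiable and $\mu(\model{\btwo_{i}}_{\{a\}})\ge r$, and re-typing the body $t$ against this decomposition, using Lemma~\ref{lemma:weak} together with the Fundamental Lemma~\ref{lemma:fundamental} to relate decompositions and measures. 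I expect the main obstacle to be exactly this bookkeeping of external hypotheses: one must verify that the Boolean side conditions attached to the Flipcoin and counting rules of $\IPPL$ are strong enough to instantiate, respectively, the disjunction premiss of $\TU$ and the weak-$a$-decomposition premiss of $\TN$. Beyond that, the proof carries no deep content, being a careful comparison of the two rule systems; once it is in place, Proposition~\ref{prop:subject} and Theorem~\ref{thm:normalization} transfer the expected probabilistic normalization guarantee to proofs of $\IPPL$.
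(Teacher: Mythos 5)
Your proposal is correct and follows essentially the same route as the paper, which establishes the proposition precisely by the rule-by-rule decoration displayed in Fig.~\ref{fig:decoratedrules}, each decorated $\IPPL$-rule being (by construction) an instance of the corresponding $\TTT$-rule. The only remark is that the counting-rule case is even more immediate than you anticipate: the $\IPPL$ rule $\TNx$ already carries exactly the weak $a$-decomposition and measure premisses of the generator rule $\TN$, so no appeal to Lemma~\ref{lemma:fundamental} or re-typing of the body is needed there.
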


Observe that from this fact, through the embedding of $\TTT$ inside $\PPL$, we deduce an embedding of $\IPPL$ inside $\PPL$:

\begin{corollary}[embedding of $\IPPL$ inside $\PPL$]
If $\Delta\vdash_{\IPPL}^{X} \bone \Pto A$ then $\vdash_{\PPL}^{X} \bone \Pto A^{\bullet *}, \bot\Pfrom \Delta^{\bullet*}$.
\end{corollary}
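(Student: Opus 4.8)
The plan is to obtain the statement as an immediate consequence of the two preceding results: the decoration proposition, which turns an $\IPPL$-derivation into a $\TTT$-typing judgement, and Proposition~\ref{prop:embed}, which embeds $\TTT$-typing judgements into $\PPL$-proofs. Concretely, suppose $\mathscr D$ is a derivation of $\Delta\vdash^{X}_{\IPPL}\bone\Pto A$. By the grammar of $\IPPL$, every formula $A$ has the form $\BOX^{r}A_{0}$ with $A_{0}$ a $\To$-formula over the atoms $\FLIP_{i}$, so we may write $A=\BOX^{r}A_{0}$. The decoration proposition then yields a $\Lnu$-term $\bb{t}_{\mathscr D}$ together with a $\TTT$-derivation of $\Delta^{\bullet}\vdash^{X,r}\bb{t}_{\mathscr D}:\bone\Pto (A_{0})^{\bullet}$.

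Next I would feed this $\TTT$-judgement into Proposition~\ref{prop:embed}: taking $\sigma=(A_{0})^{\bullet}$, $\Gamma=\Delta^{\bullet}$, and $t=\bb{t}_{\mathscr D}$, that proposition gives a $\PPL$-proof of $\vdash^{X}_{\PPL}\bone\Pto\BOX^{r}_{a}((A_{0})^{\bullet})^{*},\ \bot\Pfrom (\Delta^{\bullet})^{*}$, where $a$ is a fresh name and $(\Delta^{\bullet})^{*}$ abbreviates the sequence $\bot\Pfrom (A_{1}^{\bullet})^{*},\dots,\bot\Pfrom (A_{n}^{\bullet})^{*}$ when $\Delta=A_{1},\dots,A_{n}$.

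It then remains to reconcile this with the notation of the statement, i.e.\ to check the identities $A^{\bullet*}=\BOX^{r}_{a}((A_{0})^{\bullet})^{*}$ and $\Delta^{\bullet*}=(\Delta^{\bullet})^{*}$. The first is just unfolding the two translations: $A^{\bullet}=(\BOX^{r}A_{0})^{\bullet}=\BOX^{r}(A_{0})^{\bullet}$ as a $\TTT$-type, and since $(\BOX^{q}\sigma)^{*}=\BOX^{q}_{a}\sigma^{*}$, applying $(-)^{*}$ gives exactly $\BOX^{r}_{a}((A_{0})^{\bullet})^{*}$; the second is the pointwise extension of the same identity to the components of $\Delta$. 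Since $\IPPL$-formulas carry no free names, and the $\bullet$-image of such a formula is a $\TTT$-type in which $\BOX^{q}\sigma$ carries no name either, there is no name clash, and $a$ may be chosen fresh for $X$. This closes the argument.

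As for difficulty, there is essentially no hard step: the corollary is a transitivity statement. The only point requiring care is the double bookkeeping of $(-)^{\bullet}$ and $(-)^{*}$ — in particular that the decoration proposition types $\bb{t}_{\mathscr D}$ with the \emph{matrix} $(A_{0})^{\bullet}$ rather than with $A^{\bullet}=\BOX^{r}(A_{0})^{\bullet}$, so that the outer $\BOX^{r}_{a}$ in the $\PPL$-conclusion arises precisely from the probability annotation $r$ of the $\TTT$-judgement via Proposition~\ref{prop:embed}, and coincides with the $\BOX^{r}$ hidden inside $A$. Spelling out the definitions of $(-)^{\bullet}$ and $(-)^{*}$ on the atom- and $\To$-cases (where, up to the fixed name $a$, they agree with the usual translation of minimal implicational logic) makes this transparent, and I expect the whole proof to be no more than a few lines once those definitions are laid side by side.
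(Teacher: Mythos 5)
Your proposal is correct and follows exactly the paper's route: the corollary is obtained by composing the decoration proposition (which turns the $\IPPL$-derivation into a $\TTT$-judgement $\Delta^{\bullet}\vdash^{X,r}\bb{t}_{\mathscr D}:\bone\Pto (A_{0})^{\bullet}$) with Proposition~\ref{prop:embed}, and then unfolding $(\BOX^{r}A_{0})^{\bullet*}=\BOX^{r}_{a}((A_{0})^{\bullet})^{*}$. Your extra care with the $(-)^{\bullet}$/$(-)^{*}$ bookkeeping is exactly the (only) content the paper leaves implicit.
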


\begin{figure}[t]
\fbox{
\begin{minipage}{0.95\linewidth}
\begin{center}
\begin{center}
Initial Sequents
\end{center}

\adjustbox{scale=0.8}{
$
\AxiomC{$\phantom{A}$}
\RL{$\TBx$}
\UnaryInfC{$\Delta\vdash^{X}   \bot  \Pto A$}
\DP
\qquad
\AxiomC{$\FN(\bone)\subseteq X$}
\RL{$\TIDx$}
\UnaryInfC{$\Gamma,A\vdash^{X} \bone  \Pto A$}
\DP
$}

\vskip4mm

\begin{center}
Logical Rules
\end{center}

\adjustbox{scale=0.8}{
$
\AXC{$\Gamma,A \vdash^{X} \bone\Pto\BOX^{q}B_{0}$}
\RL{$\TLAx$}
\UIC{$\Gamma \vdash^{X} \bone\Pto \BOX^{q}(  A\To B_{0})$}
\DP
$}

\vskip4mm

\adjustbox{scale=0.8}{
$
\AXC{$\Gamma \vdash^{X}  \btwo\Pto \BOX^{r}( A \To B_{0})$}
\AXC{$\Gamma \vdash^{X}  \bthree \Pto  A$}
\AXC{$\bone\vDash \btwo\land \bthree$}
\RL{$\TAx$}
\TIC{$\Gamma\vdash^{X} \bone \Pto \BOX^{r}B_{0}$}
\DP
$}

\vskip4mm

\begin{center}
Flipcoin Rule
\end{center}

\adjustbox{scale=0.68}{
$
\AXC{$\Gamma, A \vdash^{X\cup\{a\}} \btwo\Pto A$}
\AXC{$\Gamma \vdash^{X\cup\{a\}} \bthree\Pto A$}
\AXC{$\bone\vDash^{X\cup\{a\}}(\btwo\land \bvar_{i}^{a})\vee( \bthree\land \lnot \bvar_{i}^{a})$}
\RL{$\TUx$}
\TIC{$\Gamma \vdash^{X\cup\{a\}} \bone\Pto A$}
\DP
$}


\begin{center}
Counting Rule
\end{center}

\adjustbox{scale=0.8}{
$
\AXC{$\Gamma \vdash^{X\cup\{a\}} \btwo\Pto \BOX^{q} A_{0}$}
\AXC{$\left( \model{\btwo_{i}}_{\{a\}})\geq r\right)_{i\leq k}$}
\AXC{$\bone \vDash \bigvee_{i}^{k}\bthree_{i}$}
\RL{$\TNx$}
\TIC{$\Gamma \vdash^{X} \bone\Pto  \BOX^{q\cdot r} A_{0}$}
\DP
$}

\medskip

\adjustbox{scale=0.8}{
(where $\bigvee_{i}\btwo_{i}\land \bthree_{i}$ is a weak $a$-decomposition of $\btwo$)

}

\end{center}
\end{minipage}
}

\caption{Rules of $\IPPL$.}
\label{fig:intuitionistic}
\end{figure}

\begin{figure}[t]
\fbox{
\begin{minipage}{0.95\linewidth}
\begin{center}
\begin{center}
Initial Sequents
\end{center}

\adjustbox{scale=0.8}{
$
\AxiomC{$\phantom{A}$}
\RL{$\TBx$}
\UnaryInfC{$\Delta\vdash^{X}\bb \Omega:\bot  \Pto A$}
\DP
\qquad
\AxiomC{$\FN(\bone)\subseteq X$}
\RL{$\TIDx$}
\UnaryInfC{$\Gamma, \bb x:A\vdash^{X} \bb x: \bone  \Pto A$}
\DP
$}

\vskip4mm

\begin{center}
Logical Rules
\end{center}

\adjustbox{scale=0.8}{
$
\AXC{$\Gamma,\bb x:A \vdash^{X} \bb t:\bone\Pto\BOX^{q}B_{0}$}
\RL{$\TLAx$}
\UIC{$\Gamma \vdash^{X}\bb{\lambda x.t}: \bone\Pto \BOX^{q}(  A\To B_{0})$}
\DP
$}

\vskip4mm

\adjustbox{scale=0.8}{
$
\AXC{$\Gamma \vdash^{X} \bb t: \btwo\Pto \BOX^{r}( A \To B_{0})$}
\AXC{$\Gamma \vdash^{X} \bb u: \bthree \Pto  A$}
\AXC{$\bone\vDash \btwo\land \bthree$}
\RL{$\TAx$}
\TIC{$\Gamma\vdash^{X}\bb{tu}: \bone \Pto \BOX^{r}B_{0}$}
\DP
$}

\vskip4mm

\begin{center}
Flipcoin Rule
\end{center}

\adjustbox{scale=0.8}{
$
\AXC{$\Gamma, A \vdash^{X\cup\{a\}}\bb t: \btwo\Pto A$}
\AXC{$\Gamma \vdash^{X\cup\{a\}}\bb u: \bthree\Pto A$}
\AXC{$\bone\vDash^{X\cup\{a\}}(\btwo\land \bvar_{i}^{a})\vee( \bthree\land \lnot \bvar_{i}^{a})$}
\RL{$\TUx$}
\TIC{$\Gamma \vdash^{X\cup\{a\}} \bb{t\oplus_{a}^{i}u}:\bone\Pto A$}
\DP
$}


\begin{center}
Counting Rule
\end{center}

\adjustbox{scale=0.7}{
$
\AXC{$\Gamma \vdash^{X\cup\{a\}} \bb t:\btwo\Pto \BOX^{q} A_{0}$}
\AXC{$\left( \model{\btwo_{i}}_{\{a\}})\geq r\right)_{i\leq k}$}
\AXC{$\bone \vDash \bigvee_{i}^{k}\bthree_{i}$}
\RL{$\TNx$}
\TIC{$\Gamma \vdash^{X} \bb{\nu a.t}:\bone\Pto  \BOX^{q\cdot r} A_{0}$}
\DP
$}

\medskip

\adjustbox{scale=0.8}{
(where $\bigvee_{i}\btwo_{i}\land \bthree_{i}$ is a weak $a$-decomposition of $\btwo$)
}

\end{center}
\end{minipage}
}
\caption{Rules of $\IPPL$ decorated with terms.}
\label{fig:decoratedrules}
\end{figure}

\end{document}